\newcommand{\pushright}[1]{\ifmeasuring@#1\else\omit\hfill$\displaystyle#1$\fi\ignorespaces}
\newcommand{\pushleft}[1]{\ifmeasuring@#1\else\omit$\displaystyle#1$\hfill\fi\ignorespaces}
\tikzset{
	prefix after node/.style={prefix after command=\pgfextra{#1}},
	/semifill/ang/.initial=45,
	/semifill/upper/.initial=none,
	/semifill/lower/.initial=none,
	semifill/.style={
		circle, draw=none,
		prefix after node={
			\pgfqkeys{/semifill}{#1}
			\path let \p1 = (\tikzlastnode.north), \p2 = (\tikzlastnode.center),
			\n1 = {\y1-\y2} in [radius=\n1]
			(\tikzlastnode.\pgfkeysvalueof{/semifill/ang}) 
			edge[
			draw=none,
			fill=\pgfkeysvalueof{/semifill/upper},
			to path={
				arc[start angle=\pgfkeysvalueof{/semifill/ang}, delta angle=180]
				-- cycle}] ()
			(\tikzlastnode.\pgfkeysvalueof{/semifill/ang}) 
			edge[
			draw=none,
			fill=\pgfkeysvalueof{/semifill/lower},
			to path={
				arc[start angle=\pgfkeysvalueof{/semifill/ang}, delta angle=-180]
				-- cycle}] ();}}}
\theoremstyle{plain}
\newtheorem{theorem}{Theorem}
\newtheorem{lemma}[theorem]{Lemma}
\newtheorem{observation}[theorem]{Observation}
\newtheorem{corollary}[theorem]{Corollary}
\theoremstyle{definition}
\newtheorem{definition}[theorem]{Definition}
\theoremstyle{remark}
\newcommand{\reals}{\mathbb{R}}
\title{Monotone Contractions}
\author[1]{Eleni Batziou}
\author[1]{John Fearnley}
\author[1]{Spencer Gordon}
\author[2]{Ruta Mehta}
\author[1,3]{Rahul Savani}
\affil[1]{University of Liverpool\\
	\texttt{\{eleni.batziou, john.fearnley, rahul.savani\}@liverpool.ac.uk}}
\affil[2]{University of Illinois at Urbana-Champaign\\
	\texttt{rutameht@cs.illinois.edu}}
\affil[3]{Alan Turing Institute}
\def\Floor#1{\left\lfloor #1 \right\rfloor}
\def\Ceil#1{\left\lceil #1 \right\rceil}
\def\Set#1{\left\{ #1 \right\}}
\def\Abs#1{\left| #1 \right|}
\def\Norm#1{\left\| #1 \right\|}
\def\Brack#1{\left[ #1 \right]}
\def\Natural{\mathbb{N}}
\def\One{\mathbf{1}}
\DeclareMathOperator{\poly}{poly}
\def\sublattice{L}
\DeclareMathOperator{\proj}{\Pi}
\def\eps{\varepsilon}
\def\beps{\hat{\varepsilon}}
\def\hatx{\hat{x}}
\def\haty{\hat{y}}
\DeclareMathOperator{\Stretch}{\mathsf{stretch}}
\DeclareMathOperator{\sgne}{\mathsf{clamp}_{\beps}}
\DeclareMathOperator{\FindLesserFixpoint}{\mathsf{FindLesserFixedpoint}}
\def\up{\mathsf{up}}
\def\down{\mathsf{down}}
\def\zero{\mathsf{zero}}
\def\OPDC{\ensuremath{\mathsf{OPDC}}\xspace}
\def\OnePermutationDiscreteContraction{\mathsf{OnePermutationDiscreteContraction}}
\def\UEOPL{\ensuremath{\mathsf{UEOPL}}\xspace}
\def\MonotoneContraction{\ensuremath{\mathsf{MonotoneContraction}}\xspace}
\def\DMAC{\ensuremath{\mathsf{DMAC}}\xspace}
\def\1DUniqueDMAC{\ensuremath{\mathsf{1DUniqueDMAC}}\xspace}
\def\2DUniqueDMAC{\ensuremath{\mathsf{2D-UNIQUE-DMAC}}\xspace}
\def\MonoGrad1Surface{\ensuremath{\mathsf{Monotone-Gradient-1-Surfaces}}\xspace}
\DeclareMathOperator{\slice}{\mathsf{slice}}
\DeclareMathOperator{\Slice}{\mathsf{Slice}}
\def\star{\ast}
\DeclareMathOperator{\IsLFP}{\mathsf{IsLFP}}
\DeclareMathOperator{\VSeq}{\mathsf{VSeq}}
\DeclareMathOperator{\free}{\mathsf{free}}
\DeclareMathOperator{\fixed}{\mathsf{nonfree}}
\DeclareMathOperator{\bind}{\mathsf{bind}}
\DeclareMathOperator{\Up}{Up}
\DeclareMathOperator{\Down}{Down}
\DeclareMathOperator{\CBox}{Box}
\DeclareMathOperator{\DBox}{DBox}
\DeclareMathOperator{\leftl}{Left}
\DeclareMathOperator{\downl}{Bott}
\DeclareMathOperator{\diagl}{Diag}
\DeclareMathOperator{\UC}{UC}
\DeclareMathOperator{\DC}{DC}
\DeclareMathOperator{\hei}{Height}
\DeclareMathOperator{\area}{Area}
\DeclareMathOperator{\points}{Points}
\newcommand{\linf}{\ensuremath{\ell_\infty}\xspace}
\newcommand{\Normi}[1]{\Norm{#1}_{\infty}}
\begin{document}
	
\maketitle

\begin{abstract}
We study functions $f : [0, 1]^d \rightarrow [0, 1]^d$ that are both monotone
and contracting, and we consider the problem of finding an $\eps$-approximate fixed
point of $f$. We show that the problem lies in the complexity class \UEOPL.
We give an algorithm that finds an $\eps$-approximate fixed point of a
three-dimensional monotone contraction using $O(\log
(1/\eps))$ queries to $f$. We also give a decomposition theorem
that allows us to use this result to obtain 
an algorithm that finds an $\eps$-approximate fixed point of a $d$-dimensional
monotone contraction using $O((c \cdot \log
(1/\eps))^{\lceil d / 3 \rceil})$ queries to $f$ for some constant $c$.
Moreover, each step of both of our algorithms takes time that is
polynomial in the representation of $f$. These results are strictly better than
the best-known results for functions that are only monotone, or
only contracting.

All of our results also apply to Shapley stochastic games, which are known to
be reducible to the monotone contraction problem. Thus we put Shapley games in
\UEOPL, and we give a faster algorithm for approximating the value of a Shapley
game.
\end{abstract}

\setcounter{tocdepth}{2}

\newpage
\tableofcontents
\newpage

\section{Introduction}
\label{sec:intro}
In this paper we study monotone contractions. These are functions $f : [0, 1]^d
\rightarrow [0, 1]^d$ that are simultaneously \emph{monotone}, meaning that
$f(x) \le f(y)$ whenever $x \le y$,  and \emph{contracting} in the
$\ell_\infty$-norm, meaning that $\|
f(x) - f(y) \|_\infty \le \lambda \cdot \| x - y \|_\infty$ for some $\lambda
\in [0, 1)$.

Every function $f$ that is a monotone contraction has a fixed point, which is a
point $x$ such that $x = f(x)$. This can be proved in two different ways.
Tarski's fixed point theorem (also known as the Knaster-Tarski theorem), shows
that any monotone function has a fixed point~\cite{Tarski55}, and moreover the
set of fixed points of a monotone function forms a complete lattice. 
Banach's fixed point theorem states that any contraction map has a
\emph{unique} fixed point~\cite{Banach1922}. Both of these theorems apply
to a monotone contraction, so a monotone contraction also has a unique fixed
point.

We are interested in the problem of finding a fixed point of a monotone contraction.
Unfortunately however, there exist monotone contractions in which the unique
fixed point has irrational coordinates~\cite{Shapley53}. So instead, we study the
problem of finding an \emph{approximate} fixed point of a monotone contraction
$f$, which means that we seek a point $x$ such that $\| x - f(x) \|_\infty \le
\eps$ for a given approximation parameter $\eps$.

While the combination of monotonicity and contraction may seem artificial at
first, there are in fact important problems that naturally reduce to the problem of
finding an approximate fixed point of a monotone contraction. Specifically, the
problem of approximating the value of a Shapley stochastic game, and the problem of
solving a simple-stochastic game can both be formulated as the problem of finding an
approximate fixed point of a function that is both monotone and contracting in
the $\ell_\infty$-norm~\cite{EPRY20}.


\paragraph{\bf Shapley games.}

We are particularly interested in Shapley games in this paper, because the
current best known algorithms for approximating the value of a Shapley game
arise from the fact that they can be formulated as a monotone function, while
ignoring the fact that the function is also contracting. 

Shapley stochastic games~\cite{Shapley53}
are a fundamental two-player zero-sum game model with both sequential and concurrent play.
They have a central role in the Game Theory literature, see e.g.~\cite{stochasticgames}, as well as in the 
Computer Science literature where they are a model of formal verification of reactive systems with concurrency~\cite{PRISM3}.
The computational complexity of approximating the value of a Shapley stochastic game
was shown by Etessami and Yannakakis~\cite{EtessamiY10} to be in PPAD\footnote{They also showed that the 
problem of finding an exact solution is in FIXP, but our focus in this paper is on approximate solutions.}.
It has more recently been observed that the problem also lies in PLS~\cite{EPRY20},
which we now know means that the problem lies in CLS, since CLS = PPAD $\cap$
PLS~\cite{FearnleyGHS23}.
Hansen et al.~\cite{HansenKLMT11,abs-1202-3898} provided exponential time
algorithms for solving Shapley games exactly.\footnote{Here solutions may be
irrational and the task is to find a symbolic representation of the optimal
strategies as algebraic numbers.} Those results have been improved by
Oliu-Barton~\cite{Oliu-Barton21}.

The best known algorithm for finding an $\eps$-approximation of the value of a
Shapley game $G$ with $d$ states is $O((c \cdot \log (1/\eps))^{\lceil d/2 \rceil
+1} \cdot \poly(|G|))$, where $c$ is some constant, and $|G|$ is the size of
the representation of~$G$. This result arises from applying the current
best-known algorithm for monotone functions~\cite{ChenL22} to the
monotone contraction that is defined by the game~\cite{EPRY20}.
%

\paragraph{\bf Monotone functions.}

There has recently been considerable research interest in the computational
complexity of finding a fixed point of a monotone function.
This work has focused on monotone functions $f : G \rightarrow G$ on the
grid-lattice $G = \{1, 2, \dots, n\}^d$.

Dang et al.~\cite{DQY20} provided an algorithm that finds a fixed point using
$\log^d n$ queries to $f$. Etessami et al.~\cite{EPRY20} placed the problem in
the complexity classes PPAD and PLS, 
which also implies that the problem lies in CLS~\cite{FearnleyGHS23}.
They also showed that $\Omega(\log^2 N)$ queries are necessary for
two-dimensional instances, thus showing that the algorithm of Dang et al. is
tight in two-dimensions.

Both Dang et al. and Etessami et al. conjectured that the
$\log^d n$-query algorithm was tight for all dimensions $d$.
However, Fearnley et al. disproved this by providing an algorithm that needs only $O(\log^2 N)$ 
queries in dimension three~\cite{FPS22}.
They also introduced a \emph{decomposition theorem}, which showed that if the
problem can be solved using $q_1$ queries in dimension $d_1$, and $q_2$ queries
in dimension $d_2$, then it can be solved using $q_1 \cdot q_2$ queries in
dimension $d_1 + d_2$.
Applying this decomposition theorem to their three-dimensional algorithm gives
an algorithm that solves $d$ dimensional instances in 
$O((c \cdot \log n)^{2\Ceil{d/3}})$ queries\footnote{We note that~\cite{FPS22}
claim that the algorithm uses 
$O(\log^{2\Ceil{d/3}} n)$ queries, but this is not quite correct. They apply
the decomposition theorem to an algorithm that uses $O(\log^2 n)$ queries to
solve a three-dimensional instance, and when they do so, the constant that was
hidden by the big-oh notation also gets powered by 
$2\Ceil{d/3}$. This is a relatively minor distinction, however, since $c \cdot
\log x = \log_{c'} x$ for some constant $c'$.}, for some constant $c$.


%
%
%

%
%
The current best-known algorithm for monotone functions was given by 
Chen et al.~\cite{ChenL22}.
Internally, the algorithm of 
Fearnley et al.\ works by solving a relaxed problem for two-dimensional
instances, and
Chen et al.~\cite{ChenL22} call this problem \textsc{Tarski*}.
They gave a decomposition theorem for 
\textsc{Tarski*}, and applying this to the original algorithm of 
Fearnley et al.\ gives an algorithm that solves the problem using
$O((c \cdot \log n)^{\Ceil{(d+1)/2}})$ queries, for some constant $c$.

Recently, Chen at al.~\cite{ChenLY23} showed that the query complexity of finding Tarski fixed points is the 
same as the query complexity of finding fixed points when the Tarski instance is promised to have a unique 
solution, and it is also promised that all sub-instances have unique solutions. 


%
%
%

\paragraph{\bf Contracting functions.}

There has also been considerable interest in the problem of finding the fixed
point of a contracting function. 
Our focus in this paper is on functions that are contracting with respect to the infinity norm,
since it is this type of contraction that arises from Shapley and Simple stochastic
games~\cite{EPRY20}.
As mentioned previously, there are infinity norm contraction maps whose fixed point uses
irrational numbers, so the focus has been on finding an $\eps$-approximate fixed point
of a contraction map $f : [0, 1]^d \rightarrow [0, 1]^d$. 

The problem naturally lies in PPAD, because it is a special case of finding a
Brouwer fixed point, and also PLS, because $\| x - f(x) \|_\infty$ gives a
natural potential function for the problem. 
The problem therefore lies in CLS~\cite{DP11,FearnleyGHS23}.

Fearnley et al.~\cite{FGMS20} studied contractions defined by piecewise
linear arithmetic circuits (PL-contraction), which capture the problem of simple stochastic
games, but not Shapley games. They showed that 
finding an exact fixed point of a PL-contraction 
lies in the
complexity class \UEOPL, which is a subclass of CLS that contains problems that
have unique solutions. 
It remains open whether the contraction problem itself lies in \UEOPL.


Shellman and Sikorski proved that the query complexity of
finding an $\eps$-fixed point of a contraction map in dimension two is
$\log(\frac{1}{\eps})$~\cite{ShellmanS02,ShellmanS03}; they left it as an open problem to 
extend their algorithm for 2 dimensions to more dimensions.
In other work,
they game a polynomial-time algorithm for general dimension $d$ with query
complexity~$\log^d(\frac{1}{\eps})$~\cite{ShellmanS03a,ShellmanS05}.
%
In a striking recent result, Chen et al.~\cite{Chen0Y24}
showed that the query complexity of finding an $\eps$-fixed-point of a contraction map
under the infinity norm is actually $O(d^2 \log(\frac{1}{\eps}))$, so the query
complexity of contraction is polynomial in $d$. However this result does come with a drawback: the time
complexity of the algorithm is not polynomial, meaning that although the
algorithm makes polynomially many queries, it may spend exponential time in
between each of those queries. 

While our focus here is on the $\ell_\infty$-norm, due to the importance of this
setting for game theory, we note that finding fixed points of contractions or
non-expansions under other norms has also received considerable attention.
Sikorski provides a survey of results for both the $\ell_\infty$ and $\ell_2$
(Euclidean) norms in~\cite{Sik09}. 
In the Euclidean norm, it is known that a fixed point of a contraction or non-expansion can be found in
polynomial time even in non-constant dimension using algorithms based on the ellipsoid
method~\cite{BoonyasiriwatSX07,HuangKS99}.
Fearnley et al.~\cite{FearnleyGMS19} gave a polynomial-time algorithm for finding an
approximate fixed point of any contraction in the $\ell_p$-norm with $p < \infty$ in any constant dimension.

The problem of finding an approximate fixed point 
of a contracting function has also been studied in the case where the function is contracting
not with respect to the metric induced by a fixed norm, but where a metric or meta-metric
is itself input to the  problem. In those cases the problem is known 
to be CLS-complete~\cite{DTZ18,FGMS17}.


\paragraph{\bf Our contribution.}

While both monotone and contracting functions have received a considerable amount
of attention recently, monotone contractions have not yet been studied at all.
This paper provides complexity results and algorithms for
monotone contractions (for the $\ell_\infty$-norm), and in doing so we improve the best-known
results for approximating the value of a Shapley stochastic game.

Our main results can be stated as follows.

\begin{enumerate}
\item Finding an approximate fixed point of a monotone contraction lies in
\UEOPL. 

\item There is an algorithm that finds an $\eps$-approximate fixed point of a
three-dimensional monotone contraction $f$ using $O(\log (1/\eps))$ queries
to $f$, where each step runs in time that is polynomial in the representation of
$f$. 

\item There is an algorithm that finds an $\eps$-approximate fixed point of a
$d$-dimensional monotone contraction~$f$ using $O((c \cdot \log
(1/\eps))^{\lceil d/3 \rceil})$ queries to $f$, where each step runs in time that is polynomial in the representation of $f$. 
\end{enumerate}
Since Shapley stochastic games can be formulated as monotone contractions
for the $\ell_\infty$-norm, these results
apply to Shapley games as well. So we place the problem of approximating the
value of a Shapley games in \UEOPL, which was not previously known, and we also
provide an algorithm that finds an $\eps$-approximation of the value of a game
$G$ that has $d$ states in time $O((c \cdot \log (1/\eps))^{\lceil d/3 \rceil} \cdot \poly(|G|))$,
which is faster than the previous best algorithm.

The main theme of our results is that functions that are monotone and
contracting appear to be easier than functions that are just monotone, or just
contracting. From a complexity point of view we show that monotone contractions
lie in \UEOPL, which is not known for only-monotone functions, and it is not
known for only-contracting functions, except for the special case of
PL-contraction. Our algorithm for monotone contractions is also faster than the
best known algorithm for only-monotone functions, with the exponent decreasing
from $\lceil d/2 \rceil + 1$ to $\lceil d/3 \rceil$, where for comparisons
sake, if we applied the algorithms for monotone functions in a continuous
setting, we would set $n = \Omega (1/\eps)$. 

We do not beat the query complexity of the best-known algorithm for contraction
under the infinity norm, because, as mentioned, Chen et al.~\cite{Chen0Y24}
gave a $O(d^2 \log(\frac{1}{\eps}))$-query algorithm for the problem. However,
their algorithm does not run in polynomial time. The best known polynomial-time
algorithm for contraction is makes $O(\log^d(1/\eps))$ queries to the
function~\cite{ShellmanS03}. Our results show that if the function is also
monotone, then this can be improved to $O((c \cdot \log (1/\eps))^{\lceil d/3
\rceil})$ queries.


\paragraph{\bf Techniques.}

We show many interesting properties of monotone contractions on the way to
proving our main results. 

The first step that we take is to discretize the problem. We define the
\emph{discrete monotone approximate contraction} (\DMAC) problem, which, like
monotone functions, is defined by a function $f : G \rightarrow G$ over the
discrete grid $G = \{1, 2, \dots, n\}^d$. The \DMAC problem asks us to either
find a fixed point of $f$, or a violation of monotonicity of $f$, or a strict
violation of contraction (or more precisely, a violation of non-expansion) in
$f$. So \DMAC is a discrete problem that naturally captures both monotonicity
and contraction. We show that $\eps$ approximating a $d$-dimensional
monotone contraction reduces to $d$-dimensional \DMAC with $n = \Omega (1/\eps)$. 

We then show that \DMAC can be placed in \UEOPL. Here the main challenge is that
problems in \UEOPL are required to have a unique solution, while \DMAC instances
can have many fixed points, each of which can intuitively be mapped back to an
approximate fixed point of a monotone contraction. However, we can use
monotonicity to overcome this, because Tarski's theorem states that every
monotone function has a least fixed point, so we can select this to be our
unique solution.

We show that given a fixed point of a \DMAC instance, we can verify in
polynomial time whether it is a least fixed point. This is another distinction
between monotone functions and monotone contractions, because
Etessami et al.~\cite{EPRY20} have shown that verifying whether a fixed point
is the least fixed point of a monotone function is coNP-complete, even in
one-dimensional instances. We circumvent this by using the extra contraction
properties of a monotone contraction in our verification algorithm. 

We then use this verification algorithm to implement a reduction from \DMAC to
the \OPDC problem, which is known to be \UEOPL-complete~\cite{FGMS20}. This reduction
also handles violations, and all violations of the \DMAC instance are mapped on
to violation in the \OPDC instance. This means that we have a promise-preserving
reduction, as defined in~\cite{FGMS20}.

We then prove that a decomposition theorem can be shown for
monotone contractions, similar to the decomposition theorems that have been
shown for monotone functions. But this comes with a catch: our decomposition
theorem only works for algorithms that find the least fixed point of the \DMAC
instance, rather than just any fixed point. This has the potential to be
problematic, because our algorithm for three-dimensional instances does not
guarantee that it finds a least fixed point.

We are able to overcome this by showing that given a three-dimensional \DMAC instance $f$, we can
produce a three-dimensional \DMAC instance $f'$ in time that is polynomial in the representation
of $f$ such that $f'$ has a unique fixed point, and that fixed point can be
mapped back to the least fixed point of $f$. Thus, if we apply any \DMAC
algorithm to $f'$, we can recover the least fixed point of $f$.

Finally, we give an algorithm that solves a three-dimensional \DMAC instance
$f$ in $O(\log n)$ queries and each step runs in time that is polynomial in the
representation of $f$. This is another distinction between monotone functions
and monotone contractions. For monotone functions, the query complexity of
finding a fixed point of a three-dimensional instance is $\Theta(\log^2 n)$~\cite{EPRY20,FPS22}, but we show that if
the function is also contracting then this can be improved to $O(\log n)$. The
algorithm is by far the most technically complex part of the paper, and we give
a full overview of the ideas used in it in
Section~\ref{sec:technical_overview}. This algorithm combined with our
decomposition theorem then gives us our two main algorithmic results stated
above.

\section{MonotoneContraction and DMAC: Problem Definitions}
\label{sec:definitions}

In this paper, we study the following problem. 

\begin{definition}[$\MonotoneContraction$]
\label{def:mc_def}
Given $g:[0,1]^d \to [0,1]^d$ represented as a polynomial-time Turing
machine\footnote{Formally, we are given a Turing machine $\mathcal{M}$ and a
polynomial $p$. If $|x|$ denotes the number of bits
needed to represent the rational vector $x \in [0, 1]^d$, then to compute $g(x)$ we
run $\mathcal{M}$ on input $x$ for $p(|x|)$ steps. If $\mathcal{M}$ terminates
and outputs a vector $y \in [0, 1]^d$ then we set $g(x) = y$. If $\mathcal{M}$ terminates without
outputting a vector,
or does not terminate within the required number of steps, then we set $g(x) =
0^d$.} and constants $\lambda \in [0, 1)$ and $\eps \in
(0, 1)$, find one of the following. 
\begin{enumerate}[label=(M\arabic*)]
\item An $\eps$-approximate fixed point, i.e., $x$ such that
	$\Norm{g(x) - x}_{\infty}\leq \eps$. \label{M1} \hfill [Approximate fixed point]
\end{enumerate}
\begin{enumerate}[label=(MV\arabic*)]
	\item A pair of points $x,y$ such that $x \leq y$ but $g(x) \not\leq g(y)$. \label{MV1} \hfill [Monotonicity violation]
	\item A pair of points $x,y$ such that $\Norm{g(x) - g(y)}_{\infty} > \lambda \Norm{x-y}_{\infty}$. \label{MV2} \hfill [Contraction violation]
\end{enumerate}
\end{definition}

Solutions of type \ref{M1} are the approximate fixed points of $g$. Note, however, that given a function $g$, there is no easy way to
determine whether $g$ is monotone or contracting. Instead, we define the
problem for all functions $g$, and we allow the algorithm to return a \emph{violation
solution} if a violation of monotonicity \ref{MV1} or 
a violation of contraction \ref{MV2} is detected. This ensures that the
problem is \emph{total}, meaning it always has a solution.


The first step towards showing our results will be to discretize the problem.
Specifically, we will show that
\MonotoneContraction can be reduced in polynomial time to the following
problem.

\begin{definition}[Discrete Monotone Approximate Contraction ($\DMAC$)]
\label{def:dmac_def}
Let $G = \{1, 2, \dots, n\}^d$ be a $d$-dimensional grid on length $n$.
Given a function $f: G \rightarrow G$ represented by a polynomial-time Turing
machine\footnote{This is defined analagously to the previous footnote.}, find one of the following. 
\begin{enumerate}[label=(D\arabic*)]
\item A point $x \in G$ such that $f(x) = x$. 
\label{D1} \hfill [Fixed point]
\end{enumerate}
\begin{enumerate}[label=(DV\arabic*)]
	\item A pair of points $x,y \in G$ such that $x \leq y$ but $f(x) \not\leq f(y)$. \label{DV1} \hfill [Monotonicity violation]
\item A pair of points $x,y \in G$ such that $\Norm{f(x) - f(y)}_{\infty} > \Norm{x-y}_{\infty}$. \label{DV2} \hfill [Non-expansion violation]
\end{enumerate}
\end{definition}

Solutions of type \ref{D1} encode fixed points of $f$, which under our reduction from \MonotoneContraction
will correspond to approximate fixed points of $g$. 
Violations of type \ref{DV1} encode violations of monotonicity in $f$, while violations of type
\ref{DV2} encode pairs of points at which $f$ is \emph{strictly} non-contracting, i.e., expanding.
Because non-expansion is allowed in \DMAC, unlike in \MonotoneContraction, there is no $\lambda$ in 
the violations of type \ref{DV2}.

\section{Technical Overview}
\label{sec:technical_overview}

In this section we give a high-level overview of our results and the techniques
used to prove them. Full technical details can be found in subsequent sections.

Our results focus on the $\DMAC$ problem, for which we provide a \UEOPL
containment result and algorithms. The first step
is to show that $\eps$-approximating a $\MonotoneContraction$ $g : [0, 1]^d \rightarrow [0, 1]^d$
can be reduced to $\DMAC$ instance $f : \{1, 2, \dots, n\}^d \rightarrow \{1, 2, \dots, n\}^d$ 
where $n = \Omega (1/\eps)$. For this, we first discretize the space $[0, 1]^d$ using a grid of
length $n = \Omega (1/\eps)$. Then for each point $x$ in the grid and each dimension
$i$, we inspect the
displacement $g_i(x) - x_i$: 
\begin{itemize}
\item If $|g_i(x) - x_i| < \eps$ then we set $f_i(x) = x_i$.
\item If $g_i(x) \le x_i - \eps$ then we set $f_i(x) = x_i - 1$.
\item If $g_i(x) \ge x_i + \eps$ then we set $f_i(x) = x_i + 1$.
\end{itemize}
In other words, if the displacement of $g$ in dimension $i$ is at least 
$\eps$, then we round the displacement in that dimension down to the
nearest adjacent grid point of $G$, while if the displacement of~$g$ in
dimension $i$ is strictly smaller than $\eps$, we round the displacement in that
dimension down to zero. 

It is clear that any fixed point of $f$ is an $\eps$-approximate fixed point of
$g$ by construction. We also show in Section~\ref{sec:monotone_to_dmac} that
any \DMAC violation for $f$ can be mapped back to a \MonotoneContraction
violation in $g$.

\subsection{UEOPL containment of DMAC} 
\label{sec:UEOPLoverview}

Our first main result is to show that \DMAC is contained in the complexity
class \UEOPL. The main barrier to showing this is that \DMAC captures the
approximate fixed points of a monotone contraction map, and while Banach's theorem
ensures that there is always a unique exact fixed point of a contraction map, there can be many
approximate fixed points. \UEOPL captures problems that have unique solutions,
so to show that $\DMAC$ is contained in it, we need some way of declaring one
of the approximate fixed points to be the unique solution. 

To do this, we use the fact that the instance is also monotone. This means that
Tarski's theorem ensures that there is a \emph{least} approximate fixed point,
which is unique. However, for monotone functions, it is known that verifying
whether a fixed point is the least fixed point is NP-hard, even in
one-dimensional instances~\cite{EPRY20}. Fortunately, monotone contraction maps
have extra properties that allow us to circumvent this and efficiently verify
least approximate fixed points.

\paragraph{\bf Verifiable least fixed points.}

We show the following property: if $x$ is a fixed point in a \DMAC instance that is not a least
fixed point, then there exists a fixed point $y$ that is below $x$ and at $\ell_\infty$ distance 1
from it, i.e., $y \le x$ and $\| x - y \|_\infty = 1$.
This gives us a relatively straightforward
way of testing whether a given fixed point $x$ is a least fixed point:
\begin{itemize}
\item Start at the point $y = x - \One$. 
\item If $y$ is a fixed point then $x$ is not a least fixed point.
\item Otherwise, there is some dimension $i$ such that $y_i \ne f_i(y)$. If
$y_i > f_i(y)$ then we have a strict violation of contraction with $x$. If 
$y_i < f_i(y)$ then monotonicity implies that all points $z \ge y$ with $z_i =
y_i$ should also satisfy $z_i < f_i(z)$, so none of them are fixed. Hence we
can eliminate all such points, and repeat the process from 
$y' = x - \One + e_i$.
\end{itemize}
This process rules out one dimension in each step, so after $d$ steps we have
either found a fixed point less than $x$, or we have verified that $x$ is the
least fixed point. We call the path of points visited by this process the
\emph{verification sequence} of $x$. 

This property allows us to use the least fixed point as the unique solution.
However, to reduce to \OPDC, we also need to be able to find unique solutions
in each \emph{slice} of the instance, where a slice of a \DMAC instance is a
sub-instance in which some of the coordinates have been fixed. We show that the
least fixed point of the slice can be used for this, and the correctness of
this crucially relies on the fact that least fixed points in \DMAC
instances are \emph{hereditary} in the following sense: we show that if $x$ is
the least fixed point of a slice $s$, and $s'$ is a sub-slice of $s$ in which
one or more dimensions of $s$ have been fixed, then $x$ is also a least fixed
point of $s'$. 

These properties are sufficient to show a reduction from promise \DMAC to the
promise version of \UEOPL, where in both cases the promise is that neither
instance contains a violation. 

\paragraph{\bf Handling violations.}

However, to show \UEOPL containment for \DMAC, we must also handle instances
that contain violations. In particular, we show a \emph{promise-preserving}
reduction from \DMAC to \OPDC (which is in \UEOPL).
This means that all violations of the \DMAC instance must be mapped onto a violation of the \OPDC
instance. 

Note that the correctness of our verification algorithm relied on both
contraction and monotonicity, so when we consider an instance that contains
violations, it is possible that our algorithm might falsely declare $x$ to be a
least fixed point when it is in fact not. 
This will lead to violations in our
\OPDC instance caused by two distinct fixed points $x$ and $y$, both of which have 
purportedly been verified to be the least fixed point by our verification
algorithm. In this case, we show that
a \DMAC violation can be found by inspecting the
verification sequences of $x$ and $y$, and making at most polynomially many
extra queries. This allows us to map any \OPDC violation back to a \DMAC
violation in polynomial time.

\paragraph{\bf Paths of fixed points.}

Before moving on, we note one interesting property of monotone contraction maps
that arises from our reduction that will turn out to be useful later. Our verification algorithm used
the fact that if $x$ is not a least fixed point, then there is a fixed point
$y \le x$ with $\| x - y \|_\infty = 1$. Flipping all dimensions of the
instance and applying the same reasoning implies that if $x$ is not a greatest
fixed point, then there is a fixed point $y \ge x$ with 
$\| x - y \|_\infty = 1$. 
If $l$ and $g$ are the least and greatest fixed points, respectively, then
for any fixed point $x$, we can repeatedly apply
these properties to generate
\begin{itemize}
\item a path of fixed points $l = p^1, p^2, \dots, p^k = x$ with $p^i \le
p^{i+1}$ and $\|p^i - p^{i+1}\|_\infty = 1$, 
and
\item a path of fixed points $x = p^1, p^2, \dots, p^k = g$ with $p^i \le
p^{i+1}$ and $\|p^i - p^{i+1}\|_\infty = 1$.
\end{itemize}
In particular, this means that there is such a path of fixed points from the
least fixed point to the greatest fixed point. We will use this fact crucially in
the terminal phase of our algorithm for finding a fixed point of a
three-dimensional \DMAC instance.


\subsection{Algorithmic Results for DMAC}

Our two other main results give an algorithms for finding a fixed point of a
\DMAC instance, and for our algorithms we restrict ourselves to \DMAC instances
that are promised to be violation-free.
We restrict ourselves to the
promise problem here for two reasons. Firstly, the algorithms are already very
complex, and finding violations in each of the steps would make the
algorithms even more complex. Secondly, we envision that the algorithm would
mainly be applied to \DMAC instances that arise from problems that are known up front to be monotone and contracting,
such as Shapley games. If the algorithm is applied to a function that has
violations, then it may find a fixed point, or it may crash.
If the algorithm does crash, then the user can be certain that the function has
violations, and the only downside is that the algorithm will not actually
produce an explicit violation.

\subsubsection{1DUniqueDMAC and Surfaces}

Before we describe the algorithm, we first introduce some useful concepts
that we will use repeatedly. 

\paragraph{\bf 1DUniqueDMAC.}

We introduce a restriction on \DMAC instances that we call \1DUniqueDMAC. This
restriction requires that a \DMAC instance $f : G \rightarrow G$ satisfies the
following properties.
\begin{itemize}
\item The instance has displacements that are at most unit length in the
$\ell_\infty$-norm. That is, we
require that $\| x - f(x) \|_\infty \le 1$ for all $x \in G$.

\item Every one-dimensional slice of the instance has a unique one-dimensional
fixed point. That is, if $x_i = f_i(x)$ for some dimension $i$, then for all
non-zero $c$ we have that $x_i + c \ne f_i(x + c \cdot e_i)$. 
\end{itemize}
The first property is a technical requirement that simplifies our proofs. In
fact we already show that this property can be assumed without loss of
generality when we present our \UEOPL reduction, because we give a
polynomial-time reduction that transforms an arbitrary \DMAC instance $f$ into a \DMAC
instance $f'$ with $\| x - f'(x) \|_\infty \le 1$ for all $x$.

The second property is the more important one, and it is what gives the name to
the \1DUniqueDMAC problem. We show that a \DMAC instance $f$ can be reduced to
a \1DUniqueDMAC instance in time that is polynomial in the representation of
$f$. Note that this does not imply anything about higher-dimensional fixed
points. Indeed, the instance may still have many distinct fixed points, even
though all one-dimensional slices have a unique one-dimensional fixed point.

\paragraph{\bf Surfaces.}

\twod


Every one-dimensional slice of a \1DUniqueDMAC has a unique one-dimensional
fixed point, and we think of these fixed points as a \emph{surface} that spans
the instance. Figure~\ref{fig:twod} shows the two surfaces that appear in a
two-dimensional instance. The red line shows the unique one-dimensional fixed
point in each slice that spans the up/down dimension, while the blue line shows
the unique one-dimensional fixed point in each slice that spans the left/right
dimension. Any point at which these two lines cross is therefore a
two-dimensional fixed point.

More formally, the surface in dimension $i$ as a function $s_i$ that maps
$x_{-i}$ to a height in dimension $i$, where $x_{-i}$ is the point $x$ with
dimension $i$ removed. We show that the surfaces in a violation-free \DMAC
instance obey the following properties.

\begin{itemize}
\item The surface is monotone, meaning that $s_i(x_{-i}) \le s_i(y_{-i})$
whenever $x_{-i} \le y_{-i}$. 

\item The surface has gradient at most one, meaning that $
\| s_i(x_{-i}) - s_i(y_{-i}) \|_\infty \le \| x_{-i} - y_{-i} \|_\infty$.
\end{itemize}

This can be seen in Figure~\ref{fig:twod}. Here the red line moves weakly
monotonically upward with gradient at most one, while the blue line moves
weakly monotonically rightward with gradient at most one.

We also show that the converse holds. If all surfaces in a \1DUniqueDMAC
instance are monotone with gradient at most one, and if one more technical
condition holds on the one-dimensional slices of the instance (see
Lemma~\ref{lem:surfacesback} for details), then the instance is violation-free.
We will use this property in our proofs when we are reducing \DMAC to \DMAC,
because it is usually much easier to verify that the surfaces are monotone with
gradient at most one, than it is to directly prove that the instance is
violation-free.

\subsubsection{The Decomposition Theorem}

We say that an algorithm \emph{LFP-solves} a \DMAC instance if it finds a least
fixed point of the instance or a violation. Our decomposition theorem states
that if $d_1$-dimensional \DMAC can be LFP-solved using $q_1$ queries, and
$d_2$-dimensional \DMAC can be LFP-solved using $q_2$ queries, then $(d_1 +
d_2)$-dimensional \DMAC can be LFP-solved using $q_1 \cdot q_2$ queries. The
decomposition theorem also applies to violation-free instances, in which case
the algorithm is only required to find a least fixed point, and it is not
required to produce a violation.

\twodunique

Our algorithm for three-dimensional instances does not necessarily find a least fixed point. We show that this
is not a problem, because given an arbitrary \1DUniqueDMAC instance $f$, we can
produce another \1DUniqueDMAC instance $f'$ in time that is polynomial in the
representation of $f$ such that $f'$ has a unique fixed point, and that fixed
point can be mapped back to the least fixed point of $f$.

Here we give an overview of this reduction for two-dimensions. 
The full reduction for three-dimensional instances can be found in
Section~\ref{sec:uniqueness}. We illustrate the reduction using
Figure~\ref{fig:twodunique}. 
Recall that we have a fixed point whenever all surfaces cross. 
In Figure~\ref{fig:twodunique} (a) we have a
two-dimensional instance with two fixed points, and this is because the two
surfaces track each other diagonally. The fact that the surfaces are monotone
with gradient
at most one ensures that this must always be the case: in a two-dimensional
instance the only way for both surfaces to cross at multiple points is for both
surfaces to track each other exactly diagonally between those points. 

We transform the instance to create $f'$ using the following two steps.
\begin{enumerate}
\item We refine the grid to $G' = \{1, \; 3/2, \; 2, \; 5/2, \dots, n\}^2$, where $G'$
is the extension of $G$ that contains points at half integers as well as whole
integers, and then we interpolate the surfaces
over $G'$. 

\item We then shift the red surface downward by $1/2$ everywhere.
\end{enumerate}
The result of applying these two steps is shown in Figure~\ref{fig:twodunique}
(b). 

The key property is that now the red and blue surfaces move along
different diagonals in $G'$, and so they cannot track each other diagonally.
Thus they cross at most once, so $f'$ has a unique fixed point. We also
show that since we shifted the surface downward, this unique fixed point lies
in a small cube of width at most $1.5$ below the least fixed point of $f$, and
so if we find the unique fixed point of $f'$, we can search through constantly many
points to recover the least fixed point of $f$.

The full reduction for three-dimensions
(which is specified formally in Section~\ref{sec:uniqueness}) 
 follows essentially the same approach.
This time we blow up the grid so that we have points at integers $i$ and $i +
1/4$, $i + 1/2$, and $i + 3/4$, and we interpolate in three dimensions, which
requires a custom construction to ensure that the interpolated surfaces are
monotone and have gradient at most one. Then we shift one surface down by $1/4$,
another down by $1/2$, and we leave the final one unchanged. For
three-dimensions, it is not the case that all three surfaces must track each
other diagonally whenever there are multiple fixed points, but instead we show
that if there are multiple fixed points, then there are at least two surfaces that
track each other diagonally. Since the reduction ensures that all three
surfaces use different diagonals, the resulting instance must have a unique fixed
point, and we show that the fixed point must lie near the least fixed point of $f$, as
before.

We should stress that surfaces are not actually used to implement the reduction
from $f$ to $f'$. This is because computing the value of the surface
$s_i(x_{-i})$ requires a binary search in general, and making $\log n$ queries
to $f$ to answer a single query to $f'$ is too expensive. Instead we implement the reduction
directly, while ensuring that the surfaces of $f'$ behave as we have described here. 
Once we have constructed $f'$, we can then use the fact that the surfaces of
$f'$ are monotone and have gradient at most one to prove that $f'$ is
violation-free.

\subsubsection{The Algorithm}

We show that a three-dimensional \DMAC instance $f$ can be solved in $O(\log n)$
queries, where each step runs in time that is polynomial in the representation
of $f$.

\paragraph{\bf An algorithm for two-dimensions.}

\bsidea

As a warm-up, we will describe our ideas in a two-dimensional instance. We
start by describing an algorithm that makes $O(\log^2 n)$ queries, which follows the
approach of~\cite{FPS22}. We define the \emph{up-set} to be $\Up(f) = \{ x \; :
\; x \le f(x) \}$, which is 
the set of points at which $f$ moves weakly upward, and the \emph{down-set} to
be $\Down(f) = 
\{ x \; : \; x \ge f(x) \}$, which is the set of points at which $f$ moves weakly
downward. 

The idea is to carry out an \emph{outer} binary search on the first coordinate.
For a fixed slice $s_i = \{x \; : \; x_1 = i\}$, the algorithm uses an
\emph{inner} binary search to find some point in $\Up(f) \cap s_i$ or $\Down(f)
\cap s_i$. If we find a point $x \in \Up(f) \cap s$ then we can eliminate all
points $y$ with $y_1 < x_1$, while if we find a 
point $x \in \Down(f) \cap s$ then 
we can eliminate all points $y$ with $y_1 > x_1$. The justification for this
can be found in~\cite{FPS22}. We will not dwell on this point here, because it
turns out we will actually need to use a different justification for our
algorithm. Since the algorithm uses two levels of binary search, it makes
$O(\log^2 n)$ queries overall.

Our key idea is that we can use the fact that our surfaces are monotone with
gradient at most one to speed up this algorithm. Consider the instance shown in
Figure~\ref{fig:bsidea} (a). Recall that each surface represents the
one-dimensional fixed point of each one-dimensional slice. Moreover, since the
instance is contracting, we must have that all points beneath the surface move
upward under $f$, while all points above the surface move downward under $f$,
because otherwise we would have a strict violation of contraction.

Our inner binary searches are searching
for a point in $\Up(f)$ or $\Down(f)$, which therefore correspond to the points
that lie between the red and blue lines. Specifically, whenever the red line is
above the blue line, the region in between is $\Up(f)$, while whenever the blue
line lies above the red line, the region in between is $\Down(f)$. 

Suppose, as shown in 
Figure~\ref{fig:bsidea} (a), that we were carrying out a binary search on
slice $s_i$. In the last iteration of this search, we had eliminated all but
the region $[a_i, b_i]$, and we found point $p \in \Up(f)$. Suppose further
that the outer binary search moves to the slice $j > i$. If $a_j = a_i + (j
- i)$ and $b_j = b_i + (j - 1)$, then we prove that
$\Up(f) \cap s_j$ must lie in the range $[a_j, b_j]$, as shown in Figure~\ref{fig:bsidea} (a).

This is due to the properties we have for surfaces. Since the blue line
can move rightward with gradient at most one, it can get no closer to the line
between $a_i$ and $a_j$ as we move rightward. Likewise, since the red line can
move upward with gradient at most one, it can get no closer to the line between
$b_i$ and $b_j$. The only way for the red or blue lines to leave this region is
for them to cross and exit the other side, but if the lines have crossed to the
left of a slice $j$, then we have
that $\Up(f) \cap s_j$ is empty so our claim that 
$\Up(f) \cap s_j$ is contained in the region $[a_j, b_j]$ holds trivially.

The key point here is that we can now initialize the binary search in slice $j$
with the bounds $[a_j, b_j]$, and thereby
carry over progress from one inner binary
search to the next. So rather than doing $\log n$ separate binary searches, we
do a single binary search that occasionally transfers itself to a new slice. 
So our query complexity is reduced from $O(\log^2 n)$ to $O(\log n)$.

We should remark that the algorithm needs to be slightly changed to accommodate 
this. The original algorithm used a single inner binary search that looked for
a point either in $\Up(f)$ or $\Down(f)$. Here we run two independent binary
searches: one binary search looks for a point in 
$\Up(f)$ and maintains a bound on $\Up(f)$, and the other looks for a point
in $\Down(f)$ and maintains a bound on $\Down(f)$. Indeed, everything we have
described holds symmetrically when we are seeking a point in $\Down(f)$.
Specifically, if we have a bound $[a_i, b_i]$ on $\Down(f) \cap s_i$, then when
we move to a slice $j < i$ we can set $a_j = a_i - (i - j)$ and $b_j = b_i - (i
- j)$ to translate the bound diagonally downward to the slice $j$.

At a high level, our algorithm is as follows.

\begin{enumerate}
\item We first extend the instance vertically
as shown in Figure~\ref{fig:bsidea} (b), to ensure that the blue line touches
both the left and right boundaries of the instance. This enables us to ensure that
our initial bounds entirely contain the up- and down-sets, while at most
tripling the height of the instance. 

\item Initialize $l = 1$ and $u = n$ to be the lower and upper bounds on the
outer binary search, respectively. Let $a$ and $b$ be the minimum and maximum
$y$-values of our extended instance.
Initialize $b^{\text{up}}_1 = (a, b)$ and
$b^{\text{dn}}_n = (a, b)$ to be the trivial bounds on the up and down sets in slices $1$
and $n$ respectively.
\item 
\label{itm:start}
Choose a new slice $i = \Floor{\frac{u - l}{2}}$. Set $b^{\text{up}}_i =
b^{\text{up}}_l + (i - l) \cdot (1, 1)$, and set 
$b^{\text{dn}}_i = b^{\text{dn}}_u + (u - i) \cdot (-1, -1)$, where here we are
applying our technique for translating bounds from one slice to the next.

\item 
\label{itm:bs}
Carry out two inner binary searches in parallel: one looking for a point in
$\Up(f)$ in slice $i$
initialized with the bounds $b^{\text{up}}_i$, and a second looking for
$\Down(f)$ in slice $i$ 
initialized with the bounds $b^{\text{dn}}_i$. 

\item If the up-set binary search succeeds, then we set $l = i$ and
$b^{\text{up}}_i$ to be the last bounds considered by the binary search, and
we move back to Step~\ref{itm:start}.
\item If the down-set inner algorithm succeeds, then we set $u = i$ and
$b^{\text{dn}}_i$ to be the last bounds considered by the binary search, and
we move back to Step~\ref{itm:start}.
\end{enumerate}
In particular, it is important that we run the two algorithms in
Step~\ref{itm:bs} in lockstep, since we only consider the results of one of
them, and throw the progress made by the other one away. So while we
potentially waste the progress made by one binary search, this does not matter,
because the other inner algorithm makes progress that we keep for the rest of the
algorithm. So the overall number of steps made by the algorithm is still $O(\log n)$. 

There are obviously many things that we need to prove to show that this
approach is correct. Firstly we show that for every slice $s_i$ we have $\Up(f)
\cap s_i \ne \emptyset$ or $\Down(f) \cap s_i \ne \emptyset$, so one of the two
binary searches always succeeds. 
Secondly we need to show that once the outer
binary search has converged, meaning we have found a slice $s_i$ with $\Up(f)
\cap s_i \ne \emptyset$ and $\Down(f) \cap s_i \ne \emptyset$ then we can find
a fixed point. For the two-dimensional case we can simply use one further
binary search in the slice $j$ to find a fixed point. For our three-dimensional
algorithm we need an entirely separate \emph{terminal algorithm} to do this job.

\paragraph{\bf Bounding the up-set in a three-dimensional instance.}

We use the same setup 
in 
our three-dimensional algorithm. We carry out an outer binary search on
dimension 3, and so each slice $s_i$ that the algorithm visits is now a
two-dimensional slice, rather than a one-dimensional slice.

For this to work, we need to be able to maintain succinct bounds on $\Up(f)
\cap s_i$ and $\Down(f) \cap s_i$, and unfortunately the simple binary search
bounds we used in the two-dimensional case are no longer sufficient. Here we
describe how a succinct bound on $\Up(f) \cap s_i$ can be formulated for a
two-dimensional slice $s_i$. Our bound on $\Down(f) \cap s_i$ can be obtained
by flipping both dimensions and applying the same ideas.

\twoDslice

The structure of the up-set in a two-dimensional slice can be seen in
Figure~\ref{fig:2dslice} (a). As before we have red and blue boundaries for
dimensions 1 and 2. We also have the green line that represents the boundary
between the points $x$ with $x_3 > f_3(x)$, which lie below the green line, and
the points $y$ with $y_3 \le f_3(y)$, which lie above it. Note that
monotonicity implies that if $x$ satisfies $x_3 > f_3(x)$, then all points $p$ with
$p \le x$ and $p_3 = x_3$ also satisfy $p_3 > f_3(p)$. So we can always draw
the green boundary as a line that starts on the left or top boundaries of the
instance, and then moves weakly monotonically down, and weakly monotonically
left until it reaches the right or bottom boundaries.

The region that we are interested in is labelled $U$ in the diagram. This is
the region that lies below the red line, to the left of the blue line, and
above the green line, which is precisely the set of points that move weakly
upward in all three dimensions, and therefore $U = \Up(f) \cap s_i$. As can be seen, this region is not necessarily
simple to describe, since there is no bound on the number of turns that any of
the boundaries of $U$ make.
Nevertheless, we show that we can always find a
relatively succinct upper bound on $U$.

We say that a box-shaped region is a \emph{critical box} (defined formally
in Definition~\ref{def:cb}) if it satisfies the following conditions. 
\begin{itemize}
\item It only contains points in the up-set. 
\item It touches all three boundaries of the up-set in the sense that
\begin{itemize}
\item The point directly to the right of the bottom-right corner of the box
moves strictly down in dimension 1.
\item The point directly above the top-left corner of the box moves strictly down in
dimension 2.
\item The point diagonally below the bottom-left corner of the box moves
strictly down in dimension~3.
\end{itemize}
\end{itemize}
The region $C$ is a critical box in Figure~\ref{fig:2dslice} (b).
For each critical box we define
three \emph{lobes}: a left lobe that contains all points to the left of the
box, a bottom lobe that contains all points beneath the box, and a diagonal
lobe that contains all points diagonally above the box. These are shown as $L$,
$B$, and $D$, respectively in the figure. 

The key point is that, given any critical box, we have that $\Up(f)
\cap s_i$ is contained in the union of the box and its three lobes. Hence we
can bound the up-set using at most three box-shaped regions and one diagonal
region. Our algorithm will use bounds of this format.

Another crucial point that we will prove is that 
if $\Up(f) \cap s_i$ is non-empty, then an \emph{almost square} critical box
always exists. Specifically, this means that if the box has height $h$ and
width $w$, then $| h - w | \le 1$. The proof of this fact is actually quite
involved, and will be covered in Section~\ref{sec:almostsquare}.

\paragraph{\bf Refining a square-shaped region.}

So our bound on $\Up(f) \cap s_i$ will take the form of a critical box and its
three lobes. In the two-dimensional algorithm we had a bound $[a, b]$ on 
$\Up(f) \cap s_i$, and we used binary search to reduce the size of this bound
by one-half in each step. Analogously, in the three-dimensional algorithm, we have a
critical box and its lobes, and we must make constantly many queries to $f$ and
then reduce the area covered by the bound by a constant fraction. This turns
out to be quite a complex task, with several different cases that must be
considered.

We start by considering the case where
we have a square-shaped region that we know contains 
$\Up(f) \cap s_i$, which would have, for example, at the start of the algorithm,
where we can use the entire slice as a trivial bound on the up-set. We want to refine
this square-shaped region to reduce the area of our bound
$\Up(f) \cap s_i$ by a constant fraction.

\cbConfigIntro

We use a grid-search to achieve this. That is, we pick some constant $k$, we
overlay a $k \times k$ grid over the square, and we then query each point of
the grid. This
clearly uses constantly many queries. If we find a point in $\Up(f)$ in our
grid then we are done, and we can move to a new slice in the outer algorithm.

On the other hand, we know that if $\Up(f) \cap s_i$ is non-empty, then an
almost square critical box exists. If our grid search failed to find a point in
$\Up(f)$, then we know that all almost square critical boxes have height and
width less than or equal to $k$, because every point in a critical box lies in
$\Up(f)$ and we did not find a point in $\Up(f)$ in our grid. This means that the up-set must be contained in a
much smaller region.

We proceed by searching for a \emph{CB-config} (defined formally in
Definition~\ref{def:config_square}) in our grid, which is 
a $2 \times 2$ grid square satisfying the
following properties.
\begin{itemize}
\item The lower-left corner of the square moves strictly downward in dimension
3, while the upper-right corner moves weakly upward.
\item Either the top or bottom edge of the square has points that move toward
each other in dimension 1, with the left point moving weakly right and the
right point moving strictly left.
\item Either the left or right edge of the square has points that move toward
each other in dimension 2, with the lower point moving weakly upward and the
upper point moving strictly downward.
\end{itemize}
Figure~\ref{fig:cbconfig} (a) shows an example of a CB-config. 

We prove that if an almost-square critical box exists, then the grid will
contain a CB-config. So if we fail to find a CB-config we can declare that 
$\Up(f) \cap s_i$ is empty.
In this case we will prove that if $\Up(f) \cap s_i$ is empty, then $\Down(f)
\cap s_i$ is non-empty, so 
the outer algorithm can proceed as if the down-set
inner algorithm found a point in 
$\Down(f) \cap s_i$.  

If we do find a CB-config, then we prove that all almost square critical boxes
must lie in a $7k \times 7k$ square region $S$ surrounding the CB-config (which
is specified formally in Lemma~\ref{lem:config_cb}). Hence, if we take the
union of the left,
bottom, and diagonal lobes of $S$, then this union  is guaranteed to contain all almost
square critical
boxes and all of their lobes, which themselves contain $\Up(f)$. So this region is a
new bound on $\Up(f) \cap s_i$. As shown in Figure~\ref{fig:cbconfig} (b), so long
as $k$ was chosen to be large enough, this
rules out a significant fraction of the area of the original square. 

We can now proceed by performing a more refined grid search on the square $S$.
Since $S$ contains all of the almost square critical boxes, and since an almost
square critical box always exists so long as $\Up(f) \cap s_i$ is non-empty,
there is guaranteed to be a point in $S$ that lies in $\Up(f)$ whenever $\Up(f)
\cap s_i$ is non-empty. So it is valid for us to proceed in this way.

The procedure continues until we find a point\footnote{
Or until we discover that 
$\Up(f) \cap s_i$ is empty, or
until the down-set inner
algorithm, which is running parallel, finds a down-set point. In both of these
cases we would throw
away any progress we made here. We proceed in this section assuming that we do
need to carry over the progress of the up-set inner algorithm.}
in 
$\Up(f) \cap s_i$, at which point we jump to a new slice.

\paragraph{\bf Jumping to a new slice.}

When we jump to a new slice $j > i$, we transpose our existing bound on $\Up(f)
\cap s_i$ to the new slice. Our bound so far has been defined by a central box
and its three lobes, and to transpose this bound to a new slice we 
simply translate all of the regions by $(j - i) \cdot (1, 1, 1)$. We will show that
this gives us a valid bound on $\Up(f)$ in the new slice.

\inlobe

However, we cannot proceed as before by refining the central square of this bound
using a grid search. Figure~\ref{fig:inlobe} shows an example of why this is
the case: in the example the entire up-set (labelled as $U$ in the figure) is
contained in one of the lobes, so we lose the invariant that all almost square
critical boxes lie in the central square when we jump to a new slice.

So we must now use a different procedure to reestablish that invariant.
For the example given in 
Figure~\ref{fig:inlobe}, we can actually make a single query at the point $p$
shown in the figure to
determine that the up-set lies in the left lobe: if we find that $p_1 >
f_1(p)$, then monotonicity implies that all points that lie directly below $p$
also move strictly to the left, meaning that if $\Up(f) \cap s_i$ exists, it
must lie strictly within the left lobe. We can likewise make queries at the
points $q$ and $r$ shown in Figure~\ref{fig:inlobe} to determine if $\Up(f) \cap s_i$ lies entirely within the bottom or diagonal lobes. 

In the case where the queries at $p$, $q$, and $r$ do not tell us that 
$\Up(f) \cap s_i$ lies entirely within one of the lobes, we prove that every
almost square critical box must lie within a square region that encompasses the
central square, whose area is at most a constant times the area of the central
square. So we reestablish the invariant for this new larger square. The fact
that we have increased the area in this case does not matter, because we
perform this step only when we jump to a new slice, and we jump to a new slice
at most $\log n$ times. Moreover, since the area is increased by a constant
factor, constantly many reduction steps can cancel out this increase. So our
overall $O(\log n)$ query complexity is not affected.

\paragraph{\bf Dealing with non-square regions.}

If the jumping-slices procedure finds a square region, then we can proceed with
a grid-search over this region as before. However, sometimes the jumping-slices
procedure determines that the up-set lies in a lobe. If, for example, the
up-set is contained in a left lobe, then it may be the case that the width of
the lobe is more than a constant times the height, which means we cannot place
a constant sized grid of squares over the lobe. So we cannot proceed via
grid-search in this case, and we must instead use a different procedure. 

\leftlobe

Here we describe the procedure for a left lobe with width $w$ and height $h$.
The procedure for a bottom lobe is symmetric, and can be obtained by exchanging
dimensions 1 and 2. The procedure for a diagonal lobe follows essentially the
same approach, but has the added annoyance of working with diagonal regions.

The high-level idea is to use binary search to try to reduce the width of the
lobe, while maintaining the invariant that the remaining space contains all
almost square critical boxes. We keep doing this until $w \le c \cdot h$ for
some constant $c$. At which point we can continue via grid search, since we can
now place a constant-sized grid of squares over the space.

To process a left lobe, we make a query at the point $q$ shown in
Figure~\ref{fig:leftlobe}, which is the point that lies half-way along the
bottom of the lobe. 

In Figure~\ref{fig:leftlobe} (a) we have $q_1 > f_1(q)$. As has been observed
in prior work on contraction maps~\cite{Chen0Y24}, the point $q$ induces the
following \emph{cone constraint}: any point $x$ in the cone defined $x \ge q$
and $|x_2 - q_2| \le x_1 - q_1$ must also satisfy $x_1 > f_1(x)$, because
otherwise we would have a violation of contraction. Overlaying
this cone onto the lobe shows us that once we move $h$ units to the right of
$q$, all points in the lobe will move strictly rightward. Thus we can eliminate
all points to the right of $q + h \cdot e_1$, because none of these points can
lie in $\Up(f)$, and since we assume $h \ll w$, this roughly halves the area of
the lobe.

The case where $q_1 \le f_1(q)$ is more problematic, however. Here,
as shown in Figure~\ref{fig:leftlobe} (b), the cone-constraints applied to $q$
imply that all points $p$ in the lobe with $p_1 \le q_1 - h$ satisfy $p_1 \le f_1(p)$.
Moreover, any almost square critical box must have height at most $h$, and so
width at most $h + 1$. Since a critical box is required to touch a point $p$
with $p_1 > f_1(p)$ on its right boundary, there cannot exist any almost square critical box whose
lower-left corner $x$ satisfies $x_1 \le q_1 - 2h - 1$. 

So if we consider the line $L$ defined by points $x$ with $x_1 = q_1 - 2h -
1$, we know that the area to the right of the line satisfies the invariant that
the region contains all almost square critical boxes. However, we cannot remove
the area to the left of $L$, because we are required to maintain a bound on the
entire up-set, and the up-set may still exist to the left of $L$. 

To solve this we make one further query at the point $p$ shown in
Figure~\ref{fig:leftlobe} (b), which lies half-way along the line $L$. We show
that the response to this query always allows us to rule out half of the
remaining area. In the example given in the picture, the region above and to
the left of $p$ is excluded, for example. 

We could attempt to perform binary search along the line $L$ to rule out the
entire space to the left of $L$, but this would cost us $\log n$ queries, which
is too expensive. Instead, we will slowly reduce this remaining space over the
subsequent iterations of the algorithm.

Specifically, we split the lobe into a \emph{main lobe}, which is the area on
the right of $L$, which always contains all of the almost square critical
boxes, and a \emph{sub-lobe}, which lies to the left of $L$ and contains the
rest of the up-set. 
In each step we reduce the size of the main-lobe by
a constant fraction, and then if we create a new-sub lobe, we make some extra
queries to merge it with the existing sub-lobe. Finally we make one extra query
to reduce the size of the sub-lobe by a constant fraction.
An example of this is shown in Figure~\ref{fig:leftlobe} (c). Here, like in 
Figure~\ref{fig:leftlobe} (a) we have eliminated a constant fraction of the
main lobe, and we then make one additional query at the point $r$ to rule out a
constant fraction of the sub-lobe.

\paragraph{\bf The actual regions considered by the algorithm.}

Eventually we will reduce the size of the main-lobe enough so that grid-search
can once again be applied. However, we may still have a sub-lobe at this point,
so our grid-search algorithm also needs to keep reducing the size
of the sub-lobe in each step to ensure that our overall bound on the up-set
keeps decreasing by a constant fraction.

\usbound

Overall, the shape of our bound on the up-set can always be described by the
shape shown in Figure~\ref{fig:usbound} (which is defined formally in
Definition~\ref{def:state}). This shape has a central region that is 
roughly square shaped, meaning that the height and width are constant multiples
of one another. The central region is
surrounded by a left, bottom, and diagonal lobe, each of which have sub-lobes
attached. Of course, in most scenarios we do not use the full generality of the
shape. For example, when we are processing a left lobe, the central region and
the bottom and diagonal lobes are both empty. 

The main technical result that enables the algorithm is that, if we are given a
region of the form shown in Figure~\ref{fig:usbound}, then we can make
constantly many queries and either find a point in $\Up(f) \cap s_i$, correctly
declare that $\Up(f) \cap s_i = \emptyset$, or find a
new region of the same form in which the area has been reduced by a constant
fraction. This then gives us our inner up-set algorithm that can be used with
the outer binary search described earlier.

\paragraph{\bf The terminal phase of the algorithm.}

The outer algorithm continues until it finds a slice $s_i$ that contains both
an up-set point and a down-set point. Now we must find a fixed point in this
slice. 

The first thing we must do is to prove that $s_i$ actually contains a fixed
point\footnote{The 
prior algorithm for monotone functions given in~\cite{FPS22} ensured that the remaining
space always lies between the up-set and down-set points that are found by the
outer algorithm, which ensures that $s_i$ must contain a fixed point. We are unable to maintain this invariant
here, because if we try to reduce the remaining space by excluding all points
that do not lie above some up-set point $x$, then we may change the set of
almost-square critical boxes, and
invalidate our bounds. So this extra step is necessary for our setting.}. 
This might not be the case if $f$ was just a monotone function. However,
using the extra properties of a \DMAC instance, we can show that this is indeed
the case. Specifically, we can use the path of fixed points $l = p^1, p^2,
\dots, p^k = g$ 
with $\| p^i - p^{i+1} \|_\infty = 1$
from the least fixed point $l$ to the greatest fixed point $g$
whose existence we proved when we put
$\DMAC$ in \UEOPL. Since slice $s_i$ contains an up-set point, by Tarski's
theorem the greatest fixed point either lies in $s_i$ or above it. Likewise,
since $s_i$ contains a down-set point, the least fixed point either lies in
$s_i$ or below it. If neither the greatest or least fixed points lie in $s_i$,
then the path must pass through $s_i$, so there is indeed a fixed point in
$s_i$.

Next we must actually find this fixed point. We design a specialized terminal phase
algorithm for this task. We show that any query to the slice $s_i$ either
returns a fixed point, or rules out a vertical half-space, a horizontal
half-space, or diagonal half-space that is aligned with the vector $(1, 1)$. We
use this fact to build a $O(\log n)$ query algorithm that finds a fixed point
in the slice.

\paragraph{\bf Running time.}

In total, the algorithm makes $O(\log n)$ queries and returns a fixed point of
$f$. Moreover, all of the steps of our algorithm run in time that is polynomial
in the size of the representation of $f$.

\section{Reducing MonotoneContraction to DMAC}
\label{sec:monotone_to_dmac}

In this section, we present a promise-preserving polynomial-time reduction from
\MonotoneContraction to \DMAC.

Prior work has defined the concept of a promise-preserving reduction from problem A to problem B, which ensures that if problem A has no violations, the corresponding instance of problem B must also remain free of violations~\cite{FGMS20}. 
In other words, we must never map a non-violation solution of A onto a violation solution of B.

\subsection{Creating a DMAC instance}
 
Let $g : [0, 1]^d \rightarrow [0, 1]^d$ be a (purported) monotone contraction map.
Our reduction discretizes the continuous domain $[0,1]^d$ of $g$ and then applies rounding 
to the displacements of $g$ at grid points, in order to define a \DMAC instance~$f$ on the 
discrete domain $G = \{0,1,2,\dots, n\}^d$, which differs only from the 
Definition \ref{def:dmac_def} in that we start from $0$ rather that $1$, which is for 
convenience in this section. This is otherwise inconsequential as it is just a relabelling.
First, we set 
$$\beps \coloneqq 1/\Ceil{1/\eps}.$$
This choice gives two key properties:
$\frac{1}{\hat{\eps}} \in \mathbb{N}$ and $\hat{\eps} \le \eps$.
%
Then we discretize the $[0, 1]^d$ domain using a grid of width $\beps$ in each dimension, which
defines a discrete grid:
\begin{equation}
\label{def:hatG}
\hat{G} = \{0, \beps, 2\beps, \dots, 1\}^d.
\end{equation}
So $G$ has grid width 1 and $\hat{G}$ has grid width $\beps$.
Every point in our \DMAC domain $G$ will be in one-to-one correspondence with points
in $\hat{G}$ according to the bijection $\Stretch: \hat{G} \mapsto G$ with 
$\Stretch(\hat{x}) = \frac{1}{\beps} \cdot \hat{x}$. We can define the inverse of the transformation as $\Stretch^{-1}: G \mapsto \hat{G}$ as $\Stretch^{-1}(x) = \beps \cdot x = \hat{x}$.
To create $f: G \mapsto G$ we will first \emph{clamp} displacements of $g$ in 
each dimension 
using the following function $\sgne: [0,1] \mapsto \Set{\pm \beps,0}$:
\begin{equation*}
\sgne(\hat{z}) =\begin{cases} 
	\phantom{-}\beps &\quad\text{if $\hat{z} \geq \beps$,}\\ 
	\phantom{-}0&\quad\text{if $\Abs{\hat{z}} < \beps$,}\\ 
	-\beps&\quad\text{if $\hat{z}\leq -\beps$.} 
\end{cases} 
\end{equation*}
This function rounds all values in the range $(-\beps, \beps)$ to 0, while
clamping values outside of the range to either $-\beps$ or $\beps$.
We then stretch these values so that $-\beps$ becomes $-1$ and $\beps$ becomes $1$ using $\Stretch$.
We then define $f$ via a dimension-wise displacement function $h$ as follows.
\begin{definition}
\label{def:h}
For all $y \in G$, and any dimension $i \in [d]$, we define:
\begin{equation}
\label{def:h}
h_i(y) = \Stretch\left(\sgne(g_i(\hat{y}) - \hat{y}_i)\right), \text{ with } \hat{y} = \Stretch^{-1}(y). 
\end{equation}
To define $h_i$ at a point $y \in G$, we first compute the displacement of $g$ in dimension $i$ for the point $\hat{y} \in \hat{G}$ corresponding to $y$ in the domain $\hat{G}$. Then, we apply the $\sgne$ function to this displacement, producing a value in $\{\pm \beps, 0\}$, which is finally mapped to displacement values $\{\pm 1, 0\}$ corresponding to the domain $G$ using function $\Stretch$. 
The overall displacement $h$ for a point $y \in G$ is then defined as:
\begin{equation*}
h(y) \coloneqq (h_1(y),h_2(y),\dotsc,h_d(y)).
\end{equation*}
Finally, we define $f: G\to G$ for the resulting \DMAC instance as: $$f(y) \coloneqq y + h(y)$$
\end{definition}
This completes a definition of how our reduction creates in polynomial time a \DMAC instance from a
\MonotoneContraction instance.
What is left to do is show that the solutions of the resulting \DMAC instance can be mapped
in polynomial time to solutions of the original \MonotoneContraction instance.
First we explain why we use both $G$ and $\hat{G}$, and introduce some convenient notation to deal
with this.

\paragraph{Different grid widths.}
We intentionally chose to make the grid width in \DMAC equal to 1, rather than $\beps$. 
This makes our containment proofs of \DMAC in \UEOPL and our complicated algorithm for 3d \DMAC
cleaner.
It does make the proofs in this section, showing the correctness of the reduction from \MonotoneContraction
to \DMAC, slightly more involved though.
For convenience, we introduce functions $\hat{f}$ and $\hat{h}$ that are analogous to $f$ and $h$ but 
that keep the grid width as $\beps$ rather than 1. 
%
%
%
We also define $\hat{x}$ as the point in $\hat{G}$ that
corresponds to $x \in G$, as we did in Definition~\ref{def:h}. 

\begin{definition}
For all $\hat{y} \in \hat{G}$, and any dimension $i \in [d]$, we define:
\begin{equation}
\label{def:hhat}
\hat{h}_i(\hat{y}) = \sgne(g_i(\hat{y}) - \hat{y}_i).
\end{equation}
The overall displacement $\hat{h}$ at $\hat{y}$ is then $\hat{h}(\hat{y}) \coloneqq
(\hat{h}_1(y),\hat{h}_2(y),\dotsc,\hat{h}_d(y))$.
Then $\hat{f}: \hat{G} \to \hat{G}$ is defined as: $\hat{f}(\hat{y}) \coloneqq \hat{y} + \hat{h}(\hat{y})$. Finally, given a point $y \in G$, 
we define $\hat{y} := \Stretch^{-1}(y) = \beps \cdot y \in \hat{G}$. 

\end{definition}

As stated in the following observation,
we can easily relate the values of $\ell_\infty$ norms of interest within $G$ and $\hat{G}$, as we have
simply multiplied all terms used within the norm by $\beps$ when going from $f$ to $\hat{f}$ and from
$x, y$ to $\hat{x}, \hat{y}$, respectively.
\begin{observation}
\label{obs:gversusgprime}
We have that: $\beps \cdot \Normi{x-y} = \Normi{\hat{x} - \hat{y}}$ and 
$\beps \cdot \Normi{f(y) - f(x)} = \Normi{\hat{f}(\hat{y}) - \hat{f}(\hat{x})}$.
\end{observation}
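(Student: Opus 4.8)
The plan is to prove the identity at the level of points first, and only then pass to norms, since both claimed equalities will follow immediately from the linearity of multiplication by the positive scalar $\beps$. Concretely, I would first record the elementary fact that $\hat{x} = \Stretch^{-1}(x) = \beps \cdot x$ and $\hat{y} = \beps \cdot y$, which is just the definition of $\Stretch^{-1}$. From this, $\hat{x} - \hat{y} = \beps(x - y)$, and since $\beps > 0$ we get $\Normi{\hat{x} - \hat{y}} = \beps \Normi{x-y}$, which is the first stated equality.

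For the second equality I would show the stronger pointwise statement $\hat{f}(\hat{y}) = \beps \cdot f(y)$ for every $y \in G$. Unwinding Definition~\ref{def:h}, $f(y) = y + h(y)$ with $h_i(y) = \Stretch\!\left(\sgne(g_i(\hat{y}) - \hat{y}_i)\right) = \tfrac{1}{\beps}\,\sgne(g_i(\hat{y}) - \hat{y}_i)$, whereas by Definition~\eqref{def:hhat}, $\hat{f}(\hat{y}) = \hat{y} + \hat{h}(\hat{y})$ with $\hat{h}_i(\hat{y}) = \sgne(g_i(\hat{y}) - \hat{y}_i)$. Comparing these two expressions coordinatewise gives $\hat{h}_i(\hat{y}) = \beps\, h_i(y)$, hence $\hat{h}(\hat{y}) = \beps\, h(y)$, and therefore $\hat{f}(\hat{y}) = \beps y + \beps\, h(y) = \beps\,(y + h(y)) = \beps\, f(y)$. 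The same computation applied to $x$ yields $\hat{f}(\hat{x}) = \beps\, f(x)$.

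Subtracting, $\hat{f}(\hat{y}) - \hat{f}(\hat{x}) = \beps\,\bigl(f(y) - f(x)\bigr)$, and taking $\ell_\infty$ norms (again using $\beps > 0$) gives $\Normi{\hat{f}(\hat{y}) - \hat{f}(\hat{x})} = \beps\,\Normi{f(y) - f(x)}$, which is the second stated equality.

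There is no real obstacle here: the only thing to be careful about is bookkeeping the direction of the scaling (i.e.\ that $\hat{G}$ carries grid width $\beps$ and $G$ carries grid width $1$, so one divides by $\beps$ going from $\hat h$ to $h$ and multiplies going back), and the fact that $\beps > 0$ so it pulls out of the norm without an absolute value. Everything else is a one-line consequence of the definitions, so the whole argument fits in a short paragraph.
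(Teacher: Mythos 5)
Your proof is correct and matches the paper's intent exactly: the paper gives no formal proof of this observation, only the one-line remark that every term in the norms has been multiplied by $\beps$ when passing from $f, x, y$ to $\hat{f}, \hat{x}, \hat{y}$, which is precisely the pointwise scaling fact $\hat{x} = \beps x$ and $\hat{f}(\hat{y}) = \beps f(y)$ that you establish before pulling the positive scalar out of the $\ell_\infty$ norm.
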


\subsection{Mapping solutions back}

We now show how each type of solution of the \DMAC instance produced by our reduction can be mapped
back, in polynomial time, to a solution of the original \MonotoneContraction instance.

First, we will deal with actual fixed point solutions.
We show how a \ref{D1} solution for $f$ trivially maps back to an \ref{M1} solution.
\begin{lemma}
A point $y \in G$ is a \ref{D1} solution of the \DMAC instance arising from our reduction only if
$\hat{y} = \beps \cdot y \in \hat{G}$ is an \ref{M1} solution of the original \MonotoneContraction instance.
\end{lemma}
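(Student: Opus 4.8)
The plan is to unwind the definitions of the reduction and show that a fixed point of $f$ forces the clamped displacement of $g$ at the corresponding point $\hat y$ to be zero in every dimension, which by the definition of $\sgne$ means the true displacement of $g$ there is smaller than $\beps$ in absolute value in every coordinate, hence at most $\eps$ in $\ell_\infty$-norm. First I would write $y = f(y) = y + h(y)$, so $h(y) = 0$, i.e.\ $h_i(y) = 0$ for all $i \in [d]$. By Definition~\ref{def:h}, $h_i(y) = \Stretch(\sgne(g_i(\hat y) - \hat y_i))$ with $\hat y = \Stretch^{-1}(y)$, and since $\Stretch$ is a bijection with $\Stretch(0) = 0$, we get $\sgne(g_i(\hat y) - \hat y_i) = 0$ for all $i$.

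Next I would invoke the definition of $\sgne$: $\sgne(\hat z) = 0$ holds exactly when $|\hat z| < \beps$. Applying this with $\hat z = g_i(\hat y) - \hat y_i$ gives $|g_i(\hat y) - \hat y_i| < \beps$ for every dimension $i$, and therefore $\Normi{g(\hat y) - \hat y} < \beps$. Since by construction $\beps = 1/\Ceil{1/\eps} \le \eps$, we conclude $\Normi{g(\hat y) - \hat y} < \beps \le \eps$, which is exactly the condition for $\hat y$ to be an \ref{M1} solution of the \MonotoneContraction instance. The map $y \mapsto \hat y = \beps \cdot y$ is clearly computable in polynomial time, completing the reduction of solutions.

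I do not expect any genuine obstacle here — the statement is essentially a bookkeeping exercise, and the lemma statement itself flags it as "trivial". The only mild subtlety worth being careful about is keeping track of the two grids $G$ and $\hat G$ and the $\Stretch$ / $\Stretch^{-1}$ conversions so that one does not accidentally conflate displacements of $f$ (grid width $1$) with displacements of $g$ (grid width $\beps$); the cleanest way to avoid confusion is to pass through $h_i$ explicitly as above and use $\Stretch(0)=0$ rather than trying to manipulate norms directly at the level of $f$. (If one prefers, the same argument can be phrased via $\hat f$ and Observation~\ref{obs:gversusgprime}, but that is heavier machinery than needed for this direction.)
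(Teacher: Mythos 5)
Your proposal is correct and follows essentially the same route as the paper: both proofs unwind the definitions to get $h(y)=0$, deduce that $\sgne(g_i(\hat y)-\hat y_i)=0$ for every $i$, and then use the definition of $\sgne$ together with $\beps\le\eps$ to conclude $\Normi{g(\hat y)-\hat y}<\beps\le\eps$. Your version is somewhat more explicit than the paper's two-line argument (you pass through $\Stretch$ and $\sgne$ coordinatewise rather than appealing to $\hat h$ in shorthand), but the content is identical.
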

\begin{proof}
Suppose we have a \ref{D1} solution, $y \in G$ such that $f(y) = y$.
By definition of $f$, this means that $h(y) = 0 \implies \hat{h}(\hat{y}) < |\beps|$.
By definition of $\hat{h}$, this means that $\Normi{g(\hat{y}) - \hat{y}} < \beps \le \eps$, 
which completes the proof.
\end{proof}





Now we deal with violation solutions. 
The following lemma states that if we can find a violation of monotonicity for $f$, 
then we can efficiently turn that into a violation of monotonicity for $g$. 
This means that if $g$ is monotone, then $f$ is also monotone. 

\begin{lemma} 
\label{lem:g monotone violation}
If there exists a violation of type \ref{DV1} for $f$, then we can find a violation of type
\ref{MV1} for~$g$ in polynomial time. 
\end{lemma}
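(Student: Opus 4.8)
The plan is to unwind the definition of $f$ and trace the monotonicity violation back through the clamp/stretch operations to $g$. Suppose we are given points $x, y \in G$ with $x \le y$ but $f(x) \not\le f(y)$, i.e. there is a dimension $i \in [d]$ with $f_i(x) > f_i(y)$. Since $f_i(z) = z_i + h_i(z)$ and $h_i(z) = \Stretch(\sgne(g_i(\hat z) - \hat z_i)) \in \{-1, 0, 1\}$, the inequality $f_i(x) > f_i(y)$ together with $x_i \le y_i$ forces one of two possibilities: either $y_i = x_i$ and $h_i(x) > h_i(y)$, or $y_i = x_i + 1$ and $h_i(x) = 1, h_i(y) = -1$. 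In the first case, because $h_i(x), h_i(y) \in \{-1,0,1\}$ and $h_i(x) > h_i(y)$, we have $h_i(x) \ge 0 \ge h_i(y)$ with at least one strict, so $\sgne(g_i(\hat x) - \hat x_i) \ge 0$ and $\sgne(g_i(\hat y) - \hat y_i) \le 0$, not both zero; by the definition of $\sgne$ this gives $g_i(\hat x) - \hat x_i \ge -\beps$ is not the binding constraint — more usefully, $g_i(\hat x) - \hat x_i > -\beps$ fails to hold only if it equals exactly... the clean statement is: $\sgne(t) \ge 0 \iff t > -\beps$ and $\sgne(t) \le 0 \iff t < \beps$. Hence $g_i(\hat x) - \hat x_i > -\beps$ and $g_i(\hat y) - \hat y_i < \beps$, and since $\hat x_i = \hat y_i$ (as $x_i = y_i$ and $\hat\cdot$ is a coordinate-wise scaling) we get $g_i(\hat x) > \hat x_i - \beps = \hat y_i - \beps > g_i(\hat y) - 2\beps$. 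That is not yet a contradiction, so I need to be more careful: the two endpoints where the violation is "borderline" still need separating.

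The cleaner route is to observe that if there is no $\mathsf{MV1}$ violation then $g$ is monotone on $\hat G$, and a monotone $g$ produces a monotone displacement clamp: if $\hat x \le \hat y$ then $g_i(\hat x) \le g_i(\hat y)$, so in the case $\hat x_i = \hat y_i$ we get $g_i(\hat x) - \hat x_i \le g_i(\hat y) - \hat y_i$, hence $\sgne$ is monotone in its argument so $h_i(x) \le h_i(y)$, contradicting $h_i(x) > h_i(y)$. In the second case $y_i = x_i + 1$, i.e. $\hat y_i = \hat x_i + \beps$, and $h_i(x) = 1$ means $g_i(\hat x) - \hat x_i \ge \beps$, i.e. $g_i(\hat x) \ge \hat x_i + \beps = \hat y_i$, while $h_i(y) = -1$ means $g_i(\hat y) - \hat y_i \le -\beps$, i.e. $g_i(\hat y) \le \hat y_i - \beps < \hat y_i \le g_i(\hat x)$, so $g_i(\hat x) > g_i(\hat y)$ with $\hat x \le \hat y$ — again contradicting monotonicity of $g$. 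So in either case the witness pair $(\hat x, \hat y)$ (which is computed from $(x,y)$ in polynomial time via $\Stretch^{-1}$, and whose $g$-values we already know from the queries made to determine $f(x), f(y)$) is an $\mathsf{MV1}$ violation for $g$.

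So the proof I would write does a case split on whether $y_i = x_i$ or $y_i = x_i + 1$ (these are the only options because $f_i(x) - f_i(y) > 0$ while $x_i - y_i \le 0$ and displacements lie in $\{-1,0,1\}$, so $x_i \le y_i \le x_i + 1$ is forced once $f_i(x) > f_i(y)$, with the boundary case $y_i \ge x_i + 2$ giving $f_i(y) \ge y_i - 1 \ge x_i + 1 \ge f_i(x)$). In both cases I extract from $h_i(x) > h_i(y)$ the inequalities $g_i(\hat x) - \hat x_i \ge \beps$ or $> -\beps$ appropriately and $g_i(\hat y) - \hat y_i \le -\beps$ or $< \beps$, plug in $\hat y_i = \hat x_i$ or $\hat y_i = \hat x_i + \beps$, and conclude $g_i(\hat x) > g_i(\hat y)$. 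Since $\hat x \le \hat y$ (because $x \le y$ and $\Stretch^{-1}$ is coordinate-wise multiplication by the positive scalar $\beps$), the pair $(\hat x, \hat y)$ witnesses $\mathsf{MV1}$ for $g$. Everything here is a constant amount of arithmetic and the conversion $x \mapsto \hat x$ is the map $\Stretch^{-1}$, so the reduction runs in polynomial time.

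I do not expect a genuine obstacle here — the only subtlety, and the thing to be careful about when writing it up, is making sure the case analysis on $y_i - x_i \in \{0, 1\}$ is exhaustive and that the strictness of the $\sgne$ thresholds ($\sgne(t) = 1 \iff t \ge \beps$, $= -1 \iff t \le -\beps$, $= 0 \iff |t| < \beps$) is tracked exactly so that we genuinely get $g_i(\hat x) > g_i(\hat y)$ strictly rather than $\ge$. This is a routine but somewhat fiddly bookkeeping step; there is no conceptual difficulty, and the argument does not use contraction at all (only the structure of the rounding), which is consistent with the lemma's claim being purely about monotonicity.
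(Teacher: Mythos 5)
Your proposal is correct and uses essentially the same argument as the paper: reduce to $y_i - x_i \in \{0,1\}$ by noting displacements of $f$ lie in $\{-1,0,1\}$, then in each case unwind the $\sgne$ thresholds to get a strict inequality $g_i(\hat x) > g_i(\hat y)$, making $(\hat x, \hat y)$ the \ref{MV1} witness. The paper's Case 1 is packaged slightly more compactly (``$\sgne(a) < \sgne(b) \implies a < b$'' because $\sgne$ is nondecreasing), which is what your ``direct'' final plan amounts to once you track the $\ge \beps$ / $<\beps$ thresholds per value of $h_i$ rather than the one-sided bound that stalled your first attempt.
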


\begin{proof}
Let $x, y \in G$ witness a violation of type \ref{DV1}, i.e., $x \leq y$ and
$f(x) \not\leq f(y)$. 
Then there exists an $i\in [d]$ such that $f_i(y) < f_i(x)$. 
We already know that $x_i \leq y_i$. 
%
Suppose $y_i \ge x_i + 2$. 
%
Then since $|h_i(y)| \le 1$ and $|h_i(x)| \le 1$ we have that $f_i(y) \geq f_i(x)$, which is a contradiction,
so $y_i \ngeq x_i + 2$. 
%
Thus, we have $x_i \le y_i \leq x_i + 1$. 
%
%
There are two cases to consider. 
The first is where $y_i = x_i$.
In this case, we have $h_i(y) < h_i(x)$, which is equivalent to 
\[ \sgne(g_i(\hat{y}) - \hat{y}_i) < \sgne(g_i(\hat{x}) - \hat{x}_i) = \sgne(g_i(\hat{x}) - \hat{y}_i) \implies
g_i(\hat{y}) < g_i(\hat{x}), \] so $g(\hat{x}) \nleq g(\hat{y})$, giving us a violation of type \ref{MV1} for the pair $\hat{x},\hat{y}$.

The second case is where $y_i = x_i + 1$. 
Then we must have $h_i(x) = 1$ and $h_i(y) = -1$, which implies that 
$$g_i(\hat{x}) \geq \hat{x}_i + \beps  = \hat{y}_i \ge g_i(\hat{y}) + \beps > g_i(\hat{y}),$$ 
so $g(\hat{x}) \nleq g(\hat{y})$, and we again have a violation of type \ref{MV1} for the pair $\hat{x},\hat{y}$.
\end{proof}

Next we deal with violations of type \ref{DV2}, which occur whenever $f$ is
strictly non-contracting, i.e. expanding. 
In this case, we can efficiently map these violations onto \ref{MV2} violations of contraction in $g$. 
This means that if $g$ is contracting, then $f$ is non-expansive. 
Given a pair of points $x, y$ witnessing a violation of type \ref{DV2} for $f$, we will show that $\hatx,\haty$ witness a violation
of type \ref{MV2} for~$g$.
To that end, we first show in Lemma~\ref{lem:boundfgap} that there exists a dimension $i \in [d]$
where a violation of strict contraction is witnessed for the pair of points $x, y$ and in which the displacement of
$\hat{f}$ cannot be more than $\beps$ smaller than the displacement of $g$ at $\hatx, \haty$.

\contractionViolation

\begin{lemma} 
\label{lem:boundfgap}
Given a violation $x,y$ of type \ref{DV2} for $f$, then for some $i \in [d]$ we have that:
\begin{equation}
\label{eq:lemma7assumption1}
|f_i(y) - f_i(x)| > \Normi{y-x} \ge 1.
\end{equation}
and 
\begin{equation}
\label{eq:boundfgap}
|g_i(\hat{y}) - g_i(\hat{x})| \ge |\hat{f}_i(\hat{y}) - \hat{f}_i(\hat{x})| - \hat{\eps}.
\end{equation}
\end{lemma}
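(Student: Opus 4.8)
The plan is to locate a single dimension $i$ that simultaneously witnesses the expansion of $f$ at $x,y$ and in which the discretization gap between $\hat f$ and $g$ is small. The starting point is the given violation $x,y$ of type \ref{DV2}, which says $\Normi{f(x)-f(y)} > \Normi{x-y}$; since both sides are integers (points of $G$), this already gives $\Normi{x-y} \ge 1$. Pick $i$ to be a coordinate achieving the maximum on the left, so $|f_i(y)-f_i(x)| = \Normi{f(y)-f(x)} > \Normi{y-x} \ge 1$, which is exactly \eqref{eq:lemma7assumption1}. The work is then to argue that \emph{this same} $i$ can be chosen to satisfy \eqref{eq:boundfgap}; a priori the coordinate maximizing the $f$-expansion need not be one where the rounding is well-behaved, so some case analysis is needed.

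The key observation is Observation~\ref{obs:gversusgprime}: $\Normi{\hat f(\hat x)-\hat f(\hat y)} = \beps\,\Normi{f(x)-f(y)}$, so dimension $i$ also witnesses $|\hat f_i(\hat x) - \hat f_i(\hat y)| = \Normi{\hat f(\hat x)-\hat f(\hat y)} > \beps\,\Normi{x-y} \ge \beps$. Now recall $\hat f_i(\hat y) - \hat f_i(\hat x) = (\hat y_i - \hat x_i) + (\hat h_i(\hat y) - \hat h_i(\hat x))$ where $\hat h_i = \sgne(g_i(\cdot) - (\cdot)_i)$ takes values in $\{-\beps,0,\beps\}$. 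To relate $\hat f_i$ to $g_i$, I would compare $\hat f_i(\hat y) - \hat f_i(\hat x)$ with $g_i(\hat y) - g_i(\hat x)$ coordinate-by-coordinate on the clamping: write $g_i(\hat y) - g_i(\hat x) = \big(g_i(\hat y) - \hat y_i\big) - \big(g_i(\hat x) - \hat x_i\big) + (\hat y_i - \hat x_i)$, so that the gap between the two differences is
\[
\big(g_i(\hat y) - g_i(\hat x)\big) - \big(\hat f_i(\hat y) - \hat f_i(\hat x)\big)
= \big[(g_i(\hat y) - \hat y_i) - \sgne(g_i(\hat y)-\hat y_i)\big] - \big[(g_i(\hat x) - \hat x_i) - \sgne(g_i(\hat x)-\hat x_i)\big].
\]
Each bracketed term $z - \sgne(z)$ is, by the definition of $\sgne$, bounded in absolute value by $\beps$ when $|z| \ge \beps$ and equals $z$ (with $|z| < \beps$) otherwise — in all cases the quantity $z - \sgne(z)$ lies in a controlled range. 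The sign information here is crucial: I would argue that because $i$ is a coordinate of \emph{large} $\hat f$-displacement difference (strictly more than $\beps$ in magnitude), the clamp is active at $\hat x$ or $\hat y$ (or both) in a way that forces $\operatorname{sgn}(g_i(\hat y) - g_i(\hat x))$ to agree with $\operatorname{sgn}(\hat f_i(\hat y) - \hat f_i(\hat x))$, and then one directional estimate suffices: $|g_i(\hat y)-g_i(\hat x)| \ge |\hat f_i(\hat y)-\hat f_i(\hat x)| - \beps$, which is \eqref{eq:boundfgap} since $\hat\eps = \beps$.

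The main obstacle I anticipate is the careful bookkeeping of the rounding: one must handle the cases where $\hat h_i(\hat x)$ and $\hat h_i(\hat y)$ are equal, differ by $\beps$, or differ by $2\beps$, and in the last ``opposite sign'' case show that the raw displacements $g_i(\hat x)-\hat x_i$ and $g_i(\hat y)-\hat y_i$ are genuinely on opposite sides (each of magnitude $\ge \beps$), so that cancellation in $g_i(\hat y)-g_i(\hat x)$ cannot be worse than in $\hat f_i(\hat y)-\hat f_i(\hat x)$ by more than one step $\beps$. The cleanest route is probably to fix, without loss of generality, $\hat f_i(\hat y) \ge \hat f_i(\hat x)$ (so $\hat f_i(\hat y)-\hat f_i(\hat x) > \beps$), and then show $g_i(\hat y) - g_i(\hat x) > \hat f_i(\hat y) - \hat f_i(\hat x) - \beps > 0$ by adding the two bracketed inequalities above, using $z - \sgne(z) \le \beps \cdot \mathbf{1}[z \ge \beps] \le \beps$ on the $\hat y$-term and $z - \sgne(z) \ge -\text{(something)}$, more precisely $\sgne(z) - z \le \beps$ — i.e. $z - \sgne(z) \ge -\beps$ — on the $\hat x$-term; one checks both of these hold unconditionally for $z \in [-1,1]$ from the three-case definition of $\sgne$. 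This reduces the whole lemma to these two elementary one-variable inequalities about $z \mapsto z - \sgne(z)$ plus the sign-consistency argument driven by the strictness $|\hat f_i(\hat y) - \hat f_i(\hat x)| > \beps$, and the rest is routine.
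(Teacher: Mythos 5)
Your decomposition
\[
\bigl(g_i(\hat y)-g_i(\hat x)\bigr)-\bigl(\hat f_i(\hat y)-\hat f_i(\hat x)\bigr)
= \bigl(a-\sgne(a)\bigr)-\bigl(b-\sgne(b)\bigr),\qquad
a=g_i(\hat y)-\hat y_i,\ b=g_i(\hat x)-\hat x_i,
\]
is the right starting point, and picking the coordinate $i$ that achieves $\Normi{f(y)-f(x)}$ does satisfy \eqref{eq:lemma7assumption1}. But the two ``elementary one-variable inequalities'' you then lean on, namely $z-\sgne(z)\le\beps$ and $z-\sgne(z)\ge-\beps$ for all $z\in[-1,1]$, are both false. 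By definition, when $z\ge\beps$ you have $z-\sgne(z)=z-\beps$, which is as large as $1-\beps$; symmetrically, when $z\le-\beps$ you have $z-\sgne(z)=z+\beps$, which is as small as $-(1-\beps)$. Concretely, with $\beps=1/10$ and $z=1/2$ the upper bound reads $0.4\le 0.1$, which fails. So the step ``add the two bracketed inequalities'' does not close the gap; \eqref{eq:boundfgap} is genuinely false for arbitrary $i$ (take $\hat y_i-\hat x_i=3\beps$, $b$ just below $\beps$, $a=-2\beps$), and the place the argument must use the violation is exactly here.

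What actually makes it work is the sign-consistency you gesture at but do not turn into the right lemma: after normalizing so that $\hat f_i(\hat y)-\hat f_i(\hat x)>0$, the violation $|\hat f_i(\hat y)-\hat f_i(\hat x)|>\Normi{\hat y-\hat x}\ge|\hat y_i-\hat x_i|$ forces $\hat y_i\ge\hat x_i$ and, more importantly, $\sgne(a)>\sgne(b)$. From the three-case definition, $\sgne(a)>\sgne(b)$ implies $a>b$ directly (e.g.\ $\sgne(a)=\beps,\sgne(b)=0$ gives $a\ge\beps>b$), and in the $\sgne(a)-\sgne(b)=2\beps$ sub-case it even gives $a-b\ge 2\beps$. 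These ordinal facts, rather than uniform bounds on $z-\sgne(z)$, are what let you conclude $(a-b)-(\sgne(a)-\sgne(b))\ge-\beps$, and they are essentially the content of the paper's case analysis on $(h_i(x),h_i(y))\in\{(0,1),(-1,0),(-1,1),(1,-1)\}$. So: keep your decomposition, drop the false universal bounds, and replace them with the short case split on $\sgne(a)-\sgne(b)\in\{\beps,2\beps\}$ using $a>b$.
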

\begin{proof}
If $x,y\in G$ are a violation of type \ref{DV2}, i.e., $\Normi{f(y) - f(x)} > \Normi{y-x}$, then
there exists a dimension, say $i\in [d]$, such that \eqref{eq:lemma7assumption1} holds.
Recall that $f_i(y) = y_i + h_i(y)$.
We use a case analysis on the values of $h_i(x)$ and $h_i(y)$.

First assume that $y_i > x_i$. 
%
Since $h_i(x), h_i(y) \in \{-1,0,1\}$, there are nine possibilities
for the values of $(h_i(x),h_i(y))$.
First we show that, under our assumption $y_i > x_i$, condition~\eqref{eq:lemma7assumption1} implies 
that $h_i(y) > h_i(x)$.
Towards a contradiction assume that $h_i(y) - h_i(x) \le 0$.
Since $|y_i - x_i| \le \Normi{y - x}$, we require
$$|f_i(y) - f_i(x)| \ge |y_i - x_i| + 1.$$ 
But, as $y_i \ge x_i + 1$ and $h_i(y) - h_i(x) \in \{-2, -1, 0\}$, this is not possible, since
\begin{align*}
|f_i(y) - f_i(x)| -  |y_i - x_i| = \ & |y_i + h_i(y) - x_i - h_i(x)| - |y_i - x_i| \\
= \ & |(y_i - x_i) + (h_i(y) - h_i(x))| - |y_i - x_i| \\
\le \ & 0.
\end{align*}
Thus we have $h_i(y) > h_i(x)$, as claimed, i.e., the only three relevant cases,
illustrated in Figure~\ref{fig:contractionviolation}, are:
$$(h_i(x),h_i(y)) \in \{(0,1),(-1, 0),(-1,1)\}.$$ 

Assume that $(h_i(x),h_i(y)) = (0,1)$, then the smallest that 
$|\hat{f}_i(\hat{y}) - \hat{f}_i(\hat{x})|  - |g_i(\hat{y}) - g_i(\hat{x})|$ can be is when
$g_i(\hat{x}) - \hat{x}$ is just slightly less than $\hat{\eps}$ and $g_i(\hat{y}) = \hat{y} + \hat{\eps}$.
Then $|\hat{f}_i(\hat{y}) - \hat{f}_i(\hat{x})| = 2\cdot \hat{\eps}$ and 
$|g_i(\hat{y}) - g_i(\hat{x})| > \hat{\eps}$, which implies~\eqref{eq:boundfgap}.

The argument for the case that  $(h_i(x),h_i(y))=(-1,0)$ is symmetric
to the case for $(h_i(x),h_i(y))=(0,1)$, with the smallest 
gap $|\hat{f}_i(\hat{y}) - \hat{f}_i(\hat{x})|  - |g_i(\hat{y}) - g_i(\hat{x})|$ occurring when
$g_i(\hat{y}) = \hat{y} - \beps$ and when $g(\hat{x})- \hat{x}$ is just slightly less than $-\beps$.

If $(h_i(x),h_i(y)) = (-1,1)$, then the clamping done by $h$ has (weakly) reduced both displacements for $x$ and $y$, 
so in the $\hat{G}$ domain we have the following inequality, which is stronger than~\eqref{eq:boundfgap}:
\begin{equation}
\label{eq:stronger}
|g_i(\hat{y}) - g_i(\hat{x})| \ge |\hat{f}_i(\hat{y}) - \hat{f}_i(\hat{x})|.
\end{equation}
This completes the proof of the theorem under the assumption that $y_i > x_i$.
The proof for the case of $y_i < x_i$ is symmetric.

Finally, assume that $y_i = x_i$.
In this case, only two of the nine cases are relevant, i.e., are consistent with~\eqref{eq:lemma7assumption1},
namely where $(h_i(x),h_i(y)) \in \{(1,-1),(-1,1)\}$.
Intuitively, now because $x_i = y_i$ and dimension $i$ 
witnesses $|f_i(y) - f_i(x)| \ge 2$, the
displacements for $x_i$ and $y_i$ must go in strictly opposite directions. 
Formally, since we require $|f_i(y) - f_i(x)| = |y_i + h_i(y) - x_i - h_i(x)| = |h_i(y) - h_i(x)| \ge 2$,
and $h_i(y), h_i(x) \in \{-1,0,1\}$, we must have $|h_i(y) - h_i(x)|  = 2$.
Now, since the absolute value of both displacements in $\hat{G}$ was at least $\beps$, the clamping
has weakly reduced the gap, and we have the stronger inequality~\eqref{eq:stronger} again, which completes
the proof.
\end{proof}

We can now prove the following.

\begin{lemma} If there exists a violation of type \ref{DV2} for $f$, then we
can find a violation of type \ref{MV2} for~$g$ in polynomial time.
\end{lemma}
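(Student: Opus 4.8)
The plan is to feed the violation $x,y$ of type~\ref{DV2} for $f$ into Lemma~\ref{lem:boundfgap}, and then transfer the expansion it isolates in one coordinate back to $g$ using the grid-width rescaling of Observation~\ref{obs:gversusgprime}. Lemma~\ref{lem:boundfgap} supplies a dimension $i \in [d]$ with $|f_i(y) - f_i(x)| > \Normi{y-x} \ge 1$ and $|g_i(\haty) - g_i(\hatx)| \ge |\hat{f}_i(\haty) - \hat{f}_i(\hatx)| - \beps$. The claim is then that $\hatx,\haty$ witness a violation of type~\ref{MV2} for $g$, i.e. $\Normi{g(\haty) - g(\hatx)} > \lambda \Normi{\haty - \hatx}$, and that these points are computed from $x,y$ in polynomial time (namely $\hatx = \beps\cdot x$, $\haty = \beps\cdot y$).

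For the core estimate I would first use integrality of the grid $G = \{0,1,\dots,n\}^d$: since $x,y\in G$ and $f$ maps into $G$, both $|f_i(y)-f_i(x)|$ and $\Normi{y-x}$ are nonnegative integers, so $|f_i(y)-f_i(x)| > \Normi{y-x}$ strengthens to $|f_i(y)-f_i(x)| \ge \Normi{y-x}+1$. Rescaling by $\beps$ and using that $\hat f_i(\hat y) = \beps\, f_i(y)$ (the coordinatewise version of Observation~\ref{obs:gversusgprime}, together with $\beps\cdot\Normi{y-x} = \Normi{\haty-\hatx}$) gives $|\hat{f}_i(\haty) - \hat{f}_i(\hatx)| \ge \Normi{\haty - \hatx} + \beps$. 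Substituting this into the second inequality from Lemma~\ref{lem:boundfgap}, the $-\beps$ clamping loss is exactly cancelled by the $+\beps$ just gained, leaving $|g_i(\haty) - g_i(\hatx)| \ge \Normi{\haty - \hatx}$, and hence $\Normi{g(\haty) - g(\hatx)} \ge |g_i(\haty)-g_i(\hatx)| \ge \Normi{\haty - \hatx}$.

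It then remains to upgrade this to a strict inequality against $\lambda\Normi{\haty - \hatx}$. From $\Normi{y-x}\ge 1$ we get $\Normi{\haty-\hatx} = \beps\,\Normi{y-x} \ge \beps > 0$, so $\hatx\ne\haty$; combined with $\lambda\in[0,1)$ this yields $\lambda\Normi{\haty-\hatx} < \Normi{\haty-\hatx} \le \Normi{g(\haty)-g(\hatx)}$, which is exactly a violation of type~\ref{MV2}. I do not expect a genuine obstacle here — the difficult case analysis was already discharged in Lemma~\ref{lem:boundfgap} — so the only things needing care are the appeal to grid integrality that produces the extra $+\beps$ offsetting the clamping loss, and the final argument that a positive gap plus $\lambda<1$ gives strictness.
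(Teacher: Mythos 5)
Your proof is correct and follows essentially the same route as the paper: apply Lemma~\ref{lem:boundfgap} to obtain the coordinate $i$, observe that strict inequality on the integer grid forces $|\hat f_i(\haty)-\hat f_i(\hatx)|\ge\Normi{\haty-\hatx}+\beps$, and substitute into~\eqref{eq:boundfgap} to cancel the $\beps$-clamping loss and conclude $\Normi{g(\haty)-g(\hatx)}\ge\Normi{\haty-\hatx}$. You are slightly more explicit than the paper in two places --- you spell out that the extra $+\beps$ comes from integer-gridded strict inequality, and you spell out that $\hatx\ne\haty$ together with $\lambda<1$ upgrades the non-strict bound to the strict \ref{MV2} inequality --- both of which the paper leaves implicit but neither of which changes the argument.
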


\begin{proof}
Let $x, y$ witness the \ref{DV2} violation of $f$.
We will use Lemma~\ref{lem:boundfgap} to show that $\hatx,\haty$ is our desired \ref{MV2} violation 
for~$g$.
First we choose $i \in [d]$ so that~\eqref{eq:lemma7assumption1} and~\eqref{eq:boundfgap}
from Lemma~\ref{lem:boundfgap} both hold.

Note that we have $\Normi{\hat{y}-\hat{x}} + \hat{\eps} \le |\hat{f}_i(\hat{y}) -
\hat{f}_i(\hat{x})|$, which is due to \eqref{eq:lemma7assumption1} and the fact that the grid width
is $\hat{\eps}$.
Putting this together with~\eqref{eq:boundfgap}, we have: 
\begin{align*}
\Normi{\hat{y}-\hat{x}} + \hat{\eps} \le \ & |\hat{f}_i(\hat{y}) - \hat{f}_i(\hat{x})| \\
\le \ & |g(\hat{y}_i) - g_i(\hat{x})| + \hat{\eps} \\
\le \ & \Normi{g(\hat{y}) - g(\hat{x})} + \hat{\eps},
\end{align*}
so we have
$$
\Normi{\hat{y}-\hat{x}} \le \Normi{g(\hat{y}) - g(\hat{x})},
$$
which shows that we get an \ref{MV2} violation of strict contraction for points  $\hat{x},\hat{y}$,
and which completes the proof.
\end{proof}

To conclude, we have shown that each solution type of $\DMAC$ for $f$ can be
mapped onto a solution of $\MonotoneContraction$ for $g$. Furthermore, it is clear that $f$
can be constructed from $g$ in polynomial time, establishing the correctness of our reduction.
Our reduction maps all approximate fixed points of $g$ to fixed
points of $f$, so we have proved the following theorem. 

\begin{theorem} 
\label{thm:mctodmac}
There is a promise-preserving polynomial-time reduction from
$\MonotoneContraction$ to $\DMAC$.
\end{theorem}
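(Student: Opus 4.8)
The plan is to assemble the three lemmas just proved into the final theorem. The statement to prove is Theorem~\ref{thm:mctodmac}: there is a promise-preserving polynomial-time reduction from $\MonotoneContraction$ to $\DMAC$. The reduction itself has already been constructed in Definition~\ref{def:h}, where from a purported monotone contraction $g$ we build the \DMAC instance $f$ on the grid $G$ via the clamped-and-stretched displacement function $h$. So the work of the proof is only to verify the three properties that make this a valid promise-preserving reduction: (i) $f$ is constructible from $g$ in polynomial time; (ii) every solution type of $\DMAC$ for $f$ can be mapped back, in polynomial time, to the corresponding solution type of $\MonotoneContraction$ for $g$; and (iii) the reduction is promise-preserving, i.e.\ a non-violation instance maps to a non-violation instance.

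First I would note that $f$ is computed in polynomial time: evaluating $f(y)$ requires one evaluation of $g$ at $\hat y = \Stretch^{-1}(y)$, plus the constant-time coordinatewise operations $\sgne$ and $\Stretch$, and $\beps = 1/\Ceil{1/\eps}$ is computed directly from $\eps$. Then I would invoke the three lemmas in turn. For \ref{D1} solutions, the first unnumbered lemma in Section~\ref{sec:monotone_to_dmac} (``A point $y \in G$ is a \ref{D1} solution \dots only if $\hat y$ is an \ref{M1} solution'') shows a fixed point of $f$ maps to an $\eps$-approximate fixed point of $g$, and this map $y \mapsto \beps \cdot y$ is trivially polynomial-time. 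For \ref{DV1} solutions, Lemma~\ref{lem:g monotone violation} gives a polynomial-time map to an \ref{MV1} violation of $g$. For \ref{DV2} solutions, the final lemma of the section (built on Lemma~\ref{lem:boundfgap}) gives a polynomial-time map to an \ref{MV2} violation of $g$. Since every \DMAC solution type is covered, the back-mapping is complete.

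For the promise-preserving property, I would argue the contrapositive exactly as the lemma statements are phrased: if $g$ has no violations (i.e.\ $g$ is genuinely monotone and contracting), then $f$ has no violations. Indeed, if $f$ had a \ref{DV1} violation then Lemma~\ref{lem:g monotone violation} would produce an \ref{MV1} violation of $g$, contradicting monotonicity of $g$; and if $f$ had a \ref{DV2} violation then the last lemma would produce an \ref{MV2} violation of $g$, contradicting that $g$ is contracting (note here that $\lambda < 1$ so that $\Normi{g(\hat y) - g(\hat x)} \le \lambda \Normi{\hat y - \hat x} < \Normi{\hat y - \hat x}$ whenever $\hat x \ne \hat y$, which rules out the \ref{MV2} inequality $\Normi{\hat y - \hat x} \le \Normi{g(\hat y) - g(\hat x)}$ derived in that lemma). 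Hence a violation-free \MonotoneContraction instance yields a violation-free \DMAC instance, which is precisely the promise-preserving condition of~\cite{FGMS20}.

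I do not expect any serious obstacle here: all the technical content has been front-loaded into the three lemmas and Observation~\ref{obs:gversusgprime}, so the theorem is essentially a bookkeeping assembly. The one point that warrants a sentence of care is the subtlety around the two grids $G$ and $\hat G$ and the $\Stretch$ bijection --- I would be explicit that the back-maps land in $\hat G$, the true domain of $g$, via $\hat x = \Stretch^{-1}(x) = \beps \cdot x$, and that Observation~\ref{obs:gversusgprime} is what makes the norm inequalities transfer cleanly between the unit-width grid $G$ used by \DMAC and the $\beps$-width grid $\hat G$ used by $g$. Beyond flagging that, the proof is a short paragraph combining the cited results.
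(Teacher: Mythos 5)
Your proposal is correct and matches the paper's own proof, which is essentially the short paragraph preceding the theorem that assembles the unnumbered \ref{D1}-to-\ref{M1} lemma, Lemma~\ref{lem:g monotone violation}, and the final \ref{DV2}-to-\ref{MV2} lemma (via Lemma~\ref{lem:boundfgap}), together with the observation that $f$ is built in polynomial time. The explicit note you add about $\lambda < 1$ being what upgrades the derived inequality $\Normi{\hat y - \hat x} \le \Normi{g(\hat y) - g(\hat x)}$ into a genuine \ref{MV2} violation is a reasonable clarification that is implicit in the paper's lemma statement.
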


\section{DMAC is in UEOPL}
\label{sec:dmac_in_ueopl}

In Section~\ref{sec:UEOPLoverview} we briefly outlined key features of our proof of containment 
of \DMAC in \UEOPL, which goes via a reduction to the problem \OPDC. 
In this section, we give full details of our reduction from \DMAC to \OPDC.

Our very first step, in Section~\ref{sec:unitdisplacement}, is to show that we can reduce \DMAC to a 
variant of it where all displacements have length at most $1$ in the $\ell_\infty$-norm.
Our reduction from \MonotoneContraction to \DMAC had this property, and here we show that 
any \DMAC instance can be efficiently transformed into another instance with this property.
It will make a number of subsequent proofs in this section simpler, and will also be used
by our algorithm for 3d instance later in the paper.
After Section~\ref{sec:unitdisplacement}, in the rest of the paper we will always assume
that our \DMAC instances have this property.
 
For a \DMAC instance, the domain of the function $f$ is a discrete grid, which is a complete
lattice under the pointwise-minimum ordering). 
If the \DMAC instance is violation-free then $f$ is \emph{monotone}, which is also known as
\emph{order preserving}.
In Section~\ref{sec:verifyingLFPs}, we first recall the Knaster-Tarski Theorem (henceforth just
Tarski's Theorem), which implies that an order-preserving function that maps a complete lattice to
itself has a \emph{least fixed point}.
Then, we describe a number of properties of least fixed points of monotone functions that are also
\emph{non-expansive}, as $f$ will be in a \DMAC instance with no violations.
The properties will be used in our reduction from \DMAC to \OPDC.
The reduction itself appears in Section~\ref{sec:DMACtoOPDC}.

\subsection{DMAC with Unit Displacements}
\label{sec:unitdisplacement}

Here we show that $\DMAC$ can be reduced, in polynomial time, to $\DMAC$ in
which all displacements have length at most $1$ in the $\ell_\infty$-norm. The
reduction creates a new function $f' : G \rightarrow G$ defined as follows.
\begin{equation*}
f_i'(x) = \begin{cases}
x_i - 1& \text{if $f_i(x) < x_i$,}\\
x_i & \text{if $f_i(x) = x_i$,}\\
x_i + 1& \text{if $f_i(x) > x_i$.}\\
\end{cases}
\end{equation*}
This function simply clamps the displacement of $f$ so that it moves distance
at most 1 in each dimension, in much the same way as we did when we reduced
$\MonotoneContraction$ to $\DMAC$. It is therefore clear that we have
$\| f'(x) - x \|_\infty \le 1$ for all $x \in G$.

The next lemma proves that this is a correct promise-preserving reduction. 
\begin{lemma}
\label{lem:unitdisplacement}
Every fixed point of $f'$ is a fixed point of $f$. Furthermore, if $f$ is
violation-free, if $x,y$ is a \ref{DV1} or \ref{DV2} violation for $f'$, then it 
is also, respectively, the same type of violation for $f$.
\end{lemma}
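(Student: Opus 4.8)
The plan is to treat the three assertions independently — fixed points, \ref{DV1} violations, and \ref{DV2} violations — and to observe at the outset that for both violation types the \emph{same} pair $x,y$ will witness the corresponding violation of $f$, so there is no decoding step to perform. Throughout I would fix the local notation $f_i'(x) = x_i + \sigma\!\left(f_i(x) - x_i\right)$, where $\sigma(t)\in\{-1,0,1\}$ is the sign of the integer $t$, and use three elementary facts: $\sigma$ is monotone ($a \le b \Rightarrow \sigma(a) \le \sigma(b)$), $\sigma$ is non-expansive ($|\sigma(a)-\sigma(b)| \le |a-b|$), and every quantity in sight lives on $\Integer$, so a strict inequality between two of them is in fact a gap of at least $1$.

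The fixed-point claim is immediate: if $f'(x)=x$ then $f_i'(x)=x_i$ for every $i$, and by construction $f_i'(x)=x_i$ holds precisely when $f_i(x)=x_i$, so $f(x)=x$. For \ref{DV1}, I would take $x \le y$ with $f'(x) \not\le f'(y)$, pick a coordinate $i$ with $f_i'(x) > f_i'(y)$, and set $\delta := y_i - x_i \ge 0$, $u := f_i(x)-x_i$, $v := f_i(y)-y_i$, $c := \sigma(u)-\sigma(v)$. Rearranging $f_i'(x) > f_i'(y)$ gives $c > \delta$, hence $c \ge \delta+1$ by integrality, and since $c \le 2$ this leaves $\delta \in \{0,1\}$. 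If $\delta=0$ then $\sigma(u) > \sigma(v)$, so monotonicity of $\sigma$ excludes $u \le v$ and $f_i(x)-f_i(y)=u-v>0$; if $\delta=1$ then $c=2$, so $\sigma(u)=1$ and $\sigma(v)=-1$, whence $u \ge 1$, $v \le -1$ and $f_i(x)-f_i(y)=(u-v)-\delta \ge 1$. Either way $f(x)\not\le f(y)$, so $x,y$ is a \ref{DV1} violation of $f$.

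For \ref{DV2}, I would start from $\Normi{f'(x)-f'(y)} > \Normi{x-y} =: m$; since $m\in\Integer$ and the inequality is strict we have $m \ge 1$, and some coordinate $i$ satisfies $|f_i'(x)-f_i'(y)| \ge m+1 \ge |x_i-y_i|+1$. After swapping $x,y$ if necessary I may assume $x_i \le y_i$; keeping $\delta,u,v,c$ as above, $|f_i'(x)-f_i'(y)| = |\delta - c|$. A short sign analysis shows that the only pairs $(\delta,c)$ compatible with $|\delta-c|>m\ge\delta$ are \textup{(i)} $c \le -1$ (with $\delta$ arbitrary), and \textup{(ii)} $\delta=0$, $c=2$. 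In case \textup{(i)}, monotonicity of $\sigma$ forces $u<v$ and non-expansiveness forces $|u-v| \ge |c|$, so $|f_i(x)-f_i(y)| = \delta + |u-v| \ge \delta + |c| = |f_i'(x)-f_i'(y)| > m$; in case \textup{(ii)}, $\sigma(u)=1$ and $\sigma(v)=-1$ give $u-v \ge 2$, so $|f_i(x)-f_i(y)| = |u-v| \ge 2 > m$. Hence $\Normi{f(x)-f(y)} > m$ and $x,y$ is a \ref{DV2} violation of $f$.

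I expect the \ref{DV2} step to be the one requiring care: clamping the displacement can either shrink or — when the coordinate separation $\delta$ is large — \emph{enlarge} the coordinatewise gap of $f'$ relative to that of $f$, so the crux is that a coordinate witnessing a \ref{DV2} violation of $f'$ cannot be one where the gap was enlarged, precisely because $\Normi{x-y}\ge\delta$; once that is pinned down, the monotonicity and non-expansiveness of $\sigma$ finish the job. I would also note, for orientation, that the hypothesis ``$f$ is violation-free'' is not actually used in any of the three implications; it is invoked only afterwards, to conclude by contradiction that $f'$ is violation-free whenever $f$ is, which is exactly what makes this a promise-preserving reduction.
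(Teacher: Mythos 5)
Your proposal is correct, and it is at heart the same coordinate-by-coordinate argument as the paper's: both proofs isolate the violating coordinate $i$, observe that the clamp $\sigma$ is monotone and non-expansive, and conclude that the gap in $f$ at that coordinate can only be at least as large as the gap in $f'$. The differences are organizational rather than conceptual — you parameterize everything by $\delta,u,v,c$ and run an integer case analysis, whereas the paper argues directly that $f'_i(x)$ cannot equal $x_i-1$ (resp.\ $x_i+1$) and then chains $f_i(x)\geq f'_i(x)>f'_i(y)\geq f_i(y)$; and your closing observation that the hypothesis ``$f$ is violation-free'' is never actually needed is consistent with the paper's proof, which likewise does not invoke it.
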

\begin{proof}
It is clear from the definition of $f'$ that $f'(x) = x$ if and only if $f(x) =
x$. So every fixed point of $f'$ is a fixed point of $f$.

Next we deal with violations of type \ref{DV1}. Suppose we have points 
$x, y \in G$ with $x \le y$ and $f'(x) \not \le f'(y)$. 
Let $i$ be a dimension in which $f'_i(x) > f'_i(y)$. 
Observe that we cannot have $f'_i(x) = x_i - 1$, since this would imply that
$f'_i(y) < f'_i(x) = x_i - 1 \le y_i - 1$, which contradicts the definition of
$f'$.  So we have $f'_i(x) \ge x_i$, and the definition of $f'$ then implies $f_i(x) \ge
f'_i(x)$, because each displacement in $f'$ was generated by clamping a
displacement in $f$. 
Applying the same argument symmetrically implies that we cannot have $f'_i(y) = y_i + 1$, 
since this would imply $f'_i(x) > f'_i(y) = y_i + 1 \ge x_i + 1$, which also contradicts the
definition of $f'$. Therefore we have $f_i(y) \le f'_i(y)$. 
Hence we have $f_i(x) \ge f'_i(x) > f'_i(y) \ge f_i(y)$, which implies that $x$
and $y$ are a violation of monotonicity in $f$ as claimed.

Finally, we deal with violations of type \ref{DV2}. Suppose 
that $x$ and $y$ are a pair of points satisfying 
$\Norm{f'(x) - f'(y)}_{\infty} > \Norm{x-y}_{\infty}$. 
Let $i$ be a dimension in which $|f'_i(x) - f'_i(y) | > \| x - y\|_\infty$, and
suppose without loss of generality that $x_i \le y_i$. 

\begin{itemize}
\item If $x_i < y_i$, then note that we cannot have $f'_i(x) = x_i + 1$,
because  we have $y_i - 1 \le f'_i(y) \le y_i + 1$, and so this would imply
that $|f'_i(x) - f'_i(y)| \le \| x - y \|_\infty$. Therefore we have $f'_i(x)
\le x_i$, and then the definition of $f'$ implies that $f_i(x) \le f'_i(x)$.
Applying the same argument symmetrically allows us to conclude that
$f'_i(y) \ge y_i$, and therefore $f_i(y) \ge f'_i(y)$. So we have $|f_i(y) -
f_i(x)| \ge | f'_i(y) - f'_i(x) | > \| x - y \|_\infty$, which implies that
$x$ and $y$ violate non-expansion in $f$.

\item If $x_i = y_i$, then note we have
$|f'_i(x) - f'_i(y) | > \| x - y\|_\infty \ge 1$, where the second inequality
holds because $x$ and $y$ are distinct. Since all displacements in $f'$ are at
most unit length by construction, the only way for this to be true is for
$f'_i(x)$ and $f'_i(y)$ to move in opposite directions, and for both to move
strictly away from each other. So let us assume without 
loss of generality that $f'_i(x) <
x_i$ and $f'_i(y) > y_i$. The definition of $f'$ then tells us that $f_i(x) \le
f'(x)$, and $f_i(y) \ge f'_i(y)$. So as before we have $|f_i(y) -
f_i(x)| \ge | f'_i(y) - f'_i(x) | > \| x - y \|_\infty$, which implies that
$x$ and $y$ violate non-expansion in $f$.
\end{itemize}

\end{proof}

\subsection{Verifiable Least Fixed Points}
\label{sec:verifyingLFPs}
 
Formally, order preservation is defined as follows.
\begin{definition}
	A function $f: L \rightarrow L$ defined over a complete lattice $(L, \preceq)$ is called
	\emph{order-preserving} w.r.t. $\preceq$ if, for all $x,y \in L$, we have that $x \preceq y$
	implies that $f(x) \preceq f(y)$.
\end{definition}
Tarski theorem is the following.
\begin{theorem}[Tarski~\cite{Tarski55}]
	\label{thm:tarski1}
	Let $(L, \preceq)$ be a complete lattice, and let $f: L \rightarrow L$ by a function that is 
	order-preserving w.r.t. $\preceq$. 
	The set of fixed points of $f$ form a complete lattice.
\end{theorem}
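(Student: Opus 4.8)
The plan is to run the classical argument in two stages: first establish the existence of a least (and, dually, a greatest) fixed point of any order-preserving self-map of a complete lattice, and then bootstrap this to produce suprema and infima of arbitrary families of fixed points, thereby exhibiting the fixed-point set as a complete lattice.

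First I would prove the \emph{least fixed point} claim. Let $A = \{x \in L : f(x) \preceq x\}$ be the set of pre-fixed points, and let $a = \inf A$, which exists because $L$ is complete. For every $x \in A$ we have $a \preceq x$, so order-preservation gives $f(a) \preceq f(x) \preceq x$; taking the infimum over $x \in A$ yields $f(a) \preceq a$, i.e.\ $a \in A$. Applying $f$ again, $f(f(a)) \preceq f(a)$, so $f(a) \in A$ and hence $a \preceq f(a)$. Combining the two inequalities gives $f(a) = a$, and since every fixed point lies in $A$, $a$ is the least fixed point. The dual argument applied to $B = \{x \in L : x \preceq f(x)\}$ with $b = \sup B$ produces the greatest fixed point.

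The main step is upgrading this to the full lattice structure. Let $P \subseteq L$ be the set of fixed points of $f$, ordered by the restriction of $\preceq$, and let $S \subseteq P$ be arbitrary. I construct the supremum of $S$ in $P$ as follows. Put $u = \sup S$ (taken in $L$) and let $\top = \sup L$; the interval $[u,\top] = \{x \in L : u \preceq x\}$ is again a complete lattice (suprema and infima are inherited from $L$, with boundary cases $u$ and $\top$). I claim $f$ restricts to an order-preserving self-map of $[u,\top]$: for $x \in [u,\top]$ and any $s \in S$ we have $x \succeq u \succeq s$, hence $f(x) \succeq f(s) = s$, and taking the supremum over $s \in S$ gives $f(x) \succeq u$, so $f(x) \in [u,\top]$. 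Applying the least-fixed-point result to $f|_{[u,\top]}$ yields a fixed point $a^{\ast} \in [u,\top]$ that is least among all fixed points of $f$ lying in $[u,\top]$. Now $a^{\ast}$ is an upper bound of $S$ (since $a^{\ast} \succeq u$), and conversely every fixed point of $f$ that is an upper bound of $S$ lies in $[u,\top]$, so $a^{\ast}$ is precisely the least upper bound of $S$ inside $P$. The dual construction, using $[\bot,\ell]$ with $\ell = \inf S$, $\bot = \inf L$, and the greatest-fixed-point result, gives the greatest lower bound of $S$ inside $P$. Since $S$ was arbitrary, $P$ is a complete lattice.

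The delicate point, which I expect to be the only real obstacle, is that the supremum of $S$ within $P$ is in general strictly above $\sup_L S$: one cannot simply take the ambient supremum and hope it is a fixed point. The device that makes the argument go through is confining $f$ to the sublattice $[u,\top]$ and invoking the least-fixed-point existence result \emph{there}; the only routine checks are that $[u,\top]$ is itself a complete lattice and that $f$ genuinely maps it into itself, both of which follow immediately from monotonicity and the definition of $u$.
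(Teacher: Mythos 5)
Your proof is correct, and it is the standard Knaster--Tarski argument: establish the least and greatest fixed points via the infimum of the pre-fixed points (and dually), then for an arbitrary family $S$ of fixed points restrict $f$ to the complete sublattice $[\sup_L S,\top]$ and take the least fixed point there to obtain $\sup_P S$ (and dually for $\inf_P S$). The paper does not reprove this theorem --- it is stated as a citation to Tarski~\cite{Tarski55} and used as a black box --- so there is no internal proof to compare against; your argument is a faithful reconstruction of the classical one, including the essential observation that $\sup_L S$ itself need not be fixed and one must instead relativize the least-fixed-point construction to the interval above it.
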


A complete lattice cannot be empty, so this theorem guarantees that at least one fixed point 
of $f$ exists.
It also guarantees that, w.r.t. $\preceq$, a least and a greatest fixed point exist, which may be
the same point.


In order to place the $\DMAC$ problem in $\UEOPL$, we need
to have some notion of a unique solution for the problem. While a monotone contraction has a single
fixed point, discretizing the problem into $\DMAC$ may result in multiple fixed points, each
representing an approximate fixed point of the original function. Therefore, we require a way to
distinguish one of these fixed points in order to designate it as the unique solution.

Initially, we will assume that we are given a $\DMAC$ instance defined by $f$ that
has no violations, and we will deal with violating instances at the very end of
the reduction. 
When there are no violations of monotonicity, Theorem~\ref{thm:tarski1} says that $f$ has a
least (and greatest) fixed point. 
We would like to declare the least fixed point to be our distinguished unique solution. 

However, prior work has shown that it is NP-hard to decide whether a given
fixed point of a monotone function is the least fixed point\footnote{Actually, their stated result
is that the least fixed point is NP-hard to compute, but their proof technique also applies to the
verification problem as well.}, even for one-dimensional instances~\cite{EPRY20}.
We show that this is not the case for $\DMAC$ instances: we can leverage
the non-expansion properties to build a polynomial-time algorithm that verifies
whether a given fixed point is the least fixed point. 
In this section we provide that algorithm. 

\paragraph{\bf Contiguity of fixed points.}
A key property that allows for a polynomial-time verification algorithm is contiguity: if a given
fixed point $x \in G$ is not the least fixed point, there must exist a lesser fixed point at 
distance at most $1$ from $x$.
%
%
To show the existence of these other fixed points, we will employ Tarski's theorem on suitable
sub-lattices of the grid domain $G$. 
We use the following notation to refer to sublattices of the grid, defined between two points $a$
and $b$, and we also introduce notation that is useful for working with dimension-wise displacements
of functions.

\begin{definition}[Sublattices and projections] For any two points
	$a,b \in G$ with $a \leq b$, let $$\sublattice(a,b) \coloneqq \Set{z\in G: a \leq z \leq b}.$$
	Sometimes we will write a sublattice $\sublattice(a,b)$ as just $L$, leaving its corners $a$ and $b$ implicit.
	For any $i \in [d]$, the projection along dimension in a sublattice is defined as
	$$
	\proj_i(\sublattice(a,b)) \coloneqq \{\, z_i \mid z \in \sublattice(a,b) \text{ and } a_i \leq z_i \leq b_i \,\},
	$$
	where $z_i$ denotes the $i$-th component of $z \in G$.
\end{definition}

With this notation in hand, the following immediate observation states a condition that a
function $f$ must possess in order for $f$ to map the sublattice to itself, which we will need
to be able to apply Tarski's theorem on the sublattice.
\begin{observation}[Closedness of Tarski function on a sublattice] 
	\label{obs:sublat-verif}
	Suppose $\sublattice(a,b) \subseteq G$ is a sublattice. 
	Then, if for every point $z\in \sublattice(a,b)$, we have $f_i(z) \in \proj_i(\sublattice)$, 
	then for all $z$ we have $f(z) \in \sublattice(a,b)$. 
\end{observation}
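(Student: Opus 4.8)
The plan is a short definition chase, so the only real content is to pin down $\proj_i(\sublattice(a,b))$ explicitly. I would first claim that $\proj_i(\sublattice(a,b))$ equals the contiguous set of grid values $\{a_i, a_i+1, \dots, b_i\}$. One inclusion is immediate, since every $z \in \sublattice(a,b)$ satisfies $a_i \le z_i \le b_i$ by definition of the sublattice. For the reverse inclusion, given any integer $v$ with $a_i \le v \le b_i$, let $w$ be the point obtained from $a$ by replacing its $i$-th coordinate with $v$. Then $a \le w \le b$ (in coordinate $i$ because $a_i \le v \le b_i$, and in every other coordinate because it equals $a_j$ and $a_j \le b_j$), so $w \in \sublattice(a,b)$ with $w_i = v$, witnessing $v \in \proj_i(\sublattice(a,b))$.

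With this characterization in hand I would finish as follows. Fix any $z \in \sublattice(a,b)$. By hypothesis, for every dimension $i \in [d]$ we have $f_i(z) \in \proj_i(\sublattice(a,b))$, which by the previous paragraph means $a_i \le f_i(z) \le b_i$. Since this holds simultaneously for all $i$, we obtain the vector inequality $a \le f(z) \le b$; moreover each coordinate $f_i(z)$ lies in $\proj_i(\sublattice(a,b)) \subseteq \{1,\dots,n\}$, so $f(z) \in G$. Hence $f(z) \in \sublattice(a,b)$ by the definition of the sublattice, which is exactly the conclusion.

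There is essentially no obstacle here — this really is an immediate observation once the definitions are unpacked. The only points worth stating carefully are the (implicit) universal quantification over $i$ in the hypothesis, and the fact that $\proj_i(\sublattice(a,b))$ is genuinely the whole of $[a_i,b_i] \cap \Integer$ rather than some proper subset; it is precisely this that lets the per-coordinate memberships be reassembled into the single inequality $a \le f(z) \le b$ that places $f(z)$ back in the sublattice.
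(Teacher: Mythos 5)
Your proof is correct and is exactly the definition-unpacking argument the authors intend (the paper states this as an observation with no proof, treating it as immediate). Characterizing $\proj_i(\sublattice(a,b))$ as $\{a_i,\dots,b_i\}$ and reassembling the per-coordinate memberships into $a \le f(z) \le b$ is the only real content, and you handle the implicit universal quantification over $i$ correctly.
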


\regionsNonExpansion
\regions

To show the property about lesser fixed points, we need one more bit of notation:
For points in $G$, we define $\ell_\infty$ boxes as follows.
\begin{definition}
For any point $x \in G$ and $r \in \mathbb{N}$, we define the region centered at 
$x$ of $\ell_\infty$-radius $r$ as:
$$B(x,r) := \{ y \in G: \Norm{x-y}_{\infty} \leq r\}.$$
\end{definition}

We next prove the following lemma, which states that between any
two fixed points~$x$ and~$y$, with $y \leq x$ that are at distance more than $1$ from one
another, there is another fixed point that is strictly between $y$ and $x$.
Our overview of the proof technique will use Figures~\ref{fig:regions_non_expansion} 
and~\ref{fig:regions}
We draw an $\ell_\infty$ box around each of the points $x$ and $y$, and intersect these two boxes
also with the sublattice $L(y,x)$.
The resulting region is itself a sublattice.
 
Because our \DMAC instances have displacements with length at most 1, points in the interior
of this region stay within this region when mapped by the Tarski function.
We argue that every point on the boundary of this region points weakly inwards so these points
stay in the region too.
The proof deals with two cases.
 
Points on the boundaries of one of the two $\ell_\infty$ boxes (shown in red in
Figures~\ref{fig:regions_non_expansion} and~\ref{fig:regions}) 
must weakly point inward because otherwise we would violate the non-expansion property with
either $x$ or $y$.
This argument works when the \linf distance between the boundary points and one of 
the points $x$ and $y$ is only one less than the \linf distance between $x$ and $y$ themselves.
For Figure~\ref{fig:regions_non_expansion}, this is true for all boundary points; for
Figure~\ref{fig:regions}, this is true only for the left and right boundary points.

Boundary points for which this is not true are those that are also on the 
boundary the of the sublattice between $y$ and $x$ (shown in blue in Figure~\ref{fig:regions}).
For these points the argument that they point weakly inwards is based on monotonicity.
In Figure~\ref{fig:regions}, points that are directly to the right of $y$ must point weakly
upwards, while the points that are directly to the left of $x$ must point weakly downward, because
otherwise we would violate monotonicity with the two fixed points $y$ and $x$, respectively. 

These arguments show that $f$ maps this sublattice region to itself.
Furthermore, since $f$ is monotone everywhere, it is also monotone
within this sublattice, so we can apply Tarksi's theorem to prove that there is a
fixed point in this region.
We now prove this formally.

\begin{lemma}
\label{lem:fixed-point-between}
Suppose $f$ is a violation-free $\DMAC$ instance.
%
Let $x$ and $y$ be fixed points of $f$ with $y \leq x$ and $\Norm{x-y}_{\infty} > 1$. 
There exists a fixed point $z\in G$ of $f$ in $\sublattice(y,x)$ with
$\Norm{z-y}_{\infty} < \Norm{x-y}_{\infty}$ and $\Norm{x-z}_{\infty} < \Norm{x-y}_{\infty}$.
\end{lemma}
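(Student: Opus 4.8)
The plan is to construct an explicit sublattice on which $f$ restricts to itself, and then apply Tarski's theorem. Define the region
$$R := B(y, \Normi{x-y} - 1) \cap B(x, \Normi{x-y} - 1) \cap \sublattice(y,x).$$
Write $k := \Normi{x-y}$, so $k \ge 2$. First I would check that $R$ is a nonempty sublattice of $G$: it is an intersection of sublattices (each $\ell_\infty$-ball is a box, hence a sublattice, and $\sublattice(y,x)$ is a sublattice), and it is nonempty because, for instance, the point obtained by moving from $y$ one step toward $x$ in each coordinate where they differ lies in all three sets (this is where $k \ge 2$ is used — with $k=1$ the two balls of radius $0$ would be singletons $\{y\}$ and $\{x\}$ and the intersection empty). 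Let $a$ and $b$ denote the corners of $R$, so $R = \sublattice(a,b)$ with $y \le a \le b \le x$.

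The core of the argument is to show $f(R) \subseteq R$, for which by Observation~\ref{obs:sublat-verif} it suffices to show $f_i(z) \in \proj_i(\sublattice(a,b))$ for every $z \in R$ and every dimension $i$. Fix such $z$ and $i$. Since $f$ has unit displacements, $|f_i(z) - z_i| \le 1$, so $f_i(z)$ can only leave $\proj_i(R)$ if $z_i$ is already at the boundary value $a_i$ or $b_i$ of $R$ in coordinate $i$ and $f$ pushes it further out. I would split into cases according to which of the three constraints defining $R$ is "tight" at $z$ in coordinate $i$:
\begin{itemize}
\item \emph{$z$ is on the boundary of $B(y, k-1)$ in coordinate $i$ in the outward direction}, i.e. $z_i = y_i + (k-1)$. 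Suppose for contradiction $f_i(z) \ge z_i + 1 = y_i + k$. Then $|f_i(z) - f_i(y)| = |f_i(z) - y_i| \ge k$ (using $f(y) = y$), while $\Normi{z - y} \le k-1 < k$, contradicting non-expansion of $f$. The symmetric sub-case $z_i = x_i - (k-1)$ with $f_i(z) \le z_i - 1$ gives a non-expansion violation with $x$. So the $B(y,k-1)$ and $B(x,k-1)$ constraints never force $f$ out of $R$.
\item \emph{$z$ is on the boundary of $\sublattice(y,x)$ in coordinate $i$}, i.e. $z_i = y_i$ (but $z_i$ is \emph{not} simultaneously on the outward $B(y,k-1)$ boundary, which is automatic since $k \ge 2$). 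Then we must rule out $f_i(z) < y_i$. Since $z \ge y$ and $z_i = y_i$, monotonicity gives $f_i(z) \ge f_i(y) = y_i$; so $f$ cannot push $z$ below $y_i$. Symmetrically, if $z_i = x_i$, monotonicity with $z \le x$ gives $f_i(z) \le f_i(x) = x_i$.
\end{itemize}
Combining these, $f_i(z)$ stays within the interval $[\max(a_i), \min(b_i)]$ determined by the three constraints — more carefully, at each boundary face of $R$ the relevant constraint forces $f$ to point weakly inward, and at interior coordinates the unit-displacement bound keeps $f_i(z)$ inside. Hence $f_i(z) \in \proj_i(R)$ for all $i$, and Observation~\ref{obs:sublat-verif} yields $f(R) \subseteq R$.

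Finally, since $f$ is order-preserving on all of $G$ it is in particular order-preserving on the complete lattice $R = \sublattice(a,b)$, so Tarski's theorem (Theorem~\ref{thm:tarski1}) gives a fixed point $z \in R$. By construction $z \in B(y,k-1)$ means $\Normi{z - y} \le k - 1 < k = \Normi{x-y}$, and $z \in B(x,k-1)$ means $\Normi{x - z} \le k-1 < k = \Normi{x-y}$, so $z$ is the desired fixed point strictly between $y$ and $x$. The main obstacle is the boundary analysis for $f(R) \subseteq R$: one has to be careful that a coordinate $z_i$ can be simultaneously on the boundary of two of the three defining sets, and argue that in every such combination at least one of the non-expansion or monotonicity arguments applies in the required direction; the figures (Figure~\ref{fig:regions} and Figure~\ref{fig:regions_non_expansion}) indicate that the "non-expansion handles the red faces, monotonicity handles the blue faces, and a face that is both is handled by either" bookkeeping works out, but writing it cleanly requires tracking which face is outward-pointing relative to which center.
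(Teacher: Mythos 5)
Your proposal is correct and follows essentially the same argument as the paper's: the same region $R = B(y,k-1)\cap B(x,k-1)\cap L(y,x)$, the same appeal to Observation~\ref{obs:sublat-verif}, the same split between non-expansion (to handle the ball boundaries, with witnesses $y$ and $x$) and monotonicity (to handle the sublattice boundaries), and then Tarski's theorem on $R$. Your explicit nonemptiness witness and your observation that the $\sublattice(y,x)$ constraint and the ball constraints point in opposite directions per coordinate (so the two cases never conflict) are small extra clarifications the paper leaves implicit, but the route is the same.
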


\begin{proof}
Let $k \coloneqq \Norm{x-y}_{\infty}$. 
We consider the region \[R \coloneqq B(y,k-1) \cap B(x,k-1) \cap \sublattice(y,x).\] 
Because $\sublattice(y,x) \subseteq G$, we have $R \subseteq G$.
By construction, $R$ is non-empty and contains neither $x$ nor $y$. 
We will show that there exists a fixed point of $f$ in $R$, proving the lemma. 

To do this, we first argue that for each $z \in R$ we have $f(z) \in R$. 
Consider any point $z \in R$. 
For each dimension $i\in [d]$, if $z_i$ is in the interior of $\proj_i(R)$, then $f_i(z)$ is
in $\proj_i(R)$ as well as by assumption it does not move by more than 1. 
If $z_i$ is not in the interior then there are two (potentially overlapping) cases to consider: 
Either $z_i$ is on the boundary of $\proj_i(\sublattice(y,x))$ or 
$z_i$ is on the boundary of $\proj_i(B(y,k-1)\cap B(x,k-1))$. 
Figure~\ref{fig:regions} illustrates the two cases and the region in question.


Suppose that $z_i$ is on the boundary of $\proj_i(B(y,k-1)\cap B(x,k-1))$. 
Then either $z_i - y_i = k - 1$ or $x_i - z_i = k - 1$. 
Suppose towards a contradiction that $f_i(z) < z_i$ and $x_i - z_i = k - 1$. 
Then 
\begin{align*}
	\Norm{f(x) - f(z)}_{\infty} = \Norm{x - f(z)}_{\infty}
	= k > k - 1 = \Norm{x-z},
\end{align*}       
which contradicts non-expansion of $f$.
The same contradiction can be reached if we start from $f_i(z) > z_i$ and $z_i - y_i = k - 1$.
We conclude that $f_i(z) \in \proj_i(R)$. 


Suppose $z_i$ is on the boundary of $\proj_i(\sublattice(y,x))$. 
Then we have $y_i \leq z_i \leq x_i$ with at least one equality. 
Suppose that $y_i = z_i \leq x_i$. 
Then $f_i(z) \geq y_i$ by monotonicity. 
If instead $y_i \leq z_i = x_i$, then $f_i(z) \leq x_i$ by monotonicity. 
%
We conclude that $f_i(z) \in \proj_i(R)$.

We have thus shown that for all points $z$ in the sublattice $R$ and all dimensions
$i \in [d]$, we have $f_i(z) \in \proj_i(R)$. 
So, by Observation~\ref{obs:sublat-verif}, we have that $f$ maps $R$ to itself.
Now, since $f$ is a monotone function over $R$, and since $f$ maps $R$
to itself, we can apply Theorem~\ref{thm:tarski1} 
to conclude that $f$ has a fixed point in $R$, which completes the proof.
\end{proof}

With Lemma~\ref{lem:fixed-point-between} in hand, it is now straightforward to prove our desired property, 
namely that if we find a fixed point $x \in G$ that is not the least fixed
point, then there is a fixed point $y \le x$ such that $\| x - y\|_\infty = 1$, i.e., in 
the grid cell with $x$ as its top right corner.
We denote the vector of all ones by $\One$, where the dimension will be clear from the context.

\begin{corollary} 
	\label{cor:neighboring-lesser-fp}
	Suppose $f$ is a violation-free $\DMAC$ instance.
	If $x\in G$ is a fixed point of $f$ other than the least fixed point of $f$, there exists a
	fixed point other than $x$ in $\sublattice(x-\One,x)$. 
\end{corollary}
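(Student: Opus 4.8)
The plan is to derive Corollary~\ref{cor:neighboring-lesser-fp} directly from Lemma~\ref{lem:fixed-point-between} by a short induction (or, equivalently, a minimal-counterexample argument) on the quantity $k \coloneqq \Norm{x-y}_\infty$, where $y$ is the least fixed point of $f$. Since $x$ is a fixed point other than the least fixed point $y$, Tarski's theorem (Theorem~\ref{thm:tarski1}) guarantees that $y \leq x$, and $x \neq y$ gives $k \geq 1$. If $k = 1$ we are already done, because then $y \in \sublattice(x-\One,x)$ and $y$ is a fixed point other than $x$. So assume $k \geq 2$, i.e.\ $\Norm{x-y}_\infty > 1$.

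First I would apply Lemma~\ref{lem:fixed-point-between} to the pair $y \leq x$: since $f$ is violation-free and $\Norm{x-y}_\infty > 1$, there is a fixed point $z \in \sublattice(y,x)$ with $\Norm{z-y}_\infty < k$ and $\Norm{x-z}_\infty < k$. The key point is that $z$ is a fixed point with $z \leq x$ and $\Norm{x-z}_\infty < \Norm{x-y}_\infty$; moreover $z \neq x$ (because $\Norm{x-z}_\infty < k$ while, were $z = x$, that norm would equal $k \geq 2$). Now either $\Norm{x-z}_\infty = 1$, in which case $z \in \sublattice(x-\One,x)$ is the desired fixed point, or $\Norm{x-z}_\infty > 1$ and $z$ is still not the least fixed point (or even if it is, it is a fixed point below $x$ at strictly smaller distance), so we recurse on the pair $z \leq x$. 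Since $\Norm{x-z}_\infty$ strictly decreases at each step and is a nonnegative integer, the process terminates, and it can only terminate at a fixed point $z'$ with $\Norm{x - z'}_\infty = 1$, i.e.\ $z' \in \sublattice(x-\One,x)$ and $z' \neq x$. Formally this is cleanest phrased as: among all fixed points $w$ of $f$ with $w \leq x$ and $w \neq x$, pick one minimizing $\Norm{x - w}_\infty$; if that minimum were $> 1$, Lemma~\ref{lem:fixed-point-between} applied to $w \leq x$ would produce a fixed point strictly closer to $x$ (and still $\leq x$, still $\neq x$), contradicting minimality; hence the minimum is exactly $1$.

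I do not anticipate a genuine obstacle here — the real content is entirely in Lemma~\ref{lem:fixed-point-between}, and this corollary is a routine extraction. The only point requiring a little care is making sure that the fixed point $z$ returned by the lemma is distinct from $x$ and strictly closer to $x$ than $y$ was (both are immediate from the strict inequality $\Norm{x-z}_\infty < \Norm{x-y}_\infty$), and that we are entitled to reapply the lemma to $(z, x)$ rather than to $(y, x)$ — this is fine because the hypotheses of Lemma~\ref{lem:fixed-point-between} only require that both endpoints be fixed points of $f$ with $z \leq x$, which holds, together with $f$ being violation-free, which is assumed throughout. So the proof is a two- or three-line termination/minimality argument on top of the lemma.
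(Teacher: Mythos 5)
Your proof is correct and uses essentially the same minimal-counterexample argument as the paper: assume the nearest lesser fixed point is at distance $> 1$, apply Lemma~\ref{lem:fixed-point-between} to obtain a strictly closer one, contradiction. One small slip worth fixing: you justify $z \neq x$ by saying ``were $z = x$, that norm $\Norm{x-z}_\infty$ would equal $k \geq 2$,'' but if $z = x$ then $\Norm{x-z}_\infty = 0$, not $k$; the correct reason $z \neq x$ is the \emph{other} conclusion of the lemma, $\Norm{z-y}_\infty < k = \Norm{x-y}_\infty$, which fails if $z = x$.
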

\begin{proof} 
Let $x \in G$ be a fixed point of $f$ that is not the least fixed point.
Suppose for the sake of contradiction that $y \in G$ is the nearest lesser fixed point to $x$ 
and $\Norm{y-x}_{\infty} > 1$. 
Then we can apply Lemma~\ref{lem:fixed-point-between} to conclude that there exists a fixed
point $z \in G$ with $y \leq z \leq x$ and $\Norm{z-x}_{\infty} < \Norm{y-x}_{\infty}$, which is a
contradiction to $y$ being the nearest lesser fixed point to $x$.
\end{proof}

\paragraph{\bf The verification algorithm.}



We have so far proved that if $x$ is not a least fixed point, then there is
another fixed point $y$ that is in the $d$-dimensional cube below $x$ at
distance $1$. We now provide an algorithm that, in a violation-free
instance, finds $y$ if it exists. Therefore if the algorithm does not find a
lesser fixed point, it certifies that $x$ is the least fixed point. 

\verification

Let us start by considering the point $y = x - \One$, which is shown as $v^0$ in 
Figure~\ref{fig:verification sequence}.
\begin{itemize}
	\item If $y$ is a fixed point, then $y$ is a fixed point that is less than $x$,
	which proves that $x$ is not the least fixed point. So we can terminate.

	\item Otherwise, observe that we cannot have $f_i(y) < y_i$ in any dimension
	$i$, since this would be a violation of non-expansion with $x$. Since $y$
	is not fixed, it must therefore be the case that $f_i(y) > y_i$ for at least
	one dimension $i$.  In Figure~\ref{fig:verification sequence}, this is represented
	by the edge from $v^0$ to $v^1$.

	\item By monotonicity, any point $y' \ge y$ such that $y'_i = y_i$ must also
	satisfy $f_i(y') > y'_i$ and so $y'$ is also not a fixed point. 
	This allows us to rule out an entire face of the cube that lies beneath $x$,
	since no vertex of that face can contain a fixed point.

	\item Using this idea, we can walk a sequence of $d$ steps in the cube beneath $x$.
	In Figure~\ref{fig:verification sequence}, $v^1$ is the point we reach after the 
	first step of this sequence.
	When we reach a new point in the sequence, either it is fixed, and we have shown
	that $x$ is not the least fixed point, or we rule out a face, move to the new point
	and repeat. 
	If we reach $x$ then we have proven that it is indeed the least fixed point.
\end{itemize}

To recap, in Figure~\ref{fig:verification sequence}, we start at $v^0=y$, and the fact that $f(v^0)$ points
right allows us to rule out the left face of the cube. 
The directions for $f$ implied by monotonicity are shown in gray pointing away from that left face. 
We then move to $v^1$, which is the least vertex of the remaining cube.
Since $f(v^1)$ points up, we rule out the bottom face (of the right face), indicated by the black
arrow and implied gray arrow that point from bottom to top.
We are now at $v^2$, and the arrow from $v^2$ to $v^3=x$ complete the verification of $x$ as the least 
fixed point.

We now prove formally that this approach works. 
\begin{lemma} Suppose $f$ is a violation-free $\DMAC$ instance. Given a fixed
	point of $f$, $x\in G$, we can verify in polynomial time whether $x$ is the
	least fixed point. \label{lem:lfp verification in polytime}
\end{lemma}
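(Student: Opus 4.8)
The plan is to formalize the verification-sequence argument sketched above, showing that the walk of at most $d$ steps in the cube $\sublattice(x-\One,x)$ either exhibits a lesser fixed point or certifies that $x$ is least. First I would define the sequence $v^0, v^1, \dots$ inductively: set $v^0 := x - \One$, and having reached $v^j$, query $f$ at $v^j$; if $f(v^j) = v^j$ we stop and return $v^j$ as a lesser fixed point (it lies in $\sublattice(x-\One,x)$ and differs from $x$). Otherwise, I would argue that there is a dimension $i$ with $f_i(v^j) > v^j_i$: indeed $v^j \le x$ and $\Norm{x - v^j}_\infty \le 1$, so if $f_i(v^j) < v^j_i$ for some $i$ then $\Norm{f(x) - f(v^j)}_\infty = \Norm{x - f(v^j)}_\infty \ge x_i - f_i(v^j) > x_i - v^j_i$ would contradict non-expansion (exactly as in Corollary~\ref{cor:neighboring-lesser-fp}'s proof). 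Since $v^j$ is not fixed, some coordinate must strictly increase. I then set $v^{j+1} := v^j + e_i$ for one such $i$, and record the dimension $i$ as ``used''.

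The key invariant to maintain is that at step $j$, the coordinates that have already been ``used'' are exactly those dimensions $i$ with $v^j_i = x_i$, and moreover for every point $w$ with $v^j \le w \le x$ we have $f(w) \ne w$ unless $w = x$ is not yet ruled out. More precisely, I would prove by induction: if $i$ was used at some earlier step when we were at a point $v^{j'}$ with $v^{j'}_i = x_i - 1$, then monotonicity gives $f_i(w) > w_i$ for every $w \ge v^{j'}$ with $w_i = x_i - 1$; in particular every such point in the cube beneath $x$ is non-fixed. Consequently the only candidate fixed points in $\sublattice(x-\One,x)$ besides $x$ itself are those $w$ with $w_i = x_i$ for all used $i$, and this set shrinks by one dimension each step. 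After at most $d$ steps all dimensions are used, so the only remaining candidate is $x$ — meaning no lesser fixed point exists in the cube, and by Corollary~\ref{cor:neighboring-lesser-fp} (contrapositive) $x$ is the least fixed point of $f$.

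For the polynomial-time claim: the algorithm makes at most $d+1$ queries to $f$ (one per point $v^0, \dots, v^d$), each query runs in polynomial time since $f$ is given by a polynomial-time Turing machine, and the bookkeeping between queries (choosing a coordinate $i$ with $f_i(v^j) > v^j_i$, forming $v^{j+1}$) is trivially polynomial. So the whole procedure runs in polynomial time. If it ever finds a fixed $v^j$, then $v^j \ne x$ is a witness that $x$ is not least; if it runs to completion without finding one, I invoke Corollary~\ref{cor:neighboring-lesser-fp}: were $x$ not the least fixed point, there would be a fixed point other than $x$ in $\sublattice(x-\One,x)$, but the invariant shows every point of that cube other than $x$ has been certified non-fixed, a contradiction. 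Hence $x$ is the least fixed point.

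The main obstacle I anticipate is making the inductive invariant airtight — specifically, verifying that when we are at $v^j$ and some dimension $i'$ has $v^j_{i'} = x_{i'}$ already (because it was used earlier), we genuinely never need to revisit it, i.e. the ``upward-pointing'' witness $f_{i'}(w) > w_{i'}$ propagated by monotonicity still holds at every later point of the walk. This requires observing that the walk only ever increases coordinates, so every later $v^{j''} \ge v^{j'}$ and shares the coordinate value $v^{j''}_{i'} = x_{i'} - 1$ at the moment of use; wait — actually after use, $v^{j'+1}_{i'} = x_{i'}$, so the relevant monotonicity fact is applied to the face $\{w : v^{j'}\le w\le x,\ w_{i'} = x_{i'}-1\}$, which is disjoint from all subsequent $v^{j''}$. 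The care needed is purely in stating that this ruled-out face, together with the faces ruled out at other steps, covers all of $\sublattice(x-\One,x)\setminus\{x\}$, which is a routine but slightly fiddly counting argument over the $2^d$ vertices of the cube. Everything else follows directly from non-expansion, monotonicity, and Corollary~\ref{cor:neighboring-lesser-fp}.
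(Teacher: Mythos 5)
Your proposal is essentially the same argument the paper gives: walk from $x-\One$ toward $x$ one coordinate at a time, using monotonicity to rule out an entire face of $\sublattice(x-\One,x)$ at each step, and invoking Corollary~\ref{cor:neighboring-lesser-fp} to localize the search to that cube. The only thing to tighten is the ``Otherwise'' step: your non-expansion argument rules out $f_i(v^j) < v^j_i$ only for \emph{unused} coordinates (where $x_i - v^j_i = 1$), while for used coordinates you need the monotonicity propagation $f_i(v^j) \ge f_i(v^{j'}) > v^{j'}_i$ that you already invoke later — so the witness dimension $i$ with $f_i(v^j) > v^j_i$ must be, and automatically is, an unused one; once that is made explicit the proof matches the paper's.
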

\begin{proof}
	By Corollary~\ref{cor:neighboring-lesser-fp}, to check whether $x$ is the least fixed
	point we need only to verify whether there is a lesser fixed point in $\sublattice(x-\One,
	x)$. Let $y \coloneqq x-\One$. We will use Algorithm~\ref{alg:find lesser fixed point}.

	\begin{algorithm} \caption{Algorithm to find a lesser fixed point.} 
	\label{alg:find lesser fixed point}
		\SetKwInOut{Input}{Input}
		\SetKwInOut{Output}{Output}
		
		\underline{procedure $\FindLesserFixpoint(f, y, x)$}\;
		\Input{$f,y,x$ satisfying $f(x) = x$,\quad $y \leq x$,\quad $\Norm{y-x}_{\infty} = 1$.}
		\Output{A fixed point $z < x$ in $\sublattice(y,x)$ if one exists, or $x$ otherwise.}
		\eIf{there exists an $i$ such that $x_i > y_i$ and $f_i(y) > y_i$ \label{line:test}}
		{
			return $\FindLesserFixpoint(f, y+ e_i, x)$
		}
		{
			return $y$
		}
	\end{algorithm}
	
	This algorithm runs in time that is polynomial in the representation of $f$.
	To see this, we first note that the representation of $f$ must be of size at least $d$, 
	which is the number of dimensions, because the input to $f$ is of size $d \cdot log n$.
	The algorithm makes at most $d$ queries to $f$ in total, noting that each query gives
	back the direction of all $d$ edges at the current point in the verification sequence
	that we are building.
	We only use polynomial time to process such a point, to find one of the relevant
	edges that points away from the point if it exists, or to declare this point as the lesser
	fixed point that disproves that $x$ is the least fixed point.

	To see that the algorithm is correct, observe that if the condition on line~\ref{line:test} is
	true, then every point $z_i$ on $\sublattice(y,x)$ with $z_i = y_i$ must have $f_i(z) > z_i$ by
	monotonicity, so none of those points can be fixed. Thus, any lesser fixed point must be
	contained on the sublattice $\sublattice(y+e_i, x)$.
\end{proof}

\paragraph{\bf Verification sequences.}

We will say that a fixed point $x$ of $f$ is a \emph{verified} least fixed point (LFP) if our algorithm
from Lemma~\ref{lem:lfp verification in polytime} did not find a lesser fixed point in the cube
below $x$. 
We should note that this only implies that $x$ is the LFP if the instance has no violations. 
In an instance with violations, there may be verified LFPs that are not actually LFPs. 
We will deal with this point when we deal with the violation cases in our containment result.

As can be seen in Figure~\ref{fig:verification sequence}, the algorithm
generates a sequence of points while verifying $x$. We refer to this sequence
as $x$'s \emph{verification sequence}, which we define as follows.

\begin{definition}[Verification sequence] Given a verified LFP $x \in G$, let $\VSeq(x)$ we define the
	sequence of points encountered in the recursive calls to $\FindLesserFixpoint$ when
	verifying~$x$ against the entire domain, i.e., where all coordinates of $x$ and the initial $y$ 
	differ, as:
	$$y = \VSeq(x)_0 < \VSeq(x)_1 < \dotsb < \VSeq(x)_d = x.$$
\end{definition}

In our reduction from \DMAC to \OPDC in the next section, we will run $\FindLesserFixpoint(f,y,x)$ 
in a slice rather than the entire domain, i.e., with a number of coordinates of $y$ and $x$ being
identical. As a result, the verification sequence we get will have fewer than $d$ points.

\subsection{Reducing DMAC to OPDC}
\label{sec:DMACtoOPDC}

We will show $\UEOPL$ containment by giving a polynomial-time reduction to the
\emph{One Permutation Discrete Contraction} problem (\OPDC), which is known to
be $\UEOPL$-complete~\cite{FGMS20}. 
%
The containment result for $\DMAC$ in turn implies that $\MonotoneContraction$ is also in $\UEOPL$.

\OPDC which we define formally below, like \DMAC, uses as discrete grid as its domain.
Our reduction from \DMAC to \OPDC will preserve the domain.
The definition of \OPDC uses the concept of $i$-slices, which we describe next.
The same concept also applies to \DMAC instances, and after we have defined \OPDC, we will
revisit \DMAC and prove some properties about least fixed points and $i$-slices, before 
we present our reduction from \DMAC to \OPDC.

\subsubsection{One Permutation Discrete Contraction}
\label{sec:opdcdef}

A \emph{slice} refers to a subset of the grid domain where some of the dimensions are fixed. 
A slice $s$ is obtained by fixing some coordinates, while letting the rest vary. 
%
We denote by $\Slice_d = ([0,1] \cup \{*\})^d$ the set of all possible slices in dimension $d$.
For a slice $s \in \Slice_d$, every point $p$ within $s$ must obey $p_i = s_i$ whenever 
$s_i \neq *$.
A slice $s$ is an \emph{$i$-slice} if the first $i$-coordinates are not fixed, i.e., 
$s_1 = s_2 = \dotsb = s_i = \star$, 
and the remaining coordinates are fixed, i.e., $s_j \ne \star$ for $j > i$. 
Given a point $x \in G$, we write $\slice_i(x)$ for the unique $i$-slice containing $x$, so
$\slice_i(x) = (\star,\dotsc,\star,x_{i+1},\dotsc,x_d)$. 
We use $\free(s) \subseteq [d]$ to denote the coordinates of $s$ set to $\star$ and $\fixed(s)$
to denote the coordinates that are not set to $\star$. 
Given a slice $s$, we say that a slice $s'$ is a \emph{subslice} of $s$, denoted by $s'
\subseteq s$ if $s'_j = s_j$ for each $j \in \fixed(s)$, and $\free(s') \subseteq \free(s)$.

We are now ready to define the \OPDC problem.

\begin{definition}[$\OnePermutationDiscreteContraction$]\label{def:OPDC} 
	Given a family of direction functions $D = (D_i)_{i=1,\dotsc,d}$ over a domain
	$[k_1]\times [k_2]\times \dotsm \times [k_d]$, output one of the following:
	\begin{enumerate}[label=(O\arabic*)]
		\item A point $x$ such that $D_i(x) = \zero$ for all $i\in [d]$. \label{O1} 
	\end{enumerate}
	\begin{enumerate}[label=(OV\arabic*)]
		\item An $i$-slice $s$ and two distinct points $x,y$ on $s$ such that $D_j(x) = D_j(y) = \zero$ for all $j\leq i$. \label{OV1}
		\item An $i$-slice $s$ and two points $x,y$ on $s$ such that $x_i = y_i + 1$, $D_j(x) = D_j(y) = \zero$ for all $j < i$ and $D_i(x) = \down$, $D_i(y) = \up$. \label{OV2}
		\item An $i$-slice $s$ and a point $x$ on $s$ such that $D_j(x) = \zero$ for $j < i$ and either 
		\begin{itemize}
			\item $x_i = 0$ and $D_i(x) = \down$, or
			\item $x_i = k_i$ and $D_i(x) = \up$.
		\end{itemize} \label{OV3}
	\end{enumerate}
\end{definition}

The high-level idea of the problem is that it encodes an instance in which
every $i$-slice has a unique fixed point. Moreover, given a fixed point $x$ of an
$i$-slice, the function $D_i(x)$ should tell us the direction to move in
dimension $i$ to reach the unique fixed point of the ($i+1$)-slice that contains
$x$. 

Since every $i$-slice should have a unique fixed point,  this means that the
$d$-slice $(\star, \star, \dots, \star)$ has a unique fixed point. Solutions of
type \ref{O1} ask us to find this point. 

The violation solutions encode ways in which our promise might fail to be
satisfied. 
Violations of type \ref{OV1} encode a situation in which a single $i$-slice has
two different fixed points, meaning that the fixed point is not unique. 
Violations of type \ref{OV2} ask us to find two
points in an $i$-slice that are adjacent in dimension $i$ that point toward
each other in dimension $i$. Since both of these points are required to point
toward the fixed point of the $(i+1)$-slice, that fixed point should lie between
the two points, 
so the violation captures the scenario where that fixed point is
missing. Violations of type \ref{OV3} ask us to find $i$-slice fixed points
whose directions point outside the instance, meaning that the direction
function fails to point towards the $(i+1)$-slice's fixed point.

\subsubsection{Useful properties of DMAC $i$-slice least fixed points}
\label{sec:dmacs-slices-lfps}

Before we present the reduction, we first show some useful properties of least
fixed points in violation-free $\DMAC$ instances.

Given a $\DMAC$ instance $f$ and an $i$-slice $s$, we say that a point $x$ is an \emph{$i$-fixed
point} if $f_j(x) = x_j$ for all $j \le i$, or in other words, if $x$ is a fixed point when we
disregard all fixed dimensions of the $i$-slice.

The first observation is that if we have an $i$-fixed point $x$ of an $i$-slice, and we consider
any $j \ge i+1$, then both of the neighbouring $i$-slices, where for we increase or decrease
the value of the fixed coordinate $x_j$ by 1, have $i$-fixed points that are within \linf 
distance $1$ of $x$.
In the following lemmas, given $j$, we will denote these neighbouring slices of an $i$-slice $s$ by 
by $s' = s \pm e_{j}$, and similarly as $s \pm k\cdot e_{j}$ for integer $k$ with $|k| > 1$ 
for neighbours that are further away.
This lemma follows from Tarski's theorem, Theorem~\ref{thm:tarski1}.

\begin{observation} 
	\label{obs:neighboring-slice-fp-nearby}
	Suppose that $f$ is a violation-free $\DMAC$ instance. 
	Let $i, j \in [d]$ with $j \ge i+1$.
	If $x$ is a fixed point of $s = \slice_i(x)$, then for both of neighbouring slices 
	$s' = s \pm e_{j}$, there exists an $i$-fixed point $y$ of the neighbouring slice such that  
	$\Norm{y-x}_{\infty} = 1$. 
\end{observation}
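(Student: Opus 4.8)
The plan is to prove the statement for the neighbour $s' = s + e_j$; the case $s' = s - e_j$ follows by flipping dimension $j$. So fix $s' = s + e_j$ and consider the box
\[
R \coloneqq B(x,1) \cap s'.
\]
Since $s = \slice_i(x)$ fixes coordinate $j > i$ to $x_j$, the slice $s'$ fixes it to $x_j + 1$ while leaving coordinates $1,\dots,i$ free; hence every $z \in R$ has $z_j = x_j + 1$, $z_k = x_k$ for $k > i$ with $k \neq j$, and $|z_k - x_k| \le 1$ for $k \le i$, so that $\Norm{z - x}_\infty = 1$ for all $z \in R$. Also $R$ is a finite product of intervals, hence a complete lattice under the coordinate-wise order. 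It suffices to produce an $i$-fixed point $y \in R$: any such point satisfies $\Norm{y - x}_\infty \le 1$ (it lies in $B(x,1)$) and $\Norm{y-x}_\infty \ge 1$ (its $j$-coordinate is $x_j + 1$), giving $\Norm{y-x}_\infty = 1$ as required.

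First I would introduce the slice-restricted map $g : R \to R$ with $g_k(z) = f_k(z)$ for $k \le i$ and $g_k(z) = z_k$ for $k > i$. This $g$ is order-preserving because $f$ is and the unchanged coordinates do not affect monotonicity. The one thing to check is that $g$ maps $R$ into $R$, which by the slice-analogue of Observation~\ref{obs:sublat-verif} amounts to showing $f_k(z) \in \proj_k(R)$ for every $z \in R$ and every free coordinate $k \le i$. Granting this, Tarski's theorem (Theorem~\ref{thm:tarski1}) applied to $g$ on the complete lattice $R$ produces a fixed point $y$ of $g$; by construction $f_k(y) = y_k$ for all $k \le i$, so $y$ is an $i$-fixed point of $s'$ lying in $R$, which finishes the proof.

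The substantive step, where both \DMAC hypotheses are used, is the claim $f_k(z) \in \proj_k(R)$ for $z \in R$ and $k \le i$, where $\proj_k(R) = \{x_k - 1, x_k, x_k + 1\} \cap \{1,\dots,n\}$. If $z_k = x_k$ (the interior value), the unit-displacement assumption $|f_k(z) - z_k| \le 1$ gives $f_k(z) \in \{x_k-1,x_k,x_k+1\}$ directly. If $z_k = x_k + 1$, then $f_k(z) = x_k + 2$ would give $|f_k(z) - f_k(x)| = 2 > 1 = \Norm{z-x}_\infty$ (using $f_k(x) = x_k$), a non-expansion violation; so $f_k(z) \le x_k + 1$, and with unit displacement $f_k(z) \in \{x_k, x_k+1\} \subseteq \proj_k(R)$. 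The case $z_k = x_k - 1$ is symmetric, using non-expansion against $x$ again. When $x_k$ lies on the grid boundary one of $x_k \pm 1$ is simply absent from $\proj_k(R)$ and the grid constraint supplies the same bound, so the conclusion is unchanged. I expect this boundary bookkeeping --- keeping straight the interplay of unit displacements, the non-expansion inequality against the fixed point $x$, and the grid edges --- to be the only genuinely delicate part; the rest is a direct invocation of Tarski's theorem.
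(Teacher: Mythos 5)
Your proposal is correct and takes essentially the same approach as the paper: both define $R = B(x,1) \cap s'$, show that the function (restricted to the free coordinates) maps $R$ to itself via the non-expansion property against the fixed point $x$, and then invoke Tarski's theorem on the sublattice $R$. Your version is a careful fleshing-out of the paper's very terse proof — in particular, the paper simply asserts that ``$f(y) \in R$'' whereas you correctly introduce the slice-restricted map $g$ (the paper implicitly means this restriction, since $f$ may move the fixed coordinates $k > i$) and spell out the projection checks; one small redundancy is that once you have $\Norm{z - x}_\infty = 1$ and $f_k(x) = x_k$ for $k \le i$, non-expansion alone already gives $f_k(z) \in \{x_k - 1, x_k, x_k+1\}$ in every case, so the unit-displacement assumption is not actually needed in the $z_k = x_k$ branch.
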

\begin{proof} 
	The same proof works for either slice, $s' = s - e_j$ or $s' = s + e_j$. 
	Assume $s'$ is one of them, and let $R:= B(x,1) \cap s'$.
	For every point $y \in R$ we must have $f(y) \in R$, since 
	otherwise we have a \ref{DV2} violation of non-expansion.
	Since $f$ is monotone on the whole domain it is also monotone on $R$. 
	Thus, we have satisfied the conditions of Tarski's theorem for $f$ on $R$,
	which guarantees the existence of the claimed fixed point.
\end{proof}

We can show the same property for least fixed points: if $x$ is the LFP of an
$i$-slice, then the LFPs of the neighbouring $i$-slices must be within distance
$1$ of $x$. 
We can see an illustration in Figure~\ref{fig:neighboring-lfps}, where $d$ cannot 
be the LFP of $s'$ because, by Observation~\ref{obs:neighboring-slice-fp-nearby}
we know that at least one of $a, b$ or $c$, which are below $d$, must be fixed in $s'$.

\neighboringSlice

\begin{lemma}
	\label{lem:lfps-are-adjacent}
	Suppose $f$ is a violation-free $\DMAC$ instance. 
	Let $i, j \in [d]$ with $j \ge i+1$.
	If $x$ is the LFP of $s=\slice_i(x)$ and $y$ the LFP of a slice $s' = s \pm e_j$, 
	then $\Norm{y-x}_\infty = 1$. 
\end{lemma}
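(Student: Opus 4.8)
The plan is to combine Observation~\ref{obs:neighboring-slice-fp-nearby} with the contiguity machinery for least fixed points developed earlier in the section, applied inside the slice $s'$ rather than the whole domain. First I would observe that, since $x$ is the LFP of $s = \slice_i(x)$, by Observation~\ref{obs:neighboring-slice-fp-nearby} there is an $i$-fixed point $y'$ of $s'$ with $\Norm{y'-x}_\infty = 1$. The point $y'$ restricted to $s'$ is a fixed point of $f$ viewed as a monotone non-expansive map on the $i$-slice $s'$ (disregarding the fixed coordinates), so $s'$ has at least one such $i$-fixed point; hence its LFP $y$ exists and satisfies $y \leq y'$, which already gives $\Norm{y-x}_\infty \leq \Norm{y'-x}_\infty + \dots$ — but this crude bound is not enough, so the argument needs more care and I would instead argue directly about $y$.

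The key step is to show $\Norm{y-x}_\infty \le 1$ and $\Norm{y-x}_\infty \ge 1$ separately. For the upper bound, suppose for contradiction that $\Norm{y-x}_\infty > 1$. The two slices $s$ and $s'$ differ only in the single fixed coordinate $j$, by $1$, so in coordinate $j$ we automatically have $|y_j - x_j| = 1$; thus the violation of the bound must occur in some other coordinate $m \in \free(s) \cup (\fixed(s) \setminus \{j\})$ with $|y_m - x_m| \ge 2$. I would then run the ``fixed-point-between'' argument of Lemma~\ref{lem:fixed-point-between} inside the slices: consider the case $x_m$ and $y_m$ with, say, $y_m < x_m$; applying Observation~\ref{obs:neighboring-slice-fp-nearby} \emph{again} but now going from $s'$ back towards $s$, there is an $i$-fixed point $z$ of $s$ within distance $1$ of $y$, which would be a fixed point of $s$ strictly below $x$ in coordinate $m$ (since $z_m \le y_m + 1 \le x_m - 1 < x_m$), contradicting that $x$ is the LFP of $s$. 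Symmetrically, if $y_m > x_m$ for some $m$, then by the same token $x$ has a neighbour in $s$ that is $i$-fixed and lies above $x$ in coordinate $m$, which is not directly a contradiction — so instead, for the case $y_m > x_m$, I would argue that monotonicity forces a contradiction with $y$ being the LFP of $s'$: pulling $y$ down in coordinate $m$ towards $x_m$ and invoking Observation~\ref{obs:neighboring-slice-fp-nearby} from $s$ to $s'$ produces an $i$-fixed point of $s'$ strictly below $y$. In both directions we contradict minimality of one of the two LFPs, so $\Norm{y-x}_\infty \le 1$.

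For the lower bound, $\Norm{y-x}_\infty \ge 1$ is immediate: since $s$ and $s'$ are distinct $i$-slices differing in coordinate $j$ by exactly $1$, any point of $s'$ differs from any point of $s$ by at least $1$ in coordinate $j$, so $\Norm{y-x}_\infty \ge |y_j - x_j| = 1$. Combining the two bounds gives $\Norm{y-x}_\infty = 1$, as claimed. The main obstacle I anticipate is making the ``pull $y$ down towards $x$'' / ``push $x$ up towards $y$'' arguments fully rigorous: one has to be careful that the relevant intermediate regions (the intersection of an $\ell_\infty$-box, the slice, and a sublattice) are closed under $f$ — exactly the kind of boundary-pointing-inward analysis carried out in the proof of Lemma~\ref{lem:fixed-point-between} — and that applying Tarski's theorem there yields a fixed point genuinely below $x$ (resp. below $y$) rather than one that coincides with $x$ (resp. $y$). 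Handling the $y_m > x_m$ direction without an obvious symmetry is the delicate part, and I would spell that case out using the monotonicity argument against minimality of the LFP of $s'$.
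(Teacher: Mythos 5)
Your proposal is correct and takes essentially the same approach as the paper: both proofs apply Observation~\ref{obs:neighboring-slice-fp-nearby} in each direction to obtain an $i$-fixed point of $s$ within distance~$1$ of $y$ and an $i$-fixed point of $s'$ within distance~$1$ of $x$, and then a coordinate $m$ with $|y_m - x_m| \ge 2$ forces one of these auxiliary fixed points to lie strictly below the corresponding LFP, contradicting minimality. The paper's writeup is more streamlined (it names both auxiliary fixed points up front and does the two-case split directly), whereas your aside invoking Lemma~\ref{lem:fixed-point-between} and the ``pulling $y$ down in coordinate $m$'' phrasing are detours --- but the argument you ultimately land on is the same.
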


\begin{proof}
	By Observation~\ref{obs:neighboring-slice-fp-nearby}, there must be a fixed point 
	$z$ on $s'$ with $\Norm{x-z} = 1$ and a fixed point $z'$ on $s$ with $\Norm{y-z'} = 1$. 
	Suppose there is a coordinate $l$ such that $y_l \leq x_l - 2$. 
	Then $z' \not\geq x$, which contradicts $x$ being the LFP of $s$. 
	Suppose there is a coordinate $l$ such that $x_l \leq y_l - 2$. 
	Then $z\not\geq y$, which contradicts~$y$ being the LFP of $s'$. 
	We conclude that $\Norm{y-x} = 1$.
\end{proof}

As a corollary of this, if we have two $i$-slices that differ by $k$ in dimension $j \ge i+1$, 
then the LFPs of these slices can have distance at most $k$ from one another.

\begin{corollary}
	\label{cor:bind-dim}
	Suppose $f$ is a violation-free $\DMAC$ instance. 
	Let $i, j \in [d]$ with $j \ge i+1$.
	If $x$ is the LFP of $s = \slice_i(x)$ and $y$ the LFP of $s' = s \pm k \cdot e_j$, 
	with $k \in \Natural$, then $\Normi{y-x} \le k$. 
\end{corollary}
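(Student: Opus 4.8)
The plan is to reduce to the $k=1$ case, which is exactly Lemma~\ref{lem:lfps-are-adjacent}, by telescoping along the chain of intermediate $i$-slices between $s$ and $s'$ and then applying the $\ell_\infty$ triangle inequality. The case $k=0$ is trivial, since then $s' = s$ and uniqueness of the least fixed point forces $y = x$, so assume $k \ge 1$.

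First I would set up notation. Without loss of generality assume $s' = s + k \cdot e_j$; the case $s' = s - k \cdot e_j$ is symmetric, obtained by negating the direction of the $j$-th coordinate throughout. For each $m \in \{0, 1, \dots, k\}$ let $s^m := s + m \cdot e_j$. Since $s$ is an $i$-slice and $j \ge i+1$, each $s^m$ is again an $i$-slice (we only modify the already-fixed coordinate $j$, and the first $i$ coordinates remain free), and every $s^m$ is a valid slice of $G$ because its $j$-coordinate lies between those of $s = s^0$ and $s' = s^k$, both of which are valid by hypothesis. Let $x^m$ denote the LFP of $s^m$, which exists and is unique by Tarski's theorem (Theorem~\ref{thm:tarski1}) applied to $f$ restricted to $s^m$. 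By construction $x^0 = x$ and $x^k = y$, and moreover $s^m = \slice_i(x^m)$ since $x^m$ lies on the $i$-slice $s^m$.

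Next I would apply Lemma~\ref{lem:lfps-are-adjacent} to each consecutive pair. For every $m \in \{0, \dots, k-1\}$ we have $s^{m+1} = s^m + e_j$ with $j \ge i+1$, and $x^m$, $x^{m+1}$ are the LFPs of $s^m$, $s^{m+1}$ respectively, so the lemma gives $\Normi{x^{m+1} - x^m} = 1$. The triangle inequality for $\Norm{\cdot}_\infty$ then yields
\[
\Normi{y - x} = \Normi{x^k - x^0} \le \sum_{m=0}^{k-1} \Normi{x^{m+1} - x^m} = k,
\]
which is the claimed bound.

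I do not expect any genuine obstacle: the entire content is carried by Lemma~\ref{lem:lfps-are-adjacent}, and the remaining work is bookkeeping. The only points requiring a word of care are that each intermediate $s^m$ is a bona fide $i$-slice (so that Lemma~\ref{lem:lfps-are-adjacent} is applicable to the pair $(s^m, s^{m+1})$) and that all intermediate slices lie within the grid; both are immediate from $j \ge i+1$ and from $s$ and $s'$ being valid slices.
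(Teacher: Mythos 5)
Your proposal is correct and follows essentially the same approach as the paper: the paper proves the corollary by induction on $k$ with base case Lemma~\ref{lem:lfps-are-adjacent} and an inductive step via the triangle inequality, and your telescoping chain $s^0, s^1, \dots, s^k$ plus summation is the unfolded version of that same induction. (As a very minor side note, your write-up is slightly cleaner than the paper's, which asserts $\Normi{z-x}=(k-1)$ from the inductive hypothesis when it should read $\le (k-1)$; your version states the inequality correctly.)
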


\begin{proof}
	The same proof works for either slice, $s' = s - k \cdot e_j$ or $s' = s + k \cdot e_j$. 
	We proceed by induction on~$k$. 
	Let $z$ be the LFP of the slice $s \pm (k-1) e_j$. 
	The base case $k=1$ holds by Lemma~\ref{lem:lfps-are-adjacent}. 
	Then $\Normi{z-x} = (k-1)$ by the inductive hypothesis and $\Normi{y-z} = 1$ by 
	Lemma~\ref{lem:lfps-are-adjacent}.
	Thus we have:
	$$\Normi{y-x} \leq \Normi{z-x} + \Normi{y-z} = (k-1) + 1 = k,$$
	which completes the induction and proves the claim.
\end{proof}

\paragraph{\bf Least fixed points are hereditary.}

The most important result for our construction will be that LFPs are
hereditary. Specifically, if $x$ is the LFP of a slice $s$, then it is the LFP
of all subslices $s' \subseteq s$. In order to prove this, we first prove the
following technical lemma.

\begin{lemma}
	\label{lem:lower-fixed-point-for-LFP-hereditary}
	Suppose that $f$ is a violation-free $\DMAC$ instance, and that $y,x$ are such that
	$y<x$ and $\Norm{y-x}_{\infty} = 1$. 
	Let $s'$ be the smallest slice that contains both $y$ and $x$, i.e., 
	$\free(s') = \{j: j \in [d], y_j \ne x_j\}$.
	Suppose that both $y$ and $x$ are fixed in $s'$, i.e. for all $j \in \free(s)$, we have $f_j(y) =y$
	and $f_j(x) = x$.
	Moreover, suppose that there exists $i \in \fixed(s')$ such that
	$x$ is fixed in dimension $i$ and $y$ is not fixed in dimension $i$,
	i.e., $f_i(x) = x_i$ and $f_i(y) \neq y_i$, 
	Then the point $z = y - e_i$ is fixed in the slice $s$ with $\free(s) = \free(s') \cup \Set{i}$.
\end{lemma}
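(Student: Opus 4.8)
The plan is to show that $z = y - e_i$ is fixed in the slice $s$ with $\free(s) = \free(s') \cup \{i\}$, meaning $f_j(z) = z_j$ for all $j \in \free(s')$ and also $f_i(z) = z_i$. First I would establish the coordinate $i$ itself. We know $f_i(y) \neq y_i$; since $z = y - e_i \le y$ and $y$ is not fixed below, I claim $f_i(y) > y_i$ — otherwise $f_i(y) < y_i$, and combined with $f_i(x) = x_i$ while $x_i \ge y_i$ (in fact $x_i \in \{y_i, y_i+1\}$ since $\|y-x\|_\infty = 1$; here $i \in \fixed(s')$ so actually $x_i = y_i$), we would get $|f_i(x) - f_i(y)| = |x_i - f_i(y)| = |y_i - f_i(y)| \ge 1 = \|x - y\|_\infty$ is not yet a contradiction, so I need to be more careful. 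Since $i \in \fixed(s')$ and $\free(s') = \{j : y_j \neq x_j\}$, we have $x_i = y_i$. Then $f_i(y) \neq y_i = x_i = f_i(x)$, so $|f_i(x) - f_i(y)| \ge 1 = \|x - y\|_\infty$, which is already a non-expansion violation unless... no, \ref{DV2} requires \emph{strict} inequality $> \|x-y\|_\infty$, so this is borderline. I would instead use monotonicity: if $f_i(y) < y_i$, consider that $y \le x$ with $y_i = x_i$; we need a point strictly above to force a contradiction. The cleaner route: apply Observation~\ref{obs:neighboring-slice-fp-nearby}-style reasoning, or simply note $z = y - e_i < y$ so by monotonicity $f_i(z) \le f_i(y)$, and I want to pin down $f_i(z) = z_i = y_i - 1$.

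**The core argument.** The key step is: since $z_i = y_i - 1$, monotonicity gives $f_i(z) \le f_i(y)$. If I can show $f_i(y) \le y_i$, then $f_i(z) \le y_i$; combined with the fact that $z$ cannot move down in dimension $i$ below its current value without creating a problem (via non-expansion against $x$, since $x_i - z_i = 1 = \|x - z\|_\infty$ would be tight), I can conclude $f_i(z) = z_i$. Conversely if $f_i(y) > y_i$ I should show $f_i(z) \ge z_i$ too. So the first sub-goal is to determine the sign of $f_i(y) - y_i$: I expect the intended reading is $f_i(y) > y_i$ (the displacement at $y$ in the fixed dimension $i$ points up, toward $x_i = y_i$ — wait, that is wrong since $x_i = y_i$). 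Let me reconsider: the lemma is a building block for showing LFPs are hereditary, and the picture (Figure~\ref{fig:cube}) shows $y$ with a displacement pointing down toward $z = y - e_i$. So the intended case is $f_i(y) < y_i$. Then I must rule out the non-expansion violation with $x$: since $x_i = y_i$ and $f_i(x) = x_i$, having $f_i(y) < y_i$ gives $|f_i(x) - f_i(y)| \ge 1$; but $\|x - y\|_\infty = 1$, so $|f_i(x) - f_i(y)| \ge 1 = \|x-y\|_\infty$ is \emph{not} strictly greater — no violation. So $f_i(y) = y_i - 1$ is consistent. Good: then $z = y - e_i$ has $z_i = y_i - 1 = f_i(y)$, and by monotonicity $f_i(z) \le f_i(y) = y_i - 1 = z_i$; by non-expansion against $x$ (now $\|x - z\|_\infty = 1$), $f_i(z) \ge z_i - 0$... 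I need $f_i(z) \ge z_i$: if $f_i(z) < z_i = y_i - 1$, then $|f_i(x) - f_i(z)| = |y_i - f_i(z)| \ge 2 > 1 = \|x - z\|_\infty$, a \ref{DV2} violation. Hence $f_i(z) = z_i$.

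**Handling the free dimensions.** For each $j \in \free(s')$, I need $f_j(z) = z_j$. Since $z = y - e_i$ and $j \neq i$, we have $z_j = y_j$. We are given $f_j(y) = y_j$ for $j \in \free(s')$. Now $z < y$, so monotonicity gives $f_j(z) \le f_j(y) = y_j = z_j$. For the reverse inequality: I want to use non-expansion against $x$. We have $\|x - z\|_\infty = 1$ (since $x - y$ has all coordinates in $\{0, 1\}$ with $\|\cdot\|_\infty = 1$, and $z = y - e_i$ with $i \in \fixed(s')$ so $x_i = y_i$, hence $x_i - z_i = 1$, and for $j \ne i$, $x_j - z_j = x_j - y_j \in \{0,1\}$; thus $\|x - z\|_\infty = 1$). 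Also $f_j(x) = x_j$ for $j \in \free(s') \subseteq$ the free dimensions of $s'$ where $x$ is fixed. If $f_j(z) < z_j = y_j \le x_j$, then $|f_j(x) - f_j(z)| = |x_j - f_j(z)| = x_j - f_j(z) \ge x_j - z_j + 1 \ge 1$; if $x_j = z_j = y_j$ this gives $\ge 1 = \|x-z\|_\infty$, not strict — hmm. So I need monotonicity \emph{below}, against $y$ itself, in the other direction: actually I should compare $f_j(z) \ge$ something via a fixed point below. The cleanest fix: use that $z$ and $y$ differ only in coordinate $i$, and for $j \in \free(s')$ apply non-expansion to the pair $(z, y)$: $\|y - z\|_\infty = 1$, $f_j(y) = y_j$, so if $f_j(z) \neq y_j = z_j$ then $|f_j(y) - f_j(z)| \ge 1 = \|y - z\|_\infty$ — still not strict.

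**The real obstacle and the resolution.** The genuine difficulty is precisely this: non-expansion is not strict, so one-dimensional arguments keep landing on equality boundaries rather than contradictions. The resolution — and I expect this is what the authors do — is to invoke Tarski's theorem on a small box rather than argue coordinate by coordinate. Specifically, I would let $R = \sublattice(z, y') \cap s$ where $y'$ is chosen so that $R$ is the appropriate box below $y$ within the slice $s$, argue (as in Lemma~\ref{lem:fixed-point-between} and Observation~\ref{obs:neighboring-slice-fp-nearby}) that $f$ maps $R$ into itself using non-expansion against $x$ on the boundary faces touching $x$'s side and monotonicity against $y$ on the boundary faces touching $y$'s side, obtain a fixed point of the slice $s$ inside $R$, and then argue that $z$ itself must be that fixed point because $z$ is the unique minimal element of $R$ and any fixed point strictly above $z$ in $R$ would, together with the structure of $y$ and $x$, contradict minimality or a non-expansion bound. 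I would carry out the steps in this order: (1) pin down $f_i(y) = y_i - 1$ and $f_i(z) = z_i$ via the two-sided non-expansion-vs-monotonicity squeeze above; (2) for $j \in \free(s')$ get $f_j(z) \le z_j$ from monotonicity; (3) get $f_j(z) \ge z_j$ by the Tarski-box argument, showing $z$ is forced to be a fixed point of the slice $s$ because it is the bottom corner of a self-mapped box whose only other candidate fixed points are ruled out by monotonicity relative to $y$. The main obstacle throughout is converting the non-strict non-expansion inequality into usable rigidity, and the Tarski-box device is the tool that does it cleanly.
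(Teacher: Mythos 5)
Your proposal contains two genuine gaps, and both trace to the same misconception about the definition of $\free(s')$.

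First, you never actually rule out $f_i(y) > y_i$; you read the correct case $f_i(y) < y_i$ off the figure. The argument is one line and uses only monotonicity: if $f_i(y) > y_i$, then since $x_i = y_i$ and $f_i(x) = x_i$, we would have $f_i(y) > y_i = x_i = f_i(x)$ despite $y < x$, a \ref{DV1} violation. This is the paper's first step and you should state it.

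Second, and more seriously, your coordinate-wise argument for $j \in \free(s')$ does not land on an equality boundary as you fear. Recall that $\free(s') = \{j : y_j \ne x_j\}$ by definition, so $j \in \free(s')$ \emph{means} $y_j \ne x_j$, and since $y < x$ with $\Norm{x-y}_\infty = 1$ this forces $x_j = y_j + 1$. The possibility $x_j = z_j = y_j$ that you flag as the obstacle cannot occur for $j \in \free(s')$. Thus $x_j - z_j = 1$, and if $f_j(z) < z_j$ then $f_j(z) \le z_j - 1 = x_j - 2$, giving $|f_j(x) - f_j(z)| \ge 2 > 1 = \Norm{x-z}_\infty$, a strict non-expansion violation. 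The reverse direction $f_j(z) > z_j$ contradicts monotonicity against $y$ exactly as you wrote. So your squeeze works cleanly in every free coordinate, and the coordinate-wise route you started down is precisely the paper's proof.

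The Tarski-box detour is therefore unnecessary: you invented an obstacle by forgetting that $\free(s')$ is defined as the set of coordinates where $x$ and $y$ differ, then proposed a heavier machinery (that you did not carry through) to get around it. Fix the definitional point and your own elementary argument closes.
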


\begin{proof}
	
\cube 

By assumption $f_i(y) \ne y_i$.
If $f_i(y) > y_i$, we would have a violation of monotonicity,
since, by assumption $f_i(x) = x_i$ and $y < x$. 
Thus, we have $f_i(y) < y_i$ (shown as the arrow in Figure~\ref{fig:cube}).
 
Now consider the point $z = y - e_i$.
We have that:
\begin{itemize}
\item
For all $j \in \free(s)$, if $f_i(z) < z_i$ we would have a violation of non-expansion with $x$,
since then we would have $|f_i(z)- f_i(x)| > \Norm{z-x}_{\infty}$. 
\item
For all $j \in \free(s)$, if $f_i(z) > z_i$, we would have a violation of monotonicity with $y$, 
since $z < y$ and the fact that $f_i(y) < y_i$, which we have already shown.
\end{itemize}
Thus $f_j(z) = z_j$ for all $j \in \free(s)$, so $z$ is fixed in slice $s$ as claimed.
\end{proof}

Now we can prove that LFPs are hereditary. 

\begin{theorem} 
	\label{thm:hereditary}
	Suppose $f$ is a violation-free $\DMAC$ instance.
	If $x$ is the LFP of a slice $s$, it is also the LFP of all subslices $s'
	\subseteq s$ that contain $x$. 
\end{theorem}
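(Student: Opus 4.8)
The plan is to prove the logically equivalent contrapositive: \emph{if $x$ is a fixed point of a slice $s'$ that is not the LFP of $s'$, then $x$ is not the LFP of any slice $s \supseteq s'$ with $x \in s$.} This suffices, because if $x$ is the LFP of $s$ then $x$ is in particular fixed in every coordinate of $\free(s) \supseteq \free(s')$ and hence fixed in $s'$; and if $x$ is not fixed in $s$ at all then ``$x$ is not the LFP of $s$'' holds trivially, so we may assume throughout that $x$ is fixed in $s$ (so $f_j(x) = x_j$ for all $j \in \free(s)$). I would argue by strong induction on $n := \Card{\free(s')}$, with the base case $n = 0$ being vacuous since then $s' = \{x\}$ and $x$ is already the LFP of $s'$.

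For the inductive step, suppose $x$ is a fixed point of $s'$ but not its LFP. The restriction of $f$ to the free coordinates of $s'$ (with the fixed coordinates frozen at their values in $s'$) is again a monotone, non-expansive unit-displacement function on a grid, so Corollary~\ref{cor:neighboring-lesser-fp} applied inside $s'$ yields a fixed point $y$ of $s'$ with $y < x$ and $\Norm{y-x}_\infty = 1$. Let $t$ be the smallest slice containing both $y$ and $x$, so $\free(t) = \{j : y_j \ne x_j\} \subseteq \free(s')$. If $\free(t) \subsetneq \free(s')$, then $y$ is also a fixed point of the smaller slice $t$ with $y < x$, so $x$ is a non-LFP fixed point of $t$; since $t \subseteq s$ and $\Card{\free(t)} < n$, the inductive hypothesis gives that $x$ is not the LFP of $s$, as required.

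The remaining case is $\free(t) = \free(s')$, i.e.\ $y$ is the ``full corner'' one step below $x$ in every free coordinate of $s'$, fixed in all of $\free(s')$. Here I would promote $y$ through the extra dimensions $D := \free(s)\setminus\free(s')$ one at a time, maintaining the invariant: the current point $w$ equals the point that is one below $x$ in every coordinate of some $W$ with $\free(s')\subseteq W\subseteq\free(s)$, agrees with $x$ off $W$, and is fixed in $W$. Given such $w$: for each $j\in D\setminus W$ we have $w_j = x_j$, and monotonicity against $x$ (using $f_j(x) = x_j$) forces $f_j(w)\le w_j$, so by unit displacements $w$ is either fixed in dimension $j$ or moves strictly down there. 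If it is fixed for every such $j$, then $w$ is a fixed point of $s$ strictly below $x$ (it is $<x$ because $W\supseteq\free(s')\ne\emptyset$), and we are done. Otherwise pick $j\in D\setminus W$ with $f_j(w) < w_j$ and apply Lemma~\ref{lem:lower-fixed-point-for-LFP-hereditary} with the pair $w, x$ and this $j$ — its hypotheses hold because $w$ and $x$ differ by exactly $1$ on $W$ and are both fixed there, and $j$ is a fixed coordinate of the slice spanning their difference where $x$ is fixed but $w$ is not — concluding that $w - e_j$ is fixed in the slice with free set $W\cup\{j\}$; this re-establishes the invariant with $W$ strictly enlarged, and since $W\subseteq\free(s)$ the process terminates. (This is a secondary, self-contained induction on $\Card{D}$ that does not invoke the primary hypothesis.)

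The main obstacle — and the reason this is not a one-line consequence of Lemma~\ref{lem:lower-fixed-point-for-LFP-hereditary} — is precisely this last step: promoting a lesser fixed point from $s'$ up to $s$ cannot be done in a single application, since Lemma~\ref{lem:lower-fixed-point-for-LFP-hereditary} adds only one free dimension at a time and only certifies fixedness on the difference set of $y$ and $x$ together with that one new dimension. One must therefore split according to whether $y$'s displacement from $x$ already spans all of $\free(s')$ (reduce to a smaller slice $t$ and use the outer induction) or not (iterate the single-dimension promotion while carefully preserving the ``full corner'' invariant). Making the two inductions mesh, and verifying at each invocation that the hypotheses of Lemma~\ref{lem:lower-fixed-point-for-LFP-hereditary} and of the in-slice version of Corollary~\ref{cor:neighboring-lesser-fp} are genuinely met, is where the real work of the proof lies.
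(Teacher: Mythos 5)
Your proof is correct and follows the same skeleton as the paper's: produce a lesser fixed point $y$ of $s'$ via Corollary~\ref{cor:neighboring-lesser-fp}, then lift $y$ to a lesser fixed point of $s$ via Lemma~\ref{lem:lower-fixed-point-for-LFP-hereditary}. Where you are more careful is precisely the point you flag at the end: Lemma~\ref{lem:lower-fixed-point-for-LFP-hereditary}, applied to $y$, $x$, and one coordinate $i$ of $\fixed$ of their difference slice, certifies that $y-e_i$ is fixed only on $\{j: y_j\ne x_j\}\cup\{i\}$, which can be a strict subset of $\free(s)$ both when $y$ fails to differ from $x$ in every coordinate of $\free(s')$ and when $\Card{\free(s)\setminus\free(s')}>1$. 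For a coordinate $j\in\free(s)$ outside that set one only gets $f_j(y-e_i)\le (y-e_i)_j$ from monotonicity against $x$, and non-expansion against $x$ does not force equality because the gap is exactly $1$, so a single invocation does not directly produce a fixed point of $s$. The paper's own proof writes ``Lemma~\ref{lem:lower-fixed-point-for-LFP-hereditary} applies and says that $z=y-e_i$ is fixed in $s$,'' which elides this iteration; your argument supplies the missing one-dimension-at-a-time promotion.

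One simplification is available: the outer strong induction on $\Card{\free(s')}$ (and hence the case split on whether $\free(t)=\free(s')$) is unnecessary. Your ``full corner'' promotion goes through verbatim if you initialize $W_0 := \{j: y_j\ne x_j\}$ rather than $\free(s')$. This $W_0$ is nonempty since $y<x$; the invariant ($w=x-\sum_{j\in W}e_j$, $w$ fixed on $W$, $W\subseteq\free(s)$) holds at the start because $y$ is fixed on $\free(s')\supseteq W_0$; and each pass over $\free(s)\setminus W$ either finds every such coordinate fixed at $w$ (giving a fixed point of $s$ strictly below $x$) or finds a coordinate $j$ with $f_j(w)<w_j$ and enlarges $W$ via Lemma~\ref{lem:lower-fixed-point-for-LFP-hereditary}. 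The process halts in at most $\Card{\free(s)}$ steps, so the single self-contained iteration replaces the two-level induction and merges your cases (a) and (b) into one argument.
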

\begin{proof}
	Suppose $x$ is not the LFP of a subslice $s'$ that contains $x$. 
	Then, by Corollary~\ref{cor:neighboring-lesser-fp}, there exists a fixed point $y$ of 
	$s'$ below $x$ in $s'$ with $\Norm{y-x}_{\infty} = 1$.
	Moreover, by assumption $y$ is not fixed in $s$ as this would contradict the leastness of $x$. 
	Let $i$ be a dimension of $\free(s)$ that is fixed at $x$ but not $y$. 
	Then Lemma~\ref{lem:lower-fixed-point-for-LFP-hereditary} applies and says that 
	$z= y-e_i$ is fixed in $s$. 
	But $z$ is below $x$, which contradicts that $x$ is the LFP of $s$.

\end{proof}

\subsubsection{A reduction for the DMAC promise problem}

As a warm-up, we first present a reduction from $\DMAC$ to \OPDC where we assume
that the $\DMAC$ instance has no violations, and we produce an \OPDC instance that
has no violations. 

First, we define a predicate that will be used in constructing the direction
function: 

\begin{equation*}
	\IsLFP(x,i) \coloneqq \Brack{x = \FindLesserFixpoint (g, y, x) }.
\end{equation*}
This predicate returns true whenever the $\FindLesserFixpoint$ algorithm verifies
that it is the least fixed point of its $i$-slice. 
Note that we initialize the algorithm to with a point $y = x - \One_{\free(\slice_i(x))}$ for some
$i$, so that algorithm is run within the $i$-slice $s$, and ignores all other dimensions,
$\fixed(s)$.

Like in \DMAC, the \OPDC problem, defined formally below, also uses a discrete grid as its domain,
and our reduction will not change the domain, which we will simply refer to as $G$ for both problems.

%
Starting from a $\DMAC$ with function $f$, our reduction first defines $h$ similarly as we did in
Section~\ref{sec:monotone_to_dmac}. 
\begin{definition}
	Given a \DMAC function $f: G \rightarrow G$ with all displacements of length at most 1, 
	for all~$x \in G$ and $i \in [d]$ we define:
	\begin{equation*}
		\label{def:h_from_f}
		h_i(x) = f_i(x) - x_i.
	\end{equation*}
\end{definition}

The function for the overall displacement is thus $h(x) := (h_1 (x), h_2 (x), \dots, h_d(x))$.
We proceed to define a function capturing the direction of $f$, indicated by the sign of $h$.
\begin{definition}
	Given function $h_i: G \rightarrow \Set{\pm1,0}$, for any $i \in [d]$, we define the \emph{direction function} $D_i$ as:
	\begin{align*}
		D_i(x) \coloneqq \begin{cases}
			\up&\quad\text{if $h_i(x) = 1$}\\
			\down&\quad\text{if $h_i(x) = -1$ or $h_i(x) = 0 \land \lnot \IsLFP(x, i)$}\\
			\zero&\quad\text{if $h_i(x) = 0 \land \IsLFP(x, i)$}
		\end{cases}
	\end{align*}
\end{definition}
The idea here is that the direction points up whenever $f_i(x)$ points above $x_i$, and it
points down whenever $f_i(x)$ points below $x_i$. When $f_i(x)$ points to $x_i$, we must be
more careful: here we force $D_i$ to point down, unless $x$ is the least fixed point of the
$i$-slice. This allows us to ensure that the $i$-slice has a unique fixed point, and our ability to determine whether a fixed point is a least fixed point is crucial here.

We refer to a point $x$ such that $D_j(x) = \zero$ for $j \leq i$ as an
$i$-zero of $D$. 

The following lemma formally proves that each $i$-slice has a
unique $i$-zero. This implies that there is a unique \ref{O1} solution, and
rules out any \ref{OV1} violations. Note also that an \ref{O1} solution is a
\ref{D1} solution of the $\DMAC$ instance, since we have $D_j(x) = \zero$ for all $j$ only if $h_j(x) = 0$ for all $j$. 

\begin{lemma} 
\label{lem:unique-i-zero}
If $f$ has no violation solutions, then $D$ has a unique $i$-zero in each $i$-slice.
\end{lemma}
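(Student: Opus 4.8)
The plan is to prove Lemma~\ref{lem:unique-i-zero} by induction on $i$, establishing simultaneously that each $i$-slice contains exactly one $i$-zero. The key structural fact driving the argument is that an $i$-zero of $D$ in an $i$-slice $s$ is precisely a \emph{verified LFP} of the $(i-1)$-slice restricted inside $s$ that additionally has $\IsLFP(\cdot, i)$ holding, i.e., is a verified LFP of $s$ itself. So the strategy is to show that within each $i$-slice there is exactly one point that is a verified LFP in all of dimensions $1$ through $i$, and since the instance is violation-free, verified LFPs coincide with genuine LFPs (by Lemma~\ref{lem:lfp verification in polytime}), which are unique by Tarski's Theorem (Theorem~\ref{thm:tarski1}).

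First I would handle the base case $i = 0$ (or $i=1$) trivially: the $0$-slices are single points, each of which is vacuously a $0$-zero, or for $i=1$, observe that in a one-dimensional slice $D_1$ points $\up$ below the unique one-dimensional least fixed point region, $\down$ above it and at non-least fixed points, and $\zero$ exactly at the LFP of that slice. Concretely: in a $1$-slice, $h_1$ is monotone-compatible and by Tarski's Theorem there is a least point $x$ with $h_1(x) = 0$; every point strictly below it has $h_1 = 1$ (else non-expansion with $x$ is violated — this is exactly the contiguity reasoning from Corollary~\ref{cor:neighboring-lesser-fp}), so $\IsLFP(x,1)$ holds and $D_1(x) = \zero$; every fixed point above $x$ has $\IsLFP = \text{false}$, so $D_1 = \down$ there; and non-fixed points have $D_1 \in \{\up, \down\}$. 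Hence exactly one $1$-zero per $1$-slice.

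For the inductive step, assume each $(i-1)$-slice has a unique $(i-1)$-zero. Fix an $i$-slice $s$. By the inductive hypothesis, for every value $c$ of coordinate $i$, the $(i-1)$-subslice $s_c \subseteq s$ obtained by fixing coordinate $i$ to $c$ contains a unique $(i-1)$-zero, call it $z(c)$; by violation-freeness and Lemma~\ref{lem:lfp verification in polytime}, $z(c)$ is the genuine LFP of $s_c$. Now an $i$-zero in $s$ is exactly a point $z(c)$ with $D_i(z(c)) = \zero$, i.e., with $h_i(z(c)) = 0$ and $\IsLFP(z(c), i)$ true. By Tarski's Theorem applied to $s$, $s$ has a (unique) LFP $x^\star$; I would argue $x^\star = z(x^\star_i)$ since $x^\star$ is fixed in all of $1,\dots,i$ and is the least such in dimension $1,\dots,i-1$ by heredity (Theorem~\ref{thm:hereditary}), hence it is picked out as $z(c)$ for $c = x^\star_i$, and $\IsLFP(x^\star, i)$ holds because verified LFP equals genuine LFP here. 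So at least one $i$-zero exists. For uniqueness, suppose $z(c)$ and $z(c')$ are both $i$-zeros with $c \le c'$; then both are $i$-fixed points of $s$, so both are LFPs of their $(i-1)$-subslices, and $\IsLFP(\cdot,i)$ holding for both means neither has a lesser $i$-fixed point in $s$; but by Corollary~\ref{cor:neighboring-lesser-fp} applied inside $s$ (in dimension $i$), if $c < c'$ then $z(c')$ would admit a lesser fixed point below it, contradicting $\IsLFP(z(c'),i)$ — more carefully, I would use Observation~\ref{obs:neighboring-slice-fp-nearby} / Lemma~\ref{lem:lfps-are-adjacent} to show $z$ varies by at most $1$ in $\ell_\infty$ between adjacent values of $c$, and then the verification algorithm from $z(c')$ would walk down to a fixed point weakly below $z(c')$ and distinct from it, so $z(c')$ is not a verified LFP of $s$, a contradiction. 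Hence $c = c'$ and the $i$-zero is unique.

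The main obstacle I anticipate is the uniqueness half of the inductive step: carefully justifying that if two distinct $(i-1)$-zeros in the same $i$-slice both pass the $\IsLFP(\cdot,i)$ test, we reach a contradiction. This requires knowing that the LFPs $z(c)$ of the subslices form a monotone, gradient-$\le 1$ "surface" in the $i$-direction (Lemma~\ref{lem:lfps-are-adjacent} and Corollary~\ref{cor:bind-dim}), so that the lower one lies in the cube below the higher one, and then invoking the contiguity/heredity machinery (Corollary~\ref{cor:neighboring-lesser-fp}, Theorem~\ref{thm:hereditary}) to conclude that the higher one cannot be a verified LFP of $s$. The rest — existence, the base case, and the observation that an $i$-zero is a $\DMAC$ fixed point when $i = d$ — is routine given Tarski's Theorem and the verification lemmas already established.
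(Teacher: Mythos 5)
Your core argument is the same as the paper's, but you wrap it in unnecessary machinery. The existence direction is exactly right and matches the paper: take the LFP $x$ of the $i$-slice $s$ via Tarski's Theorem, apply Theorem~\ref{thm:hereditary} to conclude that $x$ is also the LFP of every subslice $\slice_j(x)$ for $j \le i$, and read off $D_j(x) = \zero$.

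For uniqueness, however, you overshoot. You already state the decisive fact (``by violation-freeness and Lemma~\ref{lem:lfp verification in polytime}, \dots is the genuine LFP''), and that fact finishes the argument immediately: if $y$ is an $i$-zero of $s$, then $D_i(y) = \zero$ forces $\IsLFP(y, i)$ to hold, and Lemma~\ref{lem:lfp verification in polytime} says that in a violation-free instance this means $y$ \emph{is} the LFP of $s$. The LFP of $s$ is unique by Tarski, so $y = x$. No induction on $i$, no detour through Observation~\ref{obs:neighboring-slice-fp-nearby}, Lemma~\ref{lem:lfps-are-adjacent}, or Corollary~\ref{cor:bind-dim} is needed, and the inductive hypothesis never does any work. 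Worse, the route you flag as the ``main obstacle'' has a real hole: to say ``$z(c')$ would admit a lesser fixed point below it'' you need $z(c) \le z(c')$ coordinatewise, but Lemma~\ref{lem:lfps-are-adjacent} and Corollary~\ref{cor:bind-dim} only bound $\ell_\infty$ distance, not the order; the monotonicity of $c \mapsto z(c)$ is only established later in the paper (Lemma~\ref{lem:subslice-lfp-prop1}), so you would be using a forward reference or have to reprove it. All of that is avoidable: drop the induction and the adjacency analysis, and use the $\IsLFP$-correctness lemma directly, which is exactly what the paper does.
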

\begin{proof} 
	Assume that $f$ has no violation solutions.
	Any violation solution in a slice translates to a violation in $f$, hence restricting $f$ to a
	slice implies that no properties are violated within each $i$-slice. 
	We fix an $i$-slice $s$. 
	Based on the monotonicity assumption, we can apply Tarski's theorem
	and get that $f$ has a unique least fixed point in $s$, which we will denote by $x$. 
	Monotonicity must additionally hold for any subslice $s' \subseteq s$ of the $i$-slice and the
	theorem can be similarly applied.
	Since $x$ is the LFP of $slice_j(x)$, for each $j \leq i$, we have that $h_j(x) = 0$ and hence
	$D_j(x) = \zero$. 
	By Theorem~\ref{thm:hereditary}, we have that $x$ is the LFP of $\slice_j(x)$
	whenever it is the LFP of $\slice_i(x)$, so we get that $x$ is an $i$-zero of $D$.
	
	Now suppose that there is another $i$-zero of $D$, $y \neq x$. 
	In order for $y$ to be an $i$-zero, we must have that $y$ is the LFP of $\slice_i(x)$, 
	which is a contradiction. 
	We conclude that $x$ is the unique $i$-zero of $D$.
\end{proof}

As an immediate corollary we get the following.

\begin{corollary}
If $f$ has no violation solutions, then $D$ has no \ref{OV1} violations.
\end{corollary}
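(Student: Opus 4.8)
The plan is to observe that an \ref{OV1} violation is, essentially by unwinding the definition, nothing other than the existence of two distinct $i$-zeros of $D$ within a single $i$-slice, and then to invoke Lemma~\ref{lem:unique-i-zero} directly. So the proof will be a short argument by contradiction.

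Concretely, I would proceed as follows. First I would suppose, for the sake of contradiction, that $D$ does admit an \ref{OV1} violation: an $i$-slice $s$ and two distinct points $x, y$ lying on $s$ with $D_j(x) = D_j(y) = \zero$ for all $j \le i$. Next I would note that the condition $D_j(x) = \zero$ for all $j \le i$ is exactly the definition of $x$ being an $i$-zero of $D$, and similarly for $y$; moreover, since both $x$ and $y$ lie on the $i$-slice $s$, we have $\slice_i(x) = \slice_i(y) = s$. Hence $x$ and $y$ are two distinct $i$-zeros of $D$ in the same $i$-slice $s$. Finally, since $f$ has no violation solutions by hypothesis, Lemma~\ref{lem:unique-i-zero} applies and guarantees that $D$ has a \emph{unique} $i$-zero in each $i$-slice, in particular in $s$. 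This contradicts the existence of two distinct such points, so no \ref{OV1} violation can exist.

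There is no real obstacle here: the corollary is genuinely immediate once Lemma~\ref{lem:unique-i-zero} is in hand, and the only thing that needs a sentence of care is the bookkeeping check that the two clauses of an \ref{OV1} violation (two distinct points on the same $i$-slice, both with vanishing direction functions in coordinates $1,\dots,i$) literally say "two distinct $i$-zeros in one $i$-slice." Everything else is a citation of the previous lemma.
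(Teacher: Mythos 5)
Your proof is correct and matches the paper's approach exactly: the paper states this as an immediate corollary of Lemma~\ref{lem:unique-i-zero}, and your argument simply unwinds the definition of an \ref{OV1} violation to expose the contradiction with the uniqueness of the $i$-zero. Nothing more is needed.
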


The next lemma shows that the direction function at each $i$-zero points
towards the $(i-1)$-zero of the $(i-1)$-slice. This rules out any \ref{OV2}
solutions and crucially relies on the absence of violations in $f$.

\begin{lemma}
	Suppose that $f$ has no violations.
	Let $s'$ be an $(i-1)$-slice and $s$ an $i$-slice, with $s' \subseteq s$. 
	Let $x$ be the unique $i$-zero of $s$, and $y$ be the unique $(i-1)$-zero of $s'$. 
	Then $D_i(y) =\up $ if $y_i < x_i$, $D_i(y) = \down$ if $y_i > x_i$.
\end{lemma}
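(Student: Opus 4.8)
The plan is to argue by contradiction using the characterization of $D_i$ together with the hereditary property of least fixed points and the contiguity/non-expansion structure established earlier. Fix the $(i-1)$-slice $s'$, its unique $(i-1)$-zero $y$, the $i$-slice $s \supseteq s'$, and its unique $i$-zero $x$. Since $x$ and $y$ are both $(i-1)$-zeros (the former by Theorem~\ref{thm:hereditary}, the latter by hypothesis) and an $(i-1)$-slice has a unique $(i-1)$-zero by Lemma~\ref{lem:unique-i-zero}, $x$ and $y$ must agree on all free coordinates $1, \dots, i-1$; they differ only possibly in coordinate $i$. So it suffices to understand the one-dimensional behaviour of $h_i$ along the line through $y$ in direction $e_i$ inside the slice $s$.

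First I would handle the case $y_i < x_i$ and show $D_i(y) = \up$, i.e. $h_i(y) = 1$. Suppose not. Then $h_i(y) \in \{-1, 0\}$. If $h_i(y) = -1$, then $f_i(y) < y_i$, and since $x$ is a fixed point of $s$ in dimension $i$ (indeed $D_i(x) = \zero$ forces $h_i(x) = 0$) with $y \le x$ (they agree except in coordinate $i$ where $y_i < x_i$), this is a direct \ref{DV2} non-expansion violation with the pair $y, x$ along dimension $i$ — contradicting the assumption that $f$ has no violations. If $h_i(y) = 0$, then $y$ is $i$-fixed in $s$, and since $y$ agrees with $x$ on coordinates $1,\dots,i-1$ and $y_i < x_i$, the point $y$ would be a fixed point of $\slice_i(x) = s$ strictly below $x$ in $s$. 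But $x$ is the LFP of $s$ — here I use that the unique $i$-zero of $s$ is exactly its LFP, as shown in the proof of Lemma~\ref{lem:unique-i-zero} — so this contradicts leastness. Hence $h_i(y) = 1$ and $D_i(y) = \up$.

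The case $y_i > x_i$ and the claim $D_i(y) = \down$ is symmetric but requires a little care because $D_i = \down$ also covers $h_i(y) = 0 \wedge \lnot\IsLFP(y,i)$. Suppose $D_i(y) \ne \down$; then either $h_i(y) = 1$, or $h_i(y) = 0$ with $\IsLFP(y,i)$. In the first case $f_i(y) > y_i$ while $f_i(x) = x_i$ with $x \le y$ (agreeing off coordinate $i$, and $x_i < y_i$), which is a monotonicity violation \ref{DV1} in dimension $i$ with the pair $x, y$ — contradiction. In the second case $h_i(y) = 0$ means $y$ is $i$-fixed in $s$; by Theorem~\ref{thm:hereditary} applied in the opposite direction, if $y$ were the LFP of $\slice_i(y) = s$ then so would be the consistency with $x$ — but $x$ and $y$ are distinct $i$-fixed points of $s$ with $x_i < y_i$, so $x$, not $y$, is the LFP of $s$, whence $\IsLFP(y,i)$ is false (the verification algorithm would find $x$ or some fixed point below $y$), contradicting our supposition. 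Therefore $D_i(y) = \down$.

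The main obstacle I anticipate is the bookkeeping in the $h_i(y) = 0$ subcases: one must be precise that ``$y$ is $i$-fixed in $s$'' together with $y \ne x$ and the fact that $x$ is simultaneously the $i$-zero and the LFP of $s$ genuinely forces a contradiction, and that the $\IsLFP(y,i)$ predicate (run within the $i$-slice $s$) correctly reports \emph{false} when there is a lesser fixed point — this is exactly Lemma~\ref{lem:lfp verification in polytime}, but one should invoke it cleanly. Everything else is a short direction-by-direction case split producing either a \ref{DV1} or a \ref{DV2} violation, each of which is excluded by hypothesis. I would present the proof as the two cases above, each with its two subcases, citing Lemma~\ref{lem:unique-i-zero}, Theorem~\ref{thm:hereditary}, and Lemma~\ref{lem:lfp verification in polytime} where indicated.
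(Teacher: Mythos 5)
Your opening claim is false, and it is load-bearing for the rest of the argument. You say $x$ and $y$ must agree on all free coordinates $1, \dots, i-1$ because both are $(i-1)$-zeros and an $(i-1)$-slice has a unique one (Lemma~\ref{lem:unique-i-zero}). But Theorem~\ref{thm:hereditary} makes $x$ the $(i-1)$-zero of $\slice_{i-1}(x)$, the $(i-1)$-slice whose coordinate $i$ is fixed to $x_i$. Since $y_i \neq x_i$, $x \notin s'$, so $x$ is not the $(i-1)$-zero of $s'$, and uniqueness constrains nothing between $x$ and $y$. In general they do differ in coordinates below $i$; all one has is that the surfaces have gradient at most one. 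The paper instead invokes Corollary~\ref{cor:bind-dim} to get $\Normi{x-y} \le |x_i - y_i|$, hence equality, which is exactly the quantity the non-expansion argument needs. You arrive at the right number but from a wrong premise, and you also lean on the unproven comparability ``$y \le x$'' / ``$x \le y$'', which is not what you need anyway.

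There is also a misdiagnosed violation. In the subcase $y_i > x_i$ with $h_i(y) = 1$, you claim a monotonicity violation \ref{DV1} between $x$ and $y$. That is not one: granting $x \le y$, $f_i(y) = y_i + 1 > x_i = f_i(x)$ is precisely what monotonicity predicts. The actual contradiction is a non-expansion violation \ref{DV2}, symmetric to the $y_i < x_i$ subcase: $f_i(y) - f_i(x) = (y_i - x_i) + 1 > \Normi{y - x}$. Your other two subcases (the $h_i(y) = 0$ ones, where $y$ would be a fixed point of $s$ distinct from its LFP $x$, with $\IsLFP$ then necessarily false) are sound modulo the coordinate-agreement premise, and the overall case structure matches the paper's. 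Replacing the coordinate-agreement claim by Corollary~\ref{cor:bind-dim} and relabeling the $h_i(y)=1$ violation as non-expansion would make the argument correct and essentially the paper's proof.
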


\begin{proof}
	Suppose that $f$, $x$, $y$, $s$ and $s'$ satisfy the conditions of the lemma.
	Since $x$ is the unique $i$-zero of~$s$, it must hold that $D_j(x) = \zero$ for all 
	$j \leq i$, which in turn means that $x$ is the LFP of $s'$. 
	%

	Now assume that $y_i < x_i$.
	We want to show that this implies $D_i(y) =\up $.
	Towards a contradiction, suppose $D_i(y) = \down$. 
	Note that $h_i(y) = 0$ would imply that $y$ is an $i$-zero of $s$, but that 
	contradicts the fact that $x$ is the unique $i$-zero of $s$.
	This, we must have $h_i(y) = -1$.	
	Note that since $x$ is the unique $i$-zero of $s$, was have $D_i(x) = \zero$ and thus
	$h_i(x) = 0$.
	Now we apply Corollary~\ref{cor:bind-dim} with $j=i$ and $k=|y_i - x_i|$, which tells us that
	the \linf distance between $x$ and $y$ is at most $k$.
	Thus we have
	$$f_i(x) - f_i(y) = (x_i + h_i(x)) - (y_i - h_i(y)) = (x_i - y_i) + 1 = \Norm{x - y}_{\infty} + 1.$$
	But this means that $\Norm{f(x)-f(y)}_{\infty} > \Norm{x - y}_{\infty}$, 
	which is an \ref{DV2} violation of non-expansion, contradicting the fact that $f$ is 
	violation free.

	The proof for the case that $y_i > x_i$ is symmetric, and the proof is complete.
\end{proof}

Since an $i$-slice has a unique fixed point by Lemma~\ref{lem:unique-i-zero}, and we have just shown
that the $i$-zeros of~$i-1$ slices must point this fixed point in dimension $i$,
we get as an immediate corollary that we get the following.

\begin{corollary}
If $f$ has no violation solutions, then $D$ has no \ref{OV2} violations.
\end{corollary}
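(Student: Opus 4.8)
The plan is to derive the corollary directly from Lemma~\ref{lem:unique-i-zero} together with the lemma immediately preceding this corollary, which pins down the value of $D_i$ at an $(i-1)$-zero in terms of the unique $i$-zero of the surrounding $i$-slice. I would argue by contradiction: suppose $D$ has an \ref{OV2} violation, i.e.\ an $i$-slice $s$ and points $x,y$ on $s$ with $x_i = y_i + 1$, $D_j(x) = D_j(y) = \zero$ for all $j < i$, and $D_i(x) = \down$, $D_i(y) = \up$. Let $z$ be the unique $i$-zero of $s$, which exists by Lemma~\ref{lem:unique-i-zero}.

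The first step is to note that $D_j(x) = \zero$ for all $j < i$ makes $x$ an $(i-1)$-zero, and by Lemma~\ref{lem:unique-i-zero} it is in fact \emph{the} unique $(i-1)$-zero of the $(i-1)$-slice $\slice_{i-1}(x)$; likewise $y$ is the unique $(i-1)$-zero of $\slice_{i-1}(y)$. Since $x$ and $y$ both lie on the $i$-slice $s$, both of these $(i-1)$-slices are subslices of $s$, so the preceding lemma applies with the $(i-1)$-slice taken to be $\slice_{i-1}(x)$ and the $i$-slice taken to be $s$: it gives $D_i(x) = \up$ when $x_i < z_i$ and $D_i(x) = \down$ when $x_i > z_i$. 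The remaining case $x_i = z_i$ would make $x$ an $i$-zero of $s$, forcing $D_i(x) = \zero$; hence these three mutually exclusive cases are exhaustive and the implications are in fact equivalences, so $D_i(x) = \down$ yields $x_i \ge z_i + 1$. Running the same argument for $y$ gives $D_i(y) = \up$, hence $y_i \le z_i - 1$.

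Combining the two inequalities, $x_i = y_i + 1 \le z_i \le x_i - 1$, a contradiction; so no \ref{OV2} violation can exist. I do not expect any real obstacle here: the statement genuinely is an immediate corollary of the two preceding results. The only points needing a moment of care are the bookkeeping that $\slice_{i-1}(x)$ and $\slice_{i-1}(y)$ are subslices of $s$, and the observation that uniqueness of the $i$-zero upgrades the preceding lemma's one-way directional conditions into an equivalence. The degenerate case $i = 1$, where the $(i-1)$-slice collapses to a single point, is handled by reading Lemma~\ref{lem:unique-i-zero} with $i = 0$ and causes no trouble.
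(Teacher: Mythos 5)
Your proof is correct and is essentially the paper's argument, just written out in full: the paper dispatches this as "immediate" from the preceding lemma plus uniqueness of the $i$-zero, while you spell out the trichotomy ($x_i < z_i$, $x_i = z_i$, $x_i > z_i$) to upgrade the lemma's one-way implications into equivalences, which is exactly the step the paper elides. The contradiction $x_i = y_i + 1 \le z_i \le x_i - 1$ is the same one the paper has in mind, and your note on the degenerate $i=1$ case is a small bonus the paper does not mention.
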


The fact that there are no \ref{OV3} violations follows from the fact that $f$
maps points on the grid to other points on the grid. This means that $f$ can
never move to a point outside the grid, and so our reduction will never create
an \ref{OV3} violation.

So we have proved that, if $f$ has no violations, then the reduction will
produce an instance with a unique \ref{O1} solution that maps back to the LFP
of $f$, and that there cannot be any violation solutions. 

This establishes the following theorem. 

\begin{theorem}
	\label{thm:promise-DMAC-to-promise-opdc}
	There is a polynomial-time reduction from
	the promise version of $\DMAC$ to the promise version of $\OPDC$. 
\end{theorem}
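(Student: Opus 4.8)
The plan is to observe that all the required machinery has already been assembled in Section~\ref{sec:DMACtoOPDC}, so the proof is essentially a matter of packaging the construction and invoking the preceding lemmas. Given a \DMAC instance $f : G \to G$, I would first pass through the reduction of Section~\ref{sec:unitdisplacement} so that, without loss of generality, $\Normi{x - f(x)} \le 1$ for every $x \in G$; this is precisely what makes the displacement $h_i(x) = f_i(x) - x_i$ take values in $\{-1,0,1\}$ and hence makes the direction functions $D_i$ well-defined. The \OPDC instance is then $D = (D_i)_{i \in [d]}$ over the same grid domain $G$, with $D_i$ defined from $h_i$ and the predicate $\IsLFP(\cdot, i)$ exactly as in the definitions above.

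Next I would check that the reduction runs in polynomial time. Evaluating $D_i(x)$ costs one query to $f$ to obtain $h(x)$, and, in the boundary case $h_i(x) = 0$, one invocation of $\FindLesserFixpoint(f, x - \One_{\free(\slice_i(x))}, x)$; by Lemma~\ref{lem:lfp verification in polytime} that invocation makes at most $d$ further queries to $f$ and otherwise runs in time polynomial in the representation of $f$. So each $D_i$ is polynomial-time computable, and $D$ is a legitimate \OPDC instance produced in polynomial time. Mapping an \ref{O1} solution back to \DMAC is immediate: if $D_j(x) = \zero$ for all $j$ then $h_j(x) = 0$ for all $j$, i.e.\ $f(x) = x$, a \ref{D1} solution. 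It then remains to show that, under the promise that $f$ is violation-free, $D$ has no \OPDC violation. This is exactly what the three preceding results give: Lemma~\ref{lem:unique-i-zero} shows every $i$-slice has a unique $i$-zero, which rules out \ref{OV1}; the lemma asserting that the direction function at an $i$-zero of the $(i-1)$-slice points toward the unique $i$-zero of the enclosing $i$-slice rules out \ref{OV2}; and \ref{OV3} cannot occur because $f$ maps $G$ into $G$, so $h$ never pushes a point off the grid. Combining these, $D$ is a violation-free \OPDC instance whose unique \ref{O1} solution is the least fixed point of $f$, which establishes the theorem; since \OPDC lies in \UEOPL~\cite{FGMS20}, this also places the promise versions of \DMAC and hence \MonotoneContraction in the promise version of \UEOPL.

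In truth the genuinely hard ingredients lie upstream — the polynomial-time verifiability of least fixed points (Lemma~\ref{lem:lfp verification in polytime}) and their hereditariness (Theorem~\ref{thm:hereditary}), which are what power Lemma~\ref{lem:unique-i-zero} — so within this theorem itself there is no real obstacle. The one point I would be careful about is not skipping the unit-displacement normalization of Section~\ref{sec:unitdisplacement}: without it $h_i$ is not $\{-1,0,1\}$-valued and the case analysis defining $D_i$ (and the argument ruling out \ref{OV2}, which uses that $h_i(x) = -1$) would not be justified. The remaining, longer and more delicate, task of handling \DMAC instances that \emph{do} contain violations — i.e.\ upgrading this promise-preserving reduction to a full \UEOPL-containment result — is deferred to the subsequent subsection and is not needed here.
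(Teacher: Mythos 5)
Your proposal is correct and follows essentially the same route as the paper: it packages the direction-function construction with the unit-displacement normalization, invokes Lemma~\ref{lem:lfp verification in polytime} for polynomial-time computability of $\IsLFP$, and rules out \ref{OV1}, \ref{OV2}, \ref{OV3} via Lemma~\ref{lem:unique-i-zero}, the subsequent lemma on direction at $(i-1)$-zeros, and the fact that $f$ maps $G$ into $G$, respectively. You also correctly identify that the substantive content lies in the upstream lemmas rather than in this theorem's own assembly.
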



\subsubsection{The full reduction: handling violations} 
\label{sec:full_reduction}

\bigskip

We now give a full reduction between the non-promise versions of the problems. 
The reduction itself will be identical to the reduction used in the previous
section, but we must now handle the case where the function $f$ has violations. 
We recall that now when we say ``verified LFP'' this is only a \emph{purported}
LFP, as due to violations a successful run of the verification procedure does
not mean that the inputted candidate LFP really is one.
Nonetheless we still have the (full) verification sequence from the 
``successful'' run of the verification procedure and will use this
sequences to produce violations in the original \DMAC instance in certain cases
when we get a violation in our produced \OPDC instance.

We produce that \OPDC instance in exactly the same way as we did in the last
section.
Our arguments from the promise reduction still imply any solution of type \ref{O1} can be mapped
back to a solution of type \ref{D1}, but in contrast to the previous section, the \OPDC instance may
contain violation solutions, and we must map these back to violations of the $\DMAC$ instance. 

In the rest of this section, we prove the following theorem.

\begin{theorem} 
	\label{thm:main}
	There is a promise-preserving polynomial time reduction from $\DMAC$ to $\OPDC$.
\end{theorem}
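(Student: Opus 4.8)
The plan is to reuse the reduction from Section 5.3.3 verbatim---the same direction function $D$ built from $\IsLFP$---and to argue that every \OPDC violation it can produce can be unwound, in polynomial time, into a \DMAC violation of $f$. Since Theorem~\ref{thm:promise-DMAC-to-promise-opdc} already handles the promise case and shows that \ref{O1} solutions map to \ref{D1} solutions, the only thing left is to deal with \ref{OV1}, \ref{OV2}, and \ref{OV3} solutions of the produced instance. The \ref{OV3} case is immediate: $f$ maps $G$ into $G$, and $D_i$ is never $\up$ at a point with $x_i = k_i$ nor $\down$ at a point with $x_i = 0$, because $h_i$ respects the grid boundary; so no \ref{OV3} violation can ever be produced, and nothing needs to be mapped back. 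This should be stated as a short lemma.

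The substantive work is \ref{OV1} and \ref{OV2}. For \ref{OV1} we are handed an $i$-slice $s$ and two distinct points $x \ne y$ on $s$ with $D_j(x) = D_j(y) = \zero$ for all $j \le i$. By the definition of $D$, this means $h_j(x) = h_j(y) = 0$ for $j \le i$ and both $x$ and $y$ are verified LFPs of their $i$-slices (which are sub-slices of $s$), i.e. $\FindLesserFixpoint$ succeeded on each. Because the instance has violations, these verifications may be spurious. The key observation is that if $f$ were violation-free, then by Theorem~\ref{thm:hereditary} and Lemma~\ref{lem:unique-i-zero} each $i$-slice would have a unique $i$-zero, so two distinct ones witness a violation somewhere along the verification sequences. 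Concretely, run $\FindLesserFixpoint(f, x - \One_{\free}, x)$ and $\FindLesserFixpoint(f, y - \One_{\free}, y)$ again to obtain the verification sequences $\VSeq(x)$ and $\VSeq(y)$ (each of length at most $i$, since we work inside the $i$-slice). Consider the sublattice spanned between $x$ and $y$ within $s$; since both are $i$-fixed and distinct, if $f$ restricted to this region obeyed monotonicity and non-expansion then the arguments of Lemma~\ref{lem:fixed-point-between} / Corollary~\ref{cor:neighboring-lesser-fp} would force a fixed point strictly between them to have been found by one of the two verification runs, contradicting that both returned successfully. So by contrapositive, somewhere among the (polynomially many) points of $\VSeq(x)$, $\VSeq(y)$, the two corner points, and their pairwise $\ell_\infty$-neighbours, a pair of queried points exhibits either a monotonicity failure or a non-expansion failure; we exhibit it explicitly. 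This is essentially a finite case check over a polynomial-size set of already-queried (or cheaply queryable) points, so it runs in polynomial time.

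For \ref{OV2} we are given an $i$-slice $s$ and points $x, y$ on $s$ with $x_i = y_i + 1$, $D_j(x) = D_j(y) = \zero$ for $j < i$, and $D_i(x) = \down$, $D_i(y) = \up$. Unfolding $D$: $h_i(y) = 1$ (since $D_i(y) = \up$), while $D_i(x) = \down$ means either $h_i(x) = -1$ or $h_i(x) = 0$ with $\lnot\IsLFP(x,i)$; the second case already gives a non-trivial verification sequence to exploit. In the violation-free world, the proof of ``no \ref{OV2} violations'' invoked Corollary~\ref{cor:bind-dim} on the $(i-1)$-slice LFPs of $\slice_{i-1}(x)$ and $\slice_{i-1}(y)$ to bound their $\ell_\infty$-distance, and then read off a non-expansion violation in dimension $i$. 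Here we mimic that: we recompute the verified $(i-1)$-slice LFPs $x' = \FindLesserFixpoint(\dots, x)$-style and $y'$ for the two $(i-1)$-slices (both verified since $D_j(x) = D_j(y) = \zero$ for $j < i$ forces $h_j = 0$ and $\IsLFP$ true), obtain their verification sequences, and either the chain of distance-1 steps guaranteeing $\Norm{x' - y'}_\infty \le 1$ in dimension $j$ by Lemma~\ref{lem:lfps-are-adjacent} breaks down at some adjacent pair---yielding a monotonicity or non-expansion violation among those polynomially many queried points---or it holds, in which case the exact computation $f_i(x') - f_i(y') \ge (x_i - y_i) + (\text{displacement gap}) > \Norm{x' - y'}_\infty$ from the promise proof goes through unchanged and directly exhibits a \ref{DV2} violation.

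The main obstacle is the \ref{OV1} case: unlike \ref{OV2}, there is no single clean inequality to reproduce, because the contradiction in the promise setting was obtained via Tarski's theorem (an existence statement) applied to a sublattice, not a direct witness. The work is therefore to convert that existence argument into an explicit search: I would argue that the region $R = B(x, k-1) \cap B(y, k-1) \cap \sublattice(\min(x,y), \max(x,y))$ used in Lemma~\ref{lem:fixed-point-between}, together with the verification sequences inside it, must contain a concrete queried pair of points violating monotonicity or non-expansion, because otherwise $f$ would map $R$ to itself monotonically and Tarski would force a lesser fixed point that one of the verification runs must have detected. Making this ``otherwise, some queried pair is bad'' step fully explicit---identifying exactly which $O(d)$ points to query and which pair to output---is the delicate bookkeeping; once it is done, assembling the three cases and invoking Theorem~\ref{thm:promise-DMAC-to-promise-opdc} for the non-violation parts completes the proof of Theorem~\ref{thm:main}, and combining with Theorem~\ref{thm:mctodmac} and the known \UEOPL-membership of \OPDC~\cite{FGMS20} yields the \UEOPL containment of \DMAC and hence of \MonotoneContraction.
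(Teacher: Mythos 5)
Your overall skeleton is correct — reuse the promise reduction, rule out \ref{OV3} by construction, then map \ref{OV1} and \ref{OV2} back — but the \ref{OV1} argument you sketch has a genuine logical gap, not merely ``delicate bookkeeping.'' You propose to conclude that some queried pair among the verification sequences and a handful of nearby points must already exhibit a violation, ``because otherwise $f$ would map $R$ to itself monotonically and Tarski would force a lesser fixed point.'' This implication does not hold: the region $R = B(x,k-1) \cap B(y,k-1) \cap \sublattice(\cdot,\cdot)$ has exponentially many points, and your verification runs touch only $O(d)$ of them, so the absence of a violation among the queried pairs says nothing about whether $f$ maps all of $R$ into itself. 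The paper's actual technique for the hard ($\Normi{x-y}>1$, comparable) case is different and constructive: it \emph{projects} the verification sequence $v^0,\dotsc,v^m$ of $x$ orthogonally onto the minimal slice $s'$ containing both $x$ and $y$, yielding $w^0,\dotsc,w^k$; if some projected step fails to point upward, $w^{i-1}$ and the matching $v$-point witness \ref{DV1}; otherwise some step $j$ increases $\Normi{y-w^j}$, and the binding coordinate $d_j$ gives a direct \ref{DV2} witness between $w^{j-1}$ and $y$ (Lemma~\ref{lem:ov1-comparable-far}). No appeal to Tarski or to a contradiction is made once violations are possible.

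Two further omissions: you do not handle the case where $x$ and $y$ are incomparable — Lemma~\ref{lem:fixed-point-between} requires $y \le x$, so you need the paper's separate reduction (Lemma~\ref{lem:ov1-incomparable}) that takes $z = \min(x,y)$ coordinate-wise, tests each free coordinate for \ref{DV1} against $x$ or $y$, and otherwise falls through to the comparable lemmas. And for \ref{OV2} the paper does not rely on a ``chain of distance-1 steps'' between already-computed LFPs (Lemma~\ref{lem:lfps-are-adjacent} is a structural statement about violation-free instances, not a queryable chain); instead it iterates $f$ from a corner of $B(x,1)\cap s'$ at most $2d$ times (Lemma~\ref{lem:ov2-nbor-slice-poly-time}) to either locate an explicit $i$-fixed point of the adjacent slice or catch a \ref{DV1}/\ref{DV2} violation along the way, and then reads the final violation off the verification sequence of $x$. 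These are the pieces you would need to supply to close the proof.
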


We note that \ref{OV3} violations are impossible by definition. 
This is because such a violation would have the direction function pointing outside the whole
domain, but that is not possible since $f$ maps $G$ to itself in \DMAC.
Thus we rule out \ref{OV3} violations.

We do need to deal with \ref{OV1} violations and \ref{OV2} violations. 
We consider each separately in the following two sections, and in each case we show that these
violations can be efficiently mapped back to violations of $\DMAC$. 
For both \ref{OV1} and \ref{OV2} violations, we may produce either \ref{DV1} or \ref{DV2}
violation.

\paragraph{\bf Handling \ref{OV1} violations.}

Recall that in an \ref{OV1} violation we are given two distinct $i$-zeros $x$
and $y$ of the same $i$-slice. This could not happen if the original \DMAC instance 
was violation free because it would mean that the slice has two LFPs, which is impossible.
Thus there must be some violation in the original $\DMAC$ instance. 
We will give a polynomial-time algorithm to find one such violation.
We do this by considering three cases: where $x$ and $y$ are comparable and within distance 1;
comparable but further apart; incomparable.
The following lemmas will allow us to deal with these cases.

The next lemma deals with the case that the points are within distance 1 of each other so that 
one point is within the verification cube of the other.
In this case we will produce a \ref{DV1} violation of monotonicity in polynomial time. 


\begin{lemma} Suppose we are given points $x$ and $y$ on slice $s$ satisfying:
\label{lem:ov1-comparable-close}
\begin{enumerate}    
	\item $x$ is a verified LFP of $s$;
	\item $y < x$ and $y\in \sublattice(x-\One, x)$; and 
	\item $f_i(y) - y_i \leq 0\quad \forall i\in \free(s)$.
\end{enumerate} Then we can find a violation of type \ref{DV1} in polynomial time. 
\end{lemma}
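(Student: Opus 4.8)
The plan is to read the desired monotonicity violation straight off the verification sequence of $x$. Since $x$ is a verified LFP of $s$, running $\FindLesserFixpoint$ inside $s$ produced $\VSeq(x) = (v^0, v^1, \dots, v^{d'})$, where $d' = |\free(s)|$, $v^0 = x - \One_{\free(s)}$ and $v^{d'} = x$. Each recursive step replaces the current point $v$ by $v + e_j$ for some free coordinate $j$ with $v_j = x_j - 1$ and $f_j(v) > v_j$, and no free coordinate is ever increased twice (once it reaches the level of $x$ the test $x_j > v_j$ fails). Hence the coordinates flipped along the sequence form a permutation $i_1, \dots, i_{d'}$ of $\free(s)$, the point $v^k$ agrees with $x$ on $i_1, \dots, i_k$ and has $v^k_{i_l} = x_{i_l} - 1$ for $l > k$, and at each $v^{k}$ the passed test records $f_{i_{k+1}}(v^{k}) > v^{k}_{i_{k+1}}$.

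Next I would choose the point of this sequence to pair with $y$. Because $y \in \sublattice(x - \One, x)$, each free coordinate of $y$ equals $x_j$ or $x_j - 1$; set $S_1 := \{ j \in \free(s) : y_j = x_j - 1 \}$, which is nonempty since $y < x$, and let $m$ be the least index with $i_m \in S_1$. A coordinate-by-coordinate comparison then shows $v^{m-1} \le y$: for $l < m$ we have $i_l \notin S_1$, so $y_{i_l} = x_{i_l} = v^{m-1}_{i_l}$; for $l = m$ we have $y_{i_m} = x_{i_m} - 1 = v^{m-1}_{i_m}$; for $l > m$ we have $v^{m-1}_{i_l} = x_{i_l} - 1 \le y_{i_l}$; and the fixed coordinates agree since both points lie on $s$.

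Finally I would combine the two bounds on coordinate $i_m$. Hypothesis~(3) gives $f_{i_m}(y) \le y_{i_m} = x_{i_m} - 1$ (using $i_m \in S_1$), whereas the test passed at $v^{m-1}$ gives $f_{i_m}(v^{m-1}) > v^{m-1}_{i_m} = x_{i_m} - 1$. Thus $v^{m-1} \le y$ while $f_{i_m}(v^{m-1}) > f_{i_m}(y)$, which is precisely a violation of type \ref{DV1} — and this strict inequality also forces $v^{m-1} \ne y$, so the degenerate case $v^{m-1} = y$ never arises. Everything is polynomial time: $\VSeq(x)$ is already available from the verification of $x$, and extracting $S_1$, the index $m$, and the pair $(v^{m-1}, y)$ is immediate.

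I do not expect a substantive obstacle. The only point that needs care is ensuring the chosen pair is genuinely comparable and distinct; this is exactly what choosing $m$ to be the first index of the verification order that lies in $S_1$ buys us (for comparability), with distinctness following from hypothesis~(3) as above. A secondary bookkeeping point is to keep the argument inside the slice $s$ throughout, i.e. that $\free(s)$ for an $i$-slice is $\{1,\dots,i\}$ and that the verification sequence, initial point, and $\One$ are all taken relative to $\free(s)$; this is routine.
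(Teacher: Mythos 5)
Your argument is correct and follows essentially the same route as the paper: both proofs pick out the first step $v^{i}\to v^{i+1}$ of the verification sequence whose flipped coordinate $j$ has $y_j = x_j - 1$, observe that $v^i \le y$, and then combine the recorded test $f_j(v^i) > v^i_j$ with hypothesis~(3), which gives $f_j(y) \le y_j = v^i_j$, to produce the \ref{DV1} pair $(v^i, y)$. The only cosmetic differences are that you make the ``first index in $S_1$'' bookkeeping explicit via the permutation of $\free(s)$ (where the paper just says ``let $i$ be such that $y \in L(v^i,x)$ but $y \notin L(v^{i+1},x)$''), and that you explicitly note the pair is distinct; both are fine and match the paper's intent.
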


\begin{proof}
	Let $k:= |\free(s)|$.
	Since $x$ is a verified LFP, there is a verification sequence 
	$x - \One = v^{0} \leq v^1 \leq \dotsb \leq v^k = x$. 
	Let $i$ be the index such that $y \in \sublattice(v^i,x)$ but $y \notin \sublattice(v^{i+1},x)$, 
	which must exist since $y < x$ and $y\in \sublattice(x-\One, x)$.
	Let $j$ be the dimension in which $v^i$ and $v^{i+1}$ differ.
	Since the verification algorithm moved from $v^i$ to $v^{i+1}$, we have  $f_j(v^i) > v^i_j$. 
	By assumption~3, it must hold that $f_j(y) \le y_j$, as $j$ is a free dimension of the slice. 
	Since $v^i$ is the minimum element of~$\sublattice(v^i,x)$ we have $v^i \le y$ by definition
	of $i$
	However, $f_j(v^i) \not \le f_j(y)$, which corresponds to violation of type \ref{DV1}.
\end{proof}

The next case is where $x$ and $y$ are comparable, with  $y \le x$, but they are not within 
distance 1 of each other
In this case, we will produce in polynomial time either a \ref{DV1} monotonicity violation, or 
a \ref{DV2} violation of non-expansion.

\begin{lemma} 
\label{lem:ov1-comparable-far}
Suppose we are given points $x$ and $y$ on slice $s$ satisfying:
\begin{enumerate}    
	\item $x$ is a verified LFP of $s$;
	\item $y < x$ and $y\notin \sublattice(x-\One, x)$; and 
	\item $f_i(y) - y_i \leq 0\quad \forall i\in \free(s)$.
\end{enumerate} Then we can find a violation of type \ref{DV1} or \ref{DV2} in polynomial time. 
\end{lemma}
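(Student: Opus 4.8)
The plan is to \emph{climb} from $y$ up into the verification cube of $x$, extracting a violation along the way or, failing that, landing on a point to which Lemma~\ref{lem:ov1-comparable-close} directly applies. Write $k \coloneqq \Normi{x-y} \ge 2$ and let $J \coloneqq \{\, j \in \free(s) : x_j - y_j = k \,\}$ be the set of free dimensions where the gap between $y$ and $x$ is maximal; $J \ne \emptyset$, and $y_j \le x_j - 2$ for every $j \in J$. I would initialize $w$ to be the coordinatewise maximum of $y$ and $x - \One_{\free(s)}$, so that $x - \One_{\free(s)} \le w \le x$, $w \ge y$, and $w \ne x$ (the last because $y<x$ forces some free coordinate of $y$ to be strictly below that of $x$). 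In particular $w \in \sublattice(x-\One,x)$ and $w_j = x_j - 1$ for every $j \in J$.

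Then I would run the following loop, re-querying $f$ at the current $w$; it terminates after at most $\Abs{\free(s)}+1$ iterations, since $w$ only increases towards $x$ and each free coordinate of the initial $w$ is within $1$ of $x_j$. \textbf{(i)} If some free $j$ has $w_j = x_j$ and $f_j(w) > x_j$: since $x$ is a verified LFP we have $f_j(x) = x_j$, so $w \le x$ with $f_j(w) > f_j(x)$ is a \ref{DV1} violation; stop. \textbf{(ii)} Otherwise, if some free $j$ has $w_j < x_j$ and $f_j(w) > w_j$: if $j \in J$, then $(y,w)$ is a \ref{DV2} violation — no coordinate in $J$ has yet been incremented (incrementing one would have triggered this branch earlier), so $w_{j'} - y_{j'} = k-1$ for $j' \in J$, $w_{j'}-y_{j'} \le x_{j'}-y_{j'} < k$ for $j' \in \free(s)\setminus J$, and $w_{j'}=y_{j'}$ for $j' \in \fixed(s)$, whence $\Normi{w-y} \le k-1$, while $f_j(w) - f_j(y) \ge (w_j+1) - y_j = x_j - y_j = k$ using $f_j(w)\ge w_j+1$, $w_j = x_j-1$, and hypothesis~3 ($f_j(y)\le y_j$); thus $\Normi{f(w)-f(y)} > \Normi{w-y}$, and we stop. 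If instead $j \notin J$, set $w_j \leftarrow w_j+1$ and iterate. \textbf{(iii)} Otherwise no free coordinate is ``bumpable'', which together with (i) not firing gives $f_j(w) \le w_j$ for all $j \in \free(s)$; since $w$ was only ever increased we still have $w \in \sublattice(x-\One,x)$, and $w \ne x$ because reaching $x$ would require incrementing a coordinate in $J$, which branch (ii) forbids. Hence $w < x$, $w \in \sublattice(x-\One,x)$, $x$ is a verified LFP, and $f_j(w)-w_j \le 0$ for all free $j$ — exactly the hypotheses of Lemma~\ref{lem:ov1-comparable-close} applied to $(x,w)$ — so we obtain a \ref{DV1} violation. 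In every case the work is polynomial: at most $\Abs{\free(s)}+1$ queries to $f$, plus the polynomial-time call to Lemma~\ref{lem:ov1-comparable-close}.

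The step I expect to require the most care is the distance bookkeeping in case (ii): the entire reason for singling out the maximal-gap set $J$, and for starting $w$ at $y \vee (x - \One_{\free(s)})$, is to guarantee that at the first moment a coordinate in $J$ wants to move up, \emph{every} coordinate of $w-y$ is at most $k-1$, so that the length-$k$ displacement of $f$ in dimension $j$ strictly beats $\Normi{w-y}$ and yields a genuine \ref{DV2} violation. Besides this, I would double-check the degenerate points — that $\Abs{\free(s)} \ge 1$ so $k$ is well defined on the slice, that fixed coordinates of $w$ and $y$ coincide throughout (both equal the fixed values of $s$), and that the pair fed into each violation is genuinely a pair of distinct points.
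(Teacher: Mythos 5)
Your proposal is correct, and it genuinely differs from the paper's proof. The paper works directly with the verification sequence of $x$: it projects $v^0,\dots,v^m$ onto the minimal slice $s'$ containing both $x$ and $y$, observes that if any projected transition fails to persist we already get a \ref{DV1} witness against the original sequence, and otherwise locates the step that grows $\Normi{\cdot\,-\,y}$ in a binding dimension, which produces the \ref{DV2} pair. You instead abandon the verification sequence entirely except via Lemma~\ref{lem:ov1-comparable-close}: you build a fresh climbing path starting at $w = y \vee (x - \One_{\free(s)})$, re-query $f$ at each waypoint, and stop with a violation or fall into the ``close'' lemma. I checked the bookkeeping: $J \neq \emptyset$ and $k \ge 2$; initializing $w$ at $y \vee (x-\One_{\free(s)})$ pins every $J$-coordinate to $x_j-1$; since branch (ii) with $j\in J$ never increments, $w_j = x_j-1$ for $j\in J$ throughout, so $\Normi{w-y}=k-1$ exactly at the critical moment and $w<x$ always, which makes both the \ref{DV2} inequality in branch (ii) and the hypotheses of Lemma~\ref{lem:ov1-comparable-close} in branch (iii) go through (including the implicit point that case (iii) cannot fire with $w = x-\One_{\free(s)}$, since that would contradict $x$ being a verified LFP). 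Branch (i) is sound because a verified LFP is in particular a fixed point of the slice, so $f_j(x)=x_j$ for free $j$. The trade-off: the paper's route reuses the already-computed verification sequence and its projection, whereas yours is more modular and recycles the close lemma as a black box at the cost of up to $\Abs{\free(s)}+1$ additional queries to $f$; both are polynomial.
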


\verificationSequence

\begin{proof}
	Suppose $y\leq x$ and $y \notin \sublattice(x-\One,x)$. 
	Let $s'$ be the unique minimal slice containing $\sublattice(y,x)$, i.e., where 
	$\free(s') = \{j : j\in [d], x_j \ne y_j\}$.
	Let $m:= |\free(s)|$.
	We have a verification sequence for $x$ on slice~$s$, say $v^0, v^1,\dotsc,v^m$.
	We do an orthogonal projection of this sequence onto $s'$, i.e., 
	onto the hyperplane determined by the fixed coordinates of $s'$. 
	Figure~\ref{fig:verif-seq-proj} provides an illustration.

	Observe that if $v^{i+1}-v^i$ was normal to the hyperplane passing through $s'$, 
	two points of the original verification sequence might be projected to the same point on $s'$.
	We define the sequence $w^0, w^1, \dots, w^{k}$ as the distinct points of the projected 
	sequence, where $k \le |\free(s')|$.

	For $i \in [k]$, let $d_i \in \free(s')$ denote the dimension travelled to get from $w^{i-1}$ to
	$w^{i}$, so $w^{i} - w^{i-1} = e_{d_i}$. 
	Every transition in the original verification sequence points upwards by
	definition, and if this does not persist after the projection, i.e., if there is 
	any $i\in [k]$ such that $f_{d_i} (w^{i-1}) - w^{i-1}_{d_i} \not > 0$, then
	$w^{i-1}$ together with its unique analogue in the original $v$ sequence that went up in 
	dimension $d_i$ witness a \ref{DV1} violation.

	Thus, we now assume that $f_{d_i} (w^{i-1}) - w^{i-1}_{d_i} > 0$ for all $i\in [k]$.
	%
	 
	Given two points $a, b\in G$, for $z=a-b$, let $\bind(z)$ be the dimensions with largest absolute value: 
	$\bind(z) \coloneqq \Set{i:\Normi{z} = \Abs{z_i}}$. 
	%
	%
	Since $\Normi{y-x} \ge 1$, there must be a step in the projected verification that increases
	that distance from $y$ by 1, i.e., there exists a $j \in [k]$ such that:
	$$\Normi{y-w^{j}} = \Normi{y-w^{j-1}} + 1.$$
	For the binding dimension $d_j \in \bind(w^{j-1} - y)$, we know that 
	$f_{d_j} (w^{j-1}) - w^{j-1}_{d_j} > 0$. 
	However, by assumption $f_{d_j} (y) - y_{d_j}  \leq 0$, which leads to 
	\[\ |f_{d_j}(w^{j-1}) - f_{d_j}(y)| >  |w^{j-1}_{d_j} - y_{d_j}|\] 
	which means that $w^{j-1}$ and $y$ constitute a violation of type \ref{DV2}.
\end{proof}

The final case is for when $x$ and $y$ are incomparable. Here we will generate a new point $z$ that 
is comparable (and beneath) both $x$ and $y$. 
We then show that if we do not get a violation of monotonicity when we consider each relevant 
dimension in turn, then we can use either the pair $x$ and $z$ or the pair $y$ and $z$ with 
one of the two Lemmas~\ref{lem:ov1-comparable-close} and~\ref{lem:ov1-comparable-far} to produce
the desired violation.

\begin{lemma} 
	\label{lem:ov1-incomparable}
	If $x$ and $y$ are both verfied LFPs of an $i$-slice $s$, and $x$ and $y$ are incomparable, 
	then we can find a violation of type~\ref{DV1} or~\ref{DV2} in polynomial time.
\end{lemma}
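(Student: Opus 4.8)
The plan is to take $z$ to be the coordinatewise meet of $x$ and $y$, i.e.\ $z := x \wedge y$ with $z_j = \min(x_j, y_j)$ for every $j$. Since $x$ and $y$ are incomparable we have $z \leq x$ and $z \neq x$ (equality would force $x \leq y$), so $z < x$; symmetrically $z < y$. Also $z$ lies on the same slice $s$, because $x$ and $y$ agree on every coordinate in $\fixed(s)$ and $z$ inherits that common value there. Computing $z$ needs no queries to $f$. I would then make the $|\free(s)| \leq d$ queries required to read off the signs of $f_j(z) - z_j$ for each free dimension $j$, and split into two exhaustive cases.

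\emph{Case 1: there is a free dimension $j$ with $f_j(z) > z_j$.} By construction $z_j = \min(x_j, y_j)$, so $z_j$ equals $x_j$ or $y_j$; assume without loss of generality $z_j = x_j$ (otherwise swap the roles of $x$ and $y$). Since $x$ is a verified LFP of the $i$-slice $s$, it is in particular fixed on $s$, so $f_j(x) = x_j$ for every $j \in \free(s)$. Hence $f_j(z) > z_j = x_j = f_j(x)$ while $z \leq x$, which is precisely a \ref{DV1} monotonicity violation, produced with at most $d$ queries.

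\emph{Case 2: $f_j(z) - z_j \leq 0$ for every $j \in \free(s)$.} Then $z$, in the role of the point ``$y$'' in Lemmas~\ref{lem:ov1-comparable-close} and~\ref{lem:ov1-comparable-far}, satisfies hypotheses~(1) and~(3) of both lemmas: $x$ is a verified LFP of $s$, $z < x$, and $f_j(z) - z_j \leq 0$ on all free dimensions. If $z \in \sublattice(x-\One, x)$, Lemma~\ref{lem:ov1-comparable-close} yields a \ref{DV1} violation in polynomial time; if $z \notin \sublattice(x-\One, x)$, Lemma~\ref{lem:ov1-comparable-far} yields a \ref{DV1} or \ref{DV2} violation in polynomial time. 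In every branch we obtain a \DMAC violation with polynomially many queries and polynomial post-processing, which proves the lemma.

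The step I expect to be the main thing to get right — though it is not deep — is checking that $z$ genuinely satisfies all the preconditions of the two invoked lemmas, in particular that $z$ stays on $s$ and that $z < x$ strictly rather than merely $z \leq x$; both rest entirely on the incomparability of $x$ and $y$, so that is the load-bearing observation. The rest is bookkeeping: the case split is exhaustive by definition, and the meet is obtained from $x$ and $y$ directly.
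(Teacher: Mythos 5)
Your proposal is correct and follows essentially the same route as the paper: construct $z = x \wedge y$, note that incomparability forces $z < x$ and $z < y$ strictly, check each free dimension of $z$ for an upward displacement (which yields a \ref{DV1} violation against whichever of $x,y$ agrees with $z$ in that coordinate), and if none is found, hand $z$ together with $x$ (or $y$) to Lemmas~\ref{lem:ov1-comparable-close} and~\ref{lem:ov1-comparable-far} exactly as their hypotheses require. The only cosmetic difference is that you package the dimension scan as an up-front case split rather than as a sequential check, which does not change the argument.
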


\begin{proof}
	We construct a point $z$ such that, for each dimension $i \in [d]$, $z_i = \min (x_i, y_i)$. 
	Since $x$ and $y$ are incomparable, we have $z < x$ and $z < y$. 
	We consider each dimension $i \in \free(s)$ in turn.
	If $z_i = x_i$ and $f_i(z) - z_i > 0$, we have a violation of type 
	\ref{DV1} between $z$ and $x$, since $z < x$ and $f_i(z) > f_i(x)$. 
	If $z_i = y_i$ and $f_i(z) - z_i > 0$, we have a violation of type 
	\ref{DV1} between $z$ and $y$.
	If we do not find a violation through this process, then we have $f_i(z) - z_i \le 0$ for all
	dimensions $i \in \free(s)$.
	
	Now we can pick the pair $x$ and $z$ or the pair $y$ and $z$; the argument is the same is 
	completely analogous in both cases.
	Let us take $x$ and $z$.
	Now if $z \in \sublattice(x-\One, x)$ we can apply Lemma~\ref{lem:ov1-comparable-close}
	to get the desired \ref{DV1} or \ref{DV2} violation;
	else $z \not \in \sublattice(x-\One, x)$ and we can apply Lemma~\ref{lem:ov1-comparable-far}
	to get the desired violation.
\end{proof}

With the preceding three lemmas we have covered all possible cases and thereby shown the following.

\begin{lemma} \label{lem:ov1_mapping}
	If there exists a violation of type \ref{OV1}, then we can find a violation of type \ref{DV1}
	or \ref{DV2} in polynomial time.
\end{lemma}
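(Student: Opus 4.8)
The statement is an immediate corollary of the three lemmas that precede it (Lemmas~\ref{lem:ov1-comparable-close}, \ref{lem:ov1-comparable-far}, and \ref{lem:ov1-incomparable}), so the proof is essentially a case split on the relative position of the two $i$-zeros $x$ and $y$ that constitute the \ref{OV1} violation. The plan is to first recall that an \ref{OV1} violation gives us an $i$-slice $s$ and two distinct points $x,y$ on $s$ with $D_j(x) = D_j(y) = \zero$ for all $j \le i$; by the definition of the direction function $D$, this means that both $x$ and $y$ are verified LFPs of $s$ (each of them passed the $\FindLesserFixpoint$ check in every dimension $j \le i$, and in particular in the $i$-slice $s$), and moreover $h_j(x) = h_j(y) = 0$ for all $j \le i$, i.e. $f_j(x) = x_j$ and $f_j(y) = y_j$ for all $j \in \free(s)$.

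First I would dispose of the comparable case. Without loss of generality assume $y < x$ (if instead $x < y$, swap the roles of $x$ and $y$, which is symmetric since both are verified LFPs). Since $f_j(y) = y_j \le 0$ in displacement for every $j \in \free(s)$, hypothesis~3 of both Lemma~\ref{lem:ov1-comparable-close} and Lemma~\ref{lem:ov1-comparable-far} is satisfied, and hypothesis~1 (that $x$ is a verified LFP of $s$) holds by the above. Then I split on distance: if $y \in \sublattice(x - \One, x)$, apply Lemma~\ref{lem:ov1-comparable-close} to obtain a \ref{DV1} violation in polynomial time; otherwise $y \notin \sublattice(x - \One, x)$ and Lemma~\ref{lem:ov1-comparable-far} applies to yield a \ref{DV1} or \ref{DV2} violation in polynomial time.

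For the incomparable case, the work has already been done: $x$ and $y$ are both verified LFPs of the $i$-slice $s$ and are incomparable, so Lemma~\ref{lem:ov1-incomparable} applies directly and produces a \ref{DV1} or \ref{DV2} violation in polynomial time. Since a pair of distinct points is either comparable or incomparable, these cases are exhaustive, and in every case we obtain a \ref{DV1} or \ref{DV2} violation of the \DMAC instance using at most polynomially many queries to $f$ (the verification sequences themselves are already in hand from the "successful" runs of the verification procedure, and each of the three lemmas only makes polynomially many further queries). There is no real obstacle here; the only thing to be careful about is confirming that the hypotheses of the invoked lemmas are exactly met — in particular that "$D_j = \zero$ for $j \le i$" is precisely what it means for a point to be a verified LFP of $s$ with zero displacement in all free dimensions, which is immediate from the definition of $D_i$ and the fact that $\IsLFP(x,i)$ holds iff $x$ is returned unchanged by $\FindLesserFixpoint(g, x - \One_{\free(s)}, x)$. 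The genuinely hard content of this part of the paper lives in Lemmas~\ref{lem:ov1-comparable-close}--\ref{lem:ov1-incomparable}, not in their assembly here.
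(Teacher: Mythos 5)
Your proposal is correct and matches the paper's (implicit) argument exactly: the paper proves Lemmas~\ref{lem:ov1-comparable-close}, \ref{lem:ov1-comparable-far}, and \ref{lem:ov1-incomparable} and then simply asserts that together they cover all cases, which is precisely the case split on comparability and distance that you make explicit. Your observation that the \ref{OV1} hypothesis $D_j(x)=D_j(y)=\zero$ for $j\le i$ delivers both ``verified LFP'' and ``$f_j=\mathrm{id}$ on $\free(s)$'' (hence hypothesis~3 of the comparable lemmas holds with equality) is the right reading of the direction-function definition.
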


\paragraph{\bf Handling \ref{OV2} violations.} 

We now turn to \ref{OV2} violations, where we are given $i$-zeros from adjacent
slices, that point towards each other in dimension $i+1$. 

As a first step, we show that, if we have a fixed point of an $i$-slice $s$,
then for the slices $s'$ that are adjacent to $s$, we can either find a fixed
point of $s'$, or find a violation of the $\DMAC$ instance.  

\begin{lemma} 
	\label{lem:ov2-nbor-slice-poly-time}
	Suppose that $x$ is an $i$-fixed point of some $i$-slice $s$.
	Let $s'$ be an adjacent $i$-slice containing $y = x \pm e_{i+1}$. 
	Then we can in polynomial time find either:
	\begin{enumerate}
	\item an $i$-fixed point $z$ of $s'$ such that $\Normi{z - y} \le 1$; or
	\item a \ref{DV1} or \ref{DV2} violation.
	\end{enumerate}
\end{lemma}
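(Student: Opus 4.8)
The plan is to restrict the search to the small sublattice $R := B(x,1) \cap s'$, exactly as in the proof of Observation~\ref{obs:neighboring-slice-fp-nearby}, but to replace the non-constructive use of Tarski's theorem there with an explicit monotone fixed-point iteration that either converges quickly or exposes a $\DMAC$ violation.

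First I would describe $R$ concretely. Since $s'$ is an $i$-slice, its free coordinates are $1,\dots,i$; in those coordinates $R$ is the box whose $j$-th side is $\{x_j-1,x_j,x_j+1\}$ truncated to the grid, and in the fixed coordinates every point of $R$ agrees with $s'$. Write $a$ and $b$ for the bottom and top of this box, and let $\hat f$ be the map that applies $f$ in coordinates $1,\dots,i$ and leaves the remaining coordinates fixed. A point of $R$ that is fixed by $\hat f$ is precisely an $i$-fixed point $z$ of $s'$; moreover such a $z$ lies in $B(x,1)$, and since $y = x \pm e_{i+1}$ agrees with $x$ in all free coordinates, this automatically gives $\Normi{z-y}\le 1$. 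Note also that every $w\in R$ satisfies $\Normi{w-x}=1$ exactly, because $w_{i+1}=x_{i+1}\pm 1$, and that $f_j(x)=x_j$ for all $j\le i$ since $x$ is $i$-fixed; these two facts are what let us convert bad behaviour of the iteration into genuine violations.

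The core step is to compute the chain $w^0:=a$, $w^{t+1}:=\hat f(w^t)$, stopping as soon as one of three things happens. (i) Some free coordinate of $\hat f(w^t)$ leaves the box $[a,b]$: with unit displacements this means $f_j(w^t)$ moved to distance $2$ from $x_j$ (the grid-boundary subcases are impossible, since $f$ maps $G$ to $G$), so $w^t,x$ witness a \ref{DV2} violation; in particular, applied at $t=0$ this shows $\hat f(a)\ge a$ unless we have already found a violation. (ii) $w^{t+1}\not\ge w^t$: then $w^t\le w^{t+1}$ while $f_j(w^t)>f_j(w^{t+1})$ for some free $j$, a \ref{DV1} violation. (iii) $w^{t+1}=w^t$: then $w^t$ is a fixed point of $\hat f$, i.e.\ an $i$-fixed point of $s'$ in $R$, and we return it. If neither (i) nor (ii) ever fires, then $w^0\le w^1\le\dots$ is a non-decreasing chain inside $[a,b]$, so the integer potential $\Phi(w):=\sum_{j\le i} w_j$ strictly increases on every non-stationary step and ranges over an interval of length at most $2i\le 2d$; hence (iii) must occur within $2d+1$ steps. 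The whole procedure therefore makes $O(d)$ queries to $f$ and does only polynomial bookkeeping.

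I do not expect a genuine obstacle here; the only point requiring care is verifying that each failure mode of the iteration really is a \ref{DV1} or \ref{DV2} violation of the \emph{whole} instance rather than of some restriction, which rests on the unit-displacement normalisation established in Section~\ref{sec:unitdisplacement} and on the identity $\Normi{w-x}=1$ for all $w\in R$ — the same ingredients already used for Observation~\ref{obs:neighboring-slice-fp-nearby} and Lemma~\ref{lem:fixed-point-between}. Downstream, the returned $i$-fixed point $z$ of $s'$ will be compared with the true $i$-zero of $s'$ to reduce an \ref{OV2} violation to a \ref{DV1}/\ref{DV2} violation.
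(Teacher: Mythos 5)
Your proposal is correct and takes essentially the same approach as the paper's proof: iterate a monotone self-map of $B(x,1)\cap s'$ from the bottom corner, flag a failure to ascend as a \ref{DV1} violation, flag escape from the box as a \ref{DV2} violation against $x$ (using $f_j(x)=x_j$ for free $j$ and $\Normi{w-x}=1$), and bound the number of steps by $O(d)$ since each free coordinate can only increase twice. The one slip is an index typo in your Case~(ii): the \ref{DV1} witness pair should be $(w^{t-1},w^t)$ with $w^{t-1}\le w^t$ but $f_j(w^{t-1})=w^t_j > w^{t+1}_j=f_j(w^t)$ for the offending free coordinate $j$ — as written you state $w^t\le w^{t+1}$, which contradicts the case hypothesis.
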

\begin{proof}
	Let $Z \coloneqq B(x,1) \cap s'$, and let $z^0$ be the bottom corner of $Z$.
	Starting from $z^0$, we iterate $f$, which, for each $t = 1, 2, \dots$, gives a new point 
	$z^t = f(z^{t-1}) = f^t(z^0)$.
	We conclude the process when one of the following conditions is met:
	\begin{enumerate}
		\item $z^t \ngeq z^{t-1}$;
		\item $z^t \notin Z$;
		\item $z^t$ is an $i$-fixed point of $s'$.
	\end{enumerate}

	First note that either $z^0 \le f(z^0)$, or we have a \ref{DV2} violation of non-expansion
	between $z^0$ and $x$, which is fixed in $s$.
	So, suppose we have $z^0 \le f(z^0) = z^1$.
	Now, for $t > 1$, if the first condition is ever met, we have that $f(z^{t-1}) \ngeq
	f(z^{t-2})$ but $z^{t-1} \geq z^{t-2}$, which contradicts monotonicity of $f$ and is a
	violation of type~\ref{DV1}. 

	Next, note that if the second condition is ever met, we have a non-expansion violation of type
	\ref{DV2} between $z^{t-1}$ and $x$ since $\Normi{f(x) - f(z^{t-1})} > 1$ while 
	$\Normi{x - z^{t-1}} = 1$. 
	
	Finally, we observe that at most $2d$ iterations can be performed until the third condition is 
	met, assuming that neither the first or second condition was met during the process.
	To see this note that in each iteration we have $z^t > z^{t-1}$, which implies that at least
	one coordinate of $z^t$ has increased by $1$ from~$z^{t-1}$. 
	But this can only happen at most $2d$ times since $B(x,1)$ has width 2 in each dimension.
	The $i$-fixed point that we find in this case is the $z$ that we are looking for, where the 
	condition $\Normi{z - y} \le 1$ follows from the definition of $Z$. 
	This completes the proof.
\end{proof}

We now use this lemma to show that we can handle all \ref{OV2} violations. 

\begin{lemma} 
\label{lem:ov2_mapping}
	If there exists a violation of type \ref{OV2}, then we can find a violation of type \ref{DV1}
	or \ref{DV2} in polynomial time.
\end{lemma}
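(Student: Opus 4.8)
The plan is to use Lemma~\ref{lem:ov2-nbor-slice-poly-time} as the workhorse and reduce an \ref{OV2} violation to one of the \ref{OV1}-type situations already handled by Lemma~\ref{lem:ov1_mapping}, or else extract a \ref{DV2} violation directly. Recall the setup of an \ref{OV2} violation: we are given an $i$-slice $s$ and two points $x, y$ on $s$ with $x_i = y_i + 1$, $D_j(x) = D_j(y) = \zero$ for all $j < i$, and $D_i(x) = \down$, $D_i(y) = \up$. So $x$ and $y$ are both verified LFPs of their respective $(i-1)$-slices $\slice_{i-1}(x)$ and $\slice_{i-1}(y)$, which are adjacent (differing by $1$ in coordinate $i$), and the direction function at $x$ says ``move down in dimension $i$'' while at $y$ it says ``move up in dimension $i$''. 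Unpacking $D_i$: $D_i(x) = \down$ means either $h_i(x) = -1$, or $h_i(x) = 0$ and $x$ is not a verified LFP of $\slice_i(x)$; symmetrically $D_i(y) = \up$ means $h_i(y) = 1$.

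First I would dispose of the easy sub-case. If $h_i(x) = -1$ then $f_i(x) < x_i = y_i + 1$, so $f_i(x) \le y_i$; and since $h_i(y) = 1$ we have $f_i(y) = y_i + 1 > y_i \ge f_i(x)$. Combined with the fact that $x$ and $y$ agree on all coordinates $\ge i$ except that $x_i = y_i + 1 > y_i$, and combined with information about the free coordinates, one checks this forces a non-expansion violation \ref{DV2} between $x$ and $y$ along dimension $i$ (the displacements separate the two points further than their $\ell_\infty$ distance, which is at least $1$ since $x \ne y$). So assume $h_i(x) = 0$; then $x$ is an $(i-1)$-zero but \emph{not} a verified LFP of $\slice_i(x)$. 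Similarly if $h_i(y) = 1$ but we want to push toward the same reduction, we instead use $x$ as the anchor: since $h_i(x) = 0$, $x$ is fixed in dimension $i$ as well as in dimensions $< i$, so $x$ is actually an $i$-fixed point of $\slice_i(x)$, but not a \emph{verified} one — meaning $\FindLesserFixpoint$ run within $\slice_i(x)$ from $x - \One_{\free}$ did find a lesser $i$-fixed point $x'$ of $\slice_i(x)$ with $x' < x$.

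Now here is the main line: $x$ is an $i$-fixed point of $\slice_i(x)$. Apply Lemma~\ref{lem:ov2-nbor-slice-poly-time} with $s = \slice_i(x)$ to the adjacent $i$-slice $s' = \slice_i(y) = s - e_i$ (since $y = x - e_i$). In polynomial time we either get a \ref{DV1}/\ref{DV2} violation — and we are done — or we get an $i$-fixed point $z$ of $s'$ with $\Normi{z - y} \le 1$. Now $z$ is an $i$-fixed point of $\slice_i(y)$, and $y$ is an $(i-1)$-zero of $\slice_{i-1}(y)$ which is a verified LFP of $\slice_{i-1}(y)$ with $D_i(y) = \up$, i.e.\ $f_i(y) = y_i + 1 > y_i$, so $y$ is \emph{not} an $i$-fixed point of $\slice_i(y)$; in particular $z \ne y$. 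Since $z$ is an $i$-fixed point of $\slice_i(y)$ and $\Normi{z-y}\le 1$, the coordinate $z_i$ is one of $y_i - 1, y_i, y_i + 1$; because $f_i(y) = y_i+1$ points up, monotonicity (applied as in the verification-algorithm argument) rules out $z_i \le y_i$, forcing $z_i = y_i + 1 = x_i$, so in fact $z$ lies in the same $(i-1)$-slice as $x$, namely $\slice_{i-1}(x)$, and $z$ is $i$-fixed there, hence $(i-1)$-fixed there. The remaining work is to compare $z$ with $x$ inside $\slice_{i-1}(x)$. Both are fixed in dimensions $< i$ and in dimension $i$; $x$ is a verified LFP of $\slice_{i-1}(x)$ by the \ref{OV1}-style hypothesis, but $z$ is a \emph{lesser-or-incomparable} competing fixed point. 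If $z < x$ then $(x, z)$ satisfies the hypotheses of either Lemma~\ref{lem:ov1-comparable-close} or Lemma~\ref{lem:ov1-comparable-far} (after checking the ``displacements $\le 0$ on free coordinates'' condition, which if it fails hands us a \ref{DV1} violation directly), giving the desired \ref{DV1}/\ref{DV2} violation. If $z$ and $x$ are incomparable we invoke Lemma~\ref{lem:ov1-incomparable} instead. Either way we obtain a \ref{DV1} or \ref{DV2} violation in polynomial time, which establishes Lemma~\ref{lem:ov2_mapping} and, together with Lemma~\ref{lem:ov1_mapping} and the impossibility of \ref{OV3} violations, completes the proof of Theorem~\ref{thm:main}.

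The step I expect to be the main obstacle is the bookkeeping in the last paragraph: nailing down precisely that $z$ produced by Lemma~\ref{lem:ov2-nbor-slice-poly-time} lands in $\slice_{i-1}(x)$ with $z_i = x_i$ (rather than merely within distance $1$ of $y$), and then verifying that the pair $(x, z)$ genuinely meets the preconditions of one of Lemmas~\ref{lem:ov1-comparable-close}--\ref{lem:ov1-incomparable} — in particular that $x$ being a \emph{verified} LFP of $\slice_{i-1}(x)$ (which is what the \ref{OV2} data gives) is exactly what those lemmas need, and that the "$f_j$ points weakly down on free coordinates" side-condition is handled uniformly (either it holds, or its failure is itself an immediate \ref{DV1} violation). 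The rest — the $h_i(x) = -1$ sub-case and the monotonicity argument pinning $z_i = y_i+1$ — is routine and mirrors arguments already made in the verification-sequence and neighboring-slice lemmas.
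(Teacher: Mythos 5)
Your argument has a structural plan that resembles the paper's (move a fixed point across the $i$-th fixed coordinate using Lemma~\ref{lem:ov2-nbor-slice-poly-time}, then produce a violation by comparison), but several steps are wrong as written, and together they leave the hard sub-case of the paper unaddressed.

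\textbf{The central false assumption.} You write ``since $y = x - e_i$,'' but an \ref{OV2} witness only guarantees $x_i = y_i + 1$; the coordinates $x_j, y_j$ for $j \in \free(s)$, $j < i$, can differ arbitrarily. This matters because $x$ and $y$ are the verified LFPs of the two \emph{different} $(i-1)$-slices $\slice_{i-1}(x)$ and $\slice_{i-1}(y)$, and those LFPs need not be adjacent. The paper's proof is explicitly split on $\max_{j\in\free(s)}|x_j - y_j|$, and its Case 1 (some $|x_j - y_j| \ge 2$) requires a genuinely different mechanism: it moves a fixed point from $\slice_{i-1}(y)$ into $\slice_{i-1}(x)$ near $y + e_i$ and then exploits the verification sequence of $x$ to extract a \ref{DV2} against a point on that sequence, rather than against $x$ itself. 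Your proposal never touches this case.

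\textbf{The ``easy sub-case'' is not a \ref{DV2}.} When $h_i(x) = -1$ and $h_i(y) = 1$, you get $f_i(x) = x_i - 1 = y_i$ and $f_i(y) = y_i + 1 = x_i$, so $|f_i(x) - f_i(y)| = 1 = |x_i - y_i| \le \Normi{x - y}$. There is no expansion in dimension $i$; the two images have swapped but not spread. (A \ref{DV1} between $x$ and $y$ does follow \emph{if} $y \le x$, but as above that is not given.) This sub-case is not disposable in one line.

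\textbf{Slice bookkeeping.} You write ``$s' = \slice_i(y) = s - e_i$,'' but $\slice_i(x) = \slice_i(y)$ since the two points agree on every coordinate $> i$; the $i$-slice does not move. Presumably you intend $(i-1)$-slices, but then the fixed point $z$ produced by Lemma~\ref{lem:ov2-nbor-slice-poly-time} lives in $\slice_{i-1}(y)$, so $z_i = y_i$ by definition — which contradicts your later claim ``forcing $z_i = y_i + 1 = x_i$, so in fact $z$ lies in the same $(i-1)$-slice as $x$.'' (If instead you run the lemma from $y$ into $\slice_{i-1}(x)$, you do get $z_i = x_i$, but then $\Normi{z - (y+e_i)} \le 1$, not $\Normi{z - y} \le 1$, and $z$ need not be close to $x$; and if $y = x - e_i$ the procedure may return $z = x$ itself, giving nothing. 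You address neither.) A further wrinkle: Lemma~\ref{lem:ov1-incomparable} requires \emph{both} points to be verified LFPs, but your $z$ is merely an $(i-1)$-fixed point, not necessarily one that passes $\FindLesserFixpoint$.

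The high-level idea — reduce \ref{OV2} to the \ref{OV1} machinery via the neighboring-slice lemma — is a reasonable instinct and partially overlaps with the paper's Case 1, but to make it rigorous you would need (i) an explicit case split on the $\ell_\infty$-distance between $x$ and $y$ as the paper does, (ii) the verification-sequence argument to handle the far case, since the OV1 lemmas alone do not apply to the found $z$, and (iii) a separate direct argument for the close case where $z$ could coincide with $x$.
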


\begin{proof}
	Let $x$ and $y$ be the witness of the \ref{OV1} violation, so both are $(i-1)$-zeros of an $i$-slice
	$s$, we have $x_i = y_i + 1$, and $D_i(x) = \down$, $D_i(y) = \up$.
	Since $x_i = y_i + 1$, we have that $\Normi{x-y} \geq 1$.
	We consider two cases: 
	\begin{enumerate}
		\item There exists a $j \in \free(s)$ such that $|x_j-y_j| \ge 2$.
		\item For all $j \in \free(s)$ we have $|x_j-y_j| \le 1$. 
	\end{enumerate}

	We start with the first case, where there exists a $j \in \free(s)$ such that $|x_j-y_j| \ge 2$.
	Suppose that $y_j \le x_j -2$ (the case $x_j \le y_j -2$ is symmetric).
	We use Lemma~\ref{lem:ov2-nbor-slice-poly-time} to either find a \DMAC violation, in which 
	case we are done, or to find a fixed point $z$ of the $(i-1)$-slice that contains $y+e_i$ such 
	that:
	$$
	\Normi{z - (y + e_i)} \le 1.
	$$
	Observe that $z_j \le x_j - 1$, which is implied by $\Normi{z - (y + e_i)} \le 1$ and $y_j \le
	x_j -2$.
	Now consider the verification sequence of $x$, which we denote by $v^0, v^1,\ldots, v^k$.
	There must be a $l \in [k]$ such that $v^l = v^{l-1} + e_j$, but now $v^{l-1}$ and $z$
	are a \ref{DV2} violation of non-expansion since $f_j(z) = z_j$ as $z$ is an $(i-1)$-fixed point of $s$, and $j$ belongs to $\free(s)$ while also being the 
	binding dimension for $\Normi{v^{l-1} - z}$.
	That completes the proof for the first case.

	We now consider the second case where $|x_j - y_j| \le 1$ for all $j \in \free(s)$.
	We consider two sub-cases.
	\begin{enumerate}
		\item We have $x_j = y_j$ for all $j \in \free(s)$. In this case $x$ and $y$ are right
			next to each other and differ only in dimension $i$, so $y < x$.
			But then the fact that $D_i(x) = \down$ and $D_i(y) = \up$, together with the fact that 
			$x$ and $y$ are $(i-1)$-zeros, means that $f(x) < f(y)$,
			so we have a \ref{DV1} violation of monotonicity and we are done with this sub-case.
		\item There exists a $j \in \free(s)$ such that $|x_j - y_j| = 1$.
	\end{enumerate}
	For the second sub-case, we analyse two further sub-cases.

	First, suppose that there exists a $j \in \free(s)$ such that $x_j = y_j - 1$. 
	Now consider the point $z:= x-e_i$.
	Let $v^0,v^1,\ldots,v^k$ be the verification sequence for $y$. 
	There must have been a point $v^l$ that implies that $z$ points up in dimension $j$, i.e., 
	$f_j(v^l) > v_j^l$.
	Note that $v^l \le z$.
	If $v^l \ne z$, then we now query $f$ at $z$.
	If $f_j(z) \le z_j$, which is only possible if $v^l \ne z$, then $z$ and $v^l$ are a \ref{DV1}
	violation of monotonicity.
	%
	If $f_j(z) > z_j$, then $z$ and $x$ are a \ref{DV1} violation of monotonicity, since $z \le x$
	by definition, but $f(z) \not \le f(x)$ since $x$ is fixed in dimension $j$ and $f_j(z) > z_j$.

	Second, suppose that $y_j \le x_j - 1$ for all $j \in \free(s)$.
	In this case $y \le x$, and the fact that $D_i(x) = \down$ and $D_i(y) = \up$, means 
	that $f(y) \not \le f(x)$, so $x$ and $y$ are themselves a \ref{DV1} violation of monotonicity.
	This completes the proof of this final sub-case, and the proof overall.
\end{proof}

By combining Theorem~\ref{thm:promise-DMAC-to-promise-opdc} with Lemmas~\ref{lem:ov1_mapping}
and~\ref{lem:ov2_mapping}, we have now proven Theorem~\ref{thm:main}, which in turn gives us the
following.

\begin{theorem}
	\label{thm:dmac_lfp_ueopl}
	The problem of finding the least fixed point in a $\DMAC$ instance is in $\UEOPL$.
\end{theorem}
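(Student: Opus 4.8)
The plan is to obtain this theorem as a direct consequence of the reduction built in this section together with the known membership of $\OPDC$ in $\UEOPL$. Concretely, Theorem~\ref{thm:main} furnishes a promise-preserving polynomial-time reduction from $\DMAC$ to $\OPDC$, and $\OPDC \in \UEOPL$ by~\cite{FGMS20}. Since $\UEOPL$ is closed under promise-preserving polynomial-time reductions, $\DMAC$ — with the least fixed point singled out as the unique intended solution — lies in $\UEOPL$. So the proof is essentially a bookkeeping argument that stitches together the pieces already proved.

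First I would spell out exactly what the produced $\OPDC$ instance looks like and why its genuine solution decodes to the LFP. Starting from $f$, one first applies Lemma~\ref{lem:unitdisplacement} to assume unit displacements, then constructs the direction functions $D_i$ via the $\IsLFP$ predicate. In the violation-free case, Lemma~\ref{lem:unique-i-zero} shows that the unique $i$-zero of every $i$-slice is its least fixed point (using Theorem~\ref{thm:hereditary} to propagate leastness down to subslices), so the unique $\OPDC$ solution of type~\ref{O1} is precisely the LFP of $f$; and since an \ref{O1} solution has $h_j(x) = 0$ in every dimension, it is a genuine \ref{D1} fixed point of $f$. This is the content of Theorem~\ref{thm:promise-DMAC-to-promise-opdc}.

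Next I would invoke the violation-handling lemmas to confirm that the reduction is promise-preserving in the sense $\UEOPL$ containment requires: \ref{OV3} violations are vacuous because $f$ maps $G$ into $G$; Lemma~\ref{lem:ov1_mapping} converts any \ref{OV1} violation into a \ref{DV1} or \ref{DV2} violation of $f$ in polynomial time; and Lemma~\ref{lem:ov2_mapping} does the same for \ref{OV2} violations. Hence every solution of the constructed $\OPDC$ instance — genuine solution or violation — maps back in polynomial time to a solution of the $\DMAC$ instance, and no non-violation solution is ever mapped to a violation. Combining this with $\OPDC \in \UEOPL$ yields the stated membership; chaining further with Theorem~\ref{thm:mctodmac} gives the analogous containment for $\MonotoneContraction$ and hence for approximating the value of Shapley games.

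The genuinely hard parts of this argument all lie \emph{upstream} and have already been carried out: the polynomial-time verifiability of least fixed points (Lemma~\ref{lem:lfp verification in polytime}), which relies on the non-expansion property through Lemma~\ref{lem:fixed-point-between}, and — the most delicate ingredient — the handling of \ref{OV1}/\ref{OV2} violations, where a ``verified LFP'' may be spurious when $f$ has violations, so one must use the recorded verification sequences to extract a genuine $\DMAC$ violation. Given all of that machinery, the only step remaining for this theorem is to observe that the reduction composes correctly, so I do not anticipate any further obstacle; the main risk is purely a matter of stating the promise-preservation carefully enough that $\UEOPL$ closure applies.
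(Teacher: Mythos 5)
Your proposal follows the paper's route exactly: the theorem is obtained as an immediate corollary of Theorem~\ref{thm:main} (the promise-preserving polynomial-time reduction from $\DMAC$ to $\OPDC$, itself assembled from Theorem~\ref{thm:promise-DMAC-to-promise-opdc}, Lemma~\ref{lem:ov1_mapping}, Lemma~\ref{lem:ov2_mapping}, and the vacuity of \ref{OV3}) combined with $\OPDC \in \UEOPL$ from~\cite{FGMS20}. The extra unpacking you give of the direction-function construction, $\IsLFP$, and why the unique \ref{O1} solution is the LFP is faithful to the paper's upstream lemmas and introduces no new claims or gaps.
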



\section{1DUniqueDMAC and Surfaces}

For the rest of the paper, which is dedicated towards deriving algorithms for
\DMAC, we restrict ourselves to \emph{violation-free} \DMAC instances, which are
those instances that do not contain any violation solutions. We do this because
it significantly simplifies our presentation and proofs, and also because we
are interested in algorithms for problems like Shapley games that always
produce violation-free \DMAC instances.

In this section we formalize the idea of a \emph{surface} in a \DMAC instance.
We do this in two steps. We first introduce the \1DUniqueDMAC problem,
which is a restriction of \DMAC in which it is promised that all one-dimensional
slices have a unique one-dimensional fixed point. This then allows us to
formally define the notion of a surface. 

We show that the surfaces in a violation-free \1DUniqueDMAC instance are always
monotone and have gradient at most one. Importantly, we show that the converse
also holds: if the surfaces in a \1DUniqueDMAC are monotone and have gradient at
most one, and if we have one other technical requirement that we will define
later, then the instance is violation-free. 

This gives us a useful theoretical tool that we will use throughout the rest of
the paper: whenever we build a \1DUniqueDMAC instance, we do not need to
laboriously prove that the instance contains no violations of monotonicity or
strict contraction, because instead we can prove that the surfaces of the
instance are monotone and have gradient one, which is typically much easier to
verify.

\subsection{Reducing DMAC to 1DUniqueDMAC}

In this section we introduce a restriction of the \DMAC problem in which it is
promised the one-dimensional slices have a unique one-dimensional fixed
point.

Given a point $x$ and a dimension $i$, we define the set 
$S_i(x) = \{ y \in G \; : \; y = x + c \cdot e_i \text{ for some $c \in
\mathbb{Z}$} \}$ to be
the set of points in the one-dimensional slice in dimension $i$ that contains $x$.
We now define \1DUniqueDMAC, which is a \DMAC instance in which all one
dimensional slices have a unique one-dimensional fixed point.

\begin{definition}
We say that a \DMAC instance $f : G \rightarrow G$ is a \1DUniqueDMAC instance
if, $\| x - f(x) \|_\infty \le 1$ for all $x \in G$, and if
for every $x \in G$ and every $i$, there is exactly one point $y \in S_i(x)$
such that $f_i(y) = y_i$.
\end{definition}

Note that we have also required that $f$ has displacements that are at most
unit length. We have already shown that we can make this assumption 
without loss of generality in Lemma~\ref{lem:unitdisplacement}.

In this section we will show the following lemma, which shows that we can
reduce from a \DMAC instance to a \1DUniqueDMAC instance in polynomial time. We
also show that this reduction preserves the least fixed point of the instance,
which will be important later when we apply our decomposition theorem. 

\begin{lemma}
\label{lem:1dunique}
There is a polynomial-time reduction from violation-free \DMAC to
violation-free \1DUniqueDMAC that preserves
the least fixed point.
\end{lemma}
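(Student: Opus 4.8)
The plan is to reduce a violation-free \DMAC instance $f : G \rightarrow G$ (which by Lemma~\ref{lem:unitdisplacement} we may assume has displacements of length at most $1$ in the $\ell_\infty$-norm) to a \1DUniqueDMAC instance $f'$ by a local perturbation of displacements that breaks ties in one-dimensional slices in a canonical direction, without changing the least fixed point. The obstacle to overcome is that a one-dimensional slice $S_i(x)$ of $f$ can contain an entire interval of consecutive fixed points: $f$ is non-expansive, so the set $\{\, y \in S_i(x) : f_i(y) = y_i \,\}$ is a contiguous block $[a_i, b_i]$ in coordinate $i$ (if $y$ and $y'$ are both $i$-fixed and $y_i < z_i < y'_i$ on the slice, then $z$ must also be $i$-fixed, else we get a \ref{DV2} or \ref{DV1} violation with $y$ or $y'$). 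We want to collapse each such block down to its \emph{bottom} endpoint, so that only $a_i$ remains fixed and every point strictly above it in that slice points strictly down; this keeps all least-fixed-point structure intact because least fixed points are, in every slice, the bottom of such a block.

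First I would define $f'$ by modifying only the coordinate-$i$ displacement at points that are $i$-fixed under $f$: for $x \in G$, set $h'_i(x) = -1$ if $f_i(x) = x_i$ but there exists $c < 0$ with $f_i(x + c e_i) \ne x_i + c$ and no point of $S_i(x)$ strictly below $x$ that is $i$-fixed lies adjacent —more cleanly, let $h'_i(x) = 0$ if $x$ is the lowest $i$-fixed point of $S_i(x)$ under $f$, and $h'_i(x) = -1$ if $x$ is $i$-fixed but not the lowest such point; otherwise $h'_i(x) = f_i(x) - x_i$. Computing which case applies requires, at a point $x$, only querying $f$ at $x - e_i$ and (if needed) walking down the slice; but walking down could be long, so instead I would observe that $x$ is the lowest $i$-fixed point of its slice iff $f_i(x) = x_i$ and $f_i(x - e_i) \ne x_i - 1$, i.e.\ $f_i(x-e_i) \ge x_i$ — wait, by non-expansion with $x$ we can never have $f_i(x-e_i) < x_i - 1$, and by monotonicity we can never have $f_i(x - e_i) > x_i$; so $f_i(x - e_i) \in \{x_i - 1, x_i\}$, and $x$ is the bottom of its block iff $f_i(x - e_i) = x_i$ (equivalently $x - e_i$ is not $i$-fixed, it points up). Hence $f'$ is computable with one extra query per coordinate, so in polynomial time.

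Then I would verify the three required properties of $f'$. (i) \emph{Each $S_i(x)$ has a unique $i$-fixed point under $f'$}: the fixed block of $f$ on the slice is $[a_i, b_i]$; under $f'$ every point of $(a_i, b_i]$ now has displacement $-1$, every point below $a_i$ or above $b_i$ has the same (nonzero) displacement as under $f$, and $a_i$ still has displacement $0$; so $a_i$ is the unique $i$-fixed point. Also $\|x - f'(x)\|_\infty \le 1$ since we only ever set displacements to $0$ or $-1$ or left them unchanged. (ii) \emph{$f'$ is violation-free}: I would check \ref{DV1} and \ref{DV2} directly, case-splitting on whether the relevant dimension $i$ had its displacement altered at the points in question; the only altered displacements go from $0$ to $-1$, i.e.\ weakly downward, at points that lie strictly above an $i$-fixed point, and one checks that such a weakly-downward change cannot create a monotonicity violation (all points weakly above in coordinate $i$ that were altered were also altered, by the block structure plus monotonicity of $f$) nor a non-expansion violation (decreasing a displacement toward the fixed point of the slice can only shrink $|f'_i(x) - f'_i(y)|$ relative to $|f_i(x) - f_i(y)|$ in the relevant configurations — this is the same kind of clamping argument used in Lemma~\ref{lem:unitdisplacement} and Lemma~\ref{lem:boundfgap}). (iii) \emph{Least fixed point preserved}: a point $z$ is a fixed point of $f'$ iff for every $i$, $z$ is the bottom of its coordinate-$i$ fixed block of $f$; I claim the least fixed point $\ell$ of $f$ satisfies this. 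Indeed $\ell$ is fixed under $f$; if $\ell$ were not the bottom of its coordinate-$i$ block, then $\ell - e_i$ would be $i$-fixed under $f$, and then by Observation~\ref{obs:sublat-verif}/Tarski's theorem applied on $B(\ell - e_i, 0)$-type reasoning — more precisely by Corollary~\ref{cor:neighboring-lesser-fp} or by a direct application of Lemma~\ref{lem:fixed-point-between}-style argument in the sublattice $\sublattice(\ell - e_i, \ell)$ — there would be a fixed point of $f$ strictly below $\ell$, contradicting leastness; so $\ell$ is a fixed point of $f'$. Conversely every fixed point of $f'$ is a fixed point of $f$, so $f$ and $f'$ have the same set of fixed points below-or-at... actually the sets need not coincide, but every $f'$-fixed point is $f$-fixed and $\ell$ is $f'$-fixed, hence $\ell$ is also the least $f'$-fixed point. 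The main obstacle is getting the case analysis in (ii) airtight, particularly the non-expansion direction when comparing an altered displacement at one point with an unaltered one at another point on the same slice; I expect this to parallel the clamping arguments already in the paper but to need careful bookkeeping of which of $x_i, y_i$ lies inside versus below the fixed block.
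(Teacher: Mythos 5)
Your construction is exactly the paper's: you redirect $f_i(x)$ to $x_i-1$ at every $i$-fixed point $x$ whose downward neighbour $x-e_i$ is also $i$-fixed, and your characterisation ``$x$ is the bottom of its block iff $f_i(x-e_i)=x_i$'' is precisely the paper's test, since non-expansion and monotonicity pin $f_i(x-e_i)\in\{x_i-1,x_i\}$. The overall plan (uniqueness, violation-freeness, LFP preservation) also matches the paper. Two things, however, do not hold up as written.

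\textbf{Violation-freeness is not actually proved.} You appeal to the intuition that ``decreasing a displacement toward the fixed point of the slice can only shrink $|f'_i(x)-f'_i(y)|$,'' but this is too strong. There are configurations where the naive inequality does not hold and the contradiction instead has to come from a violation of $f$ at \emph{auxiliary} points: for instance, $x\le y$ with $x_i=y_i$, $f_i(x)=x_i$, $f_i(y)=y_i$, $f'_i(y)=y_i-1$, $f'_i(x)=x_i$ looks like a monotonicity violation of $f'$, and ruling it out requires comparing $f_i(x-e_i)$ with $f_i(y-e_i)$ and deriving a non-expansion violation of $f$ between $x-e_i$ and $x$ (namely $f_i(x-e_i)=x_i-2$ forces a gap of $2$ at distance $1$). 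The paper's proof of this part is a substantial case analysis over whether each of $x$, $y$ had its displacement altered, and you acknowledge this is where the work is but do not carry it out. The sketch is in the right direction but is not yet a proof, and it is the bulk of the lemma.

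\textbf{The LFP-preservation argument has a gap.} You want to show that if $\ell$ is the LFP of $f$, then $\ell-e_i$ is not $i$-fixed. You propose to get this from Corollary~\ref{cor:neighboring-lesser-fp} or a Lemma~\ref{lem:fixed-point-between}-style argument on $\sublattice(\ell-e_i,\ell)$, but neither applies: both require both endpoints to be \emph{full} fixed points, whereas $\ell-e_i$ is only $i$-fixed; and $f$ need not map $\sublattice(\ell-e_i,\ell)=\{\ell-e_i,\ell\}$ into itself, since non-expansion with $\ell$ only gives $f_j(\ell-e_i)\ge\ell_j-1$ for $j\ne i$, so $f_j(\ell-e_i)=\ell_j-1$ is not excluded and Observation~\ref{obs:sublat-verif}/Tarski does not fire. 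The clean route, which the paper uses, is Theorem~\ref{thm:hereditary}: $\ell$ being the LFP of the whole domain is the LFP of the one-dimensional slice $S_i(\ell)$, hence $\ell-e_i$ cannot be $i$-fixed. With that substitution the LFP-preservation step closes; without it, it does not.

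You also implicitly assume that each one-dimensional slice has a nonempty set of $i$-fixed points (your block $[a_i,b_i]$); the paper establishes this via an intermediate-value/binary-search argument in Lemma~\ref{lem:1dured}, and some such argument is needed to guarantee $f'$ actually has a unique $i$-fixed point rather than none.
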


We first prove the lemma for a single dimension $i$. We give a reduction that
ensures that all one-dimensional slices in dimension $i$ have unique fixed
points. We can then simply apply this algorithm to each dimension $i$
independently to prove Lemma~\ref{lem:1dunique}.

Given a \DMAC instance $f: G \rightarrow G$, we first apply
Lemma~\ref{lem:unitdisplacement} to ensure that all displacements of $f$ have
length at most $1$ in the $\ell_\infty$-norm, and then we define $f': G
\rightarrow G$ in the following way. For each $x \in G$ we define $f'_j(x) =
f_j(x)$ for all $j \ne i$, and for dimension $i$ we define
\begin{equation*}
f'_i(x) = \begin{cases}
x_i - 1 & \text{if $f_i(x) = x_i$ and $f_i(x-e_i) = x_i-1$,} \\
f_i(x) & \text{otherwise.}
\end{cases}
\end{equation*}
This definition simply follows $f$ unless we are at a point $x$ that is a
one-dimensional fixed point in dimension $i$, where $x$ is also directly above another
one-dimensional fixed point in dimension $i$. At any such point we alter $f$ so
that $x$ moves one unit downwards in dimension $i$, which prevents $x$ from
being a one-dimensional fixed point in dimension $i$. It is clear that we can
build $f'$ in time that is polynomial in the representation of $f$.

It is also clear that $f'$ has displacements that are at most unit length, because
we applied Lemma~\ref{lem:unitdisplacement} to $f$ to ensure that it has this
property, and then $f'$ either copies the displacements of $f$, or introduces a
unit displacement downward.
The next lemma shows that the reduction correctly ensures that all
one-dimensional slices in dimension $i$ have a unique one-dimensional fixed
point.

\begin{lemma}
\label{lem:1dured}
If $f$ is violation-free, then
every one-dimensional slice in dimension $i$ in $f'$ has a unique fixed point.
\end{lemma}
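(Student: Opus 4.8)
The plan is to show two things: first that $f'$ still has the property that every one-dimensional slice in dimension $i$ has \emph{at least} one fixed point, and second that it cannot have \emph{two or more}. For the first part, I would argue by contradiction: suppose some one-dimensional slice $S_i(x)$ has no fixed point of $f'$ in dimension $i$. Since $f$ is violation-free, by Tarski's theorem applied to the one-dimensional sublattice spanned by $S_i(x)$ (the displacement of $f$ restricted to this slice maps the slice to itself because $f$ maps $G$ to $G$), the slice $S_i(x)$ must contain a fixed point of $f$ in dimension $i$. Let $z$ be the \emph{greatest} such point. The only way $f'$ destroys a fixed point of $f$ in this slice is by pushing a point $y$ downward when $f_i(y)=y_i$ and $f_i(y-e_i)=y_i-1$, i.e., when $y-e_i$ is also a one-dimensional fixed point of $f$. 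So the fixed points of $f$ in dimension $i$ that survive in $f'$ are exactly those that are \emph{not} directly above another fixed point of $f$ in dimension $i$. If $z$ is the greatest fixed point of $f$ in the slice and $z$ were destroyed, then $z-e_i$ is also a fixed point of $f$; but then consider whether $z-e_i$ survives, and walk downward. I need to be careful here: the chain of consecutive fixed points $z, z-e_i, z-2e_i,\dots$ must terminate (the slice is finite), and the \emph{bottom} element of any maximal such chain is not directly above a fixed point, so it survives in $f'$. Hence $f'$ retains at least one fixed point in each one-dimensional slice in dimension $i$.

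For uniqueness, suppose toward a contradiction that $f'$ has two distinct fixed points $u < v$ in the slice $S_i(x)$ in dimension $i$. Since displacements of $f'$ have length at most 1, and $f'$ is (I would show) still violation-free in dimension $i$ — or more directly, since $f$ is violation-free and monotonicity/non-expansion in dimension $i$ only involves the other coordinates being equal here — I would apply the one-dimensional analogue of Lemma~\ref{lem:fixed-point-between} (restricted to the slice): between two fixed points of a non-expansive monotone one-dimensional map at $\ell_\infty$-distance $>1$ there is another, so WLOG $v = u + e_i$, i.e., the two surviving fixed points are adjacent. But $u$ is a fixed point of $f'$, and the only points where $f'$ differs from $f$ are pushed \emph{downward}; a downward push cannot create a fixed point, so $f_i(u) = u_i$, i.e., $u$ is a fixed point of $f$ in dimension $i$. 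Similarly $f_i(v) = v_i$, so $v$ is a fixed point of $f$ in dimension $i$. But then $v = u + e_i$ is a fixed point of $f$ that is directly above the fixed point $u$ of $f$ in dimension $i$, so by the definition of $f'$ we have $f'_i(v) = v_i - 1 \ne v_i$, contradicting that $v$ is a fixed point of $f'$. Hence the fixed point is unique.

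The main obstacle I anticipate is the existence half: making the termination-of-chains argument fully rigorous and ruling out the possibility that the reduction creates an infinite descent or that a newly created configuration (a point pushed down onto a point that was already pushed down) generates a fresh violation or fresh fixed point. The cleanest framing is probably to observe directly that $f'_i(y) = y_i$ if and only if $f_i(y) = y_i$ and $f_i(y-e_i) \ne y_i - 1$ — a downward push at $y-e_i$ is irrelevant to whether $y$ itself is fixed, but I should double-check that $f'_i(y-e_i) \ne y_i-1$ is not needed, only $f_i(y-e_i) \ne y_i-1$, which is what the definition uses. Given this characterization, the set of fixed points of $f'$ in a slice is precisely $\{y : f_i(y)=y_i \text{ and } y-e_i \text{ is not an } f\text{-fixed point in dim } i\}$, i.e., exactly the \emph{bottom} elements of the maximal runs of consecutive $f$-fixed points in the slice. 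Since $f$ has at least one such fixed point per slice (Tarski), there is at least one maximal run, hence at least one bottom element, hence at least one $f'$-fixed point; and I then need to check that distinct maximal runs cannot have adjacent bottom elements — which holds because between two maximal runs there is at least one non-fixed point of $f$, so their bottom elements are at distance $\ge 2$. Combined with the one-dimensional between-ness lemma this forces uniqueness. I would present the characterization as the key lemma and derive both existence and uniqueness from it.
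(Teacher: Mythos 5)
Your characterization of the fixed points of $f'$ in the slice --- $f'_i(y) = y_i$ iff $f_i(y) = y_i$ and $f_i(y - e_i) \ne y_i - 1$, so the $f'$-fixed points are exactly the bottom elements of maximal runs of consecutive $f$-fixed points --- is correct and is the right way to organize the argument. Your treatment of \emph{existence} via the finite descending chain (every maximal run has a bottom, and $f$ has at least one run by Tarski) is in fact more careful than the paper's terse justification, which compresses the chain argument into a single sentence (``since $f'$ only removes one-dimensional fixed points'') that reads more like an assertion than a proof.

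The \emph{uniqueness} half, however, has a genuine gap. You want to invoke a between-ness lemma to reduce to adjacent $f'$-fixed points, but you cannot legitimately apply Lemma~\ref{lem:fixed-point-between} (or its one-dimensional analogue) to $f'$ at this stage: that lemma needs the map to be violation-free, and violation-freedom of $f'$ is only established in the lemma \emph{after} this one (and its proof relies on having fixed the one-dimensional structure first). Applying between-ness to $f$ instead doesn't close the gap either, since it only produces an $f$-fixed point in between the two $f'$-fixed points --- such a point may sit in the interior of a run and be \emph{not} an $f'$-fixed point, so you cannot conclude the two $f'$-fixed points are adjacent. Likewise, observing that bottoms of distinct runs are at distance $\geq 2$ and then saying ``combined with between-ness this forces uniqueness'' still assumes between-ness for $f'$. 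The clean fix, which is what the paper does, handles arbitrary distance in one shot: take the two $f'$-fixed points $x < y$ (both $f$-fixed, since push-downs never create a fixed point), note $y' := y - e_i$ cannot be an $f$-fixed point (else $f'$ would have pushed $y$ down) and $y' \neq x$; then $f_i(y') > y'_i$ gives $|f_i(y') - f_i(x)| > \|y'-x\|_\infty$, a non-expansion violation with $x$, while $f_i(y') < y'_i$ gives $|f_i(y) - f_i(y')| \geq 2 > 1$, a non-expansion violation with $y$. Equivalently: for a violation-free $f$, the $f$-fixed points in any one-dimensional slice must form a single contiguous range (a gap between two fixed points yields a contraction violation against one of them), so there is exactly one run and hence exactly one bottom. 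Either way, no appeal to violation-freedom of $f'$ or to reduction to the adjacent case is needed.
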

\begin{proof}
First suppose that there is no point $y \in S_i(x)$ such that $f'_i(y) = y_i$.
Since $f'$ only removes one-dimensional fixed points, it must be the case that
$f_i(y) \ne y_i$ for all $y \in S_i(x)$ as well.
Let $l$ be the least element of 
$S_i(x)$ and observe that we must have $f_i(l) \ge l_i$. Likewise, if
$g$ is the greatest element of $S_i(x)$ then we have
$f_i(g) \le g_i$. So we can apply binary search starting at $l$ and $g$ to find two adjacent points
$a$, and $b = a + e_i$ such that $f_i(a) \ge a_i$ and $f_i(b) \le b_i$. If
either of those inequalities is weak, then we contradict the assumption that
there is no point $y \in S_i(x)$ such that $f_i(y) = y_i$. If they are both
strict, then $x$ and $y$ are a violation of monotonicity in $f$, which is also
a contradiction.

Now suppose that there are two distinct points $x, y \in S_i(x)$ such that
$f'_i(x) = x_i$ and $f'_i(y) = y_i$, and suppose without loss of generality that
$x \le y$. 
Note that we cannot have $y_i = x_i + e_i$, because then the
definition of $f'$ would have set $f'_i(y) = y_i - 1$. Hence, the point $y' = y
- e_i$ is distinct from both $x$ and $y$. The definition of $f'$ tells us
  that $f_i(x) = x_i$ and $f_i(y) = y_i$, since $f'$ never introduces a
new one-dimensional fixed point, and it also tells us we do not have $f_i(y') = y'_i$
because if we did then $f'_i(y)$ would be set to $y_i - 1$. But now we can see
that if $f_i(y') > y'_i$
then $y'$ and $x$ are a strict violation of contraction, while if 
$f_i(y') < y'_i$ then $y'$ and $y$ are a strict violation of contraction, both of
which are contradictions. 
\end{proof}

The next lemma shows that the reduction is a correct 
reduction from violation-free \DMAC to violation-free \DMAC.

\begin{lemma}
Every fixed point of $f'$ is a fixed point of $f$. Furthermore, if $f$ is
violation-free, then $f'$ is violation-free.
\end{lemma}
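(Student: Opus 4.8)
The plan is to prove two claims: (i) every fixed point of $f'$ is a fixed point of $f$, and (ii) if $f$ is violation-free then $f'$ is violation-free. The first claim is essentially immediate from the construction: $f'$ differs from $f$ only in dimension $i$, and only at points $x$ where $f_i(x) = x_i$ but $f_i(x-e_i) = x_i - 1$, at which $f'$ introduces a strict downward displacement. So if $f'(x) = x$, then in particular $f'_i(x) = x_i$, which by the definition of $f'$ forces us to be in the ``otherwise'' branch, giving $f_i(x) = f'_i(x) = x_i$; and for $j \ne i$ we have $f_j(x) = f'_j(x) = x_j$ directly. Hence $f(x) = x$.

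For claim (ii), I would argue the contrapositive form dimension by dimension: suppose $f'$ has a violation, and show $f$ must have one too. The only changes from $f$ to $f'$ are in coordinate $i$, where some displacements $h_i$ that were $0$ have been lowered to $-1$; in all other coordinates $f' = f$. I would consider the two violation types separately. For a \ref{DV1} monotonicity violation: given $x \le y$ with $f'_i(x) > f'_i(y)$ for some coordinate $i$ — if the violating coordinate is not $i$, it is immediately a violation in $f$. If it is coordinate $i$, I would use the fact that $f'_i(z) \le f_i(z)$ for every $z$ (the reduction only ever decreases the $i$-displacement), and that $f'_i(z) = f_i(z) - 1$ only when $f_i(z) = z_i$, to push the strict inequality back to $f$. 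Concretely, $f'_i(x) > f'_i(y)$ together with $f'_i(y) \ge y_i - 1 \ge x_i - 1$ (using $x \le y$ and that $f'$ has unit displacements) constrains $f'_i(x) \ge x_i$, hence $f'_i(x) = f_i(x)$; similarly one shows $f'_i(y) = f_i(y)$ or derives the contradiction, and then $f_i(x) > f_i(y)$, a \ref{DV1} violation for $f$. This mirrors the argument in the proof of Lemma~\ref{lem:unitdisplacement}.

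For a \ref{DV2} non-expansion violation $\Norm{f'(x) - f'(y)}_\infty > \Norm{x-y}_\infty$: again, if the binding coordinate is $j \ne i$ this is a violation in $f$ since $f'_j = f_j$. If the binding coordinate is $i$, I would split on whether $x_i = y_i$ or (WLOG) $x_i < y_i$, exactly as in the proof of Lemma~\ref{lem:unitdisplacement}. In the case $x_i < y_i$: since $|f'_i(x) - f'_i(y)| > \Norm{x-y}_\infty \ge y_i - x_i \ge 1$ and all $f'$-displacements are unit, we must have $f'_i(x) \le x_i$ and $f'_i(y) \ge y_i$. But $f'_i$ only ever lies weakly below $f_i$, so $f_i(x) \ge f'_i(x)$; and $f'_i(y) \ge y_i$ means $y$ was not lowered, so $f_i(y) = f'_i(y)$. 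Then $|f_i(x) - f_i(y)| \ge |f'_i(x) - f'_i(y)| > \Norm{x-y}_\infty$, a \ref{DV2} violation for $f$. In the case $x_i = y_i$, the inequality forces $f'_i(x)$ and $f'_i(y)$ to move strictly apart in opposite unit directions, say $f'_i(x) < x_i$ and $f'_i(y) > y_i$; then $f_i(x) \le f'_i(x)$ (reduction only decreases) and $f_i(y) = f'_i(y)$ (not lowered), and again $|f_i(x) - f_i(y)| > \Norm{x-y}_\infty$.

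The main obstacle, and the only place requiring genuine care, is the $x_i = y_i$ sub-case of the \ref{DV1} analysis, and more generally making sure that whenever $f'_i(z) \ne f_i(z)$ we correctly have $f_i(z) = z_i$ and $f'_i(z) = z_i - 1$, so that the monotone relationship ``$f'_i \le f_i$ pointwise, with equality except at such $z$'' can be exploited cleanly. I expect the whole proof to follow the template of Lemma~\ref{lem:unitdisplacement}'s proof almost verbatim, with the single extra structural input that the reduction is a pointwise weak decrease in coordinate $i$ that strictly decreases only at one-dimensional fixed points sitting above one-dimensional fixed points. Once that is observed, each case is a short chain of inequalities.
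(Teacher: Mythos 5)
Your first claim (fixed points of $f'$ are fixed points of $f$) is correct and matches the paper's argument. The second claim is where there is a genuine gap.

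The key defect is in the crucial inequality you invoke. You observe that $f'_i(z) \le f_i(z)$ for every $z$, with strict inequality exactly when $z$ was modified. From this you conclude, in the \ref{DV2} case $x_i < y_i$, that ``$|f_i(x) - f_i(y)| \ge |f'_i(x) - f'_i(y)|$''. But the inequality in fact goes the \emph{other} way. You have established $f'_i(x) \le x_i$, $f'_i(y) \ge y_i > x_i \ge f'_i(x)$, so both $|f_i(x)-f_i(y)| = f_i(y) - f_i(x)$ and $|f'_i(x)-f'_i(y)| = f'_i(y) - f'_i(x)$ are differences with the $y$-term positive. Since $f_i(y) = f'_i(y)$ but $f_i(x) \ge f'_i(x)$, we get $f_i(y) - f_i(x) \le f'_i(y) - f'_i(x)$, i.e., $|f_i(x)-f_i(y)| \le |f'_i(x)-f'_i(y)|$. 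If $x$ was modified, the inequality is strict and the gap in $f$ is exactly one smaller; the violation at $(x,y)$ can vanish entirely. (Your $x_i = y_i$ sub-case has a related sign error: you write ``$f_i(x) \le f'_i(x)$'', but the reduction only ever \emph{decreases} $f'_i$ relative to $f_i$, so $f_i(x) \ge f'_i(x)$.) The same phenomenon occurs in the \ref{DV1} case: your chain ``$f'_i(x) = f_i(x)$ ... $f'_i(y) = f_i(y)$ or derive the contradiction ... then $f_i(x) > f_i(y)$'' breaks down when $y$ \emph{was} modified and $x_i = y_i$, because then $f_i(x) = x_i = y_i = f_i(y)$ and no \ref{DV1} witness sits at $(x,y)$.

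What is missing is the auxiliary-point technique: whenever $f'_i(z) \ne f_i(z)$, the reduction's trigger condition forces $f_i(z - e_i) = (z - e_i)_i$. The paper's proof uses this to shift the putative violation from the original pair to a neighboring pair — typically $(x, y - e_i)$, $(x - e_i, y - e_i)$, or $(x - e_i, x)$ — at which the guaranteed fixed-point value plus monotonicity or non-expansion in $f$ produces a genuine violation. Without passing to these shifted points, some subcases (exactly the ones where the modification occurs at the ``lower'' endpoint) simply do not produce a violation at the original pair, and the proof is incomplete. Your closing remark that the structural input is ``$f'_i \le f_i$ pointwise, with equality except at such $z$'' is accurate, but the exploitation of this fact is more delicate than a single chain of inequalities: it requires exhibiting the witness at shifted coordinates, not at $(x,y)$ itself.
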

\begin{proof}
It is clear that any fixed point of $f'$ is a fixed point of $f$, since $f'$
modifies $f$ only by introducing strict displacements. 

Next we consider violations of type \ref{DV1}. 
Suppose that we have two points $x, y \in G$ such that $x \le y$ and $f'(x)
\not \le f'(y)$. 
If $f'(x) = f(x)$ and $f'(y) = f(y)$ then $x$ and $y$ witness a violation
of monotonicity in~$f$, and so we are done. 
If $f'_j(x) > f'_j(y)$ for some $j \ne i$, then since we have not changed
dimension $j$, we have that 
$x$ and $y$ also witness a violation of monotonicity in $f$.
So we can assume that the violation of monotonicity has $f'_i(x) > f'_i(y)$. 

We now consider the following cases.
\begin{itemize}
\item If $f'_i(x) \ne f_i(x)$, then we have $f_i'(x) = x_i - 1$ from the
definition of $f'$. But since all displacements are at most unit length, we
also have $f'_i(y) \ge y_i -1$, so we have $f_i'(x) = x_i - 1 \le y_i - 1 \le
f'_i(y)$, which contradicts the assumption that $x$ and $y$ violated
monotonicity. 

\item If $f'_i(x) = f_i(x)$ and $f'_i(y) \ne f_i(y)$ then note that we must
have $x_i \ge y_i - 1$ otherwise it would be impossible for us to have $f'_i(x)
> f'_i(y)$ in an instance where all displacements are at most unit length. 
If we define $y' = y - e_i$, then note that $f_i(y') = y'_i$ because we had
$f'(y) \ne f(y)$. 
We now consider the following cases.

\begin{itemize}

\item If $f'_i(x) = x_i - 1$ then we cannot have 
$f'_i(x) > f'_i(y)$ since $f'_i(y) \ge y_i - 1 \ge x_i - 1 = f'_i(x)$. So $x$ and
$y$ are not a violation of monotonicity, which is a contradiction with our
assumptions.

\item If $f'_i(x) = x_i + 1$ and $x_i = y_i - 1$ then we have $f_i(x) = f'_i(x)
= x_i + 1 = y_i = y'_i + 1 = f_i(y') + 1$. Hence $x$ and $y'$ are a violation
of monotonicity in $f$.

\item If $f'_i(x) = x_i + 1$ and $x_i \ge y_i$ then we have 
$f_i(x) = f'_i(x) = x_i + 1 \ge y_i + 1 > f_i(y) + 1$, where the final
inequality holds because $f'(y) \ne f(y)$. Hence $x$ and
$y$ are a violation of monotonicity in $f$.

\item If $f'_i(x) = x_i$ and $x_i = y_i - 1$ then we have $f'_i(x) =
x_i = y_i - 1 \le f'_i(y)$, which contradicts our assumption.

\item If $f'_i(x) = x_i$ and $x_i \ge y_i$ then we also consider the point $x'
= x - e_i$. Note that we cannot have $f_i(x') = x'_i$ because $f'(x) = f(x)$,
and we cannot have $f_i(x') < x'_i$ because this would be a violation of
contraction between $x$ and $x'$. Thus we have $f_i(x') > x'_i$. So we have
$f_i(x') > x'_i = x_i - 1 \ge y_i - 1 = y'_i = f_i(y')$. So $x'$ and $y'$ are a
violation of monotonicity for $f$.
\end{itemize}
\end{itemize}

Next we consider violations of type \ref{DV2}. Suppose that we have two points
$x, y \in G$ such that $\| f'(x) - f'(y) \|_\infty > \| x - y \|_\infty$. 
If there exists a dimension $j \ne i$ such that $|f'_j(x) - f'_j(y) | > \| x
- y \|_\infty$, then because we have not altered dimension $j$, the points $x$
and $y$ must also violate contraction in $f$. So we can assume that we have
$|f'_i(x) - f'_i(y) | > \| x - y \|_\infty$.

If $f'(x) = f(x)$ and $f'(y) = f(y)$ then clearly $x$ and $y$ are a violation
of contraction for $f$. If, on the other hand, we have $f'(x) \ne f(x)$ and
$f'(y) \ne f(y)$, then note that 
\begin{align*}
|f'_i(x) - f'_i(y) | &= | (f_i(x) - 1) - (f_i(y) - 1) | \\
&= |f_i(x) - f_i(y)|,
\end{align*}
because $f'$ only ever modifies $f$ in order to move a point that was fixed in
dimension $i$ downward. Hence $x$ and $y$ are again a violation of contraction
in $f$.

So we can assume, without loss of generality, that $f'(x) = f(x)$ and $f'(y) \ne
f(y)$.  Since $f'(y) \ne y$ we must have $f'_i(y) = y_i - 1$ due to the
definition of $f'$. Therefore we must have $x_i \ge y_i$, since otherwise it
would be impossible for us to have $|f'_i(x) - f'_i(y) | > \| x - y \|_\infty$.

Since $f'$ is an instance in which all displacements are at most unit length,
we must have $x_i - y_i \ge \| x - y \|_\infty - 1$, because otherwise it would
be impossible to have 
$|f'_i(x) - f'_i(y) | > \| x - y \|_\infty$. We consider the following cases.

\begin{itemize}
\item If $x_i - y_i = \| x - y \|_\infty - 1$ then note that we must have
$f'(x_i) = x_i + 1$, because otherwise we could not have
$|f'_i(x) - f'_i(y) | > \| x - y \|_\infty$. Let $y' = y - e_i$, and observe
that since $f'(y) \ne f(y)$, we must have $f_i(y') = y'_i$. 
Also note that $x_i - y'_i = x_i - y_i + 1 = \| x - y \|_\infty$, and therefore
we must have $\| x - y' \|_\infty = \|x - y \|_\infty$ because we only modified
dimension $i$ when we created $y'$ from $y$.
So we have
\begin{align*}
f_i(x) - f_i(y') &= x_i + 1 - y'_i  \\
&= x_i - y_i + 2 \\
&= \| x - y \|_\infty + 1 \\
&= \| x - y'\|_\infty + 1,
\end{align*}
and therefore $x$ and $y'$ are a violation of contraction in $f$.

\item If $x_i - y_i \ge \| x - y \|_\infty$ then note that we must have $f_i(y)
= y_i$ because $f'(y) \ne f(y)$. 
If $f_i(x_i) = x_i + 1$, then we immediately have that $x$ and $y$ violate
contraction in $f$. We cannot have $f_i(x_i) = x_i - 1$ because this would
contradict the fact that 
$|f'_i(x) - f'_i(y) | > \| x - y \|_\infty$.

So the remaining case is where $f_i(x_i) = x_i$. Let $x' = x - e_i$ and $y' = y
- e_i$. Observe that since $f'(x) = f(x)$ we cannot have $f_i(x') = x'_i$, and
  we also cannot have $f_i(x') = x'_i - 1$ because this would give us a
violation of contraction between $x$ and $x'$. So we have $f_i(x') = x'_i + 1$.
On the other hand, since $f'(y) \ne f(y)$, we must have $f_i(y') = y'_i$. Note
also that $\| x - y \|_\infty = \| x' - y' \|_\infty$, since $x'$ and $y'$ were
both generated by shifting $x$ and $y$ by $e_i$. So we have
\begin{align*}
f_i(x') - f_i(y') &= x'_i - y'_i + 1 \\
&= x_i - y_i + 1 \\
&\ge \| x - y \|_\infty \\
&=\| x' - y' \|_\infty,
\end{align*}
so $x'$ and $y'$ witness a violation of contraction in $f$.
\end{itemize}
\end{proof}

The reduction from $f$ to $f'$ can delete some of the fixed points of $f$, so
it is not necessarily the case that every fixed point of $f$ is also a fixed
point of $f'$. However, in the next lemma we show that the least fixed point is
preserved by the reduction.  

\begin{lemma}
If $f$ is violation-free, then the least fixed point of $f'$ is the same as the least fixed point of $f$.
\end{lemma}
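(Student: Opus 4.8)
**Proof proposal for: "If $f$ is violation-free, then the least fixed point of $f'$ is the same as the least fixed point of $f$."**

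The plan is to show the two least fixed points coincide by a two-way argument, using the structure of the reduction from $f$ to $f'$. Recall that $f'$ differs from $f$ only in dimension $i$, and only at points $x$ that are one-dimensional fixed points in dimension $i$ (i.e.\ $f_i(x)=x_i$) and that lie directly above another one-dimensional fixed point in dimension $i$ (i.e.\ $f_i(x-e_i)=x_i-1$); at such points $f'$ sends $x$ one step down in dimension $i$. In particular every fixed point of $f'$ is a fixed point of $f$, which is already established. Let $\ell$ denote the least fixed point of $f$ and $\ell'$ the least fixed point of $f'$ (both exist since $f$ is violation-free, hence monotone, and $f'$ is violation-free by the preceding lemma, so Tarski's theorem applies to each). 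Since $\ell'$ is a fixed point of $f$, we immediately get $\ell \le \ell'$. The remaining task is to prove $\ell' \le \ell$, which by leastness of $\ell'$ reduces to showing that $\ell$ is itself a fixed point of $f'$.

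The key claim is therefore: \emph{the least fixed point $\ell$ of $f$ is never one of the points altered by the reduction.} Suppose, for contradiction, that $\ell$ is altered, so $f_i(\ell) = \ell_i$ and $f_i(\ell - e_i) = \ell_i - 1$. Set $y := \ell - e_i$. I claim $y$ is itself a fixed point of $f$, contradicting the leastness of $\ell$ (since $y < \ell$). We know $f_i(y) = y_i$ by hypothesis, so it remains to check $f_j(y) = y_j$ for all $j \ne i$. For each such $j$: if $f_j(y) < y_j$, then $|f_j(y) - f_j(\ell)| = |f_j(y) - \ell_j| > |y_j - \ell_j| $ would require care, but in fact the cleaner route is via monotonicity and non-expansion exactly as in Lemma~\ref{lem:lower-fixed-point-for-LFP-hereditary}: since $y < \ell$ and $\ell$ is fixed in every dimension, $f_j(y) > y_j$ would violate monotonicity with $\ell$ (as $f_j(y) > y_j = \ell_j - [j=i]\cdot 0$, i.e.\ $f_j(y) > y_j$ while $\ell_j \ge y_j$ and $f_j(\ell)=\ell_j$ — contradiction when $j\ne i$ since then $\ell_j = y_j$, giving $f_j(y) > \ell_j = f_j(\ell)$ with $y \le \ell$); and $f_j(y) < y_j$ would, together with $\ell$ fixed and $\|y-\ell\|_\infty = 1$, give $\|f(y)-f(\ell)\|_\infty \ge |f_j(y)-\ell_j| \ge 2 > 1 = \|y - \ell\|_\infty$, a violation of non-expansion. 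Hence $f_j(y) = y_j$ for all $j \ne i$, so $y$ is a fixed point of $f$ strictly below $\ell$, contradicting leastness.

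Therefore $\ell$ is not altered by the reduction, so $f'(\ell) = f(\ell) = \ell$, i.e.\ $\ell$ is a fixed point of $f'$. By leastness of $\ell'$ we get $\ell' \le \ell$, and combined with $\ell \le \ell'$ from above we conclude $\ell = \ell'$. The main obstacle is precisely the contradiction argument in the second paragraph: one must verify carefully that the single point $y = \ell - e_i$ directly below an altered $\ell$ is fully fixed, which is where the non-expansion and monotonicity hypotheses on the violation-free instance $f$ do the work; this is essentially a one-point special case of the reasoning already used in Lemma~\ref{lem:lower-fixed-point-for-LFP-hereditary}, so it should go through without new ideas. A minor point to handle cleanly is that the reduction is applied dimension-by-dimension; since each single-dimension step preserves the least fixed point by the argument above, and preservation of least fixed points composes, the full reduction of Lemma~\ref{lem:1dunique} preserves it as well.
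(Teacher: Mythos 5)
Your overall structure is sound: you correctly reduce the lemma to the single claim that the least fixed point $\ell$ of $f$ is never one of the points altered by the reduction, and the bookkeeping around this (using that every fixed point of $f'$ is a fixed point of $f$) is fine. The route you take to prove that claim, however, has a genuine gap, and it is also different from the paper's.

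You try to show that if $\ell$ were altered, then $y = \ell - e_i$ would be a \emph{full} fixed point of $f$, contradicting leastness. The problem is the non-expansion step: for $j \neq i$ you have $y_j = \ell_j$, so $\|y - \ell\|_\infty = 1$. If $f_j(y) < y_j$ and displacements are at most unit length, then $f_j(y) = y_j - 1 = \ell_j - 1$, so $|f_j(y) - f_j(\ell)| = |f_j(y) - \ell_j| = 1 = \|y - \ell\|_\infty$. That is not a strict violation; your claimed $\geq 2$ is wrong. The situation in Lemma~\ref{lem:lower-fixed-point-for-LFP-hereditary} that you cite gives $\geq 2$ only because the two points there differ by $1$ in dimension $j$, whereas here $y$ and $\ell$ agree in dimension $j$. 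So you cannot conclude that $y$ is a full fixed point of $f$, and in fact that stronger statement need not hold.

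The paper gets the key claim more directly: by Theorem~\ref{thm:hereditary} (hereditariness of LFPs), $\ell$ is also the \emph{least one-dimensional fixed point} in the slice $S_i(\ell)$, which immediately says $f_i(\ell - e_i) \neq \ell_i - 1$, i.e.\ $\ell$ is not altered. This only requires ruling out a one-dimensional fixed point strictly below $\ell$ in the 1D slice, not a full fixed point, which is exactly what hereditariness gives. You should replace the middle of your second paragraph with an appeal to Theorem~\ref{thm:hereditary} (or, equivalently, a one-dimensional version of the argument that uses Corollary~\ref{cor:neighboring-lesser-fp} within the slice). Your closing remark about composing the dimension-by-dimension reductions is correct and matches the paper.
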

\begin{proof}
Let $x$ be the least fixed point of $f$. We can apply the hereditariness
property given in Theorem~\ref{thm:hereditary} to argue that $x$ is the least
one-dimensional fixed point in $S_i(x)$. Hence, if $x' = x - e_i$, then we must
have $f_i(x') \ne x'_i$, and so the definition of $f'$ implies that $f_i(x) =
f'_i(x)$. Since $f'$ only modifies dimension $i$ this implies that $f'(x) =
f(x)$, and so $x$ is a fixed point of $f'$.

Lemma \ref{lem:1dured} implies that if there were a fixed point $y$ of $f'$
satisfying $y \le x$, then $y$ is also a fixed point of $f$ satisfying $y \le
x$, which contradicts the fact that $x$ is the least fixed point of $f$.

\end{proof}

In summary, we have shown that the reduction correctly reduces a violation-free \DMAC instance
to another violation-free \DMAC instance in which each one-dimensional slice in dimension $i$
has a unique one-dimensional fixed point, and that the reduction preserves the
least fixed point. We can then apply this reduction to each dimension
iteratively 
to ensure that all one-dimensional slices have a unique one-dimensional fixed
point. This is valid because the reduction never introduces a one-dimensional
fixed point in any dimension, so each iteration will not destroy the properties
from the previous iterations. 

Thus we arrive at a violation-free \DMAC instance in which all displacements are at most unit
length, and in which every one-dimensional slice has a unique one-dimensional
fixed point. Hence it is a \1DUniqueDMAC instance.

It is clear that this reduction can be carried out in time that is polynomial
in the representation of $f$. In particular we note that the representation of $f$ must
be at least $d$, which is the number of dimensions, because the input to $f$ is
of size $d \cdot log n$. This completes the proof of~Lemma~\ref{lem:1dunique}

\subsection{Surfaces}

We now formalize the concept of a surface, which we will use as a proof
technique throughout the paper. 

Suppose that we have a
\1DUniqueDMAC instance $f : G \rightarrow G$, 
where $G = \{1, 2, \dots, n\}^d$, 
and let $i$ be a dimension. 
For each $x \in G$ we define $x_{-i}$ to be the point $x$ with dimension $i$
removed. We likewise define $G_{-i}$ to be the grid $G$ in which dimension $i$
has been deleted. Finally, given a point $x \in G_{-i}$ and a value $a \in \{1,
2, \dots, n\}$ we define $x \circ_i a$ to be the point in $G$ obtained by
placing the value $a$ in dimension $i$ of $x$.

We use the following definition of a surface.

\begin{definition}[Surface]
Given a \1DUniqueDMAC instance $f : G \rightarrow G$, the surface in
dimension $i$ is the function $s_i : G_{-i} \rightarrow \{1, 2, \dots, n\}$ such
that $s_i(x) = a$, where $a$ is the unique value such that $f_i(x \circ_i a) =
a$.
\end{definition}

Intuitively, we view the unique one-dimensional fixed points in dimension $i$
as heights that define the surface. It is of course necessary that we have a \1DUniqueDMAC
instance here, because otherwise there would not necessarily be a unique value
$a$ such that $f_i(x \circ_i a) = a$, and the surface would not be well
defined.

We remark that the surface function $s_i$ is not necessarily easy to compute:
in general a binary search must be carried out to find the value $a$ that
satisfies $f_i(x \circ_i a) = a$, which would take $\log n$ steps. This makes
surfaces unsuitable for defining reductions, where spending $\log n$ steps to
answer a single query to~$f$ would be too expensive. However, surfaces have
several properties that will be useful in our proofs. 

\paragraph{\bf Monotonicity.}

We say that a surface $s_i$ is \emph{monotone} if, given two points $x, y \in
G_{-i}$ with $x \le y$ we have that $s_i(x) \le s_i(y)$. 
We show that every surface arising from a 
violation-free \1DUniqueDMAC instance is monotone.

\begin{lemma}
If $s_i$ is a surface of a violation-free \1DUniqueDMAC instance, then $s_i$ is
monotone.
\end{lemma}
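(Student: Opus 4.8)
### Proof Proposal

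The plan is to derive monotonicity of the surface directly from monotonicity and non-expansion of the underlying \1DUniqueDMAC instance $f$. Fix a dimension $i$ and two points $x, y \in G_{-i}$ with $x \le y$. Write $a = s_i(x)$ and $b = s_i(y)$, so that $f_i(x \circ_i a) = a$ and $f_i(y \circ_i b) = b$; these are the unique one-dimensional fixed points of the corresponding slices. I want to show $a \le b$, so suppose for contradiction that $a > b$, i.e. $a \ge b+1$.

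First I would compare $f$ at the two points $x \circ_i a$ and $y \circ_i a$. Since $x \le y$ we have $x \circ_i a \le y \circ_i a$, so monotonicity (no \ref{DV1} violation) gives $f_i(x \circ_i a) \le f_i(y \circ_i a)$, i.e. $a \le f_i(y \circ_i a)$. On the other hand, in the one-dimensional slice $S_i(y \circ_i b)$ the unique fixed point is at height $b < a$, and a standard monotonicity argument within that slice forces the displacement at height $a$ to be non-positive: if $f_i(y \circ_i a) > a$ then, since $y \circ_i b \le y \circ_i a$ and the point at height $b$ is fixed, we would get $f_i(y\circ_i b) = b$ versus $f_i(y\circ_i a) > a > b$, which is consistent — so this route by itself is not immediate. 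The cleaner approach is to use that in the slice through $y$, the point at height $a$ lies strictly above the fixed point $b$, hence cannot itself be fixed (uniqueness), and by the monotonicity-within-a-slice reasoning (all points above a fixed point in a \1DUniqueDMAC instance move weakly down, all points below move weakly up — this follows because an upward displacement above the fixed point combined with the fixed point below it would violate monotonicity, and a downward displacement would create a second fixed point by the intermediate-value argument used in Lemma~\ref{lem:1dured}) we get $f_i(y \circ_i a) < a$. Combining with $a \le f_i(y \circ_i a)$ from the previous sentence yields $a < a$, a contradiction. Therefore $a \le b$.

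The one piece that needs care — and what I expect to be the main obstacle — is justifying cleanly that in a \emph{violation-free} \1DUniqueDMAC instance, every point lying strictly above the unique one-dimensional fixed point of its slice has strictly negative displacement in dimension $i$ (and symmetrically below). This is exactly the ``surface separates up-moving from down-moving points'' intuition used informally in the technical overview. I would prove it as a short sub-claim: let $c$ be the fixed point of slice $S_i(y\circ_i b)$ and let $z = y \circ_i (c_i + 1)$; if $f_i(z) \ge z_i$, then either $f_i(z) = z_i$, contradicting uniqueness of the fixed point, or $f_i(z) > z_i$, in which case binary search upward from $z$ (as in the proof of Lemma~\ref{lem:1dured}) produces an adjacent pair that either gives a second fixed point or a strict \ref{DV1} violation, both contradictions; an easy induction then pushes this to all heights above $c_i$. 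With that sub-claim in hand, the argument of the previous paragraph goes through verbatim, and symmetrically one also gets $s_i$ weakly increasing when comparing from below, completing the proof.
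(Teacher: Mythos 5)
Your argument is correct in substance but takes a longer route than the paper's. The paper handles the contradiction with a direct three-way case analysis at the single point $z = y\circ_i s_i(x)$: the case $f_i(z) < z_i$ is a monotonicity violation against $x\circ_i s_i(x) \le z$; the case $f_i(z) = z_i$ contradicts 1D-uniqueness in the slice through $y$; and the case $f_i(z) > z_i$ is a non-expansion violation directly against $y' = y\circ_i s_i(y)$, since then $f_i(z) - f_i(y') > z_i - y'_i = \Norm{z - y'}_\infty$. You diverge in the last step: instead of comparing $z$ with $y'$ once, you prove the stronger sub-claim that every point strictly above the slice's unique 1D fixed point has strictly negative displacement, which is exactly the ``1D rationality'' property the paper establishes as a \emph{separate} lemma immediately after this one, and then feed it in. This is a valid alternative structure, but it proves more than the lemma needs.

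Two slips in your write-up deserve attention. First, the mid-proof parenthetical claim that ``an upward displacement above the fixed point combined with the fixed point below it would violate monotonicity'' is wrong: if $y' \le z$ with $f_i(y')=y'_i$ and $f_i(z)>z_i$, then $f_i(y')<f_i(z)$, which is \emph{consistent} with monotonicity. The contradiction in that configuration comes from non-expansion, not monotonicity. You implicitly correct this in the final paragraph by switching to the binary-search argument, but the parenthetical as written would mislead a reader. Second, the binary search is heavier machinery than necessary, and the closing ``easy induction'' is not really an induction --- the step from height $k$ to height $k+1$ does not follow from the inductive hypothesis; you are just re-running the same argument. A single non-expansion comparison between $y\circ_i a$ and $y' = y\circ_i b$ already covers all heights $a > b$ at once, which is what the paper does.
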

\begin{proof}
Suppose for the sake of contradiction that we have two points $x, y \in G_{-i}$
such that $x \le y$ but $s_i(x) > s_i(y)$. Let $x' = x \circ_i s_i(x)$ and 
$y' = y \circ_i s_i(y)$ be the points on the surface at $x$ and $y$ in $G$.

Define $z = y \circ_i s_i(x)$, and observe that $x' \le z$. Since $f_i(x') =
x'_i$ we cannot have $f_i(z) < z_i$, because then we would have $f_i(z) < z_i =
x'_i = f_i(x')$, meaning that $z$ and $x'$ would witness a violation of
monotonicity of $f$. We also cannot have 
$f_i(z) = z_i$ because $y'$ is the unique one-dimensional fixed point of the
one-dimensional slice that contains $y'$ and $z$, and since $s_i(x) > s_i(y)$,
we have that $y'$ and $z$ are distinct points. Finally if we have
$f_i(z) > z_i$ then $y'$ and $z$ are a violation of contraction. So we have
arrived at a contradiction.
\end{proof}

\paragraph{\bf Gradient at most one.}

We say that a surface $s_i$ has \emph{gradient at most one} if, for every pair of
points $x, y \in G_{-i}$ we have that $| s_i(x) - s_i(y) | \le \| x - y \|_\infty$.
We show that every surface arising from a 
violation-free \1DUniqueDMAC instance has gradient at most one.

\begin{lemma}
If $s_i$ is a surface of a violation-free \1DUniqueDMAC instance, then $s_i$ 
has gradient at most one.
\end{lemma}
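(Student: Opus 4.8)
The plan is to mimic the structure of the monotonicity proof for surfaces, but now tracking how far apart two one-dimensional fixed points can drift. Suppose for contradiction that there are points $x, y \in G_{-i}$ with $|s_i(x) - s_i(y)| > \|x - y\|_\infty$. Without loss of generality assume $s_i(x) > s_i(y)$ (otherwise swap the roles of $x$ and $y$), and let $k := \|x - y\|_\infty$, so that $s_i(x) \geq s_i(y) + k + 1$. By the monotonicity lemma just proved, we may further assume $x$ and $y$ are incomparable or else $x \le y$ cannot happen (it would force $s_i(x) \le s_i(y)$) — so in fact we can only have $y \le x$ ruled out; we keep $x, y$ arbitrary incomparable or with $x \le y$ forbidden, but this subtlety won't matter because the argument below only uses $\|x-y\|_\infty = k$.

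First I would reduce to the comparable case. Let $x^\wedge$ be the coordinatewise minimum of $x$ and $y$ (in $G_{-i}$), so $x^\wedge \le x$ and $x^\wedge \le y$, and $\|x^\wedge - x\|_\infty \le k$, $\|x^\wedge - y\|_\infty \le k$. By monotonicity of $s_i$ we have $s_i(x^\wedge) \le s_i(y) < s_i(x)$, hence $s_i(x) - s_i(x^\wedge) \ge s_i(x) - s_i(y) \ge k+1 > \|x - x^\wedge\|_\infty$. So the pair $(x^\wedge, x)$ is also a gradient violation, and now $x^\wedge \le x$. Thus it suffices to derive a contradiction assuming $y \le x$ in $G_{-i}$ with $s_i(x) \ge s_i(y) + \|x-y\|_\infty + 1$.

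Now I would set up the points on the surface and walk a path between the two slices. Write $a = s_i(y)$ and $b = s_i(x)$, so $b \ge a + k + 1$ where $k = \|x - y\|_\infty$. Consider the point $q := y \circ_i (b - k - 1)$, which sits in the one-dimensional slice $S_i(y \circ_i \cdot)$ strictly below its fixed point $y \circ_i a$ (since $b - k - 1 \ge a$, with equality meaning $q$ is exactly the fixed point — in which case replace $b-k-1$ by $a$ and note $b - a \ge k+1$ still). Since $q$ is weakly below the one-dimensional fixed point of its slice, and that fixed point is unique, $q$ moves weakly upward: $f_i(q) \ge q_i$. On the other side, let $p := x \circ_i (b - k - 1)$. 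Then $p \ge q$ (coordinatewise, since $y \le x$ and the $i$-th coordinates agree), and $\|p - q\|_\infty = \|x - y\|_\infty = k$. The $i$-th coordinate of the fixed point of $p$'s slice is $b = s_i(x)$, and $p_i = b - k - 1 < b$, so $p$ is strictly below its slice's unique fixed point, hence $f_i(p) > p_i$. Moreover, by a distance argument using that $f$ is non-expansive: since $f_i(p) - f_i(q) = (p_i - q_i) + (h_i(p) - h_i(q))$ where $h_i(q) \ge 0$... actually the cleanest route is to compare $p$ with the actual fixed point $x \circ_i b$ of $p$'s slice. We have $\|p - (x\circ_i b)\|_\infty = k+1$ (the $i$-th coordinate differs by $k+1$ and all other coordinates agree), yet $f_i(p) > p_i$ while the fixed point satisfies $f_i(x \circ_i b) = b = p_i + k + 1$, so $f_i(x \circ_i b) - f_i(p) = (b) - f_i(p) \le (p_i + k + 1) - (p_i + 1) = k$... this does not immediately contradict. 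So instead I compare $q$ directly with $x \circ_i b$: we have $\|q - (x \circ_i b)\|_\infty = \max(k, |(b-k-1) - b|) = \max(k, k+1) = k+1$, and $f_i(q) \ge q_i = b - k - 1$ while $f_i(x\circ_i b) = b$, giving $f_i(x \circ_i b) - f_i(q) \le b - (b - k - 1) = k + 1$ — still consistent. The contradiction must come from pushing $q$ up past where monotonicity with $x \circ_i b$ permits.

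The main obstacle, and where I would focus the real work, is exactly this final comparison: getting a genuine violation out of the fact that $b - a \ge k + 1$. The right approach is a path argument like the one used for Lemma~\ref{lem:lfps-are-adjacent} / Corollary~\ref{cor:bind-dim}: consider the point $r := y \circ_i (a+1)$, which is just above the fixed point of $y$'s slice, so $f_i(r) < r_i$ (strictly below), i.e. $f_i(r) \le a$. Compare $r$ with $x \circ_i b$: we have $\|r - (x \circ_i b)\|_\infty = \max(k, b - a - 1)$, and since $b - a \ge k+1$ this equals $b - a - 1 \ge k$. Meanwhile $f_i(r) \le a$ and $f_i(x \circ_i b) = b$, so $|f_i(x \circ_i b) - f_i(r)| \ge b - a \ge k+1 > \|r - (x\circ_i b)\|_\infty$ whenever $b - a - 1 < b - a$, i.e. always — so $(r, x \circ_i b)$ is a \ref{DV2} non-expansion violation, contradicting that the instance is violation-free. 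This completes the proof. I expect the only real care needed is checking the edge cases where $b - a$ equals exactly $k+1$ versus is larger, and confirming that in all these sub-cases the quantity $\|r - (x \circ_i b)\|_\infty$ is strictly less than $b - a$, which holds because the $i$-th coordinate gap $b - a - 1$ is strictly less than $b - a$ and every other coordinate gap is at most $k \le b - a - 1$.
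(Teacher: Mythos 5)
Your final argument is correct and is essentially the same as the paper's, just reflected: the paper steps one unit below the surface at $x$ (taking $z = x' - e_i$, forcing $f_i(z) > z_i$) and compares against $y' = y \circ_i s_i(y)$, whereas you step one unit above the surface at $y$ (your $r = y' + e_i$, forcing $f_i(r) < r_i$) and compare against $x' = x \circ_i s_i(x)$; both produce a non-expansion violation between a point one step off one surface end and the on-surface point at the other end. The preliminary reduction to the comparable case and the exploratory detour through $q$ and $p := x \circ_i (b-k-1)$ are both unnecessary and can be deleted — your closing argument with $r$ already works for arbitrary (possibly incomparable) $x,y$ and needs no case split.
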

\begin{proof}
Suppose for the sake of contradiction that we have 
$x, y \in G_{-i}$ with $| s_i(x) - s_i(y) | > \| x - y \|_\infty$.
Let $x' = x \circ_i s_i(x)$ and 
$y' = y \circ_i s_i(y)$ be the points on the surface at $x$ and $y$ in $G$.
We will assume, without loss of generality, that $s_i(x) > s_i(y)$. 

Define the point $z = x' - e_i$ to be the point directly beneath $x$ in
dimension $i$. Note that we cannot have $f_i(z) < z_i$ because then $z$ and
$x'$ would be a violation of contraction, and we cannot have $f_i(z) = z_i$
because $x'$ is the unique one-dimensional fixed point of the one-dimensional
slice that contains $x'$ and $z$.
So we must have $f_i(z) > z_i$.

We now consider two cases.
\begin{itemize}
\item If $z_i - y'_i = \| z - y' \|_\infty$ then observe that we have 
$f_i(z) - f_i(y') > z_i - y'_i$ since $y'$ lies on the surface $s'_i$. This
means that $z$ and $y$ are a violation of contraction.

\item On the other hand, if 
$z_i - y'_i < \| z - y' \|_\infty$ then observe that $\| z - y' \|_\infty = \|
x - y \|_\infty$. Then we have
\begin{align*}
f_i(z) - f_i(y') &> z_i - y'_i \\
&= x'_i - 1 - y'_i  \\
&\ge \| x - y \|_\infty \\
& = \| z - y' \|_\infty,
\end{align*}
where the second inequality holds because we have $x'_i - y'_i > \| x - y
\|_\infty$. So $z$ and $y'$ are again a violation of contraction.
\end{itemize}
So in either case we have arrived at a contradiction.
\end{proof}

\paragraph{\bf 1D rationality.}

We say that a \1DUniqueDMAC instance $f : G \rightarrow G$ is \emph{1D
rational} if, for every $x \in G$ and every dimension $i$, we have that $f_i(x)
> x_i$ whenever $x_i < s_i(x_{-i})$, and we have that $f_i(x) < x_i$ whenever
$x_i > s_i(x_{-i})$. Intuitively, this states that $f$ always moves towards the
surface in dimension $i$, and in practical terms it rules out the existence of
violations within each one-dimensional slice. 

The next lemma states that all violation-free \1DUniqueDMAC instances are 1D
rational.

\begin{lemma}
If $f$ is a violation-free \1DUniqueDMAC instance, then $f$ is 1D rational.
\end{lemma}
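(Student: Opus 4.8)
The plan is to prove the contrapositive: assuming $f$ is \emph{not} 1D rational, I will exhibit a violation of monotonicity or non-expansion, contradicting the hypothesis that $f$ is violation-free. So suppose there is a point $x \in G$ and a dimension $i$ such that the "move towards the surface" property fails. By symmetry (flipping dimension $i$), it suffices to handle the case $x_i < s_i(x_{-i})$ but $f_i(x) \le x_i$; the case $x_i > s_i(x_{-i})$ with $f_i(x) \ge x_i$ is entirely analogous. Write $a = s_i(x_{-i})$ and let $x^\ast = x \circ_i a$ be the point on the surface directly above $x$ in dimension $i$, so $f_i(x^\ast) = x^\ast_i = a$ and $a > x_i$.

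First I would dispose of the sub-case $f_i(x) = x_i$: this would make $x$ a one-dimensional fixed point of the slice $S_i(x)$ distinct from $x^\ast$ (since $x_i \ne a$), contradicting the \1DUniqueDMAC promise that each one-dimensional slice has a \emph{unique} one-dimensional fixed point. Hence we may assume $f_i(x) < x_i$, i.e. $f$ moves strictly \emph{downward} at $x$ in dimension $i$, even though the unique fixed point $x^\ast$ of this slice lies strictly above $x$. Now I would compare $x$ with $x^\ast$. Since these two points differ only in coordinate $i$, and since all displacements of a \1DUniqueDMAC instance have length at most $1$ in the $\ell_\infty$-norm, we have $f_i(x) \ge x_i - 1$ and $f_i(x^\ast) = a$. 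The key computation is
\[
f_i(x^\ast) - f_i(x) = a - f_i(x) \ge a - x_i + 1 > (a - x_i) = \|x^\ast - x\|_\infty,
\]
using $a > x_i$ so that $a - x_i \ge 1$ and $a - x_i = \|x^\ast - x\|_\infty$ (the only differing coordinate). Thus $\|f(x^\ast) - f(x)\|_\infty \ge |f_i(x^\ast) - f_i(x)| > \|x^\ast - x\|_\infty$, which is a \ref{DV2} violation of non-expansion, contradicting that $f$ is violation-free.

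The case $x_i > s_i(x_{-i})$ with $f_i(x) \ge x_i$ is handled symmetrically: we rule out $f_i(x) = x_i$ by the uniqueness of the one-dimensional fixed point, conclude $f_i(x) > x_i$, compare $x$ with the surface point $x^\ast = x \circ_i s_i(x_{-i})$ lying strictly below, and the analogous estimate $f_i(x) - f_i(x^\ast) \ge x_i - s_i(x_{-i}) + 1 > \|x - x^\ast\|_\infty$ again produces a \ref{DV2} violation. I do not anticipate a genuine obstacle here — the only point requiring a little care is making sure to invoke the uniqueness clause of \1DUniqueDMAC to eliminate the "$f_i(x) = x_i$" possibility before running the contraction estimate, since without it one could have a stationary point that is not the recorded surface value. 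Everything else is a one-line $\ell_\infty$ inequality exploiting the unit-displacement bound and the definition of the surface.
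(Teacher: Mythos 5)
Your proof is correct and follows essentially the same approach as the paper's: rule out $f_i(x) = x_i$ using uniqueness of the one-dimensional fixed point, then show that $x$ and the surface point form a non-expansion violation (the paper states this last step without spelling out the inequality, which you do). One small imprecision: you invoke the unit-displacement bound to get $f_i(x) \ge x_i - 1$, but what the key estimate actually needs is $f_i(x) \le x_i - 1$, which follows simply from $f_i(x) < x_i$ and integrality of the grid; the unit-displacement fact is not required here.
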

\begin{proof}
Suppose for the sake of contradiction that we have a point $x$ such that $x_i <
s_i(x_{-i})$ and $f_i(x) \le x_i$. Let $y = x_{-i} \circ_i s_i(x_{-i})$ be the
point on the surface that lies above $x$ in dimension $i$, and observe that by definition we
have $f_i(y) = y_i$. We cannot have $f_i(x) = x_i$, because $y$ is the unique
fixed point of the one-dimensional slice that contains $x$ and $y$. So we must
have $f_i(x) < x_i$. This means that $x$ and $y$ are a violation of
contraction, which is a contradiction.

The same argument can be applied symmetrically in the case where
we have a point $x$ such that $x_i > s_i(x_{-i})$ and $f_i(x) \ge x_i$.
\end{proof}

\paragraph{\bf 1D rationality, monotonicity, and gradient at most one are sufficient.}

We have so far shown that if a \1DUniqueDMAC instance is violation-free, then
the instance is 1D rational, and the corresponding surfaces are monotone with
gradient at most one. We now show that the other direction also holds: if a
\1DUniqueDMAC instance is 1D rational, and all surfaces are monotone with
gradient at most one, then the instance is violation-free.

\begin{lemma}
\label{lem:surfacesback}
Let $f : G \rightarrow G$ be a \1DUniqueDMAC instance.  If $f$ is 1D rational,
and if for each dimension $i$ the surface $s_i$ is monotone and has gradient at
most one, then $f$ is violation-free.
\end{lemma}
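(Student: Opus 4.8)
The plan is to prove the contrapositive: assume $f$ has a violation of type \ref{DV1} or \ref{DV2}, and derive that either $f$ fails to be 1D rational, or one of the surfaces $s_i$ fails to be monotone or fails to have gradient at most one. So suppose $x, y \in G$ witness a violation, and let $i$ be a dimension exhibiting it. I would first handle the case where $x$ and $y$ lie in the same one-dimensional slice in dimension $i$, i.e. $x_{-i} = y_{-i}$: here the violation is entirely internal to that slice, and I would argue directly that it contradicts 1D rationality. Indeed, in a slice with a unique one-dimensional fixed point at height $s_i(x_{-i})$, 1D rationality forces $f_i$ to point strictly toward that fixed point from every other point of the slice, and points on the fixed point stay put; one checks that such a displacement pattern can never produce a monotonicity or non-expansion violation within the slice (displacements toward a common target on a line are automatically order-preserving and non-expansive in that coordinate).

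The main work is the case $x_{-i} \ne y_{-i}$. Here I would compare $x$ and $y$ against the surface points directly above/below them. Define $x^* = x_{-i} \circ_i s_i(x_{-i})$ and $y^* = y_{-i} \circ_i s_i(y_{-i})$, the points on the surface $s_i$ over $x_{-i}$ and $y_{-i}$ respectively. By 1D rationality, the sign of the displacement $f_i(x) - x_i$ is determined: it is positive if $x_i < s_i(x_{-i})$, zero if $x_i = s_i(x_{-i})$, negative if $x_i > s_i(x_{-i})$ — and since displacements are unit length, $f_i(x) = x_i + \operatorname{sgn}(s_i(x_{-i}) - x_i)$ except we must be slightly careful when $|s_i(x_{-i}) - x_i| = 0$. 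So $f_i(x)$ is an explicit function of $x_i$ and $s_i(x_{-i})$; call this $\phi(x_i, s_i(x_{-i}))$, and similarly $f_i(y) = \phi(y_i, s_i(y_{-i}))$. For the \emph{monotonicity} violation (\ref{DV1}), I would take $x \le y$ with $f_i(x) > f_i(y)$ (WLOG $i$ is a dimension realizing the violation; if $i \ne$ the dimension of the displacement conflict we're in a coordinate $f$ doesn't touch — but actually in \1DUniqueDMAC all coordinates are of this surface form, so every coordinate is covered). Then $x_{-i} \le y_{-i}$ so $s_i(x_{-i}) \le s_i(y_{-i})$ by monotonicity of the surface, and $x_i \le y_i$. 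I would show $\phi$ is monotone in each argument (clamp-toward-target is monotone in both the current position and the target), hence $f_i(x) = \phi(x_i, s_i(x_{-i})) \le \phi(y_i, s_i(y_{-i})) = f_i(y)$, a contradiction. For the \emph{non-expansion} violation (\ref{DV2}), take $|f_i(x) - f_i(y)| > \|x - y\|_\infty$; I would use the gradient-at-most-one bound $|s_i(x_{-i}) - s_i(y_{-i})| \le \|x_{-i} - y_{-i}\|_\infty \le \|x - y\|_\infty$ together with a Lipschitz-in-both-arguments property of $\phi$: the clamp-toward-target map satisfies $|\phi(a, t) - \phi(b, u)| \le \max(|a - b|, |t - u|)$ (both arguments move the output by at most their own change, and the two effects don't compound because the output is squeezed between the position and the target). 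Combining, $|f_i(x) - f_i(y)| \le \max(|x_i - y_i|, |s_i(x_{-i}) - s_i(y_{-i})|) \le \|x - y\|_\infty$, a contradiction.

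The key lemma I expect to need to state and verify carefully is this elementary fact about the "clamp toward a target" map on the integer line: for the function $\phi(a, t)$ which equals $a + 1$ if $t > a$, equals $a$ if $t = a$, and equals $a - 1$ if $t < a$, we have (i) $\phi$ is nondecreasing in $a$ for fixed $t$, (ii) $\phi$ is nondecreasing in $t$ for fixed $a$, and (iii) $|\phi(a,t) - \phi(b,u)| \le \max(|a-b|, |t-u|)$ whenever $|a - t| \le $ (something) — actually (iii) holds unconditionally on integers, but I should double check the boundary cases where $t = a$ or where $a$ and $b$ straddle their respective targets differently, since that is exactly where "toward" flips sign. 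This case analysis — a handful of cases on the signs of $t - a$ and $u - b$ and on whether $a \le b$ or $a > b$ — is the only place real checking happens, and it is the main obstacle in the sense that everything else is a clean reduction to it. Once (i)–(iii) are in hand, the argument above closes both violation cases immediately, so I would isolate (i)–(iii) as a small self-contained sublemma stated and proved first, then run the two-case argument. I would also note at the outset that 1D rationality is exactly what licenses writing $f_i(x) = \phi(x_i, s_i(x_{-i}))$ in the first place, so all three hypotheses of the lemma get used: 1D rationality to get the functional form, monotonicity of $s_i$ to apply (i)–(ii), and gradient-at-most-one of $s_i$ to apply (iii).
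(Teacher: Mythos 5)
Your proposal is correct, and it takes a genuinely different route from the paper's. The paper attacks each violation type directly by a nested case analysis on the signs of $f_i(x) - x_i$ and $f_i(y) - y_i$ and then on whether the resulting inequalities against $s_i(x_{-i})$ and $s_i(y_{-i})$ are weak or strict; the contradiction with surface monotonicity or gradient-at-most-one emerges at the end of each branch. You instead factor the entire argument through a clean abstraction: 1D rationality combined with the unit-displacement condition in the definition of \1DUniqueDMAC forces $f_i(x) = \phi\bigl(x_i,\, s_i(x_{-i})\bigr)$ where $\phi(a,t) = a + \mathrm{sgn}(t - a)$ is the "step toward target" map, so that both violation types reduce to two elementary facts about $\phi$: joint monotonicity and a max-norm Lipschitz bound. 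This buys a more modular proof in which the role of each hypothesis is transparent — 1D rationality and unit displacements give the functional form, surface monotonicity feeds (i)--(ii), gradient-at-most-one feeds (iii) — and it avoids the paper's branching on strict versus weak inequalities.

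The one place you flagged uncertainty, namely whether
$$|\phi(a,t) - \phi(b,u)| \le \max\bigl(|a-b|,\, |t-u|\bigr)$$
holds unconditionally on the integers, does in fact hold, and the check is short. Assume WLOG $a \le b$ and set $m = \max(|a-b|,|t-u|)$, $d_a = \phi(a,t) - a$, $d_b = \phi(b,u) - b \in \{-1,0,1\}$. Then $\phi(b,u) - \phi(a,t) = (b - a) + (d_b - d_a)$. If $d_b - d_a \le 0$ this is at most $b - a \le m$. If $d_b - d_a = 1$ then either $d_a = -1$ (forcing $t \le a-1$) with $d_b = 0$ (forcing $u = b$), or $d_a = 0$ (forcing $t = a$) with $d_b = 1$ (forcing $u \ge b + 1$); in both subcases $|t-u| \ge b - a + 1$, so $m \ge b - a + 1$. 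If $d_b - d_a = 2$ then $t \le a - 1$ and $u \ge b + 1$, so $|t-u| \ge b - a + 2$. For the reverse inequality, $\phi(a,t) - \phi(b,u) = (a-b) + (d_a - d_b) \le d_a - d_b \le 2$; when $b - a \ge 2$ or $m \ge 2$ this is $\le m + (b-a)$ trivially, and the residual boundary cases ($a = b$ with $m \le 1$, or $b = a + 1$ with $m = 1$) rule out $d_a - d_b = 2$ by the same positional reasoning on $t, u$. So (iii) holds, and with it your argument closes cleanly in both cases. One small remark: you need not treat $x_{-i} = y_{-i}$ separately at the start, since it is the degenerate case $s_i(x_{-i}) = s_i(y_{-i})$ of the general argument.
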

\begin{proof}
We begin with \ref{DV1} violations. Suppose for the sake of contradiction that
we have $x, y \in G$ with $x \le y$ but $f(x) \not \le f(y)$. Let $i$ be a
dimension for which $f_i(x) > f_i(y)$. 

Since $f$ has displacements that are at most unit length, we cannot have
$f_i(x) = x_i - 1$, because this would imply that $f_i(x) = x_i - 1 \le y_i - 1
\le f_i(y)$, which contradicts our assumption. Likewise, we cannot have $f_i(y)
= y_i + 1$, because this would imply that $f_i(x) \le x_i + 1 \le y_i + 1 =
f_i(y)$, which also contradicts our assumption. So we have $f_i(x) \ge x_i$ and
$f_i(y) \le y_i$. 

Now we can apply 1D rationality to conclude that $s_i(x_{-i}) \ge x_i$ while
$s_i(y_{-i}) \le y_i$. If both of these inequalities are weak, then we have
$f_i(x) = x_i \le y_i = f_i(y)$, which contradicts our assumption. 
We then consider two cases.
\begin{itemize}
\item If $s_i(x_{-i}) > x_i$ then note that if $f_i(y) = y_i$ then we have
$s_i(y_{-i}) = y_i = f_i(y)$, while if $f_i(y) < y_i$ then we have $s_i(y_{-i})
\le y_i - 1 \le f_i(y)$. So in either case we have $s_i(y_{-i}) \le f_i(y)$. So
we have $s_i(x_{-i}) \ge x_i + 1 \ge f_i(x) > f_i(y) \ge s_i(x_{-i})$. This
implies that $s_i(x_{-i})
> s_i(y_{-i})$ despite the fact that $x_{-i} \le y_{-i}$. So $s_i$ is not
 monotone, which is a contradiction.

\item On the other hand, if 
$s_i(y_{-i}) < y_i$  then we can apply the same argument as above with all
inequalities flipped to again obtain a contradiction with the monotonicity of
$s_i$. 
\end{itemize}

We now turn our attention to violations of type \ref{DV2}. Assume for the sake
of contradiction that we have a pair of points $x$ and $y$ points satisfying
$\Norm{f(x) - f(y)}_{\infty} > \Norm{x-y}_{\infty}$.
Let $i$ be a dimension such that $| f_i(x) - f_i(y) | > \| x - y \|_\infty$. 
We assume, without loss of generality, that $x_i < y_i$. 

Note that, since the instance has displacements that are at most unit length,
we cannot have $f_i(x) = x_i + 1$, because we have $f_i(y) \le y_i + 1$ and so
we would have $| f_i(x) - f_i(y) | \le | x_i - y_i| \le \| x - y \|_\infty$,
which contradicts our assumption. We likewise cannot have $f_i(y) = y_i - 1$
for the same reason. 

So we have $f_i(x) \le x_i$ and $f_i(y) \ge y_i$. 
Note that we must therefore have $s_i(x_{-i}) \le f_i(x)$ and $s_i(y_{-i}) \ge
f_i(y)$. We now consider two cases.
\begin{itemize}
\item If $| x_i - y_i| = \| x - y \|_\infty$ then we have $| s_i(x_{-i}) -
s_i(y_{-i})| \ge | f_i(x) - f_i(y) | > |x_i - y_i| = \| x - y \|_\infty \ge \|
x_{-i} - y_{-i} \|_\infty$. So $x_{-i}$ and $y_{-i}$ witness a violation of 
the fact that $s_i$ has gradient 
at most one, which is a contradiction. 

\item If $| x_i - y_i| < \| x - y \|_\infty$ then note that we must have $|x_i
- y_i| = \| x - y \|_\infty - 1$, because otherwise it would be impossible to
have $| f_i(x) - f_i(y) | | > \| x - y \|_\infty$ in an instance with at most
unit displacements. For the same reason we must also have both $f_i(x) > x$ and
$f_i(y) < y$. So we have
$| s_i(x_{-i}) - s_i(y_{-i})| \ge | f_i(x) - f_i(y) | > |x_i - y_i| + 1 = \| x
- y \|_\infty \ge \| x_{-i} - y_{-i} \|_\infty$, and we again contradict the
  fact that $s_i$ has gradient at most one.
\end{itemize}
\end{proof}

The following theorem summarizes what we have shown in this section.
\begin{theorem}
\label{thm:surfaces}
Let $f$ be a \1DUniqueDMAC instance. We have that $f$ is violation-free
if and only if $f$ is 1D rational, and the surfaces defined by $f$ are monotone
with gradient at most one.
\end{theorem}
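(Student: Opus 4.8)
The plan is to obtain Theorem~\ref{thm:surfaces} as an immediate consequence of the four lemmas established earlier in this subsection, since each of them captures exactly one of the two implications that make up the stated equivalence (and the forward implication splits into three separate claims). For the forward direction I would assume that $f$ is a violation-free \1DUniqueDMAC instance and invoke, in turn: the lemma showing that such an $f$ is 1D rational; the lemma showing that every surface $s_i$ of such an $f$ is monotone; and the lemma showing that every such $s_i$ has gradient at most one. Concatenating these three statements yields precisely the three conditions on the right-hand side of the equivalence.

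For the reverse direction I would assume that $f$ is a \1DUniqueDMAC instance that is 1D rational and whose surfaces $s_i$ are all monotone with gradient at most one, and then apply Lemma~\ref{lem:surfacesback} verbatim to conclude that $f$ is violation-free. Since Lemma~\ref{lem:surfacesback} is stated with exactly these hypotheses and exactly this conclusion, no additional argument is required.

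There is no real obstacle here — all the substantive work has already been carried out in the preceding lemmas — so the only things that need a moment's care are bookkeeping. First, ``violation-free'' for a \DMAC instance means the absence of both monotonicity violations of type \ref{DV1} and non-expansion violations of type \ref{DV2}, and each of the cited lemmas already covers both kinds, so the composite statement inherits this. Second, throughout this subsection $f$ is assumed to be a \1DUniqueDMAC instance, so the standing hypotheses $\|x-f(x)\|_\infty \le 1$ for all $x$ and the uniqueness of the one-dimensional fixed point in each one-dimensional slice are available on both sides of the equivalence; in particular the surface functions $s_i$ are well defined, which is what makes the statement meaningful in the first place. With these observations the proof is just a one-line combination of the earlier lemmas.
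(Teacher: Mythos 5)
Your proposal is correct and matches the paper's approach exactly: the paper states Theorem~\ref{thm:surfaces} as a summary of what has already been shown, with the forward direction coming from the three lemmas proving 1D rationality, monotonicity, and gradient at most one for violation-free instances, and the reverse direction being precisely Lemma~\ref{lem:surfacesback}. No separate argument is given or needed.
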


Theorem~\ref{thm:surfaces} gives us a useful tool for showing that a
\1DUniqueDMAC instance is violation-free. Instead of having to laboriously
rule out each possible type of violation, it is sufficient to ensure only that
the instance is 1D rational, and that the surfaces defined by the instance are
monotone and have gradient at most one. We will use this theorem repeatedly as we
implement reductions from \1DUniqueDMAC to \1DUniqueDMAC later on in the paper.

\subsection{Forcing Boundary Points to Move Inward}

We now apply the surface theorem to show 
that $\1DUniqueDMAC$ can be reduced, in polynomial time, to
$\1DUniqueDMAC$ in which all points on the boundary of the instance move
strictly inward. This is a property that we will use both in our decomposition
theorem and in our algorithm. 

\begin{lemma}
\label{lem:boundaryin}
Let $G = \{1, 2, \dots, n\}^d$, and
let $f : G \rightarrow G$ be a $\1DUniqueDMAC$ instance for all
x. There is a polynomial-time reduction that creates a 
$\1DUniqueDMAC$ instance $f'$ such that for
all dimensions $i$ we have 
$f_i(x) > x_i$ whenever $x_i = 1$ and $f_i(x) < x_i$ whenever $x_i = n$.
Moreover, the least fixed point of $f'$ can be mapped back to the least fixed
point of $f$.
\end{lemma}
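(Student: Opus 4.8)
The plan is to embed the instance into a slightly larger grid that has two extra layers in each dimension, and to hard-wire the new boundary layers to point strictly inward while letting $f'$ imitate $f$ on the interior after clamping coordinates back into the original range. Concretely, I would take $G' = \{1,2,\dots,n+2\}^d$ and define the coordinate-wise clamp $\pi : G' \to G$ by $\pi(y)_j = \min(n,\max(1,y_j-1))$ (so the values $1,2,3,\dots,n+1,n+2$ are sent to $1,1,2,\dots,n,n$). Then I would set, for each $y \in G'$ and each dimension $i$,
\[
f'_i(y) = \begin{cases}
2 & \text{if $y_i = 1$,}\\
n+1 & \text{if $y_i = n+2$,}\\
f_i(\pi(y)) + 1 & \text{if $2 \le y_i \le n+1$.}
\end{cases}
\]
The stated boundary condition then holds by construction, and the map from the least fixed point of $f'$ back to that of $f$ will simply be "subtract $\One$".

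First I would check the routine well-formedness: $f'$ maps $G'$ into $G'$; all displacements of $f'$ have length at most $1$ (the new layers move by exactly $1$, and on the interior $\pi(y)_i = y_i - 1$, so the displacement equals $f_i(\pi(y)) - \pi(y)_i$, which $f$ already guarantees is at most $1$ in absolute value). Next I would verify that $f'$ is a \1DUniqueDMAC instance: fixing $y_{-i}$ and writing $w = \pi(y_{-i})$, the point $y_{-i}\circ_i a$ with $a \in \{2,\dots,n+1\}$ is $i$-fixed iff $f_i(w \circ_i (a-1)) = a-1$; since $f$ is \1DUniqueDMAC there is exactly one such $a$, while $a = 1$ and $a = n+2$ are never fixed. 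This also pins down the new surface: $s'_i(y_{-i}) = s_i(\pi(y_{-i})) + 1$, and in particular $s'_i$ takes values in $\{2,\dots,n+1\}$.

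Then I would show $f'$ is violation-free by invoking Theorem~\ref{thm:surfaces}, so it suffices to check that $f'$ is 1D rational and that each $s'_i$ is monotone with gradient at most one. All three properties reduce, via the identities $f'_i(y) = f_i(\pi(y))+1$ (on the interior) and $s'_i(y_{-i}) = s_i(\pi(y_{-i})) + 1$, to the corresponding properties of $f$ (which hold by Theorem~\ref{thm:surfaces} applied to $f$) together with two elementary facts about $\pi$: it is order-preserving and it is non-expansive in the $\ell_\infty$-norm. On the two new layers, 1D rationality is immediate because $y_i = 1$ lies strictly below and $y_i = n+2$ strictly above the surface. Finally, for the least-fixed-point claim: since the new layers move strictly inward, no fixed point of $f'$ can have a coordinate equal to $1$ or $n+2$, so every fixed point lies in $\{2,\dots,n+1\}^d$; for such a point $y$ we have $\pi(y) = y - \One$, hence $f'(y) = y$ iff $f(y-\One) = y - \One$. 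Thus the fixed points of $f'$ are exactly $\{z + \One : z \text{ a fixed point of } f\}$, so the least fixed point of $f'$ is $x^\star + \One$ where $x^\star$ is the least fixed point of $f$.

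I expect the only delicate point to be getting the construction right — in particular making sure that clamping the \emph{other} coordinates (when $y$ is on the boundary in some dimension $j \ne i$) does not disturb anything. Once the construction is fixed, the hard part of the statement, namely that $f'$ is again a violation-free \1DUniqueDMAC instance, is essentially free: because $\pi$ is monotone and $1$-Lipschitz, it transports the monotonicity and gradient-at-most-one properties of the surfaces of $f$ straight over to those of $f'$, and Theorem~\ref{thm:surfaces} converts this back into violation-freeness. Polynomial-time computability of the reduction is clear, since evaluating $f'$ at a point requires one query to $f$ plus coordinate-wise clamping.
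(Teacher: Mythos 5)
Your construction is correct, but it takes a genuinely different route from the paper's. The paper grows the grid one boundary layer at a time: it first extends dimension $1$ downward by a single layer, defining $f'(x) = f(x+e_1)$ on that new layer and $f'(x) = f(x)$ elsewhere, then repeats the same operation for each remaining dimension and applies it again to the flipped instance to handle the upper boundaries; the set of fixed points is literally unchanged, so the map back is the identity. You instead add both boundary layers to all $d$ dimensions simultaneously, using the coordinate-wise clamp $\pi$, hard-wire the new layers to move strictly inward, and let the interior of $f'$ reproduce $f$ via $f'_i(y) = f_i(\pi(y)) + 1$; the fixed points are then those of $f$ shifted by $+\One$, so the map back is a uniform shift. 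Both proofs invoke Theorem~\ref{thm:surfaces} in the same way, establishing that the new surfaces are monotone with gradient at most one and that the instance is 1D rational. Your version buys a single-step, uniform construction (one clamp, one formula), whereas the paper's buys simplicity per step and keeps fixed-point coordinates unchanged; both are clearly polynomial. One small thing worth saying explicitly: your claim that $s'_i(y_{-i}) = s_i(\pi(y_{-i}))+1$ depends on the observation that $\pi$ applied to the $(d-1)$-dimensional point $y_{-i}$ equals $(\pi(y))_{-i}$, which is immediate because $\pi$ acts coordinate-wise, but it is the identity that makes the whole transport argument go through, so stating it would help the reader.
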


We start by giving a reduction that ensures that $f_1(x) > x_1$ whenever $x_1 = 0$. 
We define the grid $G' = \{0, 1, \dots, n\} \times \{1, 2, \dots, n\}^{d-1}$,
meaning that we have extended dimension $1$ downward by one unit. 


We define $f' : G' \rightarrow G'$ in the following way.
\begin{equation*}
    f'(x) = \begin{cases}
        f(x + e_1) & \text{if $x_1 = 0$,} \\
        f(x) & \text{otherwise.}
    \end{cases}
\end{equation*}
This ensures by construction that $x_1 < f_1(x)$ for all $x$ with $x_1 = 0$.

The following lemma justifies this operation. It shows that the
transformation does not introduce any new fixed points, and if the original
instance was violation-free, then the new instance will be violation-free as
well.

\begin{lemma}
A point is a fixed point of $f'$ if and only if it is a fixed point of $f$.
Moreover, if $f$ is violation-free, then $f'$ is also violation-free.
\end{lemma}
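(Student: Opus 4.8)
The plan is to treat the two claims separately, and in each case to push any structure found in $f'$ back to the same structure in $f$, exploiting that $f'$ differs from $f$ only on the bottom layer $\{x \in G' : x_1 = 0\}$, where $f'(x) = f(x+e_1)$.

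For the fixed-point claim, the key observation is that no point $x$ with $x_1 = 0$ can be a fixed point of $f'$: since $x + e_1 \in G$ and $f$ maps into $G = \{1,\dots,n\}^d$, we have $f'_1(x) = f_1(x+e_1) \ge 1 > 0 = x_1$. Hence every fixed point of $f'$ lies in the region $x_1 \ge 1$, on which $f'$ coincides with $f$ by definition; conversely every fixed point of $f$ lies in $G \subseteq G'$, so has $x_1 \ge 1$, and is therefore a fixed point of $f'$ as well. This gives the ``if and only if''.

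For violation-freeness, I would assume $f$ is violation-free and argue by contradiction. Given a \ref{DV1} violation $x \le y$ with $f'_i(x) > f'_i(y)$, I split on how many of $x_1,y_1$ equal $0$ (note $x \le y$ forces $x_1 \le y_1$, so the cases are: neither is $0$; both are $0$; $x_1 = 0$ and $y_1 \ge 1$). If neither is $0$, then $f'(x)=f(x)$, $f'(y)=f(y)$, so $x,y$ already witness a \ref{DV1} violation of $f$. If both are $0$, then $x+e_1 \le y+e_1$ in $G$ and $f(x+e_1) = f'(x) \not\le f'(y) = f(y+e_1)$, a violation of $f$. If $x_1 = 0$ and $y_1 \ge 1$, then $x+e_1 \le y$ in $G$ and $f(x+e_1) \not\le f(y)$, again a violation of $f$. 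For a \ref{DV2} violation with $|f'_i(x)-f'_i(y)| > \Norm{x-y}_\infty$, the ``neither on the bottom layer'' case is immediate, and the ``both on the bottom layer'' case reduces to $x+e_1, y+e_1 \in G$ after noting that shifting both points by $e_1$ leaves $\Norm{x-y}_\infty$ unchanged. The one case with a small computation is $x_1 = 0$, $y_1 \ge 1$: setting $x' = x+e_1 \in G$, the first coordinate contributes $|x'_1 - y_1| = y_1 - 1 \le y_1 = |x_1 - y_1|$ and all other coordinates are unchanged, so $\Norm{x'-y}_\infty \le \Norm{x-y}_\infty$; hence $|f_i(x') - f_i(y)| = |f'_i(x) - f'_i(y)| > \Norm{x-y}_\infty \ge \Norm{x'-y}_\infty$, giving a \ref{DV2} violation of $f$ between $x'$ and $y$.

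The main obstacle here is just ensuring the case split in the violation argument is exhaustive and that, in the ``exactly one point on the bottom layer'' case, replacing that point by its shift $x+e_1$ can only shrink the relevant $\ell_\infty$ distance, so a strict violation survives the shift. No deeper property of \1DUniqueDMAC instances (uniqueness of one-dimensional fixed points, or the surface theorem) is needed for this lemma; those will only be invoked separately when checking that $f'$ is again a \1DUniqueDMAC instance.
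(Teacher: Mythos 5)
Your proof is correct, and it follows a genuinely different route from the paper. The paper dispatches the violation-freeness claim in one line by appealing to Theorem~\ref{thm:surfaces}: it observes that $f'$ does not change the surfaces of $f$, notes that the new bottom layer $x_1=0$ is 1D rational because $f'_1(x)>x_1$ there, and concludes from the surface characterization that $f'$ has no violations. You instead carry out a direct, elementary case analysis that pushes any \ref{DV1} or \ref{DV2} witness for $f'$ back to a witness for $f$ by (possibly) replacing a bottom-layer point $x$ with its shift $x+e_1\in G$. Your cases are exhaustive, and the one nontrivial estimate---that the shift can only shrink $\|x-y\|_\infty$ in the mixed \ref{DV2} case---is right (only coordinate 1 changes, from $y_1$ to $y_1-1$). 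The degenerate possibility $x+e_1=y$ cannot actually arise under a genuine strict-expansion witness, so it needs no special handling, though you might flag it in one sentence for completeness.

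What each approach buys: the paper's argument is shorter and reuses a general tool already developed, but it implicitly relies on $f'$ being a \1DUniqueDMAC instance (Theorem~\ref{thm:surfaces} is stated only for such instances), which is not something the lemma itself asserts and which the construction does not obviously preserve---for instance, at a bottom-layer point $x$ with $x_1=0$ one can have $f'_1(x)=f_1(x+e_1)=2$, a displacement of length 2. Your direct argument sidesteps this entirely: it never invokes unit displacements, 1D-uniqueness, or the surface theorem, and therefore establishes the stated lemma from weaker hypotheses. In that sense your proof is both more elementary and more robust than the one in the paper. The fixed-point half of your argument is the same as the paper's.
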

\begin{proof}
Since $f'$ does not alter $f$ on the points in $G$, it is clear 
that any fixed
point of $f'$ in $G$ is also a fixed point of $f$ and vice versa. Moreover, for every point $x
\not \in G$, we have $f'_1(x) > x$ by definition, so $x$ cannot be a fixed point
of $f'$. 

If $f$ is violation-free, then we can apply Theorem~\ref{thm:surfaces} to argue
that the instance is 1D rational, and that all surfaces are monotone with
gradient at most one. Observe that $f'$ does not change the surfaces of~$f$,
and that since all points $x$ with $x_1 = 0$ move upward, we also have that
$f'$ is 1D rational. Hence $f'$ must also be violation-free by 
Theorem~\ref{thm:surfaces}.
\end{proof}

We can apply the same operation to each other dimension in sequence, to
ensure that our assumption holds for all dimensions. For the
requirements on the points $x$ with $x_i = n$, we can flip all dimensions, and
use the same reduction. It is clear that the reduction can be carried out in
time that is polynomial in the representation of $f$. Finally, since the
reduction does not change the set of fixed points, it is clear that the least
fixed point of $f'$ is also the least fixed point of $f$. This completes the proof
of Lemma~\ref{lem:boundaryin}.


\section{A Decomposition Theorem}
\label{sec:decomposition}

\newcommand{\CA}{{\mbox{$\mathcal A$}}}
\newcommand{\CB}{{{\mathcal B}}}
\newcommand{\tx}{{{\tilde{x}}}}
\newcommand{\ty}{{{\tilde{y}}}}
\newcommand{\lbx}{{{\underline{x}}}}
\newcommand{\lby}{{{\underline{y}}}}
\newcommand{\ubx}{{{\bar{x}}}}
\newcommand{\uby}{{{\bar{y}}}}
\def\ones{\mathbf 1}


In this section, we will prove a decomposition theorem for \DMAC, which is inspired by a
similar decomposition theorem that has recently been shown for the Tarski fixed
point problem~\cite{FPS22}.


Unlike the decomposition theorem given in~\cite{FPS22}, our decomposition
theorem only works for algorithms that find the least fixed point of the
instance. 
Specifically, 
given a \DMAC instance $f$, we say that an algorithm \emph{LFP-solves} the
instance if it returns a \ref{D1} solution that is a least fixed point of $f$,
or a \ref{DV1} violation, or a \ref{DV2} violation. Note that
Tarski's theorem guarantees that the instance will either have a least fixed
point or a \ref{DV1} violation, so the problem remains total even if we insist
that it must be LFP-solved. In Section~\ref{sec:decomposition} we show a
decomposition theorem for algorithms that LFP-solve a \DMAC instance.

Our algorithm for three-dimensional instances does not explicitly LFP-solve the
instance. This is not a problem, because
in Section~\ref{sec:uniqueness} 
we give a polynomial-time that transforms a violation-free three-dimensional
\DMAC instance $f$ into a violation-free three-dimensional \DMAC instance $f'$,
such that $f'$ has a \emph{unique} fixed point $x^*$, and $x^*$ can be mapped
back to the least fixed point of $f$. Thus applying our algorithm for
violation-free \DMAC to $f'$ will allow us to LFP-solve $f$, and thus we can
apply the decomposition theorem to our algorithm.

\subsection{A Decomposition Theorem for LFP-solving DMAC}

In this section we prove the following theorem.

\begin{theorem}
	\label{thm:decomposition}
	If a $d_1$-dimensional \DMAC instance can be LFP-solved in $q_1$ time by algorithm $\CA$ and
	a $d_2$-dimensional \DMAC instance can be LFP-solved in $q_2$ time by algorithm $\CB$, then
	there exists a $(q_1 \cdot q_2)$-time algorithm to LFP-solve
	$(d_1+d_2)$-dimensional \DMAC instances. 
\end{theorem}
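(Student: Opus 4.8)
## Proof proposal for Theorem~\ref{thm:decomposition}

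The plan is to mimic the structure of the decomposition theorem for Tarski fixed points from~\cite{FPS22}, but carefully adapted so that the composed algorithm returns a \emph{least} fixed point (or a genuine \DMAC violation). Given a $(d_1+d_2)$-dimensional \DMAC instance $f : G_1 \times G_2 \to G_1 \times G_2$, where $G_1 = \{1,\dots,n\}^{d_1}$ and $G_2 = \{1,\dots,n\}^{d_2}$, the idea is to run algorithm $\CB$ on the ``outer'' $d_2$ coordinates, and whenever $\CB$ queries the value of the outer function at a point $y \in G_2$, we answer that query by running $\CA$ on the ``inner'' $d_1$-dimensional slice $\{(x,y) : x \in G_1\}$. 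Concretely, we define an outer function $F : G_2 \to G_2$ where $F(y) = y + \operatorname{sgn}$ of the displacement obtained as follows: run $\CA$ on the restriction $f^y : G_1 \to G_1$ given by $f^y(x) = (f(x,y))_{1..d_1}$ (the first $d_1$ coordinates), which returns a least fixed point $x^* = \lfp(f^y)$ of that slice (or an inner violation, handled below); then set the outer displacement in dimension $j$ (for $j$ among the outer coordinates) according to the sign of $(f(x^*,y))_{d_1 + j} - y_j$. The total query count is $q_1 \cdot q_2$ since each of the $q_2$ outer queries triggers one run of $\CA$ costing $q_1$ queries. Note we should also clamp outer displacements to unit length, which is free by Lemma~\ref{lem:unitdisplacement}.

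The key correctness claims, carried out in order, are: (i) \textbf{$F$ is well-defined and monotone/non-expansive when $f$ is violation-free.} This is exactly where heredity of least fixed points (Theorem~\ref{thm:hereditary}) is essential: if $y \le y'$ in $G_2$, then the inner slice at $y$ is a subslice situation relative to $y'$ only after we also know the LFPs of the two slices are close; we use Lemma~\ref{lem:lfps-are-adjacent} and Corollary~\ref{cor:bind-dim} to bound $\|\lfp(f^y) - \lfp(f^{y'})\|_\infty \le \|y - y'\|_\infty$, which then lets us transfer monotonicity and non-expansion of $f$ through to $F$. More precisely, for monotonicity of $F$: if $y \le y'$ and $x^* = \lfp(f^y)$, $x^{**} = \lfp(f^{y'})$, then $(x^*, y) \le$ some point near $(x^{**}, y')$ and we can chain monotonicity of $f$; the displacement-bound keeps us inside the regime where the sign comparisons go through. (ii) \textbf{A fixed point of $F$ lifts to the least fixed point of $f$.} If $y^* = \lfp(F)$, then $(x^*, y^*)$ with $x^* = \lfp(f^{y^*})$ is a fixed point of $f$: the inner coordinates are fixed by construction, and the outer coordinates are fixed since $F(y^*) = y^*$ means all outer displacements at $(x^*,y^*)$ are zero. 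That it is the \emph{least} fixed point of $f$ follows because: any fixed point $(x',y')$ of $f$ has $y'$ a fixed point of $F$ (its inner part $x'$ is some fixed point of $f^{y'}$, hence $\ge \lfp(f^{y'})$, and one checks the outer directions are forced to $\zero$ by monotonicity plus heredity — this needs an argument analogous to Lemma~\ref{lem:unique-i-zero}), so $y^* \le y'$, and then heredity gives $x^* = \lfp(f^{y^*}) \le$ the inner LFP of the slice at $y'$, which is $\le x'$.

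(iii) \textbf{Violation handling.} If at any point $\CA$, run on an inner slice $f^y$, returns a \ref{DV1} or \ref{DV2} violation rather than an LFP, that violation is a pair of points in $G_1$ which extend (by appending $y$) to a pair of points in $G_1 \times G_2$ witnessing the same type of violation in $f$ — since restricting $f$ to a slice preserves monotonicity and non-expansion, a violation in the slice is literally a violation in $f$. This is immediate and requires no extra queries. Conversely, if $\CB$ run on $F$ returns a violation of $F$ (of type \ref{DV1} or \ref{DV2}), we must map it back to a violation of $f$. This is the step I expect to be the \textbf{main obstacle}: the contrapositive of claim (i) says that if $F$ has a violation then $f$ must have one, but we need to \emph{exhibit} it in polynomial time, and the two points of the $F$-violation are points $y, y'$ in $G_2$ whose associated inner LFPs $x^*, x^{**}$ were computed by separate runs of $\CA$. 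We recover the $f$-violation by inspecting these two runs: the $F$-violation tells us the outer displacements at $(x^*,y)$ and $(x^{**},y')$ are ``too far apart'' (or non-monotone), so either the raw displacements of $f$ at these two concrete points already witness the violation, or the discrepancy must be ``hidden'' in the fact that $x^*$ and $x^{**}$ are at distance more than $\|y-y'\|_\infty$ — but that itself contradicts Corollary~\ref{cor:bind-dim} applied to the inner instance, which means one of the inner slices was \emph{not} violation-free, and a re-examination of the corresponding run of $\CA$ (or a short additional search of the form used in Lemma~\ref{lem:ov2-nbor-slice-poly-time}, iterating $f$ from a corner of a small box) produces an explicit inner violation, which lifts as in the previous sentence. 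Assembling this case analysis cleanly — tracking exactly which pair of concrete grid points to output in each sub-case, and bounding the extra queries by a constant per $F$-query so the total stays $O(q_1 q_2)$ — is the delicate part; everything else is bookkeeping with the tools already established (Theorem~\ref{thm:hereditary}, Lemma~\ref{lem:lfps-are-adjacent}, Corollary~\ref{cor:bind-dim}, and Lemma~\ref{lem:unitdisplacement}).
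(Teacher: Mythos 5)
Your approach is essentially the same as the paper's, modulo a symmetric swap of which algorithm plays the outer role (the paper runs $\CA$ on the first $d_1$ coordinates with $\CB$ on the inner slices; you do the reverse, which is immaterial since the statement is symmetric in $q_1,q_2$). The core tools you name — closeness of LFPs across neighbouring slices and a Tarski argument for leastness — are exactly what the paper uses (its Lemmas~\ref{lem:subslice-fp-prop}--\ref{lem:subslice-lfp-prop2} and Lemma~\ref{lem:lfp}). Two places where your sketch is off. First, in claim (ii), the assertion that every fixed point $(x',y')$ of $f$ has $y'$ a fixed point of $F$ is false: $F(y')$ is evaluated at $\lfp(f^{y'})$, which may be strictly below $x'$, and then the outer displacement at $\bigl(\lfp(f^{y'}),y'\bigr)$ need not vanish. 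Concretely, if $f$ is violation-free with exactly the fixed points $(1,1)$ and $(2,2)$ and $f(1,2)=(1,1)$, then $\lfp(f^2)=1$ and $F(2)=1\neq 2$. What does hold is the weaker $F(y')\le y'$ (monotonicity of $f$ applied to $\lfp(f^{y'})\le x'$), and then monotonicity of $F$ together with Tarski on the sublattice below $y'$ gives $\lfp(F)\le y'$; the paper's Lemma~\ref{lem:lfp} is implicitly making this same move. Second, your violation-mapping in (iii) is more complicated than needed. Once the LFP-closeness bound $\Normi{\lfp(f^y)-\lfp(f^{y'})}\le\Normi{y-y'}$ is established for violation-free sub-slices (the paper's Lemma~\ref{lem:subslice-lfp-prop2}), any \ref{DV1} or \ref{DV2} violation of $F$ at a pair $y,y'$ lifts \emph{directly} to a violation of $f$ at the two concrete points $\bigl(\lfp(f^y),y\bigr)$ and $\bigl(\lfp(f^{y'}),y'\bigr)$ already computed by the inner runs: the $\ell_\infty$ distance between these composed points equals $\Normi{y-y'}$ precisely because the inner LFPs are no further apart, so the expanding (resp.\ non-monotone) outer displacement is already a full $f$-violation at those points. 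No further iteration in the spirit of Lemma~\ref{lem:ov2-nbor-slice-poly-time}, and no extra queries beyond the $q_1\cdot q_2$ already counted, are needed.
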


Let the given \DMAC instance be $f:G\rightarrow G$ where $G = \{1, 2, \dots, n\}^d$. Let $d=d_1+d_2$ for $d_1, d_2 \ge 1$. We now define some operators that will help us to decompose this instance.
\begin{itemize}
	\item For each 
	$x = (x_1, x_2, \dots, x_{d_1}) \in \reals^{d_1}$ and
	$y = (y_1, y_2, \dots, y_{d_2}) \in \reals^{d_2}$ we define
	\begin{equation*}
		x \oplus y = (x_1, x_2, \dots, x_{d_1}, y_1, y_2, \dots, y_{d_2}) 
	\end{equation*}
	to be the point obtained by concatenating $y$ and $x$.

	\item 
	For each $x = (x_1, x_2, \dots, x_{d_1 + d_2}) \in \reals^{d_1 + d_2}$ we define $x
	\restriction d_1 = (x_{1}, x_{2}, \dots, x_{d_1})$ to be the point obtained by
	deleting all but the first $d_1$ coordinates of $x$.
	
	\item For each $x = (x_1, x_2, \dots, x_{d_1 + d_2}) \in \reals^{d_1 + d_2}$ we define $x
	\restriction d_2 = (x_{d_1+1}, x_{d_1+2}, \dots, x_{d_1+d_2})$ to be the point obtained by
	deleting the last $d_2$ coordinates of $x$.
	
\end{itemize}

At a high level, the idea is to run Algorithm $\mathcal{A}$ on an instance
$f^{\mathcal{A}}:G_1 \rightarrow G_1$ over domain $G_1=\{1,2\dots,n\}^{d_1}$. Whenever $\mathcal{A}$ queries a point
$x \in G_{1}$, we then run algorithm $\mathcal{B}$ on the instance
$f^{x}:G_2\rightarrow G_2$ over domain $G_2=\{1,2,\dots,n\}^{d_2}$, which is defined so that 
\begin{equation*}
	f^x(y) = f(x\oplus y)\restriction d_2,
\end{equation*}

In other words, $f^x$ is the {\em sub-instance} of $f$ where the first $d_1$
coordinates are fixed to be $x$. If $\mathcal{B}$ returns a fixed point (\ref{D1} solution) $y^*$ of $f^x$, where $f^x(y^*)=y^*$, then we return $f(x \oplus y^*) \restriction d_1$ to $\mathcal{A}$ and continue. Thus $f^\CA$ can be thought of as a {\em super-instance}. 


We first show that if $\mathcal A$ and $\mathcal B$ return the LFP of corresponding instances, then we get an LFP of the original instance. 
\begin{lemma}\label{lem:lfp}
	If $x^*$ is an LFP of instance $f^\CA$ returned by $\mathcal{A}$, and $y^*$ is an LFP of the instance $f^{x^*}$ returned by $\CB$, then $(x^* \oplus y^*)$ is an LFP of instance $f$. 
\end{lemma}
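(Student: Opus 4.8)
I want to show that $(x^* \oplus y^*)$ is the least fixed point of $f$. There are two things to establish: first, that it is a fixed point of $f$; second, that it is the \emph{least} one. For the fixed point part, I would argue coordinate-wise. Since $y^*$ is a fixed point of $f^{x^*}$, we have $f^{x^*}(y^*) = y^*$, which by definition of $f^{x^*}$ means $f(x^* \oplus y^*)\restriction d_2 = y^*$, i.e. the last $d_2$ coordinates of $f(x^* \oplus y^*)$ agree with $y^*$. For the first $d_1$ coordinates, recall that when $\mathcal{B}$ returned $y^*$, the super-instance $f^{\mathcal A}$ was being evaluated at $x^*$ via $f^{\mathcal A}(x^*) = f(x^* \oplus y^*)\restriction d_1$; since $x^*$ is a fixed point of $f^{\mathcal A}$ (being its LFP), we get $f(x^* \oplus y^*)\restriction d_1 = x^*$. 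Combining the two halves gives $f(x^* \oplus y^*) = x^* \oplus y^*$.

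**Leastness.** This is the more delicate part. Suppose $z$ is any fixed point of $f$ with $z \le x^* \oplus y^*$. Write $z = a \oplus b$ with $a \in G_1$, $b \in G_2$, so $a \le x^*$ and $b \le y^*$. I want to derive $a = x^*$ and $b = y^*$. The natural route is: first show $a$ is a fixed point of $f^{\mathcal A}$, which forces $a \ge x^*$ by leastness of $x^*$, hence $a = x^*$; then, with $a = x^*$ in hand, show $b$ is a fixed point of $f^{x^*}$, which forces $b \ge y^*$, hence $b = y^*$. The second step is immediate once $a = x^*$: since $z = x^* \oplus b$ is fixed by $f$, we get $f^{x^*}(b) = f(x^* \oplus b)\restriction d_2 = z \restriction d_2 = b$. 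The first step — that $a$ is a fixed point of $f^{\mathcal A}$, i.e. $f^{\mathcal A}(a) = a$ — requires knowing that $f^{\mathcal A}(a) = f(a \oplus y_a^*)\restriction d_1$ where $y_a^*$ is the LFP of $f^a$ returned by $\mathcal B$, and relating this to the fixed point $z = a \oplus b$. Here I expect to need the hereditariness of least fixed points (Theorem~\ref{thm:hereditary}): since $z = a \oplus b$ is a fixed point of $f$ and hence $b$ is a fixed point of $f^a$, and since $y_a^*$ is the \emph{least} fixed point of $f^a$, we have $y_a^* \le b$; then monotonicity of $f$ on the relevant slice, together with the fact that both $a \oplus y_a^*$ and $a \oplus b$ project correctly, should let me conclude $f(a \oplus y_a^*)\restriction d_1 = a$, i.e. $f^{\mathcal A}(a) = a$. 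Actually the cleanest argument: $b$ is a fixed point of $f^a$, $y_a^*$ is its LFP, so $y_a^* \le b \le y^*$; we want $f^{\mathcal A}(a) = f(a\oplus y_a^*)\restriction d_1 = a$. Since $a \oplus y_a^* \le a \oplus b = z$ and $z$ is fixed, monotonicity gives $f(a\oplus y_a^*) \le f(z) = z$, so $f(a\oplus y_a^*)\restriction d_1 \le a$. For the reverse inequality I would use that $a \oplus y_a^*$, restricted to its last $d_2$ coordinates, is fixed, and then invoke a Tarski-style / monotonicity argument within the $d_1$-slice through $a \oplus y_a^*$ — or more simply, observe that $f^{\mathcal A}$ is itself a monotone (violation-free) \DMAC instance whose value at $a$ is well-defined, and that $f^{\mathcal A}(a)\restriction d_1 \ge a$ would follow if $a$ were below the relevant one-dimensional surfaces.

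**The main obstacle.** The crux is verifying that $f^{\mathcal A}$ is a well-defined monotone (indeed violation-free) \DMAC instance, so that "LFP of $f^{\mathcal A}$" even makes sense, and that evaluating $f^{\mathcal A}$ at a point $a$ below $x^*$ genuinely lands weakly above $a$ in each coordinate when $a \oplus b$ is a fixed point with $b \le y^*$. This is where the interaction between the sub-instance's LFP (via hereditariness) and the super-instance's monotonicity must be handled carefully, and I expect the bulk of the work — and possibly a supporting lemma about $f^{\mathcal A}$ being violation-free and order-preserving — to live here. The fixed-point part and the final descent from $a = x^*$ to $b = y^*$ are routine by comparison.
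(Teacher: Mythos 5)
Your case split matches the paper's: suppose $z = a \oplus b \le x^* \oplus y^*$ is a fixed point of $f$, then argue $a = x^*$ and $b = y^*$ in turn. The inequality $f^{\mathcal{A}}(a) \le a$ you establish correctly from monotonicity of $f$. But the step you flag as delicate --- showing $f^{\mathcal{A}}(a) \ge a$, hence $f^{\mathcal{A}}(a) = a$ --- is a genuine gap when $z$ is an \emph{arbitrary} fixed point below $x^* \oplus y^*$: by construction $f^{\mathcal{A}}(a)$ is evaluated at the point $a \oplus y_a^*$ (with $y_a^*$ the LFP of $f^a$), which may lie \emph{strictly} below $z$ whenever $y_a^* < b$, and at that lower point nothing prevents $f$ from moving the first $d_1$ coordinates strictly downward. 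Neither monotonicity within the $d_1$-slice nor a surface argument forces the reverse inequality, and in fact one can build small two-dimensional examples where $f^{\mathcal{A}}(a) < a$ for such an $a$.

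The fix, which you gesture at but do not execute, is to choose $z$ to be not just any smaller fixed point but \emph{the} least fixed point of $f$ (which exists by Tarski and lies weakly below $x^* \oplus y^*$, since $x^* \oplus y^*$ is fixed). Then Theorem~\ref{thm:hereditary} applies in full force: since $z$ is the LFP of the full slice, its restriction $b = z \restriction d_2$ is the LFP of the subslice $f^a$ --- that is, $y_a^* = b$, equality, not merely $y_a^* \le b$ as you wrote. With that equality, $f^{\mathcal{A}}(a) = f(a \oplus y_a^*)\restriction d_1 = f(z)\restriction d_1 = z \restriction d_1 = a$ falls out immediately, so $a$ is a fixed point of $f^{\mathcal{A}}$, and the two-case contradiction (either $a < x^*$ contradicting $x^*$ being the LFP of $f^{\mathcal{A}}$, or $a = x^*$ and $b < y^*$ contradicting $y^*$ being the LFP of $f^{x^*}$) closes the proof. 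Finally, your worry about whether $f^{\mathcal{A}}$ is a monotone, violation-free instance is a red herring \emph{for this lemma}: the hypothesis already grants that $x^*$ is the LFP of $f^{\mathcal{A}}$, and the argument above needs nothing further about $f^{\mathcal{A}}$ --- the well-behavedness facts (Lemmas~\ref{lem:subslice-lfp-prop1} and~\ref{lem:subslice-lfp-prop2}) are used for mapping violations of $f^{\mathcal{A}}$ back to violations of $f$ later in the decomposition theorem, not here.
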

\begin{proof}
	Towards a contradiction, suppose $(x^*\oplus y^*)$ is not the LFP of $f$. Then there exists another fixed point of $f$ namely $(\tx \oplus \ty)\le (x^*\oplus y^*)$. So, either $\tx_i < x^*_i$ for some $i\le d_1$ contradicting $x^*$ being the LFP of $f^\CA$. Or $\tx=x^*$ and for some $j\le d_2$, $\ty_j \le y^*_j$ contradicting $y^*$ being the LFP of $f^{x^*}$.
\end{proof}

Of course, either $\mathcal{A}$ or $\mathcal{B}$ could instead return either type of the violation solutions, namely \ref{DV1} and \ref{DV2}, at any point in this procedure. We will show that these
violations can be efficiently translated into violations in $f$, meaning that
we can stop the procedure immediately if this occurs.\footnote{If algorithms
	$\CA$ or $\CB$ are only for violation-free instances, and may return a non-solution or does not stop if the violations are present, then the same will hold for our final algorithm as well.} The following lemma discusses violations from algorithm $\mathcal{B}$.

\begin{lemma}\label{lem:gx-violations}
	For some $x\in G_1$, suppose on the input instance $f^x$, algorithm $\mathcal{B}$ produces $y,z\in G_2$, which witness a violation of type \ref{DV1} or \ref{DV2}. Then $(x\oplus y)$ and $(x\oplus z)$ form a \ref{DV1} or \ref{DV2} violation for the instance $f$ respectively.    
\end{lemma}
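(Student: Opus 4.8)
The plan is to unwind the definitions and show that a violation in the sub-instance $f^x$ literally \emph{is} a violation in $f$ after prepending the fixed coordinates $x$. Recall that $f^x(y) = f(x \oplus y)\restriction d_2$, so for each dimension $j \in [d_2]$ we have $f^x_j(y) = f_{d_1 + j}(x \oplus y)$. The key observation is that concatenating the identical block $x$ to the front of both points $y$ and $z$ changes neither the coordinatewise comparisons nor the $\ell_\infty$ distances among the free coordinates, so any inequality witnessed in $G_2$ is preserved verbatim in $G$.

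First I would handle the \ref{DV1} case. Suppose $y, z \in G_2$ satisfy $y \le z$ but $f^x(y) \not\le f^x(z)$, i.e.\ there is some $j \in [d_2]$ with $f^x_j(y) > f^x_j(z)$. Then $x \oplus y \le x \oplus z$ since the first $d_1$ coordinates agree and $y \le z$. Moreover, for the dimension $i := d_1 + j$ we have $f_i(x \oplus y) = f^x_j(y) > f^x_j(z) = f_i(x \oplus z)$, so $f(x\oplus y) \not\le f(x \oplus z)$. Hence $x \oplus y$ and $x \oplus z$ are a \ref{DV1} violation for $f$.

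Next I would handle the \ref{DV2} case. Suppose $y, z \in G_2$ satisfy $\Norm{f^x(y) - f^x(z)}_\infty > \Norm{y - z}_\infty$. Since $x\oplus y$ and $x \oplus z$ agree on the first $d_1$ coordinates, we have $\Norm{(x\oplus y) - (x \oplus z)}_\infty = \Norm{y - z}_\infty$. On the other hand, $\Norm{f(x\oplus y) - f(x\oplus z)}_\infty \ge \Norm{f(x\oplus y)\restriction d_2 - f(x\oplus z)\restriction d_2}_\infty = \Norm{f^x(y) - f^x(z)}_\infty > \Norm{y-z}_\infty = \Norm{(x\oplus y) - (x\oplus z)}_\infty$, so $x\oplus y$ and $x \oplus z$ are a \ref{DV2} violation for $f$. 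This completes both cases. There is essentially no obstacle here; the only thing to be careful about is making sure the index bookkeeping between $[d_2]$ and the last $d_2$ coordinates of $[d_1+d_2]$ is stated cleanly, and noting explicitly that prepending a common prefix preserves both the partial order and the $\ell_\infty$ norm on the remaining coordinates.
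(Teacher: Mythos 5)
Your proposal is correct and follows essentially the same route as the paper's proof: unwind the definition of $f^x$, observe that prepending the common prefix $x$ preserves coordinatewise order and the $\ell_\infty$ distance, and note that restriction to the last $d_2$ coordinates can only decrease the $\ell_\infty$ norm. The only cosmetic difference is that you make the index bookkeeping ($i = d_1 + j$) fully explicit where the paper leaves it implicit.
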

\begin{proof}
	For the first case, suppose $y,z\in G_2$ witness a \ref{DV1} violation. That is, $y\le z$ but $f^x(y) \not\leq f^x(z)$. Then we have that $(x\oplus y) \le (x\oplus z)$, while $$f^x(y)=f(x\oplus y)\restriction d_2 \not\leq f^x(z)= f(x\oplus z)\restriction d_2 \Rightarrow f(x\oplus y) \not\leq f(x\oplus z)$$
	
	For the second case, uppose $y,z\in G_2$ witness a \ref{DV2} violation, namely $\Norm{f^x(y) -  f^x(z)}_\infty > \Norm{y-z}_\infty$. Then, we get:
	\[
	\begin{array}{lcl}
		\Norm{f(x\oplus y) - f(x \oplus z)}_\infty & \ge &  \Norm{f(x\oplus y)\restriction d_2 - f(x \oplus z)\restriction d_2}_\infty \\
		& = & \Norm{f^x(y) -  f^x(z)}_\infty \\
		& > & \Norm{y-z}_\infty = \Norm{(x\oplus y) - (x\oplus z)}_\infty
	\end{array} 
	\] 
\end{proof}

Next, we will show how to convert violations returned by algorithm $\mathcal A$ for instance $f^\CA$ to violations for the original instance $f$. This proof is more involved, since even when $f$ is a violation-free instance, in other words $f$ has neither \ref{DV1} nor \ref{DV2} violations, $f^\CA$ need not be violation-free. The main difficulty is that for a fixed $x\in G_1$, instance $f^x$ may have multiple fixed points, and hence $f^\CA(x)$ highly depends on which of these fixed points algorithm $\mathcal B$ manages to find. 

If $f$ is violation-free, then $f^x$ is violation-free for all $x\in G_1$ by Lemma \ref{lem:gx-violations}. In that case, $\mathcal B$ has to return a fixed point for any given instance $f^x$. 
Next, we show that if algorithm $\mathcal B$ always returns the LFP of instance $f^x$ for any given $x\in G_1$, then the corresponding $f^\CA$ will be violation-free as well. In other words, violations of $f^\CA$ can be mapped to violations of $f$. For this, we will have to prove certain properties of the LFPs across different $x$-slices ($f^x$ instances).  

\begin{lemma}\label{lem:subslice-fp-prop}
	Let $x, \tilde{x} \in G_1$, such that $x\le \tx$ and $\Norm{x-\tilde{x}}_\infty = 1$. Suppose, sub-instances $f^x$ and $f^\tx$ are violation-free. Then, 
	\begin{enumerate}
		\item[(a)] If $y$ is a fixed point of $f^x$, then there exists a fixed point $\ty$ of $f^\tx$, such that $y \le \ty$ and $\Norm{y-\ty} \le 1$.
		\item[(b)] If $\ty$ is a fixed point of $f^\tx$, then there exists a fixed point $y$ of $f^x$, such that $y \le \ty$ and $\Norm{y-\ty} \le 1$. 
	\end{enumerate}  
\end{lemma}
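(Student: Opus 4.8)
The plan is to prove the two symmetric statements (a) and (b) by applying Tarski's theorem (Theorem~\ref{thm:tarski1}) to a carefully chosen sublattice of $G_2$, in direct analogy with Observation~\ref{obs:neighboring-slice-fp-nearby}. I will prove part (a) in detail; part (b) follows by the mirror-image argument, swapping the roles of $x$ and $\tx$ and of ``upward'' and ``downward'' (and using that $f^\tx$ is violation-free to control the downward direction instead of the upward one). For part (a), let $y$ be a fixed point of $f^x$, so $f(x\oplus y)\restriction d_2 = y$, i.e. $f_j(x\oplus y)=y_j$ for all $d_1 < j \le d_1+d_2$. Since $\Norm{x-\tx}_\infty = 1$ and $x \le \tx$, we have $\Norm{(x\oplus y)-(\tx\oplus y)}_\infty = 1$. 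Consider the region $R := B(y,1)\cap G_2$ inside the domain $G_2$ of $f^\tx$; we want a fixed point of $f^\tx$ in the ``upward half'' of $R$, namely in $\sublattice(y, y+\One)\cap G_2$.

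First I would show that $f^\tx$ maps $R' := \sublattice(y, y+\One)\cap G_2$ into itself. Because $f$ has displacements of length at most $1$, the only coordinates $j$ that could leave the box are those with $y_j$ or $y_j+1$ on the boundary of $\proj_j(R')$, i.e. coordinates where $(f^\tx)_j$ might move strictly below $y_j$ (from $y_j$) or strictly above $y_j+1$ (from $y_j+1$). For the first: if $(f^\tx)_j(w) < w_j$ at some $w \in R'$ with $w_j = y_j$, then comparing with the fixed point $x\oplus y$ of $f$ — recall $(x\oplus y)_j = y_j = w_j$ while $\Norm{(x\oplus y)-(\tx\oplus w)}_\infty = 1$ since $x\le\tx$, $\Norm{x-\tx}=1$, and $w \in \sublattice(y,y+\One)$ — we get $|f_j(\tx\oplus w) - f_j(x\oplus y)| = |(f^\tx)_j(w) - y_j| \ge 1 \ge \Norm{(\tx\oplus w)-(x\oplus y)}_\infty$; if this inequality is strict we have a \ref{DV2} non-expansion violation of $f$, contradicting that $f$ is violation-free (or, if one prefers, of $f^\tx$, which is violation-free by hypothesis). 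Actually, to avoid the edge case where it is an equality, I would instead bound things more carefully using the structure: the clean statement is that we cannot have $(f^\tx)_j(w) < y_j$, because $f_j(\tx\oplus w) = (f^\tx)_j(w) \le y_j - 1 = (x\oplus y)_j - 1$, and since $x\oplus y \le \tx\oplus w$ componentwise on dimension $j$ is false — so instead I will run the argument exactly as in Observation~\ref{obs:neighboring-slice-fp-nearby}, taking $R := B(x\oplus y, 1) \cap s'$ where $s'$ is the $d_2$-slice through $\tx$; the non-expansion property of $f$ forces $f(R)\subseteq R$ verbatim as there. For the ``upward'' refinement to $R'$ I additionally use monotonicity of $f$: at any $w$ with $w_j = y_j$ we have $w \ge x\oplus y$ (on the free coordinates, after shifting), hence $f_j(\tx\oplus w) \ge f_j(x\oplus y) = y_j$, so the lower face also maps inward. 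Thus $f^\tx$ restricted to $R'$ maps $R'$ to itself, and since $f^\tx$ is monotone on $R'$ (being monotone everywhere on $G_2$, as $f^x, f^\tx$ violation-free), Tarski's theorem gives a fixed point $\ty$ of $f^\tx$ in $R'$. Then $y \le \ty$ and $\Norm{y - \ty}_\infty \le 1$ by construction of $R'$, proving (a).

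The main obstacle I anticipate is getting the directions of the inequalities and the choice of which box-face argument uses monotonicity versus non-expansion exactly right, since the statement is asymmetric ($y \le \ty$ is asserted, not merely $\Norm{y-\ty}\le 1$). The asymmetry forces us to combine two ingredients: non-expansion (comparing to the nearby fixed point, which pins $\ty$ within $\ell_\infty$-distance $1$ and handles the ``downward'' faces) and monotonicity (comparing to the fixed point of the other slice, which forces the ``upward'' faces of the half-box inward, so that Tarski produces a fixed point that is actually $\ge y$). This is precisely the same two-ingredient pattern used to prove Lemma~\ref{lem:lfps-are-adjacent} and Lemma~\ref{lem:fixed-point-between}, so the template is already in place; the care is purely in bookkeeping the coordinates through the $\oplus$ / $\restriction d_2$ notation. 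Part (b) is then the exact mirror: given a fixed point $\ty$ of $f^\tx$, build $R'' := \sublattice(\ty-\One,\ty)\cap G_2$ in the slice through $x$, use non-expansion against $\tx\oplus\ty$ to keep $f^x$ inside $B(\ty,1)$, use monotonicity ($f_j(x\oplus w) \le f_j(\tx\oplus\ty) = \ty_j$ whenever $w_j = \ty_j$, since $x\le\tx$) to force the upper faces inward, and apply Tarski to extract a fixed point $y$ of $f^x$ with $y\le\ty$ and $\Norm{y-\ty}_\infty\le 1$.
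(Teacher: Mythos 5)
Your plan is correct and its key ingredients match the paper's, but you extract the fixed point by a slightly different route. The paper first establishes $y \le f^{\tilde{x}}(y)$ (monotonicity of $f$, comparing $f(x\oplus y)$ to $f(\tilde{x}\oplus y)$) and $f^{\tilde{x}}(y + \One) \le y + \One$ (non-expansion of $f$, comparing $f(x\oplus y)$ to $f(\tilde{x}\oplus (y+\One))$) --- these are exactly your lower-face monotonicity check and upper-face non-expansion check, but evaluated at the two corners only --- and then iterates $f^{\tilde{x}}$ starting from $y$: by monotonicity of $f^{\tilde{x}}$ the iterates form a weakly increasing sequence bounded above by $f^{\tilde{x}}(y+\One) \le y+\One$, so they stabilize at a fixed point within $d_2$ steps. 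You instead verify the self-map condition at every boundary point of $\sublattice(y, y+\One)\cap G_2$ and invoke Tarski, mirroring Observation~\ref{obs:neighboring-slice-fp-nearby}. Both are valid; the paper's iteration needs only the two corner inequalities (monotonicity of $f^{\tilde{x}}$ alone then propagates the bounds inward) and gives an explicit $d_2$-step bound, matching the constructive machinery of Lemma~\ref{lem:ov2-nbor-slice-poly-time}, whereas yours is structurally transparent at the cost of pointwise bookkeeping on the faces. Two small remarks: your first attempt at the lower face via non-expansion correctly fails to give a strict violation, and your pivot to monotonicity against $x\oplus y$ is the right fix, but the detour makes the exposition wander; and both you and the paper implicitly use monotonicity and non-expansion of the full instance $f$, not just of the slices $f^x$ and $f^{\tilde{x}}$ named in the hypotheses, which is fine in the decomposition-theorem context (where $f$ is the original violation-free instance) but is worth noting since the lemma as literally stated only assumes the sub-instances are violation-free.
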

\begin{proof}
	We will prove statement (a), and the proof of statement (b) follows similarly. By monotonicity, we have $f(x \oplus y) \le f(\tilde{x} \oplus y)\Rightarrow y = f^x(y) \le f^\tx(y)$. By contraction, we have $\Norm{f(x\oplus y) - f(\tx \oplus y+\ones)}_\infty \le \Norm{(x\oplus y) - (\tx \oplus y+\ones)}_\infty =1\Rightarrow \Norm{f^x(y) - f^\tx(y+\ones)}_\infty = \Norm{y - f^\tx(y+\ones)}_\infty \le 1$. Therefore, $y\le f^\tx(y) \le f^\tx(y+\ones) \le (y +\ones)$. This together with monotonicity implies $f^\tx(f^\tx(y)) \le f^\tx(y+\ones)\le y+\ones$. 
	
	Applying this last argument repeatedly we get that $y \le {f^\tx}^k(y) \le f^\tx(y+\ones) \le y+\ones$ for all $k\ge 1$. Thus for some $1\le k \le d_2$ it must be that $\ty={f^\tx}^k(y)$ is a fixed point of $f^\tx$ satisfying both the desired properties, namely $y\le \ty$ and $\Norm{y -\ty} \le 1$. 
\end{proof}

\begin{lemma}\label{lem:subslice-lfp-prop1}
	Let $x, \tilde{x} \in G_1$,
	and $y^*$ and $\tilde{y}^*$ be LFP of instances $f^x$ and $f^{\tx}$ respectively. If $x\le \tilde{x}$ then $y^* \le \tilde{y}^*$ and $\Norm{y^* - \tilde{y}^*}_\infty \le \Norm{x-\tilde{x}}_\infty$.
\end{lemma}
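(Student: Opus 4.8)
The proof is a straightforward induction on $k \coloneqq \Norm{x-\tilde{x}}_\infty$, in exactly the same spirit as the proof of Corollary~\ref{cor:bind-dim} from Lemma~\ref{lem:lfps-are-adjacent}, but now using Lemma~\ref{lem:subslice-fp-prop} in place of Lemma~\ref{lem:lfps-are-adjacent}. Throughout I will use that $f$ (and hence each sub-instance $f^x$, by Lemma~\ref{lem:gx-violations}) is violation-free, so that Lemma~\ref{lem:subslice-fp-prop} applies to every pair of adjacent sub-instances; if $k=0$ then $x=\tilde{x}$ and the claim is trivial since LFPs are unique.

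\textbf{Base case $k=1$.} Here I would prove both the inequality $y^* \le \tilde{y}^*$ and the bound $\Norm{y^*-\tilde{y}^*}_\infty \le 1$ by sandwiching. Applying Lemma~\ref{lem:subslice-fp-prop}(a) to the fixed point $y^*$ of $f^x$ gives a fixed point $\hat{y}$ of $f^{\tilde{x}}$ with $y^* \le \hat{y}$ and $\Norm{y^*-\hat{y}}_\infty \le 1$; since $\tilde{y}^*$ is the LFP of $f^{\tilde{x}}$, we get $\tilde{y}^* \le \hat{y}$. Symmetrically, Lemma~\ref{lem:subslice-fp-prop}(b) applied to $\tilde{y}^*$ yields a fixed point $\bar{y}$ of $f^x$ with $\bar{y} \le \tilde{y}^*$, and leastness of $y^*$ gives $y^* \le \bar{y} \le \tilde{y}^*$. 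Hence $y^* \le \tilde{y}^* \le \hat{y}$, and since $0 \le \tilde{y}^*_j - y^*_j \le \hat{y}_j - y^*_j \le 1$ coordinatewise, we conclude $\Norm{y^*-\tilde{y}^*}_\infty \le 1 = \Norm{x-\tilde{x}}_\infty$.

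\textbf{Inductive step.} For $k \ge 2$, define an intermediate point $\hat{x}$ by $\hat{x}_i \coloneqq \max(x_i, \tilde{x}_i - 1)$ for each $i$. Then $x \le \hat{x} \le \tilde{x}$, with $\Norm{\hat{x}-\tilde{x}}_\infty \le 1$ and $\Norm{x-\hat{x}}_\infty \le k-1$. Let $\hat{y}^*$ be the LFP of $f^{\hat{x}}$. The inductive hypothesis applied to the pair $(x,\hat{x})$ gives $y^* \le \hat{y}^*$ and $\Norm{y^*-\hat{y}^*}_\infty \le \Norm{x-\hat{x}}_\infty \le k-1$, and the base case applied to $(\hat{x},\tilde{x})$ gives $\hat{y}^* \le \tilde{y}^*$ and $\Norm{\hat{y}^*-\tilde{y}^*}_\infty \le 1$. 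Chaining the order relations yields $y^* \le \tilde{y}^*$, and since $y^* \le \hat{y}^* \le \tilde{y}^*$ we may bound $\tilde{y}^*_j - y^*_j = (\tilde{y}^*_j - \hat{y}^*_j) + (\hat{y}^*_j - y^*_j)$ by $\Norm{\hat{y}^*-\tilde{y}^*}_\infty + \Norm{y^*-\hat{y}^*}_\infty \le 1 + (k-1) = k$ for every $j$, completing the induction.

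\textbf{Main obstacle.} The only genuinely delicate point is the base case, where one must invoke Lemma~\ref{lem:subslice-fp-prop} in \emph{both} directions and combine it with leastness of the two LFPs to pin $\tilde{y}^*$ into the unit box $[y^*, y^* + \One]$; the inductive step is then routine. A minor bookkeeping point worth stating explicitly is that the lemma should be read under the standing assumption that the relevant sub-instances are violation-free (equivalently, that $f$ is), which is what licenses every application of Lemma~\ref{lem:subslice-fp-prop}.
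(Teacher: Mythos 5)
Your proof is correct and follows essentially the same route as the paper's: the base case $k=1$ is handled by applying Lemma~\ref{lem:subslice-fp-prop} in both directions to sandwich $\tilde{y}^*$ between $y^*$ and a fixed point at distance at most one, and the general case reduces to the base case by stepping through intermediate points. The only cosmetic differences are that you phrase the inductive step recursively (with an explicitly constructed intermediate point $\hat x_i = \max(x_i, \tilde x_i - 1)$) whereas the paper writes out the full chain $x=x^0 \le \dots \le x^s = \tilde x$, and that you derive $y^* \le \tilde{y}^*$ and the norm bound together rather than separately; neither changes the substance.
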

\begin{proof}
	First let us prove the claim when $\Norm{x-\tilde{x}}_\infty = 1$ and then apply induction. For this case, suppose to the contrary $y^* \not\le \tilde{y}^*$ and $\ty^*_i < y^*_i$ for some $1\le i \le d_2$ . By Lemma \ref{lem:subslice-fp-prop} there exists a fixed point $y$ of $f^x$ such that $y \le \ty^*$, in which case and there exists $y_i \le \ty^*_i < y^*_i$, a contradiction to $y^*$ being the LFP of $f^x$. 
	
	Furthermore, by Lemma \ref{lem:subslice-fp-prop} it also follows that there exists fixed point $\ty$ of $f^\tx$ such that $\Norm{y^*-\ty} \le 1$. We also have that $y^* \le \ty^*\le \ty$ because $\ty^*$ is the LFP of $f^\tx$. These two together imply $\Norm{y^*-\ty^*} \le 1$.
	
	For the general case, let $s=\Norm{x-\tilde{x}}_\infty$ and $x=x^0 \le x^1\le \dots \le x^s = \tx$ be a sequence of points in $G_2$ such that consecutive points are at most distance $1$ apart, i.e., $\Norm{x^k - x^{k+1}}_\infty=1$ for all $0\le k < s$. Let ${y^k}^*$ be the LFP of $f^{x^k}$ instance. Applying the above argument repeatedly to consecutive pairs we get that for all $0\le k <s$, ${y^k}^* \le {y^{(k+1)}}^*$ and $\Norm{{y^k}^* - {y^{(k+1)}}^*}_\infty \le 1$. Tying these together, we get, $y^* \le \ty^*$ and $\Norm{y^* - \ty^*}_\infty = \Norm{{y^0}^* - {y^k}^*}_\infty \le s = \Norm{x-\tilde{x}}_\infty$. 
\end{proof}

\begin{lemma}\label{lem:subslice-lfp-prop2}
	Let $x, \tilde{x} \in G_1$ and $y^*$ and $\tilde{y}^*$ be the LFPs of instances $f^x$ and $f^{\tx}$ respectively. Then, $\Norm{y^* - \tilde{y^*}}_\infty \le \Norm{x-\tilde{x}}_\infty$.
\end{lemma}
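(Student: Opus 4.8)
The plan is to reduce the general case to Lemma~\ref{lem:subslice-lfp-prop1}, which already proves the claim when $x$ and $\tx$ are comparable, by routing through their coordinatewise meet. As in the rest of this chain of lemmas, we work under the assumption that $f$ is violation-free, so that by Lemma~\ref{lem:gx-violations} every sub-instance $f^z$ with $z \in G_1$ is violation-free and hence has a least fixed point by Theorem~\ref{thm:tarski1}; in particular the LFPs $y^*$ and $\tilde{y}^*$ in the statement exist.

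First I would define the point $\hatx \in G_1$ by $\hatx_i = \min(x_i, \tx_i)$ for each $i \in [d_1]$; this lies in $G_1 = \{1, \dots, n\}^{d_1}$ since the grid is closed under coordinatewise minimum. By construction $\hatx \le x$ and $\hatx \le \tx$. Moreover, for each coordinate $i$ we have $|x_i - \hatx_i| = x_i - \hatx_i = \max(0,\, x_i - \tx_i) \le |x_i - \tx_i| \le \Normi{x - \tx}$, so that $\Normi{x - \hatx} \le \Normi{x - \tx}$, and symmetrically $\Normi{\tx - \hatx} \le \Normi{x - \tx}$. Let $\haty^*$ denote the LFP of the sub-instance $f^{\hatx}$, which again exists by the violation-freeness argument above.

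The next step is to apply Lemma~\ref{lem:subslice-lfp-prop1} twice. Applied to the comparable pair $\hatx \le x$ it gives $\haty^* \le y^*$ together with $\Normi{\haty^* - y^*} \le \Normi{\hatx - x} \le \Normi{x - \tx}$; applied to the comparable pair $\hatx \le \tx$ it gives $\haty^* \le \tilde{y}^*$ together with $\Normi{\haty^* - \tilde{y}^*} \le \Normi{\hatx - \tx} \le \Normi{x - \tx}$. Writing $r := \Normi{x - \tx}$, it follows that for every coordinate $j \in [d_2]$ both $y^*_j$ and $\tilde{y}^*_j$ lie in the interval $[\haty^*_j,\, \haty^*_j + r]$, hence $|y^*_j - \tilde{y}^*_j| \le r$. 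Taking the maximum over $j$ yields $\Normi{y^* - \tilde{y}^*} \le r = \Normi{x - \tx}$, which is the desired inequality. Note that this single argument also subsumes the comparable cases: if, say, $x \le \tx$, then $\hatx = x$ and $\haty^* = y^*$, and the second application of Lemma~\ref{lem:subslice-lfp-prop1} already delivers the bound.

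I do not expect a genuine obstacle here: the only idea is that the meet $\hatx$ splits the incomparable case into two comparable sub-cases to which Lemma~\ref{lem:subslice-lfp-prop1} applies directly. The one point that needs attention is that $f^{\hatx}$ must actually possess a least fixed point, which is precisely why the proof invokes the violation-freeness of $f$ through Lemma~\ref{lem:gx-violations} and Tarski's theorem, rather than only assuming that $f^x$ and $f^{\tx}$ have LFPs.
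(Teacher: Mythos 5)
Your proof is correct and follows essentially the same idea as the paper's: reduce the incomparable case to the comparable one (Lemma~\ref{lem:subslice-lfp-prop1}) by routing through coordinatewise lattice extrema. The only genuine structural difference is that the paper constructs \emph{both} the meet $\lbx=(\min(x_i,\tx_i))_i$ and the join $\ubx=(\max(x_i,\tx_i))_i$, sandwiches both $y^*$ and $\ty^*$ between $\lby^*$ and $\uby^*$, and then invokes the norm bound of Lemma~\ref{lem:subslice-lfp-prop1} a single time on the pair $\lbx \le \ubx$ (using $\Normi{\lbx-\ubx}=\Normi{x-\tx}$). You instead construct only the meet $\hatx$, invoke the norm bound of Lemma~\ref{lem:subslice-lfp-prop1} twice (on $\hatx \le x$ and on $\hatx \le \tx$), and argue that both $y^*_j$ and $\ty^*_j$ lie in the interval $[\haty^*_j,\haty^*_j+r]$ for each coordinate $j$, where $r=\Normi{x-\tx}$. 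Both routes are valid; yours is marginally more economical in that it avoids introducing $\ubx$, whereas the paper's sandwich argument is slightly more symmetric and applies the norm estimate only once. Your added remark about where violation-freeness is used (to ensure $f^{\hatx}$ has an LFP via Lemma~\ref{lem:gx-violations} and Tarski) is a correct and useful clarification of an assumption the paper leaves implicit in this chain of lemmas.
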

\begin{proof}
	Consider $\lbx=(\min{x_i,\tx_i})_{i\in [d_2]}$ and $\ubx=(\max{x_i,\tx_i})_{i\in [d_2]}$. We have that $\lbx \le x, \tx \le \ubx$ and $\Norm{\lbx -\ubx}_\infty = \Norm{x - \tx}_\infty$.  
	
	Let $\lby^*$ and $\uby^*$ be the LFP of $f^\lbx$ and $f^\ubx$ respectively. Then, Lemma \ref{lem:subslice-lfp-prop1} implies $\lby^* \le y^*, \ty^* \le \uby^*$ and $\Norm{\lby^* - \uby^*}_\infty \le \Norm{\lbx -\ubx}_\infty$. Then it follows that, $\Norm{y^* - \ty^*}_\infty \le \Norm{\lby^* - \uby^*}_\infty \le \Norm{\lbx -\ubx}_\infty = \Norm{x - \tx}_\infty$.
\end{proof}

The above two lemmas provide all the necessary tools to map violations of $f^\CA$ to violations of $f$. 

\begin{lemma}\label{lem:gA-violations}
	For $x,\tx\in G_1$, let $y^*$ and $\ty^*$ be the LFPs of instances $f^x$ and $f^\tx$ respectively. If $x,\tx$ witness a \ref{DV1} or \ref{DV2} violation for the corresponding instance $f^\CA$, then $(x\oplus y^*), (\tx \oplus \ty^*)$ witness a \ref{DV1} or \ref{DV2} violation for the original instance $f$ respectively. 
\end{lemma}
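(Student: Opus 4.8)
The plan is to handle the two violation types separately, relying on the LFP comparison lemmas just established. For a \ref{DV1} violation of $f^\CA$ we are given $x \le \tx$ in $G_1$ with $f^\CA(x) \not\le f^\CA(\tx)$. By definition of the superinstance, $f^\CA(x) = f(x \oplus y^*) \restriction d_1$ where $y^*$ is the LFP of $f^x$ (the point returned by $\CB$), and similarly $f^\CA(\tx) = f(\tx \oplus \ty^*) \restriction d_1$ with $\ty^*$ the LFP of $f^\tx$. So there is some coordinate $i \le d_1$ with $f_i(x \oplus y^*) > f_i(\tx \oplus \ty^*)$. Meanwhile, Lemma~\ref{lem:subslice-lfp-prop1} gives $y^* \le \ty^*$ (since $x \le \tx$), so $(x \oplus y^*) \le (\tx \oplus \ty^*)$. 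Hence these two points form a \ref{DV1} violation for $f$: they are comparable, but $f$ does not preserve the order in coordinate $i$. This case is essentially immediate once the monotone coupling of LFPs across slices (Lemma~\ref{lem:subslice-lfp-prop1}) is in hand.

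For a \ref{DV2} violation of $f^\CA$ we are given $x, \tx \in G_1$ with $\Norm{f^\CA(x) - f^\CA(\tx)}_\infty > \Norm{x - \tx}_\infty$. Using $f^\CA(x) = f(x \oplus y^*) \restriction d_1$ and the analogous expression for $\tx$, I would argue:
\[
\Norm{f(x \oplus y^*) - f(\tx \oplus \ty^*)}_\infty \ \ge\ \Norm{f(x \oplus y^*)\restriction d_1 - f(\tx \oplus \ty^*)\restriction d_1}_\infty \ =\ \Norm{f^\CA(x) - f^\CA(\tx)}_\infty \ >\ \Norm{x - \tx}_\infty.
\]
It then remains to check that $\Norm{(x \oplus y^*) - (\tx \oplus \ty^*)}_\infty \le \Norm{x - \tx}_\infty$, so that the displacement of $f$ on this pair strictly exceeds the distance between the pair, giving a genuine \ref{DV2} violation. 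This is exactly Lemma~\ref{lem:subslice-lfp-prop2}: the $\ell_\infty$ distance between the LFPs of $f^x$ and $f^\tx$ is bounded by $\Norm{x - \tx}_\infty$, and since the first $d_1$ coordinates of the two concatenated points differ by exactly $x - \tx$, the overall $\ell_\infty$ distance is $\max\bigl(\Norm{x - \tx}_\infty, \Norm{y^* - \ty^*}_\infty\bigr) = \Norm{x - \tx}_\infty$.

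The main obstacle is not in this lemma itself but in the machinery it depends on — namely establishing Lemmas~\ref{lem:subslice-lfp-prop1} and~\ref{lem:subslice-lfp-prop2}, which in turn rest on Lemma~\ref{lem:subslice-fp-prop} and the assumption that each $f^x$ is violation-free (so that $\CB$ actually returns an LFP rather than a violation). Within the proof of the present lemma, the only subtlety is making sure we are comparing the right points: $\CB$ returns the LFP of $f^x$, and $\CA$'s query responses are built from those LFPs, so the hypotheses of Lemmas~\ref{lem:subslice-lfp-prop1} and~\ref{lem:subslice-lfp-prop2} apply verbatim. I would also note explicitly that this, combined with Lemmas~\ref{lem:lfp} and~\ref{lem:gx-violations}, completes the proof of Theorem~\ref{thm:decomposition}: the composed algorithm runs $\CB$ (cost $q_2$) for each of the $q_1$ queries made by $\CA$, for total time $q_1 \cdot q_2$, and returns either the LFP of $f$ (Lemma~\ref{lem:lfp}) or a violation of $f$ (Lemmas~\ref{lem:gx-violations} and~\ref{lem:gA-violations}).
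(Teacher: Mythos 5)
Your proof is correct and takes essentially the same approach as the paper's: the \ref{DV1} case uses Lemma~\ref{lem:subslice-lfp-prop1} to establish comparability of the concatenated points, and the \ref{DV2} case uses Lemma~\ref{lem:subslice-lfp-prop2} to bound the distance between the concatenated points by $\Normi{x-\tx}$. The only cosmetic difference is that in the \ref{DV2} case you use the inequality $\Normi{f(x\oplus y^*)-f(\tx\oplus\ty^*)} \ge \Normi{f^\CA(x)-f^\CA(\tx)}$ where the paper uses the identity $f(x\oplus y^*) = f^\CA(x)\oplus y^*$ (which holds because $y^*$ is a fixed point of $f^x$); both are valid.
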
 
\begin{proof}
	For the \ref{DV1} violation, suppose $x\le \tx$ but $f^\CA(x) \not\le f^\CA(\tx)$. By Lemma \ref{lem:subslice-lfp-prop1}, we have that $y^* \le \ty^*$, and hence $(x \oplus y^*) \le (\tx \oplus \ty^*)$. While, since $f^\CA(x)=f(x\oplus y^*)\restriction d_1$, it follows that $g(x\oplus y^*) =(f^\CA(x) \oplus y^*) \not\le (f^\CA(\tx) \oplus \ty^*) = f(\tx \oplus \ty^*)$. Thus, $(x\oplus y^*)$ and $(\tx \oplus \ty^*)$ lead to a \ref{DV1} violation. 
	
	For the \ref{DV2} violation suppose, $\Norm{f^\CA(x) - f^\CA(\tx)}_\infty > \Norm{x -\tx}_\infty$. Then, by Lemma \ref{lem:subslice-lfp-prop2} it also follows that $\Norm{f^\CA(x) - f^\CA(\tx)}_\infty > \Norm{y^* -\ty^*}_\infty$. 
	\[
	\begin{array}{lclr}
		\Norm{f(x\oplus y^*) - f(\tx \oplus \ty^*)}_\infty & = & \Norm{(f^\CA(x) \oplus y^*) - (f^\CA(\tx) \oplus \ty^*)}_\infty & \mbox{(by the definition of $f^\CA$)} \\
		& = & \max{\Norm{f^\CA(x) - f^\CA(\tx)}_\infty, \Norm{y^* - \ty^*}}_\infty \\
		& = & \Norm{f^\CA(x) - f^\CA(\tx)}_\infty & \mbox{(by the hypothesis and Lemma \ref{lem:subslice-lfp-prop2})} \\
		& > & \Norm{x -\tx}_\infty & \mbox{(by the hypothesis)} \\
		& = & \Norm{(x\oplus y^*) - (\tx \oplus \ty^*)}_\infty & \mbox{(using Lemma \ref{lem:subslice-lfp-prop2})}
	\end{array}
	\]
	This implies $(x\oplus y^*)$ and $(\tx \oplus \ty^*)$ lead to a \ref{DV2} violation.
\end{proof}

Next we show Theorem \ref{thm:decomposition} by building on Lemmas \ref{lem:gx-violations} and \ref{lem:gA-violations}. 

\begin{proof}[Proof of Theorem \ref{thm:decomposition}]
	Lemma \ref{lem:lfp} shows that if $\mathcal{A}$ and $\mathcal{B}$ return \ref{D1} solutions $x^*\in G_1$ and $y^*\in G_2$ respectively that are LFPs of instances $f^\CA$ and $f^{x^*}$ respectively, then our algorithm will return $(x^*\oplus y^*)$, an LFP of instance $f$. If at any point either of the algorithms reports a violation solution \ref{DV1} or \ref{DV2}, then by Lemmas \ref{lem:gx-violations} and \ref{lem:gA-violations}, we immediately find a violation for the instance $f$. Finally, note that $\mathcal{A}$ makes at most $q_1$ queries, and for each of those, $\mathcal{B}$ will make at most $q_2$ queries, giving $q_1 \cdot q_2$ queries in
	total.
\end{proof}

\subsection{All 3D DMAC Algorithms Can LFP-Solve}
\label{sec:uniqueness}

In this section we show the following lemma. 

\begin{lemma}
\label{lem:uniquefp}
Let $G = \{1, 2, \dots, n\}^3$, and let $f : G \rightarrow G$ be a
violation-free
\DMAC instance. 
There is a polynomial-time reduction that produces a violation-free \1DUniqueDMAC instance $f'
: G' \rightarrow G'$ such that $f'$ has a unique fixed point $x^*$, and $x^*$
can be mapped back in polynomial time to the least fixed point of $f$.
\end{lemma}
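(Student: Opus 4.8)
The plan is to follow the two-step recipe sketched in the technical overview (Section~\ref{sec:technical_overview}): first refine the grid and interpolate the three surfaces onto the finer grid, then shift the surfaces by different diagonal offsets so that no two of them can track each other along the same lattice diagonal. First I would take the given violation-free \DMAC instance $f$, apply Lemma~\ref{lem:1dunique} to turn it into a violation-free \1DUniqueDMAC instance (preserving the least fixed point), and then apply Lemma~\ref{lem:boundaryin} so that all boundary points move strictly inward; by Theorem~\ref{thm:surfaces} this instance is described by three surfaces $s_1, s_2, s_3$ that are monotone with gradient at most one, together with 1D rationality. The key object is the finer grid $G' = \{1, 1+\tfrac14, 1+\tfrac12, 1+\tfrac34, 2, \dots, n\}^3$ (scaled to integers in the actual construction), and I would define $f'$ directly on $G'$ rather than via binary search on $f$, so that each query to $f'$ costs only $O(1)$ queries to $f$. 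Concretely, for dimension $i$ I would specify the surface $s_i'$ of $f'$ as a shifted-and-interpolated version of $s_i$: interpolate $s_i$ across the $1/4$-spaced points so the resulting surface is still monotone with gradient at most one, then subtract a fixed offset $\delta_i$ with $\delta_1 = 1/4$, $\delta_2 = 1/2$, $\delta_3 = 0$. Given these surfaces, $f_i'(x)$ is defined to move $x$ one grid-step of $G'$ toward $s_i'(x_{-i})$, or be fixed if $x_i = s_i'(x_{-i})$; this makes $f'$ automatically 1D rational by construction.

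The correctness argument then proceeds in several steps. Step one: verify that $f'$ is a well-defined \1DUniqueDMAC instance on $G'$ --- each one-dimensional slice has the unique fixed point $s_i'(x_{-i})$, displacements are unit length in the $G'$ grid, and boundary points still move inward (the inward-pointing property of $f$ combined with the fact that the shifts $\delta_i$ are small enough relative to the grid extension; I would extend the grid slightly in the relevant directions if a shift would otherwise push the surface off the domain, mirroring the extension steps already used in Lemmas~\ref{lem:boundaryin} and the two-dimensional overview). Step two: apply Theorem~\ref{thm:surfaces} in the reverse direction --- since $f'$ is 1D rational and its surfaces are monotone with gradient at most one (interpolation and a global shift both preserve these two properties), $f'$ is violation-free. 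Step three: prove uniqueness of the fixed point of $f'$. Here I would invoke the structural fact (to be proven as a standalone lemma, and which the overview promises will be established in Section~\ref{sec:uniqueness}) that in a violation-free three-dimensional \1DUniqueDMAC instance with multiple fixed points, there must exist two surfaces $s_i', s_j'$ ($i \ne j$) and two fixed points between which $s_i'$ and $s_j'$ both track the same diagonal direction of $G'$. Because the offsets $\delta_1, \delta_2, \delta_3$ are pairwise distinct modulo the relevant diagonal spacing, no two surfaces of $f'$ can ever coincide along a common lattice diagonal, contradicting the existence of a second fixed point; hence $f'$ has exactly one fixed point $x^*$.

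Step four: locate the least fixed point of $f$ near $x^*$. Since we only shifted surfaces downward (the offsets $\delta_i \ge 0$, and at least two are strictly positive while we can also arrange, as in the two-dimensional warm-up, that the unshifted one does not matter for this bound), the unique fixed point $x^*$ of $f'$ lies within a small cube --- of side length bounded by a constant independent of $n$ --- below the least fixed point of $f$. To turn this into an algorithm, once $x^*$ is found we round its coordinates to the nearest points of the original grid $G$ in the cube of constant radius around $x^*$, query $f$ at each of these constantly many points, and return the least one that is fixed; the existence of such a fixed point in this window is exactly the content of the cube bound, and hereditariness of LFPs (Theorem~\ref{thm:hereditary}) together with the path-of-fixed-points structure guarantees the least such point is the LFP of $f$. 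All steps --- the two reductions, the explicit construction of $f'$, and the constant-size local search --- run in time polynomial in the representation of $f$.

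The main obstacle I expect is step three, the uniqueness argument, and specifically getting the interpolation scheme right so that it simultaneously (i) preserves monotonicity and gradient-at-most-one in all three dimensions, and (ii) genuinely forces the three surfaces onto three different families of diagonals. In two dimensions the only way for two surfaces to cross twice is for both to run exactly along the $(1,1)$ diagonal between the crossings, so a single half-integer shift separates them; in three dimensions the "diagonal tracking" phenomenon is more subtle --- a surface $s_i'$ lives on $G'_{-i}$, a two-dimensional object, so "tracking a diagonal" has to be interpreted as both surfaces agreeing along some direction in the shared two-dimensional slice, and one must check that the custom interpolation (which the overview explicitly flags as requiring care) does not accidentally create new diagonal-tracking that the shifts fail to destroy. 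I would isolate this as a clean combinatorial lemma about pairs of monotone gradient-one surfaces on $G'$ whose diagonal offsets differ, prove it first, and only then plug in the concrete interpolation. The rest of the proof is bookkeeping: checking displacements, boundary behavior, and the constant-radius cube bound, all of which follow the patterns already established in the two-dimensional overview and in Sections~\ref{sec:unitdisplacement} and~\ref{sec:DMACtoOPDC}.
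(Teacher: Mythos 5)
Your proposal follows essentially the same approach as the paper. You apply Lemmas~\ref{lem:1dunique} and~\ref{lem:boundaryin}, interpolate the three surfaces onto the quarter-integer grid, shift them by the pairwise distinct offsets $\{0, 1/4, 1/2\}$, invoke the surface characterisation (Theorem~\ref{thm:surfaces}) in the converse direction to certify violation-freeness, use the "two binding dimensions" structural lemma (Lemma~\ref{lem:fpdiagonal} in the paper) together with a diagonal-tracking lemma (the paper's Lemma~\ref{lem:surfaceround} and Corollary~\ref{cor:rounddifferent}) to prove uniqueness, and finally argue via a constant-radius cube bound (the paper's Lemma~\ref{lem:findlfp}, built on Lemma~\ref{lem:updownfp}) plus a constant-size search to recover the LFP of $f$. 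The only differences — the permutation of shifts, and a slight over-appeal to hereditariness and the fixed-point-path structure in the recovery step (where one really just needs that the LFP is the $\le$-minimum element and lies in the window) — are cosmetic, and you correctly flag the design of the interpolation (Figure~\ref{fig:interpolation}) and the precise three-dimensional diagonal-tracking lemma as the genuinely delicate parts.
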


We assume that we have already applied Lemmas~\ref{lem:1dunique}
and~\ref{lem:boundaryin} to $f$, meaning that $f$ is a \1DUniqueDMAC instance
with $f_i(x) > x_i$ whenever $x_i = 1$, and $f_i(x) < x_i$ whenever $x_i = n$.
Note also that those lemmas ensure that the least fixed point of $f$ is the
same as the least fixed point of the original instance. 

Our reduction then proceeds in two steps. 

\paragraph{\bf Step 1: Increasing the resolution.}

\newcommand{\fint}{f^\text{Int}}
\newcommand{\gint}{G^\text{Int}}

Let 
$$\gint = \left\{1, \; 1 + \frac{1}{4}, \; 1 + \frac{1}{2},\;  1 +
\frac{3}{4}, \; 2, \; 2 + \frac{1}{4}, \dots, n \right\}^3$$
be the refinement of $G$ in which we have added points at $i + 1/4$, $i + 1/2$
and $i + 3/4$ for each integer $i < n$ that was used to define $G$. We define a function
$\fint$ that interpolates $f$ over this new denser grid. 

\cubeFixedPointCases

For each point $x \in \gint$ we use the following procedure to compute
$\fint_i(x)$. 
\begin{enumerate}
\item Compute the point $p = (\lfloor x_1 \rfloor, \lfloor x_2 \rfloor, \lfloor
x_3 \rfloor)$, which is the least element of the integer cube that contains~$x$.
\item Consider the set $C = \{ q \in G \; : \; q = p + \sum_{j \in D} e_j
\text{ where $D \subseteq \{1, 2, 3\}$} \}$, which are the corners of the
integer cube that contains $x$.
\item If $f_i(y) = y_i + 1$ for all $y \in C$ with $y_i = p_i$, then set $\fint_i(x) = x_i + 1$.
\item If $f_i(y) = y_i - 1$ for all $y \in C$, then set $\fint_i(x) = x_i - 1$.
\item Otherwise, note that the surface in dimension $i$ passes through the cube
defined by $C$. Theorem~\ref{thm:surfaces} implies that this surface is
monotone and has gradient one. Hence the orientation of $C$ must be one of the
five cases depicted in Figure~\ref{fig:1d_unique_fp_location}, where in the
figure the dimension $i$ is represented by the up/down dimension.

We interpolate each of these cubes using the scheme depicted in
Figure~\ref{fig:interpolation}.
Specifically, for each case, we present 5 slices that move left to right across
the cube, and we depict the location of the surface $s_i$ within the cube. 
This specifies $s_i$ at all intermediate points within $C$.

The algorithm itself simply determines the configuration of the cube $C$, and
then finds the location of the surface $s_i(x_{-i})$ using our specified
interpolation. It then outputs $f_i(x) = x_i + 1$ if $s_i(x_{-i}) > x_i$, it
output $f_i(x) = x_i - 1$ if $s_i(x_{-i}) < x_i$, and it outputs $f_i(x) = x_i$
if $s_i(x_{-i}) = x_i$.
\end{enumerate}
\interpolationSlices

It is clear that by construction that each one-dimensional slice in dimension
$i$ has a unique fixed point, and so $\fint$ is a \1DUniqueDMAC instance. It is
also clear that $\| \fint(x) - x \|_\infty \le 1$ by construction.
The surface $s_i$ is monotone and has gradient
at most one within each cube $C$. It is also not difficult to verify that the
surfaces match on the boundary of each cube, and so each surface $s_i$ in
$\fint$ is monotone and has gradient one. It is also clear that we have
maintained 1D rationality. Thus $\fint$ is a violation-free \1DUniqueDMAC
instance by Theorem~\ref{thm:surfaces}. Moreover, since $\fint$ can be computed
by querying $f$ at $8$ points, we have that $\fint$ can be built in time that
is polynomial in the representation of $f$.

\paragraph{\bf Step 2: Ensuring that there is a unique fixed point.}

\newcommand{\fu}{f^\text{u}}

Since $\fint$ is an interpolation of $f$, we have not yet reduced the number of
fixed points, and indeed we have likely added new fixed points in the
interpolated regions. We now build a new function $\fu : \gint
\rightarrow \gint$ that will have a unique fixed point.

The idea is to shift the surfaces in $\fint$ down slightly.
Specifically, we leave $s_1$ as it is, we shift $s_2$ down by 1/4, and we shift
$s_3$ down by 1/2. This is achieved formally as follows.
Let $h_i(x) = \fint_i(x) - x_i$ be the displacement of $\fint$ in dimension $i$ at the point
$x$. For each $x \in G'$ we define
\begin{align*}
\fu_1(x) &= x_1 + h_1(x), \\
\fu_2(x) &= x_2 + h_2(x + 1/4 \cdot e_2), \\
\fu_3(x) &= x_3 + h_3(x + 1/2 \cdot e_3).
\end{align*}
If in any of the cases above we query a point that is not in $G'$, we instead
fix $\fu_i(x) = x_i - 1$ to orient that point downward. 

Note that since we have assumed that Lemma~\ref{lem:boundaryin} has been
applied to $f$, we have that $f$ always moves strictly away from the
boundary, so no point on any surface in $f$ can lie on the boundary.
Since in each case we have shifted by less than an integer, 
we still therefore have $0 < s_i(x_{-i}) < n$ after our shifts. Since all
we have done is move the surfaces in $\fint$ by a constant amount, we have not
changed the monotonicity of any surface, or the fact that it has gradient at
most one. We have also not changed the fact that $\fint$ is 1D rational. Thus
by Theorem~\ref{thm:surfaces} we have that $\fu$ is a violation-free \1DUniqueDMAC instance. Moreover,
$\fu$ can be computed by making three queries to $\fint$, and so $\fu$ can be
built in time that is polynomial in the size of $\fint$.

\paragraph{\bf Correctness.}

We must show that $\fu$ has a unique fixed point $x^*$, and that given this
fixed point, we can find the fixed point of $f$ in polynomial time.
We start by showing that $\fu$ has a unique fixed point. 

The following lemma states that if any \1DUniqueDMAC instance has at least two
fixed points $x$ and $y$, then there are at least two dimensions $i$ for which
$| x_i - y_i \| = \| x - y \|_\infty$.

\begin{lemma}
\label{lem:fpdiagonal}
Let $f : G \rightarrow G$ be a violation-free \1DUniqueDMAC instance.
If $x$ and $y$ are two distinct fixed point of $f$, then there are at least two
distinct dimensions $i$ and $j$ such that $| x_i - y_i | = | x_j - y_j | = \| x
- y \|_\infty$.
\end{lemma}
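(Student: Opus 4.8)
The plan is to deduce this directly from the fact that the surfaces of a violation-free \1DUniqueDMAC instance have gradient at most one. Recall from Theorem~\ref{thm:surfaces} that, since $f$ is a violation-free \1DUniqueDMAC instance, every surface $s_i : G_{-i} \to \{1, 2, \dots, n\}$ is monotone and satisfies the gradient bound $|s_i(a) - s_i(b)| \le \Norm{a - b}_\infty$ for all $a, b \in G_{-i}$; only the gradient bound will actually be needed. Also recall that a point $p$ is a fixed point of $f$ exactly when $f_j(p) = p_j$ for every $j$, which, by the defining property of a \1DUniqueDMAC instance, is equivalent to $p_j = s_j(p_{-j})$ for every $j$, i.e.\ $p$ lies on all surfaces simultaneously.

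First I would set $k := \Norm{x - y}_\infty$, which is positive since $x \ne y$, and choose any dimension $i$ with $|x_i - y_i| = k$; such an $i$ exists by definition of the $\ell_\infty$-norm. Since $x$ and $y$ are fixed points, $x_i = s_i(x_{-i})$ and $y_i = s_i(y_{-i})$, so applying the gradient bound to $s_i$ at the points $x_{-i}$ and $y_{-i}$ yields
\[
k = |x_i - y_i| = |s_i(x_{-i}) - s_i(y_{-i})| \le \Norm{x_{-i} - y_{-i}}_\infty = \max_{j \ne i} |x_j - y_j| \le k,
\]
where the final inequality holds because $k$ is the maximum coordinate-wise difference over \emph{all} dimensions. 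Hence $\max_{j \ne i} |x_j - y_j| = k$, so there is a dimension $j \ne i$ with $|x_j - y_j| = k$. Then $i$ and $j$ are the two distinct dimensions required by the lemma.

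There is no real obstacle here: the statement is essentially a one-line corollary of the gradient-at-most-one property once that property is available, and the only points requiring a modicum of care are checking that the hypotheses of Theorem~\ref{thm:surfaces} are in force — which they are, since $f$ is assumed to be a violation-free \1DUniqueDMAC instance — and observing that a fixed point necessarily lies on each surface. I would not expect the write-up to exceed a few lines.
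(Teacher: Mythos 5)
Your proposal is correct and uses exactly the same ingredient as the paper's proof: the gradient-at-most-one property of the surface $s_i$ (from Theorem~\ref{thm:surfaces}) applied at $x_{-i}$ and $y_{-i}$, together with the observation that a fixed point lies on every surface. The only cosmetic difference is that you argue directly, whereas the paper phrases it as a proof by contradiction; the inequality chain is identical.
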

\begin{proof}
Suppose for the sake of contradiction that this is not the case, meaning that 
there is a unique dimension $i$ such that $| x_i - y_i | = \| x - y \|_\infty$,
and observe that this implies that
$| x_i - y_i | > \| x_{-i} - y_{-i} \|_\infty$. 
Since $x$ and $y$ are both fixed points, we have that $x$ and $y$ lie on the
surface $s_i$,
so $| s_i(x_{-i}) - s_i(y_{-i}) | = | x_i - y_i |$.
Therefore $s_i$ does not have
gradient at most one, which contradicts Theorem~\ref{thm:surfaces}.
\end{proof}

The next lemma shows that our interpolation has a special property: all
diagonals of the surfaces used in the instance lie on principle diagonals of
the instance, or principle diagonals of the two-dimensional slices of the
instance. This allows us to prove special properties about the points that lie
on these surfaces.
For each $x \in \gint$ let $r(x) = x - \lfloor x \rfloor$ be the function that
gives the non-integer part of $x$.

\begin{lemma}
\label{lem:surfaceround}
Let $s_k$ denote surface $k$ in $\fint$.
If $x$ and $y$ are two distinct points that lie on $s_i$, and 
if there are two distinct dimensions $i$ and $j$ such that
$| x_i - y_i | = | x_j - y_j | = \| x
- y \|_\infty$, 
then we have $|r(x_i) - r(x_j)| = |r(y_i) - r(y_j) | = 0$.
\end{lemma}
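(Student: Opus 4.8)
The plan is to show that, in the diagonal orientation, the quantity $z_i-z_j$ (and in the opposite orientation $z_i+z_j$) is constant and \emph{integral} as $z$ ranges over the portion of the surface $s_i$ of $\fint$ lying between $x$ and $y$; since $r(z_i)-r(z_j)$ depends on $z$ only through $(z_i-z_j)\bmod 1$, this yields $|r(x_i)-r(x_j)| = |r(y_i)-r(y_j)| = 0$ simultaneously. Throughout I would use that, by Theorem~\ref{thm:surfaces}, each surface $s_k$ of $\fint$ is monotone with gradient at most one, together with the explicit interpolation rule specified by Figures~\ref{fig:1d_unique_fp_location} and~\ref{fig:interpolation}. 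First I would fix signs: by the binding hypothesis and monotonicity of $s_i$, either $x_j\le y_j$ or $x_j\ge y_j$, and I would treat $x_j\le y_j$ in detail (so the $(i,j)$-projection of the segment from $x$ to $y$ points in the direction $(1,1)$ and one tracks $z_i-z_j$); the case $x_j\ge y_j$ forces a flat-in-dimension-$j$ structure of $s_i$ between $x$ and $y$ and is handled by the symmetric argument tracking $z_i+z_j$.

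For the constancy step I would use that the total change of $s_i$ from $x$ to $y$ equals $D:=\|x-y\|_\infty = |x_i-y_i| = |x_j-y_j|$, which is also the total change of the $j$-coordinate along any monotone lattice path in $\gint_{-i}$ from $x_{-i}$ to $y_{-i}$, and that $s_i$ restricted to any coordinate line has slope in $\{-1,0,1\}$ (read off the five cube configurations of Figure~\ref{fig:1d_unique_fp_location} and the interpolation slices of Figure~\ref{fig:interpolation}). Consequently there is no ``budget'' for $s_i$ to change on a step that leaves the $j$-coordinate fixed: every $j$-increasing step increases $s_i$ by exactly the same amount, and every step changing only the third coordinate leaves $s_i$ fixed. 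Hence $z_i-z_j$ is preserved along the path, which already gives $x_i-x_j = y_i-y_j$ and therefore $|r(x_i)-r(x_j)| = |r(y_i)-r(y_j)|$.

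Finally, for integrality I would use that within each integer cube $C$ of $G$ with least corner $p\in\Integer^3$, the interpolated surface $s_i$ is either flat at an integer height or is assembled from diagonal pieces of slope $\pm 1$ that are \emph{anchored between cube vertices}: each such piece, extended, meets the faces $\{z_j = p_j\}$ and $\{z_j = p_j+1\}$ of $C$ at points whose $i$-coordinate is $p_i$ or $p_i+1$, so $z_i-z_j\in\Integer$ along every diagonal piece and at every integer-$j$ crossing. Combined with the constancy step and continuity, $z_i-z_j$ takes a single integer value along the whole surface between $x$ and $y$, equal to $x_i-x_j = y_i-y_j$, which completes the proof. The hard part is exactly this last verification: one must check, case by case over the five configurations of Figure~\ref{fig:1d_unique_fp_location} together with the corresponding interpolation of Figure~\ref{fig:interpolation}, that every diagonal piece of $s_i$ — whether a two-dimensional face diagonal or the principal diagonal of the cube — is anchored at lattice points, and in particular that the interpolation scheme never introduces a diagonal piece on which $z_i-z_j$ is a non-integer. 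The first two steps are then routine bookkeeping with monotonicity and the $\ell_\infty$-norm.
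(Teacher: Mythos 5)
Your overall strategy is the same as the paper's: argue that the surface of $\fint$ must move \emph{diagonally} in the $(i,j)$-projection between $x$ and $y$, and verify by inspecting the construction in Figures~\ref{fig:1d_unique_fp_location} and~\ref{fig:interpolation} that the interpolated surface moves diagonally only along principal diagonals that are anchored at lattice points, so that $z_i - z_j$ is integral there. Your integrality step is exactly the verification the paper waves at. The problems are in the constancy step and the sign reduction, both of which you try to prove from monotonicity and the gradient bound alone; neither in fact follows from those properties, and both require a direct appeal to the specific interpolation.

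On the ``budget'' argument: knowing that $s_i$ is monotone with slope in $\{0,1\}$ on every unit step of a monotone lattice path from $x_{-i}$ to $y_{-i}$, with $D$ steps in dimension $j$ and $D' \le D$ steps in the third dimension, and total change $D$, only tells you that $D$ of the $D+D'$ steps have slope $1$. It does \emph{not} tell you that these are exactly the $j$-steps. Concretely, $s(j,m) = \lceil (j+m)/2 \rceil$ is monotone with $\ell_\infty$-gradient at most one, but along the monotone path $(0,0) \to (0,1) \to (1,1) \to (1,2) \to (2,2)$ its values are $0,1,1,2,2$: the first $m$-step has slope $1$ and the first $j$-step has slope $0$, so $z_i - z_j$ is not constant. (The endpoints happen to still satisfy $x_i - x_j = y_i - y_j$, but not by your argument.) The constancy you want is a property of the $\fint$ interpolation specifically, not of monotone gradient-one surfaces in general.

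On the sign reduction: when $x_{-i}$ and $y_{-i}$ are incomparable (e.g.\ $x_j < y_j$ but $x_m > y_m$), monotonicity of $s_i$ does not determine the sign of $x_i - y_i$, so you cannot simply assert that the $(i,j)$-projection points in the direction $(1,1)$. Moreover, your fallback for the anti-diagonal case — ``the symmetric argument tracking $z_i + z_j$'' — would establish that $r(x_i) + r(x_j)$ is an integer, which is \emph{not} the same as $|r(x_i) - r(x_j)| = 0$ (take $r(x_i) = 1/4$, $r(x_j) = 3/4$). The correct resolution, again, is to observe from the interpolation scheme that the surface never moves anti-diagonally in any $(i,j)$-projection, so that case cannot arise; you then need only the diagonal case, and the anchoring-at-lattice-points verification closes it.
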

\begin{proof}
Observe that since 
$| r(x_i) - r(x_j) | = | r(y_i) - r(y_j) |$, we must be moving diagonally along the surface
as we move from $x$ to $y$. It can be verified by inspecting the construction
in Figure~\ref{fig:interpolation} that the surface that we introduce moves
diagonally only along principle diagonals of the instance, or principle
diagonals of 2d slices of the instance. More precisely, by construction, we have 
$|r(p_i) - r(p_j)|$ for all points $p$ that lie on a path that move at gradient
one along dimensions $i$ and $j$ simultaneously. Thus we have 
$|r(x_i) - r(x_j)| = |r(y_i) - r(y_j) | = 0$.
\end{proof}

Since $\fu$ was constructed from $\fint$ by shifting $s_2$ down by $1/4$ and
shifting $s_3$ down by $1/2$, we get the following immediate corollary of
Lemma~\ref{lem:surfaceround}, which intuitively states that each of the surfaces use
distinct diagonals in the grid.

\begin{corollary}
\label{cor:rounddifferent}
Let $s_i$ denote surface $i$ in $\fu$,
let $x$ and $y$ be two distinct points that lie on $s_i$, 
suppose that 
there are two distinct dimensions $i$ and $j$ such that
$| x_i - y_i | = | x_j - y_j | = \| x - y \|_\infty$.
\begin{itemize}
\item If $i = 1$ then $|r(x_i) - r(x_j)| = 0$.
\item If $i = 2$ then $|r(x_i) - r(x_j)| = 1/4$.
\item If $i = 3$ then $|r(x_i) - r(x_j)| = 1/2$.
\end{itemize}
\end{corollary}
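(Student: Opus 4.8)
The plan is to deduce the corollary directly from Lemma~\ref{lem:surfaceround} applied to $\fint$, using the fact that each surface of $\fu$ is a vertically shifted copy of the corresponding surface of $\fint$. Write $c_1 = 0$, $c_2 = 1/4$, $c_3 = 1/2$ for the three shifts used to build $\fu$, and let $\tilde{s}_k$ and $s_k$ denote surface $k$ of $\fint$ and of $\fu$ respectively. From the definition $\fu_k(x) = x_k + h_k(x + c_k e_k)$ and $h_k(x) = \fint_k(x) - x_k$, we see that $\fu_k(x) = x_k$ if and only if $\fint_k(x + c_k e_k) = x_k + c_k$, i.e. if and only if $x + c_k e_k$ is a one-dimensional fixed point of $\fint$ in dimension $k$. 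Hence $x$ lies on $s_k$ exactly when $x + c_k e_k$ lies on $\tilde{s}_k$, and in particular $s_1 = \tilde{s}_1$.

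Now take distinct points $x, y$ on $s_i$ and two binding dimensions $i$ and $j$, so $|x_i - y_i| = |x_j - y_j| = \|x - y\|_\infty$. Let $x' = x + c_i e_i$ and $y' = y + c_i e_i$; by the previous paragraph these lie on $\tilde{s}_i$. Since adding $c_i e_i$ shifts dimension $i$ of both points by the same amount and leaves every other coordinate fixed, we have $x'_\ell - y'_\ell = x_\ell - y_\ell$ for all $\ell$, hence $\|x' - y'\|_\infty = \|x - y\|_\infty$ and $i, j$ are still binding dimensions for $x', y'$. So Lemma~\ref{lem:surfaceround} applies to $x', y'$ on $\tilde{s}_i$ and gives $r(x'_i) = r(x'_j)$.

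Finally we transport this back: $x'_j = x_j$ since $j \ne i$, so $r(x'_j) = r(x_j)$, while $x'_i = x_i + c_i$, so $r(x_i + c_i) = r(x_j)$. Because all coordinates in play are integer multiples of $1/4$, this forces $r(x_i) - r(x_j) \equiv -c_i \pmod 1$, i.e. $|r(x_i) - r(x_j)| \in \{c_i,\, 1 - c_i\}$. For $i = 1$ this is $0$ and for $i = 3$ it is $1/2$, matching the claim; for $i = 2$ it is $1/4$ provided one rules out the wrap-around value $3/4$, for which I would invoke the interpolation scheme of Figure~\ref{fig:interpolation} together with Lemma~\ref{lem:boundaryin} to argue that the $1/4$-shifted copy $\tilde{s}_2$ never lands on fractional part $3/4$.

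The only real subtlety is that last piece of modular bookkeeping; everything else is a one-line application of Lemma~\ref{lem:surfaceround} after the change of coordinates $x \mapsto x + c_i e_i$. It is worth noting that even the weaker conclusion $|r(x_i) - r(x_j)| \in \{c_i,\, 1 - c_i\}$ is all that is needed in the sequel, since the sets $\{0\}$, $\{1/4, 3/4\}$ and $\{1/2\}$ are pairwise disjoint, which is exactly the property used to conclude that $\fu$ has a unique fixed point.
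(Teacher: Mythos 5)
Your approach is essentially the paper's: the paper's own proof simply remarks that the argument is the same as Lemma~\ref{lem:surfaceround} once one observes that the downward shifts alter the diagonals, and you make this precise via the change of coordinates $x \mapsto x + c_i e_i$, which is the natural way to formalize it. Your careful modular bookkeeping, however, exposes a small imprecision in the corollary as stated. What the argument actually yields is $r(x_i + c_i) = r(x_j)$, hence $|r(x_i) - r(x_j)| \in \{c_i, 1 - c_i\}$, and for $i = 2$ the wrap-around value $3/4$ genuinely cannot be excluded: if $\tilde{s}_2$ passes through an interior integer lattice point (which it can, since $\fint$ agrees with $f$ at lattice points and $f$'s surface passes through integer fixed points), the corresponding point of $s_2$ has $r(x_2) = 3/4$ and $r(x_3) = 0$. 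Your suggested fix via Lemma~\ref{lem:boundaryin} does not close this gap, since that lemma controls behaviour at the grid boundary, not interior lattice points. But you are correct that this does not matter: the three sets $\{0\}$, $\{1/4, 3/4\}$, $\{1/2\}$ are pairwise disjoint, which is exactly what the subsequent uniqueness argument for $\fu$ needs, so the weaker conclusion suffices. It would be cleaner to state the corollary with the weaker conclusion $|r(x_i) - r(x_j)| \in \{c_i, 1 - c_i\}$ and drop the claim that the value is exactly $c_i$.
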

\begin{proof}
The proof is the same as the proof of Lemma~\ref{lem:surfaceround}, noting that
shifting the surfaces downwards alters the diagonals used by our construction.
\end{proof}

We are now ready to prove that $\fu$ has a unique fixed point. At a high level,
this follows from the lemmas that we have shown so far because every fixed
point lies on all three surfaces simultaneously by definition. So if we have
two distinct fixed points, we can apply the corollary above to conclude that
the point simultaneously lies on two distinct diagonals, which is a
contradiction.

\begin{lemma}
$\fu$ has a unique fixed point.
\end{lemma}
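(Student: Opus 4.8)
The plan is to argue by contradiction, supposing that $\fu$ has two distinct fixed points $x$ and $y$. Since both $x$ and $y$ are fixed points of $\fu$, they simultaneously lie on all three surfaces $s_1$, $s_2$, and $s_3$ of $\fu$. Because $\fu$ is a violation-free \1DUniqueDMAC instance (established in Step 2), Lemma~\ref{lem:fpdiagonal} applies and tells us that there exist two distinct dimensions $i$ and $j$ with $|x_i - y_i| = |x_j - y_j| = \|x - y\|_\infty$. The key point is that $x$ and $y$ both lie on \emph{each} surface $s_k$ for $k \in \{1,2,3\}$, so Corollary~\ref{cor:rounddifferent} can be invoked three times — once for each surface — with this same pair of binding dimensions $i,j$.

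The heart of the argument is then to derive a numerical contradiction from these three invocations. Corollary~\ref{cor:rounddifferent} applied to the surface $s_i$ (taking the first binding dimension to play the role of the index of the surface when possible, or more carefully: applying it to whichever surface we are looking at) forces a specific value of $|r(x_i) - r(x_j)|$ depending on which surface we consider: $0$ for $s_1$, $1/4$ for $s_2$, and $1/2$ for $s_3$. But $|r(x_i) - r(x_j)|$ is a single fixed number determined by the point $x$ and the fixed pair of dimensions $i,j$ — it cannot simultaneously equal two distinct values among $\{0, 1/4, 1/2\}$. Hence we reach a contradiction, and $\fu$ has at most one fixed point. Combined with the fact that $\fu$ is a total \DMAC instance (it has a fixed point by Tarski's theorem, since it is violation-free and hence monotone), this shows $\fu$ has exactly one fixed point.

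One subtlety I would be careful about in writing this up is the precise statement of how Corollary~\ref{cor:rounddifferent} is applied: the corollary is stated for a surface $s_i$ and two distinct points on it with binding dimensions "$i$ and $j$", where the name of the surface and one of the binding dimensions share the symbol $i$. In the fixed-point application, the relevant surfaces are $s_1$, $s_2$, $s_3$, and the binding dimensions from Lemma~\ref{lem:fpdiagonal} are some pair which I will call $a$ and $b$. So I would first rename to avoid the symbol clash, and note that since $x,y$ lie on all three surfaces, for each $k \in \{1,2,3\}$ the pair $x,y$ are two distinct points on $s_k$ with binding dimensions $a,b$, so the corollary yields $|r(x_a) - r(x_b)| = c_k$ where $c_1 = 0$, $c_2 = 1/4$, $c_3 = 1/2$. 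Since the left side is independent of $k$ but the right side is not, this is the contradiction. I expect this bookkeeping — making sure the hypotheses of Lemma~\ref{lem:fpdiagonal} and Corollary~\ref{cor:rounddifferent} line up cleanly and that the symbol $i$ is not overloaded — to be the only real obstacle; the mathematical content is a short pigeonhole-style argument once those results are in hand.

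Here is the proof I would write.

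\begin{proof}
Since $\fu$ is a violation-free \1DUniqueDMAC instance, it is in particular a violation-free \DMAC instance, so by Tarski's theorem (Theorem~\ref{thm:tarski1}) it has at least one fixed point. It remains to show it has at most one.

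Suppose for the sake of contradiction that $x$ and $y$ are two distinct fixed points of $\fu$. Then for each $k \in \{1,2,3\}$ we have $\fu_k(x) = x_k$ and $\fu_k(y) = y_k$, so both $x$ and $y$ lie on the surface $s_k$ of $\fu$. Applying Lemma~\ref{lem:fpdiagonal} to the two distinct fixed points $x$ and $y$, there are two distinct dimensions, say $a$ and $b$, such that
\[
|x_a - y_a| = |x_b - y_b| = \|x - y\|_\infty.
\]
Now fix any $k \in \{1,2,3\}$. Since $x$ and $y$ are two distinct points lying on $s_k$, and since $a$ and $b$ are distinct dimensions with $|x_a - y_a| = |x_b - y_b| = \|x - y\|_\infty$, Corollary~\ref{cor:rounddifferent} applies and yields
\[
|r(x_a) - r(x_b)| = c_k, \quad \text{where } c_1 = 0,\ c_2 = 1/4,\ c_3 = 1/2.
\]
But the left-hand side is a fixed number determined by $x$, $a$, and $b$, and does not depend on $k$, whereas $c_1$, $c_2$, and $c_3$ are three distinct values. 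This is a contradiction. Hence $\fu$ has at most one fixed point, and therefore exactly one.
\end{proof}
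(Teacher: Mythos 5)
Your proposal follows the same overall structure as the paper's proof — Lemma~\ref{lem:fpdiagonal} to get two binding dimensions, then Corollary~\ref{cor:rounddifferent} to derive conflicting constraints on $|r(x_a)-r(x_b)|$ — and the core idea is sound. But there is a genuine misapplication of the corollary in the final write-up.

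You read Corollary~\ref{cor:rounddifferent} as if the surface index (your $k$) and the pair of binding dimensions (your $a,b$) were independent of one another, and apply it once per $k\in\{1,2,3\}$. The corollary does not support that: its statement deliberately uses the same letter $i$ for the surface index and for one of the two binding dimensions, and the underlying Lemma~\ref{lem:surfaceround} (and its proof from Figure~\ref{fig:interpolation}) only say something about $|r(p_i)-r(p_j)|$ along a diagonal in the $i$--$j$ plane \emph{when the surface in question is} $s_i$. For the one value of $k$ with $k\notin\{a,b\}$ the hypothesis of the corollary simply is not met, so the claim $|r(x_a)-r(x_b)| = c_k$ is not justified. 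Your renaming correctly flags the notational clash, but the resolution you choose (decoupling $k$ from $a,b$) contradicts the corollary's actual hypothesis rather than merely disambiguating it.

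The fix, and what the paper actually does, is to restrict to the two valid applications: take the binding dimensions $i$ and $j$ from Lemma~\ref{lem:fpdiagonal}, then apply Corollary~\ref{cor:rounddifferent} to surface $s_i$ (giving $|r(x_i)-r(x_j)|=c_i$) and to surface $s_j$ (giving $|r(x_j)-r(x_i)|=c_j$). Since absolute value is symmetric and $c_i\ne c_j$, these two already yield the contradiction; the third surface is never needed. If you cut your loop over $k\in\{1,2,3\}$ down to $k\in\{a,b\}$, your proof becomes exactly the paper's.
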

\begin{proof}
Since $\fu$ is a violation-free \1DUniqueDMAC instance it must have at
least one fixed point, because violation-free \DMAC is a total problem due to
Tarski's theorem. 

Now suppose for the sake of contradiction that there are two distinct points
$x, y \in \gint$ that are both fixed points of $\fu$. 
Let $i$ and $j$ be the dimensions 
whose existence is asserted by Lemma~\ref{lem:fpdiagonal}. 
Let $s_i$ and $s_j$ be the surfaces for dimensions $i$ and $j$ in $\fu$,
respectively.
Since $x$ and $y$ are both fixed, they must lie on both $s_i$ and $s_j$, but
then applying
Corollary~\ref{cor:rounddifferent} to $s_i$ and $s_j$ gives us two distinct values for
$|r(x_i) - r(x_j)|$, which is a contradiction.
\end{proof}

Finally, we show that the least fixed point of $\fu$ can be mapped back to the
least fixed point of $f$ in polynomial time. We will do this using the
following lemma from~\cite{FPS22}.

\begin{lemma}[Lemma 3.2 of \cite{FPS22}]
\label{lem:updownfp}
Let $f : G \rightarrow G$ be a monotone function. If $x, y \in G$ satisfy $x
\le y$ and $x \le f(x)$ and $y \ge f(y)$, then there exists a fixed point $x^*$ of
$f$ satisfying $x \le x^* \le y$. 
\end{lemma}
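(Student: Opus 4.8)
The plan is to recognize this as an immediate consequence of Tarski's theorem (Theorem~\ref{thm:tarski1}) applied to a suitable sublattice. Set $L = \sublattice(x,y) = \{z \in G : x \le z \le y\}$, which is nonempty because $x \le y$, and which is itself a complete lattice under the pointwise order. First I would verify that $f$ maps $L$ into itself: for any $z \in L$ we have $x \le z \le y$, so monotonicity gives $f(x) \le f(z) \le f(y)$; combining this with the hypotheses $x \le f(x)$ and $f(y) \le y$ yields $x \le f(z) \le y$, that is, $f(z) \in L$. This is exactly the kind of closedness condition recorded in Observation~\ref{obs:sublat-verif}, here in the even simpler form that follows directly from monotonicity and the two inequalities in the hypothesis.

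Since $f$ restricted to $L$ is order-preserving and maps the complete lattice $L$ to itself, Theorem~\ref{thm:tarski1} applies and guarantees that the set of fixed points of $f$ lying in $L$ forms a nonempty complete lattice. Picking any such fixed point $x^*$ gives a point with $f(x^*) = x^*$ and $x \le x^* \le y$, which is precisely the claim. (One could even take $x^*$ to be the least fixed point of $f$ restricted to $L$, though the lemma only asserts existence.)

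As an alternative that sidesteps invoking Tarski, I could argue directly using the finiteness of $G$. The hypothesis $x \le f(x)$ together with monotonicity produces an increasing chain $x \le f(x) \le f^2(x) \le \cdots$, and $y \ge f(y)$ produces a decreasing chain $y \ge f(y) \ge f^2(y) \ge \cdots$; monotonicity also gives $f^k(x) \le f^k(y) \le y$ for all $k$. Hence the increasing chain lies entirely in the finite set $L$, so it must stabilize at some point $x^* = f^k(x)$ with $f(x^*) = x^*$ and $x \le x^* \le y$.

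I do not expect any real obstacle here: the content of the lemma is entirely contained in Tarski's theorem, and the only thing that needs checking — that $f$ keeps $\sublattice(x,y)$ invariant — is a one-line consequence of monotonicity combined with $x \le f(x)$ and $f(y) \le y$. The statement is cited from~\cite{FPS22}, and we reproduce this short argument only for completeness.
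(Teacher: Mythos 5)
Your proposal is correct. The paper itself does not prove this lemma; it simply cites it as Lemma~3.2 of the Fearnley--Pal\'os--Savani work and uses it as a black box, so there is no ``paper's own proof'' to compare against. Your Tarski-sublattice argument is sound: $L = \sublattice(x,y)$ is a complete lattice, and the single computation $x \le f(x) \le f(z) \le f(y) \le y$ for $z \in L$ shows $f$ maps $L$ into itself, so Theorem~\ref{thm:tarski1} gives a fixed point in $L$. This is exactly the same template the paper uses in several of its own arguments (e.g.\ Lemma~\ref{lem:fixed-point-between} and Observation~\ref{obs:neighboring-slice-fp-nearby}), so it fits the surrounding style. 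Your alternative iteration argument is also correct and is in fact closer in spirit to the constructive reasoning used in Lemma~\ref{lem:subslice-fp-prop} of this paper, where a similar chain $y \le f^\tx(y) \le (f^\tx)^2(y) \le \cdots$ is shown to stabilize in a bounded box. Either route is a complete proof.
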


We utilize the previous lemma to show the following lemma.

\begin{lemma}
\label{lem:findlfp}
Let $x^*$ be the least fixed point of $f$. 
There is a fixed point $x$ of $\fu$ such that $x \ge x^* - (1.5, 1.5, 1.5)$.
\end{lemma}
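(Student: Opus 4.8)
The plan is to exhibit a point $x$ that moves weakly up under $\fu$ and a point $y \ge x$ that moves weakly down under $\fu$, both lying within the box $[x^* - (1.5,1.5,1.5), \; x^* + \text{something}]$, and then invoke Lemma~\ref{lem:updownfp} (noting $\fu$ is monotone since it is violation-free) to conclude that a fixed point of $\fu$ lies in that box, in particular above $x^* - (1.5,1.5,1.5)$. So the real content is to find such a pair of points close to $x^*$ and to check the up/down conditions using what we know about how $\fu$ was built from $f$ via interpolation and downward shifts.

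First I would recall the two properties of $f$ that we have already arranged: $f$ is a violation-free \1DUniqueDMAC instance (so by Theorem~\ref{thm:surfaces} it is 1D rational and its surfaces are monotone with gradient at most one), and by Lemma~\ref{lem:boundaryin} $f$ moves strictly inward on the boundary, so no surface of $f$ touches the boundary. Since $x^*$ is the least fixed point of $f$, hereditariness (Theorem~\ref{thm:hereditary}) tells us $x^*_i = s_i(x^*_{-i})$ for each $i$, where $s_i$ is the surface of $f$. Now consider the point $x = x^* - (1.5,1.5,1.5) \in \gint$ (after possibly clamping to stay inside $\gint$; since surfaces of $f$ avoid the boundary by at least $1$ in each coordinate, $x^*_i \ge 2$, so $x$ is a legitimate point of $\gint$ unless $x^*_i < 2.5$, which is an easy-to-handle edge case using the boundary assumption). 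I claim $x \le \fu(x)$. For dimension $1$: $\fu_1$ has surface $s_1$ unchanged, and $x_1 = x^*_1 - 1.5 < s_1(x^*_{-1})$; I would then argue via monotonicity of $s_1$ and gradient at most one (and 1D rationality of $\fint$) that $s_1(x_{-1}) \ge s_1(x^*_{-1}) - \|x_{-1} - x^*_{-1}\|_\infty = x^*_1 - 1.5 > x_1$, hence $\fu_1(x) = x_1 + 1 > x_1$. For dimensions $2$ and $3$ the shift is downward by $1/4$ and $1/2$ respectively, which only helps: the shifted surface is still at height at least $x^*_i - 1.5 - 1/2 > x_i$, so $\fu_i(x) > x_i$. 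This gives $x \le \fu(x)$.

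Next I would produce $y \ge x$ with $y \ge \fu(y)$, close to $x^*$. The natural candidate is $y$ slightly above $x^*$, say $y = x^* + (1,1,1)$ (or whatever constant suffices); by the same surface-gradient reasoning, each shifted surface of $\fu$ at $y_{-i}$ has height at most $x^*_i + 1 < y_i$ (the downward shifts only make this easier), so $\fu_i(y) < y_i$ in every coordinate, giving $y \ge \fu(y)$. We also need $x \le y$, which is immediate. Then Lemma~\ref{lem:updownfp} produces a fixed point $x'$ of $\fu$ with $x \le x' \le y$; in particular $x' \ge x = x^* - (1.5,1.5,1.5)$, which is exactly the claim. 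The main obstacle is the boundary bookkeeping: when $x^*$ is within distance $1.5$ of the lower boundary, the point $x^* - (1.5,1.5,1.5)$ may fall outside $\gint$, and moreover $\fu$ was defined to orient any out-of-grid query downward, which could interfere with the up/down argument; I would handle this by using the Lemma~\ref{lem:boundaryin} guarantee that $f$ (hence $\fint$, hence $\fu$ away from the forced-downward fringe) points strictly inward near the boundary, so that the true fixed point $x'$ of $\fu$ is pushed up off the boundary anyway, and the bound $x' \ge x^* - (1.5,1.5,1.5)$ still holds (trivially, since the left-hand side is at least the grid minimum while the right-hand side drops below it). The second, milder obstacle is making the "gradient at most one plus 1D rationality implies $f$ moves toward the shifted surface" step fully rigorous on the refined grid $\gint$, but this follows directly from the already-established fact that $\fint$ and $\fu$ are 1D rational with monotone gradient-one surfaces (Theorem~\ref{thm:surfaces}), applied coordinatewise.
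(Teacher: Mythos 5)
Your high-level plan (find a point in the up-set of $\fu$ near $x^*$, pair it with a down-set point, and invoke Lemma~\ref{lem:updownfp}) is the right one and matches the paper, but your candidate up-set point $x = x^* - (1.5,1.5,1.5)$ does not work, and the step where you argue it does contains an arithmetic error that flips the direction of the relevant inequality. For dimension $i \in \{2,3\}$ you write that the shifted surface at $x_{-i}$ has height at least $x^*_i - 1.5 - 1/2 > x_i$; but $x_i = x^*_i - 1.5$ so this reads $x^*_i - 2 > x^*_i - 1.5$, which is false. More to the point, the downward shift of $s_2$ and $s_3$ \emph{hurts} you here, not helps: you need $x$ to lie weakly \emph{below} the surface of $\fu$ in dimension $i$, and pushing the surface down shrinks the set of such points. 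Concretely, from $\|x_{-i} - x^*_{-i}\|_\infty = 1.5$ and gradient at most one you can only conclude $s_i^{\text{int}}(x_{-i}) \ge x^*_i - 1.5 = x_i$; after subtracting the shift of $1/4$ or $1/2$ you get $s_i^{u}(x_{-i}) \ge x_i - 1/2$, which does not give $\fu_i(x) \ge x_i$. (Your dimension-1 estimate has the same arithmetic slip, asserting $x^*_1 - 1.5 > x_1$ where only $\ge$ holds, though that case is saved because the inequality needed is weak.)

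The paper avoids this by being more economical with how far it pushes away from $x^*$. It first sets $y = x^* - \One$ and observes that non-expansion forces $y \le f(y)$, while leastness of $x^*$ forces at least one dimension $i$ with a \emph{strict} upward displacement $f_i(y) > y_i$, hence (on the integer grid) a margin of at least $1$ below the surface $s_i$ at $y$. It then defines $z = y - \sum_{j\ne i} e_j/2$, i.e.\ it only drops a further $1/2$ in the two dimensions where the margin is merely weak, and keeps $z_i = x^*_i - 1$ where the margin is large. That extra slack in dimension $i$ is exactly what is needed to absorb the downward shift, which your uniform choice $x^* - (1.5,1.5,1.5)$ cannot do. One further simplification you miss: for the down-set endpoint you do not need to go up to $x^* + (1,1,1)$ (which introduces an upper-boundary case); since all surfaces are shifted weakly downward, $\fu(x^*) \le x^*$ holds directly, so $x^*$ itself can serve as the down-set point.
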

\begin{proof}
First note that, since $\fu$ was created by interpolating $f$ and then
shifting all surfaces weakly downward, we have $\fu(x^*) \le x^*$. 

Then consider the point $y = x^* - (1, 1, 1)$. Observe that we cannot have
$f_i(y) < y_i$ in any dimension $i$, since this would be a violation of strict
contraction between $y$ and $x^*$ and $\fu$ is violation-free. So we have $y \le
f(y)$. Furthermore, since $y \le x^*$ and $x^*$ is the least fixed point of $f$, there
must exist a dimension $i$ such that $y_i < f_i(y)$. 

Let $z = y - \sum_{j \ne i} e_j/2$. We argue that $z \le \fu(z)$. This can be
proved by inspecting the surfaces of $\fu$. 
\begin{itemize}
\item For each dimension $j \ne i$, we
have that $y$ lies weakly below $s_j$ in $f$, and so the distance between $z$ and
$s_j$ is at least $1/2$ in $f$. Since we moved surface $j$ downward by at most
$1/2$ when we constructed $\fu$, this implies that $z$ lies weakly below $s_j$
in $\fu$. 

\item For dimension $i$, we have that the distance between $y$ and $s_i$ is at
least $1$ in $f$, because $f$ moves $y$ strictly upward in dimension $i$. Since
the surface has gradient at most one, this implies that the distance between
$z$ and $s_i$ is at least $1/2$ in $\fint$. We then shifted $s_i$ downward by
at most $1/2$ when we created $\fu$, so we have that $z$ lies weakly below
$s_i$ in $\fu$. 
\end{itemize}
Since $z$ lies weakly below $s_j$ for all dimensions $j$, we have that 
$z \le \fu(z)$.

So we have $z \le x^*$, and $z \le \fu(z)$, and $x^* \ge \fu(x^*)$, which allows us
to invoke Lemma~\ref{lem:updownfp} to argue that there is a fixed point $x$ of $\fu$
satisfying $z \le x \le x^*$. Since 
$z \ge x^* - (1.5, 1.5, 1.5)$, we therefore have $x \ge x^* - (1.5, 1.5, 1.5)$, as required.
\end{proof}

So if we find the unique fixed point $x$ of $\fu$, then we can search all
points $y$ in $G$ that satisfy $x \le y \le x + (1.5, 1.5, 1.5)$, and
Lemma~\ref{lem:findlfp} guarantees that we will find the least fixed point of $f$.
Since there are constantly many points to search, this mapping can be carried
out in time that is polynomial in the representation of $f$.
This completes the proof of Lemma~\ref{lem:uniquefp}. 

We have shown the lemma for three dimensions here because our algorithm works
for three-dimensional instances. 
We remark that this proof
can easily be generalized to any constant dimension: we simply need to increase
the resolution of the grid so that each surface can be moved by a distinct
shift, and all shifts need to be less than or equal to 1/2. We cannot
generalize beyond constantly many dimensions due to the need to search a
sub-grid when we map the unique fixed point of $\fu$ to the least fixed point
of~$f$. This sub-grid has constantly many points when the dimension is
constant, but in general it contains exponentially many points.

\section{An Algorithm for 3-dimensional DMAC}
\label{sec:dmac_algo}





We now present an algorithm that solves three-dimensional $\DMAC$ instances.
Throughout this section we assume that we have a violation-free \DMAC instance defined
by a function $f$ over the grid $G = \{1, 2, \dots, n\}^3$. 


A fundamental definition for the algorithm is the \emph{up} and \emph{down}
sets of $f$. The up-set contains every point $x \in G$ such that $x \le f(x)$,
while the down-set contains every point $x \in G$ such that $x \ge f(x)$.

\begin{definition}[\textsc{Up- and Down- sets}]
Let $f : G \rightarrow G$ be a $\DMAC$ instance. 
We define 
\begin{align*}
\Up(f) &= \{ x \in G: f(x) \geq x \}, \\
\Down(f) &= \{ x \in G: f(x) \leq x \}.
\end{align*}
\end{definition}

This section proceeds as follows. We first define a preprocessing step that is
necessary for our algorithm to work. We then define the notion of a
\emph{critical box}, which is the crucial definition that our algorithm is
based on, and we then show that an almost-square critical box always exists.
Then we define the algorithm itself. Finally, we give a terminal phase
algorithm that will be applied once we have found a two-dimensional slice that
contains a fixed point.

\subsection{Preprocessing}



Our algorithm requires a number of assumptions on the $\DMAC$ instance.
Firstly we assume that we have
applied Lemma~\ref{lem:1dunique} to produce a \1DUniqueDMAC instance.
Then we assume that we have applied Lemma~\ref{lem:boundaryin} to ensure that
all points on the boundary of the instance move strictly inward.



Finally, we need one more assumption that we have not yet considered.
We require that the up set does not touch the lower boundary of the
instance unless it is touching the boundary of dimension 3, and the symmetric
property for the down set. This is encoded by the following statements.
\begin{itemize}
\item For each $x \in G$, if $x_3 = 1$ and either $x_1 > 1$ or $x_2 > 1$, then $x \not \le f(x)$. 
\item For each $x \in G$, if $x_3 = n$ and either $x_1 < n$ or $x_2 < n$, then $x \not \ge f(x)$. 
\end{itemize}

The first statement is analogous to the situation shown on the right in
Figure~\ref{fig:bsidea}, though in that picture we show the 2D-version in which
we have $x \not \le f(x)$ whenever $x_1 > 1$ and $x_2 = 1$. As mentioned in the
technical overview, the transformation is necessary to ensure that our initial
bounds on the up- and down-sets can be correctly translated to other slices. 
This assumption gives the three-dimensional analogue, and it is likewise used to
ensure that the translation of bounds from a slice with $x_3 = i$ to a slice
with $x_3 = j \ne i$ is correct.




In this section we 
show that this assumption can be made without loss of generality, by proving
the following lemma.

\begin{lemma}
\label{lem:updownboundary}
Let $G = \{1, 2, \dots, n\}^3$, let $G' = \{1, 2, \dots, n\}^2 \times \{1, 2,
\dots, n'\}$ and
let $f : G \rightarrow G$ be a $\1DUniqueDMAC$ instance. There is a polynomial-time reduction that creates a 
$\1DUniqueDMAC$ instance $f' : G' \rightarrow G'$ such 
that the following statements hold.
\begin{itemize}
\item For each $x \in G'$, if $x_3 = 1$ and $x_1 > 1$ or $x_2 > 1$, then $x
\not \le f'(x)$. 
\item For each $x \in G'$, if $x_3 = n'$ and $x_1 < n$ or $x_2 < n$, then $x
\not \ge f'(x)$. 
\end{itemize}
Moreover, the least fixed point of $f'$ can be mapped back to the least fixed
point of $f$.
\end{lemma}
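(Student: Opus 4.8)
The plan is to extend the third dimension of the instance both downwards and upwards, in the spirit of Lemma~\ref{lem:boundaryin}, but with a more delicate construction that ``drags'' the surfaces $s_1$ and $s_2$ towards the appropriate corner of the grid in the newly added slices. I would first handle the $\Up(f)$-condition with a downwards extension, obtain the $\Down(f)$-condition by a symmetric upwards extension, and compose the two reductions (having already applied Lemmas~\ref{lem:1dunique} and~\ref{lem:boundaryin}, so that $f$ is a \1DUniqueDMAC instance whose boundary points move strictly inward). Concretely, for the downwards step let dimension $3$ of the new grid range over $\{2-n,\dots,n\}$ (relabelled to $\{1,\dots,2n-1\}$ afterwards, and then extended upwards by a further $n-1$ levels, giving $n' = 3n-2$). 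On points $x$ with $x_3\in\{1,\dots,n\}$ set $f'(x)=f(x)$. On a point $x$ with $x_3 = 1-k$, $k\ge 1$, set $f'_3(x)=x_3+1$ (so no extended point is fixed and each one moves strictly inward in dimension~$3$); and in dimensions $1,2$ let the direction of $f'$ point towards a prescribed surface value. Writing $\sigma(x_2)$ for the unique dimension-$1$ fixed point of $f$ on the line through $(\cdot,x_2,1)$, the key choice is to put the new dimension-$1$ surface at depth $k$ equal to
$$S_1(x_2,1-k)=\min\bigl(\sigma(x_2),\,n-k\bigr),$$
i.e.\ the original bottom surface capped by a ceiling that descends by one per level, and analogously for $S_2$; formally $f'_1(x)$ is $x_1+1$, $x_1$, or $x_1-1$ according to whether $x_1$ is below, equal to, or above $S_1(x_2,x_3)$.

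The main obstacle — and the reason for this particular form of $S_1$ — is proving that $f'$ is violation-free. By Theorem~\ref{thm:surfaces} it suffices to check that $f'$ is a \1DUniqueDMAC instance, that it is 1D-rational, and that each of its surfaces is monotone with gradient at most one. The first two are immediate from the construction: every dimension-$1$ (resp.\ dimension-$2$) slice of the extended region has the single fixed point $S_1$ (resp.\ $S_2$) by fiat, dimension-$3$ slices are unchanged except for monotone ``all move up'' / ``all move down'' extensions, and all direction functions point towards the surface by definition. Surface $s_3$ is literally unchanged, so it inherits its properties from $f$. The delicate point is $s_1$ (symmetrically $s_2$) as a function of $(x_2,x_3)$: the naive uniform shift $\max(\sigma(x_2)-k,1)$ does \emph{not} have $\ell_\infty$-gradient at most one (two points differing by one in both $x_2$ and $x_3$ can see $s_1$ change by two), whereas the ``cap from above'' form above does. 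I would verify this by a short case analysis on a comparable pair $(a,s)\ge(b,t)$ with $\delta=\max(a-b,s-t)$: if both points lie in the original region the bound is inherited; if both lie in the extended region, in the only non-trivial sub-case $S_1(a,s)=n-k\le\sigma(a)$, whence $S_1(a,s)-S_1(b,t)\le \sigma(a)-\sigma(b)\le a-b\le\delta$; and if the pair straddles the boundary one reduces directly to the Lipschitz bound for the original surface, which already holds. Monotonicity of the combined surface falls out of the same split.

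It then remains to verify the boundary conditions and the bookkeeping. At the new bottom slice $x_3=2-n$ we have $S_1(x_2,2-n)=\min(\sigma(x_2),1)=1$ and likewise $S_2\equiv 1$, so a point on that slice lies in $\Up(f')$ only if $x_1\le 1$ and $x_2\le 1$; hence every such point with $x_1>1$ or $x_2>1$ is outside $\Up(f')$, which is exactly the claimed property after relabelling, and the symmetric upwards extension yields the $\Down(f')$ property at the top slice. Since the extended regions contain no fixed points, $f'$ has precisely the same fixed points as $f$, so the least fixed point is preserved and maps back (through the relabelling) by the identity. Finally, each value $f'(x)$ at an extended point costs only $O(1)$ queries to $f$: deciding $x_1$'s position relative to $S_1(x_2,x_3)=\min(\sigma(x_2),n-k)$ needs just one query to $f$ at $(x_1,x_2,1)$, compared against $n-k$, using the 1D-rationality of $f$, so there is no hidden binary search and the whole reduction is polynomial-time. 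I expect the case analysis establishing the $\ell_\infty$-gradient bound for the capped surface to be the only real work; everything else is routine verification.
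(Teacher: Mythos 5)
Your proposal is correct, and it diverges from the paper's construction at a step that actually matters. The paper builds the extended dimension-$1$ surface by projecting along $(1,0,1)$: for $x_3 \le n$ (in the paper's relabelled coordinates) the new surface is $\sigma(x_2) - (n+1-x_3)$, clamped from below. This is precisely the ``uniform shift'' you flag as having $\ell_\infty$-gradient up to $2$, and you are right: if $\sigma(a) - \sigma(b) = a - b$ and $s - t = a - b$, the surface changes by $2(a-b)$ over a step of $\ell_\infty$-distance $a-b$, so Theorem~\ref{thm:surfaces} cannot be invoked, and in fact a \ref{DV2} violation between a point just left of the shifted surface and a surface point one diagonal step away can be produced. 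The paper's text nevertheless asserts that ``moving the entire surface downward with a gradient of one'' preserves the gradient bound; as you point out, it does not. Your cap-from-above form $S_1(x_2,1-k)=\min(\sigma(x_2),\,n-k)$ does have gradient at most one (your case split is the right argument: on the capped branch the change is $|k'-k|$, on the uncapped branch it inherits $|\sigma(a)-\sigma(b)|\le|a-b|$, and the mixed/straddling cases sandwich between those two bounds), it agrees with $\sigma$ at the boundary $k=0$, it reaches $1$ at the bottom slice $k=n-1$, and it is computable with $O(1)$ queries to $f$ as you describe, since the threshold $n-k$ is known a priori and one query at $(x_1,x_2,1)$ resolves $x_1$ against $\sigma(x_2)$ by 1D-rationality.

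Two further remarks on the comparison. First, you pull both $S_1$ and $S_2$ down in the same downward extension, so a single down-pass plus a single (flipped) up-pass gives both bullet points of the lemma; the paper instead runs its one-dimension-at-a-time reduction four times. Your version is cleaner and avoids the need to re-verify the already-established boundary property after each subsequent pass. Second, your monotonicity check for the combined surface and the observation that the extended region contributes no new fixed points (hence the LFP is preserved under the identity/relabelling) are both right and match what the paper needs. In short, the proposal not only proves the lemma but repairs the construction.
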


We start by giving a reduction that produces a function $f' : G' \rightarrow
G'$ where 
$G' = \{1, 2, \dots, n\}^2 \times \{1, 2,
\dots, 2n\}$ that ensures that if $x_3 = 1$ and $x_1 > 1$ then $x \not \le
f'(x)$. We define $f'$ as follows.
\begin{itemize}
\item If $x_3 > n$ then we define $f'(x) = f(x - n \cdot e_3)$.
\item If $x_3 \le n$ then we define $f'_2(x) = f_2(x - (x_3 + 1) \cdot e_3)$
and 
$f'_3(x) = x_3 + 1$. For dimension 1 we define 
$m(x) = x - (x_3 + 1) \cdot e_3$, 
$z = x + (n + 1 - x_3) \cdot (1, 0, 1)$ and 
$h_1(x) = f_1(x) - x_1$. 
We set 
\begin{equation*}
f'_1(x) = \begin{cases}
\max(x_1 + h_1(m(z)), 0) & \text{if $z \in G'$} \\
x_1 - 1 & \text{otherwise.}
\end{cases}
\end{equation*}
\end{itemize}
The idea here is that we are carrying out the analogous reduction to the one
shown for two-dimensions in Figure~\ref{fig:bsidea}.
The top half of $G'$, meaning the points $x$ with $x_3 >
n$ exactly copies $f$ on~$G$. For the bottom-half, the definition ensures that
all points move upward in dimension 3. For dimensions 1 and 2, we try to copy
the direction of a point that lies on the bottom boundary of $G$ in
dimension~$3$ (which has now been transplanted to the slice where $x_3 = n +
1$). For dimension 2 we copy the point that is now directly above us, while for
dimension 1, we project along the vector $(1, 0, 1)$, and this projection may
leave $G'$ before it hits a point with $x_3 = n + 1$. If the vector hits a
point with $x_3 = n + 1$, then we copy the displacement of that
point, and otherwise we move downwards. 

To see that we have $x \not \le f'(x)$ whenever
$x_3 = 1$ and $x_1 > 1$ then, it suffices to note that
for any such point we will have $z \not \in G'$, and so $f'_1(x) < x_1$ by
definition. The following lemma shows that the reduction is correct.

\begin{lemma}
A point $x$ is a fixed point of $f'$ if and only if $x - n \cdot e_3$ is a fixed point of $f$.
Moreover, if $f$ is violation-free, then $f'$ is also violation-free.
\end{lemma}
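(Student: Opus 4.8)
The plan is to treat the two assertions in turn, deducing violation-freeness from Theorem~\ref{thm:surfaces} rather than by a direct case analysis of the three violation types.

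For the fixed-point correspondence, I would first observe that no point in the bottom half $\{x \in G' : x_3 \le n\}$ can be fixed, since there $f'_3(x) = x_3 + 1 > x_3$. Hence every fixed point of $f'$ lies in the top half $\{x_3 > n\}$, and on the top half $f'$ is, by construction, a translated copy of the instance $f$: the displacement of $f'$ at $x$ equals the displacement of $f$ at $x - n e_3$. Consequently $x$ is a fixed point of $f'$ if and only if $x_3 > n$ and $y \coloneqq x - n e_3 \in G$ satisfies $f(y) = y$, which is exactly the claimed equivalence; in particular every fixed point $y$ of $f$ yields the fixed point $y + n e_3$ of $f'$.

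For violation-freeness, assume $f$ is violation-free. Since $f$ is a \1DUniqueDMAC instance, Theorem~\ref{thm:surfaces} gives that $f$ is 1D rational and that each surface $s_i$ of $f$ is monotone with gradient at most one. I would then verify that $f'$ is itself a \1DUniqueDMAC instance: all displacements of $f'$ have length at most one, because each branch of the definition either copies a unit displacement of $f$ or assigns a displacement in $\{-1, +1\}$; and every one-dimensional slice of $f'$ has a unique one-dimensional fixed point. For dimension-$3$ slices this holds because the top half copies $f$, which has a unique such fixed point per slice, while the bottom half moves strictly upward; for dimension-$1$ and dimension-$2$ slices it follows from the description of the surfaces of $f'$ below, which shows these surfaces are single-valued. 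Next I would identify the surfaces of $f'$: the surface $s'_3$ is $s_3$ shifted up by $n$, hence monotone with gradient at most one and lying strictly above the bottom half; on the top half $s'_1$ and $s'_2$ agree with $s_1$ and $s_2$; and on the bottom half $f'$ copies the displacement of $f$ at a point of the interface slice $x_3 = n+1$ — straight in dimension $2$ (so $s'_2$ is independent of $x_3$ there, equal to the restriction of $s_2$ to that slice) and along the shear direction $(1,0,1)$ in dimension $1$ (so $s'_1$ on the bottom half is $s_1$ reparametrized by a unit-slope shear). A shear of this form preserves monotonicity and the gradient-at-most-one property, and the pieces glue along $x_3 = n+1$, so $s'_1$ and $s'_2$ are globally monotone with gradient at most one. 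Finally $f'$ is 1D rational, since in each branch it moves toward its own surface exactly as $f$ moves toward $s_i$ (with the points whose projection leaves $G'$ explicitly oriented downward, consistently with the surface lying inside $G'$). Applying Theorem~\ref{thm:surfaces} to $f'$ then yields that $f'$ is violation-free.

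The routine parts of this argument are the dimension-$3$ surface and the behaviour on the top half, which are mere translations of $f$. The main obstacle is the bottom half, and specifically the dimension-$1$ surface: I would need to make precise that transporting $s_1$ by the shear along $(1,0,1)$, together with the downward clamping of the points whose projection $z$ exits $G'$, yields a surface that is still single-valued, still monotone, still of gradient at most one, and that agrees with the top-half surface on the interface slice $x_3 = n+1$. Carefully tracking which point $m(z)$'s displacement is being copied, and over which range of $x$ the projection $z$ stays inside $G'$, is where the real bookkeeping lies; once that is settled, both claims of the lemma follow as above.
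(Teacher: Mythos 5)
Your proof is correct and follows essentially the same route as the paper's: both handle the fixed-point correspondence by noting that the bottom half always moves strictly up in dimension $3$ while the top half is a translated copy of $f$, and both deduce violation-freeness by invoking Theorem~\ref{thm:surfaces} after checking that $f'$ is a \1DUniqueDMAC instance whose surfaces are monotone with gradient at most one and 1D rational, with $s_3'$ a vertical translate of $s_3$, $s_2'$ extended constantly in $x_3$ below the interface, and $s_1'$ extended along the unit-slope shear $(1,0,1)$. Your description is, if anything, slightly more explicit than the paper's about the need to verify the \1DUniqueDMAC preconditions and about the gluing at the interface slice.
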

\begin{proof}
For a point $x \in G'$ with $x_3 \ge n + 1$ it is clear from the definition
that $x$ is a fixed point of $f'$ if and only if $x - n \cdot e_3$ is a fixed
point of $f$. For a point $x \in G'$ with $x_3 < n + 1$, we have $f_3(x) > x_3$
by definition, so $x$ is not a fixed point of $f'$.

If $f$ is violation-free, then we can apply Theorem~\ref{thm:surfaces} to
argue that it is 1D rational, and that all surfaces are monotone with gradient
at most one. Since we copy $f$ to the points $x$ in $G'$ with $x_3 \ge n + 1$,
we have that all surfaces continue to be monotone and have gradient one in this region.

We now consider the points $x$ with $x_3 < n + 1$. 
For the surface $s_3$, note that we have defined $f'_3(x) > x_3$, and so this
ensures that each one-dimensional slice in dimension 3 has a unique fixed point
and all points are 1D rational in dimension 3. 

For dimension 2 observe that we are simply copying the dimension-2 surface from
the slice in which $x_3 = n + 1$. Thus the surface continues to be monotone and
have gradient at most one, and the instance continues to be 1D rational in
dimension 2.

For dimension 1 we are copying the surface from 
the slice in which $x_3 = n + 1$, but we are moving the entire surface 
downward with a gradient of one as we move downward in dimension 3. This
ensures that the surface is still monotone and still has gradient at most one.
The $\max$ operation with $0$ ensures that the surface does not leave the lower
boundary of the instance, which also ensures that we continue to have a
\1DUniqueDMAC instance. Finally, the downward movement that is introduced when
$z \not \in G$ ensures that the instance remains 1D rational for dimension 1.

Thus, the new instance is 1D rational, and has surfaces that are monotone and
have gradient at most~1. So we can apply Theorem~\ref{thm:surfaces} to argue
that $f'$ is violation-free.
\end{proof}

So we have shown that the reduction is correct for the case when $x_3 = 1$ and
$x_1 > n$, and we have also shown that the least fixed point of $f'$ can be
mapped back to the least fixed point of $f$. Moreover, $f'$ has displacements
that are at most unit lengths because it either copies displacements from $f'$,
which has that property by assumption, or it introduces a downward displacement
of unit length.  Finally, it is clear that the reduction can be carried out in
time that is polynomial in the representation of $f$.

We can cover the other cases in the following way. 
\begin{itemize}
\item We run the same reduction again after exchanging dimensions 1 and 2.
\item We then flip all dimensions in the instance, and run the reduction for
both dimensions 1 and 2, which ensures the symmetric property holds for the
down set. 
\end{itemize}
It is clear that running the reduction four times in this way can be carried
out in time that is polynomial in the representation of $f$. This completes the proof of
Lemma~\ref{lem:updownboundary}.

\subsection{Critical Boxes}
\label{sec:cb}


We now define the concept of a critical box. 

\paragraph{\bf Boxes and diagonal boxes.}

We start by defining a $\CBox$,
which is simply a box of points in the instance. 

\begin{definition}[$\CBox$]
Given a point $x \in G$, and two non-negative integers $h, w$, we denote the \textit{box} $\CBox(x,h,w)$ that contains all points $y \in G$ that satisfy the inequalities
\begin{align*}
x_1 \leq y_1 \leq x_1 + w \\
x_2 \leq y_2 \leq y_2 + h 
\end{align*}
\end{definition}

We also define a $\DBox$, which is diagonal region with flattened ends,
as shown in Figure~\ref{fig:dbox}. 

\begin{definition}[\textsc{DBox}]
Given two points $x, y \in G$ with $x_1 \leq y_1$ and $y_2 \leq x_2$, and a non-negative integer $l$, we denote the \textit{diagonal-box} $\DBox(x,y,l)$ that contains all points $z \in G$ that satisfy the inequalities
\begin{align*}
y_2 + (z_1 - y_1) &\le z_1 \le x_2 + (z_1 - x_1) \\
x_1 &\le z_1 \le y_1 + l \\
y_2 &\le z_2 \le x_2 + l
\end{align*}
\end{definition}

The states that we consider in our algorithm will be built from these two
shapes.

\dBox

\paragraph{\bf Critical boxes.}

We are now ready to define critical boxes. We will initially describe this
concept for the up set. As we will later see, the same concept can be applied
to the down set by simply flipping all inequalities. 

\criticalBox

Throughout this section we consider a two-dimensional slice $s$ of a
three-dimensional instance. Without loss of generality we suppose that
dimensions $1$ and $2$ are the dimensions used in the slice, and dimension $3$
is the dimension that is not used in the instance. 

Consider the set $S$ of points in the slice that are in $\Up(f)$, meaning that
$f$ points upward in all three dimensions (including the dimension that is not
involved in the slice). Intuitively, a critical box is a (not necessarily
square) box that only contains points in $S$, and cannot be made taller and
wider without including some point not in $S$.

Figure~\ref{fig:critical_box} shows an example of our formal definition of a
critical box. Specifically, the box is defined by a point $x$ in $\Up(f)$ such
that
\begin{itemize}
\item the point $x+w\cdot e_1$ moves weakly upwards in the left-right direction, while the
point $x+(w+1) \cdot e_1$ moves strictly downwards, and
\item the point $x+h\cdot e_2$ moves weakly upwards in the up-down direction, while the point
$x+(h+1) \cdot e_2$ moves strictly downwards,
\item the point diagonally beneath $x$ moves strictly downward in the third dimension,
or $x$ lies on the left or bottom boundaries of the instance (meaning that there are no points
diagonally beneath it).
\end{itemize}
We note that there will always exist a $w$ such that $x+w\cdot e_1$ moves
strictly downwards, due to our assumption that all points on the boundary of
the instance move strictly inward, and likewise there will always exist an $h$
such that $x+h\cdot e_2$ moves strictly downward. We encode this formally in the
following definition.


\begin{definition}[\textsc{Critical Box (Up Set)}]
\label{def:cb}
Given a two-dimensional slice $s$, a point $x$, and two non-negative integers $h$ and $w$,
the tuple $(x, h, w)$ is a critical box if all of the following are true.
\begin{enumerate}
\item 
\label{itm:cb1}
$x \in \Up(f)$. 
\item The points $x+w\cdot e_1$ and $x+(w+1) \cdot e_1$ obey the following.
\begin{align*}
f_1 (x + w \cdot e_1)  - (x_1 + w) &\geq 0 \\
f_1 (x + (w+1) \cdot e_1) - (x_1 + (w+1)) &< 0
\end{align*}
\item The points $x+h\cdot e_2$ and $x+(h+1) \cdot e_2$ obey the following.
\begin{align*}
f_2 (x + h \cdot e_2) - (x_2 + h) &\geq 0 \\
f_2 (x + (h+1) \cdot e_2) - (x_2 + (h+1) ) &< 0
\end{align*}

\item 
\label{itm:cb4}
Either $x_1 = 0$ or $x_2 = 0$, or the point $x - (1, 1)$ obeys the
following.
\begin{align*}
f_3(x - (e_1 + e_2)) - x_3 &< 0 
\end{align*}

\end{enumerate}
\end{definition}




It is clear from the definition that a critical box $(x, h, w)$ cannot be made
taller and wider without including some point that is not in $\Up(f)$. In
Figure~\ref{fig:critical_box} we can see that the critical box cannot be
extended rightwards, since it would then include the point $x + (w+1) \cdot e_1$,
which is not in $\Up(f)$ due to its direction in dimension 1, and it likewise
cannot be extended upwards without including the point $x + (h+1) \cdot e_2$, which
is also not in $\Up(f)$ due to its direction in dimension 2. It also cannot be
extended both downward and leftward without including the point $x - e_1 - e_2$,
which is not in $\Up(f)$ due to its direction in dimension 3. Note that we may
be able to extend the box either downward or leftward individually, but this
will turn out to not be a problem, since the points directly leftward and
downward are contained in the critical boxes lobes, which we will define
shortly.

It is however not immediately obvious that all points in a critical box are
actually contained in $\Up(f)$. We show that in the following lemma.

\begin{lemma}\label{lem:cb_up}
If $(x, h, w)$ is a critical box, then we have $\CBox(x, h, w) \subseteq
\Up(f)$. 
\end{lemma}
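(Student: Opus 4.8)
The plan is to show that every point $y \in \CBox(x,h,w)$ satisfies $y \le f(y)$, i.e.\ $f_j(y) \ge y_j$ for each of the three dimensions $j \in \{1,2,3\}$, by combining monotonicity of $f$ with the four defining conditions of a critical box and the 1D-rationality / surface structure established earlier. Write $y = x + (a,b)$ with $0 \le a \le w$ and $0 \le b \le h$. I would handle the three dimensions separately.

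For dimension~1: condition~2 tells us that the point $x + w\cdot e_1$ moves weakly up in dimension~1, so by 1D-rationality (Theorem~\ref{thm:surfaces}) the surface satisfies $s_1((x + w \cdot e_1)_{-1}) \ge x_1 + w$, and condition~2 also gives $f_1(x + (w+1)e_1) < x_1 + w + 1$, hence $s_1 \le x_1 + w$ at that same coordinate; so $s_1$ is exactly $x_1+w$ on the horizontal line through $x$. Now $y_{-1}$ differs from $(x + w e_1)_{-1}$ only in dimension~2, where $y$ lies at height $x_2 + b \ge x_2$; by monotonicity of the surface $s_1$, its value at $y_{-1}$ is at least its value at $(x)_{-1} = (x + w e_1)_{-1}$, which is $x_1 + w \ge x_1 + a = y_1$. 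Hence $y$ lies weakly below $s_1$, so 1D-rationality gives $f_1(y) \ge y_1$. Dimension~2 is symmetric, swapping the roles of the two slice-dimensions and using condition~3.

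For dimension~3: this is the dimension not appearing in the slice, so the "surface'' argument is not available in the same form; instead I would argue directly from monotonicity. Condition~1 says $x \in \Up(f)$, so $f_3(x) \ge x_3$. Since $y \ge x$ (pointwise, as $a,b \ge 0$) and the third coordinate of $y$ equals that of $x$, monotonicity of $f$ gives $f_3(y) \ge f_3(x) \ge x_3 = y_3$. (Condition~4, about the point diagonally beneath $x$, is not needed for containment here — it is the condition ensuring the box cannot be extended down-and-left, and will be used elsewhere.) Combining the three dimensions, $f(y) \ge y$, so $y \in \Up(f)$, proving $\CBox(x,h,w) \subseteq \Up(f)$.

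The main obstacle is making the dimension-1 and dimension-2 arguments fully rigorous using only the four listed conditions together with the established facts about surfaces: one must be careful that the monotonicity invoked for $s_1$ is precisely the monotonicity-of-surface statement proved for violation-free \1DUniqueDMAC instances, and that 1D-rationality correctly converts "$y$ weakly below $s_1$'' into "$f_1(y) \ge y_1$''. A cleaner alternative that avoids surfaces entirely is to prove the dimension-1 bound for $y$ with $b = 0$ first, directly via monotonicity from the point $x + w e_1$ (if $f_1(y) < y_1$ then $y$ and $x+we_1$ would give a non-expansion or monotonicity violation since $y \le x + w e_1$ in that slice-line and the latter moves weakly up), and then lift to arbitrary $b$ by monotonicity in dimension~2; I would check which of these two routes needs the fewest prerequisites and use that one, but either way the structure of the proof is the three-dimensions-separately case analysis above.
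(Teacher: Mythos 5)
Your main proof is correct, but it routes through heavier machinery than the paper uses. The paper's own proof is elementary: it shows $f_1(x + a\cdot e_1) \ge x_1 + a$ for all $0\le a\le w$ by an inline non-expansion argument against the point $x + w\cdot e_1$ (likewise for dimension 2 against $x + h\cdot e_2$), then lifts both bounds to a general $y = x + a\cdot e_1 + b\cdot e_2$ by monotonicity, and handles dimension 3 by monotonicity from $x$. You instead appeal to Theorem~\ref{thm:surfaces}: you deduce $s_1(x_{-1}) = x_1 + w$ from condition~2 via 1D~rationality, then use surface monotonicity to get $s_1(y_{-1}) \ge x_1 + w \ge y_1$ for every $y$ in the box, and 1D~rationality again to conclude $f_1(y) \ge y_1$. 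This is a valid re-derivation of the same inequality, and is arguably cleaner conceptually, but it makes Lemma~\ref{lem:cb_up} depend on the full surface structure theorem where the paper keeps it self-contained (and the upper bound $s_1(x_{-1}) \le x_1 + w$ you also derive is not actually needed). The ``cleaner alternative'' you sketch at the end is essentially the paper's proof; one small precision point there: the violation you would extract between $y = x + a\cdot e_1$ and $x + w\cdot e_1$ is specifically a non-expansion violation (the strict-contraction inequality $|f_1(x+w e_1) - f_1(y)| > w - a$), not a monotonicity one — monotonicity between those two points only yields the useless upper bound $f_1(y) \le f_1(x + w\cdot e_1)$.
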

\begin{proof}
Since point $x$ represents the bottom left point of the critical box $(x,h,w)$, any point inside the box can be described as $y = x + a \cdot e_1 + b \cdot e_2$, where $a \leq w$ and $b \leq h$. 

From the definition of a critical box, $x \in Up(f)$ and we know that
$f_1 (x + w \cdot e_1) \geq x_1 +  w$. Therefore, for any point $x + a \cdot e_1$, since $a \leq w$, we have that
$f_1 (x + a \cdot e_1) \geq x_1 + a$ holds due to the contraction property, because if
we instead assume that $f_1
(x + a \cdot e_1) < x_1 + a$ then we obtain the following violation of strict contraction: $|| f(x + w \cdot e_1) - f(x
+ a \cdot e_1) ||_{\infty} > w-a = || (x + w \cdot e_1) -
(x + a \cdot e_1) ||_{\infty}$.

By a similar argument, for the second dimension, $f_2(x + h \cdot e_2) \geq x_2 + h$. For any point $x + b \cdot e_2$ within the box, $f_2 (x + b \cdot e_2) \geq x_2 + b$. 

Finally, monotonicity implies that $x \in Up(f) \implies f_3 (x + a \cdot e_1) \geq x_3 + a$ and $f_3 (x+ b \cdot e_2) \geq x_3+ b$.

Since $y = x  + a \cdot e_1 + b \cdot e_2$ is composed of two displacements of $x$ along dimensions 1 and 2, the inequalities for $f_1(x + a \cdot e_1), f_2 (x + b \cdot e_2)$ carry over to $y$.

Thus, $f_i(x + a \cdot e_1 + b \cdot e_2) \geq x + a \cdot e_1 + b \cdot e_2$, for all $i = \{1,2,3\}$, confirming that every point within the critical box belongs to $Up(f)$.

\end{proof}


\paragraph{\bf Lobes.}

\upRegions

The critical box will form the basis of a bound on the up set. Specifically,
we will show that the up set in a two-dimensional slice is contained within a
critical box and its three \emph{lobes}, which are shown in Figure~\ref{fig:up_regions}.

Specifically, for each critical box $(x, h, w)$, we define
\begin{itemize}
\item a \emph{left lobe} $\leftl(x, h, w)$ containing all points to the left of
the box,
\item a \emph{bottom lobe} $\downl(x, h, w)$ containing all points below the
box, and
\item a \emph{diagonal lobe} $\diagl(x, h, w)$ containing all points diagonally
above the box.
\end{itemize}
To define these lobes formally, we first define the \emph{upward} and
\emph{downward cones} that
originate from a particular point. 
\begin{definition}
Let $x \in G$ be a point, and let $i \in \{1, 2\}$ be a dimension. We define
\begin{align*}
\UC_i (x) &= \{y \in G: y_i - x_i \geq |y_j - x_j | \ \forall j \neq i \}, \\
\DC_i (x) &= \{y \in G: x_i - y_i \geq |y_j - x_j | \ \forall j \neq i \}.
\end{align*}
\end{definition}
For example, in Figure~\ref{fig:up_regions} the region labelled C1 is $\UC_1(x
+ (w + 1) \cdot e_1)$ while the region labelled C2 is $\UC_2(x + (h+1) \cdot e_2)$.
We can now formally define the three lobes.
\begin{definition}
Given a critical box $(x, h, w)$, the three \textit{lobes} originating from it
are defined as follows.
\begin{align*}
\leftl(x, h, w) &= \{ y \in G: y_1 \leq x_1,\ x_2 \leq y_2 \leq h \}, \\ 
\downl(x, h, w) &= \{ y \in G: x_1 \leq y_1 \leq w,\ y_2 \leq x_2\}, \\
\diagl(x, h, w) &= \{ y \in G: y \in \UC_2 (x + w \cdot e_1) \setminus (\UC_2(x + (h+1) \cdot e_2)
\cup \{(x,h,w)\}). 
\end{align*}

\end{definition}

We aim to prove the following lemma, which states that a critical box and its
lobes provide a bound on the up set within a two-dimensional slice.

\begin{lemma}
\label{lem:cbbound}
Let $s$ be a two-dimensional slice, let $(x, h, w)$ be a critical box within $s$, and let
$U$ be the subset of $\Up(f)$ that is contained in $s$. The
set $U$ is contained in
\begin{equation*}
\CBox(x, h, w) \cup \leftl(x, h, w) \cup \downl(x, h, w) \cup \diagl(x, h, w).
\end{equation*}
\end{lemma}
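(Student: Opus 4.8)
The plan is to show that any point $y$ in $U = \Up(f) \cap s$ that is not inside the critical box $\CBox(x,h,w)$ must lie in one of the three lobes, by a case analysis on the position of $y$ relative to the box $\CBox(x,h,w)$. First I would partition the slice $s$ into regions according to the coordinates of $y$ compared to the box: (i) $y_1 \le x_1$; (ii) $y_2 \le x_2$; (iii) $y_1 > x_1$ and $y_2 > x_2$ (so $y$ is weakly up-and-right of the box's lower-left corner $x$) but $y$ is outside the box, i.e.\ $y_1 > x_1 + w$ or $y_2 > x_2 + h$. Any point not in the box falls into at least one of (i), (ii), (iii), so handling these three cases suffices.

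For case (i), I need $y \in \leftl(x,h,w)$, which requires $x_2 \le y_2 \le h$ (here I read the definition's ``$y_2 \le h$'' as meaning $y_2 \le x_2 + h$, matching the box). The condition $y_1 \le x_1$ is given. For $y_2 \ge x_2$: if instead $y_2 < x_2$, then $y$ would lie in case (ii) territory and I would argue there; so within case (i) I may assume $y_2 \ge x_2$. The remaining thing to rule out is $y_2 > x_2 + h$; here I would use the contraction (non-expansion) property together with condition 3 of Definition~\ref{def:cb}: the point $x + (h+1)e_2$ moves strictly down in dimension $2$, and since $y_1 \le x_1$, monotonicity forces any point at height $x_2 + h + 1$ with first coordinate $\le x_1$ to also move strictly down in dimension $2$ (or we get a monotonicity violation), contradicting $y \in \Up(f)$. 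Case (ii) is the mirror image: if $y_2 \le x_2$ and $y_1 > x_1 + w$, then the point $x + (w+1)e_1$ moving strictly down in dimension $1$, combined with monotonicity for points at first-coordinate $x_1 + w + 1$ and second-coordinate $\le x_2$, contradicts $y \in \Up(f)$; this gives $x_1 \le y_1 \le x_1 + w$, i.e.\ $y \in \downl(x,h,w)$.

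Case (iii) is where I expect the real work, and it is what $\diagl$ is designed to absorb. Here $y$ is strictly up-and-right of $x$ and sticks out of the box either to the right of column $x_1 + w$ or above row $x_2 + h$ (or both). I would show $y \in \UC_2(x + w\cdot e_1)$ and $y \notin \UC_2(x + (h+1)\cdot e_2)$. For membership in $\UC_2(x+w e_1)$, i.e.\ $y_2 - x_2 \ge |y_1 - (x_1+w)|$: the key is that if this failed, then $y$ would be ``too far right relative to its height'', and I would derive a non-expansion violation with the point $x + w e_1$ (which is weakly up in dimension $1$ by the critical-box conditions and, since $y \in \Up(f)$ and by contraction, the dimension-$1$ displacement at $y$ would have to be large enough to contradict non-expansion against $x + w e_1$ being weakly up). For $y \notin \UC_2(x+(h+1)e_2)$: if $y$ were in that cone, then $y_2 - (x_2+h+1) \ge |y_1 - x_1|$, but the point $x+(h+1)e_2$ moves strictly down in dimension $2$, and propagating that downward displacement up the dimension-$2$ cone via the contraction property would force $y$ to move strictly down in dimension $2$, contradicting $y \in \Up(f)$. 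Finally I must handle the ``$\setminus \{(x,h,w)\}$'' subtraction cleanly: points of the box itself are excluded from $\diagl$, but that is fine since in case (iii) I only invoke $\diagl$ when $y$ is outside the box. The main obstacle is getting the cone inclusions/exclusions exactly right — keeping track of which corner of the box ($x$, $x+w e_1$, or $x+(h+1)e_2$) serves as the apex of each cone, and making sure the non-expansion arguments use a genuinely weakly-upward reference point so that the contradiction goes through. I would lean on Lemma~\ref{lem:cb_up} (all box points are in $\Up(f)$) to provide such reference points when needed, and on monotonicity plus the strict-downward conditions of Definition~\ref{def:cb} for the boundary arguments.
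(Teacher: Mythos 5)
Your overall strategy — partition the complement of $\CBox(x,h,w)$ by position relative to $x$ and rule out each piece via contraction/monotonicity — is the same shape of argument as the paper's (which instead names the excluded pieces $C1,C2,M1,M2,M3$ up front and kills each one). However, there is a genuine gap in your coverage.

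You never handle the lower-left diagonal quadrant $y_1 < x_1$ and $y_2 < x_2$. In your case (i) you explicitly defer $y_2 < x_2$ to case (ii), and in case (ii) you only rule out $y_1 > x_1 + w$ and then assert the conclusion $x_1 \le y_1 \le x_1 + w$; the lower bound $y_1 \ge x_1$ is never established. A point with $y_1 < x_1$ and $y_2 < x_2$ is not in the box, nor in $\leftl$ (needs $y_2 \ge x_2$), nor in $\downl$ (needs $y_1 \ge x_1$), nor in $\diagl$ (any point of $\UC_2(x + w\cdot e_1)$ has $y_2 \ge x_2$). Ruling this region out is exactly what condition~\ref{itm:cb4} of Definition~\ref{def:cb} is for: the point $x - (1,1)$ moves strictly down in dimension~$3$, and by Lemma~\ref{lem:monodown} so does every point $y$ with $y \le x - (1,1)$, hence no such $y$ is in $\Up(f)$. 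Your proposal never invokes condition~\ref{itm:cb4} or mentions dimension~$3$ in the left/bottom cases, so this argument is simply absent — this is one of the three excluded regions in the paper's proof (its $M3$) and cannot be skipped.

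A secondary problem is in case (iii): to show $y\in\UC_2(x + w\cdot e_1)$ you propose to derive a non-expansion violation between $y$ and $x + w\cdot e_1$, but both of those points move weakly upward in dimension~$1$ (the former because $y\in\Up(f)$, the latter by condition~2 of Definition~\ref{def:cb}), so there is no contradiction to be had between them. The reference points that actually bite are the strictly-downward ones, $x + (w+1)\cdot e_1$ (dimension~1) and $x + (h+1)\cdot e_2$ (dimension~2): combining Lemma~\ref{lem:contrdown} applied to those points with Lemma~\ref{lem:monodown} is what excludes the region to the upper-right of the box that is not in $\diagl$.
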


We will prove this by showing that any point not in these four sets cannot be
in $\Up(f)$. 
We first show that contraction implies that if $x$ points downwards in dimension $i$, then any point in
the dimension $i$ upward cone of $x$ must also points downwards. The above property holds for both weak and strict inequalities. 
We also show the symmetric property for the downwards cone, which we will use
later.

\begin{lemma}
\label{lem:contrdown}
Let $x \in G$ be a point. 
\begin{itemize}
\item \textit{(strict)} If $f_i(x) < x_i$, then for all $y \in \UC_i(x)$
we have $f_i(y) < y_i$.
\item \textit{(weak)} If $f_i(x) \leq x_i$, then for all $y \in \UC_i(x)$
we have $f_i(y) \leq y_i$.
\item \textit{(strict)}  If $f_i(x) > x_i$, then for all $y \in \DC_i(x)$ we have $f_i(y) > y_i$.
\item \textit{(weak)}  If $f_i(x) \geq x_i$, then for all $y \in \DC_i(x)$ we have $f_i(y) \geq y_i$.
\end{itemize}
\end{lemma}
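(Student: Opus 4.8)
The plan is to prove all four statements of Lemma~\ref{lem:contrdown} by essentially the same contraction argument, and then note that the four cases are related to each other by symmetry (flipping inequalities, and exchanging the roles of $\UC_i$ and $\DC_i$), so that only one of them needs a full argument.

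First I would prove the first (strict) statement. Suppose $f_i(x) < x_i$, so $f_i(x) \le x_i - 1$ since displacements are integral. Let $y \in \UC_i(x)$, and suppose for contradiction that $f_i(y) \ge y_i$. By definition of the upward cone, $y_i - x_i \ge |y_j - x_j|$ for all $j \ne i$, so $\Norm{y - x}_\infty = y_i - x_i =: r \ge 0$. Then
\begin{equation*}
f_i(y) - f_i(x) \ge y_i - (x_i - 1) = r + 1 > r = \Norm{y - x}_\infty,
\end{equation*}
so $\Norm{f(y) - f(x)}_\infty > \Norm{y - x}_\infty$, which is a \ref{DV2} violation of non-expansion, contradicting the fact that $f$ is violation-free. (If $r = 0$ then $y = x$ and the claim is immediate, so we may assume $r \ge 1$; in any case the displayed inequality still gives the contradiction.) Hence $f_i(y) < y_i$, as desired.

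For the second (weak) statement, the argument is identical except we only know $f_i(x) \le x_i$, and we suppose for contradiction that $f_i(y) > y_i$, i.e.\ $f_i(y) \ge y_i + 1$; then $f_i(y) - f_i(x) \ge y_i + 1 - x_i = r + 1 > r = \Norm{y-x}_\infty$, again a non-expansion violation. The third and fourth statements are the mirror images: if $f_i(x) > x_i$ and $y \in \DC_i(x)$, then $x_i - y_i \ge |y_j - x_j|$ for $j \ne i$, so $\Norm{y-x}_\infty = x_i - y_i =: r$, and assuming $f_i(y) \le y_i$ gives $f_i(x) - f_i(y) \ge (x_i + 1) - y_i = r + 1 > r$, a contradiction; the fourth is the weak version of the same. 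Alternatively, one can observe that replacing $f$ by the coordinatewise-reflected instance $\bar f$ defined on the reflected grid (with $\bar f_i(\bar x) = $ reflection of $f_i(x)$) turns $\DC_i$ into $\UC_i$ and flips all the inequalities, so the third and fourth statements follow formally from the first and second; since the excerpt's reduction framework routinely uses such reflections, I would state this and move on.

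I do not expect any real obstacle here: the only subtlety is the bookkeeping with integral displacements (so that ``$<$'' upgrades to ``$\le x_i - 1$'' and ``$>$'' upgrades to ``$\ge y_i + 1$''), which is exactly what makes the strict and weak versions differ by one in the key inequality, and the routine check that a point in the cone realizes its $\ell_\infty$-distance to $x$ in coordinate $i$. The main thing to be careful about is simply to handle the degenerate case $y = x$ (where $r = 0$) so that the statements are literally true at the apex of the cone.
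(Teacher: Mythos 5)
Your proof is correct and follows essentially the same contraction-contradiction argument as the paper: assume the opposite displacement at a cone point $y$, note that $y$ realizes its $\ell_\infty$-distance to $x$ in coordinate $i$, and derive a \ref{DV2} violation of non-expansion. The only cosmetic differences are that you invoke integrality of displacements to upgrade strict inequalities by one unit (the paper's version works directly since $f_i(y)-f_i(x)\ge y_i-f_i(x)>y_i-x_i$ without rounding), and you explicitly separate out the degenerate apex case $y=x$, which the paper glosses over but which is immediate.
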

\begin{proof}
We begin with the proof of the claim regarding upward cones.
Since $y \in \UC_i(x)$ we have that $\| x - y \|_\infty = y_i - x_i$. 

We use two different assumptions on $f_i(y)$, depending on the value of $f_i(x)$ relative to $x_i$. If $f_i(x) < x_i$, we set $f_i(y) \geq y_i$ and if $f_i(x) \leq x_i$, then $f_i(y) > y_i$. In both cases, we would have
\begin{align*}
\| f(y) - f(x) \|_\infty &\ge f_i(y) - f_i(x) \\
&> y_i - x_i \\
&= \| y - x \|_\infty,
\end{align*} 
where the strict inequalities follow from the assumption on $f_i(x), f_i(y)$. Thus we have a violation of strict contraction, which is a contradiction.

We prove the downward cone case in a symmetric manner. Since $y \in \DC_i(x)$ we have that $\| x - y \|_\infty = x_i - y_i$. 

Similarly, the assumptions are as follows: if $f_i(x) > x_i$, then we assume $f_i(y) \leq y_i$, and if $f_i(x) \geq x_i$, we set $f_i(y) < x_i$. Then, in both cases, 
\begin{align*}
\| f(x) - f(y) \|_\infty &\ge f_i(x) - f_i(y) \\
&> x_i - y_i \\
&= \| x - y \|_\infty,
\end{align*}
which is a violation of strict contraction. 

\end{proof}

This immediately implies that $\Up(f)$ cannot lie in the regions labelled C1
and C2 in Figure~\ref{fig:up_regions}. Specifically, since $x + (w+1) \cdot e_1$ is
required to move strictly leftwards in dimension $1$, Lemma~\ref{lem:contrdown}
implies that every point in C1 must also move strictly leftwards, and thus no
point in the set can lie in $\Up(f)$. The same property holds symmetrically for
C2.

Next we show that monotonicity implies that if $x$ moves downwards in dimension
$i$, either strictly or weakly, then any point directly beneath $x$ in some dimension $j \ne i$ must also
follow the same direction in dimension $i$. We also show the symmetric counterpart of this,
since it will be useful later.
\begin{lemma}
\label{lem:monodown}
Let $x \in G$ be a point.
\begin{itemize}
\item (strict) If $f_i(x) < x_i$, and $y$ is a point such that $y \le x$ and $y_i = x_i$, we have $f_i(y) < y_i$.
\item (weak) If $f_i(x) \leq x_i$, and $y$ is a point such that $y \le x$ and $y_i = x_i$, we have $f_i(y) \leq y_i$.
\item (strict) If $f_i(x) > x_i$, and $y$ is a point such that $y \ge x$ and $y_i = x_i$, we have $f_i(y) > y_i$.
\item (weak) If $f_i(x) \geq x_i$, and $y$ is a point such that $y \ge x$ and $y_i = x_i$, we have $f_i(y) \geq y_i$.
\end{itemize}
\end{lemma}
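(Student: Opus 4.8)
\textbf{Proof proposal for Lemma~\ref{lem:monodown}.}

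The plan is to prove the four statements by direct application of monotonicity, handling the strict and weak variants uniformly and deriving the downward-cone-style (really, downward-face) statements from the corresponding upward ones by the standard flip of all dimensions. First I would observe that, since all four claims are structurally identical up to relabelling, it suffices to prove carefully the first (strict) claim, and then note that the weak claim is obtained by replacing each $<$ with $\le$ verbatim in the same argument, and that the last two claims follow by applying the first two to the instance $\bar f$ obtained by negating every coordinate (i.e.\ reflecting $G$ through its centre), under which ``$f_i(x) > x_i$'' becomes ``$\bar f_i(\bar x) < \bar x_i$'', ``$y \ge x$'' becomes ``$\bar y \le \bar x$'', and ``$y_i = x_i$'' is preserved; monotonicity of $f$ is equivalent to monotonicity of $\bar f$, so the reflected statement transports back.

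For the core claim, suppose $f_i(x) < x_i$ and $y \le x$ with $y_i = x_i$. Since $y \le x$ and the instance is violation-free, monotonicity of $f$ gives $f(y) \le f(x)$, and in particular $f_i(y) \le f_i(x)$. Combining this with the hypothesis $f_i(x) < x_i$ and the hypothesis $y_i = x_i$, we get
\[
f_i(y) \le f_i(x) < x_i = y_i,
\]
so $f_i(y) < y_i$, as required. For the weak version the same chain reads $f_i(y) \le f_i(x) \le x_i = y_i$. This is genuinely routine; there is no real obstacle here, since the lemma is just monotonicity applied along a coordinate face.

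The one point that warrants a sentence of care is that monotonicity as used here is the global monotonicity of $f$ on $G$, guaranteed by the standing assumption (made at the start of Section~\ref{sec:dmac_algo}) that $f$ is a violation-free \DMAC instance: a \ref{DV1} violation is exactly a pair $y \le x$ with $f(y) \not\le f(x)$, so its absence is precisely the inequality $f(y) \le f(x)$ used above. Thus the only subtlety — if it can be called that — is invoking the correct hypothesis rather than the pointwise structure of any particular slice; the geometric picture (``the face of points directly below $x$ that agree with $x$ in coordinate $i$ all move in the same $i$-direction as $x$'') is an immediate reading of the displayed inequality. The mild bookkeeping of the reflection argument for the two $\DC$/upward statements is the closest thing to an obstacle, and it is purely notational.
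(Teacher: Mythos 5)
Your proof is correct and uses the same underlying idea as the paper's: apply monotonicity of $f$ (the absence of a \ref{DV1} violation) to the comparable pair, restrict to coordinate $i$, and chain with $y_i = x_i$. The paper phrases it as a proof by contradiction and writes out the symmetric cases directly rather than via a reflection argument, but these are cosmetic differences.
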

\begin{proof}
We begin with the first case. We have $y \le x$, so if we had $f_i(y)
\ge y_i$ then we would have $f_i(y) \ge y_i = x_i 
> f_i(x)$, which is a violation of monotonicity.

Similarly, for the second case, if we had $f_i(y) > y_i$, we would get $f_i(y) > y_i = x_i \geq f_i(x)$, which violates monotonicity.

The third and fourth cases are proved symmetrically. 
We have $y \ge x$, so in the third case, if we had $f_i(y)
\le y_i$ then we would have $f_i(y) \le y_i = x_i 
< f_i(x)$, which is a violation of monotonicity.

For the fourth case, assuming $f_i(y) < y_i$ we would get $f_i(y) < y_i = x_i \leq f_i (x)$, which violates monotonicity.
\end{proof}

This now rules out the remaining regions. Specifically, M1 cannot contain a
point in $\Up(f)$ because each point in M1 lies directly beneath a point in C1.
We have already proved that each point in C1 must move strictly leftwards, and
therefore by Lemma~\ref{lem:monodown} each point in M1 must also move strictly
leftwards. The same property holds symmetrically for the region M2. Finally,
for the region M3, note that $x - e_1 - e_2$ moves strictly downward in the third
dimension, and that every point in M3 lies strictly beneath $x - e_1 - e_2$.
Therefore we can apply Lemma~\ref{lem:monodown} to argue that all point in M3
move strictly downward in dimension 3. 
This completes the proof of Lemma~\ref{lem:cbbound}.

\subsection{Almost Square Critical Boxes}
\label{sec:almostsquare}

Recall that a critical box $(x, h, w)$ is not required to be square, meaning
that $h$ and $w$ can differ significantly. While any critical box provides a
bound on the up set via Lemma~\ref{lem:cbbound}, our algorithm will crucially
use the fact that an \emph{almost square}\footnote{In fact, if we were
considering continuous monotone contraction maps, rather than $\DMAC$, the
techniques that we use here can be used to show that an exactly square critical box
always exists. Our weakening to consider almost square critical boxes is
required here only because we are working with the discrete problem.} critical box
always exists, which we will show by proving the following lemma.

\begin{lemma}
\label{lem:cb_shape}
If $\Up(f) \cap s$ is non-empty then there exists a critical box $(x, h, w)$
with $|h - w| \le 1$.
\end{lemma}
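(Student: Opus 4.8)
The plan is to start from some point $p \in \Up(f) \cap s$ and construct an almost square critical box by a two-phase argument. First I would establish the existence of \emph{some} critical box: starting from $p$, slide leftward and downward to find the ``lowest-left'' corner of a maximal up-region. Concretely, since boundary points move strictly inward (our preprocessing assumption), moving right from $p$ we must eventually hit a point that moves strictly down in dimension 1, and similarly moving up we must hit a point that moves strictly down in dimension 2; and by the assumption that $\Up(f)$ only touches the lower boundary when it touches the dimension-3 boundary, together with Lemma~\ref{lem:monodown}/\ref{lem:contrdown}, we can locate a corner $x$ satisfying condition~\ref{itm:cb4}. This gives a critical box, but with no control on $|h-w|$.

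The core of the argument is a \emph{continuity / discrete intermediate value} step: I would consider the family of critical boxes obtained by translating the corner $x$ diagonally, i.e. look at how $h$ and $w$ vary as we move the bottom-left corner along a diagonal line $x + t\cdot(1,1)$, or more robustly, set up a monovariant. Define, for a candidate corner $y \in \Up(f) \cap s$, the width $w(y)$ as the largest $a$ with $f_1(y + a e_1) \ge y_1 + a$ and the height $h(y)$ as the largest $b$ with $f_2(y + b e_2) \ge y_2 + b$. The key observations are: (i) as we move the corner rightward by one step (when still legal), $w$ decreases by at least $1$ and $h$ changes by at most... — here I expect the surfaces (Theorem~\ref{thm:surfaces}, monotone with gradient $\le 1$) to be the right tool, since the width and height of the up-region are controlled by where the dimension-1 and dimension-2 surfaces sit, and these surfaces have gradient at most one, so $h$ and $w$ each change by at most one as the corner moves diagonally by one step. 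Then the quantity $h(y) - w(y)$ changes by at most $2$ per diagonal step — actually I would want to arrange the walk so it changes by at most $1$ per step, by alternating a rightward move (which shrinks $w$ by $\ge 1$, grows $h$ by $\le 1$) with an upward move (symmetric), or by moving the corner along a staircase. Starting from a box that is ``too wide'' ($w > h+1$) and walking toward one that is ``too tall'' (or to the degenerate box at the tip where $h = w = 0$ as the up-region pinches out near where the surfaces cross), the discrete intermediate value theorem forces some intermediate corner with $|h - w| \le 1$. At that corner, conditions 1--3 of Definition~\ref{def:cb} hold by construction of $w(\cdot), h(\cdot)$, and condition~\ref{itm:cb4} needs to be re-verified — this is where I'd need to be careful that the walk stays among points whose diagonal-below neighbour moves strictly down in dimension 3 (or which lie on the left/bottom boundary), which should follow from Lemma~\ref{lem:contrdown}/\ref{lem:monodown} applied in dimension 3 since we only ever move the corner up and to the right.

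The main obstacle, I expect, is making the ``walk'' rigorous while simultaneously preserving all four critical-box conditions: moving the corner can break condition~\ref{itm:cb4} (the diagonal-below point might now move weakly up in dimension 3), or the natural $w(y), h(y)$ might jump discontinuously when the corner crosses a place where the boundary of the up-region turns. The cleanest fix is probably to phrase everything in terms of the three surfaces rather than in terms of $f$ directly: within the slice $s$, the set $\Up(f) \cap s$ is exactly the region below the dimension-1 surface, left of the dimension-2 surface, and above the dimension-3 surface (by 1D-rationality and Lemma~\ref{lem:contrdown}), and each surface is monotone with gradient $\le 1$. A critical box is then a maximal axis-aligned box fitting under the dim-1 surface and left of the dim-2 surface with its bottom-left corner pinned to the dim-3 surface (or the boundary). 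In this picture the gradient-$\le 1$ property of the dim-1 and dim-2 surfaces directly bounds how fast $h$ and $w$ can change as we slide the pinned corner along the dim-3 surface, and the intermediate value argument goes through on that one-dimensional family. I would present the proof in that surface-based language, deferring the pure-$f$ restatement to the earlier lemmas.
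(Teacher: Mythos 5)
Your proposal is essentially the paper's proof: the paper walks the candidate bottom-left corner along the staircase $\mathcal{B}$ (exactly your ``dim-3 surface or boundary''), defines $\hei_1,\hei_2$ as your $w(\cdot),h(\cdot)$, shows each changes by at most one per staircase step (Lemma~\ref{lem:dist_adj}), shows $\Up(f)\cap\mathcal{B}$ is a nonempty contiguous sub-path (Lemmas~\ref{lem:bup},~\ref{lem:bcontig}), establishes that $\hei_1\geq\hei_2$ at one end and $\hei_2\geq\hei_1$ at the other (Lemma~\ref{lem:order_dist}), and concludes with a discrete intermediate-value step; your worry about preserving condition~\ref{itm:cb4} is handled exactly as you guessed, by confining the walk to $\mathcal{B}$. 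The one point you didn't fully resolve -- why the difference $h-w$ can't jump over $\pm1$ at the crossing even though it can move by $2$ per step -- is settled in the paper not by reparametrizing the walk but by combining the exact $\pm1$ change in one height with the $\leq 1$ change in the other at the sign-change step.
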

Note that the precondition of 
$\Up(f) \cap s$ being non-empty is required, because if 
$\Up(f) \cap s$ is empty, then no critical boxes exist. 
The rest of this section is dedicated to proving Lemma~\ref{lem:cb_shape}.

\paragraph{\bf The boundary set $\mathcal{B}$.}

We begin by defining the set $\mathcal{B}$, which captures the boundary between
the points that move upward in the third dimension and the rest of the slice.

\greenBoundary

\begin{definition}
We define $\mathcal{B}$ to contain all points $x$ in the slice $s$ such that
$f_3(x) \ge x_3$ and either
\begin{itemize}
\item $x_1 = 0$ or $x_2 = 0$, or
\item $f_3(x - e_1 - e_2) - x_3 < 0$
\end{itemize}
\end{definition}

Observe that $\mathcal{B}$ contains exactly the set of points for which
dimension 3 moves upward, as required by point~\ref{itm:cb1} of
Definition~\ref{def:cb}, and for which point~\ref{itm:cb4} in
Definition~\ref{def:cb} is satisfied. Thus all critical boxes $(x, h, w)$ must
have $x \in \mathcal{B}$.

Furthermore, Lemma~\ref{lem:monodown} tells us that if $f_3(x) > x_3$ and if $y
\le x$ then $f_3(y) > y_3$. So the set $\mathcal{D} = \{ x \in s \; : \; f_3(x) > x_3\}$
is downward closed. Since $\mathcal{B}$ contains points not in $\mathcal{D}$ that
are diagonally above a point in $\mathcal{D}$, along with the points on the
boundary that are not in $\mathcal{D}$, it must therefore trace a path of
points that moves from the top or left side of the slice to the bottom or right
side, as shown in the three examples given in Figure~\ref{fig:green_boundary}.  

Specifically, we can represent $\mathcal{B}$ as a sequence of points $x^1, x^2,
\dots, x^k$ such that the following hold.
\begin{itemize}
\item $x^1_2 = n$, 
\item $x^k_1 = n$,  and
\item For each $i$ we have $x^{i+1} = x^i + e_1$ or 
$x^{i+1} = x^i - e_2$.
\end{itemize}

\paragraph{\bf Heights.}

For each $x \in \mathcal{B}$ we define the following \emph{height} functions.
\begin{align*}
\hei_1(x) &= \min_{w} \{ f_1(x+ w \cdot e_1) < x_1 + w \},\\
\hei_2(x) &= \min_{h} \{ f_2(x+ h \cdot e_2) < x_2 + h \}.
\end{align*}
Note that the height functions are always well defined, due to our assumption
that $f_i(x) < x_i$ whenever $x_i = n$, so the minimums are not taken over
empty sets. 

Observe that if $w = \hei_1(x) > 0$ and $h = \hei_2(x) > 0$ then $(x, h, w)$ meets
all of the requirements of a critical box. So to prove Lemma~\ref{lem:cb_shape}
it is sufficient to find a point $x$ where both heights are positive, and where
$|\hei_1(x) - \hei_2(x)| \le 1$.

The next lemma analyses how the heights change as we walk along the path
defined by $\mathcal{B}$. In particular, it proves that both height functions
change by at most 1 as we take a step along the path defined by $\mathcal{B}$. 

\begin{lemma}\label{lem:dist_adj}
Let $x$ be a point such that 
$\hei_1(x) > 0$ and 
$\hei_2(x) > 0$.
\begin{enumerate}
\item If $y = x+ e_1$, then both of the following hold.
\begin{align*}
\hei_1(y) &= \hei_1(x) - 1 \\
\hei_2(x) + 1 \geq \hei_2(y) &\geq \hei_2(x)
\end{align*}
\item If $y = x-e_2$, then both of the following hold.
\begin{align*}
\hei_1(x) - 1 \le \hei_1(y) &\leq \hei_1(x) \\
\hei_2(y) &=\hei_2(x) + 1
\end{align*}
\end{enumerate}
\end{lemma}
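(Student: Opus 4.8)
The plan is to prove Lemma~\ref{lem:dist_adj} by direct case analysis on the two types of steps along the path defined by $\mathcal{B}$, using only the contraction property (via Lemma~\ref{lem:contrdown}) and the definitions of $\hei_1$ and $\hei_2$. The four assertions are structurally symmetric: the behavior of $\hei_1$ under $y = x + e_1$ mirrors that of $\hei_2$ under $y = x - e_2$, and the ``orthogonal'' bounds ($\hei_2$ under $y = x + e_1$, and $\hei_1$ under $y = x - e_2$) form the second symmetric pair. So I would really only argue two claims carefully and then invoke symmetry (exchanging the roles of dimensions 1 and 2, and reflecting) for the other two.

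First I would handle the exact equality $\hei_1(x + e_1) = \hei_1(x) - 1$. Write $w = \hei_1(x)$, so by definition $f_1(x + w \cdot e_1) < x_1 + w$ and $f_1(x + a \cdot e_1) \ge x_1 + a$ for all $0 \le a < w$. Setting $y = x + e_1$, we have $y + (w-1)\cdot e_1 = x + w \cdot e_1$, so $f_1(y + (w-1)\cdot e_1) < y_1 + (w-1)$, giving $\hei_1(y) \le w - 1$. For the reverse inequality, suppose $\hei_1(y) \le w - 2$; then there is some $0 \le b \le w-2$ with $f_1(y + b\cdot e_1) < y_1 + b$, i.e.\ $f_1(x + (b+1)\cdot e_1) < x_1 + (b+1)$ with $b + 1 \le w - 1 < w$, contradicting minimality of $w = \hei_1(x)$. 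Hence $\hei_1(y) = w - 1$. This is essentially bookkeeping with the definition.

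Next I would prove the ``orthogonal'' bound $\hei_2(x) \le \hei_2(x + e_1) \le \hei_2(x) + 1$. Let $h = \hei_2(x)$, so $f_2(x + h \cdot e_2) < x_2 + h$ while $f_2(x + b \cdot e_2) \ge x_2 + b$ for $b < h$. For the upper bound: the point $x + e_1 + h \cdot e_2$ lies in $\UC_2(x + h \cdot e_2)$? No --- it is offset in dimension 1, not dimension 2. Instead I would use Lemma~\ref{lem:contrdown} more carefully: I want to show $f_2$ moves strictly down somewhere within $h+1$ steps above $y = x + e_1$. Consider $x + (h+1)\cdot e_2$; this point lies in $\UC_2(x + h\cdot e_2)$ since increasing the dimension-2 coordinate while fixing dimension 1 keeps us in the upward cone. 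By Lemma~\ref{lem:contrdown}(strict), $f_2(x + (h+1)\cdot e_2) < x_2 + (h+1)$. Now $y + (h+1)\cdot e_2 = x + e_1 + (h+1)\cdot e_2$; this point lies in $\UC_2(x + (h+1)\cdot e_2)$ (offset by $+e_1$, and $(h+1) - (h+1) = 0 \ge |1|$? no, $0 \not\ge 1$). The cleaner route: apply the contraction/cone argument directly --- $y + (h+1)\cdot e_2$ has dimension-2 displacement from $x + h\cdot e_2$ equal to $h+1$, with dimension-1 offset $1 \le h+1$, so it is in $\UC_2(x + h\cdot e_2)$, and strict Lemma~\ref{lem:contrdown} gives $f_2 < $ coordinate there, so $\hei_2(y) \le h+1$. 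For the lower bound $\hei_2(y) \ge h$: if some $b < h$ had $f_2(y + b\cdot e_2) < y_2 + b$, then since $x + b\cdot e_2 \in \DC_2(y + b\cdot e_2)$ (it is offset by $-e_1$, with dimension-2 offset $0$), strict Lemma~\ref{lem:contrdown}(downward-cone) would force $f_2(x + b\cdot e_2) < x_2 + b$ with $b < h$, contradicting minimality of $h$. (Here I need to double-check the cone membership directions; $x + b\cdot e_2 = (y + b\cdot e_2) - e_1$ has $\DC_2$-defining inequality $y_2 + b - (x_2 + b) = 0 \ge |{-1}| = 1$, which fails, so I'd instead use the weak/strict monotonicity-free contraction argument directly on the pair, or observe $x + b e_2$ and $y + b e_2$ differ only in dimension 1 so if $f_2$ strictly decreases at one it must at the other by contraction with distance 1 --- this needs care and is the spot to be precise.)

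The main obstacle I anticipate is exactly this last kind of ``orthogonal'' cone-membership check: making sure the $\ell_\infty$ distances and the $\UC_i/\DC_i$ defining inequalities line up so that Lemma~\ref{lem:contrdown} (and occasionally a bare contraction argument between two points at distance $1$) applies with the right strictness. Once one of the four cases is nailed down rigorously, the others follow by swapping dimensions $1 \leftrightarrow 2$ and reflecting $e_1 \leftrightarrow -e_2$, so the remaining work is mechanical. I would write out the $\hei_1$-under-$+e_1$ and the $\hei_2$-under-$+e_1$ cases in full, then state that cases~2 are obtained by the symmetry that exchanges the two slice dimensions and negates the dimension-2 direction, under which a step $x \mapsto x + e_1$ becomes $x \mapsto x - e_2$ and the roles of $\hei_1$ and $\hei_2$ swap.
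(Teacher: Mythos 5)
Your bookkeeping argument for $\hei_1(x+e_1) = \hei_1(x)-1$ and your cone argument for the upper bound $\hei_2(x+e_1) \le \hei_2(x)+1$ match the paper's approach. (You do have an arithmetic slip in the upper bound: the displacement from $x + \hei_2(x)\cdot e_2$ to $(x+e_1) + (\hei_2(x)+1)\cdot e_2$ is $e_1+e_2$, so its dimension-2 component is $1$, not $\hei_2(x)+1$; membership in $\UC_2$ still holds since $1 \ge |1|$, so your conclusion survives.)

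The genuine gap is in the lower bound $\hei_2(x+e_1) \ge \hei_2(x)$. You correctly observe that $x + b\cdot e_2 \notin \DC_2(y + b\cdot e_2)$, and you are also right to be suspicious of the bare-contraction fallback you sketch: if $f_2(y + b\cdot e_2) < y_2 + b$ and one hypothesized $f_2(x + b\cdot e_2) \ge x_2 + b = y_2 + b$, the gap $f_2(x+b\cdot e_2) - f_2(y+b\cdot e_2)$ could equal exactly $1 = \Normi{(x+b\cdot e_2)-(y+b\cdot e_2)}$, which is not an expansion violation, so contraction alone gets you nothing. The missing tool is monotonicity, not contraction: since $x + b\cdot e_2 \le y + b\cdot e_2$ and the two points agree in dimension $2$, the strict case of Lemma~\ref{lem:monodown} gives $f_2(x + b\cdot e_2) < x_2 + b$ directly, contradicting minimality of $\hei_2(x)$. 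This is exactly what the paper does, and symmetrically the bound $\hei_1(x - e_2) \le \hei_1(x)$ in case~2 also needs Lemma~\ref{lem:monodown} rather than a cone argument. So two of the four inequalities genuinely require monotonicity, and your stated plan to rely only on Lemma~\ref{lem:contrdown} cannot be completed as written.
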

\begin{proof}
We start with the first case, in which 
$y = x + e_1$.
\begin{itemize}
\item We have
\begin{align*}
\hei_1(y) &= \min_{w} \{ f_1(y+ w \cdot e_1) < y_1 + w \},\\
&= \min_{w} \{ f_1(x + (w+1) \cdot e_1) < x_1 + w + 1 \},\\
&= \hei_1(x) - 1.
\end{align*}

\item Let $x' = x + \hei_2(x) \cdot e_2$, and observe that by definition we
have $f_2(x') < x'_2$. Let $y' = y + (\hei_2(x) + 1) \cdot e_2$, and note that
$y' \in \UC_2(x')$. Hence we can apply Lemma~\ref{lem:contrdown} to argue that
$f_2(y') < y'_2$, which means that $\hei_2(y) \le \hei_2(x) + 1$.

\item Suppose for the sake of contradiction that $\hei_2(y) < \hei_2(x)$. Let
$z = y + \hei_2(y) \cdot e_2$, and observe that by definition $f_2(z) < z_2$.
Consider the point $z' = x + \hei_2(y) \cdot e_2$, and observe that since $x =
y - e_1$ we have $z' = z - e_1$ and thus $z' < z$. 
Hence Lemma~\ref{lem:monodown} implies $f_2(z') < z'_2$, which then implies
$\hei_2(x) \le \hei_2(y)$, giving a contradiction.
\end{itemize}

The second case, where $y = x - e_2$, is proved essentially symmetrically.
\begin{itemize}
\item 
We have
\begin{align*}
\hei_2(y) &= \min_{w} \{ f_2(y+ w \cdot e_2) < y_2 + w \},\\
&= \min_{w} \{ f_2(x + (w-1) \cdot e_2) < x_2 + w - 1 \},\\
&= \hei_2(x) + 1.
\end{align*}

\item 
Let $y' = y + \hei_1(y) \cdot e_1$, and observe that
by definition we have $f_1(y') < y'_1$. Let $x' = x + (\hei_1(y) + 1) \cdot
e_1$, and note that $x' \in \UC_1(y')$. 
Hence we can apply Lemma~\ref{lem:contrdown} to argue that
$f_1(x') < x'_1$, which means that $\hei_1(x) \le \hei_1(y) + 1$, and therefore
$\hei_1(x) - 1 \le \hei_1(y)$.

\item 
Suppose for the sake of contradiction that $\hei_1(y) > \hei_1(x)$. Let
$z = x + \hei_1(x) \cdot e_1$, and observe that by definition $f_1(z) < z_1$.
Consider the point $z' = y + \hei_1(x) \cdot e_1$, and observe that since $x =
y + e_2$ we have $z' = z - e_2$ and thus $z' < z$. 
Hence Lemma~\ref{lem:monodown} implies $f_1(z') < z'_1$, which then implies
$\hei_1(y) \le \hei_1(x)$, giving a contradiction.
\end{itemize}
\end{proof}

\paragraph{\bf The structure of $\Up(f) \cap \mathcal{B}$.}

As mentioned, we are seeking a point in $\mathcal{B}$ that is in $\Up(f)$ whose
heights differ by at most 1. The first thing we prove is that our assumption
that $\Up(f) \cap s$ is non-empty implies that there is at least one point in
$\Up(f) \cap \mathcal{B}$.

\begin{lemma}
\label{lem:bup}
If $\Up(f) \cap s$ is non-empty, then $\Up(f) \cap \mathcal{B}$ is non-empty. 
\end{lemma}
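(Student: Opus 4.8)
The plan is to take an arbitrary point $p \in \Up(f) \cap s$ and to construct a point in $\Up(f) \cap \mathcal{B}$ by walking downward and leftward along the third-dimension direction structure until we hit the boundary set $\mathcal{B}$. Concretely, first I would use monotonicity (Lemma~\ref{lem:monodown}) to observe that since $p \in \Up(f)$, in particular $f_3(p) \ge p_3$, the set $\mathcal{D} = \{x \in s : f_3(x) > x_3\}$ is downward closed, and $\mathcal{B}$ is precisely the ``upper-right frontier'' of the complement of $\mathcal{D}$ together with the boundary points where $f_3$ moves weakly up. So $\mathcal{B}$ separates the points moving strictly down in dimension~$3$ from the rest. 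Since $p$ itself has $f_3(p) \ge p_3$, either $p$ already satisfies the second defining condition of $\mathcal{B}$ (i.e.\ $f_3(p - e_1 - e_2) < p_3$, or $p$ is on the left/bottom boundary), in which case $p \in \mathcal{B}$ and we are done, or $f_3(p - e_1 - e_2) \ge p_3$, in which case I would move to the point $p' = p - e_1 - e_2$.

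The key claim to establish is that this diagonal descent stays inside $\Up(f)$. For dimension~$3$: by construction $f_3(p') \ge p'_3$, so dimension~$3$ moves weakly up at $p'$. For dimensions~$1$ and~$2$: since $p \in \Up(f)$ we have $f_1(p) \ge p_1$ and $f_2(p) \ge p_2$, and I would apply Lemma~\ref{lem:contrdown} (the weak downward-cone statement) — the point $p' = p - (e_1+e_2)$ lies in $\DC_1(p)$ and in $\DC_2(p)$, so $f_1(p') \ge p'_1$ and $f_2(p') \ge p'_2$. Hence $p' \in \Up(f)$. Iterating, each point in the descent sequence $p, p', p'', \dots$ remains in $\Up(f)$ and moves weakly up in all three dimensions; the sequence is strictly decreasing in both coordinates $1$ and $2$, so it must terminate after finitely many steps, and it can only terminate when the current point either lies on the left or bottom boundary of the slice (so it trivially satisfies the first bullet of the definition of $\mathcal{B}$) or satisfies $f_3(\text{current} - e_1 - e_2) < (\text{current})_3$ (the second bullet). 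In either case the terminal point is in $\mathcal{B}$, and since it is also in $\Up(f)$, we have exhibited a point in $\Up(f) \cap \mathcal{B}$.

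The main obstacle I anticipate is being careful about the exact form of the third-dimension condition in the definition of $\mathcal{B}$: the definition uses $f_3(x - e_1 - e_2) - x_3 < 0$ (a comparison against $x_3$, not $(x-e_1-e_2)_3$), and one must check that the descent step I take is consistent with this — namely, when $f_3(p - e_1 - e_2) \ge p_3$ I step to $p' = p - e_1 - e_2$, and I need $f_3(p') \ge p'_3$, which follows since $f_3(p') = f_3(p - e_1 - e_2) \ge p_3 > p_3 - 2 = p'_3$ (using unit displacement bounds this is comfortably satisfied, but one should also get the cleaner inequality $f_3(p') \ge p'_3$ directly from $f_3(p') \ge p_3 \ge p'_3$). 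A secondary subtlety is handling the case where the descent would leave the slice $s$ through a corner rather than an edge; but since we stop the moment the current point is on the left or bottom boundary, this is automatically covered — we never actually step off the grid. Everything else is a routine termination-by-monotone-descent argument combined with the two cone/monotonicity lemmas already proved.

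\begin{proof}
Let $p \in \Up(f) \cap s$. We construct a finite sequence of points $p = p^0, p^1, p^2, \dots, p^m$ in $s$, all lying in $\Up(f)$, with $p^m \in \mathcal{B}$.

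Given $p^t \in \Up(f) \cap s$, we have $f_3(p^t) \ge p^t_3$. If $p^t_1 = 0$ or $p^t_2 = 0$, then $p^t \in \mathcal{B}$ and we set $m = t$ and stop. Otherwise, if $f_3(p^t - e_1 - e_2) - p^t_3 < 0$, then again $p^t \in \mathcal{B}$ and we set $m = t$ and stop. In the remaining case we have $f_3(p^t - e_1 - e_2) \ge p^t_3$; we set $p^{t+1} = p^t - e_1 - e_2$.

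We check $p^{t+1} \in \Up(f)$. First, $f_3(p^{t+1}) = f_3(p^t - e_1 - e_2) \ge p^t_3 \ge p^{t+1}_3$, so dimension $3$ moves weakly up at $p^{t+1}$. Next, since $p^t \in \Up(f)$ we have $f_1(p^t) \ge p^t_1$; because $p^{t+1} = p^t - e_1 - e_2 \in \DC_1(p^t)$, the weak downward-cone case of Lemma~\ref{lem:contrdown} gives $f_1(p^{t+1}) \ge p^{t+1}_1$. Symmetrically, $p^{t+1} \in \DC_2(p^t)$ and $f_2(p^t) \ge p^t_2$ yield $f_2(p^{t+1}) \ge p^{t+1}_2$. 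Hence $p^{t+1} \in \Up(f)$.

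Each step strictly decreases both the first and second coordinates, so the process terminates after finitely many steps at some $p^m$. By the stopping rule, $p^m$ satisfies $f_3(p^m) \ge p^m_3$ and either $p^m_1 = 0$, or $p^m_2 = 0$, or $f_3(p^m - e_1 - e_2) - p^m_3 < 0$, so $p^m \in \mathcal{B}$. Since also $p^m \in \Up(f)$, we conclude $\Up(f) \cap \mathcal{B} \neq \emptyset$.
\end{proof}
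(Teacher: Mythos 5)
Your proof is correct and takes essentially the same approach as the paper: both move from an arbitrary point of $\Up(f) \cap s$ diagonally down-and-left until hitting $\mathcal{B}$, using the weak downward-cone case of Lemma~\ref{lem:contrdown} to preserve membership in $\Up(f)$ along the way. The paper finds the terminal point in one step by picking the right constant $c$, whereas you iterate, but the substance is identical.
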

\begin{proof}
Let $x$ be a point in $\Up(f) \cap s$. We will construct a point in 
$\Up(f) \cap \mathcal{B}$.

We first observe that $x \ge y$ for all points $y \in \mathcal{B}$. This is
because, for each non-boundary point $y \in \mathcal{B}$, the point $y - e_1 - e_2$ satisfies $f_3(y - e_1 - e_2) < y_3$, and by Lemma~\ref{lem:monodown} we can
therefore conclude that all points $y' < y$ also satisfy $f_3(y') < y'_3$,
meaning that $y' \not\in \Up(f)$. 

Since $x \ge y$ for all points $y \in \mathcal{B}$, and since $\mathcal{B}$ is
a path that spans the instance, there exists a non-negative constant $c$ such
that $x' = x - c \cdot e_1 - c \cdot e_2$ lies in $\mathcal{B}$. Observe that $x' \in
\DC_1(x)$ and $x' \in \DC_2(x)$ by construction. Thus, since $x \in \Up(f)$, we
can apply Lemma~\ref{lem:contrdown} to argue that $x' \in \Up(f)$, which
completes the proof.
\end{proof}

The next lemma shows that $\Up(f) \cap \mathcal{B}$ forms a contiguous sub-path
of the path defined by $\mathcal{B}$. In particular, this means that there is a
least index $l$ such that $x^l \in \Up(f)$ while $x^{l-1} \not\in \Up(f)$, that 
there is a greatest index $u$ such that 
$x^u \in \Up(f)$ while $x^{u+1} \not\in \Up(f)$, and that all points $x^i$ with
$l \le i \le u$ satisfy $x^i \in \Up(f)$. 

\begin{lemma}
\label{lem:bcontig}
Let $x^1, x^2, \dots, x^k$ be the path defined by $\mathcal{B}$. If there
exists a $j$ such that $x^j \in \Up(f)$,  then there exist indices $l$ and $u$
such that 
\begin{itemize}
\item If $l \le i \le u$ then $x^i \in \Up(f)$.
\item If $i < l$ or $i > u$ then $x^i \not\in \Up(f)$.
\end{itemize}
\end{lemma}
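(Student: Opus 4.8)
The plan is to show that $\Up(f) \cap \mathcal{B}$ is ``interval-like'' along the path, by proving a single key monotonicity-of-membership fact: as we walk along the path $x^1, x^2, \dots, x^k$, once we leave $\Up(f)$ we can never re-enter it. Given this, the indices $l$ and $u$ are simply the first and last indices $i$ with $x^i \in \Up(f)$, which exist by hypothesis, and the two bulleted conditions follow immediately.

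First I would recall that a point $y \in \mathcal{B}$ lies in $\Up(f)$ precisely when $f_1(y) \ge y_1$, $f_2(y) \ge y_2$, and $f_3(y) \ge y_3$; the third condition is automatic for every point of $\mathcal{B}$ by definition of $\mathcal{B}$. So membership in $\Up(f)$ along the path is governed entirely by the two height functions: $x^i \in \Up(f)$ if and only if $\hei_1(x^i) > 0$ and $\hei_2(x^i) > 0$ (using that $f_1(x^i) \ge x^i_1$ is equivalent to $\hei_1(x^i) \ge 1$, and symmetrically for dimension~2). The core of the argument is then to track how $\hei_1$ and $\hei_2$ evolve along the path using Lemma~\ref{lem:dist_adj}. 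Each step of the path is either $x^{i+1} = x^i + e_1$ (which decreases $\hei_1$ by exactly~$1$ and weakly increases $\hei_2$) or $x^{i+1} = x^i - e_2$ (which weakly decreases $\hei_1$ and increases $\hei_2$ by exactly~$1$). The crucial observation is that along the path $\hei_1$ is weakly \emph{decreasing} overall while $\hei_2$ is weakly \emph{increasing} overall, because every $+e_1$ step that drops $\hei_1$ never raises it, and every $-e_2$ step only weakly drops it; symmetrically for $\hei_2$.

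The main obstacle is that Lemma~\ref{lem:dist_adj} only describes the behaviour of the heights when \emph{both} heights are already positive, so strictly speaking we cannot directly apply it at a point where $x^i \notin \Up(f)$. I would handle this as follows: suppose for contradiction that there exist indices $a < b$ with $x^a \in \Up(f)$, $x^b \in \Up(f)$, but $x^c \notin \Up(f)$ for some $a < c < b$. Then either $\hei_1(x^c) = 0$ or $\hei_2(x^c) = 0$. Consider the case $\hei_1(x^c) = 0$; the other case is symmetric, swapping the roles of the two dimensions and the two directions of the path. Take the largest index $c'$ in $(a, c]$ at which $\hei_1$ was still positive at $x^{c'-1}$ — i.e. look at the first place $\hei_1$ hits $0$ after $a$. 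At that transition, the previous point had $\hei_1 > 0$ and $\hei_2 > 0$ (it is in $\Up(f)$, being at or after $a$ and before the drop), so Lemma~\ref{lem:dist_adj} applies and tells us the step was a $+e_1$ step dropping $\hei_1$ from $1$ to $0$. From that point onward, every subsequent $+e_1$ step keeps $\hei_1 = 0$ (the formula $\hei_1(x^i + e_1) = \hei_1(x^i) - 1$ would give $-1$, which is impossible; in fact once the point is not in $\Up(f)$ the relevant clamp means $f_1$ stays weakly left, so $\hei_1$ stays $0$ — I would phrase this via Lemma~\ref{lem:contrdown}: the point $x^{c'} + w\cdot e_1$ for the $w$ witnessing $\hei_1(x^{c'}) = 0$ has $f_1 < $ its coordinate, and all later path points lie in the dimension-1 upward cone of such a point, hence also move strictly left, so $\hei_1 = 0$ there too). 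A $-e_2$ step can only weakly decrease $\hei_1$, so it too keeps $\hei_1 = 0$. Hence $\hei_1(x^i) = 0$ for all $i \ge c'$, in particular $\hei_1(x^b) = 0$, contradicting $x^b \in \Up(f)$. This contradiction establishes that $\Up(f) \cap \mathcal{B}$ is contiguous, and setting $l = \min\{ i : x^i \in \Up(f)\}$ and $u = \max\{ i : x^i \in \Up(f)\}$ completes the proof.
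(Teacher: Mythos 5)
Your strategy via the height functions is genuinely different from the paper's proof. The paper takes $a < b < c$ with $x^a, x^c \in \Up(f)$ and shows $x^b \in \Up(f)$ directly: it writes $x^b = x^a + c_1 e_1 - c_2 e_2$, pushes $f_2 \ge$ from $x^a$ to $x^a + c_1 e_1$ by Lemma~\ref{lem:monodown} and then down to $x^b$ by non-expansion, and symmetrically pushes $f_1 \ge$ from $x^c$. Your route instead works with $\hei_1, \hei_2$ and the observation that they move monotonically along the path. Both approaches are legitimate, and you correctly reduce membership in $\Up(f)$ to positivity of the heights, and you correctly flag that Lemma~\ref{lem:dist_adj} cannot be invoked once a height hits zero. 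However, there is a localized flaw in how you close that gap.

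The problematic step is the sentence claiming that, after $\hei_1(x^{c'})$ hits $0$, ``all later path points lie in the dimension-1 upward cone'' of $x^{c'}$. That is false: a later point $x^{c'+m} = x^{c'} + p\,e_1 - q\,e_2$ lies in $\UC_1(x^{c'})$ only when $p \ge q$, and there is no reason the path cannot take more $-e_2$ steps than $+e_1$ steps over a stretch. Similarly, the parenthetical ``A $-e_2$ step can only weakly decrease $\hei_1$'' quietly re-uses Lemma~\ref{lem:dist_adj} at exactly the points where you have just argued it does not apply, and the aside ``(it is in $\Up(f)$, being at or after $a$ and before the drop)'' is circular --- it assumes the contiguity you are trying to prove to conclude $\hei_2(x^{c'-1}) > 0$. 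The fix is to drop the cone-of-$x^{c'}$ claim and argue the key invariant step-by-step: if $\hei_1(x^i) = 0$, i.e.\ $f_1(x^i) < x^i_1$, then a $+e_1$ step lands in $\UC_1(x^i)$ so Lemma~\ref{lem:contrdown} (strict) gives $\hei_1(x^{i+1}) = 0$, while a $-e_2$ step gives a point below $x^i$ with the same first coordinate so Lemma~\ref{lem:monodown} (strict) gives the same conclusion. This shows $\hei_1 = 0$ propagates forward without ever invoking Lemma~\ref{lem:dist_adj}, and the symmetric backward propagation for $\hei_2 = 0$ uses $\UC_2$ and Lemma~\ref{lem:monodown} in the same way. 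Once you have these two one-sided persistence facts you do not need the $c'$ bookkeeping at all: $\hei_1(x^c) = 0$ immediately forces $\hei_1(x^b) = 0$, contradicting $x^b \in \Up(f)$, and likewise for $\hei_2$. With that repair your proof is correct and a clean alternative to the paper's.
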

\begin{proof}
We will show that if there are indices $a < b < c$ with $x^a \in \Up(f)$ and
$x^c \in \Up(f)$, then we must have $x^b \in \Up(f)$, which implies the claim. 
Note that all points $x \in \mathcal{B}$ satisfy $x_3 \le f_3(x)$ by
definition, so we can restrict our attention to dimensions 1 and 2. 

Since for all $i$ we have $x^i = x^{i-1} + e_1$ or $x^i = x^{i-1} - e_2$, we
must have $x^b = x^a + c_1 \cdot e_1 - c_2 \cdot e_2$ for some non-negative
constants $c_1$ and $c_2$. Since $x^a \in \Up(f)$ we have $x^a_2 \le f_2(x^a)$.
Let $x' = x^a + c_1 \cdot e_1$. Since $x^a \le x'$ we can apply
Lemma~\ref{lem:monodown} to argue that $x'_2 \le f_2(x')$. Since $x' = x^b +
c_2 \cdot e_2$, this implies that $x^b_2 \le f_2(x^b)$, because otherwise we
would have a violation of strict contraction.

Likewise we must have $x^b = x^c - d_1 \cdot e_1 + d_2 \cdot e_2$ for some
non-negative constants $d_1$ and $d_2$. Let $x' = x^c + d_2 \cdot e_2$. Since
$x^c$ is in $\Up(f)$ we have $x^c_1 \le f_1(x^c)$, and we can apply
Lemma~\ref{lem:monodown} to conclude that $x'_1 \le f_1(x')$. This then implies
that $x^b_1 \le f_1(x^b)$ because otherwise we would have a violation of
contraction.

So we have shown that $x^b_i \le f_i(x^b)$ for all dimensions $i$, meaning that
$x^b \in \Up(f)$, which completes the proof of the claim.

We can now apply the claim to prove the lemma. Since there exists at least one
point $x^j \in \Up(f)$, we have that the set of indices $i$ for which $x^i \in
\Up(f)$ is a non-empty contiguous set of indices between some index $l$ and
some index $u$ such that $x^i \in \Up(f)$ if and only if $l \le i \le u$, where
the contiguity of the sequence is implied by the claim that we have proved. 
This then proves both statements of the lemma.
\end{proof}

\paragraph{\bf Finding the almost square critical box.}

We can now proceed toward proving that an almost square critical box exists.
The next lemma shows that we can find a point $x$ in $\mathcal{B}$ such that
$\hei_1(x) \ge \hei_2(x)$, and another point $y$ further along the line defined
by $\mathcal{B}$ such that $\hei_1(y) \le \hei_2(y)$. In fact, the proof shows
that $x^l$ satisfies the first property, while $x^u$ satisfies the second,
where $l$ and $u$ are the indices whose existence is asserted by
Lemma~\ref{lem:bcontig}.

\begin{lemma}\label{lem:order_dist}
Let $x^1, x^2, \dots, x^k$ be the path defined by $\mathcal{B}$.
If $\Up(f) \cap s$ is non-empty, then 
there exist indices $l$ and $u$ with $l \le u$, where $x^l \in \Up(f)$ and $x^u
\in \Up(f)$, and where the following
inequalities are satisfied.
\begin{align*}
\hei_1 (x^l) &\geq \hei_2(x^l) > 0, \\
\hei_2 (x^u) &\geq \hei_1(x^u) > 0.
\end{align*}
\end{lemma}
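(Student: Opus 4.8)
The plan is to use the indices $l$ and $u$ from Lemma~\ref{lem:bcontig}, which are the first and last indices along the path $x^1, x^2, \dots, x^k$ defining $\mathcal{B}$ for which $x^i \in \Up(f)$. By Lemma~\ref{lem:bup}, since $\Up(f) \cap s$ is non-empty, $\Up(f) \cap \mathcal{B}$ is non-empty, so these indices exist and $l \le u$ with $x^l, x^u \in \Up(f)$. It remains to establish the two height inequalities, and the argument for $x^l$ and $x^u$ will be symmetric (with the roles of dimensions $1$ and $2$, and the direction of traversal along $\mathcal{B}$, swapped).

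First I would handle the positivity claims $\hei_1(x^l) > 0$, $\hei_2(x^l) > 0$, and likewise at $x^u$. Since $x^l \in \Up(f)$, we have $f_1(x^l) \ge x^l_1$ and $f_2(x^l) \ge x^l_2$; by the definition of $\hei_1$ as $\min_w \{ f_1(x^l + w e_1) < x^l_1 + w\}$, the value $w = 0$ is excluded, so $\hei_1(x^l) \ge 1 > 0$, and similarly $\hei_2(x^l) > 0$. The same reasoning applies at $x^u$.

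Next, the core of the proof: showing $\hei_1(x^l) \ge \hei_2(x^l)$. I would argue by contradiction. Suppose $\hei_1(x^l) < \hei_2(x^l)$. Recall $x^{l-1} \notin \Up(f)$ (by minimality of $l$, using that $l$ cannot be $1$ here — or if $l = 1$ then $x^1_2 = n$, and I would need to check this boundary case separately, likely handling it by noting $x^1$ is on the top boundary so the claim can be verified directly, or that the path must actually enter $\Up(f)$ from a point $x^{l-1}$ that fails in dimension $1$ or $2$). The key observation is that $x^{l-1}$ differs from $x^l$ by either $+e_1$ reversed (i.e. $x^l = x^{l-1} + e_1$, so $x^{l-1} = x^l - e_1$) or $-e_2$ reversed ($x^l = x^{l-1} - e_2$, so $x^{l-1} = x^l + e_2$). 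Since $x^{l-1} \in \mathcal{B}$ it already satisfies $f_3(x^{l-1}) \ge x^{l-1}_3$, so $x^{l-1} \notin \Up(f)$ must be witnessed by $f_1(x^{l-1}) < x^{l-1}_1$ or $f_2(x^{l-1}) < x^{l-1}_2$. I would then use Lemma~\ref{lem:dist_adj} to transfer this information to heights at $x^l$: for instance, if $x^{l-1} = x^l - e_1$ and the witness is in dimension $1$, then $\hei_1(x^{l-1}) = 0$, and Lemma~\ref{lem:dist_adj}(1) applied with $x = x^{l-1}$, $y = x^l$ gives $\hei_1(x^l) = \hei_1(x^{l-1}) - 1$ — but this is negative, contradicting positivity, so actually the witness for $x^{l-1}$ must be in dimension $2$ (when $x^{l-1} = x^l - e_1$) or in dimension $1$ (when $x^{l-1} = x^l + e_2$). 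From $\hei_2(x^{l-1}) = 0$ and Lemma~\ref{lem:dist_adj}(1), $\hei_2(x^l) \in \{\hei_2(x^{l-1}), \hei_2(x^{l-1})+1\} = \{0, 1\}$, hence $\hei_2(x^l) \le 1$; combined with $\hei_2(x^l) > 0$ we get $\hei_2(x^l) = 1 \le \hei_1(x^l)$, contradicting our supposition. The case $x^{l-1} = x^l + e_2$ is handled the same way using Lemma~\ref{lem:dist_adj}(2). This gives $\hei_1(x^l) \ge \hei_2(x^l)$. The argument for $\hei_2(x^u) \ge \hei_1(x^u)$ is entirely symmetric, using $x^{u+1} \notin \Up(f)$, $x^{u+1} = x^u + e_1$ or $x^u - e_2$, and the corresponding parts of Lemma~\ref{lem:dist_adj}.

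The main obstacle I anticipate is the careful bookkeeping at the boundary of $\mathcal{B}$ — specifically, making sure that $x^{l-1}$ and $x^{u+1}$ actually exist (i.e. that $l \ne 1$ and $u \ne k$) or else dealing with the cases $l = 1$ or $u = k$ directly. When $l = 1$, the point $x^1$ has $x^1_2 = n$, so by our preprocessing assumption $f_2(x^1 + h e_2)$ points strictly down as soon as we leave the slice, meaning $\hei_2(x^1)$ could be small; I would need to check that $\hei_1(x^1) \ge \hei_2(x^1)$ still holds, probably by observing that $x^1$ being in $\Up(f)$ forces $\hei_2(x^1) \ge 1$ and the only way $\hei_2$ could be as small as $1$ is consistent with any $\hei_1 \ge 1$. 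A parallel subtlety arises when $u = k$ with $x^k_1 = n$. Apart from this case analysis at the extremes, the rest is a routine application of Lemmas~\ref{lem:bup}, \ref{lem:bcontig}, and \ref{lem:dist_adj}.
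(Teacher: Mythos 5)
Your overall structure matches the paper's proof: take $l$ and $u$ to be the extreme indices along $\mathcal{B}$ at which the path meets $\Up(f)$ (via Lemmas~\ref{lem:bup} and \ref{lem:bcontig}), observe positivity of the heights at these points because they are in $\Up(f)$, and then argue from the neighbour $x^{l-1}\notin\Up(f)$ that $\hei_2(x^l)=1$. The paper also rules out $l=1$ and $u=k$ using the preprocessing assumption; your treatment of that boundary case is vaguer, but the assumption you'd need is available and the paper dispatches it in one line.

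The genuine gap is in the height-transfer step. You repeatedly invoke Lemma~\ref{lem:dist_adj} with $x=x^{l-1}$, but that lemma is stated only for points with $\hei_1(x)>0$ \emph{and} $\hei_2(x)>0$, and those are precisely the quantities you are setting to zero in the argument. Concretely, concluding ``$\hei_1(x^l)=\hei_1(x^{l-1})-1=-1$, contradiction'' from Lemma~\ref{lem:dist_adj}(1) when $\hei_1(x^{l-1})=0$ is not a valid application of the lemma; and similarly ``from $\hei_2(x^{l-1})=0$ and Lemma~\ref{lem:dist_adj}(1), $\hei_2(x^l)\le 1$'' applies the lemma at a point that fails its hypothesis. (There is also a slip where you say the witness is in dimension $1$ when $x^{l-1}=x^l+e_2$; by Lemma~\ref{lem:monodown}, $x^l\in\Up(f)$ forces $f_1(x^{l-1})\ge x^{l-1}_1$ in that case, so the witness is always in dimension $2$.) The correct replacements are direct applications of Lemmas~\ref{lem:contrdown} and \ref{lem:monodown}, which is what the paper does: when $x^{l-1}=x^l-e_1$, the point $x^l+e_2$ lies in $\UC_2(x^{l-1})$, so $f_2(x^{l-1})<x^{l-1}_2$ propagates via Lemma~\ref{lem:contrdown} to give $f_2(x^l+e_2)<x^l_2+1$, i.e.\ $\hei_2(x^l)=1$; when $x^{l-1}=x^l+e_2$, the inequality $f_2(x^{l-1})<x^{l-1}_2=x^l_2+1$ is already the statement $\hei_2(x^l)=1$. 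With those substitutions your argument is the paper's, but as written it does not go through.
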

\begin{proof}
We begin with the first claim. Note that Lemma~\ref{lem:bup} implies that
$\Up(f) \cap \mathcal{B}$ is non-empty. So let $i$ be the least index such that
$x^i \in \Up(f)$. Note that $i$ cannot be $1$, since that would imply $x^i_2 =
n$, and we have ensured by our assumptions on the instance that we cannot have $x^i \in \Up(f)$ in this case. 

Since $x^i \in \Up(f)$ we clearly have both $\hei_1(x^i) > 0$ and $\hei_2(x^i)
> 0$. We will show that $\hei_2(x^i) = 1$, which then implies $\hei_2(x^i) \le
\hei_1(x^i)$, because the $\hei$ function is integer valued. 

Note that since $x^{i-1} \in \mathcal{B}$ we have $x^{i-1}_3 \le f_3(x^{i-1})$
by definition. There are two cases to consider.
\begin{itemize}
\item In the case where $x^{i-1} = x^i - e_1$ we use the following argument. Note that Lemma~\ref{lem:dist_adj}
implies that $\hei_1(x^{i-1}) >
\hei_1(x^{i}) > 0$, which implies $x^{i-1}_1 \le f_1(x^{i-1})$.
Since $x^{i-1} \not \in \Up(f)$ we must therefore have
$x^{i-1}_2 > f_2(x^{i-1})$. Consider the point $x' = x^i + e_2$, and note that
$x' \in \UC_2(x^{i-1})$. Hence Lemma~\ref{lem:contrdown} implies that $x'_2 >
f_2(x')$, which means that $\hei_2(x^i) = 1$ by definition.

\item In the case where $x^{i-1} = x^i + e_2$ we use the following argument. Note that since $x^i_1 \le f_1(x^i)$ we
can use Lemma~\ref{lem:monodown} to prove that $x^{i-1}_1 \le f_1(x^{i-1})$.
Since $x^{i-1} \not \in \Up(f)$ we must therefore have $x^{i-1}_2 >
f_2(x^{i-1})$. This means that $\hei_2(x^i) = 1$ by definition.
\end{itemize}
So we have shown that $\hei_1(x^i) \ge \hei_2(x^i) = 1 > 0$, so we can set $l =
i$, which completes the proof of the first claim.

For the second claim, we proceed in a symmetric manner. 
Let $i$ be the greatest index such that $x^i \in \Up(f)$. Note that $i$ cannot be
$k$, since that would imply $x^i_1 = n$, and we have ensured by preprocessing that we cannot have $x^i \in \Up(f)$ in this case. 
Since $x^i \in \Up(f)$ we clearly have both $\hei_1(x^i) > 0$ and $\hei_2(x^i)
> 0$. We will show that $\hei_1(x^i) = 1$, which then implies $\hei_1(x^i) \le
\hei_2(x^i)$, because the $\hei$ function is integer valued. 

Note that since $x^{i+1} \in \mathcal{B}$ we have $x^{i+1}_3 \le f_3(x^{i+1})$
by definition. There are two cases to consider.
\begin{itemize}
\item In the case where $x^{i+1} = x^i - e_2$ we use the following argument. Note that Lemma~\ref{lem:dist_adj}
implies that $\hei_2(x^{i+1}) >
\hei_2(x^{i}) > 0$, which implies $x^{i+1}_2 \le f_2(x^{i+1})$.
Since $x^{i+1} \not \in \Up(f)$ we must therefore have
$x^{i+1}_1 > f_1(x^{i+1})$. Consider the point $x' = x^i + e_1$, and note that
$x' \in \UC_1(x^{i+1})$. Hence Lemma~\ref{lem:contrdown} implies that $x'_1 >
f_1(x')$, which means that $\hei_1(x^i) = 1$ by definition.

\item In the case where $x^{i+1} = x^i + e_1$ we use the following argument. Note that since $x^i_2 \le f_2(x^i)$ we
can use Lemma~\ref{lem:monodown} to prove that $x^{i+1}_2 \le f_2(x^{i+1})$.
Since $x^{i+1} \not \in \Up(f)$ we must therefore have $x^{i+1}_1 >
f_1(x^{i+1})$. This means that $\hei_1(x^i) = 1$ by definition.
\end{itemize}
So we have shown that $\hei_2(x^i) \ge \hei_1(x^i) = 1 > 0$, so we can set $u =
i$, which completes the proof of the second claim.
\end{proof}

The previous lemma asserts the existence of points $x$ and $y$ with $\hei_1(x)
\ge \hei_2(x)$ and $\hei_1(y) \le \hei_2(y)$. The following lemma shows that we
can find two \emph{adjacent} points along the path defined by
$\mathcal{B}$ that satisfy this property. This is proved via applying the
intermediate value theorem to the two points that we obtained from the previous
lemma.  

Note that the lemma also permits us to produce only one point, but this point
$x$ must satisfy $\hei_1(x) = \hei_2(x)$. This clause is necessary since
$\Up(f) \cap \mathcal{B}$ may only contain one point, in which case we cannot
find two adjacent points.

\begin{lemma}\label{lem:order_dist_adj}
Let $x^1, x^2, \dots, x^k$ be the path defined by $\mathcal{B}$.
If $\Up(f) \cap s$ is non-empty, then there exists an index $i$ such
that either
\begin{itemize}
\item $x^i \in \Up(f)$ and $\hei_1(x^i) = \hei_2(x^i)$, or
\item $x^i, x^{i+1} \in \Up(f)$ and $\hei_1(x^i) \ge \hei_2(x^i)$ while $\hei_1(x^{i+1})
\le \hei_2(x^{i+1})$.
\end{itemize}
\end{lemma}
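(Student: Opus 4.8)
The plan is to combine Lemma~\ref{lem:order_dist} and Lemma~\ref{lem:bcontig} via a discrete intermediate value argument on the path $x^1, \dots, x^k$ defined by $\mathcal{B}$. First I would invoke Lemma~\ref{lem:order_dist} to obtain indices $l \le u$ with $x^l, x^u \in \Up(f)$ such that $\hei_1(x^l) \ge \hei_2(x^l) > 0$ and $\hei_2(x^u) \ge \hei_1(x^u) > 0$. Define the integer-valued function $g(i) = \hei_1(x^i) - \hei_2(x^i)$ on the index range $i \in \{l, l+1, \dots, u\}$. By Lemma~\ref{lem:bcontig}, every $x^i$ with $l \le i \le u$ lies in $\Up(f)$, so in particular $\hei_1(x^i) > 0$ and $\hei_2(x^i) > 0$ for every such $i$, which means Lemma~\ref{lem:dist_adj} applies at each step along this sub-path. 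By construction $g(l) \ge 0$ and $g(u) \le 0$.

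The key step is to bound the step size of $g$. A single step $x^{i+1} = x^i + e_1$ or $x^{i+1} = x^i - e_2$; in either case Lemma~\ref{lem:dist_adj} tells us that $\hei_1$ changes by exactly $-1$ or $0$ and $\hei_2$ changes by exactly $+1$ or $0$ (in a coordinated way: one of them moves by a full unit, the other by at most one unit in the opposite-or-neutral sense). Concretely, for $x^{i+1} = x^i + e_1$ we get $\hei_1(x^{i+1}) = \hei_1(x^i) - 1$ and $\hei_2(x^{i+1}) \in \{\hei_2(x^i), \hei_2(x^i) + 1\}$, so $g(i+1) - g(i) \in \{-1, -2\}$; and for $x^{i+1} = x^i - e_2$ we get $\hei_2(x^{i+1}) = \hei_2(x^i) + 1$ and $\hei_1(x^{i+1}) \in \{\hei_1(x^i) - 1, \hei_1(x^i)\}$, so again $g(i+1) - g(i) \in \{-1, -2\}$. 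In particular $g$ is \emph{non-increasing} and decreases by either $1$ or $2$ at each step. Now I would let $i^*$ be the last index in $\{l, \dots, u\}$ with $g(i^*) \ge 0$ (this exists since $g(l) \ge 0$). If $g(i^*) = 0$ then $x^{i^*} \in \Up(f)$ and $\hei_1(x^{i^*}) = \hei_2(x^{i^*})$, giving the first alternative. Otherwise $g(i^*) = 1$, and since $i^* < u$ (because $g(u) \le 0 < 1$) we have $i^* + 1 \le u$ with $g(i^*+1) < 0$, i.e.\ $g(i^*+1) \le -1$; but $g$ drops by at most $2$, so $g(i^*+1) = -1$. Then $x^{i^*}, x^{i^*+1} \in \Up(f)$ (both indices lie in $[l,u]$, using Lemma~\ref{lem:bcontig}) with $\hei_1(x^{i^*}) - \hei_2(x^{i^*}) = 1 \ge 0$ and $\hei_1(x^{i^*+1}) - \hei_2(x^{i^*+1}) = -1 \le 0$, giving the second alternative with $i = i^*$.

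The main obstacle is making the step-size analysis airtight: one must carefully read off from Lemma~\ref{lem:dist_adj} that the two height functions cannot \emph{both} stay constant across a step (so that $g$ strictly decreases, not merely weakly), and that the decrease is never more than $2$. The first point is what guarantees $g$ actually reaches or straddles $0$ rather than possibly hovering; the second is what forces $g(i^*+1) = -1$ rather than some value $\le -2$ that would leave a gap. Both follow directly from the exact equalities in Lemma~\ref{lem:dist_adj} — $\hei_1(x+e_1) = \hei_1(x) - 1$ and $\hei_2(x - e_2) = \hei_2(x) + 1$ — so no new queries to $f$ are needed; it is purely a bookkeeping argument on the two cases. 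A minor point worth double-checking is the precondition $\hei_1(x^i) > 0, \hei_2(x^i) > 0$ needed to invoke Lemma~\ref{lem:dist_adj} at every intermediate step, which is exactly why we restrict the domain of $g$ to $[l,u]$ and appeal to Lemma~\ref{lem:bcontig} for membership in $\Up(f)$ throughout that range.
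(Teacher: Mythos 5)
Your proof is correct and takes essentially the same route as the paper's: invoke Lemma~\ref{lem:order_dist} to get endpoints $l \le u$ with $\hei_1(x^l) \ge \hei_2(x^l)$ and $\hei_1(x^u) \le \hei_2(x^u)$, apply a discrete intermediate-value argument to $g(i) = \hei_1(x^i) - \hei_2(x^i)$ to locate a sign-change pair, and appeal to Lemma~\ref{lem:bcontig} for membership of the sub-path in $\Up(f)$. The extra bookkeeping showing $g$ is non-increasing with step size in $\{-1,-2\}$ and the resulting $g(i^*)=1$, $g(i^*+1)=-1$ is correct but not needed — taking the last index with $g \ge 0$ already gives the crossing regardless of step bound.
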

\begin{proof}
Let $l$ and $u$ be the indices whose existence is shown in 
Lemma~\ref{lem:order_dist}. If $l = u$, then we have that 
$\hei_1(x^l) = \hei_2(x^l)$, and so we are done.

Otherwise, we have $l < u$. Define the function $g(x) = \hei_1(x) - \hei_2(x)$
and observe that $g(x^l) \ge 0$ while $g(x^u) \le 0$. So by the intermediate
value theorem there must exist an index $i$ in the range $l \le i < u$ such
that $g(x^i) \ge 0$ while $g(x^{i+1}) \le 0$. Since $l \le i < u$, and since
$x^l, x^u \in \Up(f)$, Lemma~\ref{lem:bcontig} implies that both $x^i \in
\Up(f)$ and $x^{i+1} \in \Up(f)$, which completes the proof.
\end{proof}

We can now finish the proof of Lemma~\ref{lem:cb_shape}. We do this by taking
the points given to use by the previous lemma, and by showing that one of them
is the bottom-left point of an almost square critical box. In particular, 
if Lemma~\ref{lem:order_dist_adj} gives us a single point, then we will show
that that point defines an exactly square critical box. On the other hand, if 
Lemma~\ref{lem:order_dist_adj} gives us two points, then we will use the fact
that the maximum of $\hei_1$ and $\hei_2$ changes between these two adjacent
points, along with the properties from
Lemma~\ref{lem:dist_adj} that state that heights can differ by at most one
between adjacent points, to show that one of the two points defines an almost
square critical box. 

\begin{proof}[Proof of Lemma~\ref{lem:cb_shape}]
We consider the points whose existence is implied by
Lemma~\ref{lem:order_dist_adj}. If we are in the first case of that lemma then
we set $x = x^i$, we set $w = \hei_1(x)$ and $h = \hei_2(x)$. Observe that $x
\in \Up(f)$, so the first point of Definition~\ref{def:cb} is satisfied, and
that $x \in \mathcal{B}$, so the fourth point of Definition~\ref{def:cb} is
satisfied. Since we set $w = \hei_1(x)$ and $h = \hei_2(x)$ we have also
satisfied the other requirements of Definition~\ref{def:cb}. Thus, $(x, h, w)$
is a critical box, and moreover, since $\hei_1(x) = \hei_2(x)$, we have that
$|h - w| = 0$. 

If we are in the second case of Lemma~\ref{lem:order_dist_adj}, then we observe
that if $\hei_1(x^i) = \hei_2(x^i)$ then $(x^i, \hei_2(x^i), \hei_1(x^i))$ is a
square critical box, for the same reasons as above, while if $\hei_1(x^{i+1}) =
\hei_2(x^{i+1})$ then $(x^{i+1}, \hei_2(x^{i+1}), \hei_1(x^{i+1}))$ is a square
critical box, also for the same reasons.

This leaves the case in which $\hei_1(x^i) > \hei_2(x^i)$ and $\hei_1(x^{i+1}) <
\hei_2(x^{i+1})$. 
We consider two cases.
\begin{itemize}
\item In the case where $x^{i+1} = x^i + e_1$, we have 
\begin{align*}
0 &< \hei_1(x^i) - \hei_2(x^i) \\
&= \hei_1(x^{i+1}) + 1 - \hei_2(x^i) \\
&\le \hei_1(x^{i+1}) + 1 - (\hei_2(x^{i+1}) - 1) \\
& = 2 + \hei_1(x^{i+1}) - \hei_2(x^{i+1}) \\
& < 2,
\end{align*}
where the first inequality is by assumption, the first equality and second
inequality
follow from Lemma~\ref{lem:dist_adj}, and where the final inequality holds due
to the assumption that $\hei_1(x^{i+1}) < \hei_2(x^{i+1})$. 

Hence we have $0 < \hei_1(x^i) - \hei_2(x^i) < 2$, and since the $\hei$
function is integer valued, we therefore have 
$\hei_1(x^i) - \hei_2(x^i) = 1$. Thus if we set $h = \hei_2(x^i)$ and $w =
\hei_1(x^i)$, then
$(x^i, h, w)$ is a critical box with $| h - w | = 1$. 

\item The case where $x^{i+1} = x^i - e_2$ proceeds symmetrically. 
\begin{align*}
0 &< \hei_2(x^{i+1}) - \hei_1(x^{i+1}) \\
&= \hei_2(x^{i}) + 1 - \hei_1(x^{i+1}) \\
&\le \hei_2(x^{i}) + 1 - (\hei_1(x^{i}) - 1) \\
& = 2 + \hei_2(x^{i}) - \hei_1(x^{i}) \\
& < 2,
\end{align*}
where the first inequality is by assumption, the first equality and second
inequality
follow from Lemma~\ref{lem:dist_adj}, and where the final inequality holds due
to the assumption that $\hei_1(x^{i}) > \hei_2(x^{i})$. 

Hence we have $0 < \hei_1(x^{i+1}) - \hei_2(x^{i+1}) < 2$, and since the $\hei$
function is integer valued, we therefore have 
$\hei_1(x^{i+1}) - \hei_2(x^{i+1}) = 1$. Thus if we set $h = \hei_2(x^{i+1})$ and $w =
\hei_1(x^{i+1})$, then
$(x^{i+1}, h, w)$ is a critical box with $| h - w | = 1$. 

\end{itemize}
\end{proof}

\subsection{The States of the Algorithm}

\gridStates 

We are now ready to define the algorithm itself. The following definition gives
four types of states that will be considered by the algorithm. These states are
shown in Figure~\ref{fig:grid_states}.

\begin{definition}[Algorithm State] \label{def:state}
Our algorithm will always be in one of the four following states, schematically represented in Figure \ref{fig:grid_states}.
\begin{enumerate}
\item A \emph{left-lobe state} defined by two boxes: a main
lobe $(x, h, w)$, and a sub-lobe $(y, u, v)$ with the following restrictions.
\begin{itemize}
\item $y_1 \le x_1$ and $y_1 + v = x_1$.
\item $x_2 \le y_2$ and $y_2 + u \le x_2 + h$.
\item $h \leq \Floor{\frac{w}{4}}$.
\end{itemize}

\item A \emph{bottom-lobe state}
is defined by two boxes: a main
lobe $(x, h, w)$, and a sub-lobe $(y, u, v)$ with the following restrictions.
\begin{itemize}
\item $x_1 \le y_1$ and $y_1 + v \le x_1 + w$.
\item $y_2 \le x_2$ and $y_2 + u = x_2$.
\item $w \leq \Floor{\frac{h}{4}}$.
\end{itemize}

\item A \emph{diagonal-lobe state}
is defined by two diagonal boxes: a main lobe $(x, y, l)$, and a sub-lobe $(a,
b, m)$ with the following restriction.
\begin{itemize}
\item $(y_1-x_1) + (x_2-y_2) \leq \Floor{\frac{l}{8}}$.
\item $a_1 = y_1 + l$ or $a_2 = x_2 + l$
\item $b_1 = y_1 + l$ or $b_2 = x_2 + l$.
\end{itemize}

\item A \emph{full state} which consists of a central box $(x, h, w)$ along with all
three lobe states:
\begin{itemize}
\item A left lobe state $(x^l, h^l, w^l)$ and $(y^l, u^l, v^l)$ such that 
$x = x^l + w^l \cdot e_1$.
\item A bottom lobe state 
$(x^r, h^r, w^r)$ and $(y^r, u^r, v^r)$ such that $x = x^r + h^r \cdot e_2$.
\item A diagonal lobe state 
$(x^d, y^d, l^d)$ and $(a^d, b^d, m^d)$ such that $x^d = x + h \cdot e_2$ and
$y^d = x + w \cdot e_1$.
\end{itemize}
\end{enumerate}
\end{definition}

\paragraph{\bf State areas.}

Our algorithm will make progress by reducing the area of a state, which we now
define. 
Given a box $(x, h, w)$, we define the area of the box as $\area(x, h, w) = h
\cdot w$. Given a diagonal box $(x, y, l)$ we define the area of the box as
$\area(x, y, l) = (x_2-y_2+l)(y_1-x_1+l) - l^2$, which can be calculated by subtracting two right triangles of side length $l$ from the larger rectangle including $\DBox$.
We say that a lobe is \emph{empty} if its area is zero. In each of the states
of the algorithm, it is possible for some of the boxes to be empty. For
example, we may have a full state whose main left lobe is empty, but that state
might still have a non-empty left sub-lobe. 

\paragraph{\bf The algorithm invariant.}

Our algorithm will ensure that all states that are visited satisfy an
invariant, which we now define.
For each algorithm state $t$ we define $\points(t)$ to be the set of points
that are contained in all of the boxes and diagonal boxes used to define that
state. In particular, whenever $t$ is a full state, $\points(t)$ contains all
of the points of all of the sub-states used in the definition of $t$. 

\begin{definition}[Algorithm Invariant]
An algorithm state $t$ is said to satisfy the invariant for a slice $s$
if $\Up(f) \subseteq \points(t)$, and if $\Up(f)
\cap s \ne \emptyset$, then the following will be satisfied. 
\begin{itemize}
\item If $t$ is a left-lobe or bottom-lobe state, 
then all 
almost square critical boxes are entirely contained in the main lobe $\CBox(x, h, w)$.
\item If $t$ is a diagonal-lobe state, then 
there all almost square critical boxes are entirely contained in the main lobe $\DBox(x, y, l)$.
\item If $t$ is a full-state, then 
all almost square critical boxes are entirely contained in the central box $\CBox(x, h, w)$.
\end{itemize}
\end{definition}

Our goal is to produce an algorithm that, given a state $t$ that satisfies the
invariant, makes constantly many queries to $f$ and either finds a point in
$\Up(f)$, or produces a new state $t'$ that satisfies the invariant where
$\area(t') \le c \cdot \area(t)$ for some constant $c$. In
Section~\ref{sec:leftalg} we give such an algorithm for left- and bottom-lobe
states, in Section~\ref{sec:algdiag} we give such an algorithm for
diagonal-lobe states, and in Section~\ref{sec:fullalg} we give such an
algorithm for full states.

\subsection{Algorithms for Left and Bottom Lobe States}
\label{sec:leftalg}

In this section we will prove the following lemma.

\begin{lemma}
\label{lem:leftlobestate}
There is an algorithm that, given a left-lobe state $t$ that satisfies the
invariant for a slice $s$, makes constantly many queries and either
\begin{itemize}
\item finds a point in $\Up(f)$, or
\item produces a new left-lobe state or full state $t'$ that satisfies the
invariant for $s$ with $\area(t') \le 15/16 \cdot \area(t)$. 
\end{itemize}
Moreover each step of the algorithm runs in time that is polynomial in the
representation of $f$.
\end{lemma}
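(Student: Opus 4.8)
The plan is to handle a left-lobe state $t$ with main lobe $(x,h,w)$ and sub-lobe $(y,u,v)$ by making a bounded number of queries and reducing the total area by a constant fraction. The key structural facts we rely on are: (i) $\Up(f)\cap s$ is contained in the union of any critical box and its three lobes (Lemma~\ref{lem:cbbound}); (ii) if $\Up(f)\cap s\neq\emptyset$ then an almost-square critical box exists (Lemma~\ref{lem:cb_shape}), so by the invariant every almost-square critical box lies inside $\CBox(x,h,w)$, and such a box has height $\le h$ and width $\le h+1$; and (iii) the cone constraints from Lemma~\ref{lem:contrdown} and the monotonicity constraints from Lemma~\ref{lem:monodown}. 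Since $t$ is a left-lobe state we have $h\le\lfloor w/4\rfloor$, so the main lobe is ``wide and short'' and a grid search over it is not available; we must use binary-search-style cone arguments instead.

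First I would query the midpoint $q$ along the bottom edge of the main lobe, i.e.\ the point with $q_1=x_1+\lfloor w/2\rfloor$ and $q_2=x_2$ (cf.\ Figure~\ref{fig:leftlobe}). Two cases arise. \textbf{Case A: $f_1(q)<q_1$.} Then $q$ strictly moves left, and the upward-cone $\UC_1(q)$ consists of points that also move strictly left (Lemma~\ref{lem:contrdown}), hence are not in $\Up(f)$. Since the main lobe has height $\le h$ and $h\le w/4 < \lfloor w/2\rfloor$, the cone covers everything in the main lobe with first coordinate $\ge q_1+h$; moreover monotonicity (Lemma~\ref{lem:monodown}) handles the points below row $q_2$. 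So we may discard all points of the main lobe with first coordinate $\ge q_1+h$, shrinking the main lobe width to roughly $w/2+h\le 3w/4+1$; combined with the (untouched) sub-lobe this still beats $15/16$ of the area once we additionally make one query to shrink the sub-lobe, exactly as in Case B below. \textbf{Case B: $f_1(q)\ge q_1$.} Then the cone constraints force that every point $p$ with $p\le q$ and $p\in\DC_1(q)$ satisfies $p_1\le f_1(p)$; in particular all such points are candidates for $\Up(f)$ and cannot be excluded, but we learn that any almost-square critical box whose lower-left corner $x'$ satisfies $x'_1\le q_1-2h-1$ would need a strictly-left-moving point on its right boundary at first coordinate $\le q_1-h-1$, contradicting the cone. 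So the main lobe can be truncated on the left at the line $L=\{p: p_1 = q_1-2h-1\}$ while preserving the ``all almost-square critical boxes in the main lobe'' invariant; the material to the left of $L$ becomes (or merges into) the sub-lobe, and since $q_1-2h-1 \ge x_1+w/2-w/2-1 = x_1-1$, wait — I need $q_1-2h-1\ge x_1$, which holds because $q_1=x_1+\lfloor w/2\rfloor$ and $2h+1\le w/2+1\le\lfloor w/2\rfloor$ for $w$ large (small-$w$ cases are handled by direct grid search since then $w$ is constant). This halves the main-lobe width.

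Next I would handle the newly created sub-lobe material together with the pre-existing sub-lobe $(y,u,v)$. Make one query at the midpoint $p$ of the line $L$ (the point half-way up $L$, as in Figure~\ref{fig:leftlobe}(b)): the response in dimension $1$, $2$, or $3$ combined with Lemmas~\ref{lem:contrdown} and~\ref{lem:monodown} excludes a quadrant (half the area) of the region to the left of $L$. Specifically, if $p$ moves strictly left we exclude the cone $\UC_1(p)$; if it moves strictly down in dimension $3$ we exclude everything below $p-e_1-e_2$ by monotonicity; and the weak/strict variants are chosen so that exactly one of the ``above-left'' or ``below'' half-planes is removed. Then a further single query at the midpoint of the remaining sub-lobe excludes half of it by the same cone/monotonicity dichotomy. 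Collecting: the main lobe shrinks to width $\le w/2 + O(h)$ but height stays $\le h$, so its area drops to at most $\approx hw/2$; the combined sub-lobe (old plus newly-carved strip) has its area at most halved by the two extra queries. One checks the arithmetic that the new total area is at most $15/16\cdot\area(t)$ when $w$ is large enough; for small constant $w$, the whole main lobe is constant-sized, so a brute-force grid search over it either finds an $\Up(f)$ point or certifies (via the almost-square critical box lemma) that $\Up(f)\cap s=\emptyset$, and we fall into a bottom-lobe/full-state handoff. Finally, re-verify that the resulting configuration meets the restrictions of Definition~\ref{def:state}: the main-lobe width-to-height ratio $h\le\lfloor w/4\rfloor$ may be violated once $w$ has shrunk close to $4h$, at which point we switch the representation to a full state with a small central box and possibly empty bottom/diagonal lobes — this is why the lemma's conclusion allows a full state output.

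The main obstacle I expect is the bookkeeping around \textbf{merging the two sub-lobes and preserving the state format}: after truncating the main lobe at $L$, the discarded strip is a box, but it must be absorbed into the single sub-lobe slot of Definition~\ref{def:state}, which requires that the union of the old sub-lobe and the strip is itself expressible as one box satisfying $y_1+v=x_1$, $x_2\le y_2$, $y_2+u\le x_2+h$. Ensuring this may force an over-approximation (enlarging the sub-lobe to the bounding box of the union), and one must verify that this over-approximation, together with the area saved on the main lobe, still nets a constant-fraction decrease — and that it does not destroy the invariant ``$\Up(f)\subseteq\points(t')$''. A secondary nuisance is getting the weak-versus-strict inequalities in the cone/monotonicity lemmas lined up correctly in every case of the midpoint queries so that we always exclude a genuine half (not a quarter) and never accidentally exclude a point that could be in $\Up(f)$; I would organize this as a short exhaustive case table over the sign of each of $f_1(p)-p_1$, $f_2(p)-p_2$, $f_3(p)-p_3$. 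The bottom-lobe case is identical after exchanging dimensions $1$ and $2$ and adjusting the ratio to $w\le\lfloor h/4\rfloor$, so it needs no separate argument.
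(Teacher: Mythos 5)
Your overall approach matches the paper's: query the midpoint $q$ of the main lobe's bottom edge, use the cone constraints in Lemma~\ref{lem:contrdown} together with monotonicity (Lemma~\ref{lem:monodown}) to either halve the main lobe (when $f_1(q)<q_1$) or truncate it on the left at a line $L$ whose distance from $q$ is $O(h)$ (when $f_1(q)\ge q_1$), then spend further queries to reduce the sub-lobe. However, you have correctly flagged---and not resolved---the key difficulty, and this is a genuine gap in the proposal. When $f_1(q)\ge q_1$, the strip carved off the left of the main lobe has height $h$, while the old sub-lobe $(y,u,v)$ has height $u\le h$ and may sit anywhere in $[x_2,x_2+h]$. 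A single query at the midpoint of $L$ cuts the strip to height roughly $h/2$, but the union of that half-strip with the old sub-lobe need not be a box, and taking its bounding box can blow the sub-lobe area up rather than down: for example if $u=1$ and $v$ is large, the old sub-lobe has area about $v$ while the bounding box of the union has height up to $h-1$ over the full width $v+\Theta(w)$, which is far larger than $uv+$ (strip area). This breaks the $\area(t')\le\frac{15}{16}\area(t)$ guarantee, and ``verify that it still nets a decrease'' is exactly what fails. The paper closes this gap in its Step~2 with two additional queries $q^{\text{bot}}$ and $q^{\text{top}}$, placed on $L$ at heights $\hat y_2$ and $\hat y_2+\hat u$ (the vertical extent of the surviving old sub-lobe). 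The responses force one of five cases: find an $\Up(f)$ point, discard the old sub-lobe entirely and keep a strip-only sub-lobe, discard the strip-rows outside $[\hat y_2,\hat y_2+\hat u]$ and widen the old sub-lobe by the strip width (so the new area increase is $\hat u\cdot O(w)$, which is offset by the $h\cdot O(w)$ removed from the main lobe since $\hat u\le h$), or symmetric cases. Only after this merge does the paper's Step~3 make the one further query that halves the resulting single sub-lobe. Without those two extra probes your three-query plan has no mechanism to bound the merged sub-lobe's shape, so the area-decrease claim is unproven.

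Two minor points. First, you use the threshold $q_1-2h-1$ from the technical overview, whereas the detailed proof obtains the tighter $q_1-h-1$ by combining the cone with monotonicity rather than using the cone alone; either works, but the tighter bound avoids your concern about $q_1-2h-1\ge x_1$ and does not require a separate ``small-$w$'' branch. Second, the paper does not add a special-case grid search for small $w$ inside this lemma---it handles all sizes through the same four-step algorithm, with Step~4 deciding whether the output is a left-lobe state or a full state according to whether $h'\le\lfloor w'/4\rfloor$ still holds.
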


Throughout this section we consider a left-lobe state $t$ that consists of
a main lobe $(x, h, w)$ and a sub-lobe $(y, u, v)$. The algorithm consists of
several steps. 


\paragraph{\bf Step 1: Finding a new main lobe.}

\leftLobeMain

The algorithm first queries the point $q = x + \Floor{\frac{w}{2}} \cdot e_1$,
which is the point that lies half-way along the bottom side of the main lobe. It
then proceeds with the following case analysis, where each case is shown in
Figure~\ref{fig:leftLobeMain}.

\begin{itemize}
\item \textbf{Case 1:} $\mathbf{f_1(q) < q_1}$.
(Figure~\ref{fig:leftLobeMain} (a)). Here we note that a point 
$p \in \CBox(x, h, w)$ with $p_1 \ge q_1 + h$ lies in $\UC_1(q)$, and so by
Lemma~\ref{lem:contrdown} we have $p \not\in \Up(f)$. This means that
no critical box can lie in the shaded region in Figure~\ref{fig:leftLobeMain}
(a). So we can build a new main lobe $(x, h, \Floor{\frac{w}{2}} + h)$
that satisfies the algorithm invariant. 




\item \textbf{Case 2:} $\mathbf{f_1(q) \geq q_1}$. 
In this case note that any almost square critical box can have height at most $h$, and
width at most $h + 1$. Lemmas~\ref{lem:contrdown} and~\ref{lem:monodown}
imply that all points $p$ with $p_1 \le q_1$ and $p_2 \ge q_2$ satisfy
$p_1 \le f_1(p)$. On the other hand, if
$(z, a, b)$ is a critical box, then by definition we must have $f_1(z + (b+1)
\cdot e_1) < z_1 + b + 1$. Hence, we must have $z_1 \ge q_1 - h - 1$.

This means that we can define a new main lobe starting from point $x' = x + \left(\Floor{\frac{w}{2}} -
h - 1\right) \cdot e_1$ as the box $(x', h, w - \left(\Floor{\frac{w}{2}} -
h - 1\right))$, and any almost square critical box that lies in the 
main lobe of $t$ must also be contained in the newly defined main lobe, which satisfies the algorithm
invariant for the main lobe of a left-lobe state.

While we have reduced the size of the main lobe, we have not yet 
eliminated any points, so the area of the state has not been reduced.
To address this, we
make one further query at $q^{\text{mid}} = x + ( \Floor{\frac{w}{2}} -
h - 1) \cdot e_1 + \Floor{\frac{h}{2}} \cdot e_2$, and we consider the following
cases. 

\begin{enumerate}
\item $\mathbf{f_2(q^{\text{mid}}) < q^{\text{mid}}_2}$ (Figure~\ref{fig:leftLobeMain} (b)).
Here Lemmas~\ref{lem:contrdown} and~\ref{lem:monodown} imply that any point $p$
with $p_1 \le q^{\text{mid}}_1$ and $p_2 \ge q^{\text{mid}}_2$ must satisfy
$f_2(p) < p_2$, and so cannot lie in $\Up(f)$. So we can rule out these points.




\item $\mathbf{f_2(q^{\text{mid}}) \geq q^{\text{mid}}_2}$  \textbf{and}
$\mathbf{f_3(q^{\text{mid}}) <  q^{\text{mid}}_3}$.
(Figure~\ref{fig:leftLobeMain} (c)). Here Lemma~\ref{lem:monodown} implies that
any point $p$ with $p \le q^{\text{mid}}$ must satisfy $f_3(p) < p_3$, which
means that $p \not \in \Up(f)$, and so we can rule these points out.



\item $\mathbf{f_2(q^{\text{mid}}) < q^{\text{mid}}_2}$  \textbf{and}
$\mathbf{f_3(q^{\text{mid}}) <
q^{\text{mid}}_3}$. 
(Figure~\ref{fig:leftLobeMain} (d)). Here the arguments from both of the
previous cases can be combined to argue that any point $p$ with $p_1 \le
q^{\text{mid}}_1$ cannot lie in $\Up(f)$. 

\end{enumerate}
\end{itemize}

At the end of Step 1 of the algorithm we have defined a new smaller main lobe,
and we have ruled out some of the remaining points from the original main lobe:
in Case~1 all of these points are ruled out, while in Case 2 we made one extra
query in order to rule out half of the remaining space.

\paragraph{\bf Step 2: Building a sub-lobe.}

Following on from Step 1, we now need to build a new sub-lobe. In Case~1 of
Step 1 we do not alter the original sub-lobe, so we can proceed with that
sub-lobe and skip Step~2. On the other hand, in Case 2 of Step 1, we have the
remaining area of the original main lobe, which is non-empty in Case 2.1 and
Case 2.2, and we also have the original sub-lobe. We need to rationalize this
remaining space into a single sub-lobe for the new state.

If the original sub-lobe has been entirely ruled out in Step 1, as it has in 
examples in Figures~\ref{fig:leftLobeMain} (b) and~\ref{fig:leftLobeMain} (d), then we 
simply use the remaining space from the original main lobe as the new sub-lobe.
So for example, in Figure~\ref{fig:leftLobeMain} (b) the new sub-lobe would be
$(x, \Floor{\frac{h}{2}} , \Floor{\frac{w}{2}} - h - 1)$, while in
Figure~\ref{fig:leftLobeMain} (d) we can use an empty lobe as the new sub-lobe.
In both cases Step 2 of the algorithm terminates. 
If the original sub-lobe has only partially been ruled out in Step 1, then we
first reduce the sub-lobe by removing the eliminated points before continuing
with the procedure below.
 
The situation that we must handle is depicted in 
Figure~\ref{fig:leftLobeMain} (c): we have a non-empty remaining space on the
left of the old main lobe, and a sub-lobe that lies to the left of this space.
Our task is to rationalize these two spaces into a single sub-lobe for the new
state.

\leftLobeSub



Let $(\hat{y}, \hat{u}, \hat{v})$ be the remaining sub-lobe of state $t$ after
the original sub-lobe was reduced by removing the points that were ruled out in
Step 1. We query the points
\begin{align*}
q^{\text{bot}} &= x + \left( \Floor{\frac{w}{2}} - h - 1\right) \cdot e_1 +
(\hat{y}_2 - x_2) \cdot
e_2, \\
q^{\text{top}} &= x + \left( \Floor{\frac{w}{2}} - h - 1\right) \cdot e_1 +
(\hat{y}_2 + \hat{u} - x_2) \cdot e_2,
\end{align*}
as shown in Figure~\ref{fig:leftLobeSub}. We then consider the following cases.

\begin{itemize}
\item \textbf{Case 1.} $\mathbf{f_3(q^{\text{top}}) < q^{\text{top}}_3}$.
(Figure~\ref{fig:leftLobeSub} (a)). Here, as in Step 1, Case 2.2, all points
below and to the left of $q^{\text{top}}$ cannot lie in $\Up(f)$, and this
rules out the existing sub-lobe. So we can use the remaining space, which is
labelled sub' in 
Figure~\ref{fig:leftLobeSub}, as the new sub-lobe.



\item \textbf{Case 2.} 
$\mathbf{f_2(q^{\text{bot}}) < q^{\text{bot}}_2}$.
(Figure~\ref{fig:leftLobeSub} (b)). 
Here, as in Step 1, Case 2.1, all points
above and to the left of $q^{\text{bot}}$ cannot lie in $\Up(f)$, and this
rules out the existing sub-lobe, so we can use the remaining space as a new
sub-lobe.



\item \textbf{Case 3.
$\mathbf{f_3(q^{\text{top}}) \geq q^{\text{top}}_3}$ and 
$\mathbf{f_2(q^{\text{top}}) \geq q^{\text{top}}_2}$}. Here note that
$q^{\text{top}} \in \DC_1(q)$, where $q$ is the point that we initially queried
in Step 1, so we can use Lemma~\ref{lem:contrdown} to argue that 
$f_1(q^{\text{top}}) \geq q^{\text{top}}_1$, meaning that 
$q^{\text{top}} \in \Up(f)$. So the algorithm can terminate and return 
$q^{\text{top}}$. 

\item 
\textbf{Case 4.
$\mathbf{f_3(q^{\text{bot}}) \geq q^{\text{bot}}_3}$ and 
$\mathbf{f_2(q^{\text{bot}}) \geq q^{\text{bot}}_2}$}. Using the same reasoning
as the previous case we can conclude that
$q^{\text{bot}} \in \Up(f)$. So the algorithm can terminate and return 
$q^{\text{bot}}$.

\item \textbf{Case 5.} 
If none of the previous cases apply then we have 
$\mathbf{f_2(q^{top}) < q^{top}_2}$,
$\mathbf{f_3(q^{top}) \geq q^{top}_3}$, $\mathbf{f_2(q^{\text{bot}}) \geq
q^{\text{bot}}_2}$ \textbf{and}   $\mathbf{f_3(q^{\text{bot}}) <
q^{\text{bot}}_3}$, as shown in Figure~\ref{fig:leftLobeSub} (c). 
Here we can simply extend the sub-lobe $(y, u, v)$ to $(y, u, v + 
\Floor{\frac{w}{2}} - h - 1)$. 
\end{itemize}

At the end of Step 2 we now have a single sub-lobe that can be taken forward
into Step 3.

\paragraph{\bf Step 3: Halving the sub-lobe.}

\leftLobeSubSub

The final step of the algorithm is to halve the area of the sub-lobe. We need
to do this because a significant proportion of the area of $t$ may lie in the
sub-lobe, and this area may not have been reduced by the previous steps.

If there is no sub-lobe arising from the previous steps, then we skip Step 3.
Otherwise, 
let $(\bar{y}, \bar{u}, \bar{v})$ be the sub-lobe that has been defined in Steps 1 and 2. 
We query the point 
$$q^{\text{cen}} = \bar{y} + \bar{v} \cdot e_1 + \Floor{\frac{\bar{u}}{2}} \cdot e_2,$$
which is the point that is half-way along the right-hand side of the sub-lobe.
We then consider the following cases.
\begin{enumerate}
\item \textbf{Case 1.} $\mathbf{f_2(q^{\text{cen}}) < q^{\text{cen}}_2}$. (Figure~\ref{fig:leftLobeSubSub}
(a)). As in previous steps, this rules out all points weakly above and to the left of 
$q^{\text{cen}}$, which approximately halves the size of the sub-lobe. 
\item \textbf{Case 2.} $\mathbf{f_3(q^{\text{cen}}) < q^{\text{cen}}_3}$. 
(Figure~\ref{fig:leftLobeSubSub} (b)). 
As in previous cases, this rules out all points weakly below and to the left of 
$q^{\text{cen}}$, which approximately halves the size of the sub-lobe.
\item \textbf{Case 3.} If neither of the previous two cases apply then we have \textbf{
$\mathbf{f_2(q^{\text{cen}}) \geq q^{\text{cen}}_2}$ and
$\mathbf{f_3(q^{\text{cen}}) \geq q^{\text{cen}}_3}$}. We can use the same
reasoning as cases 3 and 4 of Step 2 to conclude that $q^{\text{cen}} \in
\Up(f)$, and so the algorithm can terminate. 
\end{enumerate}

\paragraph{\bf Step 4: Building the new state.}

At this point we have a new main lobe $m = (x', h', w')$ that was found in Step 1, and
a new sub-lobe $l = (y', u', v')$ that was found in Step 3. We now build a new state $t'$ in the following way.
\begin{itemize}
\item If $h' \le \Floor{\frac{w'}{4}}$, then we build a new left-lobe state using $m$
and $l$.
\item If $h' > \Floor{\frac{w'}{4}}$, then we build a new full state using $m$
as the central box, using $l$ as the main lobe of the left-lobe state, and
setting all other main and sub-lobes to be empty. 
\end{itemize}
The algorithm now returns $t'$ and terminates.

\paragraph{\bf Correctness.}

So far we have built a state $t'$, and we have shown that the algorithm
invariant holds for~$t'$. Specifically we have shown the following.
\begin{itemize}
\item In Step 1, we showed that if an almost square critical box lies in the main lobe
of $t$, then it must lie in the main lobe of $t'$. This is because we
specifically removed regions from the main lobe that could not contain an almost
square critical box. 
\item We have shown that $\Up(f)$ is contained within the main and sub lobes of
$t'$. This is because $\Up(f)$ was contained within $t$, and all steps of the algorithm only removed points that are provably not in $\Up(f)$. 
\end{itemize}

To finish the proof of Lemma~\ref{lem:leftlobestate} we need to show that
the area of $t'$ has been sufficiently reduced. 
\begin{itemize}
\item Step 1 of the algorithm removes at least $1/16$ of the area of the
original main lobe $(x, h, w)$. Specifically, Step 1 Case 1 removes a $(w -
\Floor{\frac{w}{2}} - h)/w$ fraction of the area, and since $h \le
\Floor{\frac{w}{4}}$,
this means that at least $\Ceil{\frac{w}{4}}/4 \ge 1/8$ fraction of the area has been removed.
The three different cases of Step 2 Case 2, remove less area, since only half of the
remaining area of the main lobe is eliminated, so these cases remove a $1/16$
fraction of the area.

\item 
The original sub-lobe is either entirely removed in Step 1 or Step 2, or it is
processed by Step 3, in which case we remove a $\Floor{\frac{u}{2}}/u$ fraction
of the area of the original sub-lobe $(y, u, v)$. 
\begin{itemize}
\item If $u = 0$ then the area of the sub-lobe is zero, and we are done.
\item If $u = 1$ then note that Step 4 Cases 1 and 2 both rule out at least
half of the points in the sub-lobe since weak inequalities are used when
determining what to rule out.
\item If $u \ge 2$ then note that $\Floor{\frac{u}{2}}/u \ge 1/3$.
\end{itemize}
\end{itemize}
So we have shown that the original main lobe and the original sub lobe have both
been reduced in area by a fraction of at least $1/16$, which implies $\area(t')
\le \frac{15}{16} \area(t)$.

\paragraph{\bf Bottom-lobe states.}

We can use the same algorithm for bottom-lobe states by simply exchanging
dimensions 1 and 2. So we have the following lemma.

\begin{lemma}
\label{lem:bottomlobestate}
There is an algorithm that, given a bottom-lobe state $t$ that satisfies the
invariant, makes constantly many queries and either
\begin{itemize}
\item finds a point in $\Up(f)$, or
\item produces a new bottom-lobe state or full state $t'$ that satisfies the invariant with $\area(t') \le 15/16 \cdot \area(t)$. 
\end{itemize}
Moreover each step of the algorithm runs in time that is polynomial in the
representation of $f$.
\end{lemma}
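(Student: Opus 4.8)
The plan is to obtain Lemma~\ref{lem:bottomlobestate} directly from Lemma~\ref{lem:leftlobestate} by a symmetry argument, so essentially no new work is required. First I would observe that a bottom-lobe state is exactly a left-lobe state with the roles of dimensions $1$ and $2$ exchanged: compare the two clauses of Definition~\ref{def:state}. In a left-lobe state the main lobe $(x,h,w)$ satisfies $h \le \Floor{w/4}$ (it is ``wide''), the sub-lobe sits to its left with $y_1+v=x_1$ and $x_2 \le y_2$, $y_2+u \le x_2+h$; in a bottom-lobe state the main lobe is ``tall'' ($w \le \Floor{h/4}$) and the sub-lobe sits directly below it with $y_2+u=x_2$ and $x_1 \le y_1$, $y_1+v \le x_1+w$. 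These are carried onto one another by the coordinate transposition $\sigma : (a_1,a_2,a_3) \mapsto (a_2,a_1,a_3)$ on $G$.

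Concretely, given a violation-free \DMAC instance $f$ I would define $f^\sigma(x) = \sigma\big(f(\sigma(x))\big)$, which swaps dimensions $1$ and $2$ throughout. It is immediate that $f^\sigma$ is again a violation-free \DMAC (and \1DUniqueDMAC) instance, that $x \in \Up(f^\sigma)$ iff $\sigma(x) \in \Up(f)$, and that $\sigma$ carries critical boxes of a slice of $f$ to critical boxes of the corresponding slice of $f^\sigma$, preserving the ``almost square'' condition $|h-w| \le 1$ since $\sigma$ merely swaps $h$ and $w$. Likewise $\sigma$ carries left lobes, bottom lobes, diagonal lobes, and full states to the corresponding objects for $f^\sigma$, and preserves areas, since $\area(x,h,w) = hw = wh$ and the diagonal-box area formula is symmetric in its two spans. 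Hence a bottom-lobe state $t$ for $f$ that satisfies the invariant for a slice $s$ becomes, under $\sigma$, a left-lobe state $\sigma(t)$ for $f^\sigma$ satisfying the invariant for $\sigma(s)$.

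The argument then finishes mechanically: apply the algorithm of Lemma~\ref{lem:leftlobestate} to $\sigma(t)$ and the instance $f^\sigma$. Each query it makes to $f^\sigma$ is answered by one query to $f$ (at the $\sigma$-image of the queried point), so the total number of queries is still constant and each step still runs in time polynomial in the representation of $f$. If that algorithm returns a point $p \in \Up(f^\sigma)$, then $\sigma(p) \in \Up(f)$ is the desired up-set point; if it returns a new left-lobe or full state $t'$ satisfying the invariant for $\sigma(s)$ with $\area(t') \le \tfrac{15}{16}\area(\sigma(t))$, then $\sigma(t')$ is a bottom-lobe or full state satisfying the invariant for $s$ with $\area(\sigma(t')) = \area(t') \le \tfrac{15}{16}\area(\sigma(t)) = \tfrac{15}{16}\area(t)$. (Note a left-lobe state maps to a bottom-lobe state and a full state maps to a full state, since the full-state structure is symmetric under swapping its left and bottom sub-states and transposing the diagonal sub-state.) This is exactly the conclusion of Lemma~\ref{lem:bottomlobestate}.

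There is no real obstacle here; the only thing that needs a moment's care is bookkeeping, namely checking that every structure named in Definitions~\ref{def:cb} and~\ref{def:state} and every area formula is genuinely invariant under the transposition $\sigma$ (in particular that the left/bottom/diagonal lobe \emph{definitions} $\leftl, \downl, \diagl$ and the cones $\UC_i, \DC_i$ transform as expected), so that the invariant for $t$ really does transfer to $\sigma(t)$ and back. Once that is spelled out, the proof is a one-line appeal to Lemma~\ref{lem:leftlobestate}.
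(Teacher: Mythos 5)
Your proposal matches the paper exactly: the paper also obtains Lemma~\ref{lem:bottomlobestate} from Lemma~\ref{lem:leftlobestate} by exchanging dimensions $1$ and $2$, and you have correctly identified and spelled out the symmetry bookkeeping that makes this work. No gap.
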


\subsection{The Algorithm for Diagonal States}
\label{sec:algdiag}

This section proves the equivalent of Lemma \ref{lem:leftlobestate} for
diagonal states. It follows exactly the same approach as we used for 
Lemma \ref{lem:leftlobestate}, but each step has to be adapted to work with
diagonal regions.

\begin{lemma}
	\label{lem:diaglobestate}
	There is an algorithm that, given a diagonal-lobe state $t$, makes constantly many
	queries and either
	\begin{itemize}
		\item finds a point in $\Up(f)$,
		\item declares that $\Up(f)$ is empty in the current slice, or
		\item produces a new diagonal state $t'$ that satisfies the invariant with
		$\area(t') \le 15/16 \cdot \area(t)$, or
		\item produces a new full state $t'$ that satisfies the invariant with
		$\area(t') \le 5 \cdot \area(t)$.
	\end{itemize}
	Moreover, each step of the algorithm runs in time that is polynomial in the
	representation of $f$.
\end{lemma}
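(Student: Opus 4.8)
The plan is to mirror the four-step structure of the left-lobe algorithm (Lemma~\ref{lem:leftlobestate}), but adapted to diagonal regions, which introduces two genuine complications that are not present in the axis-aligned case: (1) a diagonal lobe $\DBox(x,y,l)$ has a nontrivial "slack" parameter $(y_1-x_1)+(x_2-y_2)$ measuring how far it deviates from a pure diagonal strip, and bookkeeping this under the cone/monotonicity cuts is more delicate; and (2) unlike the left/bottom lobes, a diagonal lobe can fail to contain a critical box at all even when the state is otherwise valid, which is why the lemma statement allows the extra output "declare $\Up(f)$ empty in the current slice." I would begin by setting up notation for the diagonal main lobe $(x,y,l)$ and sub-lobe $(a,b,m)$, and recalling from the state definition that $(y_1-x_1)+(x_2-y_2)\le\Floor{l/8}$, so the main lobe is within constant factors of a diagonal strip of length $l$.

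Step 1 would be to query the point $q$ lying roughly at the midpoint of the inner diagonal edge of the main lobe (analogous to $q = x + \Floor{w/2}\cdot e_1$ in the left case; here it is the point at diagonal-distance $\Floor{l/2}$ from the flattened start). Then I split on the direction of $f$ at $q$. If $q$ moves strictly downward in dimension~3, then by Lemma~\ref{lem:monodown} everything weakly below $q$ is out of $\Up(f)$, and since a critical box has its lower-left corner touching $\mathcal B$ (a point with $f_3<x_3$ diagonally below), this cuts off a constant fraction of the diagonal strip — giving a new diagonal main lobe. If instead $q$ moves weakly up in dimension~3, I make a second query at $q$'s "diagonal companion" to probe dimensions 1 and 2, using Lemma~\ref{lem:contrdown} with the upward cones $\UC_1,\UC_2$ anchored at $q$: because an almost-square critical box has both height and width at most $\Floor{l/8}+1$ (bounded by the slack plus the fact that such a box living in the lobe cannot be wider/taller than the lobe's thickness), the cone constraints from $q$ force any critical box's corner to lie within a $O(l/8)$-thick sub-strip near $q$. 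This lets me shrink the main lobe to a shorter diagonal box while — exactly as in Step 1 Case~2 of the left-lobe algorithm — making one further query at an interior point $q^{\mathrm{mid}}$ to actually eliminate half the remaining area rather than just relabel it. If after these probes the main lobe has been shown to contain no critical box corner at all (all directions point "out"), and the sub-lobe is likewise exhausted, that is precisely the case where we declare $\Up(f)\cap s=\emptyset$; I would justify this using Lemma~\ref{lem:cbbound} together with the invariant that all critical boxes were contained in $\DBox(x,y,l)$.

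Steps 2 and 3 proceed as in the left-lobe case: Step 2 rationalizes the leftover sliver of the old diagonal main lobe together with the old diagonal sub-lobe into a single diagonal sub-lobe by querying the two endpoints of the sliver's outer edge and case-splitting on dimensions 2 and 3 (with one case returning an $\Up(f)$ point via a cone argument, as in Cases 3--4 there); Step 3 halves the diagonal sub-lobe by querying its midpoint and splitting on dimensions 2 and 3. Step 4 assembles the new state: if the shrunken diagonal main lobe still satisfies the slack bound $(y'_1-x'_1)+(x'_2-y'_2)\le\Floor{l'/8}$ we output a diagonal-lobe state, and otherwise — which happens precisely when the lobe has become short enough relative to its slack that it is now "roughly square" — we output a full state with the (former diagonal) region as its central box and all other lobes empty. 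This is the source of the "$\area(t')\le 5\cdot\area(t)$" alternative: converting a thin diagonal box to an enclosing axis-aligned box blows the area up by at most a constant (the $5$ coming from bounding $\DBox$'s enclosing rectangle against $\DBox$'s own area $(x_2-y_2+l)(y_1-x_1+l)-l^2$ when the slack is comparable to $l$), and as noted in the overview this constant blow-up is harmless since it occurs only finitely often.

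The main obstacle I expect is the area accounting for the diagonal regions and verifying that the constant-fraction reduction genuinely holds: one must check that each cut in Step~1 removes at least (say) a $1/16$ fraction of $\area(x,y,l)=(x_2-y_2+l)(y_1-x_1+l)-l^2$, and because this area formula is not simply "length $\times$ width," the bound $h\le\Floor{l/8}$-style slack inequality has to be combined carefully with the $l^2$ subtraction to show that cutting a constant fraction of the diagonal strip's length cuts a constant fraction of the area. The second subtle point is making the "declare $\Up(f)$ empty" branch airtight — I need that whenever a critical box exists in the slice, the invariant guarantees it sits in $\DBox(x,y,l)$, so if all my probes certify no critical-box corner survives in $\DBox(x,y,l)$ then by Lemma~\ref{lem:cb_shape} (contrapositive) $\Up(f)\cap s$ must be empty; the care is in ensuring the finitely many probe points actually cover every possible corner location, which follows from the almost-square bound $|h-w|\le 1$ together with the height-change Lemma~\ref{lem:dist_adj}. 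Everything else is routine adaptation of the left-lobe case with dimensions suitably diagonalized.
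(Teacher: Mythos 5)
Your proposal follows the same four-step structure as the paper's proof (query a midpoint, case-split on $f_3$, rationalize leftover area with the old sub-lobe, halve the sub-lobe, then re-assemble into a diagonal or full state), and it correctly identifies the key points: the role of the slack bound $(y_1-x_1)+(x_2-y_2)\le\Floor{l/8}$, the cone/monotonicity cuts via Lemmas~\ref{lem:contrdown} and~\ref{lem:monodown}, and the source of the factor-of-5 area blowup when converting the main $\DBox$ to an enclosing axis-aligned box. Two small imprecisions are worth flagging, though neither is fatal. First, the paper's query point $q=(y_1+\Floor{l/2},\,x_2+\Floor{l/2})$ is the \emph{outer corner} at diagonal depth $\Floor{l/2}$, i.e.\ it dominates \emph{both} endpoints $x+(\Floor{l/2},\Floor{l/2})$ and $y+(\Floor{l/2},\Floor{l/2})$ of the cross-section at that depth; this matters because a single monotonicity cut at $q$ then removes the \emph{entire} $\DBox(x,y,\Floor{l/2})$, whereas a point on the ``inner diagonal edge'' as you describe would leave part of the lower half uncovered and you would need an extra argument (or extra queries) to reach the same constant-fraction reduction. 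Second, you devote quite a lot of discussion to the ``declare $\Up(f)\cap s=\emptyset$'' branch; the paper's diagonal-lobe algorithm does not in fact reach that branch itself (it is used in the full-state algorithm via the CB-config search), so that branch in the lemma statement is essentially idle here, and when it is reached the correct justification is the contrapositive of Lemma~\ref{lem:cb_shape}, not Lemma~\ref{lem:cbbound} (you do cite the former later, so this may just be a slip). With those caveats, this is the paper's approach.
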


Note that the last case here shows that the area may be \emph{increased}. This
is not a problem however, because as we shall see this case can occur at most
once per slice, and this increase can be cancelled out by taking constantly many
further steps. We will deal with this explicitly when we analyse the overall
running time of the algorithm in Section~\ref{sec:mainalgo}.

The algorithm receives as input a diagonal lobe state $t$ that consists of
a main lobe $(x,y,l)$ and a sub-lobe $(a,b,m)$. Similarly to the left lobe
case, the algorithm either finds a point in $\Up(f)$ or creates a new state
$t'$ that satisfies the invariant. 

We define the horizontal width $w$ of a diagonal lobe $(x,y,l)$ to be:
$$ w(x,y,l) = (x_2 - y_2) + (y_1-x_1)$$

\paragraph{\bf Step 1: Finding a new main lobe.}
As an initial query, the algorithm queries the point
$$q= \left(y_1 + \Floor{\frac{l}{2}}, x_2+\Floor{\frac{l}{2}}\right),$$
which is located halfway along the main lobe. We consider the following
cases.

\begin{itemize}
\item \textbf{Case 1:} $\mathbf{f_3(q) < q_3}$ (Figure
\ref{fig:diagLobeStep1} (a)). By Lemma \ref{lem:monodown}, no point $p$ in 
$\DBox(x,y,\Floor{\frac{l}{2}})$ can satisfy $p \in \Up(f)$, which in turn excludes
the existence of any critical box in this region. The new main lobe is defined
as 
$$\DBox\left(x+ \Floor{\frac{l}{2}} \cdot e_1 + \Floor{\frac{l}{2}} \cdot e_2,
\; y+
\Floor{\frac{l}{2}} \cdot e_1 +  \Floor{\frac{l}{2}} \cdot e_2, \;
\Floor{\frac{l}{2}}\right).$$
Since we have only eliminated points that cannot lie in
$\Up(f)$, any critical box contained in the old main must also be contained in
the new main lobe.

\item \textbf{Case 2:} $\mathbf{f_3(q) \geq q_3}$. In this case, any
critical box can have width at most $w(x,y,l)$ since it is contained within the
main lobe. Note that all points with $p_1 \geq q_1$ and $p_2 \geq q_2$ satisfy
$f_3(p) \geq p_3$, due to Lemma \ref{lem:monodown}. By definition, any critical
box $(z,h,w)$ must satisfy $f_3(z - (e_1 + e_2)) < z_3$, and since we cannot
satisfy this for any point $p \ge q$, we must have that
the bottom left corner of $(z, h, w)$ must have $z_1 \leq q_1 \le
y_1+ \Floor{\frac{l}{2}}$ and $z_2 \le q_2 \leq x_1+ \Floor{\frac{l}{2}}$.

So we can reduce the main lobe by eliminating any point that is not below
$(y_1+
\Floor{\frac{l}{2}} + 2w,x_2+ \Floor{\frac{l}{2}} + 2w)$, and therefore define
a new main lobe
$\DBox(x, y,\Floor{\frac{l}{2}} +2w)$. 
This is valid because any box that is contained within the original main lobe
can have height and width at most 
$(x_2-y_2)+(y_1-x_1) = w$.
So if the main lobe of state $t$ contains
an almost square critical box, then such a box is also contained within the new
main lobe, satisfying the invariant. 

This operation has reduced the size of the main lobe, but it has not yet
eliminated any points. To do this, we make one further query at

$$q^{\text{mid}} = \left(x_1 + \Floor{\frac{l}{2}} + 2w + \Floor{\frac{||x-y||_{\infty}}{2}}, \; x_2
+ \Floor{\frac{l}{2}} + 2w\right) $$

for which $f_3(q^{\text{mid}}) \geq q^{\text{mid}}_3$ holds. Given the assumption that $w \leq \Floor{\frac{l}{8}}$, $q^{\text{mid}}$ is well defined within the lobe. The orientation of $q^{\text{mid}}$ along dimensions 1 and 2 guarantees that the query point splits the remaining lobe in half, maintaining the invariant.

\begin{enumerate}
\item $\mathbf{f_1(q^{\text{mid}}) < q^{\text{mid}}_1}$  \textbf{and}
$\mathbf{f_2(q^{\text{mid}}) \geq   q^{\text{mid}}_2}$.
(Figure \ref{fig:diagLobeStep1} (b)). Here Lemmas~\ref{lem:contrdown} and \ref{lem:monodown} imply that any point $p$ with $p_1 \ge q^{\text{mid}}$ and $p_2 \le q^{\text{mid}}$ as well as all points in $\UC_1(q^{\text{mid}})$ must satisfy $f_1(p) < p_1$. Therefore for all points $p \not \in \Up(f)$, and so can be excluded.

\item $\mathbf{f_1(q^{\text{mid}}) \ge q^{\text{mid}}_1}$  \textbf{and}
$\mathbf{f_2(q^{\text{mid}}) <   q^{\text{mid}}_2}$. (Figure \ref{fig:diagLobeStep1} (c)).
Lemmas~\ref{lem:contrdown} and~\ref{lem:monodown} imply that any point $p$
with $p_1 \le q^{\text{mid}}_1$ and $p_2 \ge q^{\text{mid}}_2$ as well as all points in $\UC_2(q^{\text{mid}})$ must satisfy
$f_2(p) < p_2$, and is excluded from $\Up(f)$. These points can be ruled out.

\item $\mathbf{f_1(q^{\text{mid}}) < q^{\text{mid}}_1}$  \textbf{and}
$\mathbf{f_2(q^{\text{mid}}) <
q^{\text{mid}}_2}$. 
(Figure \ref{fig:diagLobeStep1} (d)). Here the arguments from both of the
previous cases can be combined to argue that no point in the union of the above regions can lie in $\Up(f)$, and the sub-lobe at case $t'$ is empty.

\end{enumerate}
\end{itemize}

Step 1 has reduced the main lobe for the next state, and ruled out regions of the original main and sub-lobe. Case 1 results in an empty sub-lobe, while case 2 introduces an additional query point that rules out half of the remaining space.

\diagLobeStepOne

\paragraph{\bf Step 2: Building a sub-lobe.}
This step deals with the construction of a new sub-lobe. Cases 1 and
2.3 of Step 1 result in an empty sub-lobe, so the algorithm directly jumps to
Step~4. Cases 2.1 and 2.2 maintain the original sub-lobe in parts or entirely,
and thus further action is required to 
merge the sub-lobe and the remaining space from the old main-lobe into a single
diagonal box.
In the case where the original sub-lobe is partially excluded 
by the constraints on $q^{\text{mid}}$ at Step 1, the sub-lobe is
first altered to remove those points before we proceed. 
Specifically, we denote the sub-lobe remaining after the processing of Step 1
as $(\hat{a}, \hat{b}, m)$. The algorithm then queries two additional points:

\begin{align*}
q^{\text{left}} &= \hat{a} - (\Floor{\frac{l}{2}} + 2w) \cdot e_1 - (\Floor{\frac{l}{2}} + 2w) \cdot e_2, \\
q^{\text{right}} &=  \hat{b} - (\Floor{\frac{l}{2}} + 2w) \cdot e_1 - (\Floor{\frac{l}{2}} + 2w) \cdot e_2
\end{align*}
as shown in Figure \ref{fig:diagLobeStep2}. We identify the following cases. 

\begin{itemize}
\item \textbf{Case 1.} $\mathbf{f_2(q^{\text{left}}) < q^{\text{left}}_2}$.
(Figure \ref{fig:diagLobeStep2} (a)). All points in the region
$UC_2(q^{\text{left}})$ located above and to the right of $q^{\text{left}}$ are restricted from $\Up(f)$ by Lemma~\ref{lem:contrdown}, which
rules out the existing sub-lobe. So we can use the remaining space as the new sub-lobe. 

\item \textbf{Case 2.} 
$\mathbf{f_1(q^{\text{right}}) < q^{\text{right}}_1}$. (Figure \ref{fig:diagLobeStep2} (b))
All points in $UC_1(q^{\text{right}})$, located to the left of
$q^{\text{right}}$ cannot lie in $\Up(f)$ by Lemma~\ref{lem:contrdown}, and the existing sub-lobe is excluded. The new sub-lobe is bounded by $q^{\text{right}}$ and $q^{\text{mid}}$.

\item \textbf{Case 3.}
$\mathbf{f_1(q^{\text{left}}) \geq q^{\text{left}}_1}$ and 
$\mathbf{f_2(q^{\text{left}}) \geq q^{\text{left}}_2}$. Since Step 2 is only reached when $f_3(q) \geq q_3$, Lemma \ref{lem:monodown} implies that $f_3(q^{\text{mid}}) \geq q^{\text{mid}}_3$, since $q^{\text{mid}}_1 \geq q_1$ and $q^{\text{mid}}_2 \geq q_2$. Then  
$q^{\text{left}} \in \Up(f)$, and the algorithm terminates returning
$q^{\text{left}}$. 

\item 
\textbf{Case 4.}
$\mathbf{f_1(q^{\text{right}}) \geq q^{\text{right}}_1}$ and 
$\mathbf{f_2(q^{\text{right}}) \geq q^{\text{right}}_2}$.  Applying the same monotonicity argument as above, the algorithm terminates, returning $q^{\text{right}} \in \Up(f)$.

\item \textbf{Case 5.} 
$\mathbf{f_1(q^{\text{left}}) \geq q^{\text{left}}_1}$,
$\mathbf{f_2(q^{\text{left}}) < q^{\text{left}}_2}$, $\mathbf{f_1(q^{\text{right}}) <
q^{\text{right}}_1}$ \textbf{and}   $\mathbf{f_2(q^{\text{right}}) \geq
q^{\text{right}}_2}$ (Figure \ref{fig:diagLobeStep2} (c)). 
In this case, the original sub-lobe remains intact and is extended as $(q^{\text{left}}, q^{\text{right}}, \Floor{\frac{l}{2}} - w - 1 + m)$.
\end{itemize}

Upon merging the above cases to a single sub-lobe, the algorithm moves
to Step 3 which performs a further area reduction.

\diagLobeStepTwo

\paragraph{\bf Step 3: Halving the sub-lobe.}

In this step, the algorithm makes one final query 
that reduces the remaining sub-lobe's area by a half. This is necessary,
because the sub-lobe area might constitute a significant
fraction of $\area(t)$.
This step is only executed when the algorithm so far has left a non empty sub-lobe.
If the sub-lobe is empty, then we can jump directly to Step 4.

Let $(\bar{a}, \bar{b}, \bar{m})$ be the sub-lobe that remains after Steps 1
and 2. 
We query the point 
$$q^{\text{cen}} = \Floor{\frac{\bar{a}+\bar{b}}{2}} - (\Floor{\frac{l}{2}} + 2w) \cdot e_1 - (\Floor{\frac{l}{2}} + 2w) \cdot e_2$$

which corresponds to the midpoint between $\bar{a}$ and $\bar{b}$ at a height of $(\Floor{\frac{l}{2}} + 2w)$. 
The following cases are considered.
\begin{enumerate}
\item \textbf{Case 1.} $\mathbf{f_2(q^{\text{cen}}) < q^{\text{cen}}_2}$. 
(Figure \ref{fig:diagLobeStep3} (a)). 
Lemmas~\ref{lem:monodown} and~\ref{lem:contrdown} rule out all points weakly above and to the right of 
$q^{\text{cen}}$, as well as points in $UC_2(q^{\text{cen}})$, which approximately halves the size of the sub-lobe.
\item \textbf{Case 2.} $\mathbf{f_1(q^{\text{cen}}) < q^{\text{cen}}_1}$. (Figure~\ref{fig:diagLobeStep3}
(b)). Similarly to the previous case, Lemmas~\ref{lem:monodown} and~\ref{lem:contrdown}
imply that all points weakly above and to the left of 
$q^{\text{cen}}$, , as well as points in $UC_1(q^{\text{cen}})$, can be excluded, and the resulting size of the sub-lobe is approximately halved. 
\item \textbf{Case 3.} If none of the above cases apply, then we have \textbf{
$\mathbf{f_1(q^{\text{cen}}) \geq q^{\text{cen}}_1}$ and
$\mathbf{f_2(q^{\text{cen}}) \geq q^{\text{cen}}_2}$}. By the same argument as cases 3 and 4 of Step 2, we can conclude that $q^{\text{cen}} \in
\Up(f)$, and the algorithm terminates, returning $q^{\text{cen}}$. 
\end{enumerate}

\diagLobeStepThree


\paragraph{\bf Step 4: Building the new state.}

The new main lobe, produced by Step 1, is defined as $(x', y', l')$, with width $w'$, and
the new sub-lobe $(a', b', m')$ is the result of Step 3. A new state $t'$ is build as follows.
\begin{itemize}
\item If $w' \le \Floor{\frac{l'}{8}}$, the next diagonal lobe state $t'$ is built using the lobes defined above.
\item If $w' > \Floor{\frac{l'}{8}}$, a new full state is built. The full state consists of a central box $\left( x, l+ (y_1-x_1), l + (y_2-x_2) \right)$, surrounding the main lobe $(x', y', l')$, and the new lobe of the diagonal lobe state becomes $(a', b', m')$. All other lobes are set to be empty. 
\end{itemize}

The new state $t'$ is returned and the algorithm terminates. 

\paragraph{\bf Correctness.}

Our analysis so far has shown that, if the algorithm invariant holds for state 
$t$, then it also holds for
state $t'$. Specifically, we have shown the following.
\begin{itemize}
\item Step 1 constructs the new main lobe. We have argued that if an almost
square critical box lies in the main lobe of $t$, it must also lie in the main
lobe of $t'$, since we have only excluded regions that cannot contain a
critical box.
\item We have shown that $\Up(f)$ is contained within the main and sub lobes of
$t'$. This is because $\Up(f)$ was contained within $t$, and all steps of the
algorithm only removed points that are provably not in~$\Up(f)$. 
\end{itemize}

As a last step of the proof of Lemma~\ref{lem:diaglobestate}, we show the area reduction at state $t'$ for the case where $w \leq \Floor{\frac{l}{8}}$. 

\begin{itemize}
\item Case 1 of Step 1 removes at least $1/2$ of the area of the original main
lobe $(x,y,l)$, under the weaker assumption $w < l$, as the remaining area is
$(y_1-x_1)(x_2-y_2) + \frac{l}{2}w$ compared to the original main lobe with
area $(y_1-x_1)(x_2-y_2) + lw$. Case 2 results in a new main lobe $(x,y,
\Floor{\frac{l}{2}} + 2w)$ which removes at least $1/8$ of the original lobe,
since $w \leq \Floor{\frac{l}{8}}$. The area of the main lobe in this case is defined as
$(y_1-x_1)(x_2-y_2) + (\frac{l}{2} + 2w)w$.
\item Cases 1, 2 and 5 of Step 2 maintain parts of the original main lobe as the
new sub-lobe at $t'$. Since the remaining area of the original main lobe is at most $7/8$, the query at $q^{\text{mid}}$ and subsequent points results in at least $1/2$ reduction of the area, due to Lemma \ref{lem:contrdown} halving along the diagonal lobe. Combined, these cases result in an overall state area equal to $15/16$ of the original main lobe.

\item 
The sub-lobe of state $t$ is either entirely removed in Step 1 or Step 2, or processed at Step 3. In this case, half of the lobe is removed by the constraints on $q^{\text{cent}}$. 

\begin{itemize}
\item If $m = 0$, the entire sub-lobe is empty.
\item If $m \geq 1$, then Cases 1 and 2 of Step 3 rule out half of the diagonal sub-lobe, due to Lemma \ref{lem:contrdown}.
\end{itemize}
\end{itemize}

So we have shown that the original main lobe and the original sub lobe have both
been reduced in area by a fraction of at least $1/16$, which implies $\area(t')
\le \frac{15}{16} \area(t)$. So if we create a diagonal-lobe state then we are
done. 

If $w> \Floor{\frac{l}{8}}$, the area of the new full state is increased. The full state consists of a central box and diagonal lobe. The original sub-lobe at state $t$ becomes the main lobe of the diagonal lobe state, thus resulting in no increase in area. However, as the diagonal lobe must be attached to the central box, this is needs to be extended to match the new lobe, thus leading to an area increase. The new central box $\left( x, l + (x_2-y_2),l+ (y_1-x_1) \right)$ corresponds to the original main lobe with two added triangles. The area of the box is then $l^2 + l \cdot w + (y_1-x_1)(x_2-y_2)$ as opposed to $l \cdot w + (y_1-x_1)(x_2-y_2)$ of the main lobe. Analysing the two quantities, we conclude that the area of the central box is at most 5 times the area of the original main lobe of the diagonal state.

\subsection{The Algorithm for Full States}
\label{sec:fullalg}

In this section we will prove the following lemma.

\begin{lemma}
\label{lem:full-state}
Let $t$ be a full-state that satisfies the invariant for some slice $s$, and whose central box $(x, h, w)$ satisfies $h \le c \cdot
w$ and $w \le c \cdot h$ for some constant $c$. There is an algorithm that
makes constantly many queries and either
\begin{itemize}
\item finds a point in $\Up(f)$,
\item declares that $\Up(f)$ is empty in the current slice, or
\item produces a new full-state $t'$ that satisfies the invariant for $s$ with $\area(t') \le 2/3 \cdot \area(t)$. 
\end{itemize}
Moreover, each step of the algorithm runs in time that is polynomial in the
representation of $f$.
\end{lemma}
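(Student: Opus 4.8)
The plan is to run a grid-search over the central box $(x,h,w)$, exactly in the spirit of the ``refining a square-shaped region'' step sketched in Section~\ref{sec:technical_overview}. First I would overlay a $k \times k$ grid of constantly many points on the central box, where $k$ is a constant chosen large enough (depending on the shape constant $c$ and on the $7k$ factor that will appear below). Query all $k^2$ points; this is $O(k^2) = O(1)$ queries. If any queried point $p$ lies in $\Up(f)$, return it and stop. Otherwise, since the invariant tells us that any almost-square critical box is entirely contained in the central box, and Lemma~\ref{lem:cb_up} tells us every point of a critical box is in $\Up(f)$, the failure to find an up-set point in our grid means that every almost-square critical box has height and width at most (roughly) $k$. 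In particular it lies in a region much smaller than the whole central box (using $h \le c\cdot w$, $w \le c\cdot h$, so once the box has, say, width $\ge 100 c k$ we have genuinely shrunk the bound).

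The next step is to search the grid for a CB-config (Definition~\ref{def:config_square}): a $2\times 2$ grid square whose lower-left corner moves strictly down in dimension~3 and whose upper-right corner moves weakly up, and with the two directional conditions in dimensions~1 and~2 on its horizontal/vertical edges. I would invoke the lemma asserted in the overview (Lemma~\ref{lem:config_cb}, i.e.\ the formal counterpart of ``if an almost-square critical box exists then the grid contains a CB-config''). There are two outcomes. If no CB-config is found in the grid, then $\Up(f) \cap s$ contains no almost-square critical box, and by Lemma~\ref{lem:cb_shape} this forces $\Up(f) \cap s = \emptyset$; the algorithm declares this and stops. If a CB-config is found, then by Lemma~\ref{lem:config_cb} all almost-square critical boxes lie inside a $7k \times 7k$ square $S$ surrounding the CB-config. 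I then form the new state $t'$ whose central box is (an appropriately shrunk version toward) $S$, and whose lobes are the left, bottom, and diagonal lobes of $S$ together with the carried-over existing sub-lobes of $t$ (which must be reduced by constantly many extra queries exactly as in Steps~2--3 of Section~\ref{sec:leftalg} and~\ref{sec:algdiag}, so that the sub-lobe areas also shrink by a constant fraction). By Lemma~\ref{lem:cbbound} the union of $S$ and its three lobes contains all of $\Up(f)\cap s$, so $\Up(f) \subseteq \points(t')$; and since every almost-square critical box lies in $S$, the invariant for $t'$ holds (with $S$ as the central box of the full state).

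For the area bound: the central box of $t'$ has area $O(k^2)$, a constant, whereas the central box of $t$, if we are not already in a base case, has area $\ge \omega(k^2)$ — so shrinking $k$ appropriately makes the central-box contribution drop below $1/3$ of $\area(t)$; the lobe contributions are handled by the Step~2--3-style queries that remove a constant fraction (at least $1/2$, using weak inequalities for the small cases as in Section~\ref{sec:leftalg}), so overall $\area(t') \le \tfrac{2}{3}\area(t)$, provided $\area(t)$ exceeds a constant threshold. (If $\area(t)$ is already below that threshold, the state has only constantly many points, and we brute-force query all of them: either some point is in $\Up(f)$, or by Lemma~\ref{lem:cb_shape} again $\Up(f)\cap s = \emptyset$; in neither case do we need a new state.) Each step queries $f$ at $O(1)$ points and does $O(\poly)$ bookkeeping per query, so each step runs in time polynomial in the representation of $f$.

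The main obstacle I anticipate is the CB-config step: proving rigorously that failing to find a CB-config in the grid really does certify the absence of almost-square critical boxes (hence $\Up(f)\cap s = \emptyset$), and — in the other direction — that a found CB-config pins down all almost-square critical boxes to the $7k\times 7k$ window $S$. Both directions rely on the monotonicity and contraction propagation lemmas (Lemmas~\ref{lem:contrdown} and~\ref{lem:monodown}) applied to the three directional conditions defining a critical box and the three conditions defining a CB-config, and on the ``almost-square'' guarantee of Lemma~\ref{lem:cb_shape}; making the constants line up (so that the $k\times k$ probe resolution is fine enough to catch a box of width $\le k$, and $S$ of size $7k$ genuinely contains every such box no matter where its lower-left corner sits relative to the detected CB-config) is the delicate bookkeeping. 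A secondary subtlety is ensuring the carried-over sub-lobes of the old full state are correctly merged and shrunk so that the $\tfrac23$ area factor holds simultaneously for the central box and all three lobes; this is routine given the machinery of Sections~\ref{sec:leftalg} and~\ref{sec:algdiag}, but needs to be stated.
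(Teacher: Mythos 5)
Your proposal follows the same structural outline as the paper's proof in Section~\ref{sec:fullalg}: probe a constant-sized grid over the central box, look for a CB-config, shrink to a window $S$ around it, and assemble a new full state from $S$, its lobes, and the pruned old lobes. The gap is in the area accounting. The paper sets the \emph{grid spacing} $k = \lceil w/100\rceil$, so the number of grid points (roughly $100 \times 100c$) is the constant, not $k$ itself; the $7k \times 7k$ window $S$ of Lemma~\ref{lem:config_cb} and the CB-config side $2k$ are all in units of that spacing. Calling $k$ a constant and saying the new central box has area ``$O(k^2)$, a constant'' conflates the count with the spacing.

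More importantly, the retained area from the old central box is not just the new central box $S$. By Lemma~\ref{lem:cbbound}, $\Up(f)$ must sit inside $S \cup \leftl(S) \cup \downl(S) \cup \diagl(S)$, and the pieces of $\leftl(S)$, $\downl(S)$, $\diagl(S)$ that lie inside the old central box become the new \emph{main} lobes of $t'$. Those are not removed, and they are not touched by the Step~2--3 queries you invoke (those prune only the \emph{old} lobes of $t$). In the paper's computation they contribute up to $700k^2$, $700k^2$, and $1449k^2$ respectively, versus $49k^2$ for the new central box, for a retained total of roughly $2898k^2$ out of $\approx 10000k^2$ --- under half, but not ``a constant.'' Combined with the at-least-$1/3$ pruning of the old lobes this yields $\area(t') \le \tfrac{2}{3}\area(t)$. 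Your write-up jumps to the conclusion without bounding the new main-lobe areas, and that bound is the heart of the argument; without it, nothing rules out the new main lobes eating most of the old central box and leaving $\area(t') > \tfrac{2}{3}\area(t)$.
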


\paragraph{\bf Step 1. Grid search on the central box.}

Let $(x, h, w)$ be the central box of $t$. Step 1 of the algorithm 
considers a square grid with constantly many points covering this box. 

Without loss of generality we will assume that $w \le h$, and we define $k =
\lceil w / 100 \rceil$. For technical convenience, we will consider the box
$(x, h', w')$ where $h'$ and $w'$ are the smallest integers satisfying $h' \ge
h$ and $w' \ge w$ such that both $h'$ and $w'$ are divisible by $k$. Observe
that since $\CBox(x, h, w) \subseteq \CBox(x, h', w')$, we have that the
algorithm invariant is also satisfied by $(x, h', w')$, and moreover since $w
\le h$ we have 
\begin{align*}
\area(x, h', w') &= h \cdot w + \left\lceil \frac{w}{100} \right\rceil \cdot h
+ \left\lceil
\frac{w}{100} \right\rceil \cdot w
+ \left(\left\lceil \frac{w}{100} \right\rceil\right)^2 \\
&\le \left(1 + \frac{3}{100} \right) h \cdot w = 1.03 \cdot \area(x, h, w).
\end{align*}
We use the grid 
$$R = \{x_1 + 0, x_1 + k, x_1 + 2k, \dots, x_1 + w' \}
\times \{x_2 + 0, x_2 + k, x_2 + 2k, \dots, x_2 + h' \},$$ 
meaning that $R$ is a square grid that spans
the box defined by $(x, h', w')$, where each grid square has side-length
$k$. Since by assumption we have $w \le c \cdot h$ for some constant $c$, we
also have $w' \le c' \cdot h'$ for some constant $c'$, and so $R$ contains
constantly many points. 

\config

Hence we can query every point in $R$ using constantly many queries.
If some point in $R$ lies in $\Up(f)$, then the algorithm can terminate. If we do not
find a point in $\Up(f)$, we will then look for a \emph{CB-config} in $R$,
which is a $2 \times 2$ box in $R$ satisfies the following properties.
\begin{itemize}
\item We have that $f_1$ changes direction on either the bottom or top edge of
the box, meaning that $f_1$ moves weakly upwards on the left-hand side, while
$f_1$ moves strictly downwards on the right-hand side.
\item Likewise, we have that $f_2$ changes direction on either left or right
edges of the box. 
\item The bottom-left point of the box moves strictly downwards in the third
dimension, while the top-right point of the box moves weakly upwards.
\end{itemize}
The four possible CB-configs are shown in Figure~\ref{fig:config}, and we now
provide a formal definition.

\begin{definition}[CB-Config]
\label{def:config_square}
A box $(x, 2k, 2k)$ with $x \in R$ is a \emph{CB-config} if all of the
following properties are satisfied.

\begin{enumerate}
\item For dimension 1 we have

\noindent
\begin{minipage}{.5\textwidth}
\begin{align*}
f_1(x) &\ge x_1 \\
f_1(x + 2k \cdot e_1) &< x_1 + 2k
\end{align*}
\end{minipage}
or
\begin{minipage}{.5\textwidth}
\begin{align*}
f_1(x + 2 k \cdot e_2) &\ge x_1 \\ 
f_1(x + 2 k \cdot e_1 + 2 k \cdot e_2) &< x + 2k.
\end{align*}
\end{minipage}
\item Likewise for dimension 2 we have

\noindent
\begin{minipage}{.5\textwidth}
\begin{align*}
f_2(x) &\ge x_2 \\
f_2(x + 2k \cdot e_2) &< x_2 + 2k
\end{align*}
\end{minipage}
or
\begin{minipage}{.5\textwidth}
\begin{align*}
f_2(x + 2 k \cdot e_1) &\ge x_2 \\ 
f_2(x + 2 k \cdot e_1 + 2 k \cdot e_2) &< x + 2k.
\end{align*}
\end{minipage}

\item For dimension 3 we have
\begin{align*}
f_3(x) &< x_3 \\
f_3 (x + 2\eps \cdot e_1 + 2\eps \cdot e_2) &\ge x_3.
\end{align*}

\end{enumerate}
\end{definition}


We begin by showing that a CB-config will exist so long as $\Up(f) \cap s$ is
non-empty. Recall that algorithm invariant ensures that if $\Up(f) \cap s$ is
non-empty, then the central box of $t$ contains an almost square critical box.
Since our grid search using $R$ failed to find a point in $\Up(f)$, and since
all points in a critical box are in $\Up(f)$ by Lemma~\ref{lem:cb_up}, this
places an upper bound on the side-lengths of the almost square critical box.
Specifically, if the box was larger than $k \times k$, then some point in $R$
would lie in the box, and we would have found a point in $\Up(f)$, so we know
that the box is no larger than $k - 1 \times k$ or $k \times k - 1$. 

This implies that the almost square critical box, if it exists, must lie in a
$2k \times 2k$ square of $R$. The next lemma shows that if an almost square
critical box lies inside a $2k \times 2k$ square of $R$, then that square will
be a CB-config.





\begin{lemma}\label{lem:cb_config}
Let $S = (x, 2k, 2k)$ be a square with $x \in R$. If there exists an almost square
critical box inside $S$, then $S$ is a CB-config. 
\end{lemma}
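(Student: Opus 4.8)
The plan is to show that if $(x', h, w)$ is an almost square critical box contained in the square $S = (x, 2k, 2k)$, then each of the three defining conditions of a CB-config (Definition~\ref{def:config_square}) is satisfied by $S$. The starting observation is that since the grid search in Step~1 did not find a point in $\Up(f)$, and since Lemma~\ref{lem:cb_up} tells us $\CBox(x', h, w) \subseteq \Up(f)$, no point of $R$ can lie in the critical box. This forces $h, w \le k$ (if either side length were at least $k$, the critical box would span a full grid-cell edge and contain a point of $R$). Combined with $|h - w| \le 1$ and the box being inside $S$, the bottom-left corner $x'$ of the critical box satisfies $x_1 \le x'_1 \le x_1 + k$ and $x_2 \le x'_2 \le x_2 + k$, and the box fits inside the lower-left $k \times k$ sub-square of $S$ with room to spare.

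For dimension~1: by the critical box definition, $f_1(x' + w \cdot e_1) \ge (x' + w\cdot e_1)_1$ and $f_1(x' + (w+1)\cdot e_1) < (x' + (w+1)\cdot e_1)_1$, so there is a point where $f_1$ ``turns over'' within horizontal distance $w+1 \le k+1 \le 2k$ of $x'$. I would use Lemma~\ref{lem:monodown} (weak version) to push the weakly-upward behavior leftward and downward to the left edge of $S$: the point $x$ (or $x + 2k \cdot e_2$, whichever is relevant) lies weakly below $x' + w\cdot e_1$ in dimension~1 after adjusting coordinates, so $f_1$ there is weakly upward. Symmetrically, I would use Lemma~\ref{lem:contrdown} (strict upward-cone version applied to the turn-over point $x' + (w+1)\cdot e_1$, which moves strictly left) together with Lemma~\ref{lem:monodown} to conclude that the right edge of $S$ (the point $x + 2k\cdot e_1$ or $x + 2k\cdot e_1 + 2k\cdot e_2$) moves strictly left in dimension~1 — here I need that $x + 2k \cdot e_1$ is far enough right, i.e.\ $x_1 + 2k \ge x'_1 + w + 1$, which holds since $x'_1 \le x_1 + k$ and $w \le k$, so $x'_1 + w + 1 \le x_1 + 2k + 1$ — wait, this is off by one, so I will need the slightly sharper bound $w \le k - 1$ or $x'_1 \le x_1 + k - 1$, which should follow from the critical box having a point $x' + (w+1)\cdot e_1$ strictly inside $S$; I expect the constant $k$ and the $2\times 2$ grid of cells to have been chosen precisely so this works. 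This gives condition~1 (one of its two disjunctive clauses). Condition~2 is entirely symmetric, exchanging dimensions~1 and~2 and using the height constraints of the critical box.

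For dimension~3: the critical box condition~\ref{itm:cb4} gives $f_3(x' - e_1 - e_2) < x'_3$ (or $x'$ is on the boundary, which I'd argue cannot happen when $x' \in$ interior of $S$, or handle as a degenerate sub-case). Since $x - e_1 - e_2$ — wait, rather the bottom-left corner $x$ of $S$ — satisfies $x \le x' - e_1 - e_2$ coordinatewise (again using $x'_1 \ge x_1 + 1$, $x'_2 \ge x_2 + 1$, which needs the critical box not to touch the left/bottom edges of $S$), Lemma~\ref{lem:monodown} (strict) gives $f_3(x) < x_3$. For the top-right corner: $x' \in \Up(f)$ means $f_3(x') \ge x'_3$, and $x + 2k\cdot e_1 + 2k\cdot e_2 \ge x'$ with equality in no coordinate, so by Lemma~\ref{lem:monodown}... actually here I need $f_3$ weakly \emph{up} at a point \emph{above} $x'$, which requires the weak upward-monotonicity version (fourth bullet of Lemma~\ref{lem:monodown}) — but that lemma requires the coordinates to be equal in dimension~3, which they are since we're in a 2D slice. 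So $f_3(x + 2k\cdot e_1 + 2k\cdot e_2) \ge x_3$, giving condition~3.

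The main obstacle I anticipate is the careful off-by-one bookkeeping: ensuring the critical box, being at most $k \times k$ and inside the $2k \times 2k$ square $S$, leaves exactly enough slack on all four sides so that $S$'s corners genuinely lie in the upward/downward cones and monotone shadows of the critical box's witness points. This is where the choice of $k = \lceil w/100 \rceil$ and the decision to look for $2k \times 2k$ configs (rather than, say, $k \times k$) should pay off, and I'd want to state the slack inequalities explicitly (e.g.\ $x'_1 + w + 1 \le x_1 + 2k$ and $x'_2 + h + 1 \le x_2 + 2k$) and verify each from $x' \in S$, $h,w \le k$, $|h-w|\le 1$, before invoking the cone and monotonicity lemmas.
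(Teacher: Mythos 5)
Your high-level strategy matches the paper's: propagate the displacement information at the critical box's witness points outward to the corners of $S$ using the cone lemma (Lemma~\ref{lem:contrdown}) and the monotone-shadow lemma (Lemma~\ref{lem:monodown}), and read off the CB-config conditions. But there are three genuine issues, one of them a real structural gap.

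First, a small error: the claim that the critical box ``fits inside the lower-left $k \times k$ sub-square of $S$ with room to spare'' is false. With $h, w$ both close to $k$ and the bottom-left corner near $x + (k,k)$, the box extends into all four quadrants. The correct statement (which you also give) is only that the \emph{corner} $y$ lies in the lower-left quadrant; the box itself is not confined there. Your argument for dimension 1 should be carried out relative to the box corners, not relative to a presumed quadrant.

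Second, you flag the slack requirement $x_1 + 2k \ge y_1 + w + 1$ but leave it unresolved, appealing to the expectation that the parameters were chosen to make it work. That is not a proof, and in fact the inequality does \emph{not} follow from ``inside $S$'' alone: if $y_1 + w = x_1 + 2k$ the cone $\UC_1(y + (w+1) e_1)$ does not reach $(x_1 + 2k, y_2)$ and the argument breaks. What rescues the paper is the way $S$ is chosen just before the lemma is invoked: the grid search fails, so (by Lemma~\ref{lem:cb_up}) no grid point lies in the box, so $\min(h,w) \le k-1$; then the $2k \times 2k$ block $S$ is taken with $x_1 = x_1' + ak$ where $a$ is chosen so that $x_1 \le y_1 \le x_1 + k - 1$, which together with $w \le k$ gives $y_1 + w + 1 \le x_1 + 2k$. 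You need to make this choice and inequality chain explicit; without it the cone argument has a real hole.

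Third — and this is the substantive gap — the cone/shadow argument as you describe it yields $f_1(x + 2k\cdot e_2) \ge x_1$ (top-left, weakly up) and $f_1(x + 2k\cdot e_1) < x_1 + 2k$ (bottom-right, strictly down). But CB-config condition~1 is a disjunction of two \emph{conjunctions}: the bottom edge turns over (bottom-left weakly up \emph{and} bottom-right strictly down), or the top edge turns over (top-left weakly up \emph{and} top-right strictly down). You have one fact from each clause, which by itself verifies neither. The paper closes this with an explicit case split that you do not reproduce: either $x \in \DC_1(y + w\cdot e_1)$, i.e.\ $y_2 - x_2 \le (y_1 - x_1) + w$, so the bottom-left corner also satisfies $f_1(x) \ge x_1$ and the bottom-edge clause holds; or $x + 2k(e_1+e_2) \in \UC_1(y + (w+1)\cdot e_1)$, i.e.\ $y_2 - x_2 \ge (y_1 - x_1) + w + 1$, so the top-right corner also satisfies $f_1(\cdot) < x_1 + 2k$ and the top-edge clause holds. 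These two inequalities are exhaustive by integrality, so one clause always holds. Your phrase ``whichever is relevant'' gestures at this, but you need to show that the two geometric cases are exhaustive and that each supplies the missing half of one clause. The same remark applies, symmetrically, to condition~2. Without this case split the proof does not go through.
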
  
\begin{proof}


Suppose that the almost square critical box that lies inside $S$ is defined by
$C = (y, h, w)$, and note that we therefore have $x \le y$, and $h, w \le k$.


\cbDirections

\configFromCb

The definition of a critical box gives us information about the displacements
of $f$ around $C$, as shown in Figure~\ref{fig:cb_directions}, and these then
imply displacements at the corners of $S$, as shown in
Figure~\ref{fig:config_from_cb}. The properties for dimension 1, shown in blue
in the figures, can be proved formally as follows.

\begin{align*}
f_1 (y + w \cdot e_1) \geq y_1 + w & \Rightarrow f_1(x + (y_2 - x_2)
\cdot e_2) \geq x_1 & \text{(By Lemma~\ref{lem:contrdown})} \\
& \Rightarrow f_1 (x + 2k \cdot e_2) \geq x_1 & \text{(By Lemma~\ref{lem:monodown})}\\
\\
f_1 (y + (w+1) \cdot e_1) < y_1 + (w+1) & \Rightarrow f_1(x + (y_2 -
x_2) \cdot e_2 + 2k \cdot e_1) < x_1 + 2k & \text{(By Lemma~\ref{lem:contrdown})} \\
&\Rightarrow f_1 (x + 2k \cdot e_1) < x_1 + 2k & \text{(By
Lemma~\ref{lem:monodown})} 
\end{align*}
Symmetrically, the properties for dimension 2, shown in red in the figures, can
be proved formally as follows.
\begin{align*}
f_2 (y + h \cdot e_2) \geq y_2 + h & \Rightarrow f_2(x + (y_1 - x_1) \cdot e_1)
\ge x_2 & \text{(By Lemma~\ref{lem:contrdown})} \\
&\Rightarrow f_2 (x + 2k \cdot e_1) \geq x_2 & \text{(By Lemma~\ref{lem:monodown})}\\
\\
f_2 (y + (h+1) \cdot e_2) < y_2 + (h+1) &\Rightarrow f_2(x + (y_1 - x_1) \cdot e_1 + 2k
\cdot e_2) < x_2 + 2k & \text{(By Lemma~\ref{lem:contrdown})} \\
&\Rightarrow f_2 (x + 2k \cdot e_2) < x_2 + 2k & \text{(By Lemma~\ref{lem:monodown})}
\end{align*}
Finally, the properties for dimension 3 can be shown in the following way.
\begin{align*}
f_3 (y - e_1 - e_2) < y_3 & \Rightarrow f_3 (x) < x_3 & \text{(By Lemma~\ref{lem:monodown})}\\
f_3 (y + (w+1) \cdot e_1 + (h+1) \cdot e_2)) \geq y_3 & \Rightarrow f_3(x + 2k
\cdot e_1 + 2k \cdot e_2) \geq x_3 & \text{(By Lemma~\ref{lem:monodown})}
\end{align*}

So, we have shown that the requirements imposed by a CB-config for the third
dimensions are satisfied, but we must still prove that the requirements for
dimensions 1 and 2 hold.

For dimension 1, note that we must either have $x \in \DC_1(y + w \cdot e_1)$,
which would imply $f_1(x) \ge x_1$ by Lemma~\ref{lem:contrdown}, or we have
$x + 2k \cdot e_1 + 2k \cdot e_2 \in \UC_1(y + (w + 1) \cdot e_1)$, which would imply $f_1(x +
2k \cdot e_1 + 2k \cdot e_2) < x_1 + 2k$. So in either case the CB-config properties hold for
dimension 1.

Symmetrically, for dimension 2 we either have 
$x \in \DC_2(y + h \cdot e_2)$,
which would imply $f_2(x) \ge x_2$ by Lemma~\ref{lem:contrdown}, or we have
$x + 2k \cdot e_1 + 2k \cdot e_2 \in \UC_2(y + (h + 1) \cdot e_2)$, which would imply $f_2(x + 2k \cdot e_1 + 2k \cdot e_2) < x_2 + 2k$. So in either case the CB-config properties hold for
dimension 2.
\end{proof}

The following lemma states that if we find a CB-config, then all almost square
critical boxes must fall inside a $7k \times 7k$ grid square that contains the
CB-config. 

\begin{lemma}\label{lem:config_cb}
Let $(x, 2k, 2k)$ with $x \in R$ be a CB-config, and let $(y, h, w)$ be an
almost square critical box with $h, w \le k$. For each $i \in \{1, 2\}$ we have 
$$x_i - 3k \le y_i \le x_i + 4k.$$


\end{lemma}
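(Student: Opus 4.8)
The plan is to show that if an almost square critical box $(y,h,w)$ with $h,w \le k$ escapes the $7k \times 7k$ window around the CB-config $(x,2k,2k)$ in any direction, then one of the three defining properties of the CB-config is contradicted. I would treat the horizontal and vertical coordinates separately but in parallel, and within each I would separately establish the lower bound $y_i \ge x_i - 3k$ and the upper bound $y_i \le x_i + 4k$; by the near-symmetry of the CB-config definition (dimensions 1 and 2 play symmetric roles, with dimension 3 fixed) the argument for $i=2$ is a mirror of the argument for $i=1$, so I would write one carefully and assert the other.

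For the upper bound, say $y_1 \ge x_1 + 4k + 1$: then the bottom-left corner $y$ of the critical box lies strictly to the right of the right edge of the CB-config ($x_1 + 2k$), by a margin of more than $2k$. The CB-config guarantees a point $p$ on its right edge (either $x + 2k e_1$ or $x + 2k e_1 + 2k e_2$) with $f_1(p) < p_1$, and a point $p'$ on its left edge with $f_1(p') \ge p'_1$. I would feed the strictly-left-moving point into Lemma~\ref{lem:contrdown} (upward cone in dimension 1) together with Lemma~\ref{lem:monodown} to propagate $f_1(q) < q_1$ to the whole slab to the right; since the whole critical box lies in that slab, some point of $\CBox(y,h,w)$ would move strictly left in dimension 1, contradicting Lemma~\ref{lem:cb_up} which says every point of the box is in $\Up(f)$. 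The arithmetic check here is just that, given $h,w \le k$ and the $2k$-separation, the critical box really does sit inside the cone $\UC_1(p)$ (or is reachable by a monotonicity step from it) — this is routine and is exactly the kind of bookkeeping already done in Lemma~\ref{lem:cb_config}. The lower bound $y_1 \le x_1 - 3k - 1$ is handled dually using the weakly-right-moving point $p'$ on the left edge of the CB-config: $\DC_1(p')$ (downward cone) plus monotonicity forces $f_1 \ge \mathrm{id}$ on a slab to the left, so the critical box's required right-edge point $y + (w+1)e_1$ with $f_1 < \mathrm{id}$ cannot exist — but one must be careful that this point lies far enough left, which is why the lower bound tolerance is $3k$ rather than $4k$ (the box extends up to $w+1 \le k+1$ units rightward from $y$, eating into the margin, whereas for the upper bound the relevant corner $y$ itself is the leftmost point of the box).

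The dimension-3 property of the CB-config is what pins down which of the two edge-options in properties 1 and 2 is available, but for the coordinate bounds I expect dimension 3 to play only a supporting role: it is already used inside the definition to ensure the CB-config "touches" the down-set in the third dimension, and the critical box likewise has its dimension-3 anchor via point~\ref{itm:cb4} of Definition~\ref{def:cb}. I would mention it only if the cone/monotonicity propagation in dimensions 1 and 2 needs to be combined with the dimension-3 behaviour to rule out a diagonal escape; I suspect it does not, since escaping in, say, $y_1$ alone already produces the contradiction above regardless of where $y_2$ sits.

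The main obstacle will be getting the constants exactly right — verifying that $-3k$ and $+4k$ (and not, say, $\pm 4k$ uniformly) are the correct asymmetric thresholds, by carefully accounting for (i) the $2k$ side length of the CB-config, (ii) the up-to-$k$ side lengths of the critical box, and (iii) the extra $+1$ that appears because critical boxes are defined via the point *one step beyond* the box ($y + (w+1)e_1$, $x + (h+1)e_2$, $y - e_1 - e_2$). I would lay out a single diagram-free inequality chain for the worst case in each of the four directions and check that the cone containment holds in each, being especially attentive to the case where the CB-config uses its *top* edge for the dimension-1 condition (or its *right* edge for dimension 2), since then the relevant strictly-moving point is $x + 2k e_1 + 2k e_2$ and the cone is anchored higher up, which is precisely the configuration that forces the asymmetry. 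Once those four inequality chains check out, the lemma follows immediately by contraposition.
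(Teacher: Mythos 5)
Your general strategy — cones via Lemma~\ref{lem:contrdown}, monotone propagation via Lemma~\ref{lem:monodown}, and a contradiction with either a CB-config property or the defining inequalities of the critical box — is the right one, and your lower-bound argument is close to the paper's. But you explicitly dismiss the dimension-3 property of the CB-config as ``only a supporting role'' and ``suspect it does not'' need to be combined with dimensions 1 and 2. That suspicion is wrong, and it is the crux of the upper bound.

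Concretely: suppose $y_1 > x_1 + 4k$ \emph{and} $y_2 > x_2 + 2k$. Then the critical box lies strictly above the diagonal boundary of $\UC_1(x + 2ke_1)$, so it escapes the dimension-1 cone; by symmetry it can also escape the dimension-2 cone. No amount of dimension-1/2 cone-plus-monotonicity propagation reaches a box that drifts away along (or above) the principal diagonal. The paper handles exactly this case by invoking the dimension-3 half of the CB-config: $f_3(x + 2ke_1 + 2ke_2) \ge x_3$, while the critical box requires $f_3(y - e_1 - e_2) < y_3$; since $y - e_1 - e_2 \ge x + 2ke_1 + 2ke_2$ (both coordinates exceed $x_i + 2k$), Lemma~\ref{lem:monodown} gives a contradiction. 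Your argument omits this step entirely, so as written it is incomplete. Dimension 3 also quietly does work in the lower bound: the paper first uses the condition $f_3(x) < x_3$ to constrain $y_2 \ge x_2$ before the dimension-1 propagation applies, whereas your lower-bound sketch would need $y_2$ bounded near $x_2$ for the cone $\DC_1$ to contain the critical box's right-edge point — and without dimension 3 there is nothing preventing $y_2$ from drifting away downward. Separately, your heuristic explanation of the $3k$ versus $4k$ asymmetry (attributing it to the critical box extending $w{+}1$ rightward) does not match the actual source of the $4k$, which is the $45^\circ$ aperture of $\UC_1(x + 2ke_1)$: one needs $y_1 - (x_1 + 2k) > 2k$ to force $y_2 > x_2 + 2k$, and that is where the extra $2k$ appears.
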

\begin{proof}


\configExcluded

Figure~\ref{fig:config_excluded} shows the four possible CB-configs, and also
shows the excluded regions for $\Up(f)$ that arise from
Lemmas~\ref{lem:contrdown} and~\ref{lem:monodown}. Since by
Lemma~\ref{lem:cb_up} we have that critical boxes can only contain points from
$\Up(f)$, we know that no critical box can intersect the shaded regions. Note
that in Case (a) of Figure~\ref{fig:config_excluded}, the non-shaded region is
weakly larger than the non-shaded region in any other case, so we shall proceed
assuming that we are in Case (a). 

We start by proving that $y_1 \ge x_1 - 3k$ by assuming, for the sake of
contradiction, that $y_1 < x_1 - 3k$. As shown in
Figure~\ref{fig:config_excluded}, if $y_1 < x_1$, then to prevent $(y, h, w)$
intersecting the shaded regions, we must have $y_2 \ge x_2$ and
$h \le 2k$. This then implies $w \le 2k + 1$, since the critical box is almost
square, and so 
\begin{align*}
y_1 + w + 1 &< x_1 - 3k + w + 1 \\
&\le x_1 + k + 2 \\
&\le x_1 + 1,
\end{align*}
where we used the fact that $k \ge 1$.
In particular, this means that $y_1 + w + 1 \le x_1$.
Since $(y, h, w)$ is a critical box we have $f_1(y + (w+1) \cdot e_1) < y_1 + w +
1$, and so Lemma~\ref{lem:monodown} implies that $f_1(x) < x_1$, but this
contradicts the fact that $(x, 2k, 2k)$ is a CB-config: in Cases (a) and (c) this is
immediate, while in the other two we can apply Lemma~\ref{lem:monodown} to the
point $x + 2k \cdot e_2$ to 
argue that $f_1(x) \le x_1$.
The same proof can be applied symmetrically to argue that $y_2 \ge x_2 - 3k$.

Next we prove that $y_1 \le x_1 + 4k$ by assuming, for the sake of
contradiction, that $y_1 > x_1 + 4k$. Note that to avoid intersecting with
$\UC_1(x + 2k \cdot e_1)$ we must have $y_2 > 2k$. This means that $y - e_1 - e_2
\ge x + 2k \cdot e_1 + 2k \cdot e_2$. Since a CB-config insists that $f_3(x +2k \cdot e_1 + 2k \cdot e_2) \le
x_3$, we can apply Lemma~\ref{lem:monodown} to argue that $f_3(y - e_1 - e_2) \le
x_3$, which contradicts the fact that $(y, h, w)$ is a critical box.
The same proof can be applied symmetrically to argue that $y_2 \le x_1 + 4k$.
\end{proof}

With
Lemmas~\ref{lem:cb_config}
and~\ref{lem:config_cb} in hand, we can now specify Step 1 of the algorithm.

\begin{enumerate}
\item Query every point in $R$. This takes constantly many queries.
\item If there exists a point $x \in R \cap \Up(f)$, then terminate.
\item Search for a CB-config in $R$. Since $R$ contains constantly many points,
this operation takes constant time.
\item If there is no CB-config in $R$, then the contrapositive of
Lemma~\ref{lem:cb_config} implies that there are no almost square critical
boxes in $(x, h, w)$, and so the algorithm invariant can be used to conclude
that $\Up(f) \cap s$ is empty. Hence the algorithm can correctly declare
$\Up(f) \cap s$ to be empty, and terminate.
\item Otherwise, let $(a, 2k, 2k)$ be the CB-config in $R$. Let $b = a -
3k \cdot e_1 - 3k \cdot e_2$ and observe that by Lemmas~\ref{lem:config_cb} and~\ref{lem:cb_shape} we have that if $\Up(f)
\cap s$ is non-empty, then there exists an almost square critical box in the
square $(b, 7k, 7k)$. 
\end{enumerate}
So Step 1 of the algorithm either terminates, or gives us the square $S = (b, 7k,
7k)$, which we will use to define a new, smaller, full state in Step 2.

\paragraph{\bf Step 2. Defining the new full state.}

From Step 1, we are given the square $S$, and this square
satisfies the algorithm invariant for the central box of a full state. The task
now is to build a new full state $t'$ that uses $S$ as its central
box. 

If $\Up(f)$ is non-empty, then there exists an almost square critical box $b$ by
Lemma~\ref{lem:cb_shape}, and as we have argued, this critical box must lie in
$S$. Moreover, from Lemma~\ref{lem:cbbound} we know that $\Up(f)$ is contained
within $b$ and its three lobes. Observe that since $b$ is contained in $S$, we
therefore have 
$$\Up(f) \subseteq S \cup \leftl(S) \cup \downl(S) \cup \diagl(S).$$
So our intention is to define a new state using $\leftl(S)$ as the new left
lobe, $\downl(S)$ as the new bottom lobe, and $\diagl(S)$ as the new diagonal
lobe.

However, we must deal with the existing lobes of $t$, which may not have been
fully ruled out. 
Let $(x^l, h^l, w^l)$ and $(y^l, u^l, v^l)$ be the left lobe state
associated with $t$.
For $\leftl(S)$ we use the following procedure.
\begin{enumerate}
\item First compute the intersection $I$ of $\leftl(S)$ with the central box of
$t$.
\item If $(x^l, h^l, w^l) \cap \leftl(S) = \emptyset$, then all points in the
left-lobe state are ruled out by Lemma~\ref{lem:cbbound}, we can use $I$ as
the new left-lobe state, with an empty sub-lobe. 
\item Otherwise, we remove all points in 
$(x^l, h^l, w^l)$ that are not in $\leftl(S)$ giving a smaller box $(\bar{x}^l,
\bar{h}^l, \bar{w}^l)$, and we then apply Steps 2 and 3 from the algorithm in
Section~\ref{sec:leftalg} to $I$ and 
$(\bar{x}^l, \bar{h}^l, \bar{w}^l)$. These steps spend constantly many queries,
and return a single box $I'$ while only removing points that are guaranteed to not lie in $\Up(f)$. 
\item We then remove all points in 
$(y^l, u^l, v^l)$ that are not in $\leftl(S)$ giving a smaller box 
$(\bar{y}^l, \bar{u}^l, \bar{v}^l)$.
\item We then build the left-lobe state using $I'$ as the main lobe, and 
$(\bar{y}^l, \bar{u}^l, \bar{v}^l)$ as the sub-lobe.
\end{enumerate}
We can use the analogous approach to deal with $\downl(S)$. For $\diagl(S)$ we
also use the same approach, but here we use Steps 2 and 3 from Section~\ref{sec:algdiag}.

So we can spend constantly many queries to build a new full state $t'$. The
algorithm now terminates and returns $t'$.

\paragraph{\bf Correctness.}

We have shown that $t'$ satisfies the algorithm invariant whenever $t$
satisfies that invariant. Specifically, we have shown the following.
\begin{itemize}
\item Step 1 has shown that the central box of $t'$ contains an almost square
critical box whenever the central box of $t$ contains an almost square critical
box. 
\item As argued in Step 2, if $\Up(f)$ is non-empty, then $\Up(f)$ is contained
in $S$ and its lobes. Step 2 may remove points from the lobes, but it only
removes points that are provably not in $\Up(f)$. So we have that $\Up(f)$ is
contained in $t'$. 
\end{itemize}
To complete the proof, we must show that the area of $t'$ has been sufficiently
reduced. 

We start by computing the amount of area that has been removed from the central
box of $t$. Recall that we have divided this box into a $100 \times 100$ grid
of squares of side-length $k$.
\begin{itemize}
\item Our new central box has side lengths of $7k$, so we have retained $49
k^2$ area in the new central box.
\item The new main lobe of the left-lobe state has height $7k$ and width at
most $100k$, so its area is at most $700 k^2$.
\item Likewise, the new main lobe of the bottom-lobe state has width $7k$ and
height at most $100k$, so its area is at most $700 k^2$.
\item The new main lobe of the diagonal-lobe state $(x^d, y^d, l^d)$ has $x^d_2
- y^d_2 = 7k$, and 
$y^d_1 - x^d_1 = 7k$, and also $l \le 100 k$. So its area is at most $(107^2 -
100^2) \cdot k^2 = 1449 k^2$.
\end{itemize}
So in total we have kept at most $2898 k^2$ area from the central box, and its
total area was $(100 k)^2 = 10000 k^2$. We must also remember that we increased
the area of the central box by a factor of $1.03$ at the start of the
algorithm, so in total we have kept $(2898/10000) \cdot 1.03 < 0.5$ of
the area of the central box, and so we have eliminated more than half of its
area.

For the left, bottom, and diagonal states, observe that we have applied Step 3
to any area that was not already eliminated from these states, and so as we
argued in Sections~\ref{sec:leftalg} and~\ref{sec:algdiag}, we have removed at
least $1/3$ of the area from these states.

Hence we have removed at least $1/3$ of the area from each component of $t$, so we
have that $\area(t') \le 2/3 \cdot \area(t)$. This completes the proof of
Lemma~\ref{lem:full-state}.

\subsection{Moving Between Slices}
\label{sec:jump}

When we move to a new slice, we need to transpose the existing algorithm state
into the new slice, and then reestablish the algorithm invariant. This section
describes how this is achieved. All of the lemmas in this section show that we
can produce a new invariant-satisfying state whose size \emph{increases} by at most a
constant. We will show in our running time analysis that this is not a problem,
because we only increase the area in this way whenever we jump to a new slice,
and we only jump to a new slice $\log n$ times in total. 

\paragraph{\bf Transposing a state to a new slice.}

When we move to a new slice, we translate the existing state along the
principle diagonal vector $(1, 1, 1)$. The following lemma shows that if the
original state contained the up-set in the old slice, then the translated state
will contain the up-set in the new slice. 

\begin{lemma}
\label{lem:translate}
Let $t'$ be a state in slice $s'$ where $j'$ is the fixed coordinate, and suppose
that $\Up(f) \cap s' \subseteq t'$. Let $s$ be a slice where $j$ is the fixed
coordinate with $j > j'$. If we construct a new state $t$ for $s$ in which
each box and diagonal box of $t'$ is translated by the vector $(j - j') \cdot
(1, 1, 1)$, then $\Up(f) \cap s \subseteq t$.
\end{lemma}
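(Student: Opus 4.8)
The plan is to prove Lemma~\ref{lem:translate} by showing that any point $x$ that lies in $\Up(f) \cap s$ corresponds, after translating backward along the diagonal, to a point in $\Up(f) \cap s'$, which by hypothesis lies in $t'$, and hence $x$ lies in the translated state $t$. So first I would take an arbitrary $x \in \Up(f) \cap s$ and consider the point $x' = x - (j - j') \cdot (e_1 + e_2 + e_3)$. Since $j$ is the fixed coordinate of $s$ with value (say) $j$, and we are moving to the slice $s'$ whose fixed coordinate has value $j'$, the point $x'$ indeed lies in $s'$: its first two coordinates are $x_1 - (j-j')$ and $x_2 - (j-j')$, which still lie within $G$ because $j > j'$ means we are subtracting a non-negative quantity, and one must check this does not fall below the lower boundary — but this is exactly where the preprocessing assumption (Lemma~\ref{lem:updownboundary}) is used, since $x \in \Up(f)$ forces $x$ away from the lower boundary in dimensions $1$ and $2$.

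The key step is to argue $x' \in \Up(f)$. Here I would invoke Lemma~\ref{lem:contrdown}: since $x \in \Up(f)$ we have $f_i(x) \ge x_i$ for all $i \in \{1, 2, 3\}$, and the point $x'$ lies in $\DC_1(x) \cap \DC_2(x)$ (and trivially $f_3(x') \ge x'_3$ follows by the same cone argument applied in dimension $3$, or directly from the fact that $x' \in \DC_3(x)$ in the full three-dimensional picture, noting that the third coordinate also decreases by $j - j'$). More carefully: because $x' = x - c\cdot(e_1 + e_2 + e_3)$ with $c = j - j' \ge 0$, we have $\|x - x'\|_\infty = c = x_i - x'_i$ for each $i$, so $x' \in \DC_i(x)$ for every $i \in \{1,2,3\}$. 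Applying the weak downward-cone part of Lemma~\ref{lem:contrdown} three times (once per dimension, using $f_i(x) \ge x_i$) gives $f_i(x') \ge x'_i$ for all $i$, i.e.\ $x' \in \Up(f)$. Hence $x' \in \Up(f) \cap s'$, so by hypothesis $x' \in \points(t')$, meaning $x'$ lies in all the boxes and diagonal boxes defining $t'$.

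It remains to note that each box and diagonal box of $t$ is, by construction, the translate by $c \cdot (e_1 + e_2 + e_3)$ of the corresponding region of $t'$, restricted to the slice $s$ (the translation by $(1,1,1)$ in the ambient $G$ moves the two free coordinates by $c$ and the fixed coordinate from $j'$ to $j$, as intended). Since $x' $ lies in region $B'$ of $t'$, the translate $x = x' + c\cdot(e_1 + e_2 + e_3)$ lies in the translate of $B'$, which is the corresponding region $B$ of $t$. As this holds for every defining region, $x \in \points(t)$. Since $x$ was an arbitrary element of $\Up(f) \cap s$, we conclude $\Up(f) \cap s \subseteq \points(t) = t$, completing the proof.

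The main obstacle I anticipate is the boundary bookkeeping in the first step: one must be certain that $x'$ genuinely lies inside the grid $G$ and on the slice $s'$, which requires that the translation does not push $x$ below coordinate $1$ in dimensions $1$ or $2$. This is precisely guaranteed by the preprocessing step in Lemma~\ref{lem:updownboundary} (together with Lemma~\ref{lem:boundaryin}), which ensures that no point of $\Up(f)$ sits on the lower boundary of dimensions $1$ and $2$ unless it is also on the lower boundary of dimension $3$ — and the lemma statement's hypothesis $j > j'$ on the dimension-$3$ coordinate handles exactly that exceptional case, since a point on the bottom of slice $s'$ would be below slice $s$. Care is also needed to state precisely what "translated by $(j-j')\cdot(1,1,1)$" means for a $\DBox$, but this is routine since $\DBox$ is defined by inequalities that are preserved under diagonal translation.
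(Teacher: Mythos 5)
Your proof follows the same strategy as the paper's: take $x\in\Up(f)\cap s$, translate it backward along the principal diagonal to $y=x-(j-j')\cdot\One$, use the weak downward-cone case of Lemma~\ref{lem:contrdown} in all three dimensions to conclude $y\in\Up(f)\cap s'$, apply the hypothesis to place $y$ in $t'$, and translate back. When $y$ stays inside $G$, your argument is correct and essentially identical to the paper's.

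The gap is in the case where the backward translate $y$ leaves $G$. You argue that this case is ruled out because, by the preprocessing of Lemma~\ref{lem:updownboundary}, $\Up(f)$ does not touch the dimension-$1$ or dimension-$2$ lower boundary when $x_3>1$, and therefore the translation ``does not push $x$ below coordinate $1$.'' This inference does not go through: the preprocessing property, as you state it, constrains only the point $x$ itself (it tells you $x_1>1$ and $x_2>1$), but it gives you no lower bound on $x_1$ or $x_2$ of size $j-j'$, so it does not preclude $y_1<1$ or $y_2<1$ for a point $x$ sitting well inside the grid. What is needed is to apply the preprocessing property not at $x$ but at the first point $z$ where the diagonal from $x$ toward $y$ exits $G$. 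Since $z=x-k\cdot\One$ for some $0\le k<j-j'$ and $z$ is on the dimension-$1$ or dimension-$2$ boundary with $z_3>j'\ge 1$, the preprocessing gives $z\notin\Up(f)$; but $z\in\DC_i(x)$ for every $i$, so Lemma~\ref{lem:contrdown} together with $x\in\Up(f)$ forces $z\in\Up(f)$, a contradiction that rules out the case $y\notin G$ entirely. This is the step the paper carries out explicitly and your proof is missing; without it, the claim ``$x'$ genuinely lies inside the grid'' is unjustified. (Also, a minor slip in your exposition: $x'\in\points(t')$ means $x'$ lies in \emph{some} box or diagonal box of $t'$, not in all of them; your subsequent argument using a single region $B'$ is fine, but the earlier wording is wrong.)
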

\begin{proof}
Let $x$ be a point in $\Up(f) \cap s$. Since $x \in \Up(f)$ we
have $x_i \le f_i(x)$ for all indices $i$. 
Now we consider the point $y = x - (j - j') \cdot (1, 1, 1)$.
\begin{itemize}
\item If $y \in G$, then
observe that $y \in \DC_i(x)$ for all $i$, and therefore we can apply
Lemma~\ref{lem:contrdown} to argue that $y$ lies in
$\Up(f) \cap s'$. 
Therefore, if $\Up(f) \cap s \not\subseteq t$, then we can use the property
above to show that $\Up(f) \cap s' \not\subseteq t'$, which would be a
contradiction.
\item If $y \not \in G$ then let $z \in G$ be the point at which the vector between
$x$ and $y$ leaves $G$. Note that we must have $z_3 > 1$, so
Lemma~\ref{lem:updownboundary} implies that $z \not \in \Up(f)$. But since $z
\in \DC_i(x)$ for all $i$, we can apply Lemma~\ref{lem:contrdown} to prove that $x$ and $z$ witness a violation of contraction, which contradicts the fact that $f$ is violation-free. 
\end{itemize}
\end{proof}

For the rest of this section we will assume that we have moved to slice $s$,
and we have produced the translated state $t$ for this slice.
Lemma~\ref{lem:translate} 
partially reestablishes the algorithm invariant for $t$, since it shows that if
the state from the previous slice satisfied the invariant, then 
$\Up(f) \cap s \subseteq t$. However, we still need to show that if $\Up(f) \ne
\emptyset$, then an almost square critical box exists in the correct boxes in
$t$, and this requires further queries, which we will now describe.

\paragraph{\bf Handling left-lobe states.}

\llnewSlice

Suppose that $t$ is a left-lobe state, and let $(x, h, w)$ be the main lobe of
this state. We make a single query at point
$q = x + h \cdot e_2$, as shown in Figure~\ref{fig:llnewslice}. There are then
two cases to consider.

\begin{itemize}
\item \textbf{Case 1: $\mathbf{f_1(q) < q_1}$.} (Figure~\ref{fig:llnewslice}
(a)). In this case we can apply Lemmas~\ref{lem:contrdown}
and~\ref{lem:monodown} to argue that every point $p$ with $p_1 \ge q_1$ and
$p_2 \le q_2$ satisfies $f_1(p) < p_1$. So $p \not\in \Up(f)$, which means that
we can use the sub-lobe of $t$ as the main lobe of a new left lobe state that
satisfies the invariant. 

\item \textbf{Case 2: $\mathbf{q_1 \le f_1(q)}$.} (Figure~\ref{fig:llnewslice}
(b)). Here we can apply Lemma~\ref{lem:contrdown} to argue that every point $p
= x - h \cdot e_1 + c \cdot e_2$ with $0 \le c \le h$ satisfies $p_1 \le
f_1(p)$. Note that any critical box can have height at most $h$, and so width
at most $h + 1$. The definition of a critical box requires that there is a
point that points downward in dimension 1 immediately to the right of the box,
which means that no critical box $(x', h', w')$ can exist with $x'_1 \le x_1 -
2h - 1$. 
Therefore, we can define a new main lobe
$l = \left(x - (2h + 1) \cdot e_1, h, w + 2h + 1\right)$, and the argument
above implies that if an almost square critical box exists, then it must lie
in~$l$. If $(y, u, v)$ denotes the sub-lobe of $v$, if $y$ is not entirely
contained within the new main lobe $l$, then we define a new
sub-lobe $a = (y, u, v - 2h - 1)$, otherwise we set $a$ to be an empty box. We
then build a new left-lobe state $t'$ using $l$ and $a$. 
\end{itemize}

Note that Case 1 does not increase the area of the state, but Case 2 does.
Specifically, in case 2 the area is increased by a factor of $(w + h + 1 +
h)/w$ and since $h \le \Floor{\frac{w}{4}}$ we have 
$(w + h + 1 + h)/w < 2$. So we have proved the following lemma.

\begin{lemma}
\label{lem:leftjump}
If $t$ is a left-lobe state satisfying $\Up(f) \subseteq t$, then we can make a
single query and produce a new left-lobe state $t'$ that satisfies the
invariant such that $\area(t') \le 2 \cdot \area(t)$. 
\end{lemma}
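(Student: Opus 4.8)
The plan is to establish Lemma~\ref{lem:leftjump} by formalizing the two-case analysis sketched in the surrounding text, namely examining the single query point $q = x + h \cdot e_2$ that lies at the top-left corner of the main lobe. By hypothesis $\Up(f) \subseteq t$, so it suffices to produce a new left-lobe state $t'$ with $\Up(f) \subseteq t'$, with the main-lobe-contains-all-almost-square-critical-boxes invariant, and with $\area(t') \le 2 \cdot \area(t)$. I would first note that the state $t$ is translated from a previous slice, so Lemma~\ref{lem:translate} already guarantees $\Up(f) \cap s \subseteq \points(t)$; what we must re-establish here is the part of the invariant concerning where almost-square critical boxes live.

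For Case~1, $f_1(q) < q_1$: I would invoke Lemma~\ref{lem:contrdown} (the strict upward-cone statement, applied to dimension~$1$) to conclude that every point in $\UC_1(q)$ satisfies $f_1(p) < p_1$, and then Lemma~\ref{lem:monodown} (strict, dimension~$1$) to extend this downward, so that every $p$ with $p_1 \ge q_1$ and $p_2 \le q_2$ has $f_1(p) < p_1$, hence $p \notin \Up(f)$. This region is exactly $\CBox(x,h,w)$ minus its leftmost column region, so $\Up(f) \cap s$ can only survive inside the sub-lobe of $t$; we set the sub-lobe of $t$ to be the new main lobe of $t'$ and take an empty sub-lobe, which does not increase area, and trivially satisfies $h' \le \Floor{w'/4}$ only if the sub-lobe already did — I should double-check the dimensions here and, if needed, emit a full state instead, exactly as Step~4 of Section~\ref{sec:leftalg} does. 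For Case~2, $f_1(q) \ge q_1$: I would apply Lemma~\ref{lem:contrdown} (weak, dimension~$1$) to the downward cone from $q$ — more precisely, observe that each $p = x - h\cdot e_1 + c \cdot e_2$ with $0 \le c \le h$ lies in $\DC_1(q)$, so $f_1(p) \ge p_1$. Then the critical-box definition (clause on $f_1(x+(w+1)e_1) < x_1 + w + 1$) forces any critical box $(x',h',w')$ with $x' \le$ main lobe to have a strictly-left-pointing point immediately to its right; combined with the almost-square constraint $w' \le h' + 1 \le h + 1$ (using $h' \le h$ since the box fits in a height-$h$ main lobe), no such box can have $x'_1 \le x_1 - 2h - 1$. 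Hence the enlarged box $l = (x - (2h+1)e_1,\, h,\, w + 2h + 1)$ contains every almost-square critical box, and we build $t'$ from $l$ and a suitably trimmed copy of the old sub-lobe (or an empty box). The area bound is the routine computation $\area(l)/\area(x,h,w) = (w + 2h + 1)/w < 2$ using $h \le \Floor{w/4}$, so $\area(t') \le 2\cdot\area(t)$ after also accounting for the (non-increasing) sub-lobe.

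The main obstacle I anticipate is bookkeeping rather than conceptual: ensuring the output genuinely conforms to Definition~\ref{def:state}, in particular the structural constraints $y_1 + v = x_1$, $x_2 \le y_2$, $y_2 + u \le x_2 + h$, and $h \le \Floor{w/4}$ for a left-lobe state, or else correctly packaging the result as a full state with the other lobes empty — mirroring precisely what Step~4 of Section~\ref{sec:leftalg} does. I would handle this by stating explicitly that if $h' \le \Floor{w'/4}$ we emit a left-lobe state, and otherwise a full state with $l$ (or the sub-lobe) as the appropriate component and all remaining lobes empty; since an empty full state has the same area and trivially satisfies the invariant, the area bound is unaffected. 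The only genuinely delicate point is verifying that in Case~2 the trimmed sub-lobe still satisfies $y'_2 + u' \le x'_2 + h'$ with respect to the \emph{new} main lobe $l$; but since $l$ has the same height $h$ and the same vertical position as the old main lobe, this is inherited directly from the validity of $t$. Finally I would remark, as the text does, that Case~1 does not increase area while Case~2 increases it by a factor below $2$, completing the proof.
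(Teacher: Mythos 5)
Your proposal matches the paper's proof: same query point $q = x + h\cdot e_2$, same two cases, same cone and monotonicity arguments, same construction of $l = (x - (2h+1)e_1,\, h,\, w+2h+1)$ in Case~2, and the same $(w+2h+1)/w \le 2$ area bound via $h \le \Floor{w/4}$. Your Case~1 caveat is a legitimate observation the paper glosses over --- the old sub-lobe need not satisfy $u \le \Floor{v/4}$ when promoted to a main lobe, so the output may need to be packaged as a full state rather than a left-lobe state as the lemma statement literally promises --- but you correctly identify the fix as the same mechanism used in Step~4 of Section~\ref{sec:leftalg}, so this is a minor imprecision in the paper rather than a flaw in your argument.
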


\paragraph{\bf Handling bottom-lobe states.}

Bottom-lobe states can be handled in the same way as left-lobe states.
Specifically, we can simply exchange dimensions 1 and 2, and then apply the
algorithm given above. So we have the following lemma.

\begin{lemma}
\label{lem:bottomjump}
If $t$ is a bottom-lobe state satisfying $\Up(f) \subseteq t$, then we can make a
single query and produce a new bottom-lobe state $t'$ that satisfies the
invariant such that $\area(t') \le 2 \cdot \area(t)$. 
\end{lemma}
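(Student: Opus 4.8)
The plan is to obtain Lemma~\ref{lem:bottomjump} from Lemma~\ref{lem:leftjump} by the symmetry that exchanges dimensions~$1$ and~$2$. Write $\sigma : G \to G$ for the involution that swaps the first two coordinates of each point and fixes the third, and extend $\sigma$ to slices and to (diagonal) boxes in the obvious way. Given the violation-free \1DUniqueDMAC instance $f$, set $f^\sigma = \sigma \circ f \circ \sigma$. The first step is to observe that $f^\sigma$ is again a violation-free \1DUniqueDMAC instance satisfying all the preprocessing hypotheses of Section~\ref{sec:dmac_algo}: monotonicity, non-expansion, unit-length displacements, the ``boundary points move inward'' property of Lemma~\ref{lem:boundaryin}, and the up/down-set boundary property of Lemma~\ref{lem:updownboundary} are each invariant under permuting the first two coordinates. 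Moreover $x \in \Up(f)$ if and only if $\sigma(x) \in \Up(f^\sigma)$, since the componentwise inequalities defining $\Up$ are permutation invariant.

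The second step is the bookkeeping that checks $\sigma$ carries every combinatorial object appearing in the statement and proof of Lemma~\ref{lem:leftjump} to its mirror image. A critical box $(x, h, w)$ for $f$ (Definition~\ref{def:cb}) maps to the critical box $(\sigma(x), w, h)$ for $f^\sigma$: conditions~2 and~3 of the definition swap roles, condition~1 is $\Up$-membership, and condition~4 is symmetric in $e_1$ and $e_2$. Since $|h - w| \le 1$ is symmetric, almost square critical boxes map to almost square critical boxes, so Lemma~\ref{lem:cb_shape} for $f^\sigma$ follows from that for $f$; the left lobe $\leftl$ and bottom lobe $\downl$ are interchanged and $\diagl$ is sent to the diagonal lobe of the image box, so Lemma~\ref{lem:cbbound} is transported as well. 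A bottom-lobe state $t$ for a slice $s$ (case~2 of Definition~\ref{def:state}) with main lobe $(x, h, w)$ and sub-lobe $(y, u, v)$ is carried to a left-lobe state $\sigma(t)$ for the slice $\sigma(s)$ with main lobe $(\sigma(x), w, h)$ and sub-lobe $(\sigma(y), v, u)$; in particular the constraint $w \le \Floor{\frac{h}{4}}$ becomes exactly the constraint $h \le \Floor{\frac{w}{4}}$ of case~1. Areas are preserved, $\area(\sigma(t)) = \area(t)$, and the algorithm invariant for $t$ with respect to $f$ and $s$ is equivalent to the algorithm invariant for $\sigma(t)$ with respect to $f^\sigma$ and $\sigma(s)$.

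Given these correspondences, the third step is immediate. Apply Lemma~\ref{lem:leftjump} to the left-lobe state $\sigma(t)$ of $f^\sigma$, which satisfies $\Up(f^\sigma) \subseteq \sigma(t)$ by the equivalences above: this makes a single query to $f^\sigma$, hence a single query to $f$, and returns a left-lobe state $u'$ of $f^\sigma$ satisfying the invariant with $\area(u') \le 2 \cdot \area(\sigma(t))$. Set $t' = \sigma(u')$; then $t'$ is a bottom-lobe state of $f$ for the slice $s$, it satisfies the invariant, and $\area(t') = \area(u') \le 2 \cdot \area(\sigma(t)) = 2 \cdot \area(t)$, as required. I do not anticipate a genuine obstacle here: the whole content is the verification in the second step that every definition used by Lemma~\ref{lem:leftjump}---critical box, lobes, state constraints, the invariant, area, and the preprocessing assumptions---is symmetric under swapping dimensions~$1$ and~$2$. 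The only point one might want to spell out explicitly, to keep the bottom-lobe procedure self-contained, is that the single query $q = x + h \cdot e_2$ of the left-lobe proof corresponds, under $\sigma$, to the analogous midpoint query on the bottom-lobe main lobe, so the bottom-lobe algorithm may equally be described directly rather than via the reduction.
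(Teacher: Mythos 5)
Your proposal is correct and takes essentially the same approach as the paper, which simply says that bottom-lobe states are handled by exchanging dimensions~$1$ and~$2$ and running the left-lobe procedure; you have spelled out in detail the bookkeeping (the involution $\sigma$, the conjugated instance $f^\sigma$, and the transport of critical boxes, lobes, state constraints, areas, and the invariant) that the paper leaves implicit.
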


\paragraph{\bf Handling diagonal-lobe states.}

\dlnewslice

Suppose that $t$ is a diagonal-lobe state, and let $(x, y, l)$ be the main lobe of
this state. We define the width of the main lobe as $w = (y_1-x_1) +
(x_2-y_2)$, similar to the analysis used in Lemma~\ref{lem:diaglobestate}. We make a single query at point
$q = x + (l + (y_1 - x_1) \cdot e_1 + l \cdot e_2$, as shown in Figure \ref{fig:dl_new_slice}. We then consider the following two cases.

\begin{itemize}
\item \textbf{Case 1: $\mathbf{f_3(q) < q_3}$.} (Figure \ref{fig:dl_new_slice}
(a)). Applying Lemma~\ref{lem:monodown}, we observe that every point $p$ with $p_1 \leq q_1$ and $p_2 \leq q_2$ satisfies $f_3(p) < p_3$. So $p \not\in \Up(f)$, which rules out the entire main lobe, and the sub-lobe of $t$ is used as the main lobe of a new diagonal lobe state,
satisfying the invariant. 

\item \textbf{Case 2: $\mathbf{f_3(q) \geq q_3}$.} (Figure \ref{fig:dl_new_slice}
(b)). By Lemma \ref{lem:monodown}, we observe that every point $p$ with $p_1 \geq q_1$ and $p_2 \geq q_2$ satisfies $f_3(p) \geq p_3$. Since any any critical box can have width at most $w$, an almost square critical box has height bounded by $w + 1$. By definition of a critical box, the diagonally adjacent point to the bottom left corner must be pointing downwards in dimension 3, which means that no critical box can have its bottom left corner below $q$. The above arguments imply that no critical box can lie further than $2w$ from point $q$.

The new main lobe is extended and defined as 
$(x, y, l + 2w)$, and, by the above reasoning, if an almost square critical box exists, it must lie in the new main lobe. If the sub-lobe denoted as $(a, b, m)$ does not lie entirely within the new main lobe, we can define a new
sub-lobe $(a+ 2w \cdot e_1 + 2w \cdot e_2, a+ 2w \cdot e_1 + 2w \cdot e_2, m - 2w)$, otherwise the new sub-lobe is an empty box. We
then build a new left-lobe state $t'$ using the new main and sub-lobes. 
\end{itemize}

We observe that, in Case 1, the overall area is reduced, while Case 2 leads to an increase of the area of the main lobe.

Specifically, in case 2 the area is increased by a factor of $(l+2w)/l$ and since $w \le \Floor{\frac{l}{8}}$ we have 
$(l+2w)/l < 2$. Our analysis proves the following lemma.

\begin{lemma} 
\label{lem:diagjump}
If $t$ is a diagonal-lobe state satisfying $\Up(f) \subseteq t$, then we can make a
single query and produce a new diagonal-lobe state $t'$ that satisfies the
invariant such that $\area(t') \le 2 \cdot \area(t)$. 
\end{lemma}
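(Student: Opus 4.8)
The final statement to prove is Lemma~\ref{lem:diagjump}, which handles the transposition of a diagonal-lobe state to a new slice.

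\bigskip

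The plan is to follow the same structure as the proofs of Lemma~\ref{lem:leftjump} and Lemma~\ref{lem:bottomjump}, adapted to the geometry of diagonal boxes. The starting point is a diagonal-lobe state $t$ with main lobe $(x,y,l)$ and sub-lobe $(a,b,m)$ that satisfies $\Up(f) \subseteq t$; this containment is guaranteed by Lemma~\ref{lem:translate} applied when we translated the state from the previous slice along the principal diagonal. First I would define the horizontal width $w = (y_1-x_1)+(x_2-y_2)$ of the main lobe, exactly as in Lemma~\ref{lem:diaglobestate}. Then I would make the single query at the point $q = x + (l + (y_1-x_1))\cdot e_1 + l\cdot e_2$, which is the point sitting just beyond the far (flattened) end of the main diagonal box, and split into the two cases on the sign of $f_3(q)-q_3$.

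\bigskip

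In Case 1, where $f_3(q) < q_3$, I would invoke Lemma~\ref{lem:monodown} (weak, dimension 3) to conclude that every point $p$ with $p \le q$ satisfies $f_3(p) < p_3$, hence $p \notin \Up(f)$. Since every point of the original main lobe $\DBox(x,y,l)$ lies weakly below $q$ by construction, the entire main lobe is ruled out, and we can promote the old sub-lobe $(a,b,m)$ to be the main lobe of a new diagonal-lobe state — checking that the sub-lobe constraints from Definition~\ref{def:state} continue to hold, possibly with an empty sub-lobe. This case does not increase the area. In Case 2, where $f_3(q) \ge q_3$, I would use Lemma~\ref{lem:monodown} (weak, dimension 3) to show $f_3(p) \ge p_3$ for all $p \ge q$. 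Combined with the fact that any critical box contained in the main lobe has width at most $w$ (hence, being almost square, height at most $w+1$), and with the defining property that the point diagonally below the bottom-left corner of a critical box points strictly down in dimension 3, I would argue that no almost-square critical box can have its bottom-left corner more than $2w$ (diagonally) past $q$. This justifies extending the main lobe to $\DBox(x,y,l+2w)$, which still contains all almost-square critical boxes, so the invariant is reestablished; the sub-lobe is shifted/truncated or set empty as appropriate, and $t'$ is built from these.

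\bigskip

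Finally I would bound the area blow-up in Case 2: the new main lobe has area obtained from $\DBox(x,y,l+2w)$, and comparing to the original $\DBox(x,y,l)$ using the area formula $(x_2-y_2+l)(y_1-x_1+l)-l^2$, the ratio is at most $(l+2w)/l$; since the diagonal-lobe state constraint gives $w \le \Floor{\frac{l}{8}}$, this is strictly less than $2$. Combining both cases, $\area(t') \le 2\cdot\area(t)$. The main obstacle I anticipate is verifying that the extended $\DBox(x,y,l+2w)$ genuinely still contains every almost-square critical box that the original main lobe contained — this requires carefully relating the width bound $w$ on critical boxes inside the lobe to the geometry of the flattened diagonal region (the subtlety being that a critical box near the far end can stick out in both the horizontal and vertical directions), and confirming that the $2w$ extension is the correct quantity rather than, say, $w$ or $2(w+1)$; I would handle this by the same reasoning used for the analogous extension in Case 2 of Step 1 of Lemma~\ref{lem:diaglobestate}, where $\DBox(x,y,\Floor{\frac{l}{2}}+2w)$ was shown to retain all critical boxes.
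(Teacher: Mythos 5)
Your proposal matches the paper's proof essentially exactly: same query point $q = x + (l + (y_1-x_1))\cdot e_1 + l\cdot e_2$, same case split on the sign of $f_3(q)-q_3$, same invocation of Lemma~\ref{lem:monodown}, same extension of the main lobe to $\DBox(x,y,l+2w)$ in Case 2, and the same area ratio $(l+2w)/l < 2$ from the constraint $w \le \Floor{\frac{l}{8}}$. The subtlety you flag about justifying the $2w$ extension is handled in the paper by the same brief argument you sketch (bottom-left corner constrained near $q$ by the $f_3$ condition, plus the $w+1$ bound on side-lengths), so there is no gap.
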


\paragraph{\bf Handling full states.}

\fsnewslice

Finally, we consider the case where $t$ is a full state. Let $(x, h, w)$ be the
central box of $t$.
We query the following three points.
\begin{align*}
q^{\text{br}} &= x + w \cdot e_1 \\
q^{\text{tl}} &= x + h \cdot e_2 \\
q^{\text{tr}} &= x + w \cdot e_1 + h \cdot e_2.
\end{align*}
These are the bottom-right, top-left, and top-right points of the central box,
as depicted in Figure~\ref{fig:fsnewslice}. We then perform the following case
analysis.

\begin{itemize}
\item \textbf{Case 1: $\mathbf{f_1(q^{\text{tl}}) < q^{\text{tl}}_1}$.} Using the
same argument as we used for Case 1 of handling a left-lobe state, we can
conclude that all points $x \in \Up(f)$ satisfy $x_1 < q^{\text{tl}}_1$, which
rules out everything except the left-lobe state of $t$. So we can proceed using
Lemma~\ref{lem:leftjump} on the left-lobe of $t$ to produce a new left-lobe
state that satisfies the invariant. 

\item \textbf{Case 2: $\mathbf{f_2(q^{\text{br}}) < q^{\text{br}}_2}$.}
Likewise we can use the same argument as we used for Case 1 of handling a
bottom-lobe state, so we can 
proceed using
Lemma~\ref{lem:bottomjump} on the bottom-lobe of $t$ to produce a new bottom-lobe
state that satisfies the invariant. 

\item \textbf{Case 3: $\mathbf{f_3(q^{\text{tr}}) < q^{\text{tr}}_3}$.} Here we use the argument for Case 1 of handling a diagonal-lobe state, which states that any point with $x_1 \leq q^{\text{tr}}_1$ and $x_2 \leq q^{\text{tr}}_2$ has $x \not\in \Up(f)$. Applying Lemma~\ref{lem:diagjump} on the diagonal-lobe of $t$, we can produce a new diagonal lobe state satisfying the invariant. 

\item \textbf{Case 4.} If none of the previous cases apply then we have 
$\mathbf{f_1(q^{\text{tl}}) \ge q^{\text{tl}}_1}$, and
$\mathbf{f_2(q^{\text{br}}) \ge q^{\text{br}}_2}$, and
$\mathbf{f_3(q^{\text{tr}}) \ge q^{\text{tr}}_3}$, which is the situation shown
in Figure~\ref{fig:fsnewslice}.
Here we note the following properties.
\begin{itemize}
\item Since $f_1(q^{\text{tl}}) \ge q^{\text{tl}}_1$, we can use the same
argument as we used in Case 2 of handling a left-lobe state to argue that any
almost square critical box $(x', h', w')$ must satisfy $x'_1 \ge x_1 - 2h - 1$.

\item Since $f_2(q^{\text{br}}) \ge q^{\text{br}}_2$
we can use the same
argument as we used in Case 2 of handling a bottom-lobe state to argue that any
almost square critical box $(x', h', w')$ must satisfy $x'_2 \ge x_2 - 2w - 1$.

\item 
Since $f_3(q^{\text{tr}}) \ge q^{\text{tr}}_3$ we can use the same argument as
we used in Case 2 of handling a diagonal-lobe state to argue that any
almost square critical box $(x', h', w')$ must satisfy $x'_1 \le x_1$ or $x'_2
\le x_2$, and also $h' \le 2w$ and $w' \le 2h$. 
\end{itemize}

So we can define an expanded central box $b = (x - (2h + 1, 2w + 1), 5h +
1, 5w + 1 )$, and the arguments above imply that all
almost square critical boxes must lie in $b$. 
We then consider the left, bottom, and diagonal states in $t$. If they have
been entirely subsumed by $b$ then we replace them with an empty state.
Otherwise we shorten them so that they meet $b$, rather than the original
central box. 

We then build a new full state $t'$ from $b$ and the new left, bottom,
and diagonal states. Since $t'$ contains a superset of the points in $t$, the
algorithm invariant holds for $t'$. Moreover, the area of the central box of
$t'$ is at most $36$ times the area of the central box of $t$, while the
other states contained in $t'$ have not increased in area. So we have
$\area(t') \le 36 \cdot \area(t)$. 
\end{itemize}
Hence we have shown the following lemma.

\begin{lemma}
\label{lem:fulljump}
If $t$ is a full state satisfying $\Up(f) \subseteq t$, then we can make
constantly many queries to $f$ and produce a new state $t'$ that satisfies the
invariant such that $\area(t') \le 36 \cdot \area(t)$. Moreover each step of
the algorithm runs in time that is polynomial in the representation of $f$.
\end{lemma}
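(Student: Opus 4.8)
The plan is to realise this lemma's algorithm by calling the three single‑lobe ``jump'' lemmas as sub‑routines, together with one extra case in which the central box is enlarged by a constant factor. By the time the lemma is invoked we have already translated every box and diagonal box of the old state along $(1,1,1)$, so Lemma~\ref{lem:translate} already guarantees $\Up(f)\cap s\subseteq t$; what remains is to re‑establish the part of the invariant stating that every almost‑square critical box is contained in the central box of the new state. To this end I would query the three corners $q^{\text{br}}=x+w\cdot e_1$, $q^{\text{tl}}=x+h\cdot e_2$ and $q^{\text{tr}}=x+w\cdot e_1+h\cdot e_2$ of the central box $(x,h,w)$, which costs three queries.

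Next I would split into cases on the directions of $f_1$ at $q^{\text{tl}}$, $f_2$ at $q^{\text{br}}$, and $f_3$ at $q^{\text{tr}}$. In the three cases where one of these corners moves strictly ``the wrong way'', the argument is exactly the Case~1 argument of the corresponding single‑lobe jump. For example, if $f_1(q^{\text{tl}})<q^{\text{tl}}_1$ then Lemmas~\ref{lem:contrdown} and~\ref{lem:monodown} show every $p$ with $p_1\ge q^{\text{tl}}_1$ and $p_2\le q^{\text{tl}}_2$ satisfies $f_1(p)<p_1$, so $p\notin\Up(f)$; this rules out the central box and the bottom and diagonal lobes, so we may run Lemma~\ref{lem:leftjump} on the left lobe of $t$. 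Symmetrically, $f_2(q^{\text{br}})<q^{\text{br}}_2$ reduces to Lemma~\ref{lem:bottomjump} and $f_3(q^{\text{tr}})<q^{\text{tr}}_3$ reduces to Lemma~\ref{lem:diagjump} (whose Case~1 uses only monotonicity). Each such sub‑call makes one further query and at most doubles the area, so in these cases $\area(t')\le 2\cdot\area(t)$, well within the claimed bound.

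The remaining and principal case is when all three corners move weakly inward. Here I would combine the three Case~2 arguments of the single‑lobe jump lemmas. From $f_1(q^{\text{tl}})\ge q^{\text{tl}}_1$ and Lemma~\ref{lem:contrdown} one deduces that no almost‑square critical box $(x',h',w')$ can have $x'_1\le x_1-2h-1$, using that such a box has height at most $h$, hence width at most $h+1$, and must carry a strictly‑left‑moving point on its right edge. Symmetrically $f_2(q^{\text{br}})\ge q^{\text{br}}_2$ forces $x'_2\ge x_2-2w-1$, and $f_3(q^{\text{tr}})\ge q^{\text{tr}}_3$ with Lemma~\ref{lem:monodown} in dimension~3 forces $x'_1\le x_1$ or $x'_2\le x_2$ together with $h'\le 2w$ and $w'\le 2h$. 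These constraints confine every almost‑square critical box to the enlarged box $b=(x-(2h+1,2w+1),\,5h+1,\,5w+1)$, so I would take $b$ as the new central box, shorten or empty the existing left, bottom and diagonal lobes so that they meet $b$ rather than the old central box, and rebuild a full state $t'$. Since $b$ contains the old central box and the lobes only lose points, the invariant holds for $t'$; and since $(5h+1)(5w+1)\le 36\,hw$ whenever $h,w\ge 1$ while the lobe areas do not grow, we obtain $\area(t')\le 36\cdot\area(t)$. Polynomial running time is then immediate: the whole procedure is a constant number of queries to $f$ plus constant‑size bookkeeping on the boxes.

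The hard part is the final case: checking that the three separately derived constraints genuinely pin down both the bottom‑left corner and the height and width of every almost‑square critical box at once, and that the resulting $b$, offset from $(x,h,w)$ in the correct direction, is large enough to contain all of them. This is where one must be careful that the ``almost square'' slack $|h'-w'|\le 1$ is invoked in the right place and that the coordinate bounds combine into a single box rather than a more complicated region.
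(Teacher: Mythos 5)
Your proposal follows the paper's proof essentially verbatim: you query the same three corners $q^{\text{br}}$, $q^{\text{tl}}$, $q^{\text{tr}}$, dispatch the three "strictly outward" cases to Lemmas~\ref{lem:leftjump}, \ref{lem:bottomjump}, \ref{lem:diagjump}, and in the remaining case derive the identical coordinate and size constraints on almost-square critical boxes to build the same enlarged central box $b=(x-(2h+1,2w+1),5h+1,5w+1)$ with the same $36$-factor area bound. No substantive differences.
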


\subsection{The Main Algorithm}
\label{sec:mainalgo}

In this section we now piece together all of the properties that we have shown
so far into a single algorithm.

\paragraph{\bf The corresponding down-set definitions and algorithms.}

In Sections~\ref{sec:cb} through~\ref{sec:jump} we presented our definitions
and algorithms for the up-set. Everything that we presented in those sections
can also be applied to the down-set by simply flipping all dimensions. So
for example, a down-set critical box $(x, h, w)$ has $x \in \Down(f)$, and $x$
is the top-right corner of the box, while all other conditions are likewise
flipped. A down-set critical box has a right lobe, a
top lobe, and a down-left diagonal lobe. 

The algorithms can likewise be adapted by flipping all dimensions. The
algorithm invariant now insists that the state must contain $\Down(f) \cap s$,
and that every almost square down-set critical box is contained in the central
box or main lobe of the state. The algorithms take a down-set state and make
constantly many queries before either finding a point in $\Down(f)$, declaring
that $\Down(f) \cap s$ is empty in the slice $s$, or finding a new states whose
area has been reduced by a constant fraction.

\paragraph{\bf The algorithm. }



The algorithm performs a binary search on the third dimension of the instance.
Each step of the binary search considers a slice in which the third dimension
is fixed. The goal of this binary search is to find a slice $s$ that contains
a point in $\Up(f)$ and a point in $\Down(f)$, which is guaranteed to exist
since any fixed point lies in $\Up(f) \cap \Down(f)$. Once we have found such a
slice, we then move to the terminal phase of the algorithm, which finds a fixed
point in $s$.


Given an integer $i$, let $s_i$ be the slice that contains all points $x$ with
$x_3 = i$. The algorithm maintains two integers $l$ and $u$ with the invariant
that $\Up(f) \cap s_l \ne \emptyset$ and $\Down(f) \cap s_u \ne \emptyset$.
Initially we set $l = 1$ and $u = n$, so this invariant is satisfied
because $(1, 1, 1) \in \Up(f)$ and $(n, n, n) \in \Down(f)$ both hold
trivially.

The algorithm also maintains a \emph{current slice} defined by an index $i$,
where initially we set $i = \Floor{\frac{u - l}{2}}$.
For each slice $s_i$ visited, the algorithm maintains an up-set state
$t_\text{up}^i$ and a down-set state $t_\text{dn}^i$, with the property that
$t_\text{up}^i$ satisfies the up-set invariant for $s_i$, and $t_\text{dn}^i$
satisfies the down-set invariant for $s_i$. Initially we set $t_\text{up}^{i}$
to be a full state with the central box $((1, 1), n, n)$ and all other states
being empty. Observe that this trivially satisfies the up-set algorithm
invariant for slice $s_i$. The state $t_\text{dn}^{i}$ is likewise initially
set to a full state with central box $((n, n), n, n)$ with all other states
being empty, which also trivially satisfies the down-set algorithm invariant
for slice $s_i$. 

The main algorithm then proceeds as follows.

\begin{enumerate}
\item The algorithm performs a single step on $t_\text{up}^{i}$ by applying one of
Lemmas~\ref{lem:leftlobestate},~\ref{lem:bottomlobestate},~\ref{lem:diaglobestate},
or~\ref{lem:full-state}, depending on the type of $t_\text{up}^{i}$.
If $t_\text{up}^{i}$ contains constantly many points, we instead just search
these points to determine if there exists a point in $\Up(f) \cap s_i$ or not.


\item The algorithm performs a single step on $t_\text{dn}^{i}$ by applying one of
the flipped analogues of
Lemmas~\ref{lem:leftlobestate},~\ref{lem:bottomlobestate},~\ref{lem:diaglobestate},
or~\ref{lem:full-state}, depending on the type of $t_\text{dn}^{i}$.
If $t_\text{dn}^{i}$ contains constantly many points, we instead just search
these points to determine if there exists a point in $\Down(f) \cap s_i$ or not.


\item If both Step 1 and Step 2 produce a new state $t_\text{up}'$ and
$t_\text{dn}'$, respectively,
then we set $t_\text{up}^{i} = t_\text{up}'$ and $t_\text{dn}^{i} = t_\text{dn}'$. We then go back to Step
1 and repeat.






\item 
\label{itm:jump}
If Step 1 finds a point $x \in \Up(f) \cap s_i$ or if Step 2 declares
$\Down(f) \cap s_i = \emptyset$, then we set $l = i$. 
Likewise, if Step 2 finds a point $y \in \Down(f) \cap s_i$ or if Step 1
declares $\Up(f) \cap s_i = \emptyset$, then we set $u = i$.
Note that this covers all possible return values from the two algorithms.

\item 
If $u - l = 1$ then the main algorithm terminates, and we move to the two-slice
sub-algorithm, which will be described later. 

\item 
Otherwise we set $i =
\Floor{\frac{u - l}{2}}$, and carry out the following steps.
\begin{itemize}
\item Let $t_\text{up}$ be the result of taking the state
$t_\text{up}^{l}$, translating every box and diagonal box by $(i -
l) \cdot (1, 1, 1)$, and then applying the appropriate algorithm from
Lemmas~\ref{lem:leftjump},~\ref{lem:bottomjump},~\ref{lem:diagjump},
and~\ref{lem:fulljump}. Note that Lemma~\ref{lem:translate} implies that
$t_\text{up}$ satisfies the algorithm invariant for $s_i$, so we can set
$t_\text{up}^{i} = t_\text{up}$. 

\item Let $t_\text{dn}$ be the result of taking the state $t_\text{dn}^{u}$, translating every
box and diagonal box by $(i - u) \cdot (1, 1, 1)$, and then applying the
appropriate algorithm from the flipped analogues of
Lemmas~\ref{lem:leftjump},~\ref{lem:bottomjump},~\ref{lem:diagjump},
and~\ref{lem:fulljump}. Note that the flipped analogue of
Lemma~\ref{lem:translate} implies that $t_\text{dn}$ satisfies the algorithm invariant
for $s_i$, so we can set $t_\text{dn}^{i} = t_\text{dn}$. 

\item We then move back to Step 1 and continue in the new slice $i$.
\end{itemize}

\end{enumerate}

\paragraph{\bf Correctness of the main algorithm.}

We have already argued that each state visited by the algorithm satisfies the
appropriate algorithm invariant. We still need to show the invariant that
$\Up(f) \cap s_l \ne \emptyset$ and $\Down(f) \cap s_u \ne \emptyset$, where we
must inspect the decisions made in Step~\ref{itm:jump} of the algorithm, which
is the only step that modifies $l$ and $u$. Here if the up-set algorithm found
a point $x \in \Up(f) \cap s_i$, or if the down-set algorithm found point $y
\in \Down(f) \cap s_i$, then the invariant is trivially satisfied. For the
other two cases, we rely on the following lemma.


\begin{lemma}
\label{lem:upordown}
For every slice $s_i$, we have either $\Up(f) \cap s_i \ne \emptyset$ or $\Down(f)
\cap s_i \ne \emptyset$.
\end{lemma}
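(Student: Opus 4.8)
The plan is to prove Lemma~\ref{lem:upordown} by restricting the three-dimensional instance $f$ to the two-dimensional slice $s_i$ and applying Tarski's theorem to a carefully chosen monotone self-map on that slice. First I would define a function $g : s_i \to s_i$ that acts on the first two coordinates as $f$ does, i.e.\ $g(x) = (f_1(x), f_2(x), i)$, thinking of $s_i$ as the two-dimensional grid $\{1,\dots,n\}^2$. Since $f$ is monotone and violation-free, $g$ is monotone on $s_i$; it maps $s_i$ into $s_i$ trivially because we fix the third coordinate. By Tarski's theorem (Theorem~\ref{thm:tarski1}), $g$ has a fixed point $z \in s_i$, i.e.\ a point with $f_1(z) = z_1$ and $f_2(z) = z_2$.

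Now I would do a case split on the direction of $f_3$ at $z$. If $f_3(z) \ge z_3$, then together with $f_1(z) = z_1$ and $f_2(z) = z_2$ we get $z \le f(z)$, so $z \in \Up(f) \cap s_i$ and we are done. If $f_3(z) \le z_3$, then symmetrically $z \ge f(z)$, so $z \in \Down(f) \cap s_i$. Either way the slice meets $\Up(f)$ or $\Down(f)$, which is exactly the claim. Note that the case split is genuinely exhaustive since one of $f_3(z) \ge z_3$ or $f_3(z) \le z_3$ always holds (with both holding precisely when $z$ is a fixed point of $f$).

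The one subtlety to check is that applying Tarski's theorem to $g$ on $s_i$ is legitimate: $s_i$ is a complete lattice (it is a finite product of chains), and $g$ restricted to it is order-preserving because $f$ is order-preserving on all of $G$ and restriction to a sublattice preserves monotonicity. This is the same move already used repeatedly in Section~\ref{sec:dmac_in_ueopl} (for instance in Lemma~\ref{lem:unique-i-zero}), so no new machinery is needed. I expect the main obstacle, if any, to be purely notational — making sure the identification of the slice $s_i$ with a two-dimensional grid and the definition of $g$ are phrased cleanly — rather than mathematical; the argument itself is short and self-contained given Tarski's theorem.

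\begin{proof}
Regard the slice $s_i = \{x \in G : x_3 = i\}$ as the two-dimensional complete lattice $\{1,\dots,n\}^2$ under the componentwise order, identifying $x \in s_i$ with $(x_1, x_2)$. Define $g : s_i \to s_i$ by $g(x) = (f_1(x), f_2(x), i)$. Since $f$ is violation-free it is monotone, and hence $g$ is order-preserving on $s_i$; moreover $g$ maps $s_i$ into itself by construction. By Theorem~\ref{thm:tarski1}, $g$ has a fixed point $z \in s_i$, so $f_1(z) = z_1$ and $f_2(z) = z_2$.

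If $f_3(z) \ge z_3$, then $f(z) \ge z$, so $z \in \Up(f) \cap s_i$ and this set is non-empty. If instead $f_3(z) \le z_3$, then $f(z) \le z$, so $z \in \Down(f) \cap s_i$ and this set is non-empty. Since at least one of $f_3(z) \ge z_3$ and $f_3(z) \le z_3$ holds, one of $\Up(f) \cap s_i$ and $\Down(f) \cap s_i$ is non-empty, as claimed.
\end{proof}
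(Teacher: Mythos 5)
Your proof is correct and follows essentially the same route as the paper's: restrict $f$ to the first two coordinates on the slice, invoke Tarski's theorem to get a two-dimensional fixed point, and case-split on the sign of the third-coordinate displacement. The only difference is cosmetic notation (you keep $z$ inside $s_i$ with third coordinate $i$, the paper uses the $\oplus$ operator), and the extra remark about restriction preserving monotonicity is a fine touch but not strictly needed.
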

\begin{proof}
Consider the function $f' : \{1, 2, \dots, n\} ^2 \rightarrow 
\{1, 2, \dots, n\}^2$ defined so that $f'_j(x) = f(x \oplus i)$ for $j \in \{1, 2\}$,
where $x \oplus i$ places $i$ in the third dimension of $x$. 
The function $f'$ is monotone, and so possesses a fixed point $p'$ by Tarski's
theorem. If we let $p = p' \oplus i$, then since $p$ is fixed in dimensions 1 and 2, if
$p_3 \le f_3(p)$ then $p \in \Up(f)$, while if $p_3 \ge f_3(p)$ then $p \in
\Down(f)$. 
\end{proof}

Hence, when Step~\ref{itm:jump} sets $l = i$ because the down-set algorithm
declared that $\Down(f) \cap s_i = \emptyset$ Lemma~\ref{lem:upordown} tells us
that $\Up(f) \cap s_i \ne \emptyset$, so the invariant is maintained. The case
where Step~\ref{itm:jump} sets $u = i$ can be shown to be correct using the
same argument.

\paragraph{\bf The running time of the main algorithm.}

Each step of the algorithm reduces the size of the state by a constant fraction
except in the following two scenarios.
\begin{enumerate}
\item 
When a diagonal-lobe state is turned into a full state in the last case of
Lemma~\ref{lem:diaglobestate}, where the area of the state is increased by a
factor of $5$.

\item When we transition to a new slice, where the area is increased by a
factor of at most $36$.
\end{enumerate}
However, note that the first case can occur at most once per slice, because
once we have arrived at a full state Lemma~\ref{lem:full-state} ensures that
all further states that are considered in this slice are also full states.
Meanwhile, the second case also occurs at most once per slice, since it only
occurs when we transition to a new slice. So for each slice that we visit, we
may blow-up the area of the state by a factor of $180$.


On the other hand, each other step reduces the area of the state by a factor of
at least $15/16$. It can be verified that $(15/16)^{81} \cdot 180 < 1$, so
we can cancel out the increase from each slice by taking 81 
further steps. Furthermore, we visit exactly $\log n$ different slices in our
binary search, so we
pay $81 \cdot \log n$ steps in order to cancel out these increases.

Then, after $O(\log n)$ further steps the number of points contained in the state
must be a constant, after which the algorithm terminates after at most $O(\log
n)$ further steps to complete the binary search. So in total we have that the
main-algorithm terminates after making $O(\log n)$ queries to $f$. 
Moreover, every step of the algorithm runs in time that is polynomial in the
representation of $f$.

\subsection{The Two-Slice Sub-Algorithm}

When we reach the two-slice sub-algorithm, we have that $u - l = 1$. 
The following lemma shows that the main algorithm invariant ensures that either
$s_u$ contains a fixed point, or $s_l$ contains a fixed point. 

\begin{lemma}
\label{lem:fpexists}
Let $u, l$ be slice indices satisfying $u = l$ or $u = l + 1$. 
If $\Up(f) \cap s_l \ne \emptyset$ and $\Down(f) \cap s_u \ne \emptyset$
then there exists a fixed point in $s_u$ or $s_l$.  
\end{lemma}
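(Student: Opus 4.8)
The idea is to realise the required fixed point as one link in a monotone ``ladder'' of fixed points running from the least fixed point $p_{\min}$ of $f$ to its greatest fixed point $p_{\max}$; both exist by Tarski's theorem (Theorem~\ref{thm:tarski1}), and recall that throughout this section $f$ is violation-free. First I would make the ladder precise. Starting from $p_{\min}$ and repeatedly invoking the symmetric counterpart of Corollary~\ref{cor:neighboring-lesser-fp} (obtained by flipping every coordinate, which sends a violation-free \DMAC instance to a violation-free \DMAC instance): every fixed point that is not the greatest fixed point has a distinct fixed point $y$ with $y$ in its unit cube from above, hence $\Normi{y-x}=1$. This yields a chain of fixed points $p^1 = p_{\min}, p^2, \dots, p^m = p_{\max}$ with $p^i \le p^{i+1}$ and $\Normi{p^{i+1}-p^i}=1$ for all $i$; the chain is strictly increasing in the finite set $G$, so it terminates, and it can only terminate at $p_{\max}$. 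Since $p^i \le p^{i+1}$ and $\Normi{p^{i+1}-p^i}=1$, the third coordinates satisfy $p^1_3 \le p^2_3 \le \dots \le p^m_3$ with each step increasing by $0$ or $1$, so $\{p^i_3 : 1 \le i \le m\}$ contains every integer between $(p_{\min})_3$ and $(p_{\max})_3$.

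Next I would bound these two extreme heights using the hypotheses. Take $x \in \Up(f) \cap s_l$, so $x \le f(x)$ and $x_3 = l$. The interval $\sublattice(x,(n,n,n))$ is a finite (hence complete) sublattice, and for $y \ge x$ monotonicity gives $f(y) \ge f(x) \ge x$, so $f$ maps it into itself; Theorem~\ref{thm:tarski1} then produces a fixed point $p' \ge x$ of $f$, whence $(p_{\max})_3 \ge p'_3 \ge x_3 = l$. Symmetrically, taking $z \in \Down(f) \cap s_u$ and using the sublattice $\sublattice((1,1,1),z)$, on which $f$ acts into itself because $f(y) \le f(z) \le z$ for $y \le z$, one gets a fixed point $p'' \le z$ of $f$, whence $(p_{\min})_3 \le p''_3 \le z_3 = u$.

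Finally I would assemble the two ingredients, recalling we are given $l \le u \le l+1$. If $(p_{\min})_3 \le l$, then $l$ is an integer with $(p_{\min})_3 \le l \le (p_{\max})_3$, so some $p^i$ has $p^i_3 = l$, i.e.\ $p^i$ is a fixed point lying in $s_l$. Otherwise $(p_{\min})_3 \ge l+1$, which combined with $(p_{\min})_3 \le u \le l+1$ forces $(p_{\min})_3 = u = l+1$, so $p_{\min}$ is itself a fixed point lying in $s_u$. In either case there is a fixed point in $s_u$ or $s_l$, as required.

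I do not anticipate a genuine difficulty here: the Tarski applications on sublattices and the concluding case split are routine. The only point needing care is justifying the ladder-of-fixed-points claim, which the excerpt only sketches in the technical overview — specifically, checking that the coordinate-flip preserves the violation-free \DMAC structure so that Corollary~\ref{cor:neighboring-lesser-fp} applies in the upward direction, and that the strictly increasing chain it generates necessarily stops at $p_{\max}$ rather than stalling at some other fixed point.
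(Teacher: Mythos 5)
Your proof is correct and takes essentially the same approach as the paper: both bound the least and greatest fixed points relative to the given up-set and down-set points via Tarski's theorem, and then invoke the unit-step chain of fixed points arising from Corollary~\ref{cor:neighboring-lesser-fp} to argue the chain's third coordinate must hit $l$ or $u$. The paper states the chain property and the domination of $g$ over $x$ (resp.\ $p$ under $y$) more tersely, whereas you spell out the sublattice argument and the intermediate-value step on the third coordinate; these are the same ideas at a different level of detail.
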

\begin{proof}
Let $x \in \Up(f) \cap s_l$ whose existence is assumed by the lemma.
By Tarski's theorem we know that the greatest fixed point $g$ satisfies $g \ge
x$. If $g \in s_u$ or $g \in s_l$ then we are done, so we proceed assuming that $g_3
> u$. 

Symmetrically, let 
$y \in \Down(f) \cap s_u$ whose existence is assumed by the lemma.
By Tarski's theorem we know that the least fixed point $p$ satisfies $p \le
y$. If $p \in s_u$ or $p \in s_l$ then we are done, so we proceed assuming that $p_3
< l$.

By Corollary~\ref{cor:neighboring-lesser-fp} there exists a path of fixed points from $p$ to $g$
where each step of the path moves by distance at most $1$ in the
$\ell_\infty$-norm. Since $p$ and $g$ lie on either side of $s_l$ and $s_u$, the
path must pass through both slices, and so in this case there exists a fixed
point in both $s_l$ and $s_u$. 
\end{proof}

This means that there exists an index $i \in \{u, l\}$ such that $\Up(f) \cap
s_i \ne \emptyset$ and $\Down(f) \cap s_i \ne \emptyset$. The goal of the
sub-algorithm is to find this slice. 

The main algorithm
invariant ensures that $\Up(f) \cap s_l \ne \emptyset$ and that $\Down(f) \cap
s_u \ne \emptyset$.
The sub-algorithm simply keeps refining the down-set state in the slice $s_l$
until it either finds a down-set point, which means that $s_l$ is the slice that we
are looking for, or declares that $s_l$ does not contain any down-set points,
which then allows us to invoke Lemma~\ref{lem:fpexists} to conclude that $s_u$ is
the slice that we are looking for.

Formally, the sub-algorithm proceeds as follows.

\begin{enumerate}
\item Compute $t^l_\text{dn}$ by taking $t_\text{dn}^{u}$ from the main
algorithm, translating every box and diagonal box by $(-1, -1, -1)$, and then
applying the appropriate algorithm from the flipped analogues of
Lemmas~\ref{lem:leftjump},~\ref{lem:bottomjump},~\ref{lem:diagjump},
and~\ref{lem:fulljump}. 
If at any point the state $t$ contains a
constant number of points, then we simply search those points to determine if
there is a point in $\Up(f) \cap s_l$.

\item 
\label{itm:go}
Repeatedly apply 
the flipped analogues of
Lemmas~\ref{lem:leftlobestate},~\ref{lem:bottomlobestate},~\ref{lem:diaglobestate},
or~\ref{lem:full-state}, to 
$t^l_\text{dn}$ in slice $s_l$ until an algorithm either finds a point $x \in
\Down(f) \cap s_l$ or declares 
$\Down(f) \cap s_l = \emptyset$. If at any point the state $t$ contains a
constant number of points, then we simply search those points to determine if
there is a point in 
$\Down(f) \cap s_l$.

\item 
If the point $x$ is found in Step~\ref{itm:go}, then we have that 
$\Up(f) \cap s_l \ne \emptyset$ and $\Down(f) \cap s_l \ne \emptyset$, so the
sub-algorithm can terminate returning $l$

\item Otherwise, Lemma~\ref{lem:fpexists} ensures that 
$\Up(f) \cap s_u \ne \emptyset$ and $\Down(f) \cap s_u \ne \emptyset$, so the
sub-algorithm can terminate returning $u$.

\end{enumerate}

\paragraph{\bf Running time of the sub-algorithm.}

We can re-use the analysis of the main algorithm to show that the two-slice
sub-algorithm terminates after $O(\log n)$ steps. In particular, since we only
consider two different slices, we blow-up the area by a factor of 180 at most
twice, and so we spend at most $162$ steps cancelling out that
increase. Then after $O(\log n)$ further steps both states must contain
constantly many points, and so the algorithm will terminate. Moreover, every
step of the algorithm runs in time that is polynomial in the representation of
$f$.

\subsection{The Terminal Phase}

In the terminal phase of the algorithm we have a slice $s_i$ such that $\Up(f)
\cap s_i \ne \emptyset$ and $\Down(f) \cap s_i \ne \emptyset$.
Lemma~\ref{lem:fpexists} tells us that there is a fixed point in $s_i$, and the
goal of the terminal phase is to find this fixed point in $\log(n)$ steps.

\paragraph{\bf Paths of two-dimensional fixed points.}



We start by considering the two-dimensional fixed points of the slice $s_i$,
where a point $x$ is a two-dimensional fixed point if $x_1 = f_1(x)$ and $x_2 =
f_2(x)$. Recall that Corollary~\ref{cor:neighboring-lesser-fp} implies that, if $x$ is a
two-dimensional fixed point that is not the least two-dimensional fixed point,
then we can find a point $x' \le x$ with $\| x - x' \|_\infty  = 1$ such
that $x'$ is also a two-dimensional fixed point. We can use this fact to prove
the following lemma, which states that we can build a path of two-dimensional
fixed points from $x$ to the least two-dimensional fixed point of the slice, and also a second path from $x$
to the greatest two-dimensional fixed point of the slice.

\begin{lemma}
	\label{lem:paths}
	Let $x$ be a two-dimensional fixed point of a slice $s$, and let $p$ and $q$ be the least and
	greatest two-dimensional fixed points of $s$, respectively. 
	\begin{itemize}
		\item There exists a path of two-dimensional fixed points $p = v^1, v^2, \dots,
		v^k = x$ with $v^i \le v^{i+1}$ and $\| v^i - v^{i+1} \|_\infty = 1$ for all
		$i$.
		\item There exists a path of two-dimensional fixed points $x = u^1, u^2, \dots,
		u^k = q$ with $u^i \le u^{i+1}$ and $\| u^i - u^{i+1} \|_\infty = 1$ for all
		$i$.
	\end{itemize}
\end{lemma}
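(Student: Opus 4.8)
The plan is to derive Lemma~\ref{lem:paths} as a direct consequence of Corollary~\ref{cor:neighboring-lesser-fp} and its dimension-flipped analogue, applied to the two-dimensional sub-instance obtained by fixing the third coordinate of the slice.

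First I would observe that the slice $s$, say with its third coordinate fixed to $i$, defines a function $f' : \{1,\dots,n\}^2 \to \{1,\dots,n\}^2$ with $f'_j(z) = f_j(z \oplus i)$ for $j \in \{1,2\}$, where $z \oplus i$ denotes placing $i$ in the third coordinate. Since restricting a violation-free $\DMAC$ instance to a slice cannot create a violation of monotonicity or of non-expansion — any such violation of $f'$ lifts immediately to a violation of $f$ — the function $f'$ is itself a violation-free two-dimensional $\DMAC$ instance, and its fixed points are precisely the two-dimensional fixed points of $s$. In particular $p$ and $q$ are the least and greatest fixed points of $f'$, whose existence is guaranteed by Tarski's theorem (Theorem~\ref{thm:tarski1}), and every fixed point of $f'$, including $x$, lies weakly between $p$ and $q$.

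Next, to build the path from $p$ to $x$, I would iterate Corollary~\ref{cor:neighboring-lesser-fp} on $f'$: set $w^0 = x$, and while $w^j \ne p$, the corollary supplies a fixed point $w^{j+1}$ of $f'$ with $w^{j+1} \in \sublattice(w^j - \One, w^j)$ and $w^{j+1} \ne w^j$; since $w^{j+1} \le w^j$ and the two points are distinct, $\Norm{w^j - w^{j+1}}_\infty = 1$. Each step strictly decreases at least one coordinate, and the sequence stays weakly above $p$ by Tarski's theorem, so on the finite grid it reaches $p$ after finitely many steps, say $w^m = p$. Reversing the sequence, $v^\ell := w^{m-\ell+1}$ for $\ell = 1,\dots,m+1$ gives the required path $p = v^1 \le v^2 \le \dots \le v^{m+1} = x$ with unit $\ell_\infty$ steps. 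The path from $x$ to $q$ is obtained symmetrically: flipping all dimensions of $f'$ interchanges its least and greatest fixed points, so the dimension-flipped form of Corollary~\ref{cor:neighboring-lesser-fp} yields, whenever a fixed point of $f'$ is not the greatest, a fixed point one $\ell_\infty$-unit above it; iterating from $x$ upward, again with strict monotone progress on the finite grid, produces the path $x = u^1 \le \dots \le u^k = q$.

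I do not expect a serious obstacle, since the substantive work is already carried by Corollary~\ref{cor:neighboring-lesser-fp}. The only points that need care are (i) checking that the two-dimensional restriction $f'$ is genuinely violation-free so that the corollary is applicable, which follows from the lift of violations to $f$; (ii) confirming that consecutive points on each path are at $\ell_\infty$-distance exactly $1$ rather than merely at most $1$, which holds because they are distinct and comparable; and (iii) termination of each iteration, which is immediate from finiteness of the grid together with the strict monotone progress of the sequence and the uniform bound provided by $p$ (respectively $q$).
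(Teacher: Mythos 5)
Your proof is correct and takes essentially the same approach as the paper: both iterate Corollary~\ref{cor:neighboring-lesser-fp} downward from $x$ to reach $p$, and both obtain the upward path by flipping dimensions and applying the corollary to the flipped instance. Your version is slightly more explicit than the paper's about justifying that the two-dimensional restriction to the slice is itself a violation-free $\DMAC$ instance (so that the corollary applies) and about why consecutive points are at $\ell_\infty$-distance exactly $1$, but these are details the paper implicitly relies on rather than a genuinely different argument.
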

\begin{proof}
	The path from $p$ to $x$ can be obtained by repeatedly applying
	Corollary~\ref{cor:neighboring-lesser-fp} starting at $x$. Each application of the corollary to
	a point $v^{i+1}$ gives us a new two-dimensional fixed point $v^{i}$ with $v^i
	\le v^{i+1}$ and $\| v^i - v^{i+1} \|_\infty = 1$, and we can keep applying
	the corollary until we reach $p$. 
	
	The path from $x$ to $q$ can be likewise generated using
	Corollary~\ref{cor:neighboring-lesser-fp}, but here we must first flip dimensions $1$ and $2$
	before proceeding. In the flipped instance we start at $x$, and then each
	application of the corollary to a point $v^i$ gives us a new two-dimensional
	fixed point $v^{i+1}$
	with $v^i \ge v^{i+1}$ and $\| v^i - v^{i+1} \|_\infty = 1$ in the flipped
	instance, which means that 
	$v^i \le v^{i+1}$ and $\| v^i - v^{i+1} \|_\infty = 1$ in the original
	instance.
	We can keep applying the corollary until we reach $q$, which is the least
	two-dimensional fixed point in the flipped instance.
\end{proof}

The next lemma states that if the slice contains a global fixed point (that is,
a point that is fixed in all three dimensions), then the least and greatest
fixed points must obey certain inequalities in the third dimension.

\begin{lemma}
	\label{lem:signs}
	Let $s$ be a two-dimensional slice, and let $p$ and $q$ be the least and
	greatest two-dimensional fixed points of $s$, respectively. If $s$ contains a
	global fixed point, then we have 
	$p_3 \ge f_3(p)$ and $q_3 \le f_3(q)$.
\end{lemma}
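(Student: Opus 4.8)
The plan is to prove both inequalities directly from monotonicity, using the global fixed point as a comparison element in the lattice of two-dimensional fixed points. Let $c$ denote the common value of the third coordinate shared by all points of the slice $s$, and let $z$ be a global fixed point contained in $s$, so that $f_1(z) = z_1$, $f_2(z) = z_2$, and $f_3(z) = z_3 = c$. In particular $z$ satisfies $f_1(z) = z_1$ and $f_2(z) = z_2$, so $z$ is itself a two-dimensional fixed point of $s$; hence it is comparable to both $p$ and $q$, which exist as the least and greatest two-dimensional fixed points of $s$ by Tarski's theorem applied to the monotone two-dimensional restriction of $f$ to $s$ (as in the proof of Lemma~\ref{lem:upordown}).

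First I would establish $p_3 \ge f_3(p)$. Since $p$ is the \emph{least} two-dimensional fixed point of $s$ and $z$ is a two-dimensional fixed point, we have $p \le z$. Both $p$ and $z$ lie in $s$, so $p_3 = z_3 = c$. Now apply monotonicity of $f$ (valid because the instance is violation-free) to $p \le z$: this gives $f_3(p) \le f_3(z) = z_3 = c = p_3$, i.e.\ $p_3 \ge f_3(p)$. The second inequality $q_3 \le f_3(q)$ follows symmetrically: since $q$ is the \emph{greatest} two-dimensional fixed point of $s$, we have $z \le q$, and again $q_3 = z_3 = c$; applying monotonicity to $z \le q$ yields $q_3 = c = z_3 = f_3(z) \le f_3(q)$, i.e.\ $q_3 \le f_3(q)$.

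There is no substantive obstacle here — the only ingredients are that a global fixed point is a fortiori a two-dimensional fixed point (so it is $\ge p$ and $\le q$), that all points of a slice agree in the third coordinate, and monotonicity. The one point to be careful about, which was the source of the earlier error, is the orientation of the comparisons: leastness of $p$ places $p$ \emph{below} $z$, which through monotonicity pushes $f_3(p)$ down to at most $f_3(z) = p_3$; and greatestness of $q$ places $q$ \emph{above} $z$, which pushes $f_3(q)$ up to at least $f_3(z) = q_3$. No proof by contradiction is needed.
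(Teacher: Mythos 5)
Your proof is correct and rests on the same idea as the paper's: a global fixed point is in particular a two-dimensional fixed point of the slice, hence comparable to $p$ and $q$, and monotonicity in dimension 3 then forces the stated inequalities. The paper phrases this as a contradiction via Lemma~\ref{lem:monodown} (if $p_3 < f_3(p)$ then every two-dimensional fixed point would move strictly up in dimension 3), but that lemma is just monotonicity packaged, so your direct version is an equivalent and slightly cleaner rendering of the same argument.
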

\begin{proof}
	Note that every global fixed point in $s$ is also a two-dimensional fixed point
	of $s$. So if we had $p_3 < f_3(p)$ then Lemma~\ref{lem:monodown} would imply
	that every two-dimensional fixed point $v$ of $s$ satisfies $v_3 < f_3(v)$,
	contradicting the fact that $s$ contains a global fixed point. Likewise, if we
	had $q_3 > f_3(q)$ then Lemma~\ref{lem:monodown} would imply that every
	two-dimensional fixed point $v$ of $s$ satisfies $v_3 > f_3(v)$, again
	contradicting the fact that $s$ contains a global fixed point.
\end{proof}

Finally we prove the following lemma, which states that if we walk along any
path of two-dimensional fixed points like those whose existence is asserted in
Lemma~\ref{lem:paths}, and if the end points of the path satisfy conditions
like those proven in Lemma~\ref{lem:signs}, then the path must visit a global
fixed point.

\begin{lemma}
	\label{lem:pathfp}
	Let $v^1, v^2, \dots, v^k$ be a path of two-dimensional fixed points
	satisfying $\| v^i - v^{i+1} \|_\infty = 1$ for all $i$.
	If $v^1_3 \ge f_3(v^1)$ and $v^k_3 \le f_3(v^k)$, then there exists an $i$ such
	that $v^i$ is a global fixed point.
\end{lemma}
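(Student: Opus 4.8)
The plan is to prove this by a discrete intermediate-value argument on the displacement of $f$ in dimension $3$ along the path. Define $g(i) = f_3(v^i) - v^i_3$ for each index $i$. The hypotheses say $g(1) \le 0$ and $g(k) \ge 0$, so it suffices to find an index $i$ where $g(i) = 0$, since such an index would give $f_3(v^i) = v^i_3$, and combined with the fact that each $v^i$ is already a two-dimensional fixed point (so $f_1(v^i) = v^i_1$ and $f_2(v^i) = v^i_2$), this makes $v^i$ a global fixed point.

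The key subtlety is that $g$ need not be integer-valued in unit steps in the way that would make a naive intermediate value theorem immediate — but in a \DMAC instance with displacements of length at most $1$, we have $g(i) \in \{-1, 0, 1\}$ for every $i$, and moreover consecutive values $g(i)$ and $g(i+1)$ cannot jump from $-1$ to $1$ or from $1$ to $-1$ directly. I would establish this no-jump property as the main step: since $\| v^i - v^{i+1} \|_\infty = 1$, contraction (non-expansion) gives $|f_3(v^i) - f_3(v^{i+1})| \le 1$, and also $|v^i_3 - v^{i+1}_3| \le 1$, so $g(i+1) - g(i) = (f_3(v^{i+1}) - f_3(v^i)) - (v^{i+1}_3 - v^i_3)$ has absolute value at most $2$. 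That alone is not quite enough to rule out a $-1 \to 1$ jump, so I would sharpen it: if $g(i) = -1$ then $f_3(v^i) < v^i_3$, and I would use Lemma~\ref{lem:monodown} (monotonicity) together with the fact that $v^{i+1} \ge v^i$ or $v^{i+1} \le v^i$ (the path is weakly monotone along it) to constrain the sign of $g(i+1)$. Concretely, if $v^{i+1} \le v^i$ then $f_3(v^i) < v^i_3$ together with monotonicity-type reasoning forces $g(i+1) \le 0$; if $v^{i+1} \ge v^i$, then $g(i) = 1$ (i.e. $f_3(v^i) > v^i_3$) similarly forces $g(i+1) \ge 0$. In either orientation one of the two "bad" jumps is impossible, and by symmetry the other is handled when the roles of the two endpoints of that edge are swapped.

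With the no-jump property in hand, the conclusion follows: walk along $i = 1, 2, \dots, k$; since $g(1) \le 0$, $g(k) \ge 0$, and $g$ takes values in $\{-1,0,1\}$ moving between $-1$ and $1$ only by passing through $0$, there must be some index $i$ with $g(i) = 0$. That $v^i$ is the desired global fixed point.

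I expect the main obstacle to be the careful bookkeeping in the no-jump step, specifically matching the direction of the unit step $v^{i+1} - v^i$ (which could go up in one coordinate or down in the other, per the structure of $\mathcal{B}$-style paths) with the correct weak/strict variant of Lemma~\ref{lem:monodown} or Lemma~\ref{lem:contrdown}. One has to check both cases of how the path can step — increasing in dimension $1$ versus decreasing in dimension $2$ (or the analogous cases for a general path of two-dimensional fixed points as produced by Lemma~\ref{lem:paths}) — and verify that in each case a jump of $g$ across $\{-1,1\}$ produces a contradiction either with monotonicity of $f$ or with non-expansion. Once each of these handful of cases is dispatched, the rest is the trivial discrete IVT wrap-up.
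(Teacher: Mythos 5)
Your overall plan — discrete intermediate-value on $g(i) = f_3(v^i) - v^i_3$, reducing to a no-jump property — is exactly the paper's argument, but the way you derive the no-jump property has a real gap, and it is more complicated than it needs to be.

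The error is in the step where you compute $g(i+1) - g(i) = \bigl(f_3(v^{i+1}) - f_3(v^i)\bigr) - \bigl(v^{i+1}_3 - v^i_3\bigr)$ and conclude only $|g(i+1) - g(i)| \le 2$. All the $v^i$ are two-dimensional fixed points of a single slice $s$, so $v^i_3$ is the \emph{same} for every $i$; the second parenthesis is zero. Non-expansion alone then gives $|g(i+1) - g(i)| = |f_3(v^{i+1}) - f_3(v^i)| \le \Norm{v^{i+1} - v^i}_\infty = 1$, which already rules out the $-1 \to 1$ jump with no further work. That is precisely what the paper argues, phrased contrapositively: if $a$ and $b$ are adjacent path points with $f_3(a) < a_3$ and $f_3(b) > b_3$, then (since $a_3 = b_3$ and displacements are unit-length) $|f_3(a) - f_3(b)| \ge 2 > 1 = \Norm{a-b}_\infty$, a \ref{DV2} violation.

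Because you stop at the loose bound $\le 2$, you instead reach for Lemma~\ref{lem:monodown}, and in doing so you quietly assume ``$v^{i+1} \ge v^i$ or $v^{i+1} \le v^i$,'' i.e.\ that consecutive points on the path are comparable. That is not a hypothesis of Lemma~\ref{lem:pathfp}: the statement asks only for $\Norm{v^i - v^{i+1}}_\infty = 1$. The paths produced by Lemma~\ref{lem:paths} do happen to be monotone, so your argument covers the paper's applications in Cases~11 and~12 of Lemma~\ref{lem:halfspace}, but it does not prove the lemma as stated. Dropping the monotonicity patch and instead using the constant-$v_3$ observation fixes the gap, simplifies the proof, and recovers the paper's argument.
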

\begin{proof}
	First note that if $v^1_3 = f_3(v^1)$ or $v^k_3 = f_3(v^k)$ then one of
	$v^1$ or $v^k$ is a global fixed point and we are done. So we can assume that 
	$v^1_3 > f_3(v^1)$ and $v^k_3 < f_3(v^k)$.
	
	Since $v^1_3 - f_3(v^1) > 0$ and $v^k_3 - f_3(v^k) < 0$ there must be some edge
	in the path at which $v^i_3 - g^i_3(v)$ changes sign. Note that there cannot be
	two points $a$ and $b$ with $\| a - b \|_\infty = 1$ with $a_3 > f_3(a)$ and
	$b_3 < f_3(b)$ since this would imply $\| f(a) - f(b) \|_\infty > \| a-b \|_\infty$, giving
	us an immediate violation of strict contraction. Hence there must exist some point
	$v^i$ such that $v^i_3 = f_3(v^i)$. Since $v^i$ is a two-dimensional fixed
	point, we therefore have that $v^i$ is a global fixed point.
\end{proof}

Lemma \ref{lem:pathfp} implies that for every global fixed point $x$, there exists a
path from the least two-dimensional fixed point of the slice to the greatest
two-dimensional fixed point of the slice that passes through $x$. We will use
this fact crucially in our algorithm.

\paragraph{\bf Spaces ruled out by a single query.}

Our algorithm will maintain a region $R$ with the invariant that $R$ contains a
global fixed point. It will then make queries in order to reduce the size of
$R$. The following lemma shows that each query we make will rule out a
half-space of points.

\begin{lemma}
\label{lem:halfspace}
Let $R \subseteq s_i$ be a region that contains a global fixed point, and
suppose that we query point $q \in R$ that is not a global fixed point. 
There exists a half-space of one of the following forms such that, when we
remove the half-space from $R$ creating $R'$, we have that $R'$ contains a
global fixed point.
\begin{itemize}
\item A vertical half-space ruling out all points $p$ with either $p_1 \le
q_1$, $p_1 \ge q_1$, $p_1 \le q_1 - 1$, or $p_1 \ge q_1 + 1$. 
\item A horizontal half-space ruling out all points $p$ with either $p_2 \le
q_2$, $p_2 \ge q_2$, $p_2 \le q_2 - 1$, $p_2 \ge q_2 + 1$. 
\item An upper-left diagonal half-space ruling out all points $p$ with $p_2 -
q_2 \ge p_1 - q_1$. 
\item A lower-right diagonal half-space ruling out all points $p$ with $p_2 -
q_2 \le p_1 - q_1$.
\end{itemize}
\end{lemma}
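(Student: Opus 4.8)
The plan is to do a case analysis based on the direction functions at $q$ in each of the three dimensions, and for each combination of directions, exhibit a half-space whose removal preserves the existence of a global fixed point in $R$. The key conceptual tool is Lemma~\ref{lem:pathfp} together with Lemma~\ref{lem:paths}: because a global fixed point lies on a path of two-dimensional fixed points running from the least to the greatest two-dimensional fixed point of the slice, it suffices to argue that some such path survives inside $R'$.

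\textbf{Handling the two-dimensional directions at $q$.} First I would dispense with the case where $q$ is a two-dimensional fixed point but $f_3(q) \ne q_3$: here the global fixed point cannot be $q$, and Lemma~\ref{lem:monodown} applied in dimension~$3$ shows that if $f_3(q) > q_3$ then every point weakly below $q$ moves strictly upward in dimension~$3$, so none of them is a global fixed point; this rules out the lower-left quadrant, which we can express as, say, the horizontal half-space $p_2 \le q_2$ (and symmetrically the upper-right quadrant if $f_3(q) < q_3$). More generally, whenever $q$ is \emph{not} a two-dimensional fixed point, at least one of $f_1(q), f_2(q)$ differs from $q_1, q_2$. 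Suppose $f_1(q) < q_1$. Then by Lemma~\ref{lem:contrdown} (strict, upward cone) every point in $\UC_1(q)$ also moves strictly left in dimension~$1$, so no global fixed point lies in $\UC_1(q)$; the complement of $\UC_1(q)$ restricted to one side is captured by the upper-left diagonal half-space $\{p : p_2 - q_2 \ge p_1 - q_1\}$ or its reflection, but since $\UC_1(q)$ is a two-sided cone we instead argue directly with the path: any path of two-dimensional fixed points that crosses the vertical line through $q$ at a point weakly above $q$ cannot contain a two-dimensional fixed point at that crossing (it would be in $\UC_1(q)$), so in fact the least or greatest two-dimensional fixed point, together with monotonicity of the surfaces, lets me conclude the path stays on one side; the clean way to phrase the excluded set is the vertical half-space $p_1 \ge q_1 + 1$ (using that $q$ itself moves strictly left, so $q \notin \Up(f)$ in dimension~1, and Lemma~\ref{lem:monodown} propagates this downward along dimension~2 while Lemma~\ref{lem:contrdown} propagates it upward along dimension~2). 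Symmetric arguments handle $f_1(q) > q_1$, $f_2(q) < q_2$, $f_2(q) > q_2$, giving the horizontal half-spaces.

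\textbf{Combining dimensions to get the diagonal half-spaces.} The diagonal half-spaces arise precisely when $q$ \emph{is} a two-dimensional fixed point but not a global one, or when the one-dimensional directions in dimensions~1 and~2 agree in a way that, combined with the dimension-3 direction, only excludes a diagonal cone. The cleanest route: if $f_3(q) > q_3$ and $q$ is a two-dimensional fixed point, then $q$ lies below no global fixed point in the sense that every path from the least to greatest two-dimensional fixed point passing through a point weakly below $q$ would, at that point, have dimension~3 moving strictly up (Lemma~\ref{lem:monodown}), contradicting that the path hits a global fixed point by Lemma~\ref{lem:pathfp} only if the global fixed point is strictly above $q$ in dimension~1 or~2. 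This excludes the lower-left diagonal region, i.e. all $p$ with $p_2 - q_2 \le p_1 - q_1$ \emph{and} $p_1 \le q_1$ — but to fit the lemma's statement I would pick whichever of the two listed diagonal half-spaces is contained in this excluded set; Lemma~\ref{lem:signs} guarantees the surviving side still contains the least (or greatest) two-dimensional fixed point, hence by Lemma~\ref{lem:pathfp} a global fixed point.

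\textbf{The main obstacle.} The hard part is bookkeeping: showing that in \emph{every} combination of the (up to $3 \times 3 \times 3$, but many infeasible by contraction) directions at $q$, the excluded region genuinely contains one of the four listed half-space shapes, and that the complementary region still contains an endpoint of a path of two-dimensional fixed points (via Lemma~\ref{lem:signs}) so that Lemma~\ref{lem:pathfp} applies. Contraction rules out the configurations where dimension~1 points left while an adjacent point points right, etc., which trims the cases considerably, but one must still be careful that the path is not severed — this is where the monotonicity of surfaces (Theorem~\ref{thm:surfaces}) and the fact that the path moves with $\ell_\infty$-steps of size~1 are used to guarantee the path can be re-routed to stay inside $R'$. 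I expect the proof to proceed by fixing the dimension-3 direction at $q$ first (three cases: $f_3(q) > q_3$, $= q_3$, $< q_3$), and within each, doing the dimension-1/dimension-2 sub-case analysis, citing Lemmas~\ref{lem:contrdown} and~\ref{lem:monodown} for the exclusions and Lemmas~\ref{lem:paths}, \ref{lem:signs}, and~\ref{lem:pathfp} for the survival of a global fixed point.
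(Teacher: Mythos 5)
Your overall framework — a case analysis on the displacement directions at $q$, using Lemma~\ref{lem:contrdown} and Lemma~\ref{lem:monodown} for cone/quadrant exclusions and Lemmas~\ref{lem:paths}, \ref{lem:signs}, \ref{lem:pathfp} for the hard cases — matches the paper, but several of the concrete claims that would carry the proof are wrong, and they are not cosmetic.

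First, you have the direction of the monotonicity propagation for dimension~3 backwards. Lemma~\ref{lem:monodown} (strict) with $f_3(q) > q_3$ applies to points $y \ge q$ with $y_3 = q_3$, i.e.\ it propagates \emph{upward}: every point in the upper-right quadrant $\{p \ge q\}$ of the slice has $f_3 > \text{coord}$ and hence is not a global fixed point. You repeatedly assert the opposite (``every point weakly below $q$ moves strictly upward in dimension~3''), which would rule out the lower-left quadrant instead. This flip propagates through your handling of Cases~11/12-type situations and the diagonal case, and each resulting conclusion is on the wrong side.

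Second, and independently, ruling out a quadrant does not let you freely name a containing half-space to remove. You write that excluding the quadrant $\{p \le q\}$ ``can be expressed as the horizontal half-space $p_2 \le q_2$,'' but that half-space strictly contains points outside the quadrant (those with $p_2 \le q_2$ and $p_1 > q_1$), and a global fixed point could sit there. This is exactly the content of the paper's Cases~7, 10, 11, and~12: after the quadrant exclusion you must show, via the two-dimensional-fixed-point path of Lemma~\ref{lem:paths} plus Lemma~\ref{lem:pathfp}, that if the global fixed point you started with lies in one of the remaining ``wings'' then \emph{another} global fixed point exists on the surviving side. That argument is the crux of the lemma, and your proposal treats it as a labeling choice.

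Third, you misattribute where the diagonal half-spaces come from. In the paper they arise in Cases~2 and~3, where $f_1$ and $f_2$ move in \emph{opposite} senses at $q$ ($f_1(q) < q_1$ while $f_2(q) > q_2$, or vice versa): the two cone/quadrant exclusions from Lemmas~\ref{lem:contrdown} and~\ref{lem:monodown} then tile an entire diagonal half-space, with no path argument needed. They do not arise from the $q$-is-a-2D-fixed-point case, which yields a vertical or horizontal half-space (after the path argument). Your description of how the diagonal cases appear would lead you to attempt the wrong construction.

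With the propagation direction corrected, the quadrant-to-half-space gap filled by the path argument of Lemma~\ref{lem:pathfp}, and the diagonal cases re-routed to the opposite-sense $f_1,f_2$ configurations, your outline would align with the paper's twelve-case proof.
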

\begin{proof}
To prove this lemma, we enumerate the possible responses to the query at $q$,
and in each case we show that a half-space can be ruled out. We begin with the
responses in which $f$ moves strictly in dimensions 1 and 2. In each case we
apply Lemmas~\ref{lem:contrdown} and~\ref{lem:monodown} to find a region of
points that must move strictly in either dimension 1 or 2, which allows us to
conclude that no point in those regions is a global fixed point. Hence, since
$R$ contains a global fixed point, it must lie outside the half-space, so we
can safely remove all points in the half-space.

\halfspaceFirst

\begin{itemize}
\item \textbf{Case 1: $\mathbf{q_1 < f_1(q)}$ and $\mathbf{q_2 < f_2(q)}$.}
(Figure~\ref{fig:case1-4} (a)). The figure shows the two excluded regions
implied by Lemmas~\ref{lem:contrdown} and~\ref{lem:monodown}. As we can see,
in this case we can rule out the vertical half-space defined by $p$ with $p_1
\le q_1$ and also the horizontal half-space defined by $p$ with $p_2 \le q_2$.

\item \textbf{Case 2: $\mathbf{q_1 > f_1(q)}$ and $\mathbf{q_2 < f_2(q)}$.}
(Figure~\ref{fig:case1-4} (b)). The figure shows the two excluded regions
implied by Lemmas~\ref{lem:contrdown} and~\ref{lem:monodown}. As we can see, in
this case we can rule out the lower-right diagonal half-space defined by $p$
with $p_2 - q_2 \le p_1 - q_1$.

\item \textbf{Case 3: $\mathbf{q_1 < f_1(q)}$ and $\mathbf{q_2 > f_2(q)}$.}
(Figure~\ref{fig:case1-4} (c)). The figure shows the two excluded regions
implied by Lemmas~\ref{lem:contrdown} and~\ref{lem:monodown}. As we can see,
in this case we can rule out the upper-left diagonal half-space defined by $p$
with $p_2 - q_2 \ge p_1 - q_1$.

\item \textbf{Case 4: $\mathbf{q_1 < f_1(q)}$ and $\mathbf{q_2 < f_2(q)}$.}
(Figure~\ref{fig:case1-4} (d)). The figure shows the two excluded regions
implied by Lemmas~\ref{lem:contrdown} and~\ref{lem:monodown}. As we can see,
in this case we can rule out the horizontal half-space defined by $p$ with $p_2
\ge q_2$ and the vertical half-space defined by $p$ with $p_1 \ge q_1$.
\end{itemize}

\halfspaceSecond

Now we consider the cases in which $f_1(q)$ moves strictly upward, while
$f_2(q) = q_2$. Here we rely on the third dimension to give us our half-space.
As before, in each case we apply Lemmas~\ref{lem:contrdown}
and~\ref{lem:monodown} to find a region of points that must move strictly in
either dimension 1 or 3, which allows us to conclude that no point in those
regions is a fixed point, so the regions can be safely removed from $R$.

\begin{itemize}
\item \textbf{Case 5: $\mathbf{q_1 < f_1(q)}$ and $\mathbf{q_3 < f_3(q)}$.}
(Figure~\ref{fig:case5-7} (a)). The figure shows the two excluded regions
implied by Lemmas~\ref{lem:contrdown} and~\ref{lem:monodown}. As we can see, in
this case we can rule out the horizontal half-space defined by $p$ with $p_2
\ge q_2$. 

\item \textbf{Case 6: $\mathbf{q_1 < f_1(q)}$ and $\mathbf{q_3 > f_3(q)}$.}
(Figure~\ref{fig:case5-7} (b)). The figure shows the two excluded regions
implied by Lemmas~\ref{lem:contrdown} and~\ref{lem:monodown}. As we can see, in
this case we can rule out the vertical half-space defined by $p$ with $p_1
\le q_1$. 

\item 
\textbf{Case 7: $\mathbf{q_1 < f_1(q)}$ and $\mathbf{q_2 = f_2(q)}$ and
$\mathbf{q_3 = f_3(q)}$.}
(Figure~\ref{fig:case5-7} (c)). The figure shows the two excluded regions
implied by Lemmas~\ref{lem:contrdown} and~\ref{lem:monodown}.
Here a half-space is not immediately ruled out, so a more involved proof is required.
We show that we can rule out the half-space defined by points $p$ with $p_1 \le
q_1 - 1$. We do this by considering the location of the global fixed point $x
\in R$, whose existence the lemma assumes. 
If $x_1 \ge q_1$ then we are done. 

So let us suppose that $x_1 < q_1$. We will show that there exists another
global fixed point $x'$ satisfying $x'_1 \ge q_1$. First we apply
Lemma~\ref{lem:paths} to obtain the path
$x = p^1, p^2, \dots, p^k = g$ from $x$ to the greatest two-dimensional fixed
point $g$
of the slice, and we observe 
that the path cannot pass through the
blue region in Figure~\ref{fig:case5-7} (c), which was obtained by applying
Lemmas~\ref{lem:contrdown} and~\ref{lem:monodown},  because all points in that region
move strictly upward in dimension 1, and all points on the path are two-dimensional
fixed points. 

Moreover, we note that $q$ is in the two-dimensional up-set of the slice, and
so by Tarski's theorem we have $q \le g$. As shown (Figure~\ref{fig:case5-7}
(c)), the only way to move from $x$ to $g$ while monotonically increasing in
every step and avoiding the already excluded region is to pass through some
point $x'$ with $x'_1 = q_1$ and $x'_2 \le q_2$. 

Note that we cannot have $x'_3 > f_3(x')$, since then Lemma~\ref{lem:monodown}
would imply that $x_3 > f_3(x)$ which would imply that $x$ is not a global fixed
point. We also cannot have $x'_3 < f_3(x')$, because then
Lemma~\ref{lem:monodown} would imply that $q_3 < f_3(q)$, which contradicts the
assumptions made in this case. Hence we must have 
$x'_3 = f_3(x')$, and so $x'$ is a global fixed point satisfying $x'_1 \ge
q_1$, as required.
\end{itemize}

The case where $f_1(q)$ moves strictly downward can be dealt with
symmetrically.

\halfspaceThird

\begin{itemize}
\item \textbf{Case 8: $\mathbf{q_1 > f_1(q)}$ and $\mathbf{q_3 < f_3(q)}$.}
(Figure~\ref{fig:case8-10} (a)). The figure shows the two excluded regions
implied by Lemmas~\ref{lem:contrdown} and~\ref{lem:monodown}. As we can see, in
this case we can rule out the vertical half-space defined by $p$ with $p_1
\ge q_1$. 

\item \textbf{Case 9: $\mathbf{q_1 > f_1(q)}$ and $\mathbf{q_3 > f_3(q)}$.}
(Figure~\ref{fig:case8-10} (b)). The figure shows the two excluded regions
implied by Lemmas~\ref{lem:contrdown} and~\ref{lem:monodown}. As we can see, in
this case we can rule out the horizontal half-space defined by $p$ with $p_2
\le q_2$. 

\item 
\textbf{Case 10: $\mathbf{q_1 > f_1(q)}$ and $\mathbf{q_2 = f_2(q)}$ and
$\mathbf{q_3 = f_3(q)}$.}
(Figure~\ref{fig:case8-10} (c)). The figure shows the two excluded regions
implied by Lemmas~\ref{lem:contrdown} and~\ref{lem:monodown}.
Here we can follow the flipped analogue of Case 7 to rule out the
vertical half-space defined by $p$ with $p_1 \ge q_1 + 1$.

Specifically, if the global fixed point $x$ satisfies $x_1 \le q_1$ then we are
done. Otherwise, as in Case 7 we can build a path of two-dimensional fixed
points from $x$ to the least fixed point $l$. Since $q$ is in the
two-dimensional down-set, this path must pass through a point $x'$ with $x'_1 =
q_1$ and $x'_2 \ge q_2$. Using Lemma~\ref{lem:monodown}, we have that if $x'_3 > f_3(x')$ then this contradicts the fact
that $x$ is a global fixed point, while if $x'_3 < f_3(x')$ then this
contradicts the fact that $q_3 = f_3(q)$. So must have $x'_3 = f_3(x')$,
meaning that $x'$ is a global fixed point satisfying $x'_1 \le q_1$. 
\end{itemize}

The cases where $f_2(q)$ moves strictly upwards or strictly downwards, while
$f_1(q)$ is fixed can be dealt with by exchanging dimensions 1 and 2, and then
applying one of Cases 5 through 10.

This leaves us with the cases in which both $f_1(q)$ and $f_2(q)$ are fixed. 
Since $q$ is not a global fixed point, we can use one of the following cases.

\halfspaceFourth

\begin{itemize}
\item \textbf{Case 11: $\mathbf{q_1 = f_1(q)}$ and $\mathbf{q_2 = f_2(q)}$ and
$\mathbf{q_3 < f_3(q)}$.} (Figure~\ref{fig:case11-12} (a)).
Here we argue that we can rule out the vertical\footnote{We could also rule out
the horizontal half-space defined by $p$ with $p_2 \ge q_2 + 1$ by exchanging
dimensions 1 and 2 and using the same argument, if we so desired.} half-space
defined by $p$ with $p_1 \ge q_1 + 1$. 

Let $x$ be the global fixed point that lies in $R$. If $x_1 \le q_1$ then we
are done. Note also that we cannot have $x \ge q$, since then
Lemma~\ref{lem:monodown} would imply that $x_3 < f_3(x)$, contradicting the
fact that $x$ is a global fixed point. 

So the only remaining case is when $x_1 > q_1$ and $x_2 < q_2$, as depicted in 
(Figure~\ref{fig:case11-12} (a)).
We apply Lemma~\ref{lem:paths} to obtain a path
$l = p^1, p^2, \dots, p^k = q$ from the
least two-dimensional fixed point of the slice $l$ to $q$. Note that we must have $l \le x$, because
$l$ is a least two-dimensional fixed point, so there must exist an index $i$
such that $p^i_2 = x_2$. Note also that $p^i_1 \le q_1 < x_1$. So we have
$p^i \le x$.

If $p^i_3 = f_3(p^i)$ then $p^i_3$ is a global fixed point satisfying $p^i_1
\le q_1$ and we are done. If $p^i_3 < f_3(p^i)$ then Lemma~\ref{lem:monodown}
would imply that $x_3 < f_3(x)$, contradicting the fact that $x$ is a global
fixed point. So the only remaining case is when $p^i_3 > f_3(p^i)$, where we
can apply Lemma~\ref{lem:pathfp} to argue that there exists a global fixed
point $p^j$ with $p^i \le p^j \le q$, which in particular means $p^j_1 \le
q_1$.

\item \textbf{Case 12: $\mathbf{q_1 = f_1(q)}$ and $\mathbf{q_2 = f_2(q)}$ and
$\mathbf{q_3 > f_3(q)}$.} (Figure~\ref{fig:case11-12} (b)).
This case is the symmetric analogue of Case 11, and so we can follow the same
approach to rule out the half-space defined by $p$ with $p_1 \le q_1 - 1$. 
We only note the differences between Case 11 and Case 12 below.

In this case, if the global fixed point $x$ satisfies $x_1 \le q_1$, then we
first note that we must have $x_2 > q_2$, since the case where $x_2 \le q_2$ is
ruled out by Lemma~\ref{lem:monodown}, and then we use 
Lemma~\ref{lem:paths} to obtain a path
$q = p^1, p^2, \dots, p^k = g$ from $q$ to the
greatest two-dimensional fixed point of the slice $g$. Since $g \ge x$,
there exists an index $i$ such that $p^i_2 = x_2$, and we have $x \le p^i_2$.

If $p^i_3 = f_3(p^i)$ then $p^i_3$ is a global fixed point satisfying $p^i_1
\ge q_1$. If $p^i_3 > f_3(p^i)$ then Lemma~\ref{lem:monodown} would imply that
$x_3 > f_3(x)$, contradicting the fact that $x$ is a global fixed point.
Finally, if $p^i_3 < f_3(p^i)$, then we can apply Lemma~\ref{lem:pathfp} to
find a a global fixed point $p^j$ with $p^i \ge p^j \ge q$, which in particular
means $p^j_1 \ge q_1$.
\end{itemize}

\end{proof}

\paragraph{\bf The algorithm.}

As mentioned, the algorithm maintains a region $R$ with the invariant that $R$
contains a global fixed point. Initially we set $R$ to be a square that
contains the entire slice, which satisfies the invariant due to
Lemma~\ref{lem:fpexists}. Then, in each iteration of the algorithm we make a
number of queries, and then for each query we eliminate the corresponding
half-space from Lemma~\ref{lem:halfspace}. 

We start with a box shaped region that contains the entire slice. Each query 
eliminates a vertical, horizontal, upper-left diagonal, or lower-right diagonal
half-space, which may again give us a box, or a box with a diagonal half-space
removed. 

We say that a region $R$ is \emph{valid} if it is defined by a box with at most
one upper-left half-space removed, and at most one lower-right half-space
removed. Note that if we query a valid region $R$, and then remove the
half-space given to us from Lemma~\ref{lem:halfspace} to get a reduced region
$R'$, then $R'$ is also valid.

\shapes

To aid our analysis, we further decompose each valid region into four
categories of \emph{basic shapes}. Specifically we consider spaces that are either
box shaped, a \emph{lower right-angle triangle} with width and height $k$ as
shown in Figure~\ref{fig:shapes} (b), a \emph{upper right-angle triangle} with
width and height $k$ as shown in Figure~\ref{fig:shapes} (c), a parallelogram
whose top and bottom sides are aligned with $(1, 0)$, and whose left and right
sides are aligned with $(1, 1)$ as shown in Figure~\ref{fig:shapes} (d).

The following lemma shows that each valid region can be decomposed into at most
5 basic shapes.

\begin{lemma}
\label{lem:shapedecomp}
Every valid region $R$ can be decomposed into the union of at most 5 basic
shapes.
\end{lemma}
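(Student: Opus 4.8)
The plan is to argue by a case analysis on how many of the two diagonal half-spaces have actually been removed from the underlying box. Recall that a valid region $R$ is a box $B$ with at most one upper-left half-space (a half-plane of the form $p_2 - q_2 \ge p_1 - q_1$) removed and at most one lower-right half-space ($p_2 - q_2 \le p_1 - q_1$) removed. If no diagonal half-space has been removed, then $R = B$ is itself a box, which is one basic shape, and we are done. The heart of the argument is therefore the case where one or both diagonals have been cut.

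First I would handle the case of a single diagonal cut. Suppose the upper-left half-space through the line $\ell: p_2 - q_2 = p_1 - q_1$ has been removed from $B$. The line $\ell$ has slope $(1,1)$, so it enters and exits $B$ through its boundary, cutting $B$ into two pieces: the retained piece is $B$ minus an upper-left region. Depending on where $\ell$ crosses the boundary of $B$, the removed corner is bounded by $\ell$ together with one or two sides of $B$; the retained region is then $B$ with one corner sliced off by a $(1,1)$-line. I would show that this retained region decomposes into at most $3$ basic shapes: a parallelogram whose slanted sides follow $(1,1)$ (Figure~\ref{fig:shapes}(d)) capturing the ``band'' swept by $\ell$ across $B$, plus a box below-right of that band, plus possibly a single triangle if $\ell$ meets a corner of $B$ at an acute angle. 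The symmetric statement for a lower-right cut follows by reflecting dimensions $1$ and $2$. So a single diagonal cut already yields a decomposition into at most $3$ shapes, well within the bound of $5$.

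Then I would treat the case where both diagonals are removed. Here we remove from $B$ an upper-left region (bounded by a $(1,1)$-line $\ell_{\mathrm{ul}}$) and a lower-right region (bounded by another $(1,1)$-line $\ell_{\mathrm{lr}}$), and note that both bounding lines are \emph{parallel}, both having direction $(1,1)$. The retained region is therefore the intersection of $B$ with the \emph{slab} between $\ell_{\mathrm{lr}}$ and $\ell_{\mathrm{ul}}$ (when $\ell_{\mathrm{lr}}$ lies below-right of $\ell_{\mathrm{ul}}$; if the slab is empty, $R=\emptyset$ is vacuously one shape). Intersecting a box with a slab of slope-$(1,1)$ lines gives a parallelogram of the type in Figure~\ref{fig:shapes}(d), except that the parallelogram may be ``clipped'' by the top and bottom (or left and right) sides of $B$, which truncates it to a parallelogram plus at most two right-angle triangles (one upper and one lower) and at most one further box, as in the decomposition illustrated in Figure~\ref{fig:shapedecomp}. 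A careful accounting of the clipping at the corners of $B$ shows the total number of basic shapes is at most $5$. I expect this two-diagonal case to be the main obstacle: one must check every way the two parallel slanted lines can meet the four sides of $B$ and verify the worst case still fits in $5$ pieces --- the cleanest route is to observe that clipping a slope-$(1,1)$ strip by a box removes at most two triangular corners and leaves at most one rectangular remainder beside the central parallelogram, and that the picture in Figure~\ref{fig:shapedecomp}(b) already realises a $5$-shape decomposition, so the bound is tight and the enumeration merely needs to confirm no configuration is worse.
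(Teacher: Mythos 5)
Your overall approach --- casework on how many diagonal half-spaces have actually been cut, and a geometric dissection of the resulting polygon --- is the same as the paper's. The single-diagonal analysis (at most 3 basic shapes) is correct. The gap is in the two-diagonal case, where your claimed structure is off. You describe the retained region as ``a parallelogram plus at most two right-angle triangles and at most one further box,'' i.e.\ at most 4 pieces, but this misses what actually happens. When both cuts are present and the retained region is nonempty, the retained region is the box $B$ minus two opposite corners. There are two genuinely different regimes. If the two parallel cut lines are far apart (specifically, if there is some row $p_2 = x$ that misses both cuts), the retained region has a full-width rectangular band in the middle and two corner-truncated strips above and below; each truncated strip is a box plus a right-angle triangle, so the count is $3$ boxes $+$ $2$ triangles $= 5$, with \emph{no} parallelogram at all. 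If the two cut lines are close (every row touches at least one cut), the central band is a parallelogram of Figure~\ref{fig:shapes}(d), but the top and bottom remainders are each a box plus a triangle, giving $1$ parallelogram $+$ $2$ boxes $+$ $2$ triangles $= 5$. In both regimes you need two boxes (or three), not ``at most one,'' and in the first regime there is no central parallelogram to speak of. Your claim of at most four shapes is therefore wrong, and though you separately assert the bound of five, the reasoning offered does not establish it.

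The paper closes this gap with a clean dichotomy that your write-up does not surface: whether there exists a value $x$ such that no point of $R$ with $p_2 = x$ is eliminated by either diagonal. That single test splits the two-diagonal case into exactly the two $5$-shape decompositions of Figure~\ref{fig:shapedecomp}, and it replaces the open-ended enumeration you gesture at (``check every way the two parallel slanted lines can meet the four sides of $B$''). Your observation that both cut lines are parallel of slope $(1,1)$, so that $R$ is the box intersected with a slab, is a helpful way to see the geometry, but to finish you still need the safe-row dichotomy (or an equivalent) to obtain concrete $\le 5$-piece decompositions in both sub-cases.
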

\begin{proof}

\shapedecomp

First suppose that both and upper-left half-space and a lower-right half-space
have been removed to create $R$. If there is a value $x$ such that no point $p
\in R$ with $p_2 = x$ is eliminated by either half-space, then we can decompose
$R$ into two triangles and three boxes, as shown in Figure~\ref{fig:shapedecomp}
(a). If there is no such value, then we can decompose $R$ into two triangles,
two boxes, and one parallelogram, as 
shown in Figure~\ref{fig:shapedecomp} (b).

The cases where one or both of the diagonal half-spaces are missing can clearly
be done with even fewer basic shapes.
\end{proof}

The next lemma states that, if we have a region defined by a basic shape, then
we can make a single query and rule out at least a quarter of the area of that
shape.

\begin{lemma}
\label{lem:basicquery}
Given a region $R$ that is either a box of width and height at least 6, a lower
or upper right-angle triangle of width and height at least $4$, or a
parallelogram of height and width at most $6$,
we can make a single query and either find a global fixed point, or eliminate at least a $1/4$ fraction of the area in $R$.
\end{lemma}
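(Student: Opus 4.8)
The plan is to treat each of the three shape types by querying a single carefully chosen point $q \in R$ and then appealing to Lemma~\ref{lem:halfspace}: either $q$ turns out to be a global fixed point, in which case we are done, or Lemma~\ref{lem:halfspace} hands us one of its half-spaces to remove, and it then suffices to show that, whichever half-space is returned, its intersection with $R$ has area at least $\area(R)/4$. (The fact that removing the half-space preserves a global fixed point is exactly the content of Lemma~\ref{lem:halfspace} applied in the region maintained by the algorithm; here we only need the geometric claim about $R$.) Thus the proof reduces, shape by shape, to a finite case analysis over the at most twelve possibilities for the returned half-space: the vertical cases $p_1 \le q_1$, $p_1 \ge q_1$, $p_1 \le q_1 - 1$, $p_1 \ge q_1 + 1$, the four analogous horizontal cases, the upper-left diagonal $p_2 - q_2 \ge p_1 - q_1$, and the lower-right diagonal $p_2 - q_2 \le p_1 - q_1$.

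For the box case I would query the center $q$ of the box. The key observation is that a box is invariant under the $180$-degree rotation about its center, and this rotation carries each of the half-spaces above onto the closure of its complement; hence each of the two diagonal half-spaces removes exactly half of the area of the box, and each non-offset vertical or horizontal half-space removes at least half. An offset half-space such as $p_1 \ge q_1 + 1$ loses at most one further column or row relative to its non-offset counterpart, and since the relevant side of the box has length at least $6$ this costs at most a $1/6$ fraction; so every offset half-space still removes at least $\tfrac12 - \tfrac16 > \tfrac14$ of the area. The parallelogram case is essentially identical, because a parallelogram is also centrally symmetric about its center, so the same rotation argument applies once we charge the offset against the stated size bound.

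The triangle case is where care is needed, and I expect it to be the main obstacle. A right-angle triangle of leg length $k$ is not centrally symmetric and its mass is concentrated toward the right-angle vertex, so the query point cannot be the centroid; instead I would place $q$ pulled toward that vertex, at roughly $\lfloor k/2 \rfloor$ (up to a small additive constant) along one leg and about $\lfloor k/4 \rfloor$ along the other, in the spirit of Figure~\ref{fig:shapes}(b)--(c). One then verifies, for each of the twelve half-spaces, that its intersection with the triangle contains at least $\lceil \area(R)/4 \rceil$ grid points. The binding cases are the vertical and horizontal half-spaces carrying the $\pm 1$ offset (these arise from Cases 7, 10, 11 and 12 of Lemma~\ref{lem:halfspace}) together with the lower-right diagonal, which runs parallel to the hypotenuse; for the diagonals one uses that a slope-$1$ line through an interior point cuts the triangle into a sub-triangle of relative area $\bigl((k - d)/k\bigr)^2$ and its complement, where $d$ is the distance of $q$ from the hypotenuse, so choosing $d$ between roughly $0.14k$ and $0.5k$ balances the two sides. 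It is important here to count grid points rather than use continuous areas: for example, the part of the triangle lying one unit below $q$ in the vertical direction can be a vanishing fraction of the continuous area yet still contains the entire bottom row, which has $k+1$ points, and this is what lets the inequalities hold all the way down to $k = 4$. For the smallest values $k \in \{4,5\}$ the admissible query points and half-spaces form a finite list that can be checked by hand, and the threshold $k \ge 4$ is precisely what makes this check succeed. Finally, the upper right-angle triangle reduces to the lower one by exchanging dimensions $1$ and $2$.
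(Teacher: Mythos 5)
Your proposal matches the paper's proof in all essentials: you query the centre of the box, a point pulled toward the right-angle vertex of the triangle (at roughly $\lfloor k/2 \rfloor$ and $\lfloor k/4 \rfloor$ along the two legs, which is exactly the paper's $q = x + \Floor{k/4}\cdot e_1 + \Floor{k/4}\cdot(1,1)$), reduce the upper triangle to the lower one by swapping dimensions, and then run a finite case analysis over the half-spaces from Lemma~\ref{lem:halfspace}, noting that integer rounding and the $\pm 1$ offsets cost at most one extra row/column/diagonal, which the stated size thresholds absorb. The one genuine point of difference is expository: you argue the box and parallelogram cases in one stroke via central symmetry (the $180^{\circ}$ rotation carries each half-space to the closure of its complement), whereas the paper computes each half-space separately and handles the parallelogram by charging rows/diagonals against the box analysis and decomposing vertical cuts into a box plus two triangles. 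Your symmetry argument is a bit cleaner for those two shapes, but it arrives at the same estimates; the triangle, where no such symmetry is available, you treat exactly as the paper does. (Incidentally, you tacitly read the parallelogram's ``at most 6'' as ``at least 6'', which is what both your proof and the paper's actually need.)
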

\begin{proof}
We consider the different types of shape separately. In each case we make
a single query and then consider the possible half-spaces that
Lemma~\ref{lem:halfspace} rules out.
\begin{itemize}
\item To query a box $(x, h, w)$, we query the point 
$$q = x + \Floor{\frac{w}{2}} \cdot e_1 + \Floor{\frac{h}{2}} \cdot e_2,$$
which is the centre of the box as shown in Figure~\ref{fig:shapes} (a). 

\begin{enumerate}
\item If we eliminate a vertical half-space defined by points $p$ with $p_1 \le
q_1 - 1$ or $p_1 \le q_1$, then the fraction of area that we rule out is at
least
\begin{align*}
\frac{(\Floor{\frac{w}{2}} - 1) \cdot h}{w \cdot h} & =
\frac{\Floor{\frac{w}{2}} - 1}{w} \\
&\ge \frac{\frac{w}{4}}{w} \\
&= 1/4,
\end{align*}
where we have used the fact that $w \ge 6$ implies that $
\Floor{\frac{w}{2}} - 1 \ge \frac{w}{4}$.

\item If we eliminate a vertical half-space defined by points $p$ with $p_1 \ge
q_1$ or $p_1 \ge q_1 + 1$, then 
the fraction of area that we rule out is at
least
\begin{align*}
\frac{(w - (\Floor{\frac{w}{2}} + 1)) \cdot h}{w \cdot h} & =
1 - \frac{\Floor{\frac{w}{2}} + 1}{w} \\
& \ge 1 - \frac{3w/4}{w} \\
& = 1/4,
\end{align*}
where we have used the fact that $w \ge 4$ implies that $
\Floor{\frac{w}{2}} + 1 \le \frac{3w}{4}$.

\item If we eliminate a horizontal half-space, then we can use the same
argument as the first and second cases to argue that we rule out at least 1/4
of the area of $R$.

\item If we eliminate a diagonal half-space, then note that if $q$ was at the
centre of the box, then we clearly would have ruled out half the points.
However, due to
the rounding we used when computing $q$, we may not have ruled out exactly half
of the points. But the set of discrete points that should have been ruled out
but were not lie in a diagonal line along the vector $(1, 1)$ that spans the
box directly next to the half-space defined by $q$. If we assume, without loss
of generality, that $w \ge h$, this line contains $h$ points, and so the
fraction of the space
that we actually rule is at least
\begin{align*}
1/2 - \frac{h}{w \cdot h} &\le 1/2 - \frac{h}{h \cdot h} \\
&= 1/2 - \frac{1}{h},
\end{align*}
Since $h \ge 6$ we therefore rule
out at least $1/2 - \frac{1}{6} = 1/3$ of the space.
\end{enumerate}

\item To query a lower right-angle triangle of width and height $k$ with
lower-left corner $x$, we query
the point
$$q = x + \Floor{\frac{k}{4}} \cdot e_1 + \Floor{\frac{k}{4}} \cdot (1, 1),$$
as shown in Figure~\ref{fig:shapes} (b). 
\begin{enumerate}
\item If we eliminate a vertical half-space defined by points $p$ with $p_1
\ge q_1 + 1$, then we eliminate a right-angle triangle of width and height
$k - \Floor{\frac{k}{4}} - 1$, which has area $0.5 \cdot (k 
- \Floor{\frac{k}{4}} - 1)^2$. 
The overall area of the original triangle is $0.5
k^2$, so the fraction of area ruled out is at least
\begin{align*}
\frac{ 0.5 \cdot (k - \Floor{\frac{k}{4}} - 1)^2}{0.5k^2} &\ge \frac{(k -
k/2)^2}{k^2} \\
&= 1/4,
\end{align*}
where we used the fact that $k \ge 4$ to ensure that $\Floor{\frac{k}{4}} + 1 \le k/2$.

If we eliminate a vertical half-space defined by points $p$ with $p_1
\ge q_1$ then we rule out even more space, so we again rule out at least $1/4$
of the area.

\item If we eliminate a horizontal half-space defined by points 
$p$ with $p_2
\le q_2 - 1$, then we keep a right-angle triangle of width and height
$k - \Floor{\frac{k}{4}} + 1$, which has area $0.5 \cdot (k^2 - 
\Floor{\frac{k}{4}} + 1)^2$. 
The overall area of the original triangle is $0.5
k^2$, so the area we rule out has size $0.5 k^2 - 0.5 \cdot (k^2 -
\Floor{\frac{k}{4}} + 1)^2$, so the proportion of the area that is ruled out is
at least
\begin{align*}
\frac{0.5 k^2 - 0.5 \cdot (k - \Floor{\frac{k}{4}} + 1)^2}{0.5 k^2} 
& = 1 - \frac{(k - \Floor{\frac{k}{4}} + 1)^2}{k^2} \\
& \ge 1 - \frac{(k - k/4)^2}{k^2} \\
& = 1 - \frac{3^2}{4^2} \\
& = 7/16.
\end{align*}
where we used the fact that 
$\Floor{\frac{k}{4}} + 1 \ge \frac{k}{4}$ for all positive $k$.

If we eliminate a horizontal half-space defined by points $p$ with $p_2 \le
q_2$, then we rule out even more space, so we again rule out at least a $7/16$
fraction of the area.

\item The cases where we rule out a vertical half-space are entirely symmetric
to the cases for horizontal half-spaces.

\item If we rule out a lower-right diagonal half-space, then we rule out a
triangle with height and width $k - \frac{k}{4}$, which is strictly more than
the amount of area we considered in the first case, so we rule out at least 1/4
of the area.

\item If we rule out an upper-left diagonal half-space, then we keep a triangle
with height and width $k - \frac{k}{4}$, which is strictly less than the amount
of area we considered in the second case, so we rule out at least 7/16 of the
area.
\end{enumerate}

\item To query an upper right-angle triangle of width and height $k$, we can
flip both dimensions and follow the procedure for a lower right-angle triangle.

\item To query a parallelogram whose lower-left corner is $x$, and whose height
is $h$ and width is $w$, we query the point 
$$x + \Floor{\frac{w}{2}}
\cdot e_1 + \Floor{\frac{h}{2}} \cdot (1, 1),$$ 
as shown in Figure~\ref{fig:shapes} (d).

\pgrams

\begin{enumerate}
\item If we rule out a horizontal half-space, then note that each row of points
in the parallelogram has exactly $w$ points. So the number of points that we rule out
is exactly the same as the number of points we would rule out for a box of
height $h$ and width $w$. So we can reuse our proof for boxes to show that at
least $1/4$ of the area is ruled out.

\item

If we rule out a vertical half-space, then as shown in Figure~\ref{fig:pgrams}, note that if we remove the
triangles on either end of the parallelogram (which both have the same area),
then we are either left with a box-shaped region, or a parallelogram whose left
and right sides are aligned with $(0, 1)$, and whose top and bottom sides are
aligned with $(1, 1)$. In the first case we can use the analysis for box
regions, while in the second case we can exchange dimensions 1 and 2 and use
Case 1 of the parallelogram analysis. In both cases, at least $1/4$ of the area
is ruled out.

\item If we rule out a diagonal half-space, then note that each diagonal
contains $h$ points. So the number of points ruled out is exactly the same as
the number that would be ruled out by a vertical half-space for a box of height
$h$ and width $w$. So we have that at least $1/4$ of the area is ruled out.
\end{enumerate}

\end{itemize}
\end{proof}

Each iteration of the algorithm has a valid region $R$, and proceeds as
follows.
\begin{itemize}
\item Decompose $R$ into at most $5$ basic shapes using
Lemma~\ref{lem:shapedecomp}. 
\item Find the basic shape $S$ that has the largest area, and then
make a single query on $S$ using Lemma~\ref{lem:basicquery}, and rule out the
resulting half-space in $R$.
\item If query finds a global fixed point then we are done.
\item Otherwise, we have ruled out at least $1/4$ of the area of 
$S$, and $S$ made up at least $1/5$ of the area of $R$. So the total fraction of
area ruled out from $R$ is at least $1/20$. 
\end{itemize}

This process continues until we either find a global fixed point, or until we
no longer satisfy the preconditions of Lemma~\ref{lem:basicquery}. This can
only occur when the largest basic shape in the decomposition of $R$ has either
constant height or constant width. This either means that $R$ is contained in a
box of constant width or constant height (eg. in the example in
Figure~\ref{fig:shapedecomp} (a)), or that $R$ is contained in a $\DBox(x, y,
l)$ where $\|x
- y \|_\infty$ is a constant (eg. this could occur in the example in
Figure~\ref{fig:shapedecomp} (b)). We will deal with these two cases
separately.


\paragraph{\bf Handling diagonal regions.}

We are given a region $R = \DBox(x, y, l)$ where $\|x - y \|_\infty$ is
constant with the invariant that $R$ contains a global fixed point, and we must
find the global fixed point in $O(\log n)$ steps.






Consider a diagonal line defined by points $p^1, p^2, \dots, p^k$ such that
$p^{i+1} = p^i + (1, 1)$. If we query a point $p^i$ then we have the following
properties.
\begin{itemize}
\item If $p^i_1 < f_1(p^i)$ then all points $p^j$ with $j < i$ satisfy 
$p^j_1 < f_1(p^j)$ by Lemma~\ref{lem:contrdown}. On the other hand
$p^i_1 > f_1(p^i)$ then all points $p^j$ with $j > i$ satisfy 
$p^j_1 > f_1(p^j)$, for the same reason.

\item For the same reason, if $p^i_2 < f_2(p^i)$ then we can rule out all
points $p^j$ with $j < i$, and if $p^i_2 > f_2(p^i)$ then we can rule out all
points $p^j$ with $j > i$. 

\item If $p^i_3 < f_3(p^i)$ then we have
$p^j_3 < f_3(p^j)$ for all $j > i$ by Lemma~\ref{lem:monodown}, while if
$p^i_3 > f_3(p^i)$ then we have 
$p^j_3 > f_3(p^j)$ for all $j < i$, for the same reason.
\end{itemize}

So we can carry out a binary search on the line 
$p^1, p^2, \dots, p^k$ to determine if the line contains a global fixed point:
for each query that we make, if the queried point is not a global fixed point,
then one of the arguments above will rule out at least half of the points in
the line. 
This
process takes $O(\log n)$ steps.

Note that since $\|x - y \|_\infty$ is constant, the space
$\DBox(x, y, l)$ can be decomposed into constantly many diagonal lines. So we
can apply the procedure above to each of those lines to find the global fixed
point that lies in $R$ using $O(\log n)$ steps.

\paragraph{\bf Handling box shaped regions.}

In this case we are given a box $(x, h, w)$ that contains a global fixed point
where one of $h$ or $w$ is constant. We assume that $w$ is constant. This is
without loss of generality, because if $h$ is constant we can exchange
dimensions $1$ and $2$ and apply the same procedure.

Our approach is essentially the same as the diagonal case, with a slightly
different justification. Consider the vertical line of points $p^1, p^2, \dots, p^k$ 
$p^{i+1} = p^i + e_2$. If we query a point $p^i$ then we have the following
properties.
\begin{itemize}
\item If $p^i_1 < f_1(p^i)$ then all points $p^j$ with $j > i$ satisfy 
$p^j_1 < f_1(p^j)$ by Lemma~\ref{lem:monodown}. On the other hand
$p^i_1 > f_1(p^i)$ then all points $p^j$ with $j < i$ satisfy 
$p^j_1 > f_1(p^j)$, for the same reason.

\item If $p^i_2 < f_2(p^i)$ then all points $p^j$ with $j < i$ satisfy 
$p^j_2 < f_2(p^j)$ by Lemma~\ref{lem:contrdown}. On the other hand
$p^i_2 > f_2(p^i)$ then all points $p^j$ with $j > i$ satisfy 
$p^j_2 > f_2(p^j)$, for the same reason.

\item If $p^i_3 < f_3(p^i)$ then we have
$p^j_3 < f_3(p^j)$ for all $j > i$ by Lemma~\ref{lem:monodown}, while if
$p^i_3 > f_3(p^i)$ then we have 
$p^j_3 > f_3(p^j)$ for all $j < i$, for the same reason.
\end{itemize}
So as in the diagonal case, if the line contains a global fixed point, we can
find it using binary search in $O(\log n)$ steps. Since the box $(x, h, w)$
contains constantly many vertical lines, we can perform a binary search on each
of them independently to find the global fixed point. Overall this process
makes $O(\log n)$ queries to $f$, and each step runs in time that is polynomial
in the representation of $f$.

\subsection{The Result}

Finally, we can tie up everything in this section into our main algorithmic result.
We have shown that the algorithm ends after making $O(\log n)$ queries to $f$,
and note that all steps of our algorithm can be carried out in polynomial time
relative to the representation of $f$. So we have shown the following theorem.

\begin{theorem}
There is an algorithm that finds a fixed point in a violation-free
three-dimensional \DMAC instance $f$ using $O(\log n)$ queries, where each step
takes time that is polynomial in the representation of $f$.
\end{theorem}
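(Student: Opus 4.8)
The plan is to assemble the algorithm by composing the preprocessing reductions with the binary-search machinery developed in the preceding subsections. First I would invoke the preprocessing pipeline: apply Lemma~\ref{lem:1dunique} to obtain a violation-free \1DUniqueDMAC instance (preserving the least fixed point), then Lemma~\ref{lem:boundaryin} to force all boundary points strictly inward, and finally Lemma~\ref{lem:updownboundary} to ensure the up-set does not touch the lower boundary except along dimension~$3$ (and the symmetric property for the down-set). Each of these runs in time polynomial in the representation of $f$, and each composes cleanly because none introduces new one-dimensional fixed points or destroys the surface properties guaranteed by Theorem~\ref{thm:surfaces}. After this preprocessing the instance satisfies all the assumptions required by the state machinery.

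Next I would run the main algorithm described in Section~\ref{sec:mainalgo}: an outer binary search on dimension~$3$, maintaining bounds $l,u$ with $\Up(f)\cap s_l\neq\emptyset$ and $\Down(f)\cap s_u\neq\emptyset$, and for the current slice maintaining an up-set state $t^i_{\text{up}}$ and a down-set state $t^i_{\text{dn}}$ that satisfy the algorithm invariant. Each iteration applies one step of the appropriate lobe/full-state refinement (Lemmas~\ref{lem:leftlobestate},~\ref{lem:bottomlobestate},~\ref{lem:diaglobestate},~\ref{lem:full-state} and their flipped analogues), reducing the area by a constant fraction, or jumps to a new slice via Lemmas~\ref{lem:translate},~\ref{lem:leftjump},~\ref{lem:bottomjump},~\ref{lem:diagjump},~\ref{lem:fulljump}, which inflate the area by at most a constant. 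Correctness of the outer invariant is ensured by Lemma~\ref{lem:upordown} (every slice meets $\Up(f)$ or $\Down(f)$), so whenever an inner algorithm declares a slice empty of up-set (resp.\ down-set) points, the other set is non-empty there. The running-time bookkeeping is exactly the argument already given: per slice the area inflates by at most $180$, which is cancelled by $81$ further $15/16$-steps; since there are $O(\log n)$ slices, and each slice needs $O(\log n)$ reduction steps to shrink to constantly many points, the total is $O(\log n)$ queries.

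Once the outer search converges to $u-l=1$, I would invoke the two-slice sub-algorithm, which refines $t^l_{\text{dn}}$ until it either finds a down-set point in $s_l$ (making $s_l$ the target slice) or declares $s_l$ down-set-empty (making $s_u$ the target by Lemma~\ref{lem:fpexists}, whose proof uses the path of fixed points from Corollary~\ref{cor:neighboring-lesser-fp}); this again costs $O(\log n)$ queries. Finally, with a slice $s_i$ known to contain a global fixed point, I would run the terminal phase of Section: maintain a valid region $R$ containing a global fixed point, decompose it into at most $5$ basic shapes via Lemma~\ref{lem:shapedecomp}, query the largest shape using Lemma~\ref{lem:basicquery} to eliminate a $\ge 1/4$ fraction of it (hence $\ge 1/20$ of $R$) using the half-space lemma (Lemma~\ref{lem:halfspace}), iterate until $R$ is a box of constant width or a $\DBox$ of constant diagonal extent, and then binary-search the constantly many lines inside it. This phase also terminates in $O(\log n)$ queries. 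Summing the four phases gives $O(\log n)$ queries total, with every step polynomial-time in $|f|$, which is the claimed theorem.

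The main obstacle I anticipate is not any single new argument but the careful accounting that ties the pieces together consistently: verifying that the constant-factor area inflations from slice jumps and from the diagonal-to-full-state transition are genuinely cancelled within $O(\log n)$ steps, that the state invariant is re-established after every jump (relying on Lemma~\ref{lem:translate} plus the jump lemmas), and that the terminal phase's precondition handling (switching to diagonal-line or vertical-line binary search once a shape degenerates) is exhaustive. All the substantive geometric lemmas are already in hand; the work is purely in checking that the composition is airtight and that the per-step polynomial-time claim survives through every reduction and every query-processing step.
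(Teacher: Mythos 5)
Your proposal follows the same overall architecture as the paper's proof: preprocess via Lemmas~\ref{lem:1dunique},~\ref{lem:boundaryin},~\ref{lem:updownboundary}; run the outer binary search on dimension~3 while refining up- and down-set states in lockstep; use the two-slice sub-algorithm to pin down the target slice; and finish with the terminal phase. All the lemma references and the inflation-cancellation constants match Section~\ref{sec:mainalgo} exactly, so the composition is the intended one.

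One slip in the running-time accounting should be corrected. You write that ``there are $O(\log n)$ slices, and each slice needs $O(\log n)$ reduction steps to shrink to constantly many points,'' and conclude $O(\log n)$ total. Taken literally, that product gives $O(\log^2 n)$, which is exactly the bound the algorithm is designed to beat. The correct accounting, and the one the paper uses, is that the state's area is carried over (translated) from slice to slice, so the total number of area-reducing steps across \emph{all} slices is $O(\log n)$; the per-slice cost is only the constant (at most~$81$) extra steps needed to cancel the bounded constant-factor inflation incurred on each slice jump and on each diagonal-to-full transition, giving $81\log n + O(\log n) = O(\log n)$ overall. Your earlier description of carrying over bounds via Lemma~\ref{lem:translate} and the jump lemmas shows you have the right mechanism in mind, but the sentence as written obscures the central point that progress, not budget, is what transfers between slices.
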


Applying the decomposition theorem given in Theorem~\ref{thm:decomposition} gives
us the following theorem for $d$-dimensional \DMAC, where we note that
we must apply the reduction given in Lemma~\ref{lem:uniquefp} to each
three-dimensional instance that we consider while applying the decomposition
theorem to ensure that we always find the least fixed point of the instance.

\begin{theorem}
There is an algorithm that finds a fixed point in a violation-free
$d$-dimensional \DMAC instance $f$ using $O((c \cdot \log n)^{\lceil d/3
\rceil})$ queries, where each step
takes time that is polynomial in the representation of $f$.
\end{theorem}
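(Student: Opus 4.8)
The plan is to combine the three-dimensional algorithm from the previous theorem with the decomposition theorem (Theorem~\ref{thm:decomposition}), taking care that the decomposition theorem is stated for algorithms that \emph{LFP-solve} an instance, whereas the three-dimensional algorithm only promises to return \emph{some} fixed point. First I would recall the bridging result: Lemma~\ref{lem:uniquefp} says that any violation-free three-dimensional \DMAC instance $f$ can be transformed in polynomial time into a violation-free \1DUniqueDMAC instance $f'$ with a \emph{unique} fixed point $x^*$, and $x^*$ maps back in polynomial time to the least fixed point of $f$. So the three-dimensional algorithm applied to $f'$ finds $x^*$, and after the polynomial-time post-processing we have LFP-solved $f$. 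The cost is still $O(\log n)$ queries, since Lemma~\ref{lem:uniquefp} only preprocesses the instance (each query to $f'$ is answered with constantly many queries to $f$), and each step remains polynomial-time in the representation of $f$. This gives a genuine LFP-solving algorithm $\CA_3$ for three-dimensional \DMAC running in $O(c_0 \log n)$ queries for some absolute constant $c_0$.

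Next I would induct on $\lceil d/3 \rceil$ using Theorem~\ref{thm:decomposition}. The base case $d \in \{1,2,3\}$: dimensions $1$ and $2$ are handled by treating them as degenerate cases of the three-dimensional algorithm (or more simply by noting that a one- or two-dimensional \DMAC instance is a special case of a three-dimensional one by padding a trivial dimension), so $\CA_3$ LFP-solves any instance of dimension at most three in $O(c_0 \log n)$ queries. For the inductive step, write $d = d_1 + d_2$ with $d_1 = 3$ and $d_2 = d - 3$; by the inductive hypothesis there is an algorithm $\CB$ that LFP-solves $d_2$-dimensional \DMAC in $q_2 = (c \cdot \log n)^{\lceil d_2/3 \rceil}$ queries, and $\CA_3$ LFP-solves $3$-dimensional \DMAC in $q_1 = c_0 \log n \le c \log n$ queries. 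Theorem~\ref{thm:decomposition} then yields an algorithm LFP-solving $d$-dimensional \DMAC in $q_1 \cdot q_2 \le (c \log n)^{1 + \lceil d_2/3 \rceil} = (c \log n)^{\lceil d/3 \rceil}$ queries, where the final identity uses $\lceil (d-3)/3 \rceil + 1 = \lceil d/3 \rceil$. One must pick $c$ large enough at the outset to absorb both $c_0$ and whatever constant the composed algorithm's bookkeeping introduces; since $c \cdot \log x = \log_{c'} x$ this is harmless, exactly as remarked in the introduction.

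The main obstacle — really the only subtle point — is ensuring the decomposition theorem is applicable, i.e. that every sub-instance queried is LFP-solved rather than merely solved. Theorem~\ref{thm:decomposition} requires both $\CA$ and $\CB$ to return least fixed points (or violations) on their respective instances; Lemmas~\ref{lem:lfp}, \ref{lem:subslice-lfp-prop1}, and \ref{lem:subslice-lfp-prop2} are what make the composed algorithm correct, and they genuinely need leastness. So I cannot feed the raw three-dimensional algorithm into the recursion — I must wrap it with Lemma~\ref{lem:uniquefp} at \emph{every} level where a three-dimensional sub-instance $f^x$ arises, not just at the top. Concretely, whenever algorithm $\CB$ (or the composed super-algorithm) queries a point $x$ in the $d_1$-dimensional coordinates and spawns the three-dimensional sub-instance $f^x$, we first apply Lemma~\ref{lem:uniquefp} to $f^x$, run $\CA_3$ on the resulting uniquely-solvable instance, and map back. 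Since Lemma~\ref{lem:uniquefp} is polynomial-time and multiplies the query count by only a constant (the interpolation reads $8$ points of $f^x$, the shift reads $3$ more, and the final sub-grid search is over constantly many points), this does not change the asymptotic query bound, and every sub-instance is now LFP-solved, so Theorem~\ref{thm:decomposition} applies cleanly at each step of the induction. Finally I would remark, as the excerpt already notes for Lemma~\ref{lem:uniquefp}, that the constant-dimension restriction is essential: the sub-grid searched when mapping $x^*$ back to the least fixed point has size $O(2^d)$, which is constant for fixed $d$ but not in general — this is exactly why the exponent is $\lceil d/3 \rceil$ and the base is a constant rather than the bound holding uniformly in $d$.
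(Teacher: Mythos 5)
Your proposal is correct and takes essentially the same approach as the paper: apply the decomposition theorem iteratively with a three-dimensional building block, wrapping the three-dimensional algorithm in Lemma~\ref{lem:uniquefp} at every level so that each invocation LFP-solves its instance rather than merely finding some fixed point. One small inconsistency in wording: you fix $d_1 = 3$, $d_2 = d-3$ (so $\CA_3$ is the super-algorithm and $\CB$ works on the $(d-3)$-dimensional sub-instances), but then describe ``spawning the three-dimensional sub-instance $f^x$,'' which corresponds to the opposite split $d_1 = d-3$, $d_2 = 3$. Either split gives the same bound and the key observation — Lemma~\ref{lem:uniquefp} must be applied to \emph{every} three-dimensional instance the recursion produces, not just the outermost one — is exactly the point the paper makes, so this is a cosmetic slip rather than a gap.
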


Finally, we can state our results for monotone contractions, which are obtained
by applying 
Theorem~\ref{thm:mctodmac} to reduce \MonotoneContraction to \DMAC, and
then applying the previous theorem.

\begin{theorem}
There is an algorithm that finds an $\eps$-approximate fixed point of a
$d$-dimensional \MonotoneContraction instance $g$ using $O((c \cdot \log
(1/\eps))^{\lceil d/3 \rceil})$ queries, where each step
takes time that is polynomial in the representation of $g$.
\end{theorem}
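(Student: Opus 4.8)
The plan is to obtain this theorem as a direct composition of the reduction from \MonotoneContraction to \DMAC established in Theorem~\ref{thm:mctodmac} with the algorithm for $d$-dimensional \DMAC established in the preceding theorem; no new ideas are needed, only the bookkeeping required to chain the two polynomial-time procedures together.

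First I would apply the polynomial-time reduction of Theorem~\ref{thm:mctodmac} to the given \MonotoneContraction instance $g$, producing a \DMAC instance $f : G \rightarrow G$ over the grid $G = \{0, 1, \dots, n\}^d$. Inspecting the construction in Section~\ref{sec:monotone_to_dmac}, one has $n = \lceil 1/\eps \rceil = O(1/\eps)$, and each evaluation of $f$ at a grid point is computed from a single evaluation of $g$ followed by the $\Stretch$ and $\sgne$ rounding operations, which take time polynomial in the representation of $g$. Since $g$ is a genuine monotone contraction it has no \ref{MV1} or \ref{MV2} violation, and because the reduction is promise-preserving, the resulting \DMAC instance $f$ has no \ref{DV1} or \ref{DV2} violation; that is, $f$ is violation-free.

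Next I would run the $d$-dimensional \DMAC algorithm from the preceding theorem on $f$. Since $f$ is violation-free, that theorem guarantees the algorithm returns a fixed point of $f$, i.e. a solution of type \ref{D1}, using $O((c \cdot \log n)^{\lceil d/3 \rceil})$ queries to $f$, with each step running in time polynomial in the representation of $f$ and hence polynomial in the representation of $g$. By Theorem~\ref{thm:mctodmac} this \ref{D1} solution maps back, in polynomial time, to an \ref{M1} solution of $g$, namely an $\eps$-approximate fixed point. Substituting $n = O(1/\eps)$ turns the query bound into $O((c \cdot \log (1/\eps))^{\lceil d/3 \rceil})$, where the constant hidden inside $n = O(1/\eps)$ is absorbed into $c$ (using that $c \cdot \log x = \log_{c'} x$ for a suitable constant $c'$), and since each query to $f$ costs only one query to $g$ plus polynomial-time postprocessing, the overall query complexity is affected only by a constant factor.

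There is no deep obstacle here, as the substantive work lives entirely in Theorem~\ref{thm:mctodmac} and in the \DMAC algorithm. The only points that genuinely require care are the bookkeeping ones: confirming that promise-preservation of the reduction places us in exactly the violation-free regime assumed by the \DMAC algorithm, that $n$ remains $\Theta(1/\eps)$ so the logarithmic factor is as claimed, and that composing the two polynomial-time procedures still yields a per-step running time polynomial in the representation of $g$.
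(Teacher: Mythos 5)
Your proposal is correct and matches the paper's own one-line proof: apply the promise-preserving reduction of Theorem~\ref{thm:mctodmac} to obtain a violation-free \DMAC instance with $n=O(1/\eps)$, run the $d$-dimensional \DMAC algorithm, and map the resulting fixed point back. The bookkeeping you spell out (promise-preservation guaranteeing the violation-free regime, $n=\Theta(1/\eps)$, absorbing constants into $c$ via $c\cdot\log x = \log_{c'} x$, and composing polynomial-time steps) is exactly what the paper leaves implicit.
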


\bibliographystyle{plain} 
\bibliography{refs} 

\end{document}